\setlist[itemize]{topsep=4pt,itemsep=3pt,parsep=0pt} 
\setlist[enumerate]{topsep=4pt,itemsep=3pt,parsep=0pt} 
\crefname{claim}{Claim}{Claims}
\crefname{figure}{Figure}{Figures}
\renewcommand{\subset}{\subseteq}
\newtheorem{theorem}{Theorem}[section]
\newtheorem{corollary}[theorem]{Corollary}
\newtheorem{observation}[theorem]{Observation}
\newtheorem{lemma}[theorem]{Lemma}
\newtheorem{claim}[theorem]{Claim}
\theoremstyle{definition}
\newtheorem{definition}[theorem]{Definition}
\theoremstyle{plain}
\theoremstyle{definition}
\newtheorem*{example*}{Example}
\numberwithin{equation}{section}
\newenvironment{claimproof}[1][Proof of the claim.]{%
  \begin{proof}[#1]%
}{%
  \end{proof}%
}
\newcommand{\cc}{\mathsf{cc}}
\newcommand{\bd}{\mathsf{bd}}
\newcommand{\chipfamily}{\mathcal{C}}
\newcommand{\ins}{\mathcal{I}}
\newcommand{\recurse}{\triangleleft}
\newcommand{\xprof}{\ensuremath{\mathcal{X}}}
\newcommand{\reachx}{\ensuremath{\mathcal{R}}}
\newcommand{\rpmap}{\phi}
\newcommand{\comps}{\mathcal{C}}
\newcommand{\mcomps}{\mathcal{C}}
\newcommand{\fcomps}{\mathcal{C}}
\newcommand{\rcomps}{\mathcal{C}}
\newcommand{\algcasecc}{Case~1\xspace}
\newcommand{\algcasenwl}{Case~2\xspace}
\newcommand{\algcasewlsmall}{Case~3\xspace}
\newcommand{\algcasebalsep}{Case~3.1\xspace}
\newcommand{\algcaserwl}{Case~3.2\xspace}
\newcommand{\algcasecarve}{Case~4\xspace}
\newcommand{\conna}{\mu}
\newcommand{\connb}{\bar{\mu}}
\newcommand{\numcarv}{\gamma}
\newcommand{\combfolio}{\xi}
\newcommand{\tw}{\mathsf{tw}}
\newcommand{\Pp}{\mathcal{P}}
\newcommand{\Ww}{\mathcal{W}}
\newcommand{\bag}{\mathsf{bag}}
\newcommand{\intr}{\mathsf{int}}
\newcommand{\fM}{\mathsf{folio}}
\newcommand{\cM}{\mathsf{clique}}
\newcommand{\CMSO}{\mathsf{CMSO}}
\newcommand{\MSO}{\mathsf{MSO}}
\newcommand{\tup}[1]{{\bar{#1}}}
\def\phi{\varphi}
\newcommand{\N}{\mathbb{N}}
\newcommand{\Z}{\mathbb{Z}}
\def\tp{\textnormal{tp}}
\newcommand{\Cc}{\mathscr{C}}
\newcommand{\Ee}{\mathcal{E}}
\newcommand{\Oof}{\mathcal{O}}
\newcommand{\wh}{\widehat}
\newcommand{\Ff}{\mathcal{F}}
\newcommand{\Aa}{\mathcal{A}}
\newcommand{\Tt}{\mathcal{T}}
\newcommand{\Ll}{\mathcal{L}}
\newcommand{\LL}{\mathcal L}
\def\N{\mathbb N} 
\def\epsilon{\varepsilon}
\newcommand{\Oh}{\Oof}
\renewcommand{\emptyset}{\varnothing}
\renewcommand{\leq}{\leqslant}
\renewcommand{\geq}{\geqslant}
\renewcommand{\le}{\leq}
\renewcommand{\ge}{\geq}
\renewcommand{\setminus}{-}
\newcommand{\roots}{\pi}
\newcommand{\model}{\eta}
\newcommand{\TermCarv}{{\sc{Terminal Carving}}\xspace}
\newcommand{\Folio}{{\sc{Folio}}\xspace}
\newcommand{\FolioApex}{{\sc{Folio or Apex-Grid Minor}}\xspace}
\newcommand{\FolioApexG}{{\sc{Folio or Apex Minor}}\xspace}
\newcommand{\FolioClique}{{\sc{Folio or Clique Minor}}\xspace}
\newcommand{\FolioCliqueEx}{{\sc{Folio or Clique Minor Existence}}\xspace}
\tikzset{node/.style={draw, circle, fill = black, minimum size = 3pt, inner sep=0pt, line width=1pt}}
\tikzset{nodesubwall/.style={draw, circle, fill = blue, minimum size = 3.1pt, inner sep=0pt}}
\tikzset{corner/.style={draw=magenta, fill = red!20!white, minimum size = 6pt,inner sep=0pt,line width=1pt}}
\tikzset{central/.style={draw=orange, fill = red!20!white, minimum size = 6pt,inner sep=0pt,line width=1pt}}
\tikzset{edge/.style={draw=white!60!black,line width=1.5pt}}
\tikzset{edgesubwall/.style={draw=blue!60!white,line width=2pt}}
\tikzset{subnode/.style={draw, circle, fill = yellow!50!red!50!white, minimum size = 2pt, inner sep=0pt}}
\tikzset{edgethin/.style={draw=white!60!black,line width=1.3pt}}
\begin{document}

\pagenumbering{Alph}

\newcommand{\funding}{T.K. was supported by the Research Council of Norway via the project BWCA (grant no. 314528). M.P. and G.S. were supported by the project BOBR that is funded from the European Research Council (ERC) under the European Union’s Horizon 2020 research and innovation programme with grant agreement No. 948057.}

\title{Minor Containment and Disjoint Paths in almost-linear time\thanks{\funding}}
\date{}
 \author{
   Tuukka Korhonen \\
   \small{University of Bergen} \\
   \small{tuukka.korhonen@uib.no}
   \and
   Michał Pilipczuk \\
   \small{University of Warsaw} \\
   \small{michal.pilipczuk@mimuw.edu.pl}
   \and
   Giannos Stamoulis \\
   \small{University of Warsaw} \\
   \small{giannos.stamoulis@mimuw.edu.pl}
 }
\maketitle

\begin{abstract}
We give an algorithm that, given graphs $G$ and $H$, tests whether $H$ is a minor of $G$ in time $\Oh_H(n^{1+o(1)})$; here, $n$ is the number of vertices of $G$ and the $\Oh_H(\cdot)$-notation hides factors that depend on $H$ and are computable.
By the Graph Minor Theorem, this implies the existence of an $n^{1+o(1)}$-time membership test for every minor-closed class of graphs.

More generally, we give an $\Oh_{H,|X|}(m^{1+o(1)})$-time algorithm for the rooted version of the problem, in which $G$ comes with a set of roots $X\subseteq V(G)$ and some of the branch sets of the sought minor model of $H$ are required to contain prescribed subsets of $X$; here, $m$ is the total number of vertices and edges of $G$.
This captures the {\sc{Disjoint Paths}} problem, for which we obtain an $\Oh_{k}(m^{1+o(1)})$-time algorithm, where $k$ is the number of terminal pairs. For all the mentioned problems, the fastest algorithms known before are due to Kawarabayashi, Kobayashi, and Reed [JCTB 2012], and have a time complexity that is quadratic in the number of vertices of $G$.

Our algorithm has two main ingredients: First, we show that by using the {\em{dynamic treewidth}} data structure of Korhonen, Majewski, Nadara, Pilipczuk, and Soko{\l}owski [FOCS 2023], the irrelevant vertex technique of Robertson and Seymour can be implemented in almost-linear time on apex-minor-free graphs.
Then, we apply the recent advances in almost-linear time flow/cut algorithms to give an almost-linear time implementation of the {\em{recursive understanding}} technique, which effectively reduces the problem to apex-minor-free graphs.
\end{abstract}

 \begin{textblock}{20}(-1.75, 3.6)
 \includegraphics[width=40px]{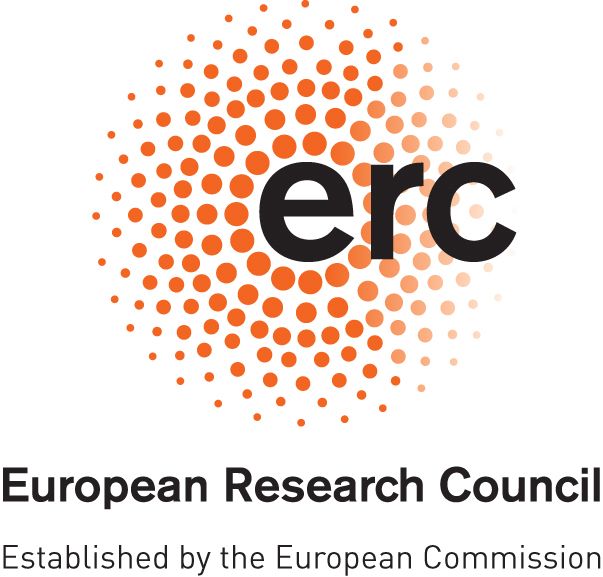}%
 \end{textblock}
 \begin{textblock}{20}(-1.75, 4.6)
 \includegraphics[width=40px]{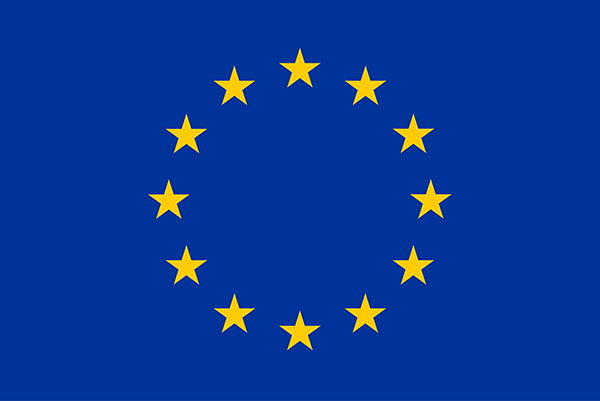}%
 \end{textblock}

\thispagestyle{empty}

\newpage

\setcounter{tocdepth}{2}

\tableofcontents
\thispagestyle{empty}

\newpage

\pagenumbering{arabic}

\clearpage
\setcounter{page}{1}

\section{Introduction}

A graph $H$ is a {\em{minor}} of a graph $G$ if $H$ can be obtained from $G$ by an iterative application of vertex deletions, edge deletions, and edge contractions. Equivalently, there is a mapping $\eta$, called the {\em{minor model}}, that maps every vertex $u$ of $H$ to the respective {\em{branch set}} $\eta(u)\subseteq V(G)$ so that (i) the branch sets $\eta(u)$ are pairwise disjoint, (ii) $G[\eta(u)]$ is connected for every vertex $u$ of $H$, and (iii) whenever $u$ and $v$ are adjacent in $H$, in $G$ there is an edge with one endpoint in $\eta(u)$ and the other in $\eta(v)$. The minor relation is transitive and reflexive, hence it constitutes a quasi-order on all finite graphs, called the {\em{minor order}}.

The minor order is perhaps the most fundamental notion of topological embedding for graphs considered in structural graph theory. Its study dates back to the theorem of Wagner, who characterized planar graphs as exactly those that do not contain $K_5$ and $K_{3,3}$ as a minor~\cite{Wagner37}; see also the closely related theorem of Kuratowski~\cite{Kuratowski30}. Another fundamental statement about the minor order is {\em{Wagner's Conjecture}} or the {\em{Graph Minor Theorem}}, which asserts that there is no infinite antichain in the minor order (or, equivalently in the obvious absence of infinite descending chains, that finite graphs are well-quasi-ordered by the minor relation). Initially posed in the 60s, Wagner's Conjecture was eventually confirmed by Robertson and Seymour in their monumental {\em{Graph Minors}} series~\cite{RobertsonS04}, which is probably the most profound achievement of modern structural graph~theory.


On the algorithmic front, the most important outcome of the Graph Minors series is a cubic-time fixed-parameter algorithm for the {\sc{Minor Containment}} problem. More precisely, in Graph Minors XIII~\cite{GM13}, Robertson and Seymour gave an algorithm with running time\footnote{The $\Oh_{\tup k}(\cdot)$ notation, where $\tup k$ is a vector of parameters, hides factors that depend on $\tup k$ in a computable manner.} $\Oh_H(n^3)$ that determines whether a given $n$-vertex graph $G$ contains a given graph $H$ as a minor. Since from the Graph Minor Theorem it follows that every minor-closed class of graphs $\Cc$ can be characterized by a finite number of forbidden minors, that is, $\Cc=\{G~|~G \textrm{ does not contain any member of }\Ff \textrm{ as a minor}\}$ for some finite set of graphs $\Ff$, this implies an $\Oh(n^3)$-time membership test for every minor-closed class of graphs $\Cc$: given $G$, to decide whether $G\in \Cc$ just test whether $G$ contains any graph from $\Ff$ as a minor, each in cubic time.


This provides a powerful technique for deriving (the existence of) polynomial-time algorithms for topological problems, as many concrete problems of interest can be directly cast as deciding membership in a certain minor-closed class of graphs. Classical examples include verifying embeddability of a graph into a fixed surface, deciding whether a graph admits a linkless embedding in $\mathbb{R}^3$, or computing the apex number of a graph (the minimum number of vertices to delete to obtain a planar graph). This technique has been particularly impactful for the early development of parameterized complexity, see e.g. the work of Fellows and Langston~\cite{FellowsL88,FellowsL87,FellowsL94}, and by now is a part of standard textbook material, see~\cite[Chapter~18]{DowneyF13},~\cite[Section 11.7]{FlumG06}, and~\cite[Section~6.3]{platypus}.

The algorithm of Robertson and Seymour in fact works for the more general problem of {\sc{Rooted Minor Containment}}. In this setting, $G$ is provided together with a specified set of roots $X\subseteq V(G)$, while $H$ is supplied with a function $\pi$ that prescribes, for each $u\in V(H)$, the set of roots $\pi(u)\subseteq X$ that should be contained in $\eta(u)$. Then, the running time for this problem is $\Oh_{H,|X|}(n^3)$. The canonical example of a problem that can be expressed as an instance of {\sc{Rooted Minor Containment}} is {\sc{Disjoint Paths}}: given a graph $G$ and $k$ pairs of vertices $(s_1,t_1),\ldots,(s_k,t_k)$, decide whether there exists vertex-disjoint paths $P_1,\ldots,P_k$ such that $P_i$ connects $s_i$ with $t_i$. Indeed, we take $X=\{s_1,t_1,\ldots,s_k,t_k\}$, $H$ to be an edgeless graph on $k$ vertices, and $\pi$ mapping vertices of $H$ to distinct terminal pairs.

Given the significance of {\sc{(Rooted) Minor Containment}} and {\sc{Disjoint Paths}}, both in structural graph theory and in parameterized complexity, it is natural to ask whether the cubic running time of the algorithm of Robertson and Seymour can be improved. And indeed, Kawarabayashi, Kobayashi, and Reed~\cite{KawarabayashiKR12} gave an $\Oh_k(n^2)$-time algorithm for {\sc{Disjoint Paths}} and argued that the same technique gives also an $\Oh_{H,|X|}(n^2)$-time algorithm for {\sc{Rooted Minor Containment}}. There is, however, a significant conceptual barrier for further improvements below quadratic time complexity. Namely, both the algorithms of~\cite{GM13} and of~\cite{KawarabayashiKR12} are based on the {\em{irrelevant vertex technique}}: iteratively find a vertex that can be removed without changing the answer, until the treewidth of the graph becomes bounded and the problem can be solved using dynamic programming on a tree decomposition.

The algorithm of Robertson and Seymour~\cite{GM13} locates every consecutive irrelevant vertex in quadratic fixed-parameter time, giving cubic time complexity in total. The main contribution of Kawarabayashi, Kobayashi, and Reed~\cite{KawarabayashiKR12} is improving the complexity of the procedure locating an irrelevant vertex to linear fixed-parameter time. To achieve further improvements, one needs to deviate from the general framework of locating up to $n$ irrelevant vertices, each using a separate procedure requiring at least linear time. So far, no such deviation working for general {\sc{Rooted Minor Containment}} has been proposed. However, for several concrete problems of interest it has been shown that irrelevant vertices can be located and removed in larger batches, leading to overall linear time complexity; examples are the aforementioned problems of computing the genus~\cite{Mohar99,KawarabayashiMR08}, the crossing number~\cite{KawarabayashiR07}, and the apex number of a graph~\cite{Kawarabayashi09,JansenLS14}. In each of these cases, the arguments were problem-specific and relied on the existence of suitable topological embeddings.

We remark that there has been also a substantial amount of work on {\sc{Disjoint Paths}} and {\sc{Minor Containment}} in planar graphs~\cite{ReedRSS91,Reed95,AdlerDFST12,AdlerKKLST17,LokshtanovMP0Z20,WlodarczykZ23}, aimed at reducing both the parametric factor and the dependence on the size of the host graph $G$ in the running time. In particular, Reed, Robertson, Schrijver, and Seymour gave a linear-time fixed-parameter algorithm for {\sc{Disjoint Paths}} in planar graphs~\cite{ReedRSS91,Reed95}, obtained again by removing irrelevant vertices in larger batches.

\paragraph*{Our contribution.} In this work we prove that the {\sc{Rooted Minor Containment}} problem can be solved in almost-linear fixed-parameter time.

\begin{restatable}{theorem}{mainthm}
\label{thm:main}
The {\sc{Rooted Minor Containment}} problem can be solved in time $\Oh_{H,|X|}(m^{1+o(1)})$, where $m$ is the total number of vertices and edges of $G$. In case of a positive answer, the algorithm also provides a witnessing minor model within the same running time.
\end{restatable}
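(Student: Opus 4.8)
The plan is to combine two reductions, along the two ingredients advertised above: first solve {\sc{Rooted Minor Containment}} on host graphs excluding a fixed apex graph as a minor, and then reduce the general case to this one via \emph{recursive understanding}. For the apex-minor-free case, fix a threshold $k=k(H,|X|)$ and assume $G$ excludes some fixed apex graph $A=A(H,|X|)$ as a minor. If $\tw(G)\le k$, we are done: the existence of the prescribed rooted model of $H$ in $(G,X)$ is a $\CMSO_2$ property of size bounded in $H$ and $|X|$, hence decidable --- with a model extracted on a yes-instance --- by dynamic programming on a width-$k$ tree decomposition in time $\Oh_{H,|X|}(n)$. If $\tw(G)>k$, then by the linear excluded-grid theorem for apex-minor-free classes $G$ contains a large wall, which is automatically \emph{flat}, since an apex vertex or a large vortex attached to a huge wall would create an $A$-minor; and for a flat wall with sufficiently many concentric layers the Robertson--Seymour argument identifies a vertex near its centre that is \emph{irrelevant}, because any rooted model of $H$ can be rerouted through the planar-like interior so as to avoid it. Deleting such a vertex and iterating, after at most $n$ deletions the treewidth drops to $k$ and the $\CMSO_2$ step finishes.

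\emph{Making the loop almost-linear.} The difficulty is that the treewidth may exceed $k$ for $\Omega(n)$ iterations, so we cannot afford to recompute a tree decomposition or re-search for a wall in each one. Instead we maintain the current graph in the dynamic treewidth data structure of Korhonen, Majewski, Nadara, Pilipczuk, and Soko\l{}owski with threshold parameter $k$: it maintains a bounded-width tree decomposition under vertex deletions in amortized time $n^{o(1)}$, answers $\CMSO_2$ queries, and, whenever the treewidth exceeds $k$, exposes structural information from which a large flat subwall --- and then a concrete irrelevant vertex --- can be read off by a query that is local in the maintained decomposition, again in time $n^{o(1)}$. Each iteration is then a single vertex deletion, a legitimate update, so the whole loop, and hence the apex-minor-free case, runs in $\Oh_{H,|X|}(n^{1+o(1)})$ time. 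I expect this to be the technical heart of the argument: one must certify that the data structure keeps delivering usable flat walls --- not merely a certificate that the threshold has been exceeded --- and that its amortized analysis survives a long deletion sequence along which the treewidth stays large.

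\emph{Reduction to the apex-minor-free case.} For a general host graph $G$, the plan is to run recursive understanding. Using the recent almost-linear time vertex min-cut and flow/cut algorithms, we search for a separation $(B_1,B_2)$ with $|B_1\cap B_2|\le p(H,|X|)$ whose two sides are both large; we recursively compute the rooted folio of $G[B_2]$ with respect to the roots $(X\cap B_2)\cup(B_1\cap B_2)$ --- the collection of all bounded-size rooted minors realizable there --- replace $G[B_2]$ by a bounded-size gadget realizing exactly that folio, and recurse on the resulting strictly smaller graph. When no such separation exists, $G$ is \emph{unbreakable}, and an unbreakable graph of large treewidth either contains a huge clique minor (or apex-grid minor), through which a prescribed rooted $H$-model can be routed directly, or otherwise excludes a fixed apex graph as a minor, so the previous part applies. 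It remains to check that folio-replacement preserves the answer together with the roots, that the recursion shrinks the instance fast enough that the flow/cut subroutines contribute only an $n^{o(1)}$ overall overhead, and that the folio sizes and root counts stay bounded in $H$ and $|X|$ throughout; combining the two parts then yields the claimed $\Oh_{H,|X|}(m^{1+o(1)})$ running time.
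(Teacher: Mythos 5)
There is a genuine gap in the apex-minor-free case, and it is precisely the place where the paper's key idea lives. You propose to initialize the dynamic treewidth data structure of Korhonen et al.\ on the \emph{full} input graph $G$ and then iteratively delete irrelevant vertices. But that data structure only guarantees to maintain a bounded-width decomposition \emph{under the promise that the treewidth of the maintained graph never exceeds the fixed threshold $k$}; once that threshold is crossed it can only report ``\textsf{Treewidth too large}'' --- it does not deliver a tree decomposition, and so it cannot support the $\CMSO_2$ query you need to locate a flat wall and read off an irrelevant vertex. Since the whole point of this phase is that $\tw(G)$ is initially enormous, your loop never gets started. The paper's way around this is the \emph{iterative decompression} scheme: it first contracts $G - X$ into a single mega-vertex $v^\star$ (so the initial graph has treewidth $\le |X|+1$), then uncontracts one vertex at a time; each uncontraction raises the treewidth by at most $1$, and whenever it reaches $k$ one queries the data structure for an irrelevant vertex of the \emph{original} graph, which requires the extra observation that a large flat wall in the current contracted graph $G_i - I$ can be ``zoomed'' to a subwall avoiding $v^\star$, whence it is also a flat wall in $G - I$. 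Your proposal contains neither the mega-vertex contraction nor the zoom argument, and without them the amortization you invoke has nothing to attach to.

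A smaller but related point: you also need to justify, in the uncontracted setting, that the irrelevance of a vertex in the auxiliary graph carries over to $G$; this is not automatic, and it is exactly why the paper restricts to apex-minor-free (as opposed to general $K_h$-minor-free) graphs in this step --- with a general excluded minor the Flat Wall Theorem produces a small apex set $Z$ which might contain $v^\star$, breaking the transfer. For the reduction part, your sketch (recursive understanding, bounded-size folio-preserving replacements, unbreakable case) matches the paper's high-level plan, though you would still need the Reed-type lemma to bound the recursion depth by $\Oh_k(\log n)$ and the careful bookkeeping that the total size of all recursive instances stays near-linear; these are nontrivial but are not the crux. The crux is the one above.
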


A disclaimer is required about the computability claim in \cref{thm:main}. Our proof relies on several standard results from the theory of graph minors, including the Structure Theorem. While the original proof of the Structure Theorem, due to Robertson and Seymour~\cite{RobertsonS03a}, does not directly provide computable bounds on the obtained constants, Kawarabayashi, Reed, and Thomas~\cite{KawarabayashiTW20} recently gave a new, improved proof that yields explicit, computable bounds. The work~\cite{KawarabayashiTW20} has not yet been peer-reviewed and published, hence our computability claim is contingent on~\cite{KawarabayashiTW20} being correct. Without it, we can rely on standard statements from the Graph Minors series, and in the same way derive a non-uniform fixed-parameter algorithm, that is, a separate $m^{1+o(1)}$-time algorithm for every fixed $H$ and $|X|$.

As mentioned, by the Graph Minor Theorem, for every minor-closed class of graphs $\Cc$ there exists a finite set of graph $\Ff$ such that testing whether a given graph $G$ belongs to $\Cc$ reduces to verifying that $G$ does not contain any member of $\Ff$ as a minor. Since in graphs from proper minor-closed classes the number of edges is linear in the number of vertices~\cite{DBLP:journals/combinatorica/Kostochka84,thomason_1984}, from \cref{thm:main} we obtain the following.

\begin{theorem}\label{thm:minor-testing}
 For every minor-closed class of graph $\Cc$, there exists an $n^{1+o(1)}$-time algorithm that given an $n$-vertex graph $G$, decides whether $G\in \Cc$.
\end{theorem}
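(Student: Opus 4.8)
The plan is to deduce \cref{thm:minor-testing} from \cref{thm:main} in a few short steps. First, I invoke the Graph Minor Theorem: since $\Cc$ is minor-closed, the set of minor-minimal graphs not in $\Cc$ forms an antichain in the minor order, and by the Graph Minor Theorem this antichain is finite; call it $\Ff = \{H_1, \dots, H_t\}$. By minimality and the fact that $\Cc$ is closed under taking minors, a graph $G$ belongs to $\Cc$ if and only if $G$ contains no $H_i$ as a minor. Thus testing membership in $\Cc$ reduces to $t$ instances of (unrooted) {\sc{Minor Containment}}, one for each $H_i$, which is the special case of {\sc{Rooted Minor Containment}} with $X = \emptyset$.

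Second, I address the input size. \cref{thm:main} gives a running time of $\Oh_{H,|X|}(m^{1+o(1)})$ where $m = |V(G)| + |E(G)|$, so to get a bound of the form $n^{1+o(1)}$ I need $m = \Oh(n)$. This is exactly where the hypothesis that $\Cc$ is a \emph{proper} minor-closed class enters: by the Kostochka--Thomason bound~\cite{DBLP:journals/combinatorica/Kostochka84,thomason_1984}, every graph in a proper minor-closed class has $\Oh(n)$ edges, with the hidden constant depending only on $\Cc$ (equivalently, on $\Ff$). Concretely, if $G$ has more than $c_{\Ff} \cdot n$ edges for the appropriate constant $c_{\Ff}$, then $G$ contains some fixed clique as a minor and in particular cannot avoid all of $\Ff$ — so we may immediately answer $G \notin \Cc$. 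Hence in the nontrivial case we have $m \le (c_{\Ff}+1)\, n = \Oh_{\Ff}(n)$, and the degenerate case where $\Cc$ is the class of all graphs is trivial (always answer yes).

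Combining the two observations: run the algorithm of \cref{thm:main} for each $H_i \in \Ff$ (with empty root set) on $G$, after first checking $|E(G)| \le c_{\Ff}\cdot n$ and short-circuiting if not. Each call runs in time $\Oh_{H_i}(m^{1+o(1)}) = \Oh_{\Ff}(n^{1+o(1)})$, and there are $t = |\Ff|$ of them, so the total time is $t \cdot \Oh_{\Ff}(n^{1+o(1)}) = \Oh_{\Ff}(n^{1+o(1)})$, which is $n^{1+o(1)}$ once $\Cc$ (hence $\Ff$) is fixed. Output ``$G \in \Cc$'' iff none of the calls reports a minor model.

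I do not expect a genuine obstacle here — the argument is the standard ``Graph Minor Theorem $\Rightarrow$ polynomial membership test'' template, with the cubic-time ingredient swapped for \cref{thm:main}. The only subtlety worth flagging is the same computability caveat attached to \cref{thm:main}: the Graph Minor Theorem guarantees that $\Ff$ is finite but does not, in general, let us \emph{compute} $\Ff$ from a description of $\Cc$; so the algorithm of \cref{thm:minor-testing} is non-uniform in $\Cc$ (one algorithm per class), which is the standard state of affairs for this corollary and is consistent with the phrasing of the statement.
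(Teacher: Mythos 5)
Your proof is correct and follows essentially the same route as the paper: invoke the Graph Minor Theorem to get a finite forbidden-minor set $\Ff$, use the Kostochka--Thomason edge bound to reduce to $m = \Oh_\Ff(n)$, and apply \cref{thm:main} with $X=\emptyset$ to each $H\in\Ff$. The extra care you take in short-circuiting on the edge count and in flagging the non-uniformity of the algorithm in $\Cc$ are standard clarifications that the paper leaves implicit, but they do not change the argument.
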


\cref{thm:minor-testing} provides a unified non-constructive technique for  obtaining almost-linear time algorithms for problems that can be reduced to deciding membership in a minor-closed class, including parameterized problems where this applies to every slice (instances with a fixed value of the parameter) of the problem. As we mentioned, linear-time algorithms for some of concrete problems of this type were known, including computing the genus or the apex number of a graph, but there are some concrete applications for which \cref{thm:minor-testing} seems to improve the state of the art. For instance, recognition of graphs admitting a linkless embedding in $\mathbb{R}^3$ was known to be solvable in time $\Oh(n^2)$~\cite{KawarabayashiKM12}, while \cref{thm:minor-testing} provides an $n^{1+o(1)}$-time algorithm for this problem.

Finally, we note that since {\sc{Disjoint Paths}} can be modelled as an instance of {\sc{Rooted Minor Containment}} for a specific pattern graph $H$, we have the following.

\begin{theorem}\label{thm:disjoint-paths}
 The {\sc{Disjoint Paths}} problem can be solved in time $\Oh_k(m^{1+o(1)})$, where $m$ is the total number of vertices and edges of the input graph.
\end{theorem}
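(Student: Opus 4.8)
The plan is to derive \cref{thm:disjoint-paths} directly from \cref{thm:main}, using the reduction already sketched in the introduction. Given a {\sc{Disjoint Paths}} instance consisting of a graph $G$ with $m$ vertices and edges in total together with terminal pairs $(s_1,t_1),\ldots,(s_k,t_k)$, I would construct a {\sc{Rooted Minor Containment}} instance as follows: keep the host graph $G$ unchanged, set the root set to $X=\{s_1,t_1,\ldots,s_k,t_k\}$ (so $|X|\le 2k$), let $H$ be the edgeless graph on vertex set $\{u_1,\ldots,u_k\}$, and let $\pi$ be the function with $\pi(u_i)=\{s_i,t_i\}$ for each $i\in\{1,\ldots,k\}$. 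Invoking the algorithm of \cref{thm:main} on this instance runs in time $\Oh_{H,|X|}(m^{1+o(1)})=\Oh_k(m^{1+o(1)})$, since both $|V(H)|$ and $|X|$ are bounded by functions of $k$, and the reduction itself is a linear-time transformation that does not modify $G$.

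The second step is to verify that the two instances are equivalent. If $\eta$ is a rooted minor model of $(H,\pi)$ in $(G,X)$, then the branch sets $\eta(u_1),\ldots,\eta(u_k)$ are pairwise disjoint, each $G[\eta(u_i)]$ is connected, and $\{s_i,t_i\}\subseteq\eta(u_i)$; hence $G[\eta(u_i)]$ contains an $s_i$--$t_i$ path $P_i$, and the paths $P_1,\ldots,P_k$ are vertex-disjoint because the branch sets are. Conversely, given vertex-disjoint $s_i$--$t_i$ paths $P_1,\ldots,P_k$, setting $\eta(u_i)=V(P_i)$ yields a valid rooted minor model: the vertex sets are pairwise disjoint, each $G[V(P_i)]$ is connected and contains the prescribed roots, and since $H$ has no edges there are no adjacency requirements to verify. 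Degenerate terminal configurations are handled transparently by this correspondence: if $s_i=t_i$ then $\eta(u_i)$ may be a singleton, and if two pairs share a terminal then disjointness of branch sets forces a negative answer, which is also the correct answer for {\sc{Disjoint Paths}}.

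For the constructive part, whenever \cref{thm:main} returns a rooted minor model $\eta$, one recovers the paths by computing, inside each $G[\eta(u_i)]$, a path from $s_i$ to $t_i$ by breadth-first search; this costs time linear in $m$ and does not affect the stated bound. Putting the pieces together gives the claimed running time $\Oh_k(m^{1+o(1)})$. There is no genuine obstacle in this argument: all the difficulty is encapsulated in \cref{thm:main}, and the only thing to be careful about here is the bookkeeping of parameters and the exactness of the equivalence above.
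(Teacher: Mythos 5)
Your proposal is correct and takes the same route as the paper: the paper derives \cref{thm:disjoint-paths} directly from \cref{thm:main} via the reduction already sketched in the introduction ($X=\{s_1,t_1,\ldots,s_k,t_k\}$, $H$ edgeless on $k$ vertices, $\pi$ mapping vertices to terminal pairs), and the paper in fact offers no more detail than a one-line remark, so your write-up fills in exactly the bookkeeping one would want. One small point worth noting: when two pairs share a terminal, the sets $\pi(u_i)$ are not pairwise disjoint, so $(H,\pi)$ is not a well-formed $X$-rooted graph under \cref{def:rooted-minor}; you should simply detect any shared terminal up front and output a negative answer directly (which is the correct answer for {\sc{Disjoint Paths}} in that case, as you observe), rather than feeding an ill-formed instance to \cref{thm:main}.
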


\paragraph*{Our approach.} While a technical overview of the proof of \cref{thm:main} will follow in \cref{sec:overview}, we now briefly comment on how our approach fundamentally differs from the previous works.

As mentioned, so far most of linear-time implementations of irrelevant vertex arguments focused on identifying irrelevant vertices in larger batches, and removing many of them within a single linear-time iteration. This puts the bulk of the work on the graph-theoretic side, as one needs strong arguments to reason about the irrelevance of a significant portion of the graph. Given the immense complexity of irrelevant vertex arguments in general graphs, this makes this route towards subquadratic algorithms for {\sc{Rooted Minor Containment}} very difficult.

Our approach is to do the opposite: we still remove irrelevant vertices one by one, but we make use of the recent advances in data structures to detect every next irrelevant vertex in amortized time $n^{o(1)}$. As a result, we rely only on the most classic results from the graph minors theory --- the Flat Wall Theorem~\cite{GM13}, the Structure Theorem for minor-free graphs~\cite{RobertsonS03a}, and the Irrelevant Vertex Rule~\cite{GM13,DBLP:journals/jct/RobertsonS12} --- as black-boxes, while majority of the work lies on the algorithmic side.

The key ingredient that allows our improvement is the {\em{dynamic treewidth}} data structure due to Korhonen, Majewski, Nadara, Pilipczuk, and Soko\l{}owski~\cite{KorhonenMNPS23}. Namely, Korhonen et al. proved that given a dynamic graph $G$ of treewidth at most $k$, updated by edge insertions and deletions, one can maintain a tree decomposition of $G$ of width at most $6k+5$ in amortized update time $\Oh_k(n^{o(1)})$, together with the run of any reasonable dynamic programming algorithm on the maintained  decomposition. We use this data structure in a procedure that iteratively ``uncontracts'' the graph and detects and removes irrelevant vertices ``on the fly'', always keeping the treewidth bounded. This provides a conceptually transparent algorithm for {\sc{Rooted Minor Containment}} on apex-minor-free graphs, that is, graphs excluding a fixed apex graph (graph that can be made planar by removing one vertex) as a minor. With some technical work using the tools from the graph minor theory, particularly the Structure Theorem, this algorithm can be lifted to clique-minor-free graphs assuming that they are {\em{compact}}: the set of roots $X$ is well-linked, and one cannot separate a large portion of the graph from $X$ using a small separator. Next, we reduce the general clique-minor-free case to the compact case by providing an almost-linear time implementation of a recursive scheme in the spirit of {\em{recursive understanding}}~\cite{DBLP:journals/siamcomp/ChitnisCHPP16,DBLP:conf/stoc/GroheKMW11,DBLP:conf/focs/KawarabayashiT11}. To achieve this, we rely on multiple recent results on almost-linear time algorithms for flows and cuts~\cite{DBLP:conf/focs/Brand0PKLGSS23,DBLP:conf/focs/LiP20,DBLP:conf/focs/SaranurakY22}. Finally, the general case is reduced to the clique-minor-free case using a similar recursive scheme.

\paragraph*{Outlook.} We believe that the techniques introduced in this paper are of much wider applicability in the context of fixed-parameter algorithms and computational problems in the theory of graph minors. This in particular applies to
\begin{itemize}[nosep]
 \item the almost-linear time implementation of the irrelevant vertex rule on apex-minor-free graphs, using the dynamic treewidth data structure; and
 \item the almost-linear time implementation of recursive understanding.
\end{itemize}
In fact, a superset of the authors believes that they found a way to use this toolbox to give an almost-linear time algorithm for computing the tree decomposition provided by the Structure Theorem for minor-free graphs~\cite{RobertsonS03a}; a manuscript containing this result is under preparation. Another problem that could be approached using the introduced methodology is {\sc{Topological Minor Containment}}, for which a fixed-parameter algorithm was given by Grohe, Kawarabayashi, Marx, and Wollan~\cite{DBLP:conf/stoc/GroheKMW11}.

More concretely about this work, note that our main result, \cref{thm:main}, measures the running time in the total number of vertices and of edges. In the context of \cref{thm:minor-testing} one can assume that the number of edges is linear in the number of vertices, but we do not know of any technique that would allow making such an assumption in the context of solving the general {\sc{Rooted Minor Containment}} problem in almost-linear time. We remark that Kawarabayashi et al.~\cite{KawarabayashiKR12} showed how to reduce the number of edges to linear in the number of vertices  within their quadratic-time algorithm, and for this they used the sparsification method of Nagamochi and Ibaraki~\cite{NagamochiI92}. However, their approach does not seem to be directly applicable within the regime of almost-linear time complexity.


We remark that the running time of our algorithm would improve to $\Oh_{H,|X|}(m \log^{\Oh(1)} n)$ directly if the $n^{o(1)}$-factors in the running times of the subroutines from~\cite{KorhonenMNPS23,DBLP:conf/focs/Brand0PKLGSS23,DBLP:conf/focs/SaranurakY22} would be improved to $\log^{\Oh(1)} n$-factors.
However, it seems that obtaining a linear $\Oh_{H,|X|}(m)$ running time would require different techniques, as for example the dynamic treewidth data structure cannot be expected to be optimized to work in $\Oh_k(1)$ amortized update time (see~\cite{DBLP:journals/siamcomp/PatrascuD06}).

\section{Overview}\label{sec:overview}
In this section we present an informal overview of the proof of \Cref{thm:main}.
This overview is presented in three parts comprising \Cref{subsec:overview:apexminorfree,subsec:overview:cliqueminorfree,subsec:overview:general}.
We assume that the reader has basic familiarity with treewidth (see e.g. \cite[Chapter~7]{platypus}), but we do not assume much knowledge about graph minor theory.
Let us start with a short explanation of the general plan.

First, instead of focusing on {\sc{Rooted Minor Containment}}, we solve the more general {\sc{Folio}} problem, defined as follows. Given a graph $G$, a set of roots $X\subseteq V(G)$, and a parameter $\delta\in \N$, compute the {\em{$(X,\delta)$-folio}} of $G$: the set of all $X$-rooted minors of \emph{detail} at most $\delta$. Here, the detail of a rooted minor $(H,\pi)$ is the number of unrooted vertices, that is, vertices $u\in V(H)$ with $\pi(u)=\emptyset$. We give an algorithm for this problem with running time $\Oh_{|X|,\delta}(m^{1+o(1)})$. Moreover, with every rooted minor contained in the folio, the algorithm also computes a witnessing minor model. Clearly, this result implies \cref{thm:main}.

We build our algorithm by giving three successive algorithms for \Folio that work on increasingly more general graph classes.
First, we give an algorithm for \Folio on \emph{apex-minor-free} graphs, i.e., on graphs that exclude a bounded-size apex graph as a minor.
An \emph{apex graph} is a graph that can be made planar by deleting one vertex.
The main idea of this algorithm is to implement the irrelevant vertex technique of Robertson and Seymour~\cite{GM13} efficiently with the recent dynamic treewidth data structure of Korhonen, Majewski, Nadara, Pilipczuk, and Soko{\l}owski~\cite{KorhonenMNPS23}.

Second, we give an algorithm for \Folio on clique-minor-free graphs by reducing the problem from clique-minor-free graphs to apex-minor-free graphs.
This reduction works by implementing (a version of) the ``recursive understanding'' technique~\cite{DBLP:journals/siamcomp/ChitnisCHPP16,DBLP:conf/stoc/GroheKMW11,DBLP:conf/focs/KawarabayashiT11} in almost-linear time, by applying, among other techniques, several recent results on almost-linear time algorithms for flows and cuts~\cite{DBLP:conf/focs/Brand0PKLGSS23,DBLP:conf/focs/LiP20,DBLP:conf/focs/SaranurakY22}.
In particular, recursive understanding reduces the problem to graphs that are highly connected in a certain sense.
Graphs that are highly connected in this sense and exclude a clique as a minor are known to also exclude an apex graph as a minor after the deletion of bounded number of vertices~\cite{RobertsonS03a,LokshtanovPP022}.

Finally, we reduce the general case of \Folio to clique-minor-free graphs by again using (a simpler version of) recursive understanding, and exploiting a lemma of Robertson and Seymour~\cite{GM13} that guarantees that the $(X,\delta)$-folio of a graph $G$ is {\em{generic}}, i.e., contains all possible rooted graphs, if $G$ contains a clique minor into which $X$ is well-connected enough.

\subsection{Apex-minor-free graphs}
\label{subsec:overview:apexminorfree}
The first ingredient of our result is the desired algorithm for input graphs $G$ that exclude an apex graph as a minor.
In the \FolioApexG problem, we are given an instance $(G,X,\delta)$ of \Folio and an apex graph $R$, and the task is to either return a solution to $(G,X,\delta)$, or a minor model of $R$ in the graph $G-X$.
Our goal is to give an $\Oh_{|X|,\delta,R}(m^{1+o(1)})$ time algorithm for \FolioApexG.
The requirement to return the apex-minor in $G-X$ instead of $G$ means the algorithm in fact works on graphs that are apex-minor-free after the removal of $X$; this is the generality needed for the next step of clique-minor-free graphs.
In this overview, let us assume that $G-X$ indeed excludes $R$ as a minor.
The same ideas, after minor technical adjustments, work for the case when the model of $R$ needs to be returned.

\paragraph{The irrelevant vertex technique.}
Our algorithm for \FolioApexG is based on the \emph{irrelevant vertex technique}, that was introduced by Robertson and Seymour for their $\Oh_{|X|,\delta}(n^3)$ time algorithm for \Folio~\cite{GM13}.
Robertson and Seymour showed that if the treewidth of the graph $G$ is large compared to $|X|$ and $\delta$, then $G$ contains a vertex $v$ that is \emph{irrelevant} for the $(X,\delta)$-folio of $G$, meaning that the $(X,\delta)$-folios of $G$ and $G - v$ are the same.
In particular, they gave an algorithm that in $\Oh_{|X|,\delta}(n^2)$ time either finds such an irrelevant vertex $v$, or produces a tree decomposition of $G$ of width bounded by a function of $|X|$ and~$\delta$.
This yields their $\Oh_{|X|,\delta}(n^3)$ time algorithm simply by repeatedly removing irrelevant vertices until we obtain a tree decomposition of bounded width, and at that point solving the problem by dynamic programming on the tree decomposition.
Kawarabayashi, Kobayashi, and Reed gave their $\Oh_{|X|,\delta}(n^2)$ time algorithm by optimizing the algorithm for finding an irrelevant vertex or a tree decomposition to work in $\Oh_{|X|,\delta}(n)$ time~\cite{KawarabayashiKR12}.

In our algorithm for \FolioApexG, we also use the combinatorial argument for irrelevant vertices given by~\cite{GM13} (proof completed in \cite{DBLP:journals/jct/RobertsonS12}), but implement the repeated removal of irrelevant vertices more efficiently.
Let us then explain some details of the irrelevant vertex argument of~\cite{GM13} that are needed for explaining our algorithm.

\begin{figure}[!t]
\centering
\input{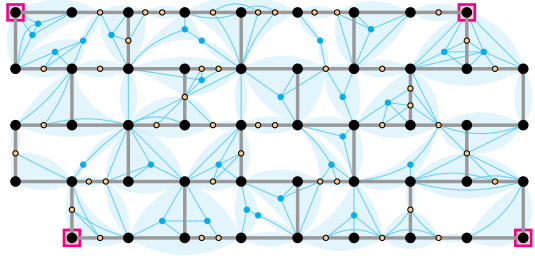}
\caption{An example of a wall $W$ of height four, consisting of the black and yellow vertices and the grey edges between them. The wall $W$ and the blue vertices and edges form a graph $G[C]$ such that $(W,C)$ is a flat wall in a graph $G$.}
\label{fig:flatwall}
\end{figure}
On graphs excluding a minor, the irrelevant vertex argument is based on flat walls.
A \emph{wall} of height $h$ is a subcubic graph similar to the $(h+1) \times (h+1)$-grid, see \Cref{fig:flatwall}.
The \emph{perimeter} of a wall $W$ is the cycle of $W$ bounding the outer face in the drawing shown in \Cref{fig:flatwall}, and is denoted by $\mathsf{Per}(W)$.
The four \emph{corners} of $W$ are the vertices highlighted with red in \Cref{fig:flatwall}.
By the Grid Minor Theorem of Robertson and Seymour~\cite{RobertsonS86}, a graph of large treewidth necessarily contains a wall of large height as a subgraph.
For the purpose of this overview, let us give a simple informal definition of a flat wall: a \emph{flat wall} of height $h$ in a graph $G$ is a pair $(W,C)$, where $W$ is a wall of height $h$ and $C$ is a connected set of vertices such~that
\begin{itemize}[nosep]
\item $W$ is a subgraph of $G[C]$;
\item a vertex in $C$ is adjacent to $V(G) - C$ only if it is in $\mathsf{Per}(W)$; and
\item $G[C]$ can be drawn in a certain planar-like way such that the corners of $W$ are on the outer face in this drawing (see \Cref{fig:flatwall}).
\end{itemize}

The Flat Wall Theorem of Robertson and Seymour~\cite{GM13} states that if a graph excludes a minor, then any wall that is large enough can be turned into a (smaller) flat wall after the removal of a small number of vertices from $G$.
In particular, there is a function $f(t, h)$ so that if $G$ has treewidth at least $f(t,h)$ and excludes $K_t$ as a minor, then $G$ contains a set $Z \subseteq V(G)$ of size $|Z| \le t^2$ and a flat wall $(W,C)$ of height $h$ in $G-Z$.
On apex-minor-free graphs, one can assume that the set $Z$ is empty, and therefore, in our setting we can assume that the set $Z$ in fact equals the set $X$ received as an input for the \Folio problem.

Robertson and Seymour~\cite{GM13} showed that the central vertices of a flat wall that satisfies certain extra conditions, called {\em{homogeneity}}, are irrelevant for the $(X,\delta)$-folio of $G$.
In this overview, let us not focus on these extra conditions, as any large enough flat wall $(W,C)$ can be turned into a flat wall $(W',C')$ that satisfies them and has $C' \subseteq C$, but let us instead assume that all flat walls satisfy these conditions.

To sum up, now our goal is to repeatedly find large flat walls and remove their central vertices from~$G$, until the treewidth of $G$ becomes small.
We also know that if $G$ has high treewidth, then $G - X$ is guaranteed to contain a large enough flat wall because $G-X$ is apex-minor-free.
Here, what is ``high treewidth'' and ``large enough flat wall'' depends only on $|X|$, $\delta$, and the excluded apex graph $R$.

\paragraph{Dynamic treewidth.}
To remove irrelevant vertices efficiently, we devise an algorithm using the recent \emph{dynamic treewidth} data structure by Korhonen, Majewski, Nadara, Pilipczuk, and Soko\l{}owski~\cite{KorhonenMNPS23}.
The dynamic treewidth data structure maintains a bounded-width tree decomposition of a  dynamic graph of bounded tree\-width under edge additions and deletions.
In particular, it is initialized with an $n$-vertex graph $G$ and an integer $k$.
It maintains a tree decomposition of $G$ of width at most $6k+5$, under edge additions and deletions, assuming that the treewidth of $G$ never exceeds $k$.
The amortized update time is $\Oh_k(n^{o(1)})$, and the initialization time is $\Oh_k(n^{1+o(1)})$.

Within the same running time, the data structure can also maintain any dynamic programming scheme on the tree decomposition, provided the dynamic programming state of a node can be computed from the states of its (at most $2$) children in $\Oh_k(1)$-time.
This is formalized in~\cite{KorhonenMNPS23} by stating that the data structure can maintain whether $G$ satisfies a fixed property expressible in the counting monadic second-order logic ($\CMSO_2$), analogously to Courcelle's theorem~\cite{courcelle1990monadic}.
In our algorithm, we also use $\CMSO_2$ as the formalism for describing dynamic programming on tree decompositions, and extend the interface provided by~\cite{KorhonenMNPS23} a little, observing that their data structure can be easily extended to also maintain labels on vertices, and to provide an operation that outputs a vertex $v \in V(G)$ that satisfies a $\CMSO_2$-formula $\varphi(v)$ (if such $v$ exists).
By using vertex-labels, we can assume that the data structure also supports deletions of vertices and insertions of isolated vertices in $\Oh_k(n^{o(1)})$ amortized time.

\paragraph{Fast removal of irrelevant vertices.}
With the preliminaries about irrelevant vertices and dynamic treewidth stated, we can now describe our algorithm.
The idea of the algorithm is to first contract the graph $G$ so that it contains only the vertices in $X$, and then start iteratively uncontracting it, removing irrelevant vertices as the treewidth grows.
Throughout the uncontracting process, the graph is stored in the dynamic treewidth data structure, which allows to quickly find irrelevant vertices.

More precisely, our algorithm works as follows.
Let us assume that $G-X$ is connected, as the $(X,\delta)$-folios of the graphs $G[C_i \cup X]$, where $C_i$ ranges over the connected components of $G-X$, can be easily combined to compute the $(X,\delta)$-folio of $G$.
Let $v_1,\ldots,v_{|X|}$ be an arbitrary ordering of $X$, and let $v_{|X|+1}, \ldots, v_n$ be an ordering of $V(G)-X$ so that any suffix $v_i,\ldots,v_n$ (for $i>|X|$) induces a connected subgraph of~$G$.
Note that such an ordering of $V(G)-X$ can be found, for example, as a post-order traversal of a spanning tree of $G-X$.

Then, for $i \in [|X|,n-1]$, let $G_i$ be the graph obtained from $G$ by contracting the vertices $\{v_{i+1},\ldots,v_n\}$ into a single ``mega-vertex'' $v^\star$.
In other words, $G_i$ has the vertex set $V(G_i) = \{v_1,\ldots,v_i\} \cup \{v^\star\}$, with the subgraph induced by $\{v_1,\ldots,v_i\}$ being equal to that of $G$, and with $v^\star$ being adjacent to $v_j$ with $j \le i$ if $v_j$ is adjacent to some vertex $v_d$ with $d > i$.
Note that because each suffix of the ordering is connected, each $G_i$ is a minor of $G$, and in particular, $G_i - X$ is a minor of $G-X$, and therefore apex-minor-free.

Our algorithm starts by initially setting $i = |X|$ and initializing the dynamic treewidth data structure to contain $G_i$.
Then, it iteratively transforms $G_i$ into $G_{i+1}$, by inserting the vertex $v_{i+1}$ and inserting and deleting edges as needed, and then increasing $i$ by one.

Suppose then that $i$ is the first iteration at which the treewidth of $G_i$ is at least some threshold, say $k \in \N$ depending on $|X|$, $\delta$, and $R$.
Now our goal is to decrease the treewidth of $G_i$ to below $k$ by removing irrelevant vertices.
The crucial observation is: if $(W,C)$ is a large flat wall in $G_i-X$, then by ``zooming'' into a subwall, we can find a (still large) flat wall $(W',C')$ in $G_i-X$ so that $C'$ does not contain the mega-vertex $v^\star$.
Now, because $C'$ does not contain $v^\star$, we have that $(W',C')$ is also a flat wall in $G-X$!
Therefore, we know that a central vertex $v$ of $W'$ is irrelevant for the $(X,\delta)$-folio of $G$, and thus can be safely removed from $G$.

We mark the irrelevant vertex $v$ as removed by adding it to a set $I \subseteq V(G)$ of irrelevant vertices that we start accumulating, and remove it from the graph held by the dynamic treewidth data structure.
In particular, the description of the algorithm above should be adjusted so that $G_i$ is replaced by $G_i - I$ and $G$ by $G-I$, where $I$ is the current set of irrelevant vertices accumulated.
The set $I$ always satisfies the invariant that the $(X,\delta)$-folio of $G-I$ is equal to the $(X,\delta)$-folio of $G$.
The algorithm keeps removing irrelevant vertices like this at the $i$th iteration until the treewidth of $G_i - I$ shrinks below $k$, and then continues the loop with $i$ increased by one.
In particular, always when $i$ increases, the treewidth of $G_i - I$ is guaranteed to be less than $k$, and therefore the treewidth of $G_{i+1} - I$ is guaranteed to be at most $k$.

At the end the treewidth of $G-I$ is at most $k$, and therefore we can solve \Folio on $G-I$ by dynamic programming in $\Oh_{|X|,\delta,k}(n)$ time, which by our invariant is equivalent to solving \Folio on $G$.

Now, there are some details to figure out.
First, to actually find the irrelevant vertex using the dynamic treewidth data structure, we need to write a $\CMSO_2$-formula $\varphi(v)$ that checks whether a vertex $v$ is a central vertex of a large flat wall $(W,C)$, where $C$ does not contain the mega-vertex $v^\star$.
Writing such a formula $\varphi(v)$ requires fixing several technical details, but ultimately it is not surprising that it can be done. We remark that the usage of $\CMSO_2$ in the context of computation problems in graph minor theory is not entirely new, it was done for instance by Grohe, Kawarabayashi, and Reed in~\cite{GroheKR13}.

Another detail to verify is that the iterative transformations of $G_i - I$ into $G_{i+1} - I$ cause at most $\Oh(m)$ edge insertions and deletions throughout the algorithm.
This is quite simple: An edge of the form $v_i v_j \in E(G)$ with $j > i$ is added when transforming $G_{j-1} - I$ into $G_j - I$ (if $i \notin I$), and is deleted only when $i$ or $j$ is added to $I$.
An edge of the form $v_i v^\star$ incident to the mega-vertex $v^\star$ gets always added when transforming $G_{i-1} - I$ into $G_i - I$ because the connectedness of every suffix implies that $v_i$ is adjacent to a vertex $v_j$ with $j>i$.
The edge $v_i v^\star$ gets deleted when transforming $G_{j-1} - I$ into $G_j - I$ if $v_j$ is the largest-index vertex to which $v_i$ is adjacent to.
To sum up, each possible edge gets added and deleted at most once, and it is easy to compute the iterations at which this happens.
The possible edges are those of $G$ and the edges $v_i v^\star$ for all $i$, so the total number of them is at most $|E(G)|+|V(G)|$.
This concludes the overview of our algorithm for \FolioApexG.

After this description, the reader may wonder why the same idea does not work for minor-free graphs.
The problem with minor-free graphs is that there we obtain a flat wall only after deleting a bounded-size set $Z \subseteq V(G)$ of vertices, which is not necessarily contained in the set $X$ like in our algorithm for apex-minor-free graphs.
This would be otherwise fine, but it is problematic if the set $Z$ obtained by applying the Flat Wall Theorem in the graph $G_i - I$ contains the mega-vertex $v^\star$.
In that case, we have no guarantee that the flat wall obtained in $G_i - I$ can be transformed into a flat wall in $G - I$.

\subsection{Clique-minor-free graphs}
\label{subsec:overview:cliqueminorfree}
The second step is to give the desired algorithm for input graphs $G$ that exclude a minor, which in the full generality is equivalent to excluding a clique minor.
In the \FolioClique problem, we are given an instance $(G,X,\delta)$ of \Folio and an integer $h$, and the task is to either return a solution to $(G,X,\delta)$, or a minor model of $K_h$ in $G$.
Our goal now is to give an $\Oh_{|X|,\delta,h}(m^{1+o(1)})$-time algorithm for \FolioClique.
By employing a trick using binary search, we can assume that in the latter case it suffices to only return the conclusion that $G$ contains $K_h$ as a minor, and thus in this overview let us simply just assume that $G$ does not contain $K_h$ as a minor.

\paragraph{Compact graphs.}
We use the following definition of ``highly-connected'' graphs.
Let $X \subseteq V(G)$ be a set of vertices (which indeed will be the same set $X$ for which we are solving $(X,\delta)$-folio), and $k,\alpha \in \N$ positive integers.
A set $C \subseteq V(G)$ is an \emph{$(X,k,\alpha)$-chip} if $C$ is connected and disjoint with $X$, $|N(C)| < k$, and $|C| \ge \alpha$.\footnote{$N(C)$ denotes the neighborhood of $C$.}
A graph $G$ is \emph{$(X,k,\alpha)$-compact} if it does not have any $(X,k,\alpha)$-chips.
The set $X \subseteq V(G)$ is \emph{well-linked} if there are no separations $(A,B)$ of $G$ so that $|A \cap X|,|B \cap X| > |A \cap B|$.\footnote{A separation of $G$ is a pair $(A,B)$ so that $A \cup B = V(G)$ and there are no edges between $A \setminus B$ and $B \setminus A$.}

Our algorithm for \FolioClique can be separated into two ingredients: First we employ a recursive process, that reduces the problem to graphs that are $(X,k,\alpha)$-compact, where $k$ is a large enough constant depending on $h$, $|X| \ge k$, $\alpha$ depends on $k$ and $\delta$, and $X$ is well-linked.
Second, we observe that such graphs are apex-minor-free after the removal of bounded number of vertices, and solve \Folio by using the algorithm for \FolioApexG described in \Cref{subsec:overview:apexminorfree}.

\paragraph{The algorithm for compact graphs.}
Let us start with the second ingredient, as it is simpler.
The algorithm in this case boils down to the observation that if $X$ is well-linked, and $G$ is $(X,k,\alpha)$-compact and excludes $K_h$ as a minor, for $k$ large enough depending on $h$, the Structure Theorem for graphs excluding a minor of Robertson and Seymour~\cite{RobertsonS03a} degenerates into a case where the tree decomposition provided by it has only one large bag, and other bags are adjacent to this central bag and have size bounded by $\alpha$. This observation was also used recently by Lokshtanov, Pilipczuk, Pilipczuk, and Saurabh~\cite{LokshtanovPP022} in the context of an isomorphism test for graphs excluding a minor.
Let us not delve into the statement of the Structure Theorem here, but what we derive from this observation is the following: if an $(X,k,\alpha)$-compact graph $G$ excludes $K_h$ as a minor, then there is a set $Z$ of bounded size so that $G-Z$ excludes an apex graph $R$ as a minor. Here, $|Z|$ and $R$ depend on only $|X|$, $k$, and $\alpha$.

Now if we would know the set $Z$, we could directly apply the algorithm for \FolioApexG with the instance $(G,X \cup Z, \delta)$ and the apex graph $R$, and guarantee to receive the solution for \Folio.
In reality we do not know $Z$, but take a different iterative approach, where in each application of the algorithm for \FolioApexG, we either receive the solution to \Folio, or find a vertex that must be in $Z$.

An \emph{apex-grid} of order $p$ is the apex graph formed by adding an universal vertex to the $p \times p$-grid.
Let $R'$ be an apex-grid of large order depending on $R$ and $h$.
We apply the algorithm for \FolioApexG with the instance $(G,X \cup Z', \delta)$ and the apex-grid $R'$, where $Z' \subseteq Z$ is the subset of $Z$ found so far.
If it returns a solution for \Folio we are done.
If it returns a minor model of $R'$ in $G-(X \cup Z')$, we can extract from it (in $\Oh_{R'}(m)$-time) either a minor model of $K_h$, or a minor model of a large grid with a vertex $a$ adjacent to all branch sets of the model (in $G-(X \cup Z')$).
In the former case we can terminate the algorithm immediately, and in the latter case we argue that vertex $a$ must belong to $Z$, so we add it to $Z'$ and repeat.
After $|Z| = \Oh_{|X|,k,\alpha}(1)$ iterations we are done.

\paragraph{Making $X$ well-linked.}
The more challenging ingredient is the process of reducing the problem to the case assumed in the above explanation.
This consists of making two assumptions hold: First, making $X$ a well-linked set of size at least $k$, and second, making $G$ into an $(X,k,\alpha)$-compact graph.
Let us start with the easier of them, which is making $X$ into a well-linked set of size at least $k$.

First, with the Ford-Fulkerson algorithm, we can in $\Oh_{|X|}(m)$ time test if $X$ is well-linked, and if not, find a separation $(A,B)$ of $G$ so that $|A \cap X|,|B \cap X| > |A \cap B|$.
When we find such a separation, we can recursively solve the instances $(G[A], (A \cap X) \cup (A \cap B), \delta)$ and $(G[B], (B \cap X) \cup (A \cap B), \delta)$, and combine the solutions.
Note that it is guaranteed that $|(A \cap X) \cup (A \cap B)|, |(B \cap X) \cup (A \cap B)| < |X|$.

The recursive procedure described above stops when $X$ becomes well-linked.
At that point, there are two cases: either $|X| < k$, or $|X| \ge k$.

The former case of $|X| < k$ is the simpler one of them. In that case, we wish to add vertices to $X$ to continue this recursion.
If we added arbitrary vertices, then the recursion could continue like this for depth $\Omega(n)$, and the algorithm could run in time $\Omega(n^2)$, essentially emulating the $\Oh_k(n^2)$-time treewidth approximation algorithm of Robertson and Seymour~\cite{GM13}.
Instead, we wish to emulate the $\Oh_k(n \log n)$-time treewidth approximation algorithm of Reed~\cite{DBLP:conf/stoc/Reed92}, which is essentially a modification of the Robertson-Seymour algorithm that achieves recursion-depth $\Oh_k(\log n)$.
In particular, by modifying the approach of Reed~\cite{DBLP:conf/stoc/Reed92}, we obtain an algorithm that either finds a balanced separator of $G$ of size at most $3k$, or a well-linked set of size $3k$, so that for any separator of $G$ of size at most $k$, the large side with respect to this well-linked set is also the large side with respect to the number of vertices.
Now if we find a small balanced separator, we recurse using it (causing the set $X$ of the recursive call to have size up to $4k-1$), and if we find a well-linked set of size $3k$ with such property, we add it to $X$ (also causing the set $X$ of the recursive call to have size up to $4k-1$).
This guarantees that the recursion-depth stays at most $\Oh_k(\log n)$.

\paragraph{Making $G$ compact.}
The case that we did not cover is what to do when $X$ is well-linked and has size $|X| \ge k$.
This is the main case, in which we now wish to make the graph $G$ into an $(X,k,\alpha)$-compact graph and apply the reduction to apex-minor-free graphs described before.

The basic approach for making $G$ into an $(X,k,\alpha)$-compact graph follows the standard technique of \emph{recursive understanding}: if there exists an $(X,k,\alpha)$-chip $C$, then compute the $(N(C), \delta)$-folio of $G[N[C]]$\footnote{$N[C]$ denotes $N(C) \cup C$.} by calling the algorithm recursively, and replace the graph $G[N[C]]$ by a graph $H_C$ having an equivalent $(N(C),\delta)$-folio and $H_C[N(C)] = G[N(C)]$, but with the size $\|H_C\|$ bounded by a function of $|N(C)|$ and $\delta$.
Note that $|N(C)| < k$, and at this point, let us fix $\alpha$ (based on $k$ and $\delta$) so that it is guaranteed that $\|H_C\| \le \alpha/2$, which implies that this replacement operation decreases the size of $G$ by at least $|C|/2$.
The computability of such bounded-size replacements from the $(N(C),\delta)$-folio of $G[N[C]]$ is a relatively standard fact, and analogous statements have been used before, for example in \cite{DBLP:conf/stoc/GroheKMW11}.

Because $C$ is disjoint from $X$, and the graph $H_C$ has the same $(N(C),\delta)$-folio as $G[N[C]]$, it follows that replacing $G[N[C]]$ by $H_C$ does not change the $(X,\delta)$-folio of $G$.
Note that this implies that it also does not change the fact that $X$ is well-linked, as Menger's theorem implies that this is certified by collections of disjoint paths between vertices in $X$.

By repeating this replacement-scheme until no more chips exist, the graph $G$ turns into an $(X,k,\alpha)$-compact graph, while preserving its $(X,\delta)$-folio.
This general strategy, known as recursive understanding, has been employed in several parameterized algorithms before~\cite{DBLP:journals/siamcomp/ChitnisCHPP16,DBLP:journals/siamcomp/CyganLPPS19,DBLP:conf/stoc/GroheKMW11,DBLP:conf/focs/KawarabayashiT11}.
However, all of its previous implementations had at least quadratic running times.
A major novel insight of our work is to show that it can be implemented with almost-linear running time.

\paragraph{Fast recursive understanding.}
We say that a vertex $v \in V(G)$ is \emph{$(X,k,\alpha)$-carvable} if $v$ is in some $(X,k,\alpha)$-chip $C$.
We first show that a family $\chipfamily$ of pairwise non-touching\footnote{Two chips $C_1$ and $C_2$ are {\em{touching}} if they intersect or have an edge between.} $(X,k,\alpha)$-chips that contains at least a fraction of $1/\Oh_{k,\alpha}(\log n)$ of all $(X,k,\alpha)$-carvable vertices can be computed in $\Oh_{k,\alpha}(m^{1+o(1)})$ time.
This is however not enough, because a priori we have almost no control on how replacing the chips could create new carvable vertices, meaning that implementing the replacement operation with such a family is not guaranteed to decrease the number of carvable vertices, implying that this process could go on for potentially $\Omega(n)$ iterations.

We fix this by showing that one can compute replacements $H_C$, of size bounded by a function of $k$ and $\delta$, that not only preserve that $(N(C),\delta)$-folio, but also are guaranteed to not create new $(X,k,\alpha)$-carvable vertices outside of the replaced subgraphs.
By using replacements with these stronger properties, we indeed guarantee that one iteration of the replacement-scheme decreases the number of $(X,k,\alpha)$-carvable vertices by a factor of $1/\Oh_{k,\alpha}(\log n)$, implying that $\Oh_{k,\alpha}(\log^2 n)$ iterations are sufficient to make $G$ into an $(X,k,\alpha)$-compact graph.

Let us say a few words about the techniques for achieving the claims of the previous two paragraphs.
For finding a large family of $(X,k,\alpha)$-chips, we use the approach of Chitnis, Cygan, Hajiaghayi, Pilipczuk, and Pilipczuk~\cite{DBLP:journals/siamcomp/ChitnisCHPP16} that applies color coding~\cite{DBLP:journals/jacm/AlonYZ95} to reduce the problem to a problem about finding \emph{terminal $k$-chips}: Let $G$ be a graph, $z \in V(G)$ a vertex, and $T \subseteq V(G) - \{z\}$ a set of terminals so that $T \cup \{z\}$ is an independent set.
A \emph{terminal $k$-chip} is a connected set $C \subseteq V(G)$, so that $C$ contains at least one terminal from $T$, does not contain $z$, $|N(C)| < k$, and $N(C)$ is disjoint with $T \cup \{z\}$.
The task is to find a collection $\chipfamily$ of non-touching terminal $k$-chips that contains all terminals $v \in T$ that are contained in some terminal $k$-chip.

The existence of such a collection $\chipfamily$ follows from the existence of Gomory-Hu trees for symmetric submodular functions~\cite{DBLP:journals/combinatorica/GoemansR95}.
Moreover, we could in fact quite directly apply the algorithm of Pettie, Saranurak, and Yin~\cite{DBLP:conf/stoc/PettieSY22} for Gomory-Hu trees for element connectivity to find this collection in $\Oh_k(m^{1+o(1)})$ time.
This algorithm is randomized, so to keep our algorithm deterministic, we instead devise our own deterministic $\Oh_k(m^{1+o(1)})$-time algorithm for finding such a collection $\chipfamily$.
For this, we apply the technique of isolating cuts of Li and Panigrahi~\cite{DBLP:conf/focs/LiP20}, together with the deterministic version of the almost-linear time algorithm for max-flow by van den Brand, Chen, Peng, Kyng, Liu, Gutenberg, Sachdeva, and Sidford~\cite{DBLP:conf/focs/Brand0PKLGSS23} (see~\cite{DBLP:conf/focs/ChenKLPGS22} for the original randomized version), and the mimicking network construction of Saranurak and Yingchareonthawornchai~\cite{DBLP:conf/focs/SaranurakY22}.

For the replacement graphs $H_C$ that do not create new carvable vertices, we show the existence of what we call \emph{intrusion-preserving} replacements.
They achieve the property that if $D'$ is a connected set in the graph $G'$ obtained by replacing $G[N[C]]$ by $H_C$, and $D'$ has $|N(D')| < k$, then a corresponding set $D$ also exists in $G$, having $|D| \ge |D'|$, $|N(D)| \le |N(D')|$, and so that $D \cap (V(G) - C) = D' \cap (V(G) - C)$.
In particular, any $(X,k,\alpha)$-chip in $G'$ can be translated into a no-smaller chip in $G$.
We prove the existence of intrusion-preserving replacements of bounded size by a quite intricate argument using preservers for folios as building blocks.
After their existence is proven, they are quite easy to compute: we only need to construct an algorithm that given $G[N[C]]$ and the $(N(C),\delta)$-folio of $G[N[C]]$, verifies in linear time if a given bounded-size $H_C$ indeed satisfies the desired properties.
For this we use the technique of important separators~\cite{DBLP:journals/tcs/Marx06}.

This finishes the description of our implementation of recursive understanding.
Let us mention a couple more details that need attention.
First, we need to make sure that the whole recursion tree  has depth $\Oh_k(\log n)$. Recall here that our recursion alternates between recursing along separators to make $X$ well-linked, and using recursive understanding when $X$ is a large well-linked set.
The bound on the recursion depth is guaranteed by the fact that when using recursive understanding, the size of $X$ in the recursive call is less than $k$, so before applying the recursive understanding step again, the ``Reed step'' must occur, making sure that the size of the graph significantly decreases on any path of the recursion tree containing multiple recursive understanding steps.

It is not enough that the recursion tree has depth $\Oh_k(\log n)$, but to not blow up the total size of all graphs in the tree, we need to guarantee that the total number of vertices in the ``child instances'' of an instance is at most a factor of $(1+1/\log n)$ larger.
This is trivially true in other steps of the overall recursive scheme, but in the recursive understanding step we need to use a trick that if a chip has size $|C| \le \Oh_k(\log n)$, then instead of solving it recursively, we just solve it with the Robertson-Seymour algorithm, running in time $\Oh_{k,\delta}(\log^3 n)$~\cite{GM13}.

\subsection{General case}
\label{subsec:overview:general} 
Finally, let us discuss how to reduce the general case of \Folio to \FolioClique.
For this, the main ingredient is a lemma of Robertson and Seymour~\cite{GM13}, that states that if a graph $G$ contains a minor model of a large enough clique, whose no branch set can be separated from $X$ by a separator of size $<|X|$, then the $(X,\delta)$-folio of $G$ is \emph{generic}, meaning that it contains all $X$-rooted graphs with detail $\delta$.

Our idea is that if $G$ is $(X,|X|,\alpha)$-compact, and contains a minor model of a large enough clique (depending on $\alpha$, $|X|$, and $\delta$), then by combining its branch sets to form a minor model of a smaller clique (but with larger branch sets), the preconditions of the lemma of Robertson and Seymour from~\cite{GM13} will be satisfied.
We give an algorithmic version of their lemma, which in this situation also recovers the models of the rooted minors in the $(X,\delta)$-folio, when given a model of the clique minor.
Therefore, on $(X,|X|,\alpha)$-compact graphs, \Folio directly reduces to \FolioClique, where the size of the sought clique minor depends on $\alpha$, $|X|$, and $\delta$.

Now the remaining goal is to turn the graph $G$ into an $(X,|X|,\alpha)$-compact graph, for some constant $\alpha$ depending on $|X|$ and $\delta$.
For this, we again employ a similar recursive understanding scheme with $(X,|X|,\alpha)$-chips and replacements as for minor-free graphs, but this time it is simpler because we do not require the properties that $X$ is well-linked or large.
In particular, recall that if $C$ is an $(X,|X|,\alpha)$-chip, then $|N(C)|<|X|$, implying that the depth of the recursion in this case is at most $|X|$.

\section{Preliminaries}

We denote the set of non-negative integers by $\N$. For integers $a,b$, we define $[a,b]\coloneqq \{a,a+1,\ldots,b\}$ when $a\leq b$ and $[a,b]\coloneqq \emptyset$ otherwise. Also, we use the shorthand $[a]\coloneqq [1,a]$. For a set $\Omega$, by $2^{\Omega}$ we denote the powerset of $\Omega$.

\paragraph*{Computation model.} We assume standard word RAM model with words of length $\Oh(\log n)$, where $n$ is the input length. As for representation of graphs, we assume that a vertex identifier fits within a single machine word and we use standard graph representation using adjacency lists. In particular, given a vertex $u$ of degree $d$, all edges incident to $u$ can be listed in time $\Oh(d)$.



\paragraph*{Graph terminology.}
All graphs in this paper are finite, simple (loopless and without parallel edges), and undirected, unless explicitly stated otherwise. By $V(G)$ and $E(G)$ we denote the vertex set and the edge set of a graph $G$.
By $\|G\|\coloneqq |V(G)|+|E(G)|$ we denote the total number of vertices and~edges.

By $N_G(u)$ we denote the set of neighbors of a vertex $u$ in a graph $G$.
For a vertex subset $X\subseteq V(G)$, its {\em{open neighborhood}} $N_G(X)$ comprises all vertices outside of $X$ that have a neighbor in $X$, and the {\em{closed neighborhood}} is defined as $N_G[X] = X \cup N_G(X)$. We may drop the subscript if it is clear from the context.

A set $A$ is {\em{connected}} in a graph $G$ if $G[A]$ is a connected graph. 
We define that a connected component of $G$ is an inclusion-wise maximal set $C \subseteq V(G)$ that is connected.
The set of connected components of $G$ is denoted by $\cc(G)$.
Two sets of vertices $A,B$ in a graph $G$ {\em{touch}} if $A\cap B\neq \emptyset$ or there exists a vertex in $A$ that has a neighbor belonging to $B$.

The clique on $h$ vertices is denoted by $K_h$. For $p\in \N$, the {\em{grid}} of order $p$ (or the $p\times p$ grid) is the graph on the vertex set $[p]\times [p]$ where vertices $(a,b)$ and $(a',b')$ are adjacent if $|a-a'|+|b-b'|=1$. We define the {\em{apex-grid}} of order $p$ to be the following graph: take a $p\times p$ grid and add one vertex adjacent to all the vertices of the grid. We will usually call this vertex the {\em{apex}} of an apex-grid. More generally, an {\em{apex graph}} is a graph that can be made planar by removing one vertex.

\paragraph{Treewidth.}
Let us recall the definition of treewidth.
A \emph{tree decomposition} of a graph $G$ is a pair $(T,\mathsf{bag})$ consisting of a tree $T$ and a function $\mathsf{bag}\colon V(T)\to 2^{V(G)}$ such that (i) for each edge $uv$ of $G$ there is a node $x\in V(T)$ with $\{u,v\}\subseteq \mathsf{bag}(x)$ and (ii) for each vertex $u$ of $G$, the set $\{x\in V(T)\colon u\in \mathsf{bag}(x)\}$ induces a non-empty connected subtree of $T$.
The \emph{width} of the decomposition $(T,\mathsf{bag})$ is $\max_{x\in V(T)}|\mathsf{bag}(x)|-1$. The \emph{treewidth} of $G$, denoted $\tw(G)$, is the minimum possible width of any tree decomposition of $G$.

{\em{Path decompositions}} and {\em{pathwidth}} are defined in the same way, except we require that tree $T$ underlying the decomposition is a path. In this context, we may simply assume that a path decomposition is a sequence of bags such that (i) every edge of $G$ has both endpoints in some bag, and (ii) for every vertex of~$G$, the bags containing this vertex form a non-empty interval in the sequence.

It is well-known that the treewidth of the $p\times p$ grid is equal to $p$.

\paragraph*{Separations and separators.}
A {\em{separation}} of a graph $G$ is a pair of vertex subsets $(A,B)$ such that $A\cup B=V(G)$ and there are no edges with one endpoint in $A\setminus B$ and the other in $B\setminus A$. The {\em{order}} of separation $(A,B)$ is the size of its {\em{separator}} $A\cap B$. With these definitions in place, we can recall the standard notion of a well-linked set.

\begin{definition}
 A set $X$ in a graph $G$ is {\em{well-linked}} if for every separation $(A,B)$ of $G$, we have
 $$|A\cap X|\leq |A\cap B|\qquad\textrm{or}\qquad |B\cap X|\leq |A\cap B|.$$
\end{definition}

We need the following well-known lemma about testing well-linkedness.

\begin{lemma}\label{lem:testing-well-linked}
 There is an algorithm that given a graph $G$ and a subset of vertices $X$, in time $2^{\Oh(|X|)}\cdot \|G\|$ decides whether $X$ is well-linked in $G$. In case of a negative outcome, the algorithm also provides a separation witnessing that $X$ is not well-linked.
\end{lemma}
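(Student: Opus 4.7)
The plan is to reduce well-linkedness testing to a bounded number of maximum flow computations, indexed by a tripartition of $X$. Observe first that $X$ fails to be well-linked if and only if there is a separation $(A,B)$ of $G$ with $|X\cap A|>|A\cap B|$ and $|X\cap B|>|A\cap B|$. Such a separation induces a tripartition $X=X_A\cup X_B\cup X_C$ with $X_A\coloneqq X\cap(A\setminus B)$, $X_B\coloneqq X\cap(B\setminus A)$, and $X_C\coloneqq X\cap A\cap B$. Writing $s\coloneqq |A\cap B|-|X_C|$, the two inequalities rewrite as $|X_A|>s$ and $|X_B|>s$, while the set $(A\cap B)\setminus X_C$ is a vertex separator in $G-X_C$ of size $s$ that puts $X_A$ on one side and $X_B$ on the other. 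Conversely, from any tripartition together with such a small separator one can recover a separation of $G$ witnessing non-well-linkedness by setting $A$ and $B$ to be the closures of the two sides together with $X_C$ and the separator. Thus $X$ is well-linked in $G$ if and only if for every tripartition $X=X_A\cup X_B\cup X_C$ with $X_A,X_B\neq\emptyset$, there exist $\min(|X_A|,|X_B|)$ pairwise vertex-disjoint paths from $X_A$ to $X_B$ in $G-X_C$ (with distinct endpoints).

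\paragraph{Algorithm.}
The algorithm iterates over all $3^{|X|}$ tripartitions $(X_A,X_B,X_C)$ of $X$ with $X_A,X_B\neq\emptyset$. For each one, it tests whether $G-X_C$ admits $\min(|X_A|,|X_B|)$ fully vertex-disjoint paths from $X_A$ to $X_B$ by the standard Menger/Ford--Fulkerson approach: split each vertex of $G-X_C$ outside $X_A\cup X_B$ into an ``in'' and an ``out'' copy joined by an edge of unit capacity, and attach a super-source to $X_A$ and a super-sink to $X_B$ with infinite capacity. Then run at most $\min(|X_A|,|X_B|)\leq |X|$ augmenting-path iterations of Ford--Fulkerson; each iteration runs in $O(\|G\|)$ time. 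If the flow saturates, move to the next tripartition. Otherwise, a min vertex cut $S\subseteq V(G)\setminus X$ of $G-X_C$ with $|S|<\min(|X_A|,|X_B|)$ is extracted from the residual graph in linear time; the corresponding separation $(A,B)$ is obtained by letting $C_A$ and $C_B$ be the unions of connected components of $G-X_C-S$ meeting $X_A$ and $X_B$, respectively, letting $D$ be the union of the remaining components, and then returning $A\coloneqq X_C\cup S\cup C_A\cup D$ and $B\coloneqq X_C\cup S\cup C_B$.

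\paragraph{Running time and obstacles.}
The total running time is $3^{|X|}\cdot O(|X|\cdot\|G\|)=2^{O(|X|)}\cdot\|G\|$, as required. There is no substantial obstacle here; the only point that requires a little care is the correct handling of vertices of $X$ that may end up inside the separator of the witnessing separation, which is precisely what motivates the enumeration over tripartitions (with the middle class $X_C$) rather than over bipartitions of $X$. Once this is set up, correctness of both directions follows from Menger's theorem applied in $G-X_C$ and the explicit construction of $(A,B)$ described above.
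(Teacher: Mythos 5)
Your proof is correct and follows the same high-level strategy as the paper's: enumerate the $3^{|X|}$ tripartitions of $X$, and for each one run at most $|X|$ rounds of Ford--Fulkerson augmentation on a vertex-capacitated flow network, for $2^{\Oh(|X|)}\cdot\|G\|$ total time. One difference worth flagging: the paper's proof asks, for a tripartition $(A_X,S_X,B_X)$, for a vertex cut in $G-S_X$ of size smaller than $\min(|A_X|,|B_X|)-|S_X|$, whereas your derivation correctly shows the threshold should be $\min(|X_A|,|X_B|)$ with no $-|X_C|$ term; a quick sanity check confirms your bound is the right one (e.g. on the path $a$--$e$--$b$ with $X=\{a,b,e\}$, the separation $(\{a,e\},\{e,b\})$ witnesses that $X$ is not well-linked, and your threshold detects this via the tripartition $(\{a\},\{b\},\{e\})$, while with the paper's subtracted threshold no tripartition would). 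Your handling of the details — making $X_A\cup X_B$ undeletable in the split-vertex construction so the extracted cut is automatically disjoint from $X$, and the explicit assembly of the witnessing separation $(A,B)$ from the side-components of $G-X_C-S$ — is careful and completes the argument cleanly.
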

\begin{proof}
 For every tri-partition of $X$ into $(A_X,S_X,B_X)$, verify whether in $G-S_X$ there is a vertex cut of size smaller than $\min(|A_X|,|B_X|)-|S_X|$ separating $A_X$ and $B_X$. This can be done by $\min(|A_X|,|B_X|)\leq |X|$ iterations of flow augmentation in the Ford-Fulkerson algorithm, which take time $\Oh(|X|\cdot \|G\|)$ per tri-partition, so $2^{\Oh(|X|)}\cdot \|G\|$ in total. If for any tri-partition such a cut is found, it naturally gives rise to a separation witnessing that $X$ is not well-linked, and otherwise $X$ is well-linked.
\end{proof}

We will also use the following notion of an $(S,T)$-separator.
Suppose $G$ is a graph and $S,T$ are two subsets of vertices of $G$.
An $(S,T)$-path is a path with one endpoint in $S$ and the other in $T$.
An \emph{$(S,T)$-separator} is a set $X$ that intersects every $(S,T)$-path in $G$.
Note that in this definition $S$ and $T$ are allowed to overlap, but $X \supseteq S \cap T$, and $X$ can be the empty set if there is no connected component of $G$ that intersects both $S$ and $T$.
A \emph{minimal $(S,T)$-separator} is an $(S,T)$-separator such that no subset of it is an $(S,T)$-separator.
By Menger's theorem, the size of a smallest $(S,T)$-separator is equal to the maximum number of vertex-disjoint $(S,T)$-paths.

\paragraph*{Minors.} We use the following definition of (rooted) minors.

\begin{definition}\label{def:rooted-minor}
 For a set $X$, an {\em{$X$-rooted graph}} is a graph $H$ together with a mapping $\roots\colon V(H)\to 2^X$ such that $\{\roots(u)\colon u\in V(H)\}$ are pairwise disjoint subsets of $X$.
 The \emph{detail} of $(H,\roots)$ is the number of vertices $v \in V(H)$ so that $\roots(v) = \emptyset$.
 For a graph $G$ and a vertex subset $X\subseteq V(G)$, we say that an $X$-rooted graph $(H,\roots)$ is an {\em{$X$-rooted minor}} of $G$ if there exists a mapping $\model\colon V(H) \to 2^{V(G)}$ such that
 \begin{itemize}[nosep]
 \item the sets $\{\model(u) \colon u \in V(H)\}$ are connected and pairwise disjoint;
  \item for each edge $uv\in E(H)$, there is an edge in $G$ with one endpoint in $\model(u)$ and the other in $\model(v)$;~and
  \item for each vertex $u\in V(H)$, $\roots(u)=\model(u)\cap X$.
 \end{itemize}
 A mapping $\model$ as above will be called a {\em{minor model}} of $(H,\roots)$ in $G$. The sets $\model(v)$ for $v \in V(H)$ are called the \emph{branch sets} of the model. By restricting the definition to unrooted graphs (i.e. $X=\emptyset$), we obtain the standard definitions of minors and minor models.
\end{definition}

It is well-known that the minor relation is transitive, and if $H$ is a minor of $G$, then $\tw(H)\leq \tw(G)$.

We will use the following two results on minors and grids. The first one is the classic Grid Minor Theorem of Robertson and Seymour~\cite{RobertsonS86}, see also later works for explicit and computable bounds~\cite{DiestelJGT99,RobertsonST94,LeafS15,ChekuriC16,ChuzhoyT19}.

\begin{theorem}[Grid Minor Theorem]\label{thm:grid-minor}
 There exists a computable function $g\colon \N\to \N$ such that for every $k\in \N$, every graph of treewidth at least $g(k)$ contains the $k\times k$ grid as a minor.
\end{theorem}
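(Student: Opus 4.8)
The plan is to deduce the statement from the standard two-step route through highly-linked substructures, taking care that every step is effective so that the resulting $g$ is computable. The first step reduces large treewidth to a large \emph{well-linked set}: I would show that if $\tw(G)\ge g(k)$ for a suitable $g$, then $G$ contains a set $W$ with $|W|\ge h(k)$ that is well-linked in the sense used earlier --- equivalently, by Menger's theorem, any two disjoint equal-size subsets of $W$ can be joined by that many pairwise vertex-disjoint paths. One way to get this is via Seymour--Thomas tree-width duality: $\tw(G)\ge g(k)$ is equivalent to the existence of a bramble of order $>g(k)$, and a bramble of large order yields a well-linked set of large size by taking a minimal hitting family and arguing via Menger that its vertices are mutually linkable. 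Alternatively one can work directly with an optimal-width tree decomposition: a node whose removal splits the decomposition into roughly balanced parts has a bag from which, after discarding the few vertices that are cheaply separated from the rest, a well-linked set of size $\Omega(\tw(G))$ can be extracted.

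The second and main step is to turn a well-linked set of size $h(k)$ into a $k\times k$ grid minor. Here I would build the grid by an iterative construction: split $W$ into halves $A$ and $B$, use well-linkedness to fix $\lfloor|W|/2\rfloor$ disjoint $A$--$B$ paths as an initial linkage, and then grow the grid one row and one column at a time, maintaining at each stage a partial grid formed by a ``horizontal'' family and a ``vertical'' family of disjoint paths, together with a still-large reservoir of unused linkage. To add one more horizontal and one more vertical line, one reroutes part of the current linkage across the partial grid, using the well-linkedness of $W$ to re-establish connectivity into the reservoir, and then applies a Ramsey-type (sunflower-type) selection so that among the available rerouting options one keeps a subfamily whose crossing pattern with the partial grid is itself grid-like. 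Each round discards only a controlled fraction of the reservoir, so that after $k$ rounds a genuine $k\times k$ grid minor survives; tracking how the parameters shrink through the rounds pins down the function $g$. It is often cleaner to first aim for a large subdivided wall --- a topological minor that is a wall of height $\Theta(k)$ already contains the $k\times k$ grid as a minor --- which makes the rerouting bookkeeping a bit more transparent at the cost of a constant factor in $g$.

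The hard part is precisely this second step: controlling the loss of paths per round uniformly, so that a single computable $g$ (rather than merely a finite bound for each fixed $k$ proved by compactness) governs the whole construction. This is exactly the step that in the literature was first carried out with tower-type bounds by Robertson, Seymour, and Thomas and later improved to polynomial bounds by Chekuri--Chuzhoy and by Chuzhoy--Tan, and I would be content to quote any of these, since all of the ingredients --- locating a bramble or tangle by exhaustive search over separations, producing the linkages by Menger's theorem / a max-flow computation, and performing the finite rerouting and sunflower extraction --- are constructive, so the resulting bound $g$ is computable. The remaining routine step is to check that a wall (or the partial grid) of the appropriate size indeed contains the $k\times k$ grid as a minor, which is immediate from the definitions.
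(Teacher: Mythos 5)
The paper does not prove this theorem: it is stated as a classical result of Robertson and Seymour (Graph Minors V, \cite{RobertsonS86}) and used as a black box, with the surrounding text pointing to later works for explicit and computable bounds (\cite{DiestelJGT99,RobertsonST94,LeafS15,ChekuriC16,ChuzhoyT19}). So there is no paper proof to compare against.

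Your sketch correctly identifies the standard two-phase route --- large treewidth yields a large well-linked set (equivalently a bramble or tangle of large order), and a large well-linked set yields a grid or wall minor via iterated rerouting and a Ramsey-type extraction --- and you correctly single out the second phase as the place where all the quantitative difficulty lives. But you also explicitly defer that phase to Robertson--Seymour--Thomas and to Chekuri--Chuzhoy / Chuzhoy--Tan, so what you have written is a roadmap with a citation where the proof should be, which in the end is the same thing the paper does. If you actually wanted a self-contained argument, the missing content is precisely the uniform control of the loss of linkage paths per round in the iterated construction (so that a single computable $g$ covers all $k$), and that is where the real work is; none of the reductions you describe in phase one, nor the ``wall contains a grid'' observation at the end, is the hard part.
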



The second result is about finding grid minors within a grid.

\begin{lemma}[{\cite[Lemma 3.1]{DemaineFHT04}}, see also~{\cite[Lemma 30]{SauST23kapicesI}}]\label{prop:grid-in-grid}
 Let $m,k\in \N$ with $m\ge k^2 +2k$, and let $H$ be the $m\times m$ grid, say with vertex set $[m]\times [m]$. Further, let $X$ be a subset of vertices in the central $(m-2k)\times (m-2k)$ subgrid of $H$ (that is, $X\subseteq [k+1,m-k]\times [k+1,m-k]$), with $|X|\ge k^4$.
 Then in $H$ there is a minor model of the $k\times k$ grid in which every branch set intersects $X$.
\end{lemma}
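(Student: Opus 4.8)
The plan is to build the model of the $k \times k$ grid by carving out a $k \times k$ arrangement of axis-parallel sub-rectangles of $H$, one per target vertex, and to route the grid adjacencies along the gaps between consecutive sub-rectangles; the key point is to choose the horizontal and vertical ``cut positions'' so that each of the $k^2$ sub-rectangles contains at least one vertex of $X$. First I would observe that it suffices to find indices $1 = a_0 \le a_1 < a_2 < \cdots < a_{k-1} < a_k = m$ and, symmetrically, $1 = b_0 \le b_1 < \cdots < b_k = m$ such that for every $i,j \in [k]$ the ``tile'' $R_{i,j} \coloneqq [a_{i-1}, a_i] \times [b_{j-1}, b_j]$ (with neighbouring tiles sharing a boundary line) contains a vertex of $X$; granting such a partition, the branch set for the $(i,j)$-th grid vertex is taken to be $V(R_{i,j})$ with the shared boundary rows/columns assigned consistently (say, tile $R_{i,j}$ owns its bottom row and left column), each of which induces a connected subgrid and each consecutive pair of which is joined by an edge of $H$, so that $\model$ is a valid minor model of the $k\times k$ grid. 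Because every tile contains a vertex of $X$, every branch set intersects $X$, which is exactly what we want.

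The core of the argument is therefore the combinatorial claim that such cut positions exist. Here I would use the hypothesis that $X$ lies in the central $(m-2k)\times(m-2k)$ subgrid together with $|X| \ge k^4$ and $m \ge k^2 + 2k$. The idea is a greedy sweep in each coordinate separately. For the vertical cuts: scan the columns $k+1, k+2, \ldots, m-k$ from left to right, maintaining a running count of how many vertices of $X$ lie in the columns seen since the last cut; place a cut line as soon as a ``small enough but nonzero'' number of $X$-vertices have been collected, making sure to leave enough budget of $X$-vertices for the remaining strips. A cleaner way to phrase it: since $X$ has $\ge k^4$ points all lying in the $m-2k$ central columns, by pigeonhole (after, if necessary, first splitting columns into $k$ consecutive vertical strips, not all with equally many points, and refining) one can choose the $a_i$'s so that each of the $k$ vertical strips $[a_{i-1},a_i]$ carries a nonempty set of $X$-columns; the quantitative slack $|X|\ge k^4$ (as opposed to merely $\ge k$) is what makes the subsequent refinement in the other coordinate go through, because within a single vertical strip we need $\ge k$ of the $X$-points so that we can again split that strip horizontally into $k$ nonempty pieces. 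So the two-step plan is: first choose vertical cuts so each vertical strip contains $\ge k$ vertices of $X$ (possible because $k \cdot k = k^2 \le k^4/k^2 \cdot \ldots$ — concretely, $|X| \ge k^4 \ge k \cdot k^3 \ge k \cdot k$, and a greedy left-to-right sweep distributing the $X$-columns achieves ``$\ge k$ per strip'' as long as $|X| \ge k \cdot k$, hence certainly here); then, inside each vertical strip independently, choose horizontal cuts so each of the $k$ resulting tiles in that strip contains $\ge 1$ vertex of $X$ (possible because that strip has $\ge k$ of them). The constraint $m \ge k^2 + 2k$ guarantees the central $(m-2k)\times(m-2k)$ region has side length $\ge k^2 \ge k$, so there is genuinely room to place $k-1$ distinct interior cut lines in each coordinate; the outermost $k$ boundary rows/columns on each side are unused by $X$ but are available to pad the first and last tiles.

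For the write-up I would state the tile partition as an explicit construction, verify the three conditions of \cref{def:rooted-minor} (disjointness and connectedness are immediate from the tiles being vertex-disjoint axis-parallel subgrids once the shared boundaries are assigned by the ``owns bottom-left'' rule, and the adjacency condition follows since $R_{i,j}$ and $R_{i+1,j}$ share a full column of $H$, every vertex of which has a horizontal edge crossing the cut line), and then isolate the cut-selection as a short lemma proved by the two-phase greedy sweep described above. The main obstacle, and the only place needing care, is the bookkeeping in the greedy sweep: one must argue that greedily closing off a strip as soon as it has accumulated its quota never strands the remaining $X$-points in too few remaining columns, which is where the gap between the available count ($|X|\ge k^4$, or $\ge k$ per strip after phase one) and the demand ($k$ cuts, i.e. $k$ nonempty pieces) is spent. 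Everything else is routine.
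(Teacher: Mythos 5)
The paper cites this lemma from \cite{DemaineFHT04,SauST23kapicesI} rather than proving it, so I compare your proposal against the truth; there is a fundamental gap. A $k\times k$ tiling of $H$ by axis-parallel rectangles determined by common cut positions $a_0<\dots<a_k$ and $b_0<\dots<b_k$ can have at most $2k-1$ tiles meeting a given diagonal: under your ``owns its bottom row and left column'' convention the $2k-2$ interior cuts $a_1,\dots,a_{k-1},b_1,\dots,b_{k-1}$ split $[1,m]$ into at most $2k-1$ intervals, and the diagonal point $(t,t)$ lies in the tile $R_{i,j}$ for the unique $(i,j)$ whose column and row ranges both contain $t$, so at most $2k-1$ pairs $(i,j)$ are realized. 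Hence if $X$ is concentrated on the diagonal --- say $X=\{(t,t):k+1\le t\le k+k^4\}$, a legal input once $m\ge k^4+2k$ --- then for $k\ge 3$ fewer than $k^2$ tiles can contain a vertex of $X$, no matter how the cuts are chosen, and the framework set up in your first paragraph cannot work. Your ``two-phase greedy'' then silently switches to a different framework, choosing horizontal cuts independently within each vertical strip. That does put an $X$-point in every tile for diagonal $X$, but it breaks the model: $R_{i,j}$ and $R_{i+1,j}$ can have disjoint row ranges (the diagonal forces this for any interior $j$ when $k\ge 3$), in which case these two sets share no row of $H$, there is no edge of $H$ between them, and the required adjacency of $(i,j)$ and $(i+1,j)$ in the $k\times k$ grid is violated.

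A related red flag is that the hypothesis $X\subseteq[k+1,m-k]\times[k+1,m-k]$ plays essentially no role in your argument --- you use it only to ``pad'' the outermost tiles, which is never necessary for a tiling argument --- whereas it is the crux of a correct proof. The $k$ rows and $k$ columns on each side of $H$ are guaranteed to be $X$-free, and the actual constructions use them as routing highways: one selects roughly $k^2$ points of $X$ arranged into $k$ ordered groups and builds branch sets that are \emph{not} rectangles but contain tentacles descending into the $X$-free border and running along dedicated border rows, so that most of the $k\times k$ grid adjacencies are realized in the boundary rather than inside the central region. Any argument that attempts to realize all $k^2$ grid adjacencies by partitioning the interior into $k^2$ tiles, one per grid vertex, is doomed by the diagonal configuration of $X$ (and also by $X$ concentrated in a single column or row).
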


\paragraph*{Folios.}
Next, we define the {\em{folio}} of a rooted graph, which is essentially the set of all rooted minors of bounded detail that appear in the graph.

\begin{definition}
Let $G$ be a graph and $X$ be a set of vertices of $G$.
For $\delta\in \N$, the {\em{$(X,\delta)$-folio}} of $G$ is the set of all $X$-rooted minors of $G$ with detail at most $\delta$.
We say that the $(X,\delta)$-folio of $G$ is {\em{generic}} if it contains all $X$-rooted graphs with detail at most $\delta$.
An \emph{$(X,\delta)$-model-folio} of $G$ is a pair that contains the $(X,\delta)$-folio of $G$, and a mapping that maps each $X$-rooted graph in the folio to a model of it in $G$.
\end{definition}

In the context of folios, we implicitly and without loss of generality assume that the set of vertices $V(H)$ of an $X$-rooted graph $(H,\roots)$ in the folio is the set $[|V(H)|]$, in order to make the $(X,\delta)$-folio of $G$ finite.
Note that indeed the number of $X$-rooted graphs in the $(X,\delta)$-folio of $G$ is bounded by a computable function of $|X|$ and $\delta$.

The definition of folio naturally gives rise to the following computational problem, which is the main point of focus in this work.

\begin{definition}
 In the \Folio problem we are given a parameter $\delta\in \N$, a graph $G$, and a vertex subset $X\subseteq V(G)$. The task is to compute the $(X,\delta)$-model-folio of $G$.
\end{definition}

An \emph{instance} of the \Folio problem is the triple $(G,X,\delta)$. With these definitions in place, we may state formally the main result of this work.

\begin{restatable}{theorem}{thmmainreal}
\label{thm:main-real}
 The \Folio problem can be solved in time $\Oh_{|X|,\delta}(\|G\|^{1+o(1)})$.
\end{restatable}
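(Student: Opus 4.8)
The plan is to follow the three-layer structure outlined in \cref{sec:overview}, producing three successive algorithms for \Folio that operate on increasingly general graph classes, with each reducing to the previous one. I would begin with the innermost case: \FolioApexG on graphs $G$ such that $G-X$ excludes a fixed apex graph $R$ as a minor. Here the key is the irrelevant vertex technique of \cite{GM13}, but implemented dynamically. Concretely, I would fix an ordering $v_1,\ldots,v_{|X|}$ of $X$ followed by an ordering $v_{|X|+1},\ldots,v_n$ of $V(G)-X$ whose every suffix induces a connected subgraph (obtainable via a spanning-tree post-order of $G-X$, after reducing to the case where $G-X$ is connected by handling components separately), and then maintain the ``partially uncontracted'' graphs $G_i$ in the dynamic treewidth data structure of \cite{KorhonenMNPS23}, extended to support vertex insertions/deletions and $\CMSO_2$-definable vertex queries. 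Whenever $\tw(G_i-I)$ reaches a threshold $k=k(|X|,\delta,R)$, the Grid Minor Theorem (\cref{thm:grid-minor}) plus the Flat Wall Theorem give a large flat wall in $G_i-X$; by zooming into a subwall I obtain a flat wall avoiding the mega-vertex $v^\star$, which is therefore a genuine flat wall in $G-X$, so a central vertex of a homogeneous subwall is irrelevant for the $(X,\delta)$-folio and can be added to $I$. I would write a $\CMSO_2$-formula $\varphi(v)$ asserting that $v$ is such a central vertex, use it to locate irrelevant vertices in amortized $n^{o(1)}$ time, and bound the total number of edge updates by $\Oh(\|G\|)$ via the simple accounting in the overview. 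Once $\tw(G-I)<k$, solve \Folio by $\CMSO_2$ dynamic programming on the decomposition in $\Oh_{|X|,\delta,k}(n)$ time.

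Next I would lift this to \FolioClique (graphs excluding $K_h$). The engine is an almost-linear-time implementation of recursive understanding. One recursion alternates between two moves. The \emph{well-linkedness move}: using \cref{lem:testing-well-linked} (Ford--Fulkerson), either certify $X$ is well-linked or find a separation $(A,B)$ with $|A\cap X|,|B\cap X|>|A\cap B|$ and recurse on $(G[A],(A\cap X)\cup(A\cap B),\delta)$ and $(G[B],(B\cap X)\cup(A\cap B),\delta)$, combining folios; when $X$ is well-linked but $|X|<k$, emulate Reed's $\Oh_k(n\log n)$ treewidth routine to find either a balanced separator of size $\le 3k$ or a well-linked set of size $3k$ whose ``large side'' agrees with the vertex-count large side for every small separator, keeping recursion depth $\Oh_k(\log n)$. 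The \emph{compaction move}: when $X$ is well-linked with $|X|\ge k$, repeatedly find $(X,k,\alpha)$-chips $C$, recursively compute the $(N(C),\delta)$-model-folio of $G[N[C]]$, and replace $G[N[C]]$ by a bounded-size graph $H_C$ with the same $(N(C),\delta)$-folio; since $C$ is disjoint from $X$ this preserves the $(X,\delta)$-folio (and well-linkedness, by Menger). To keep the iteration count polylogarithmic I would (i) compute a family of pairwise non-touching chips covering a $1/\Oh_{k,\alpha}(\log n)$-fraction of all carvable vertices, via color coding \cite{DBLP:journals/jacm/AlonYZ95} reducing to terminal $k$-chips, solved deterministically using isolating cuts \cite{DBLP:conf/focs/LiP20}, deterministic almost-linear max-flow \cite{DBLP:conf/focs/Brand0PKLGSS23}, and the mimicking networks of \cite{DBLP:conf/focs/SaranurakY22}; and (ii) use \emph{intrusion-preserving} replacements, so that any small-boundary connected set in the replaced graph pulls back to a no-smaller one in the original, hence the replacement creates no new carvable vertices outside the replaced region. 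This makes $\Oh_{k,\alpha}(\log^2 n)$ compaction rounds suffice. When $G$ is $(X,k,\alpha)$-compact and $K_h$-minor-free, the Structure Theorem \cite{RobertsonS03a} (in the degenerate form used in \cite{LokshtanovPP022}) yields a set $Z$ of bounded size with $G-Z$ excluding some apex graph $R$; I recover $Z$ iteratively by calling \FolioApexG with an apex-grid pattern $R'$ of large order, using \cref{prop:grid-in-grid} to extract from a returned $R'$-model either a $K_h$-model (terminate) or a large grid with an apex vertex $a$ forced into $Z$. After $\Oh_{|X|,k,\alpha}(1)$ such calls, \FolioApexG returns the folio.

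Finally I would reduce the general case to \FolioClique. Here the key combinatorial input is the lemma of \cite{GM13}: if $G$ contains a $K_h$-minor model no branch set of which can be separated from $X$ by a separator of size $<|X|$, then the $(X,\delta)$-folio of $G$ is generic. So I run a simpler recursive-understanding scheme with $(X,|X|,\alpha)$-chips (no need for $X$ well-linked or large, and recursion depth $\le|X|$ since $|N(C)|<|X|$) to make $G$ into an $(X,|X|,\alpha)$-compact graph preserving the folio. On such $G$, either \FolioClique reports a $K_h$-minor — in which case, because $G$ is compact and $|X|\ge$ suitable bound, merging branch sets yields a clique minor satisfying the hypotheses of the \cite{GM13} lemma, so the folio is generic and an algorithmic version of that lemma reconstructs all models — or \FolioClique returns the folio directly (after the binary-search trick to only need the ``$K_h$-minor exists'' conclusion). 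Throughout all layers I would enforce the two global invariants that the recursion tree has depth $\Oh_k(\log n)$ and that the total vertex count of child instances exceeds the parent's by at most a $(1+1/\log n)$ factor — the latter handled in the compaction step by solving chips of size $\Oh_k(\log n)$ directly via the $\Oh_{k,\delta}(\log^3 n)$-time Robertson--Seymour algorithm \cite{GM13} rather than recursing; together these give the geometric-series bound $\Oh_{|X|,\delta}(\|G\|^{1+o(1)})$ on the total work.

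\textbf{Main obstacle.} I expect the hard part to be the existence and bounded-size construction of \emph{intrusion-preserving} replacements: ordinary folio-preservers are not enough, because after replacing a chip one has essentially no control over what small separators can now ``cut into'' the replacement and spawn fresh carvable vertices. Proving that one can always choose $H_C$ (of size bounded in $k,\delta$) so that every small-boundary connected set in $G'$ pulls back to a no-smaller, no-larger-boundary set in $G$ agreeing outside $C$ — and then giving a linear-time verifier for this property, using important separators \cite{DBLP:journals/tcs/Marx06} — is the technically most delicate ingredient, and it is precisely what converts the naive $\Omega(n)$-round recursive understanding into an $\Oh_{k,\alpha}(\log^2 n)$-round, almost-linear-time procedure.
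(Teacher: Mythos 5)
Your proposal follows the paper's proof essentially verbatim: the same three-layer reduction (apex-minor-free via dynamic treewidth and irrelevant vertices, clique-minor-free via almost-linear recursive understanding into compact graphs plus the Structure Theorem, general case via the generic-folio lemma of Graph Minors XIII), the same recursion invariants (depth $\Oh_k(\log n)$, per-level blowup $1+1/\log n$, small chips solved directly), and the same key technical novelty of intrusion-preserving replacements verified via important separators. You have correctly identified both the architecture and the hardest ingredient, so there is nothing substantive to add or correct.
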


Note that \cref{thm:main-real} implies \cref{thm:main}, because given a graph $G$, a set of roots $X\subseteq V(G)$, and an $X$-rooted graph $(H,\pi)$, to decide whether $(H,\pi)$ is an $X$-rooted minor of $G$ (and compute a witnessing minor model) it suffices to compute an $(X,\delta)$-model-folio of $G$ for $\delta=|\{u\in V(H)\colon \pi(u)=\emptyset\}|$, verify whether $(H,\pi)$ is in this folio, and if so, return the witnessing model. Therefore, for the remainder of this paper we focus on proving \cref{thm:main-real}.




Finally, we will need the following standard lemma about compositionality of folios.
As its proof is long but straightforward, we present it in \Cref{sec:appendix_foliocompose}.
\begin{restatable}{lemma}{lemfoliooverseparator}
\label{lem:folio_over_separator}
Suppose $G$ is a graph, $X$ is a subset of vertices of $G$, and $(A,B)$ is a separation of $G$. Denote $X_A=(X \cap A) \cup (A \cap B)$ and $X_B=(X \cap B) \cup (A \cap B)$. Let $\Ff_A$ be the $(X_A,\delta)$-folio of $G[A]$, and $\Ff_B$ the $(X_B,\delta)$-folio of $G[B]$. Then we have the following:
\begin{itemize}
\item The $(X,\delta)$-folio of $G$ is uniquely determined by $\Ff_A$ and $\Ff_B$. In particular, there is a function $\combfolio$ so that $\combfolio(\{\Ff_A, \Ff_B\}, X, \delta)$ is the $(X,\delta)$-folio of $G$.
\item The function $\combfolio$ can be evaluated in time $\Oh_{|X|+|A \cap B|,\delta}(1)$.
\item Given an $(X_A,\delta)$-model-folio of $G[A]$ and an $(X_B,\delta)$-model-folio of $G[B]$, an $(X,\delta)$-model-folio of $G$ can be computed in time $\Oh_{|X|+|A \cap B|,\delta}(|V(G)|)$.
\end{itemize}
\end{restatable}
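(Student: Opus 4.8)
The plan is to show that the $(X,\delta)$-folio of $G$ can be reconstructed from the ``augmented'' folios $\Ff_A$ and $\Ff_B$ by a purely combinatorial gluing procedure along the separator $S = A \cap B$. The key observation is that any $X$-rooted minor model $\model$ of some $(H,\roots)$ in $G$ can be split along $S$: for each branch set $\model(u)$, the parts $\model(u)\cap A$ and $\model(u)\cap B$ meet $S$ in some (possibly empty) sets, and within $G[A]$ (resp.\ $G[B]$) each connected component of $\model(u)\cap A$ (resp.\ $\model(u)\cap B$) must contain a vertex of $S$, unless the whole branch set lies entirely on one side. This suggests recording, on each side, a slightly richer object than the plain folio: for each rooted minor in $\Ff_A$, we remember not only which roots of $X_A$ land in which branch set, but in particular which separator vertices of $S$ land in which branch set and how the branch sets touching $S$ are distributed. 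Since $S \subseteq X_A$ and $S \subseteq X_B$ by definition of $X_A, X_B$, this information is \emph{already} encoded in $\Ff_A$ and $\Ff_B$: the roots from $S$ are honest roots of the augmented rooted graphs, so their placement into branch sets is part of the combinatorial data of each member of $\Ff_A$ and $\Ff_B$.

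Concretely, I would proceed as follows. First, fix an abstract $X$-rooted graph $(H,\roots)$ with detail at most $\delta$ and characterize when it is an $X$-rooted minor of $G$ purely in terms of $\Ff_A, \Ff_B$. The characterization: $(H,\roots)$ is an $X$-rooted minor of $G$ if and only if there exist an $S$-labeled rooted graph $(H_A, \roots_A) \in \Ff_A$, an $S$-labeled rooted graph $(H_B, \roots_B)\in \Ff_B$, and a ``merge pattern'' --- an equivalence-respecting identification of some vertices of $H_A$ with some vertices of $H_B$ along vertices whose root-set meets $S$, together with a surjection from the resulting quotient graph onto $(H,\roots)$ (contracting the glued pieces) --- such that the induced root assignment on the quotient is exactly $\roots$, the images of $X\cap A$ and $X\cap B$ land correctly, and the detail bound is respected. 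The forward direction takes a model $\model$ in $G$, restricts it to $G[A]$ and $G[B]$ (breaking disconnected pieces into separate vertices of $H_A$, $H_B$, each necessarily carrying an $S$-root since it touches the separator or is a full component), and checks that the restrictions are rooted minors of the correct detail on each side --- here one must be slightly careful that the detail of each side-restriction is still at most $\delta$, which holds because unrooted vertices on one side correspond to a subset of the unrooted vertices of $H$, or to vertices whose branch set lies wholly on that side. The backward direction glues two models using the fact that $H_A[N(C)]$-type agreement on $S$ (i.e.\ $G[A]$ and $G[B]$ share exactly the induced graph on $S$, and both models realize the same $S$-vertices as concrete vertices) lets the two partial models be combined into one model in $G$; connectivity of merged branch sets follows because each glued pair shares an $S$-vertex.

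For the second bullet, the number of members of $\Ff_A$ and $\Ff_B$ is bounded by a computable function of $|X_A|+\delta \le |X|+|S|+\delta$ and $|X_B|+\delta$ respectively, and the number of merge patterns to consider is bounded by a computable function of the sizes of the participating rooted graphs, which in turn are bounded in terms of $|X|+|S|+\delta$ (a rooted graph of detail $\le\delta$ with roots in a set of size $r$ has at most $r+\delta$ vertices). Hence the whole reconstruction is a finite computation whose size depends only on $|X|+|A\cap B|$ and $\delta$, giving the $\Oh_{|X|+|A\cap B|,\delta}(1)$ bound. For the third bullet, the model-folio version: when the certificate above witnesses membership of $(H,\roots)$, we additionally have concrete models on the two sides (from the given model-folios), and we glue them into a concrete model in $G$ by taking unions of branch sets according to the merge pattern; writing this out and relabeling vertices costs $\Oh(|V(G)|)$ per produced model, and the number of produced models is $\Oh_{|X|,\delta}(1)$, so the total is $\Oh_{|X|+|A\cap B|,\delta}(|V(G)|)$.

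The main obstacle I anticipate is the bookkeeping in the forward direction: making precise how a single branch set $\model(u)$ of a model in $G$ decomposes into several vertices of $H_A$ and several vertices of $H_B$ that then get re-merged, and verifying that after this decomposition each side-object genuinely is a rooted minor of detail $\le\delta$ of the correct rooted graph --- in particular that every ``new'' vertex created by splitting a branch set along $S$ acquires a nonempty $S$-root, so that splitting does not inflate the detail. A clean way to handle this is to first reduce to \emph{connected} branch sets on each side by observing that any component of $\model(u)\cap A$ that is disjoint from $S$ is also disjoint from $B$, hence equals a full branch set of a sub-model living entirely in $A\setminus S$, which we can treat as an additional unrooted vertex only if there is detail budget --- and crucially such wholly-$A$ branch sets are in bijection with a subset of the vertices of $H$, so the detail budget is respected automatically. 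Once this reduction is in place, every split piece touches $S$, carries an $S$-root, and contributes $0$ to detail, so the detail bound propagates cleanly. The rest is a routine, if lengthy, verification, which is why the full argument is deferred to \Cref{sec:appendix_foliocompose}.
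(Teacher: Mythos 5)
Your proposal is correct and follows the same overall strategy as the paper: split a $G$-model of $(H,\roots)$ along the separator $S = A\cap B$ into its $G[A]$- and $G[B]$-restrictions, use that $S\subseteq X_A\cap X_B$ to make the $S$-contacts of each split piece visible as roots in $\Ff_A$ and $\Ff_B$, and glue back up; your key observation that any piece of $\model(u)\cap A$ disjoint from $S$ is automatically the \emph{entire} branch set $\model(u)$ (hence the detail budget transfers cleanly) is exactly the crux of the paper's detail bound. The realizations differ slightly: where you quantify over ``merge patterns'' --- an identification along $S$-roots together with a further surjection onto $(H,\roots)$ --- the paper shows that neither choice is actually free. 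It defines a single canonical $X$-combination per pair $(H_A,\roots_A)\in\Ff_A$, $(H_B,\roots_B)\in\Ff_B$: form the bipartite graph $H_{AB}$ on $V(H_A)\cup V(H_B)$ with an edge between $u\in V(H_A)$ and $v\in V(H_B)$ exactly when $\roots_A(u)\cap\roots_B(v)\neq\emptyset$, and let the vertex set of the combination be $\cc(H_{AB})$. The identification along shared $S$-roots is forced (shared separator vertices must lie in a single branch set), and taking connected components of $H_{AB}$ is precisely the necessary and sufficient contraction, so there is no further surjection to guess; the folio of $G$ is then exactly the set of combinations with detail at most $\delta$, ranging over pairs from $\Ff_A\times\Ff_B$. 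This uniqueness streamlines both the correctness argument and the running-time count, whereas your merge-pattern framing would require additional justification that the extra quantification produces nothing spurious (it would not, because the $(X,\delta)$-folio is closed under taking further rooted minors, but one has to say so).
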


\paragraph{Embeddings.} In several parts of this work we will consider surface-embedded graphs.
By a {\em{surface}} we mean a compact two-dimensional manifold $\Sigma$ without boundary. It is well-known that every such surface is homeomorphic to a sphere with a finite number of handles and a finite number of cross-caps attached. Therefore, whenever in an algorithmic context we say that a surface is given, it is always specified by providing the respective number of handles and cross-caps. An {\em{embedding}} of a graph $G$ in $\Sigma$ (resp. in a subset $X$ of $\Sigma$) is a drawing $\gamma$ of vertices and edges of $G$ on $\Sigma$ (resp.~$X$) in a non-crossing manner: $\gamma$ maps vertices to pairwise distinct points on $\Sigma$ (resp. $X$) and edges to pairwise internally disjoint curves (homeomorphic images of the interval $[0,1]$) connecting images of the corresponding endpoints.

\section{Apex-minor-free graphs}\label{sec:apex-minor-free}
We start with providing an algorithm for the variant of the problem where we assume that the input graph excludes some apex-grid as a minor. Formally, we will work with the following problem.

\begin{definition}\label{def:apexFolio}
 In the \FolioApex problem we are given an instance $(G,X,\delta)$ of the \Folio problem and a parameter $p\in \N$. The task is to either solve the instance $(G,X,\delta)$, or to output a minor model of the apex-grid of order $p$ in $G\setminus X$.
\end{definition}

Our main result in this section is the following.

\begin{restatable}{theorem}{thmapexfree}\label{thm:apexFree}
    The \FolioApex problem can be solved in time $\Oh_{|X|,\delta,p}(\|G\|^{1+o(1)})$.
\end{restatable}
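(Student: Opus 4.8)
The plan is to implement the irrelevant vertex technique using the dynamic treewidth data structure, following the outline in \Cref{subsec:overview:apexminorfree}. First I would set up the machinery: cite the Flat Wall Theorem of Robertson and Seymour~\cite{GM13} in the form that says there is a computable function $f$ such that any graph of treewidth at least $f(p,h)$ that excludes the apex-grid of order $p$ as a minor contains a flat wall of height $h$ (with no apex set needed, since the class is apex-minor-free); and the Irrelevant Vertex Rule~\cite{GM13,DBLP:journals/jct/RobertsonS12}, in the form that says there is a computable function $h(|X|,\delta)$ such that the central vertices of a homogeneous flat wall of height at least $h(|X|,\delta)$ in $G\setminus X$ are irrelevant for the $(X,\delta)$-folio. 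I would then fix the threshold $k = k(|X|,\delta,p)$ on treewidth so that exceeding it forces the existence of a flat wall large enough that, after ``zooming in'' to a subwall avoiding the mega-vertex $v^\star$ and then passing to a homogeneous subwall, we still have a homogeneous flat wall of height at least $h(|X|,\delta)$ whose branch set avoids $v^\star$.

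Next I would reduce to the case where $G\setminus X$ is connected, using \Cref{lem:folio_over_separator} componentwise (each component $C_i$ of $G\setminus X$ together with its neighborhood in $X$ can be handled separately, and the folios combined in time $\Oh_{|X|,\delta}(|V(G)|)$; note there are at most $|X|$ ``interesting'' components, the rest contributing only trivial rooted minors of detail $\le\delta$, and these can be detected cheaply). Then I would fix the ordering $v_1,\dots,v_{|X|}$ of $X$ and $v_{|X|+1},\dots,v_n$ of $V(G)\setminus X$ with every suffix connected (obtained from a DFS/post-order traversal of a spanning tree of $G\setminus X$ in linear time), define the contracted graphs $G_i$ with mega-vertex $v^\star$, and observe that each $G_i$, and hence $G_i\setminus X$, is a minor of $G\setminus X$ and thus apex-grid-minor-free. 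The core of the algorithm is then: initialize the (vertex-labeled, $\CMSO_2$-enhanced) dynamic treewidth data structure of~\cite{KorhonenMNPS23} on $G_{|X|}$ with width parameter $k$; iteratively transform $G_i$ into $G_{i+1}$ by the prescribed $\Oh(\|G\|)$ total edge/vertex updates; whenever the maintained treewidth reaches $k$, repeatedly query a fixed $\CMSO_2$-formula $\varphi(v)$ expressing ``$v$ is a central vertex of a homogeneous flat wall $(W,C)$ of height $\ge h(|X|,\delta)$ with $v^\star\notin C$'', delete the returned vertex both from the data structure and (conceptually) from $G$, adding it to the accumulated irrelevant set $I$, until treewidth drops below $k$. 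At the end, $G\setminus I$ has treewidth $<k$ and the same $(X,\delta)$-folio as $G$, so we solve \Folio by a standard $\CMSO_2$/DP computation on the tree decomposition in $\Oh_{|X|,\delta,k}(\|G\|)$ time, also recovering models.

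For the running-time bound I would argue: the total number of edge insertions/deletions across the uncontracting process is $\Oh(\|G\|)$ (each edge $v_iv_j$ of $G$ is inserted once when $\max(i,j)$ is reached and deleted at most once when $i$ or $j$ enters $I$; each edge $v_iv^\star$ is inserted once and deleted once; the iterations at which these happen are precomputable in linear time from the ordering), vertex insertions/deletions are $\Oh(n)$, each update costs amortized $\Oh_k(n^{o(1)})$, each successful $\varphi$-query costs $\Oh_k(n^{o(1)})$ and is charged to the vertex it deletes (so at most $n$ of them), and between two consecutive treewidth-reaching events there is at least one deletion, so the total number of ``failed/checking'' query rounds is also $\Oh(n)$. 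Initialization is $\Oh_k(n^{1+o(1)})$ and the final DP is $\Oh_k(n)$; since $k$ depends only on $|X|,\delta,p$, the total is $\Oh_{|X|,\delta,p}(\|G\|^{1+o(1)})$. When a flat wall of the required height cannot be found despite high treewidth, this contradicts the Flat Wall Theorem applied to the apex-minor-free graph $G_i\setminus I\setminus X$; I'd need to check that ``zooming in'' to avoid $v^\star$ only loses a bounded amount of wall height (a single mega-vertex can intersect the branch set of a flat wall, but there is always a large subwall whose flatness witness avoids a prescribed vertex --- this uses that $C$ is connected and attaches to the rest of the graph only through $\mathsf{Per}(W)$, so $v^\star$, if in $C$, lies ``near'' the perimeter in the planar-like drawing).

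I expect the main obstacle to be the $\CMSO_2$-definability of $\varphi(v)$: one must encode, in a single fixed formula (of size depending only on $|X|,\delta,p$), the existence of a wall subgraph of prescribed height, the connected set $C$ with the perimeter-attachment property, the planar-like drawing witnessing flatness with corners on the outer face, the homogeneity conditions from~\cite{GM13,DBLP:journals/jct/RobertsonS12}, the non-membership of the mega-vertex in $C$, and that $v$ is one of the central vertices. Walls and topological-like drawings are notoriously delicate to capture in $\CMSO_2$; I would handle this by (i) working with the combinatorial, rather than topological, reformulation of flat walls (cycles, paths, and their crossing patterns encoded via a bounded-depth tree of separations, as in the ``rendition''/``society'' formalism), (ii) noting, as in~\cite{GroheKR13}, that everything relevant has bounded ``local'' description once the wall has bounded height $h(|X|,\delta)$, and (iii) quantifying over the bounded-size objects (the homogeneity type is a bounded object) while using $\CMSO_2$ only to verify connectivity and the global non-crossing/planarity condition, the latter via forbidden-minor characterizations on the relevant subgraph. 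A secondary obstacle is ensuring homogeneity can be achieved \emph{within} $G_i\setminus I\setminus X$ without the apex set interfering and without the mega-vertex re-entering the branch set when passing from an arbitrary large flat wall to a homogeneous subwall --- but since homogenization only shrinks $C$, the condition $v^\star\notin C$ is preserved, so the correct order of operations (zoom to avoid $v^\star$, then homogenize, then zoom again if needed to restore height) resolves it.
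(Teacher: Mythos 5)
Your proposal takes essentially the same route as the paper: the megavertex-based uncontracting scheme, the dynamic treewidth data structure of~\cite{KorhonenMNPS23} with a fixed $\CMSO_2$-formula that locates a central vertex of a homogeneous flat wall whose compass avoids $v^\star$, and a final Courcelle-style pass on the bounded-treewidth residue.

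One place where your sketch is off: the parenthetical justification for ``zooming in'' --- that $v^\star$, if it lies in the compass $C$, must lie ``near the perimeter'' --- is not true. The megavertex can sit deep in any flap of the rendition, far from $\mathsf{Per}(W)$, so avoiding it is not a matter of backing off from the boundary. The paper instead invokes the multi-wall version of the Flat Wall Theorem ((9.7) of~\cite{GM13}, Theorem~\ref{prop:many-flat-walls}) to obtain a bounded apex set $A$ and $q$ subwalls with pairwise disjoint compasses disjoint from $A$; it then uses the apex-grid-free hypothesis and a grid-in-grid argument (Lemma~\ref{prop:grid-in-grid}) to show that each vertex of $A$ can be adjacent to only $\Oh_p(1)$ of those compasses, so for $q$ large enough some compass avoids both $N(A)$ and $v^\star$. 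Your conclusion (``there is always a large flat subwall whose compass avoids a prescribed vertex'') is correct, but this counting argument --- which is where the apex-minor-free hypothesis actually earns its keep --- is needed in place of the ``near the perimeter'' heuristic. The other heavy lifting, which you correctly flag, is the $\MSO_2$-encoding of flat walls and homogeneity; the paper does this via canonical $Z$-splits, $k$-samplings, ordering guides for boundary orderings, and the planarity test on the auxiliary graph $\widehat{G}^+$ (Sections~\ref{subsec:existence-irre}--\ref{subsec:defirvlogic}), which is in the same spirit as, but considerably more involved than, items (i)--(iii) of your plan.
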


This section is organized as follows.
In~\Cref{subsec:prelim-logic}, we present the main algorithmic tools that are used in the proof of~\Cref{thm:apexFree}. These are~\Cref{prop:courcelle,prop:dynamic-tw}, on evaluating $\CMSO_2$-formulas on (static) graphs and on maintaining $\CMSO_2$-definable queries on dynamically changing graphs, both under the restriction of bounded treewidth.
Definitions that are needed for logic are also presented in this subsection.
The proof of~\Cref{thm:apexFree} follows a scheme for iteratively uncontracting the graph. This is described in~\Cref{subsec:iterative-decomp}, where we state~\Cref{lem:main-subroutine-apexfree}, which supports the main subroutine of the algorithm of~\Cref{thm:apexFree}.
In particular~\Cref{lem:main-subroutine-apexfree} provides a formula certifying that a vertex is irrelevant for the $(X,\delta)$-folio of a given graph $G$.
The combinatorial conditions that allow a vertex to be characterized irrelevant are given in~\Cref{subsec:existence-irre}. Then, in~\Cref{subsec:defirvlogic} we show that these conditions can be expressed in $\CMSO_2$, and we prove~\Cref{lem:main-subroutine-apexfree}.

\subsection{Preliminaries on logic}
\label{subsec:prelim-logic}

Let us recall the standard logic $\CMSO_2$ on graphs. In this logic, there are variables of four sorts:
\begin{itemize}[nosep]
    \item single vertex variables;
    \item single edge variables;
    \item vertex set variables; and
    \item edge set variables.
\end{itemize}
Variables of the last two sorts are called {\em{monadic}}. We follow the convention that single vertex/edge variables are denoted using lowercase letters, whereas uppercase letters are reserved for monadic variables. 

Atomic formulas of $\CMSO_2$ are of the following form:
\begin{itemize}[nosep]
    \item equality of two single vertex/edge variables: $x=y$;
    \item membership test between a single vertex/edge variable and a monadic variable: $x\in X$;
    \item test of the incidence relation between a single vertex variable and a single edge variable: $\mathsf{inc}(x,e)$;
    \item modular predicates of the form $|X|\equiv a \bmod m$, where $a,m$ are nonnegative integers with $m\neq 0$. 
\end{itemize}
Larger formulas can be formed out of these atomic formulas by means of standard Boolean connectives, negation, and quantification of all sorts of variables (both universal and existential). The semantics is defined naturally. The {\em{size}} of a formula $\varphi$ of $\CMSO_2$, denoted by $\|\varphi\|$, is the number of subformulas of $\varphi$ plus the sum of the moduli $m$ of all the modular predicates present in $\varphi$.
We also consider the logic $\mathsf{MSO}_2$, which is a restriction of $\CMSO_2$ that does not allow modular predicates.

As usual, {\em{sentences}} are formulas without free variables, and sentence $\varphi$ being satisfied in a graph $G$ is denoted by $G\models \varphi$. More generally, if a formula $\varphi$ has free variables $\tup X$, we often denote it by $\varphi(\tup X)$. Here, $\tup X$ is a {\em{tuple of variables}}, which is just a finite set of variables, of all possible sorts. If $G$ is a graph, by $G^{\tup X}$ we denote the set of all {\em{evaluations}} of variables of $\tup X$ in $G$; these are mappings that associate with each variable $X$ of $\tup X$ a feature $\tup U(X)$ of appropriate sort (vertex, edge, vertex subset, or edge subset) in~$G$.
Then for an evaluation $\tup U\in G^{\tup X}$ we can write $G\models \varphi(\tup U)$ and say that $\varphi(\tup U)$ is satisfied in $G$, with the obvious meaning.

As proved by Courcelle~\cite{courcelle1990monadic}, $\CMSO_2$ formulas can be efficiently evaluated on graphs of bounded treewidth (see also~\cite{ArnborgLS91easy}).

\begin{theorem}[\cite{courcelle1990monadic,ArnborgLS91easy}]\label{prop:courcelle}
    There is an algorithm that, given a $\CMSO_2$ formula $\varphi(\tup X)$ and a graph $G$, runs in time $\Oh_{\|\varphi\|,\mathsf{tw}(G)}(\|G\|)$, and either finds an evaluation $\tup U\in G^{\tup X}$ such that $G\models \varphi(\tup U)$, or correctly concludes that such an evaluation does not exist.
\end{theorem}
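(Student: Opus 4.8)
The plan is to follow the classical automata-theoretic proof of Courcelle's theorem, augmented with a bottom-up/top-down pass to recover a witnessing evaluation. First I would compute, in time $\Oh_{\tw(G)}(\|G\|)$, a tree decomposition $(T,\bag)$ of $G$ of width $w=\Oh(\tw(G))$; a linear-time (for fixed treewidth) algorithm such as Bodlaender's suffices, and the $\Oh_{\|\varphi\|,\tw(G)}(\cdot)$ slack in the target running time absorbs its dependence on $\tw(G)$. After standard preprocessing I may assume $(T,\bag)$ is a \emph{nice} tree decomposition with $\Oh(\|G\|)$ nodes, in which every node introduces a vertex, introduces an edge, forgets a vertex, joins two subtrees, or is a leaf with empty bag.

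Second, I would encode $(T,\bag)$ as a term $t$ over a finite alphabet $\Gamma_w$ whose size depends only on $w$: each node becomes a symbol recording the (constant-size) local data — which of the $\le w+1$ slots is introduced/forgotten, which edge is introduced — together with, for the free variables $\tup X=(X_1,\dots,X_r)$ of $\varphi$, an annotation marking for each introduced vertex/edge which monadic $X_i$ it belongs to and which single-element $X_i$ it equals. Thus an annotated term over $\Gamma_w$ encodes a pair $(G,\tup U)$ with $\tup U\in G^{\tup X}$, and every such pair arises this way. The key structural fact, going back to Doner, Thatcher--Wright, and Courcelle, is that $\{\,t : G\models\varphi(\tup U)\,\}$ is a regular tree language: by induction on $\varphi$ one builds a finite bottom-up tree automaton $\Aa_\varphi$ over $\Gamma_w$ recognizing exactly the terms encoding pairs satisfying $\varphi$ — equivalently, $\Aa_\varphi$'s state at a node is the $\CMSO_2$-type of bounded quantifier rank of the processed subgraph together with the relevant slot information. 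Atomic formulas (incidence, membership, equality, and the modular predicates $|X_i|\equiv a \bmod m$) are handled by small automata; for a modular predicate the state tracks the residue modulo $m$ of the number of introduced-and-not-yet-forgotten elements of $X_i$. Boolean connectives correspond to product/complement of automata, and quantifying out a variable corresponds to projecting away its annotation coordinate and determinizing. All of this is effective, with $|\Aa_\varphi|$ bounded by a computable (though non-elementary) function of $\|\varphi\|$ and $w$, which is acceptable under the $\Oh_{\|\varphi\|,\tw(G)}(\cdot)$ convention.

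Third, for the \emph{search} version I treat the annotation as existentially quantified: I run a determinized version of $\Aa_\varphi$ bottom-up over the un-annotated term $t$, carrying at each node the \emph{set} of states reachable over all annotations of the subtree rooted there (this is exactly the projection-and-determinization used for ordinary quantification), at cost $\Oh_{|\Aa_\varphi|}(1)$ per node, hence $\Oh_{\|\varphi\|,w}(\|G\|)$ in total. If no accepting state is reachable at the root, no evaluation exists and I report this. Otherwise I fix an accepting root state and do a top-down pass, at each node choosing a local annotation symbol and child states consistent with the parent state (such a choice exists by the bottom-up computation); reading off the annotations yields a concrete $\tup U\in G^{\tup X}$ with $G\models\varphi(\tup U)$, again within $\Oh_{\|\varphi\|,w}(\|G\|)$ time. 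Correctness follows from the regularity statement together with the standard Feferman--Vaught/composition lemma, which shows that the run of $\Aa_\varphi$ is well-defined, i.e. satisfaction of $\varphi$ is determined compositionally along the tree decomposition.

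The main obstacle is the inductive construction of $\Aa_\varphi$ and the proof of its correctness, i.e. establishing that $\CMSO_2$ properties of graphs of treewidth $\le w$ form regular tree languages. The delicate points are (i) faithfully simulating single-vertex and single-edge incidences and equalities across introduce/forget/join operations while keeping the state space finite, and (ii) the modular counting predicates, where one must argue that tracking residues modulo each modulus $m$ occurring in $\varphi$ is both sufficient and compatible with the addition that happens at join nodes. Everything else — computing the tree decomposition, making it nice, and the bottom-up/top-down automaton passes — is routine; this is precisely the content of~\cite{courcelle1990monadic,ArnborgLS91easy}, which is why the theorem is quoted here as a black box.
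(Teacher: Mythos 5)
Your proposal is correct and is exactly the standard automata-theoretic argument from the cited references \cite{courcelle1990monadic,ArnborgLS91easy}; the paper quotes the theorem as a black box and only adds the remark (which you also include) that the missing tree decomposition can be supplied by Bodlaender's algorithm. Nothing in your sketch deviates from that account.
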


We remark that the original statement of~\cite{courcelle1990monadic,ArnborgLS91easy} assumed that $G$ is provided together with a tree decomposition of bounded width, but this can be computed in time $2^{\Oh(\mathsf{tw}(G)^3)}\cdot \|G\|$ using the algorithm of Bodlaender~\cite{Bodlaender96}.

\paragraph{Dynamic maintenance of queries in graphs of bounded treewidth.}

Let $\Sigma$ be a finite set of colors. A {\em{$\Sigma$-vertex-colored graph}} is a graph $G$ together with, for every color $U\in \Sigma$, a subset of vertices $U^G\subseteq V(G)$ that are deemed to be of color $U$ in $G$. Note that the colors are not necessarily disjoint: a vertex can be simultaneously of several different colors (belongs to several sets $U^G$) or of no color at all (belongs to none of the sets $U^G$). For readers familiar with the terminology of relational structures, colors are just unary predicates on the vertex set of the graph. We can naturally consider the $\CMSO_2$ logic on $\Sigma$-vertex-colored graphs by allowing atomic formulas of the form $U(x)$ for $U\in \Sigma$, where $x$ is a single vertex variable, with the semantics that $U(v)$ holds in $G$ if $v\in U^G$.

As mentioned, in order to detect irrelevant vertices fast, we use the dynamic treewidth data structure of Korhonen et al.~\cite{KorhonenMNPS23}.
This was originally proved for Boolean queries (i.e., when $\varphi$ is a sentence, without free variables, and the query only reports whether $G\models \varphi$) on graphs without vertex colors.
The form presented here is for queries with one free single vertex variable on colored graphs.
In~\Cref{app:dyn-tw}, we explain how to adapt the proof of Korhonen et al.~\cite{KorhonenMNPS23} to the setting presented here.

\begin{restatable}[\cite{KorhonenMNPS23}, adapted]{theorem}{dynamictw}\label{prop:dynamic-tw}
 Consider an integer $k\in \N$, a finite set of colors $\Sigma$, and a formula $\varphi(x)$ of  $\CMSO_2$ over $\Sigma$-vertex-colored graphs, where $x$ is a vertex variable. There is a data structure that maintains a dynamic $\Sigma$-vertex-colored graph $G$ under the following~updates:
 \begin{itemize}[nosep]
  \item given vertices $u,v$, add the edge $uv$;
  \item given vertices $u,v$, delete the edge $uv$;
  \item given a vertex $u$ and a color $U\in \Sigma$, flip the membership of $u$ to $U^G$;
 \end{itemize}
 and supports the following query:
 \begin{itemize}[nosep]
  \item provide any vertex $a$ such that $G\models \varphi(a)$, or correctly report that no such vertex exists.
 \end{itemize}
 The query may report \textsf{Treewidth too large} instead of providing the correct answer, which guarantees that the current treewidth of $G$ exceeds $k$.

The initialization on an edgeless and colorless $n$-vertex graph and for given $k,\varphi$ takes time $\Oh_{k,\varphi,\Sigma}(n)$.
 Then, each update takes amortized time $\Oh_{k,\varphi,\Sigma}(n^{o(1)})$, and each query takes worst-case time $\Oh_{k,\varphi,\Sigma}(1)$.
\end{restatable}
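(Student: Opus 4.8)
The plan is to obtain \cref{prop:dynamic-tw} as a two-part strengthening of the dynamic treewidth data structure of Korhonen et al.~\cite{KorhonenMNPS23}, which I will use as a black box with the following interface: it maintains a bounded-width tree decomposition of a dynamic \emph{uncolored} graph $G$ under edge insertions and deletions, together with the run of any dynamic-programming scheme whose node states have size $\Oh_{k,\Sigma}(1)$ and whose state at a node is computable from the states of its at most two children in time $\Oh_{k,\Sigma}(1)$; and it reports \textsf{Treewidth too large} exactly when its internal decomposition stops being valid because $\tw(G)$ exceeded $k$. Relative to this interface I need (i)~to add vertex colors together with the flip-color update, and (ii)~to support a query that returns a vertex witnessing a formula $\varphi(x)$ with one free vertex variable. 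The behaviour on the initial edgeless colorless graph and the \textsf{Treewidth too large} reporting will then be inherited verbatim, since an edgeless graph has treewidth $0$ and, with no colors set, every DP base case is trivial; so the $\Oh_{k,\varphi,\Sigma}(n)$ initialization and the reporting convention come for free.

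For~(ii), the idea is to reduce the witness query to maintaining a suitably augmented DP scheme and then reading one word at the root. Fix $q=\mathrm{qr}(\varphi)$. For each node $t$ of the maintained decomposition, with bag $B_t$ and with $G_t$ the subgraph induced by the vertices appearing in the subtree rooted at $t$, the augmented DP stores the set of $\CMSO_2$-types of quantifier rank at most $q$ realized by the structure obtained from $G_t$ by treating the vertices of $B_t$ as named constants and adding a single distinguished ``marked'' vertex $w$, ranging over all $w\in V(G_t)$, and, for each realized type, one witnessing vertex $w$. By the composition theorem for $\CMSO_2$ over tree decompositions~\cite{courcelle1990monadic} this set has size $\Oh_{\varphi,k,\Sigma}(1)$ and the state at $t$ is computable from those of its children in time $\Oh_{\varphi,k,\Sigma}(1)$, since the marked vertex lies in exactly one child subtree or in $B_t$ and the type and witness propagate accordingly; hence the scheme is maintainable within the interface of~\cite{KorhonenMNPS23}. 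At the root $t$ we have $G_t=G$, and whether $\varphi(a)$ holds for some $a\in V(G)$ is determined by whether some realized type there satisfies $\varphi$ (recall $\varphi(x)$ has quantifier rank $q$, so its truth on $a$ depends only on the rank-$q$ type of $G$ with $a$ named); if so, the stored witness is such an $a$. The query thus takes worst-case time $\Oh_{k,\varphi,\Sigma}(1)$ and reports \textsf{Treewidth too large} precisely when the underlying data structure does.

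For~(i), the DP scheme above is easily made to read vertex colors: the base cases at introduce-nodes may branch on the color set of the introduced vertex, which affects neither the state size nor the combination time, and it is routine that $\CMSO_2$ over $\Sigma$-vertex-colored graphs translates to such a scheme — adding unary predicates does not change treewidth, so the decomposition-maintenance layer of~\cite{KorhonenMNPS23} is untouched. The one genuinely new ingredient is the flip-color update at a vertex $u$: it leaves $G$, and hence the entire maintained tree decomposition, unchanged, and only invalidates the DP states at the nodes whose bag contains $u$ together with all their ancestors. These nodes form a connected subtree containing the root, which is exactly the shape of region on which~\cite{KorhonenMNPS23} already re-runs the DP as a sub-step of processing an edge update. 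The plan is therefore to invoke that DP-refresh routine in isolation on this region; since a color flip is accompanied by no structural change to the decomposition, its cost is dominated by that of an edge update, so the amortized bound $\Oh_{k,\varphi,\Sigma}(n^{o(1)})$ is preserved.

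The step I expect to be the main obstacle is precisely this last one: making it fully rigorous that the flip-color update fits the amortized analysis of~\cite{KorhonenMNPS23}. Their data structure does not, a priori, bound the number of bags containing a fixed vertex, so one cannot naïvely argue that a single flip touches few nodes; instead one has to enter their ``prefix-rebuilding'' framework, verify that the DP-refresh performed after a local modification can be called on an arbitrary root-containing connected subtree at amortized cost $n^{o(1)}$, and check that charging this to the same potential used for edge updates still goes through. As a fallback I would keep in reserve the option of emulating each color flip by $\Oh_{k,\Sigma}(1)$ edge updates using private gadgets attached to fresh vertices, but this is delicate — the gadgets must be $\CMSO_2$-recognizable without being confusable with organic structure of $G$ — so I expect the direct route through the internals of~\cite{KorhonenMNPS23} to be the cleaner one, and this is where I anticipate the bulk of the technical work in \cref{app:dyn-tw}.
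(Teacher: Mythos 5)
Your handling of the witness query matches the paper's in substance: augment the tree-decomposition automaton so that, at each node $t$, the state records, for every rank-$q$ $\CMSO_2$-type of the structure ``$G_t$ with the bag vertices and one marked vertex as constants,'' whether that type is realized and, if so, one witnessing vertex. The paper formalizes this as a state function $\xi_{G_t}$ from types to $\Omega\cup\{\bot\}$ together with a projection $\pi$ applied at the root; this is the same idea.

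Where you diverge is in the handling of colors, and here your proposal is incomplete exactly where you flag it. Your primary route is to open the internals of~\cite{KorhonenMNPS23}: observe that a color flip at $u$ leaves the decomposition unchanged and only perturbs DP states on the prefix of nodes whose subtree graph contains $u$, and invoke their prefix-refresh there. The paper mentions this route and asserts it works, but deliberately does not pursue it, precisely to avoid re-opening the amortized analysis of~\cite{KorhonenMNPS23}; and indeed you correctly point out that nothing in their black-box interface bounds the size of that prefix, so the step you defer is the one that actually needs an argument. The paper instead commits to the gadget route you list as a fallback, and there I think your instinct is off: the gadget is not delicate. Each vertex $u$ is given a private pool $P_u$ of $2^{|\Sigma|}+1$ fresh vertices, of which exactly $\alpha(\Sigma(u))\in\{2,\ldots,2^{|\Sigma|}+1\}$ are made adjacent to $u$ (for a fixed bijection $\alpha$), the rest staying isolated. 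Then original vertices are exactly those of degree at least two in the augmented graph, original edges are those between two such vertices, and the color set of $u$ is recovered from the number of $u$'s degree-one neighbors; $\varphi$ is syntactically relativized to these definitions. A color flip becomes at most $2^{|\Sigma|}$ edge updates and treewidth is unaffected (for $k\geq1$, which is harmless), so every bound from~\cite{KorhonenMNPS23} is inherited as a black box. Committing to your fallback recovers the paper's argument and closes the proof; as written, your primary route leaves open precisely the amortization step you name.
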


Obviously, \cref{prop:dynamic-tw} can be also used when $\varphi$ is a sentence, rather than a formula with one free variable, to maintain whether $\varphi$ is satisfied in the graph. Just add a dummy non-used free variable to $\varphi$ and test whether the data structure returns some vertex or not.

We will also use the following observation.

\begin{observation}\label{obs:mso-minor}
    Let $\ell,h\in \mathbb{N}$. Also,
    let $\tup x$ be a tuple of $\ell$ single vertex variables and $\tup Y$ be a tuple of $h$ vertex set variables.
    For every graph $H$ on $h$ vertices and every mapping $\pi: V(H)\to 2^{\tup x}$,
    there is an $\mathsf{MSO}_2$-formula $\varphi_{(H,\pi)}(\tup x,\tup Y)$ such that the following holds.
    Given a graph $G$, a tuple $\tup v\in G^{\tup x}$, and a tuple $\tup U\in G^{\tup Y}$, we have $G\models \varphi_{(H,\pi)}(\tup v,\tup U)$ if and only if
    \begin{itemize}[nosep]
        \item $(H,\pi_{\tup v})$ is an $X$-rooted minor of $G$; and
        \item $\tup U$ is a minor model of $(H,\pi_{\tup v})$ in $G$,
    \end{itemize}
    where $X=\{\tup v(x): x\in\tup x\}$ and $\pi_{\tup v}$ maps each $u\in V(H)$ to the subset $\{\tup v(x): x\in \pi(u)\}$ of $X$.
\end{observation}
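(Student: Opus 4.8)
The plan is to build $\varphi_{(H,\pi)}$ by directly transcribing the three defining conditions of a rooted minor model from Definition~\ref{def:rooted-minor}. Fix an enumeration $V(H)=\{u_1,\dots,u_h\}$, identify the $i$-th vertex set variable $Y_i$ of $\tup Y$ with $u_i$, and write $\tup x=(x_1,\dots,x_\ell)$. The formula $\varphi_{(H,\pi)}(\tup x,\tup Y)$ will be a conjunction of four $\MSO_2$-formulas, designed so that an evaluation sending $\tup x$ to $\tup v$ and $\tup Y$ to $\tup U$ satisfies it exactly when the assignment $u_i\mapsto\tup U(Y_i)$ is a minor model of $(H,\pi_{\tup v})$ in $G$. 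Concretely: (1) $\bigwedge_{i\in[h]}\mathrm{Conn}(Y_i)$, where $\mathrm{Conn}(Z)$ is the standard $\MSO_2$-formula asserting that $Z$ is nonempty and connected (i.e.\ every nonempty proper subset of $Z$ sends an edge to its complement within $Z$); (2) $\bigwedge_{1\le i<j\le h}\neg\exists z\,(z\in Y_i\wedge z\in Y_j)$, expressing pairwise disjointness of the branch sets; (3) $\bigwedge_{u_iu_j\in E(H)}\exists e\,\exists z\,\exists z'\,(z\ne z'\wedge\mathsf{inc}(z,e)\wedge\mathsf{inc}(z',e)\wedge z\in Y_i\wedge z'\in Y_j)$, the edge condition; and (4) the rootedness clause described next.

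For conjunct (4), the condition $\pi_{\tup v}(u_i)=\tup U(Y_i)\cap X$ with $X=\{\tup v(x):x\in\tup x\}$ splits into two inclusions. The inclusion $\pi_{\tup v}(u_i)\subseteq\tup U(Y_i)$ is expressed by $\bigwedge_{x\in\pi(u_i)}x\in Y_i$, and the inclusion $\tup U(Y_i)\cap X\subseteq\pi_{\tup v}(u_i)$ by $\bigwedge_{x\in\tup x}(x\in Y_i\rightarrow\bigvee_{x'\in\pi(u_i)}x=x')$, where an empty disjunction is read as $\mathsf{false}$, so this collapses to $x\notin Y_i$ when $u_i$ is unrooted. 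Phrasing the second inclusion through the equalities $x=x'$ (rather than through ``$x\notin Y_i$ for every $x\notin\pi(u_i)$'') is the one point to watch: it is precisely what keeps the formula correct when the evaluation $\tup v$ is not injective, i.e.\ identifies two variables of $\tup x$. Conjunct (4) is then the conjunction over $i\in[h]$ of the above pair of formulas. All four conjuncts are built from the atomic formulas of $\MSO_2$ (equality, membership, incidence) via Boolean connectives and quantification of all sorts, and since $H$, $\tup x$, $\tup Y$ are finite, all $\bigwedge$'s and $\bigvee$'s are finite; hence $\varphi_{(H,\pi)}$ is a genuine $\MSO_2$-formula (no modular predicates are needed), of size bounded by a computable function of $h$ and $\ell$.

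Finally, one verifies the equivalence by unwinding the definitions: under the identification $u_i\mapsto\tup U(Y_i)$, conjuncts (1)--(4) hold exactly when the branch sets are, respectively, connected and nonempty, pairwise disjoint, joined by $G$-edges according to $E(H)$, and satisfy $\tup U(Y_i)\cap X=\pi_{\tup v}(u_i)$ for every $i$; by Definition~\ref{def:rooted-minor} this says precisely that $u_i\mapsto\tup U(Y_i)$ is a minor model of $(H,\pi_{\tup v})$ in $G$, which in particular also witnesses that $(H,\pi_{\tup v})$ is an $X$-rooted minor of $G$. Since the construction is a literal translation of the definition, there is no genuine obstacle here; the only places requiring a modicum of care are the (textbook) $\MSO_2$-encoding of connectedness in conjunct (1) and the non-injectivity issue in conjunct (4) noted above.
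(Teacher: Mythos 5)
Your proposal is correct and takes essentially the same approach as the paper: the paper's own proof is a one-sentence remark that the formula is a direct $\MSO_2$ transcription of the conditions in Definition~\ref{def:rooted-minor}, which is exactly what you spell out. Your explicit handling of the rootedness clause (in particular, phrasing the second inclusion via equality of variables rather than via negative membership, so that non-injective evaluations $\tup v$ are handled correctly, and so that disjointness of branch sets automatically forces disjointness of the root sets $\pi_{\tup v}(u_i)$) is a careful elaboration of the detail the paper leaves implicit, but it is the same argument.
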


The proof is obvious:
the formula of~\Cref{obs:mso-minor} checks whether $(H,\pi)$ is an $X$-rooted minor of $G$ by checking whether $\tup U$ satisfies the conditions in~\Cref{def:rooted-minor}, which can be easily expressed in $\mathsf{MSO}_2$.

We also use the following result on maintaining bounded treewidth on graphs updated by edge additions and deletions. Its proof is a direct consequence of~\Cref{prop:dynamic-tw},
applied for a formula $\varphi$ that expresses that the treewidth of the graph is at most $k$. The latter is certified by a sentence as the one in~\Cref{obs:mso-minor} encoding the \emph{obstructions} for graphs of treewidth at most $k$; see~\cite[Theorem 5.9]{Lagergren98}.

\begin{lemma}\label{lem:ds-tw}
    Consider an integer $k\in \N$. There is a data structure that for an $n$-vertex graph $G$ updated by edge additions and deletions,
    maintains whether the treewidth of $G$ is at most $k$.
   The initialization time is $\Oh_{k}(n)$, the amortized update time is $\Oh_{k}(n^{o(1)})$, and the query time is $\Oh_{k}(1)$.
\end{lemma}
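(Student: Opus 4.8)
The plan is to derive \Cref{lem:ds-tw} as a direct instantiation of \Cref{prop:dynamic-tw}. Concretely, I would run the data structure of \Cref{prop:dynamic-tw} with the empty set of colors $\Sigma=\emptyset$, with treewidth threshold $k$, and with a fixed sentence $\varphi$ (formally, a formula $\varphi(x)$ with one dummy free vertex variable, as remarked after \Cref{prop:dynamic-tw}) whose truth in a graph $G$ is equivalent to $\tw(G)\leq k$. Once such a $\varphi$ of size bounded by a computable function of $k$ is in hand, there is nothing more to do: the initialization, amortized update, and query times provided by \Cref{prop:dynamic-tw} are $\Oh_{k,\varphi,\Sigma}(n)$, $\Oh_{k,\varphi,\Sigma}(n^{o(1)})$, and $\Oh_{k,\varphi,\Sigma}(1)$ respectively, and since $\varphi$ and $\Sigma$ depend only on $k$, these are exactly the bounds claimed in \Cref{lem:ds-tw}.

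To construct $\varphi$, I would use the fact that the class of graphs of treewidth at most $k$ is minor-closed, and hence is characterized by a finite set $\mathcal{F}_k$ of forbidden minors; by the bounds of Lagergren on obstruction sets~\cite[Theorem 5.9]{Lagergren98}, this set $\mathcal{F}_k$ is finite, all of its members have size bounded by a computable function of $k$, and it can be computed from $k$. Then, for every $H\in\mathcal{F}_k$, \Cref{obs:mso-minor} applied with an empty root set yields an $\mathsf{MSO}_2$-formula $\varphi_{(H,\emptyset)}(\tup Y)$ that holds in $G$ for an evaluation $\tup U$ of the $|V(H)|$ vertex-set variables $\tup Y$ exactly when $\tup U$ is a minor model of $H$ in $G$; setting $\psi_H\coloneqq\exists\tup Y\,\varphi_{(H,\emptyset)}(\tup Y)$ gives an $\mathsf{MSO}_2$-sentence with $G\models\psi_H$ iff $H$ is a minor of $G$. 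I would then take $\varphi\coloneqq\bigwedge_{H\in\mathcal{F}_k}\neg\psi_H$, which is an $\mathsf{MSO}_2$-sentence (in particular a $\CMSO_2$-sentence) stating that $G$ excludes every member of $\mathcal{F}_k$ as a minor, i.e., that $\tw(G)\leq k$. Its size $\|\varphi\|$ is bounded by a computable function of $k$ because $\mathcal{F}_k$ is.

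The only remaining point is to translate the output of \Cref{prop:dynamic-tw} into the answer to a ``$\tw(G)\leq k$?'' query: I issue the query of \Cref{prop:dynamic-tw} and report ``yes'' if it returns some vertex (then $G\models\varphi$), ``no'' if it reports that no such vertex exists (then $G\not\models\varphi$), and ``no'' if it reports \textsf{Treewidth too large} (then $\tw(G)>k$ by the accompanying guarantee). This is always correct, since whenever $\tw(G)\leq k$ the answer \textsf{Treewidth too large} is impossible --- it would assert $\tw(G)>k$ --- so in that case the query necessarily reports the correct truth value of $G\models\varphi$. I do not expect any genuine obstacle here; the two points that deserve a word of care are using \emph{computable} bounds for the treewidth obstruction set (so that the dependence on $k$ in all running times stays computable, as the $\Oh_k(\cdot)$ notation demands), which is exactly what~\cite{Lagergren98} supplies, and the handling of the \textsf{Treewidth too large} verdict described above.
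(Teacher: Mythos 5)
Your proposal is correct and follows essentially the same approach as the paper's own proof: instantiate \Cref{prop:dynamic-tw} with a $\CMSO_2$-sentence (with a dummy free variable) obtained by conjoining, over the computable forbidden-minor obstruction set $\mathcal{F}_k$ of \cite[Theorem 5.9]{Lagergren98}, the negations of the formulas from \Cref{obs:mso-minor}. You are in fact slightly more careful than the paper, which does not explicitly spell out the (correct, and easy) handling of the \textsf{Treewidth too large} verdict.
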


\begin{proof}
We set up the data structure obtained from~\Cref{prop:dynamic-tw} for a sentence $\varphi_k$ of $\CMSO_2$ such that $G\models\varphi_k$ if and only if the treewidth of $G$ is at most $k$. In order to obtain such sentence for every fixed~$k$, we use the fact that graphs of bounded treewidth graphs can be characterized in terms of forbidden minors.
In particular, by~\cite[Theorem 5.9]{Lagergren98},
we infer the existence of a collection $\mathcal{F}_k$ of graphs, which is computable from $k$, such that $G$ has treewidth at most $k$ if and only if $G$ does not contain any graph in~$\mathcal{F}$ as a minor.
Let us stress that the size of $\mathcal{F}_k$ and the number of vertices in each graph in $\mathcal{F}_k$ are both bounded by some computable function of $k$. 
The claimed sentence $\varphi_k$ is obtained after considering for each graph $H\in\mathcal{F}_k$ the sentences obtained from~\Cref{obs:mso-minor} for such $H$ (expressing that $H$ is a minor of $G$) and then conjoining the negations of all such formulas.
\end{proof}    

\subsection{Iterative decompression}\label{subsec:iterative-decomp}

In this subsection we present~\Cref{lem:main-subroutine-apexfree} and how to use it in order to show~\Cref{thm:apexFree}.
The proof of~\Cref{thm:apexFree} is based on an iterative decompression scheme that allows us to use the data structure provided by~\Cref{prop:dynamic-tw} in order to detect \emph{irrelevant} vertices.
Given an instance $(G,X,\delta)$ of the \textsc{Folio} problem, we say that a vertex $v\in V(G)\setminus X$ is \emph{irrelevant} for $(G,X,\delta)$ if the $(X,\delta)$-folio of $G$ is the same as the $(X,\delta)$-folio of $G-v$.

\paragraph*{Contracting a connected set of edges.}
Let $G$ be a graph and let $F\subseteq E(G)$.
We use $V(F)$ to denote the set of all endpoints of edges in $F$.
We say that $F$ is \emph{connected} if the graph $(V(F),F)$ is connected.
We use $G/F$ to denote the graph obtained from $G$ after contracting all edges in $F$.
If $F$ is connected, then we use $v_F$ to denote the (unique) vertex of $G/F$ that is the result of the contraction of the edges in $F$.


As explained in the overview in~\Cref{subsec:overview:apexminorfree}, in the uncontracting process of our algorithm, our aim is to detect irrelevant vertices for the instance $(G,X,\delta)$.
The following result states that the existence of vertices that are irrelevant for $(G,X,\delta)$ can be defined by a logical formula which is evaluated on the contracted graph $G/F$, for some connected subset $F$ of edges that are not incident to $X$.
Moreover, this lemma specifies the conditions that allow such formula to be true, which in turn implies the existence of an irrelevant vertex for $(G,X,\delta)$.

\begin{restatable}{lemma}{subroutineapexfree}\label{lem:main-subroutine-apexfree}
    For every $\ell,\delta\in \N$ there is an $\mathsf{MSO}_2$-formula $\phi(\tup x,y,z)$, computable from $\ell$ and $\delta$, where $\tup x$ is a tuple of $\ell$ vertex variables and $y,z$ are vertex variables, such that the following holds.
    Let $(G,X,\delta)$ be an instance of the \textsc{Folio} problem, where $|X|=\ell$,
    and let $F$ be a connected subset of $E(G)$ with $V(F)\cap X=\emptyset$.
    Then, for every $\tup v\in G^{\tup x}$ such that $X=\{\tup v(x): x\in\bar{x}\}$ and every
    vertex $u\in V(G)\setminus (V(F)\cup X)$, if $G/F\models \varphi(\tup v,v_F,u)$, then $u$ is irrelevant for $(G,X,\delta)$.

    Moreover, for every $p\in\mathbb{N}$ there is a constant $k\in\mathbb{N}$, computable from $\ell, \delta$, and $p$, such that for every $\tup v\in G^{\tup x}$ such that $X=\{\tup v(x)\colon x\in\bar{x}\}$ if
    \begin{itemize}[nosep]
        \item $(G/F)\setminus X$ does not contain the apex-grid of order $p$ as a minor; and
        \item the treewidth of $G/F$ is larger than $k$;
    \end{itemize}
    then there is a vertex $u\in V(G)\setminus (V(F)\cup X)$ such that $G/F\models\varphi(\tup v,v_F,u)$.
\end{restatable}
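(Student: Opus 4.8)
The plan is to let the formula $\phi(\tup x,y,z)$ --- which will be computable from $\ell$ and $\delta$ --- directly encode the combinatorial configuration that certifies irrelevance through the Irrelevant Vertex Rule of Robertson and Seymour. Concretely, $\phi(\tup v,v_F,u)$ will assert that in $G/F$ there exist vertex sets describing a flat wall $(W,C)$ of height $h=h(\ell,\delta)$ in $(G/F)-X$, where $X=\{\tup v(x):x\in\tup x\}$; that this wall is homogeneous with respect to the apex set $X$ in the sense required by the Irrelevant Vertex Rule; that $v_F\notin C$; and that $u$ is one of the central vertices of $W$. Here $h(\ell,\delta)$ is the wall-height threshold supplied by the combinatorial statement of \Cref{subsec:existence-irre}. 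Producing this formula is exactly the task of \Cref{subsec:defirvlogic}: the point is that ``flat wall of bounded height'', ``homogeneous'', and ``central vertex'' are all bounded-size combinatorial conditions, hence expressible by existentially quantifying over boundedly many vertex sets --- the wall $W$, its compass $C$, a combinatorial description of the planar-like drawing witnessing flatness, and a labelling of the brick classes by their folios of detail at most $\delta$ --- and invoking minor/subgraph/connectivity subformulas of the kind provided by \Cref{obs:mso-minor}; no counting is needed, so this stays within $\mathsf{MSO}_2$.

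For the soundness half, suppose $G/F\models\phi(\tup v,v_F,u)$, and extract the homogeneous flat wall $(W,C)$ in $(G/F)-X$ with $v_F\notin C$ and $u$ central. The crucial observation is that since $v_F\notin C$ and $V(F)\cap X=\emptyset$, the set $C$ is disjoint from $V(F)$, so under the natural identification $V(G/F)\setminus\{v_F\}=V(G)\setminus V(F)$ the induced subgraphs $(G/F)[C\cup X]$ and $G[C\cup X]$ (together with the edges between $C$ and $X$) coincide; thus $W$, its compass, and the flatness drawing transfer verbatim. Moreover, any $c\in C$ having in $G$ a neighbour $d\in V(G)\setminus(X\cup C)$ either keeps $d$ as a neighbour in $G/F$ (if $d\notin V(F)$) or becomes adjacent to $v_F$ (if $d\in V(F)$), and in both cases the flatness of $(W,C)$ in $(G/F)-X$ forces $c\in\mathsf{Per}(W)$. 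Hence $(W,C)$ is a homogeneous flat wall of height $h$ in $G-X$ with $u$ central; since $h=h(\ell,\delta)$, the Irrelevant Vertex Rule (\Cref{subsec:existence-irre}, from~\cite{GM13,DBLP:journals/jct/RobertsonS12}) gives that the $(X,\delta)$-folio of $G$ equals that of $G-u$. Finally $u\in C$ is disjoint from $X$ and from $V(F)$, so $u\in V(G)\setminus(V(F)\cup X)$, as required.

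For the ``moreover'' part, I would fix $h=h(\ell,\delta)$ as above and choose $H=H(h)$ large enough that any flat wall of height $H$ contains a homogeneous subwall of height $h$ whose compass lies inside the original compass and avoids a prescribed single vertex; this holds because a height-$H$ wall splits into $\Omega((H/h)^2)$ subwalls of height $h$ with pairwise disjoint compasses, of which a single vertex can meet only $\Oh(1)$, and homogeneity passes to subwalls. Then I would choose $k=k(\ell,\delta,p)$ large enough that any graph excluding the apex-grid of order $p$ as a minor and of treewidth more than $k-\ell$ contains a homogeneous flat wall of height $H$ --- this is precisely the Flat Wall Theorem specialised to apex-grid-minor-free graphs, in which the apex set can be taken empty (again from \Cref{subsec:existence-irre}; quantitatively it passes through \Cref{thm:grid-minor} and a subwall-extraction argument in the spirit of \Cref{prop:grid-in-grid}). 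All these thresholds are computable. Now under the hypotheses of the ``moreover'' part, $\tw((G/F)-X)\ge\tw(G/F)-|X|>k-\ell$, so $(G/F)-X$ contains a homogeneous flat wall of height $H$, hence by zooming in a homogeneous flat wall $(W',C')$ of height $h$ with $C'$ avoiding $v_F$; then every central vertex $u$ of $W'$ lies in $C'\subseteq V(G)\setminus(V(F)\cup X)$ and satisfies $G/F\models\phi(\tup v,v_F,u)$.

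I expect the main obstacle to lie on the logical side: constructing a bounded-size $\mathsf{MSO}_2$-formula that faithfully expresses ``homogeneous flat wall of height $h$ with central vertex $z$ that avoids $y$''. This forces one to commit to a concrete combinatorial encoding of the planar-like drawing underlying flatness (for instance via a suitably nested family of separations, i.e.\ a rendition), and then to check that flatness, the designation of central vertices, and the homogeneity condition --- equality of the folios of bricks in the same class, read off from guessed folios of detail at most $\delta$ --- are all expressible over that encoding. Once this is in place, the remainder is quantitative bookkeeping: threading the bounds through the Irrelevant Vertex Rule, the Flat Wall Theorem, the Grid Minor Theorem, and subwall extraction in order to fix $h$, $H$, and $k$ consistently.
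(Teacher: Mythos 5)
Your soundness direction aligns with the paper's: the formula encodes a homogeneous flat wall of a fixed height with compass avoiding $v_F$, and since $v_F$ lies outside the compass, the compass and its rural division are the same in $(G/F)\setminus X$ and in $G\setminus X$, so \Cref{prop:existence-homogeneous,prop:existence-irrelevant} give irrelevance of the central vertex $u\in C\subseteq V(G)\setminus(V(F)\cup X)$. This part is fine, modulo the (nontrivial but deferred) logical encoding that the paper supplies in \Cref{subsec:defirvlogic}.

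For the ``moreover'' direction, the key step is missing. You appeal to a ``Flat Wall Theorem specialised to apex-grid-minor-free graphs, in which the apex set can be taken empty,'' but \Cref{subsec:existence-irre} provides no such statement; \Cref{prop:many-flat-walls} always delivers an apex set $A$ with $|A|<\binom{t}{2}$, and its walls are flat only in $G\setminus(X\cup A)$, even when $G\setminus X$ excludes an apex-grid as a minor. Eliminating $A$ is the crux of the paper's proof and cannot be delegated to a citation. The paper does it by invoking the multi-wall form of \Cref{prop:many-flat-walls} with multiplicity $q=p^4\binom{t}{2}+1$ (for $t=p^2+1$) inside the central $h'$-subwall of a slightly larger wall of height $h'+2p$ --- the margin of $2p$ is exactly what makes \Cref{prop:grid-in-grid} applicable to vertices in that central subwall; it then argues that a vertex of $A$ adjacent to $p^4$ of the $q$ pairwise-disjoint compasses would force an apex-grid of order $p$ as a minor of $(G/F)\setminus X$, a contradiction, so by pigeonhole at least two of the $q$ subwalls have compasses with no edges to $A$ at all, and their societies therefore admit rural divisions already in $(G/F)\setminus X$; finally, disjointness of those two compasses supplies one avoiding $v_F$. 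Your cascade of parameters ($h$, then $H$, then $k$) introduces neither the multiplicity $q$ nor the $2p$ margin, so the pigeonhole step needed to dispose of $A$ cannot be run, and the ``moreover'' part remains unproved as written.
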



The proof of~\Cref{lem:main-subroutine-apexfree} is presented in the rest of this section.
Before presenting the proof, 
we show how to use~\Cref{lem:main-subroutine-apexfree} in order to prove~\Cref{thm:apexFree}.
We restate~\Cref{thm:apexFree} here for convenience.

\thmapexfree*

\begin{proof}
    We start by assuming that $G\setminus X$ is connected, since otherwise we solve 
    \FolioApex in $G[C\cup X]$, for each connected component $C$ of $G\setminus X$ and combine the outcomes using~\Cref{lem:folio_over_separator}.
    
    Let $n\coloneqq |V(G)|$ and $\ell\coloneqq |X|$.
    The algorithm performs the following iterative decompression scheme.
    \begin{enumerate}
        \item Set $\tup v\coloneqq (v_1,\ldots,v_\ell)$ be an arbitrary ordering of $X$ and consider an ordering $v_{\ell+1}, \ldots, v_n$ of $V(G)-X$ so that any suffix $v_i,\ldots,v_n$ (for $i>\ell$) induces a connected subgraph of~$G$.
        Also, for every $i\in [\ell,n]$ compute $\tau(i)$ to be the highest index of a neighbor of $v_i$, and for each $j\in [\ell,n]$ compute $E_j$ as the set of edges $v_iv_j$ for $i<j$ such that $j=\tau(i)$.

        \item Set $\Sigma = \{U_1,\ldots,U_\ell,I,A,M\}$ to be a set of $\ell+3$ colors. 
        Let us mention here that in the colors $U_1,\ldots,U_\ell$ are used to encode $X$ (i.e., $U_i^G$ will be the singleton $\{v_i\}$, for $i\le \ell$), $A$ encodes the active vertices in each step of the iterative procedure, $I$ encodes the irrelevant vertices (i.e., vertices that are removed without affecting the $(X,\delta)$-folio of the instance)  and $M$ encodes the megavertex $v^\star$ that is used in the decompression scheme.  
        
        \item Set $k$ and $\varphi(\tup x,y,z)$ be the constant and the formula from~\Cref{lem:main-subroutine-apexfree}, computed for $\ell,\delta$, and $p$.
        Set also $\widehat{\varphi}(z)$ to be the formula obtained from $\varphi(\tup x,y,z)$ after existentially quantifying single vertex variables $x_1,\ldots,x_{\ell},y$, where $(x_1,\ldots,x_\ell)=\tup x$, and appending the formula $\varphi(\tup x,y,z)$ modified as follows:
        \begin{itemize}[nosep]
            \item ask that for each $i\in [\ell]$, $x_i\in U_i$, $y\in M$, and $z\in A\setminus (I\cup M \cup \bigcup_{i\in [l]} U_i)$; and
            \item restrict the scope of all quantifications in $\varphi(\tup x,y,z)$ to $A\setminus I$.
            More precisely, modify $\varphi(\tup x,y,z)$ as follows: for a quantified single vertex (resp. edge) variable, demand additionally that it belongs to (resp. its endpoints belong to) $A\setminus I$; for a quantified vertex (resp. edge) set variable, demand additionally that it is a subset of (resp. the endpoints of its edges belong to) $A\setminus I$.
        \end{itemize}

        \item 
        For $k,\Sigma$ and $\widehat{\varphi}(z)$ as above, let $D$ be the data structure of~\Cref{prop:dynamic-tw} that is initialized on the edgeless and colorless graph $G'$ with vertex set $V(G)\cup \{v^\star\}$, where $v^\star$ is a fresh vertex that does not belong to $V(G)$.

        \item For $k$ as above, also set $D_{\mathsf{tw}}$ be the data structure given by~\Cref{lem:ds-tw}, initialized also on $G'$ (we indifferently use $G'$ for both data structures $D$ and $D_{\mathsf{tw}}$ as the updates on the edge set of $G'$ will always be the same in both $D$ and $D_{\mathsf{tw}}$).
        
        \item In both data structures $D$ and $D_{\mathsf{tw}}$, do the following updates in $G'$: add in $G'$ all edges of $G$ with both endpoints in $X$, and for every vertex $v \in X$ that has a neighbor in $V(G) \setminus X$, add the edge $vv^\star$ to $G'$.
        Also, in the data structure $D$, add $v_i$ to $U_i^{G'}$ for each $i\leq \ell$ and set
        $I^{G'}\coloneqq \emptyset$, $A^{G'}\coloneqq X\cup \{v^\star\}$, and $M^{G'}\coloneqq \{v^\star\}$.
        \item Initialize $i\coloneqq \ell$.
        \item While $i\leq n$, ask the data structure $D_{\mathsf{tw}}$ whether the treewidth of $G'$ is at most $k$. Depending on the outcome of $D_{\mathsf{tw}}$, do the following.
        \begin{enumerate}
            \item[(a)] If  $D_{\mathsf{tw}}$ reports that the treewidth of $G'$ is at most $k$, then do the following updates:
            \begin{itemize}[nosep]
                \item In both $D$ and $D_{\mathsf{tw}}$, update the edge set of $G'$ as follows:
                \begin{enumerate}
                    \item[(i)] add the edge $v_{i+1}v^\star$ to $G;$
                    \item[(ii)] for every edge $e$ in $G$ incident to $v_{i+1}$ and a vertex $v_j$, where $j\le i$,
                    add $e$ to $G'$; and
                    \item[(iii)] for every edge in $v_jv_{i+1}\in E_{i+1}$ delete the edge $v_jv^\star$ from $G'$;
                \end{enumerate}
                \item in $D$, add $v_{i+1}$ to $A^{G'}$;
                \item set $i\coloneqq i+1$; and continue the loop.
            \end{itemize}
            \item[(b)] If $D_{\mathsf{tw}}$ reports that the treewidth of $G'$ exceeds $k$, then ask the data structure $D$ for a vertex $u$ such that $G'\models\widehat{\varphi}(u)$.
            Depending on the outcome of $D$, do the following.
            \begin{itemize}
                \item[-] If $D$ returns such a vertex $u$,
                add $u$ to $I^{G'}$. Also, in both data structures $D$ and $D_{\mathsf{tw}}$, delete from $G'$ all edges incident to $u$, and continue the loop.
                \item[-] If $D$ reports that no such vertex exists or \textsf{Treewidth too large}, then apply the algorithm of~\Cref{prop:courcelle} with input $G'\setminus (I^{G'}\cup X)$ and an $\mathsf{MSO}_2$-formula encoding a minor model of the apex-grid of order $p$ (see~\Cref{obs:mso-minor}). If the algorithm of~\Cref{prop:courcelle} outputs such a minor model, then extract from it a minor model of the apex-grid of order~$p$ in $G\setminus X$
                and terminate the algorithm by returning this minor model. Here, if there is a branch set in $G'$ that contains $v^\star$ then extend this branch set to $G$ by removing $v^\star$ and including all vertices in $\{v_{i+1},\ldots,v_n\}$; this intuitively corresponds to uncontracting the~megavertex.
            \end{itemize}
        \end{enumerate}
              
        \item Apply the algorithm of~\Cref{prop:courcelle} in $G\setminus I^G$ to compute its $(X,\delta)$-model-folio (see~\Cref{obs:mso-minor}) and terminate the algorithm by returning this outcome.
    \end{enumerate}

\paragraph*{Correctness of the algorithm.}
For the correctness of the above algorithm, we show that the following two invariants are satisfied after each iteration of the loop:
\begin{itemize}
    \item $G\setminus I^G$ has the same $(X,\delta)$-folio as $G$; and
    \item $G'$ has treewidth at most $k+1$.
\end{itemize} 

We first argue that the invariants are maintained if we deal with case (a) of Step 8.
In this case, observe that $I^{G'}$ does not change and therefore the first invariant is trivially maintained.
For the second invariant, we first observe that at the end of the iteration for case (a), the former $G'$ (before this iteration) can be obtained from the updated $G'$ (at the end of this iteration) after contracting the edge $v_{i+1}v^\star$. This implies that the treewidth of $G'$ (after the update) cannot increase by more than one. Therefore, assuming that 
$D_{\mathsf{tw}}$ reports that the treewidth of $G'$ (at the beginning of the considered iteration) is at most $k$, the second invariant is maintained.

Let us now focus on case (b) of Step 8 and prove that the two invariants from above are maintained.
In particular, we show that these invariants are preserved when $G'$ is updated because the data structure~$D$ returns a vertex $u$ such that $\widehat{\varphi}(z)$.

To show the first invariant for case (b) of Step 8, we employ~\Cref{lem:main-subroutine-apexfree}.
We next argue how the setting of the algorithm fits the  interface of~\Cref{lem:main-subroutine-apexfree}.
Recall that
for every $i>\ell$, the graph $G[\{v_{i},\ldots,v_{n}\}]$ is connected.
For every $i\le n$, let $F_i$ be the set of edges with both endpoints in the set $\{v_{i+1},\ldots,v_n\}$ and note that $F_i$ is a connected subset of $E(G)$.
Observe that, at the beginning of each iteration of the loop, $A^G\setminus \{v^\star\}=V(G) \setminus V(F_i)$ and
$G'[A^G]$ is isomorphic to $G/F_i$, via an isomorphism that maps $v^\star$ to $v_F$.
Also, in each iteration of the loop,
$V(F_i)$ is disjoint from both sets $I^G$ and $X$.
By~\Cref{lem:main-subroutine-apexfree}, if there is a vertex $u\in V(G\setminus I^G)$ that does not belong to $V(F_i)\cup X$ such that $(G\setminus  I^G)/F_i\models\varphi(\tup v,v_F,u)$, then $u$ is irrelevant for $(G\setminus I^G,X,\delta)$.
By definition of $\widehat{\varphi}(z)$, this is equivalent to the statement $G'\models \widehat{\varphi}(u)$.
Therefore, assuming that the $(X,\delta)$-folio of $G\setminus I^G$ is the same as the $(X,\delta)$-folio of $G$, we have that the same holds for the $(X,\delta)$-folio of $G\setminus(I^G\cup\{u\})$. This completes the proof of the first invariant for case~(b) of Step 8.
Note that the treewidth bound (second invariant) follows trivially from the fact that, in this case, edges of $G'$ can be only deleted; therefore treewidth cannot increase.

To conclude the proof of correctness of the algorithm, we stress that, in case (b) of Step 8, if the data structure $D$ does not return a vertex $u$ such that $G'\models \widehat{\varphi}(u)$, 
then~\Cref{lem:main-subroutine-apexfree} implies that $G'\setminus (I^G\cup X)$ contains 
the apex-grid of order $p$ as a minor. This is correctly detected by the algorithm of~\Cref{prop:courcelle} in this case.
Also, note that if the algorithm exits the loop
without terminating, then $A^{G'}=V(G)$ and therefore $G'[A^{G'}\setminus I^{G'}]\setminus \{v^\star\} = G\setminus I^{G'}$.
Since the first invariant above is maintained, the algorithm correctly returns the $(X,\delta)$-model-folio of $G$, by computing the $(X,\delta)$-model-folio of $G\setminus I^{G'}$.

\paragraph*{Running time.}
For the running time of the algorithm, we do the following observations. In Step 1, the claimed ordering of $V(G)-X$ can be found in linear time, for example, as a post-order traversal of a spanning tree of $G-X$, while the sets $E_j$, where $j\in [\ell,n]$ can be computed in total time $\Oh(\|G\|)$.
Also, Steps 2-3 can be implemented in time $\Oh_{\ell,\delta,p}(1)$.
The initialization of the data structures $D$ and $D_{\mathsf{tw}}$ in Steps 4-5 can be implemented in time $\Oh_{k,\varphi,\Sigma}(n)$; see~\Cref{prop:dynamic-tw,lem:ds-tw}.
Step 6 can be implemented in time  $\Oh_{\ell,\delta,p}(\|G\| + n^{o(1)})$, since the insertion of each edge in $G'$ takes time $\Oh_{\ell,\delta,p}(n^{o(1)})$ (see~\Cref{prop:dynamic-tw}) and $G'$ initially contains $\Oh(\ell^2)$ edges.
Step 9 also takes time $\Oh_{\ell,\delta,p}(\|G\|)$; see~\Cref{prop:courcelle}.

We conclude the analysis by focusing on the loop of Step 8.
By~\Cref{lem:ds-tw}, asking $D_{\mathsf{tw}}$ to report whether the treewidth of $G'$ is at most $k$ takes time $\Oh_k(1)$.
In order to measure the total number of edge updates in Step 8, we distinguish into two cases: whether an update concerns an edge incident to $v^\star$ or~not.

For each edge incident to $v^\star$,
we observe that it is added at most once: either in Step 6 (making $v^\star$ adjacent to a vertex from $X$) or
in item (i) of Step 8a. Such edges are not added only if the loop of Step 8 stops in case (b). 
Also, such edges $v_jv^\star$ are deleted exactly once, in item (iii) of Step 8a (if $i+1$ is the largest index in the neighborhood of $v_j$, or, alternatively, $v_jv_{i+1}\in E_{i+1}$). Therefore, the total number of updates concerning edges incident to $v^\star$ is at most $2n$.

Let us now focus on the edges that are added or deleted from $G'$ and are not incident to $v^\star$. First observe that such edges are also edges in $G$.
Observe that they each edge of $G$ is added either in Step 6 (if it has both endpoints in $X$) or in item (ii) of Step 8a (if it is incident to $v_{i+1}$).
Edges of $G$ are deleted from $G'$ only in case (b) of Step 8, i.e., if one of the endpoints of this edge is added to $I^{G'}$.
In this case, the indices of both endpoints of this edge are upper bounded by $i+1$ and since the loop continues with $i$ incremented by one, this edge is never added back again.
Therefore, we have a total linear (in $|E(G)|$) number of updates for edges not incident to $v^\star$.
This implies that the total number of edge updates in the data structures $D$ and $D_{\mathsf{tw}}$ is $\Oh(\|G\|)$.
Also, every such update takes amortized time $\Oh_{\ell,\delta,p}(n^{o(1)})$, resulting to total time $\Oh_{\ell,\delta,p}(\|G\|^{1+o(1)})$ in what concerns edge updates in the algorithm.

The remaining analysis for Step 8 is the following.
In case (a), the update of $A^{G'}$ takes amortized time $\Oh_{\ell,\delta,p}(n^{o(1)})$ and can happen at most $n$ times.
In case (b), querying the data structure $D$ takes 
worst-case time $\Oh_{\ell,\delta,p}(1)$ and the update
of $I^{G'}$ takes amortized time $\Oh_{\ell,\delta,p}(n^{o(1)})$. Observe that once a vertex $u$ in removed from $I^{G'}$, the specifications of the formula $\widehat{\varphi}(z)$ imply that such vertex was not in $I^{G'}\cup X\cup\{v^\star\}\cup \{v_{i+1},\ldots,v_n\}$.  
Also, in case (b) of Step 8, the application of~\Cref{prop:courcelle} is done once in total and takes time $\Oh_{\ell,\delta,p}(n)$.
This concludes the analysis of the algorithm.
\end{proof}

\subsection{Existence of irrelevant vertices}
\label{subsec:existence-irre}

In~\cite{GM13}, Robertson and Seymour provide conditions that allow a vertex to be declared irrelevant for the \textsc{Folio} problem. In this subsection, we present the definitions from~\cite{GM13} that are necessary to state their results.

\subsubsection{(Rural) divisions of societies}

The definitions presented here are taken from~\cite{GM13}.

\paragraph{Societies.}
A \emph{society} is a pair $(G,\Omega)$, where $G$ is a graph and $\Omega$ is an ordered subset of $V(G)$, i.e., a set equipped with an ordering\footnote{We stress here that in~\cite{GM13},
societies are defined for $\Omega$ being a \textsl{cyclic} ordering of a subset of $V(G)$. In this paper, we work with \textsl{orderings} and provide the arguments needed in order to fit the framework of~\cite{GM13}. In particular, we refer the reader to the discussion on the definition of attachment schemes later in this subsection.}
$(u_1,\ldots,u_{|\Omega|})$ on its elements.
A society $(G,\Omega)$ is \emph{rural} if $G$ can be drawn in a closed disk
$\Delta$ so that the elements of $\Omega$ are embedded along the boundary of $\Delta$, in an order coinciding with the ordering $(u_1,\ldots,u_{|\Omega|})$.

Let $(G,\Omega)$ be a society.
For a subgraph $H$ of $G$ we define $\partial H$ to be the set of all $v\in V(H)$ such that either $v\in \Omega$ or $v$ is incident with an edge in $E(G)\setminus E(H)$.

\begin{definition}\label{def:division}
Let $(G,\Omega)$ be a society. A \emph{division} of $(G,\Omega)$ is a set $\mathcal{F}$ of subgraphs of $G$, called \emph{flaps}, satisfying the following conditions.
\begin{enumerate}[label=(\arabic*), ref=(\arabic*)]
    \item $\bigcup_{F\in\mathcal{F}}F = G$;\label{div:cond1}
    \item for all distinct $F,F'\in \mathcal{F}$, $E(F)\cap E(F')=\emptyset$ and $V(F)\cap V(F')=\partial F \cap \partial F'$;\label{div:cond2}
    \item if $F,F'\in \mathcal{F}$ are distinct then $\partial F\neq \partial F'$;\label{div:cond3}
    \item for each $F\in\mathcal{F}$ and all $v,u\in \partial F$,
    there is an $(u,v)$-path in $F$ with no internal vertex in $\partial F$;\label{div:cond4}
    \item for each $F\in\mathcal{F}$, there is either a vertex or an edge of $F$ that belongs to no other $F'\in\mathcal{F}$; and\label{div:cond5}
    \item for each $F\in\mathcal{F}$, $|\partial F|\le 3$ and there are $|\partial F|$ pairwise vertex-disjoint paths in $G$ between $\partial F$ and~$\Omega$.\label{div:cond6}
\end{enumerate}
\end{definition}
Note that if $G$ does not contain isolated vertices, then we can assume that in a division $\mathcal{F}$ of $(G,\Omega)$, every flap $F\in\mathcal{F}$ has a distinct edge that belongs to no other $F'\in\mathcal{F}$ (see condition~\ref{div:cond5} of~\Cref{def:division}).

Let $(G,\Omega)$ be a society and let $\mathcal{F}$ be a division of $(G,\Omega)$.
We write $\partial \mathcal{F}$ to denote $\bigcup_{F \in \mathcal{F}} \partial F$.
Note that $\Omega\subseteq \partial \mathcal{F}$.
We use $G_{\mathcal{F}}$ to denote the bipartite graph with vertex set $\mathcal{F}\cup\partial \mathcal{F}$, in which $F\in\mathcal{F}$ is adjacent to $v\in\partial\mathcal{F}$ if $v\in\partial F$.
We say that $\mathcal{F}$ is \emph{rural} if $(G_{\mathcal{F}},\Omega)$ is rural.

\paragraph{Walls.}
Let $h$ be an even integer with $h\ge 2$.
An \emph{elementary wall of height $h$}
is the graph with the vertex set $\{0,\ldots,2h+1\}\times \{0,\ldots,h\}\setminus \{(0,0),(2h+1,h)\}$ and an edge between any vertices 
$(i,j)$ and $(i',j')$
if either
\begin{itemize}[nosep]
    \item $|i-i'|=1$ and $j=j'$; or
    \item $i=i'$, $|j-j'|=1$ and $i$ and $\max\{j,j'\}$ have the same parity; see~\Cref{fig:wall}.
\end{itemize}
The \emph{perimeter} of an elementary wall is the cycle consisting of all vertices $(i,j)$ with $i\in\{0,1,2h,2h+1\}$ or $j\in\{0,h\}$ (except $(0,0)$ and $(2h+1,h)$) and the edges between them.
Given an elementary wall $\overline{W}$, its \emph{corners} are the vertices $(1,0), (2h+1,0),(2h,h),(0,h)$.
Its \emph{central vertices} are the vertices $(h,h/2+1),(h+1,h/2+1)$. See~\Cref{fig:wall} for an illustration of an elementary wall, its corners, and its central vertices.
Given an odd $i\ge 1$, the \emph{$i$-th vertical path} of  $\overline{W}$ is the path of $\overline{W}$ with vertices $(i,j-1),(i,j),(i-1,j),(i-1,j+1)$ for every odd $j\ge 1$.
Given an $i\ge 2$, the \emph{$i$-th horizontal path} of $\overline{W}$ is the path of $\overline{W}$ with vertices $(0,i-1),\ldots,(2h+1,i-1)$.

Let $\overline{W}$ be an elementary wall of height $h$, for even $h\ge 2$.
Let $h'$ be even with $2\le h'\le h$ and let $i,j$ be even non-negative integers with $0\le i\le 2h-2h'$
and $0\le j\le h-h'$.
Also, let $\overline{W}'$ be the subgraph of $\overline{W}$ induced by the vertex set $\{(x,y): i\le x\le i+2h'+1, j\le y \le j+h'\}\setminus \{(i,j),(i+2h'+1,j+h')\}$.
Note that $\overline{W}'$ is an elementary wall of height $h'$ consisting of ``consecutive'' vertices of $\overline{W}$, with corners $(i+1,j),(i+2h'+1,j),(i,j+h'),(i+2h,j+h')$.
We call $\overline{W}'$ a \emph{subwall of $\overline{W}$ of height $h'$}. See~\Cref{fig:wall} for an example of a subwall.
Given an even $h'$ with $2\le h'\le h$,
we define the \emph{central $h'$-subwall} of $\overline{W}$ to be the subwall of $\overline{W}$ of height $h'$ that does not intersect the first and last $(h-h')/2$ horizontal and vertical paths of $\overline{W}$.

\begin{figure}[ht]
    \centering
    \begin{tikzpicture}[scale=0.6]

        \pgfmathtruncatemacro\h{6}
        \pgfmathtruncatemacro\hh{2*\h+1}

        \foreach \i in {0,...,\hh}{
            \foreach \j in {0,...,\h}{
            \pgfmathtruncatemacro\z{\i+\j}
            \pgfmathtruncatemacro\y{\i+\j}
            \pgfmathtruncatemacro\o{\hh+\h}
    
            \ifnum \z>0
                \ifnum \y<\o
                \node[node] () at (\i,\j) {};
            \fi
            \fi}}

        \begin{scope}[on background layer]
    
        \foreach \i/\j in {1/0,\hh/0,\hh-1/\h,0/\h}{
            \node[corner] () at (\i,\j){};}
    
        \node[central] at (\h,\h/2){};
        \node[central] at (\h+1,\h/2){};
        
        \pgfmathtruncatemacro\hhminus{\hh-1}
        \foreach \i in {0,...,\hhminus}{
            \foreach \j in {0,...,\h}{
            \pgfmathtruncatemacro\z{\i+\j}
            \pgfmathtruncatemacro\y{\i+\j}
            \pgfmathtruncatemacro\o{\h+\hhminus}
            \ifnum \z>0
                \ifnum \y<\o
                \draw[edge] (\i,\j) -- (\i+1,\j);
            \fi
            \fi}}
    
            \pgfmathtruncatemacro\hminus{\h-1}
           \foreach \i in {0,...,\hh}{
            \foreach \j in {0,...,\hminus}{
            \pgfmathtruncatemacro\z{mod(\i,2)}
            \pgfmathtruncatemacro\y{mod(\j,2)}
            \ifnum \z=1
                \ifnum \y =0
                    \draw[edge] (\i,\j) -- (\i,\j+1);
                \fi
            \else
                \ifnum \y=1
                    \draw[edge] (\i,\j) -- (\i,\j+1);
                \fi
            \fi
         }} 
         \end{scope}
    
         \begin{scope}
             \pgfmathtruncatemacro\h{4}
        \pgfmathtruncatemacro\hh{2*\h+1}
    
        \foreach \i in {0,...,\hh}{
            \foreach \j in {0,...,\h}{
            \pgfmathtruncatemacro\z{\i+\j}
            \pgfmathtruncatemacro\y{\i+\j}
            \pgfmathtruncatemacro\o{\hh+\h}
    
            \ifnum \z>0
                \ifnum \y<\o
                \node[nodesubwall] () at (\i,\j) {};
            \fi
            \fi}}
    
        \begin{scope}[on background layer]
        \pgfmathtruncatemacro\hhminus{\hh-1}
        \foreach \i in {0,...,\hhminus}{
            \foreach \j in {0,...,\h}{
            \pgfmathtruncatemacro\z{\i+\j}
            \pgfmathtruncatemacro\y{\i+\j}
            \pgfmathtruncatemacro\o{\h+\hhminus}
            \ifnum \z>0
                \ifnum \y<\o
                \draw[edgesubwall] (\i,\j) -- (\i+1,\j);
            \fi
            \fi}}
    
            \pgfmathtruncatemacro\hminus{\h-1}
           \foreach \i in {0,...,\hh}{
            \foreach \j in {0,...,\hminus}{
            \pgfmathtruncatemacro\z{mod(\i,2)}
            \pgfmathtruncatemacro\y{mod(\j,2)}
            \ifnum \z=1
                \ifnum \y =0
                    \draw[edgesubwall] (\i,\j) -- (\i,\j+1);
                \fi
            \else
                \ifnum \y=1
                    \draw[edgesubwall] (\i,\j) -- (\i,\j+1);
                \fi
            \fi
         }} 
         \end{scope}
         \end{scope}
    
         \end{tikzpicture}
         \caption{An elementary wall $\overline{W}$ of height six and a subwall of $\overline{W}$ of height four. The corners and the central vertices of $\overline{W}$ are depicted as squared vertices.}
         \label{fig:wall}
\end{figure}
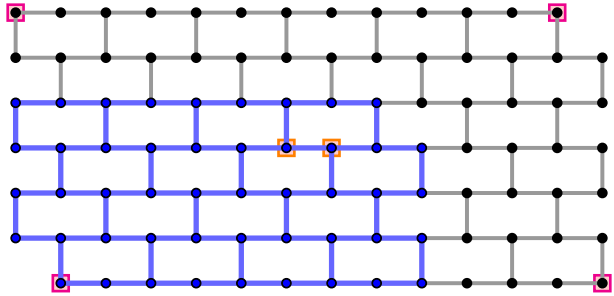

A \emph{wall of height $h$} (for even $h\geq 2$) is a subdivision $W$ of an elementary wall $\overline{W}$ of height $h$.
We always assume that $W$ is given with a choice of an (ordered) set of corners $\Omega_W$, 
that correspond to the corners of the elementary wall $\overline{W}$, ordered in the clockwise ordering $(0,h),(2h,h),(2h+1,0),(1,0)$. 
Given a wall $W$ that is a subdivision of an elementary wall $\overline{W}$,
its \emph{perimeter}, denoted by $\mathsf{Per}(W)$, is the cycle of $W$ whose non-subdivision vertices are the vertices of the perimeter of $\overline{W}$.
The \emph{central vertices} of $W$ are the central vertices of $\overline{W}$.
Given a wall $W$ that is a subdivision of an elementary wall $\overline{W}$,
a \emph{subwall $W'$ of $W$ of height $h'$}
is a subdivision of a subwall $\overline{W}'$ of $\overline{W}$ of height $h'$.
The notions of \emph{horizontal/vertical paths} and \emph{central subwalls} of a wall are defined analogously.
Given a graph $G$, a \emph{wall in $G$} is a subgraph of $G$ which is isomorphic to a wall.
We restate here the \emph{Grid Minor Theorem} (\Cref{thm:grid-minor}) stated for walls instead of grids.

\begin{theorem}\label{prop:grid-thm}
    There exists a computable function $g\colon \N\to \N$ such that for every even $h\ge 2$, every graph of treewidth at least $g(h)$ contains a wall of height $h$ as a subgraph.
\end{theorem}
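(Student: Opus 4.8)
The plan is to derive \Cref{prop:grid-thm} directly from the Grid Minor Theorem (\Cref{thm:grid-minor}), together with two elementary observations: that a large elementary wall sits inside a large grid as a subgraph, and that subcubic minors are automatically topological minors.

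First I would fix the computable function $g'$ provided by \Cref{thm:grid-minor}, so that every graph of treewidth at least $g'(k)$ contains the $k\times k$ grid as a minor. Next I would observe that for every even $h\ge 2$, the elementary wall $\overline{W}$ of height $h$ is isomorphic to a subgraph of the $(2h+2)\times(2h+2)$ grid: by definition $\overline{W}$ lives on the vertex set $\{0,\ldots,2h+1\}\times\{0,\ldots,h\}$ with two corners removed, its horizontal edges are precisely horizontal grid edges, and its vertical edges form a subset of the vertical grid edges; since $h+1\le 2h+2$, this is a subgraph of the $(2h+2)\times(2h+2)$ grid. As the subgraph relation implies the minor relation, the $(2h+2)\times(2h+2)$ grid contains $\overline{W}$ as a minor, and hence, by transitivity of the minor order, if $\tw(G)\ge g'(2h+2)$ then $G$ contains $\overline{W}$ as a minor.

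It then remains to upgrade ``minor'' to ``subgraph isomorphic to a subdivision''. Here I would invoke the standard fact that a graph of maximum degree at most $3$ is a minor of $G$ if and only if it is a topological minor of $G$: given a minor model one replaces each branch set by a spanning tree and, using that every vertex of $\overline{W}$ has degree at most $3$, trims each such tree down to a subtree with at most three leaves/branch-vertices, so that the trimmed trees together with the connecting edges realize a subdivision of $\overline{W}$ as a subgraph of $G$. Since $\overline{W}$ is subcubic, $G$ therefore contains a subdivision of $\overline{W}$ as a subgraph; such a subgraph, equipped with the corners inherited from $\overline{W}$, is by definition a wall of height $h$ in $G$. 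Setting $g(h)\coloneqq g'(2h+2)$, which is computable because $g'$ is, completes the argument.

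The main (and essentially only) obstacle is the subcubic-minor-versus-topological-minor step; everything else is definitional bookkeeping about the coordinates of elementary walls and grids. If one preferred to bypass even that, an alternative would be to take a slightly larger grid minor and route the horizontal and vertical paths of the wall explicitly along grid rows and columns inside the grid model, but the degree-$3$ argument is cleaner and is the route I would take.
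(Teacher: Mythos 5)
The paper does not actually supply a proof of \Cref{prop:grid-thm}; it introduces it as a ``restatement'' of the Grid Minor Theorem for walls, treating the conversion from grid minor to wall subgraph as a standard fact. Your argument is exactly the standard derivation that the paper is implicitly invoking, and it is correct: the elementary wall of height $h$ has maximum degree at most $3$ (the parity condition ensures at most one vertical edge per vertex), so it embeds as a subgraph of the $(2h+2)\times(2h+2)$ grid, hence appears as a minor of any graph of treewidth at least $g'(2h+2)$; the subcubic-minor-to-topological-minor upgrade then yields a subdivision of the elementary wall as a subgraph, which is by definition a wall of height $h$. Setting $g(h)=g'(2h+2)$ preserves computability. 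Your write-up is sound and matches the intended reasoning.
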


\paragraph{Compass and society of a wall.}
Let $W$ be a wall in a graph $G$.
Let $Y$ be the union of $V(\mathsf{Per}(W))$ and the vertex set of the unique connected component of $G\setminus V(\mathsf{Per}(W))$ that contains a vertex of $W$; uniqueness follows from the fact that $W\setminus V(\mathsf{Per}(W))$ is connected and non-empty.
We define the \emph{compass} of $W$ in $G$, which we denote by $\mathsf{Compass}_G(W)$, to be the graph $G[Y]$.
It follows that $W$ is a subgraph of its compass and its compass is connected.
The \emph{society} of $W$ in $G$ is the society $(\mathsf{Compass}_G(W),\Omega_W)$, where $\Omega_W$ is the (ordered) set of corners of $W$.

\paragraph{Flat walls.}
Let $G$ be a graph and let $W$ be a wall in $G$.
We say that $W$ is \emph{flat} in $G$ if the society of $W$ in $G$ admits a rural division.
The existence of a flat wall in a minor-free graph of large treewidth is guaranteed by the \emph{Flat Wall Theorem}, which was proved in~\cite[(9.8)]{GM13}; see also~\cite{KawarabayashiTW18,GiannopoulouT13,Chuzhoy15}. We use here the version with multiple flat walls of Robertson and Seymour~\cite[(9.7)]{GM13}.
Computability of the constants in the results from~\cite{GM13}, including (9.7) and the minor-testing algorithm, is argued at the end of that paper. See also~\cite[Appendix A]{LokshtanovPPS22a} for a discussion of explicit bounds for~\cite[(3.5)]{RobertsonS88}, which is a major non-constructive argument appearing in the Graph Minors series, mentioned at the end of~\cite{GM13}.

\begin{theorem}[{\cite[(9.7)]{GM13}}]\label{prop:many-flat-walls}
    For all $t,q\ge 0$ and even $h\ge 2$, there is a constant $h'$, computable from $t,q$ and $h$, such that the following holds. Let $G$ be a graph and let $W$ be a wall of height $h'$ in $G$. Then either $G$ contains $K_t$ as a minor or
    there is a vertex subset $A\subseteq V(G)$ with $|A|<\binom{t}{2}$ and a collection $\mathcal{W}$ of $q$ subwalls of $W$ of height $h$,
    whose compasses are pairwise vertex-disjoint and disjoint from $A$, so that each subwall in $\mathcal{W}$ is flat in $G\setminus A$.
\end{theorem}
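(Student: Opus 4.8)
The statement is $(9.7)$ of Graph Minors~XIII~\cite{GM13}, which we use as a black box, so we do not attempt a self-contained proof; instead we recall how the multi-wall version follows from the single-wall \emph{Flat Wall Theorem}~\cite[(9.8)]{GM13}, which we treat as known: there is a computable function $h_1$ such that every wall of height $h_1(t,\ell)$ in a $K_t$-minor-free graph $G$ contains a subwall of height $\ell$ that is flat in $G-A$ for some $A\subseteq V(G)$ with $|A|<\binom{t}{2}$. The plan is to first manufacture one very large flat wall and then carve $q$ pairwise-disjoint flat subwalls of height $h$ out of it using its planar-like structure.

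First I would fix a height $\ell=\ell(q,h)$ large enough to contain a ``$1\times q$ array'' of height-$h$ subwalls that are separated from each other, and from the perimeter, by moats of width at least one brick (so $\ell$ on the order of $q(h+2)$ suffices, with slack absorbing the parity constraints on subwall placement), and set $h'\coloneqq h_1(t,\ell)$. Applying~\cite[(9.8)]{GM13} to the given wall $W$ of height $h'$, either $G$ has a $K_t$-minor --- and we are done --- or we obtain a set $A$ with $|A|<\binom{t}{2}$ and a subwall $W^\star$ of $W$ of height $\ell$ that is flat in $G-A$, i.e.\ $\mathsf{Compass}_{G-A}(W^\star)$ admits a rural division. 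Inside $W^\star$ I would then select, using the fixed layout, $q$ pairwise-disjoint subwalls $W_1,\dots,W_q$ of height $h$ (which are then also subwalls of $W$), each lying strictly in the interior of $W^\star$ and enclosed by a bounding cycle $C_i\subseteq W^\star$ whose bounded disk, in the drawing witnessing the rural division, is disjoint from the bounded disk of $C_j$ for $j\neq i$. The collection $\mathcal{W}=\{W_1,\dots,W_q\}$ is the proposed output.

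It then remains to check that each $W_i$ is flat in $G-A$ and that the compasses $\mathsf{Compass}_{G-A}(W_i)$ are pairwise disjoint; they are automatically disjoint from $A$, being subgraphs of $G-A$. The point is that since $\mathsf{Per}(W_i)$ lies in the interior of $W^\star$, the component of $(G-A)\setminus V(\mathsf{Per}(W_i))$ that meets $W_i$ cannot escape $\mathsf{Compass}_{G-A}(W^\star)$, hence $\mathsf{Compass}_{G-A}(W_i)\subseteq \mathsf{Compass}_{G-A}(W^\star)$; restricting the rural division of $\mathsf{Compass}_{G-A}(W^\star)$ to the flaps drawn inside $C_i$ then yields a rural division of $\mathsf{Compass}_{G-A}(W_i)$, and the disjointness of the disks bounded by the $C_i$ forces the compasses to be pairwise disjoint. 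I expect the main obstacle to be precisely this localization step: a flap of a rural division need not be planar and can attach to the wall at up to three vertices, so one must argue carefully that no flap provides a ``shortcut'' from the interior of some $W_i$ past $C_i$ to $\mathsf{Per}(W^\star)$ or to another $W_j$ --- this is the genuine content of~\cite[(9.7)]{GM13}, which we invoke directly rather than reprove. Finally, all constants produced along the way ($h_1$, the layout bound $\ell$, and hence $h'$) are computable, consistent with the computability claim; cf.\ the discussion of computable bounds at the end of~\cite{GM13} and~\cite[Appendix A]{LokshtanovPPS22a}.
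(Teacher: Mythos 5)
Both you and the paper treat Robertson--Seymour's (9.7) from Graph Minors~XIII as a black box: the paper states the theorem with only the citation, pointing to the end of \cite{GM13} and to \cite[Appendix~A]{LokshtanovPPS22a} for computability of the constants, and you ultimately invoke (9.7) directly as well. Your sketch of how the multi-wall statement could be derived from the single-wall (9.8) is a reasonable heuristic, but as you yourself flag, the localization step (ruling out flap shortcuts across $\mathsf{Per}(W_i)$ so that the compasses stay confined and disjoint) is exactly the nontrivial content, which you end up citing rather than proving — so your bottom line matches the paper's, which offers no proof either.
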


\subsubsection{Homogeneous subwalls and irrelevant vertices}
In this subsection, we provide additional definitions needed in order to extend rural divisions to include a set $X$ of fixed size, as well as a way to define an ordering on the boundary of each flap.
These additional technical tools are used in order to define homogeneous subwalls. These definitions are taken from Section~10 of~\cite{GM13} but the way that they are presented here differs slightly from the ones of~\cite{GM13}.
The reason behind this reformulation is to make a clear connection on how to define all such notions in logic, which is done in the next subsection.

\paragraph{Extension schemes.}
Let $(G,\Omega)$ be a society, let $X\subseteq V(G)\setminus \Omega$, and let $\mathcal{F}$ be a rural division of $(G\setminus X,\Omega)$.
Before defining the notion of an \textsl{attachment scheme}, we define \textsl{$X$-extension schemes} and \textsl{boundary ordering schemes}.

Let $F\in\mathcal{F}$.
An \emph{edge interface} of $F$,
denoted by $E_{F}$,
is a subset of edges of $G$ with one endpoint in $X$ and one endpoint in $\partial F$.
An \emph{$X$-extension scheme} of $\mathcal{F}$ is a choice of $E_F$ for every $F\in\mathcal{F}$,
such that $\{E_F\cup E(F):F\in\mathcal{F}\}$ is a partition\footnote{We note that in \cite{GM13}, the analogous definition is stated for partitions of the whole set $E(G)$. In order to fit the setting of~\cite{GM13}, one could extended the current definition to partitions of $E(G)$ as follows: assume the existence of a flap $F\in\mathcal{F}$ with $|\partial F|=1$ and $\partial F\subseteq \Omega$ and demand
that every edge with both endpoints in $X$ belongs to $E_F$; if such $F$ does not exist, we add it to $\mathcal{F}$ and observe that it preserves the property of being a (rural) division.}
of $E(G)\setminus \binom{X}{2}$.
Given an $X$-extension scheme $\xi$ of $\mathcal{F}$ and a flap $F\in\mathcal{F}$, we use $\xi(F)$ to denote the graph $(V(F)\cup X,E(F)\cup E_F)$.

\paragraph{Boundary ordering schemes.}
For each flap $F\in\mathcal{F}$ with  $|\partial F|= 3$ we need to define a particular type of orderings of $\partial F$, which is described in the next paragraph.

Let $F\in\mathcal{F}$ with $|\partial F|= 3$.
Also let $\mathcal{P}$ be a collection of three pairwise vertex-disjoint paths of $G\setminus X$ from $\partial F$ to $\Omega$ and observe that such collection $\mathcal{P}$ exists by condition~\ref{div:cond6} of~\Cref{def:division}.
Recall that $\Omega$ is an ordered set.
For an ordering $\tau$ of a subset of $\Omega$, we say that it is \emph{aligned} with $\Omega$ if $\tau$ coincides with the order its elements appear in the cyclic ordering induced\footnote{Note that every ordering $(a_1,\ldots,a_m)$ of a set induces a unique cyclic ordering of the same set, where $a_1$ comes after $a_m$.} by $\Omega$.
We use $T_\Omega(\mathcal{P})$ be the set of the endpoints of the paths in $\mathcal{P}$ in $\Omega$.
Given an ordering $(s_1,s_2,s_3)$ of $\partial F$, the \emph{ordering of $T_\Omega(\mathcal{P})$ given by $(s_1,s_2,s_3)$} is
the ordering $(t_1,t_2,t_3)$ of $T_\Omega(\mathcal{P})$ where for each $i\in\{1,2,3\}$, $s_i$ and $t_i$ are connected via a path in $\mathcal{P}$. 
We say that an ordering $\lambda$ of $\partial F$ is \emph{$\mathcal{P}$-compatible} with $\Omega$ if the ordering of $T_\Omega(\mathcal{P})$ given by $\lambda$ is aligned with~$\Omega$.
We say that an ordering $\lambda$ of $\partial F$ is \emph{compatible} with $\Omega$, if it is $\mathcal{P}$-compatible with $\Omega$ for every 
collection~$\mathcal{P}$ of three pairwise vertex-disjoint paths of $G\setminus X$ from $\partial F$ to $\Omega$.

We define a \emph{boundary ordering} of a flap $F\in\mathcal{F}$, denoted by $\lambda_F$, to be an ordering $(v_1,\ldots,v_{|\partial F|})$ of $\partial F$ which in the case $|\partial F|=3$ is also compatible with $\Omega$.
A \emph{boundary ordering scheme} of $\mathcal{F}$
is a choice of $\lambda_F$ for each $F\in\mathcal{F}$.
We stress here that for every $F\in\mathcal{F}$,
in order to fix a boundary ordering of $F$, it suffices to fix the first element $v_1$.
This is proved in the following lemma.

\begin{lemma}\label{lem:boundary-ordering}
    Let $(G,\Omega)$ be a society, let $X\subseteq V(G)\setminus \Omega$, and let $\mathcal{F}$ be a rural division of $(G\setminus X,\Omega)$.
    Also, let $F\in\mathcal{F}$ such that $|\partial F|=3$.
    For every $v\in \partial F$, there exists a unique ordering $(s_1,s_2,s_3)$ of $\partial F$ that is compatible with $\Omega$ such that $s_1=v$.
\end{lemma}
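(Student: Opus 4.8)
The plan is to reduce the whole statement to one planar fact about the rural drawing of $G_{\mathcal{F}}$, and then read off existence and uniqueness from the trivial observation that a three‑element set carries exactly two cyclic orders.

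First I would fix, using that $\mathcal{F}$ is rural, a drawing of $G_{\mathcal{F}}$ in a closed disk $\Delta$ with the vertices of $\Omega$ on $\partial\Delta$ in the cyclic order induced by the ordering of $\Omega$; after a harmless perturbation we may assume the point representing $F$ lies in the interior of $\Delta$. Write $\partial F=\{a,b,c\}$; since $|\partial F|=3$, the node $F$ of $G_{\mathcal{F}}$ has exactly the three neighbours $a,b,c$, joined to it by edges $e_a,e_b,e_c$, and I let $\sigma$ be the cyclic order in which $e_a,e_b,e_c$ leave the point representing $F$ in this drawing, viewed as a cyclic order of $\{a,b,c\}$. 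The goal is the claim: an ordering $(s_1,s_2,s_3)$ of $\partial F$ is compatible with $\Omega$ if and only if its cyclic order is $\sigma$. Granting the claim, the lemma follows immediately: by condition~\ref{div:cond6} of~\Cref{def:division} (with $|\partial F|=3$) there is at least one triple $\mathcal{P}$ of pairwise vertex‑disjoint $(\partial F,\Omega)$‑paths in $G\setminus X$, so ``compatible'' is a non‑vacuous requirement and at least one compatible ordering exists; and among the exactly two orderings of $\partial F$ whose first element equals a prescribed vertex $v$, the two realise the two distinct cyclic orders of $\{a,b,c\}$, so exactly one of them equals $\sigma$.

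The core step is proving the claim, and here I would project a linkage in $G\setminus X$ into $G_{\mathcal{F}}$. Given any triple $\mathcal{P}=\{P_a,P_b,P_c\}$ as above, with $P_x$ running from $x$ to $t_x\in\Omega$, I would traverse each $P_x$ and replace every maximal subpath all of whose edges lie in a single flap $F'$ by the length‑two walk through the node $F'$ (from the subpath's first endpoint to its last endpoint); conditions~\ref{div:cond2} and~\ref{div:cond4} ensure that both endpoints of such a subpath lie in $\partial F'$ (no endpoint of $P_x$ can lie in $V(F')\setminus\partial F'$ for a flap $F'$ it uses, because $t_x\in\Omega$ and $a,b,c\in\partial F$, and a vertex of $\partial F$ has an edge inside $F$, which is disjoint from $E(F')$). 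This yields walks $Q_a,Q_b,Q_c$ of $G_{\mathcal{F}}$ from $a,b,c$ to $t_a,t_b,t_c$. The bookkeeping I expect to be the main obstacle is checking that these walks are pairwise disjoint and avoid the node $F$: every flap $F'$ has $|\partial F'|\le 3$ by condition~\ref{div:cond6}, so any $P_x$ using an edge of $F'$ meets $\partial F'$ in at least two vertices; since the $P_x$ are vertex‑disjoint this forces each flap to be used by at most one $P_x$, and $F$ itself by none (a $P_x$ can contain at most one vertex of $\{a,b,c\}$). Hence the flap‑nodes (resp.\ vertex‑nodes) used by the $Q_x$ are pairwise disjoint and none of them uses $F$, and deleting cycles turns $Q_a,Q_b,Q_c$ into pairwise disjoint paths of $G_{\mathcal{F}}-F$.

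Finally I would argue in the drawing. Since it has no crossings, $Q_x$ (as a drawn curve) is disjoint from $e_y$ for $y\ne x$ and meets $e_x$ only at $x$, so $R_x\coloneqq e_x\cup Q_x$ is an arc from the point representing $F$ to $t_x$, and $R_a,R_b,R_c$ pairwise meet only at that point; thus $R_a\cup R_b\cup R_c$ is a tripod inside $\Delta$ with centre in the interior and feet $t_a,t_b,t_c$ on $\partial\Delta$. For such a tripod the cyclic order of the feet along $\partial\Delta$ coincides with the cyclic order of the legs around the centre (a standard consequence of the Jordan curve theorem), and the latter equals $\sigma$ by the choice of the $e_x$. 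Therefore, transported along the bijection $x\mapsto t_x$, the cyclic order that $\Omega$ induces on $\{t_a,t_b,t_c\}$ is exactly $\sigma$, for every choice of $\mathcal{P}$. Unwinding the definitions: the ordering of $T_\Omega(\mathcal{P})$ given by $(s_1,s_2,s_3)$ is aligned with $\Omega$ precisely when the cyclic order of $(t_{s_1},t_{s_2},t_{s_3})$ matches the one $\Omega$ induces on its support, equivalently --- pulling back along $x\mapsto t_x$ --- when the cyclic order of $(s_1,s_2,s_3)$ equals $\sigma$; as this last condition does not involve $\mathcal{P}$, it holds for one triple $\mathcal{P}$ iff for all, which is exactly the claim and hence the lemma.
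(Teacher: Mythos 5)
Your proof is correct and takes essentially the same route as the paper: both project the triple of disjoint $(\partial F,\Omega)$-paths into $G_{\mathcal{F}}$ (using $|\partial F'|\le 3$ to get disjointness of the projected walks and the impossibility of any path using $F$ itself), then read off the cyclic orders from the planarity of the rural drawing via a Jordan-curve argument. Your organization is slightly cleaner --- you fix the rotation $\sigma$ at $F$ once and characterize compatibility as ``cyclic order equals $\sigma$'', from which existence and uniqueness are immediate, whereas the paper instead proves the transitivity statement ``$\mathcal{P}$-compatible implies $\mathcal{P}'$-compatible'' (their Claim) and then derives uniqueness by comparing two compatible orderings against the same $\mathcal{P}$; these are cosmetically different packagings of the same tripod argument.
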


\begin{proof}
    Let $v\in \partial F$.
    The existence of an ordering $\lambda=(s_1,s_2,s_3)$ of $\partial F$ that is $\mathcal{P}$-compatible with $\Omega$ and $s_1=v$, for some collection $\mathcal{P}$ of three pairwise vertex-disjoint paths in $G\setminus X$ from $\partial F$ to $\Omega$, follows from condition~\ref{div:cond2} of~\Cref{def:division}.
    We next show that $\lambda$ is $\mathcal{P}$-compatible with $\Omega$ for every such collection $\mathcal{P}$.
    
    \begin{claim}\label{claim:compati}
        Let $\mathcal{P}$, $\mathcal{P}'$ be collections of paths in $G\setminus X$, each containing three pairwise vertex-disjoint paths from $\partial F$ to $\Omega$.
        If $\lambda$ is an ordering of $\partial F$ that is $\mathcal{P}$-compatible with $\Omega$, then it is also $\mathcal{P}'$-compatible with $\Omega$.
    \end{claim}
    \begin{claimproof}
    We set $\lambda=(s_1,s_2,s_3)$.
    Let also $(t_1,t_2,t_3)$ and $(t_1',t_2',t_3')$ be
    the orderings of $T_\Omega(\mathcal{P})$ and $T_\Omega(\mathcal{P}')$ given by $\lambda$, respectively. For every $i\in\{1,2,3\}$, let $P_i'$ be the path in $\mathcal{P}$ with endpoints $s_i,t_i'$.

    Since $\mathcal{F}$ is rural,
    the graph $G_{\mathcal{F}}$ (recall that this is the bipartite graph with vertex set $\mathcal{F}\cup\partial \mathcal{F}$, in which $F\in\mathcal{F}$ is adjacent to $v\in\partial\mathcal{F}$ if $v\in\partial F$) can be drawn in a closed disk $\Delta$ so that the elements of $\Omega$ are embedded along the boundary of $\Delta$ according to the cyclic ordering induced by $\Omega$.
    We fix such an embedding $\gamma$ of $G_{\mathcal{F}}$ in $\Delta$.
    Consider a circle $C$ in $\Delta$ (i.e., a subset of $\Delta$ that is homeomorphic to $\{(x,y)\in\mathbb{R}^2\mid x^2+y^2=1\}$)
    that intersects $\gamma$ only in $\partial F$.
    Assume without loss of generality that the cyclic ordering induced by $\Omega$ is counter-clockwise in the boundary of $\Delta$ in the embedding $\gamma$.
    Then, the fact that $\lambda$ is $\mathcal{P}$-compatible with $\Omega$ implies that $\lambda$ induces a counter-clockwise cyclic ordering of $\partial F$ on $C$.

    We show that $\lambda$ is $\mathcal{P}'$-compatible with $\Omega$ by arguing that
    $(t_1',t_2',t_3')$ induces a counter-clockwise cyclic ordering of $T_\Omega(\mathcal{P}')$ on the boundary of $\Delta$.
    In our argument, we use ``projections'' of the paths of $\mathcal{P}$ in~$G_{\mathcal{F}}$.
    More formally,
    for each $i\in\{1,2,3\}$,
    we use $\widehat{P}_i'$ to denote 
    the path of $G_{\mathcal{F}}$ obtained from $P_i'$ after performing the following replacement for each $F'\in\mathcal{F}$ that contains an edge of $P_i'$: replace the subpath $H$ of $P_i
    $ in $F'$ whose endpoints, say $x,y$, are in $\partial F'$ with the corresponding path in $G_{\mathcal{F}}$ with vertices $x,F',y$.
    Observe that the fact that for each flap $F'\in\mathcal{F}$, $|\partial F'|\leq 3$, implies that $\widehat{P}_1',\widehat{P}_2',\widehat{P}_3'$ are pairwise vertex-disjoint~in~$G_{\mathcal{F}}$.

    To conclude the proof of the claim, it suffices to prove that for every distinct $i,j,k\in\{1,2,3\}$ such that $s_i,s_j,s_k$ appear in this order when traversing $C$ counter-clockwise, it holds that $t_i',t_j',t_k'$ appear in this order when traversing the boundary of $\Delta$ counter-clockwise.
    To show this, let $I_{i,j}$ be the closed arc in the boundary of $\Delta$ with endpoints $t_i'$ and $t_j'$ and note that the union (of the images via $\gamma$) of the path with vertices $s_iFs_j$ and the paths $\widehat{P}_i',\widehat{P}_j'$ and the arc $I_{i,j}$ bounds a subset $\Delta'$ of $\Delta$ that is a closed disk.
    Since $\widehat{P}_1',\widehat{P}_2',\widehat{P}_3'$ are pairwise vertex-disjoint, $\Delta'$ does not contain (the image via $\gamma$ of) $t_k'$.
    Therefore $t_i',t_j',t_k'$ appear in this order when traversing the boundary of $\Delta$ counter-clockwise. This implies that  $(t_1',t_2',t_3')$ induces a counter-clockwise cyclic ordering of $T_\Omega(\mathcal{P}')$ on the boundary of $\Delta$, which in turn implies that $\lambda$ is $\mathcal{P}'$-compatible with $\Omega$.
    \end{claimproof}

    By~\Cref{claim:compati}, we derive that $\lambda$ is compatible with $\Omega$. To prove uniqueness, consider a distinct ordering $\lambda'=(s_1',s_2',s_3')$ that is compatible with $\Omega$ and $s_1'=v$.
    Also pick a collection of paths $\mathcal{P}$ as above.
    Since both $\lambda$ and $\lambda'$ are $\mathcal{P}$-compatible with $\Omega$, the ordering of $T_{\Omega}(\mathcal{P})$ given by $\lambda$ is the same as the one given by~$\lambda'$, implying that $s_i=s_i'$ for each $i\in\{1,2,3\}$.
\end{proof}

\paragraph{Attachment schemes.}
We now define \emph{attachment schemes} as triples consisting of an extension scheme, a boundary ordering scheme, and an ordering of $X$.

\begin{definition}
    Let $(G,\Omega)$ be a society, let $X\subseteq V(G)\setminus V(\Omega)$, and let $\mathcal{F}$ be a rural division of $(G\setminus X,\Omega)$.
An \emph{attachment scheme $\sigma$ of $(\mathcal{F},X)$} is a triple $(\xi,\Lambda,\rho_X)$, where 
$\xi$ is an $X$-extension scheme of $\mathcal{F}$,
$\Lambda$ is a boundary ordering scheme of $\mathcal{F}$,
and $\rho_X$ is an ordering of $X$.
\end{definition}

Let us comment on the following.
The notion of \emph{attachment schemes} roughly corresponds to the notion of \emph{visions} defined in~\cite{GM13}. However, in~\cite{GM13}, a vision also encodes an ``anchor'' vertex for each flap (which intuitively corresponds to a vertex of the wall that is the closest to the flap).
In our work, we do not fix such ``anchor vertices''. Instead, we existentially quantify one such vertex, using the notion of a flap being \emph{captured by a subwall} defined in the next paragraph. Note that this is also the situation in the statement in~\cite[(8.3)]{DBLP:journals/jct/RobertsonS12} which implies~\Cref{prop:existence-irrelevant} given below.

Given an attachment scheme $\sigma=(\xi,\Lambda,\rho_X)$,
for each $F\in\mathcal{F}$, we use $\omega(F)$ to denote the ordering of $\partial F\cup X$, where the vertices of $\partial F$ appear first, in the order given by $\lambda_F$ and are followed by the vertices of $X$, ordered by $\rho_X$.
We use $\mathbf{F}_\sigma$ to denote the graph $\xi(F)$ where the set $\partial F\cup X$ is ordered via $\omega(F)$.

Let $F,F'\in\mathcal{F}$. Also, let $\delta\ge 0$.
We say that  $\mathbf{F}_\sigma$ and $\mathbf{F}_\sigma'$ 
\emph{have the same $\delta$-folio} if
there is a bijection $\rho$ from the $(X\cup \partial F,\delta)$-folio of $\xi(F)$ to the $(X\cup \partial F',\delta)$-folio of $\xi(F')$ mapping every $(X\cup \partial F)$-rooted minor
of $\xi(F')$ of detail at most $\delta$ to an \textsl{isomorphic} $(X\cup \partial F')$-rooted minor of $\xi(F')$ of detail at most $\delta$ via an isomorphism that respects the orderings $\omega(F)$ and $\omega(F')$.

\paragraph{Homogeneity.}
Let $G$ be a graph and let $W$ be a wall in $G$.
Let $\mathcal{F}$ be a rural division of the society of $W$ in $G$. We say that a flap $F\in\mathcal{F}$ is \emph{$W$-internal} if $\partial F\cap V(\mathsf{Per}(W))=\emptyset$.
Let $W'$ be a subwall of $W$.
We say that a flap $F\in\mathcal{F}$ is \emph{captured} by $W'$ if there is a vertex $v\in V(W')$ and a path in $\mathsf{Compass}_G(W)$ from $\partial F$ to $v$ with no vertex in $V(W)$ except $v$.
Note that the fact that $\mathsf{Compass}_G(W)$ is connected
implies that for every flap $F\in\mathcal{F}$ there is a vertex $v\in V(W)$ and a path in $\mathsf{Compass}_G(W)$ from $\partial F$ to $v$ with no vertex in $V(W)$ except $v$.
The following definition of homogeneity corresponds to the one of~\cite{GM13}.

\begin{definition}\label{def:homogeneity}
Let $G$ be a graph, let $X\subseteq V(G)$, let $W$ be a wall in $G\setminus X$,
and let $\mathcal{F}$ be a rural division of the society of $W$ in $G\setminus X$.
Also, let $\sigma$ be an attachment scheme of $(\mathcal{F},X)$.
Given a $\delta\ge 0$ and an even $h\ge 2$,
we say that a subwall $W'$ of $W$ is \emph{$(\delta,h)$-homogeneous
with respect to $(\mathcal{F},\sigma)$} if the following holds:
for every $W$-internal flap $F\in \mathcal{F}$ that is captured by $W'$ and for every subwall $W''$ of $W'$ of height $h$,
there is a $W$-internal flap $F'\in\mathcal{F}$ that is captured by $W''$ such that $\mathbf{F}_\sigma$ and $\mathbf{F}_\sigma'$ have the same $\delta$-folio.
\end{definition}

The next result indicates that a central vertex of a sufficiently large homogeneous wall is \emph{irrelevant}. 


\begin{theorem}[{\cite[(10.2)]{GM13}; see also \cite[(8.3)]{DBLP:journals/jct/RobertsonS12}}]\label{prop:existence-irrelevant}
    There is a computable function $f:\mathbb{N}^3\to\mathbb{N}$, whose images are even and $f(x,y,z)\ge z$,
    such that for every $\ell,\delta\ge 0$ and even $h\ge 2$,
    the following holds.
    Let $G$ be a graph, let $X$ be a set of vertices of $G$ of size $\ell$, let $W$ be a wall in $G\setminus X$, let $\mathcal{F}$ be a rural division of the society of $W$ in $G\setminus X$, and let $\sigma$ be an attachment scheme of $(\mathcal{F},X)$.
    Also, let $W'$ be a subwall of $W$ that is $(\delta,h)$-homogeneous with respect to $(\mathcal{F},\sigma)$ and of height $f(\ell,\delta,h)$.
    If $v$ is a central vertex of $W'$, then
    the $(X,\delta)$-folio of $G\setminus v$ is the same as the $(X,\delta)$-folio of $G$.
\end{theorem}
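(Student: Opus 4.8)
The plan is to argue by contradiction. One inclusion is trivial: every $X$-rooted minor of $G\setminus v$ of detail at most $\delta$ is also one of $G$. So suppose the other inclusion fails, i.e.\ there is an $X$-rooted graph $(H,\pi)$ of detail at most $\delta$ that is an $X$-rooted minor of $G$ but not of $G\setminus v$; then every minor model of $(H,\pi)$ in $G$ uses $v$. First I would replace an arbitrary such model by a combinatorially simple one. Fix a model $\model$ of $(H,\pi)$ minimizing the total number of vertices plus edges it uses; then each branch set $\model(u)$ may be taken to be a tree whose leaves lie in $\pi(u)$ or are endpoints of the $\deg_H(u)$ edges joining $\model(u)$ to other branch sets. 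Consequently the union $M$ of the branch sets together with the inter-branch edges is a subdivision of a graph $J$ with $|V(J)|\le c$ for some computable $c=c(\ell,\delta)$ (as $H$ has at most $\ell+\delta$ vertices and at most $\binom{\ell+\delta}{2}$ edges), the model being recovered from $M$ by a partition of its at most $c$ \emph{principal} vertices (branch points, inter-branch endpoints, roots) into branch sets plus an injection of at most $\ell$ of them onto $X$. Since $v\in V(M)$ and $v\notin X$, either $v$ is a non-root principal vertex of $M$, or $v$ is an interior vertex of one of the at most $\binom{c}{2}$ subdivided paths of $M$.

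Next I would exploit the homogeneous wall around $v$. Because $v$ is a central vertex of $W'$ and we are free to take $f(\ell,\delta,h)$ to be as large an (even) function of $\ell,\delta,h$ as the argument requires, $W'$ contains a long nested family of concentric subwalls $W'=W^{0}\supseteq W^{1}\supseteq\cdots\supseteq W^{r}$, all having $v$ as a central vertex and with $W^{r}$ of height at least $h$; the perimeters $C_i\coloneqq\mathsf{Per}(W^i)$ are $r+1$ pairwise vertex-disjoint cycles encircling $v$, and $C_i\cup X$ separates the interior of $W^i$ from the rest of $G$. The technical core is a \emph{taming} step: after rerouting $M$ inside the grid-like annuli between consecutive $C_i$'s — which by minimality and Menger's theorem can be done without increasing $|V(M)|+|E(M)|$ and without meeting $v$ — one may assume that $M$ meets the interior of some central subwall $W^{\star}$ of height $\Theta(h)$ containing $v$ only in a bounded, planar fashion: the trace of $M$ on $\mathsf{Compass}_{G\setminus X}(W^{\star})$ consists of at most $c$ principal vertices and at most $\binom{c}{2}$ pairwise disjoint arcs, all of which can be drawn inside the disk of the rural division of the society of $W^{\star}$ inherited from $\mathcal{F}$. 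I expect this to be the main obstacle: it is where the minimality of $M$ (to forbid pointless deep excursions), the rural structure of $\mathcal{F}$, and a pigeonhole over the many annuli $C_0,\dots,C_r$ (to find a repeated crossing pattern that can be short-circuited) must be combined, and it is precisely what forces $f(\ell,\delta,h)$ to be large; it is the analogue, in the folio setting, of the ``irrelevant vertices in linkage problems'' machinery of the Graph Minors series.

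Finally, I would delete $v$ by rerouting the tamed trace. Inside $W^{\star}$ we now have a bounded-size planar linkage with boundary of bounded size on $\mathsf{Per}(W^{\star})\cup X$ that passes through $v$. Pick a height-$h$ subwall $W^{\star\star}$ of $W^{\star}$ avoiding a small ball around $v$; by $(\delta,h)$-homogeneity of $W'$, every $W$-internal flap used by the trace can be replaced by a $\delta$-folio-equivalent $W$-internal flap captured by $W^{\star\star}$ — and ``having the same $\delta$-folio'' is exactly the statement that the local rooted-minor contribution of a flap through $\partial F\cup X$ can be reproduced by its partner while respecting the boundary ordering recorded by the attachment scheme $\sigma$. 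Simultaneously, the at most $\binom{c}{2}$ arcs of the trace can be re-routed through the grid of $W^{\star}$ so as to avoid $v$, using the fact that a wall of height $h$ with one vertex deleted still carries a large family of pairwise disjoint routing paths (cf.\ \cref{prop:grid-in-grid}). Patching these local modifications into $M$ outside $\mathsf{Compass}_{G\setminus X}(W^{\star})$ yields a model of $(H,\pi)$ in $G\setminus v$ of no larger size, contradicting that $(H,\pi)$ is not an $X$-rooted minor of $G\setminus v$. It remains only to verify that a single $f(\ell,\delta,h)$ — even, at least $h$, and large enough to provide $r$ concentric subwalls, the slack needed for the taming reroutings, and the height-$h$ subwall $W^{\star\star}$ for the homogeneity swaps — makes every step above go through.
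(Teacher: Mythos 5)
The paper does not actually prove this statement: it is imported verbatim as a black box from Robertson and Seymour, namely (10.2) of Graph Minors XIII, whose proof was only completed by the linkage results of Graph Minors XXI--XXII (the reference (8.3) in the latter). So there is no in-paper argument to compare yours against; the relevant comparison is with Robertson--Seymour's proof, and there your sketch has a genuine gap. The step you yourself flag as the main obstacle --- the ``taming'' claim that, after reroutings justified by minimality of the model, Menger's theorem, and a pigeonhole over concentric perimeters, the trace of $M$ on a central subwall becomes a bounded-size planar pattern that avoids deep excursions --- is not something those tools can deliver. The difficulty is that any rerouting of the model inside the annuli must preserve the identity of the strands (which branch set each segment belongs to, and hence which roots and which adjacencies of $H$ it realizes); finding a ``repeated crossing pattern'' on two perimeters does not let you short-circuit it, because the two occurrences of the pattern may connect the strands to the outside in incompatible ways. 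Controlling this is exactly the content of the vital/unique linkage theorem (a linkage all of whose reroutings use the same vertex set lives in bounded treewidth), and of the analysis in Graph Minors XXII showing that, in a homogeneous flat wall, a linkage through the central area can be pushed off the central vertex. These are among the longest and hardest arguments of the Graph Minors series, and minimality of $|V(M)|+|E(M)|$ plus Menger plus pigeonhole is nowhere near a substitute for them.

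The remaining ingredients of your sketch (reducing an $X$-rooted minor model to a bounded-size subdivision with principal vertices, taking many concentric subwalls around $v$, and using $(\delta,h)$-homogeneity to swap a flap used by the trace for a folio-equivalent flap captured by a subwall away from $v$, respecting the boundary ordering recorded by the attachment scheme $\sigma$) do follow the correct general shape of the Robertson--Seymour argument, so the outline is not misguided --- it simply cannot be completed at the level of detail proposed. For the purposes of this paper the right move is what the authors do: cite (10.2)/(8.3) rather than reprove it.
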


In order to find a large homogeneous subwall, one has to start with a large enough wall.
In~\cite[(10.3)]{GM13}, Robertson and Seymour show how to obtain a sufficiently large homogeneous subwall $W'$ inside some large enough wall $W$, when the relation between the homogeneity of $W'$ and its height is given by a fixed function $f$ (which in fact is the function from~\Cref{prop:existence-irrelevant}).
By allowing $f$ to be an arbitrary function from $\mathbb{N}^3$ to $\mathbb{N}$ whose images are even and $f(x,y,z)\ge z$, it is easy to derive the following statement from the proof of~\cite[(10.3)]{GM13}.

\begin{theorem}\label{prop:existence-homogeneous}
    For every $\ell,\delta\ge 0$ and every computable function $f:\mathbb{N}^3\to\mathbb{N}$ whose images are even and $f(x,y,z)\ge z$, there are even $r,h\ge 2$, computable from $\ell,\delta$, and a given Turing machine computing $f$, such that the following holds.
    Let $G$ be a graph, let $X\subseteq V(G)$ of size $\ell$, let $W$ be a wall in $G\setminus X$, let $\mathcal{F}$ be a rural division of the society of $W$ in $G\setminus X$, and let $\sigma$ be an attachment scheme of $(\mathcal{F},X)$.
    If the height of $W$ is at least $r$, then there is a subwall $W'$ of $W$ that is $(\delta,h)$-homogeneous with respect to $(\mathcal{F},\sigma)$ and of height~$f(\ell,\delta,h)$.
\end{theorem}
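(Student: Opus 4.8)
The plan is to derive the statement essentially as a ``parameter-renaming'' of \cite[(10.3)]{GM13}. In \cite{GM13}, the analogous result is stated for one particular function $f$ — namely the function appearing in \Cref{prop:existence-irrelevant} — but the proof there never uses any structural property of that $f$ beyond the facts that its values are even integers and that $f(x,y,z) \ge z$ (the latter being needed so that the ``homogeneous wall'' one extracts is at least as tall as the ``internal'' parameter $h$ governing the homogeneity relation). Thus the first thing I would do is go through the proof of \cite[(10.3)]{GM13} and isolate the exact places where $f$ is invoked, confirming that $f$ enters only as a black box satisfying these two syntactic conditions; once that is verified, replacing $f$ by an arbitrary computable $f$ with even values and $f(x,y,z)\ge z$ is immediate, and the constants $r,h$ produced by the argument become computable from $\ell,\delta$ and a Turing machine for $f$ (since the construction in \cite{GM13} is itself effective, as discussed at the end of that paper).

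Concretely, the argument of \cite[(10.3)]{GM13} proceeds by a Ramsey-type iterated refinement. One starts with a very large wall $W$ of height $r$ (where $r$ will be determined at the end), and one wants to find a subwall $W'$ of the target height $f(\ell,\delta,h)$ all of whose ``$h$-windows'' — subwalls of height $h$ — look the same, in the sense that the multiset of $\delta$-folios of the graphs $\mathbf{F}_\sigma$ over $W$-internal flaps captured by each window is the same. The key combinatorial input is that the number of possible $\delta$-folios of a flap graph $\mathbf F_\sigma$ (a graph with at most $3+\ell$ distinguished, ordered boundary vertices) is bounded by a computable function of $\ell$ and $\delta$ only; call this bound $N = N(\ell,\delta)$. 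Then a ``homogeneous'' window is one where, for each of the $\le 2^N$ possible subsets of folios that could be realized, the realized subset is fixed; using the grid-like self-similar structure of walls (a height-$ab$ wall contains an $a\times a$ array of disjoint height-$b$ subwalls, suitably spaced), one applies an iterated pigeonhole/Ramsey argument over these $\le 2^N$ ``colors'' to pass from $W$ down to a homogeneous $W'$. The height $r$ one must start with is then obtained by composing the target height $f(\ell,\delta,h)$ with a tower-type function of $N$ and $h$; all of this is computable from $\ell$, $\delta$, $h$ and a machine for $f$, and $h$ itself is chosen (as a function of $\ell$ and $\delta$ and $f$) large enough that the number of distinct $h$-windows inside a window is big enough for the pigeonhole steps to bite.

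I would therefore structure the write-up as: (i) recall the bound $N(\ell,\delta)$ on the number of $\delta$-folios of flap graphs with $\le 3+\ell$ ordered boundary vertices; (ii) recall the window/subwall nesting structure of walls, i.e. that a height-$r$ wall with $r = f(\ell,\delta,h)\cdot M$ contains an $M'\times M'$ grid of vertex-disjoint, appropriately-separated height-$f(\ell,\delta,h)$ subwalls, each of which in turn contains many disjoint height-$h$ subwalls; (iii) state the iterated-Ramsey lemma from \cite[(10.3)]{GM13} in the parametric form, observing that its proof treats $f$ as a black box obeying exactly our two conditions; (iv) plug in, choosing $h$ and then $r$ as the composition of $f(\ell,\delta,h)$ with the Ramsey function of $N$ and $h$, and note computability. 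The main obstacle — and the only part requiring genuine care rather than bookkeeping — is verifying claim (iii): that nowhere in the proof of \cite[(10.3)]{GM13} is a specific arithmetic property of the particular $f$ used (for instance, monotonicity in a way the Ramsey bound secretly relied on, or an inequality stronger than $f(x,y,z)\ge z$). I expect this check to go through, since the homogeneity extraction in \cite{GM13} is purely a counting argument over the finitely many folio-types, but this is where I would be most careful, and I would state explicitly in the proof which two properties of $f$ are the only ones used.
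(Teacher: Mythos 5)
Your proposal matches the paper's approach exactly: the paper itself gives no detailed argument for this theorem, but simply observes (in the paragraph preceding the statement) that the conclusion follows from the proof of \cite[(10.3)]{GM13} once one notes that the specific function used there enters only as a black box with even values satisfying $f(x,y,z)\ge z$. Your reconstruction of the underlying iterated Ramsey/pigeonhole argument over the boundedly many $\delta$-folio types of flap graphs, and your identification of step (iii) — verifying that no further property of $f$ is secretly used in \cite[(10.3)]{GM13} — as the only place needing genuine care, is accurate and is precisely the check the paper implicitly delegates to the reader.
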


After assembling all ingredients for expressing the existence of an irrelevant vertex in a graph, we next claim that all above conditions can be expressed in logic.

\begin{restatable}{lemma}{formulairrelevant}\label{lem:formula-irrelevant}
    For every $\ell,\delta\ge 0$ and every even $r,r',h\ge 2$,
    there is an $\mathsf{MSO}_2$-formula $\mathsf{irr}(\tup x,y,z)$,
    where $\tup x$ is a tuple of $\ell$ single vertex variables and $y$ and $z$ are single vertex variables,
    such that the following holds.~Given
    \begin{itemize}[nosep]
        \item a graph $G$ and a vertex set $X\subseteq V(G)$, where $X=\{\tup v(x):x\in\tup x\}$ for some $\tup v\in G^{\tup x}$; and
        \item vertices $u_0,u\in V(G)$;
    \end{itemize}
    then we have $G\models \mathsf{irr}(\tup v,u_0,u)$
    if and only if all of the following conditions hold.
    \begin{itemize}[nosep]
        \item There is a wall $W$ of height $r$ in $G\setminus X$ such that $u_0$ is disjoint from the compass of $W$ in $G\setminus X$.
        \item There is a rural division $\mathcal{F}$ of the society of $W$ in $G\setminus X$ and an attachment scheme $\sigma$ for $(\mathcal{F},X)$.
        \item There is a subwall $W'$ of $W$ of height $r'$ that is $(\delta,h)$-homogeneous with respect to $(\mathcal{F},\sigma)$.
        \item $u$
        is a central vertex of $W'$.
    \end{itemize}
\end{restatable}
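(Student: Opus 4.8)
The plan is to build the formula $\mathsf{irr}(\tup x, y, z)$ by a sequence of nested existential quantifications over monadic variables that encode, one layer at a time, the combinatorial objects appearing in the four bullet conditions: the wall $W$, the subwall $W'$, the rural division $\mathcal{F}$, and the attachment scheme $\sigma = (\xi, \Lambda, \rho_X)$. The main point is that every one of these objects is, up to polynomial blow-up in the number of variables, a bounded collection of vertex/edge sets together with a bounded amount of ``ordering'' and ``bijection'' data, and all the defining axioms are first-order (or $\mathsf{MSO}_2$) conditions over those sets. I would proceed in the order: first the wall, then the division, then the attachment scheme, then homogeneity.

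First, encoding a wall $W$ of height $r$ in $G\setminus X$: since $r$ is a fixed constant, an elementary wall $\overline W$ of height $r$ has a fixed number $N = N(r)$ of vertices, and a subdivided wall $W$ is obtained by specifying, for each edge $ab$ of $\overline W$, an internally disjoint $a$--$b$ path in $G$. I would quantify a single edge-set variable $E_W$ for the edge set of $W$, together with $N$ single vertex variables $w_1,\dots,w_N$ for the branch vertices, and write an $\mathsf{MSO}_2$ formula asserting that $(V(E_W), E_W)$ is a subdivision of $\overline W$ with branch vertices $w_i$ in the prescribed adjacency pattern, that it avoids $X$, and — using the membership-defining trick — that $y$ (the variable bound to $u_0$) lies outside the compass $\mathsf{Compass}_{G\setminus X}(W)$. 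The compass itself is $\mathsf{MSO}_2$-definable: its vertex set is $V(\mathsf{Per}(W))$ together with the vertices of the unique connected component of $(G\setminus X)\setminus V(\mathsf{Per}(W))$ meeting $W$, and ``unique connected component meeting $W$'' is expressible. The subwall $W'$ of height $r'$ is encoded the same way, with an additional $\mathsf{MSO}_2$ clause saying $W'$ is a subdivision of one of the (boundedly many) central-type subwalls of $\overline W$ sitting inside $E_W$, and $z$ is pinned to a central vertex of $W'$.

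Next, the rural division $\mathcal{F}$ of the society $(\mathsf{Compass}_{G\setminus X}(W), \Omega_W)$. By condition~\ref{div:cond5} of \Cref{def:division} each flap owns a private edge, so $\mathcal{F}$ is in bijection with a partition of $E(\mathsf{Compass})$; I would quantify one edge-set variable $E_{\mathcal F}$ whose connected components (in the subgraph they induce) will be the flaps, plus a bounded number of vertex sets to mark the boundary multiset $\partial\mathcal F$. Conditions \ref{div:cond1}--\ref{div:cond6} are each $\mathsf{MSO}_2$: (1)--(2) are set-algebra, (3) a non-equality of boundary sets, (4) existence of an internally-$\partial F$-avoiding path in each component, (5) the private-edge condition, and (6) uses $|\partial F|\le 3$ together with Menger, i.e. the existence of $|\partial F|$ vertex-disjoint paths to $\Omega_W$ — a bounded disjoint-paths condition, hence $\mathsf{MSO}_2$-definable. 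Rurality is expressed via the auxiliary bipartite graph $G_{\mathcal F}$: one quantifies a planar embedding of $G_{\mathcal F}$ in a disk with $\Omega$ on the boundary in the correct cyclic order; planarity-in-a-disk-with-prescribed-boundary is an excluded-minor condition on an auxiliary graph and therefore $\mathsf{MSO}_2$-expressible (cf.\ \Cref{obs:mso-minor}), or alternatively one simply asserts the existence of a rural division by forbidding the relevant obstruction. For the attachment scheme: the $X$-extension scheme $\xi$ is a choice of an edge set $E_F$ (edges between $X$ and $\partial F$) per flap forming a partition of $E(G)\setminus\binom X2$ — one more edge-set variable with a partition axiom; the boundary ordering scheme $\Lambda$, by \Cref{lem:boundary-ordering}, is determined by choosing the first element $v_1$ of each $\partial F$ with $|\partial F|=3$, so it is captured by one vertex set marking these ``first'' vertices plus the $\mathsf{MSO}_2$-definable compatibility-with-$\Omega$ condition (which is itself a cyclic-order condition about disjoint-path endpoints on the disk boundary, expressible as above); and $\rho_X$, an ordering of the $\ell$-element set $X$, is just a choice among $\ell!$ options, hard-coded into the formula via the $\tup x$-indices. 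Finally, $(\delta,h)$-homogeneity of $W'$ with respect to $(\mathcal F,\sigma)$: here $\delta$ and $h$ are constants, so the $(X\cup\partial F,\delta)$-folio of each augmented flap $\xi(F)$ is a structure of size bounded by a constant, and ``$\mathbf F_\sigma$ and $\mathbf{F}'_\sigma$ have the same $\delta$-folio'' is a constant-size isomorphism-of-finite-structures test respecting the orderings $\omega(F),\omega(F')$; being $W$-internal and captured by $W'$ (resp.\ by a height-$h$ subwall $W''$) are path-existence conditions over $\mathsf{Compass}$, hence $\mathsf{MSO}_2$. The universal quantification over all $W$-internal flaps captured by $W'$ and all height-$h$ subwalls $W''$ of $W'$ is a quantification over subsets of the already-quantified objects, which is fine in $\mathsf{MSO}_2$.

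Putting it together, $\mathsf{irr}(\tup x, y, z)$ is the conjunction ``$\exists E_W, \bar w, E_{W'}, \dots$ encoding $W, W', \mathcal F, \xi, \Lambda$ such that all the above axioms hold, $\rho_X$ chosen among $\ell!$, $y\notin\mathsf{Compass}$, and $z$ is a central vertex of $W'$'', disjoined over the $\ell!$ choices of $\rho_X$. Since $\ell,\delta,r,r',h$ are all fixed, the number of quantified variables and the size of the formula are finite and computable from these parameters, and the formula is visibly $\mathsf{MSO}_2$ (no modular counting is needed anywhere). The main obstacle I expect is not any single step but the bookkeeping of rurality and of the compatibility/cyclic-order conditions in the attachment scheme: these are the places where one must translate a genuinely topological statement (a planar embedding in a disk respecting a cyclic boundary order) into $\mathsf{MSO}_2$, which works but requires care — either via an explicit excluded-minor characterization on a suitable auxiliary graph, or by directly asserting the existence of a rural division / compatible ordering and invoking \Cref{lem:boundary-ordering} to keep the encoding bounded. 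Everything else is routine $\mathsf{MSO}_2$ set manipulation plus bounded disjoint-paths conditions.
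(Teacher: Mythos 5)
Your high-level architecture matches the paper's: quantify, layer by layer, a wall, a subwall, a division, and attachment data, and express the defining axioms in $\mathsf{MSO}_2$. The wall/subwall/compass encoding, the $\ell!$-disjunction over orderings of $X$, the excluded-minor treatment of rurality, and the constant-size folio comparison for homogeneity are all fine and closely mirror \Cref{obs:formulas-wall,lem:define-rural,lem:formula-homogeneous}. However, there is a genuine gap in the way you propose to encode the division $\mathcal{F}$ and the attachment scheme, and that gap is precisely where most of the paper's work in \Cref{subsec:defirvlogic} lives.

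A rural division has an \emph{unbounded} number of flaps, and the flaps are not vertex-disjoint: by condition~\ref{div:cond2} of \Cref{def:division} they overlap on boundary vertices. So a single edge-set variable $E_{\mathcal F}$ whose ``connected components'' are the flaps does not determine the division: after taking $\bigcup_F E(F)$ you cannot recover the parts because the flaps are glued along $\partial\mathcal F$. The paper resolves this with the notion of a \emph{canonical $Z$-split}: rather than quantifying the partition into flaps, it quantifies only $Z=\partial\mathcal F$ (one vertex set) and one representative edge per flap (one edge set $R$), and then reconstitutes the flaps deterministically as the canonical $Z$-split of $G$. The crucial, non-trivial point — \Cref{lem:division-to-canonical,cor:existence-canonical-rural} — is that if \emph{some} rural division exists then the canonical $Z$-split for $Z=\partial\mathcal F$ is also a rural division, so this restricted encoding loses nothing. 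Your proposal does not contain this argument, and without it the reconstruction of $\mathcal F$ from bounded monadic data is not justified.

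The same issue recurs, and is worse, for the attachment scheme. An $X$-extension scheme assigns to \emph{each flap} $F$ an edge set $E_F\subseteq E(X,\partial F)$, and two flaps sharing a boundary vertex $v$ compete for the edges $vu$ with $u\in X$. A single edge-set variable ``with a partition axiom'' cannot encode this per-flap assignment: from the union $\bigcup_F E_F$ alone you cannot tell which flap a given edge $vu$ was assigned to. The paper handles this with \emph{$k$-samplings} (\Cref{obs:formula-sampling,lem:existence-sampling}), whose existence for bounded $k$ is a non-trivial fact proved via the bounded star chromatic number of the planar bipartite graph $G_{\mathcal F}$ (\Cref{lem:star-col}). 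Similarly, marking ``the first vertex of each $\partial F$'' with a single vertex set is ambiguous for shared boundary vertices; the paper uses \emph{ordering guides} of bounded size (\Cref{obs:formula-ordering,lem:existence-ordering}), whose existence again comes from a coloring argument (planar $4$-colorability of a graph derived from $G_{\mathcal F}$). These three bounded-data lemmas are the real content behind the reduction you describe as ``routine $\mathsf{MSO}_2$ set manipulation''; as written, your encodings of $\mathcal F$, $\xi$, and $\Lambda$ would not determine the objects they are supposed to encode.
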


Note that the proof of~\Cref{lem:formula-irrelevant} is not trivial, as some of the definitions provided above are not directly expressible by $\mathsf{MSO}_2$-formulas. For example, see~\Cref{def:division}, where divisions are defined as collections of arbitrarily many subgraphs.
In~\Cref{subsec:defirvlogic} we explain how to define 
all above conditions for situating an irrelevant vertex in $\mathsf{MSO}_2$ and we present the proof of~\Cref{lem:formula-irrelevant}.
Before this, we show how to prove~\Cref{lem:main-subroutine-apexfree} assuming~\Cref{lem:formula-irrelevant}.
We restate~\Cref{lem:main-subroutine-apexfree} for convenience.

\subroutineapexfree*

\begin{proof}
The formula of~\Cref{lem:main-subroutine-apexfree} is set to be the formula of~\Cref{lem:formula-irrelevant} by fixing $r$ and $h$ to be the constants obtained from~\Cref{prop:existence-homogeneous} when setting $f$ to be the function of~\Cref{prop:existence-irrelevant} and $r'\coloneqq f(\ell,\delta,h)$.
We use $G'$ to denote $G/F$.

Assume that $G'\models \mathsf{irr}(\tup v,v_F,u)$.
Let $W$ be a wall of height $r$ in $G'\setminus X$ such that $v_{F}$ does not belong to the vertex set of the compass of $W$ in $G'\setminus X$ and the society of $W$ in $G'\setminus X$ admits a rural division $\mathcal{F}$.
Note that in this case $\mathsf{Compass}_{G'\setminus X}(W) = \mathsf{Compass}_{G\setminus X}(W)$ and in particular
$\mathcal{F}$ is a rural division of the society of $W$ in $G\setminus X$.
Therefore, by~\Cref{prop:existence-homogeneous,prop:existence-irrelevant,lem:formula-irrelevant}, the $(X,\delta)$-folio of $G\setminus u$ is the same as the $(X,\delta)$-folio of $G$.

To show the ``moreover'' part of~\Cref{lem:main-subroutine-apexfree}, let $p\in\mathbb{N}$.
We set $$t\coloneqq p^2+1, \qquad d\coloneqq p^4, \qquad\textrm{and}\qquad q\coloneqq d\cdot \binom{t}{2}+1.$$ 
Let $h'$ be the constant provided by~\Cref{prop:many-flat-walls} for the parameters $q$, $t$, and $r$ and let $$\hat{h}\coloneqq h'+2p.$$
Also, let $w$ be the constant provided by~\Cref{prop:grid-thm} for the parameter $\hat{h}$ and let $$k\coloneqq w+|X|.$$
As in the statement of the lemma, we assume that $G'\setminus X$ does not contain the apex-grid of order $p$ as a minor and that the treewidth of $G'$ is larger than $k$.
First observe that the treewidth of $G'\setminus X$ is larger than $k-|X|=w$.
Then from~\Cref{prop:grid-thm} we infer that there is a wall $\widehat{W}$ of height $\hat{h}$ in $G'\setminus X$. Recall that $\hat{h}=h'+2p$.
Let $W$ be the central $h'$-subwall of $\widehat{W}$.

Since $G'\setminus X$ does not contain the apex-grid of order $p$ as a minor, it also does not contain the complete graph $K_{t}$ as a minor, for $t=p^2+1$.
Therefore,~\Cref{prop:many-flat-walls} gives a vertex subset $A\subseteq V(G'\setminus X)$ where $|A|<\binom{t}{2}$ and a collection $\mathcal{W}$ of $q$ subwalls of $W$ of height $r$, with the following additional properties:
\begin{itemize}[nosep]
    \item for every two distinct subwalls in $\mathcal{W}$, their compasses in $G'\setminus X$ are pairwise vertex-disjoint and disjoint from $A$; and
    \item for each subwall in $\mathcal{W}$, its society in $G'\setminus (X\cup A)$ admits a rural division.
\end{itemize}
We show that there are at least two distinct subwalls $W_1,W_2$ in $\mathcal{W}$ such that there is no vertex in $A$ that is adjacent to the compass of either of them in $G'\setminus X$.

Let $\mathcal{W}'$ be a collection of $d$ subwalls in $\mathcal{W}$. 
Assume, towards a contradiction, that there is a vertex $v\in A$
that is adjacent
to at least one vertex of the compass (in $G'\setminus X$) of each subwall in $\mathcal{W}'$.
Recall that~$W$ is the central $h'$-subwall of the wall $\widehat{W}$ and that $\widehat{W}$ has height $\hat{h}=h'+2p$.
We contract all edges of~$\widehat{W}$ that belong to the intersection of horizontal and vertical paths, as well as every edge of the compass of $\widehat{W}$ that is not in $E(\widehat{W})$.
Observe that this way we obtain the $\hat{h}\times \hat{h}$ grid as a minor of $\widehat{W}$ and moreover its central $h' \times h'$ subgrid (which intuitively corresponds to the contraction of $W$) contains a vertex set~$X$ of size $d=p^4$ such that every vertex in $X$ is adjacent to $v$.
By~\Cref{prop:grid-in-grid}, we infer that $G'\setminus X$ contains the apex-grid of order $p$ as a minor, a contradiction to our initial assumption.
Therefore, for every vertex $v$ of $A$, the number of subwalls in $\mathcal{W}$ whose compass (in $G'\setminus X$) contains a vertex adjacent to $v$ is less than $d$.
Since $|\mathcal{W}|=q$ and $|A|< \binom{t}{2}$, we conclude that there are at least two distinct subwalls in $\mathcal{W}$ such that no vertex in $A$ is adjacent to a vertex in their compasses in $G'\setminus X$.

Let $W_1$ and $W_2$ be such subwalls in $\mathcal{W}$. Observe that the society of each of these walls in $G'\setminus X$ admits a rural division. 
Since $\mathsf{Compass}_{G'\setminus X}(W_1)$ and $\mathsf{Compass}_{G'\setminus X}(W_2)$ have no vertices in common,
for one of them, say $W_1$, its compass in $G'\setminus X$ does not contain the vertex $v_{F}$. Using the fact that the height of~$W_1$ is $r$ and applying~\Cref{prop:existence-homogeneous,prop:existence-irrelevant}, we conclude that there is a vertex $u\in V(G'\setminus X)$ such that $G'\models\mathsf{irr}(\tup v,v_F,u)$.  
\end{proof}

\subsection{Defining irrelevant vertices in logic}
\label{subsec:defirvlogic}

This subsection is devoted to the proof of~\Cref{lem:formula-irrelevant}.
In particular, we show how to define rural divisions, attachment schemes, and homogeneity in $\mathsf{MSO}_2$.

\subsubsection{Defining rural divisions in logic}

In order to define the existence of a rural division of a society in $\mathsf{MSO}_2$, we consider a \textsl{canonical} way to obtain a division $\mathcal{F}$ of a society, for some fixed set $Z=\partial \mathcal{F}$. This construction was considered in the algorithm (8.6) in~\cite{GM13}.

\paragraph{Canonical splits.}
Let $G$ be a graph and let $Z\subseteq V(G)$ be a subset of vertices of $G$.
Let $\mathcal{H}$ be the collection of all subgraphs $H$ of $G$ such that one of the following holds:
\begin{itemize}[nosep]
    \item $H$ is the graph with a single vertex $x\in Z$ which is isolated in $G$ and no edges; or
    \item $H=(\{x,y\},\{xy\})$, for some $x,y\in Z$ such that $xy\in E(G)$; or
    \item $H$ consists of a connected component $C$ of $G\setminus Z$ together with the vertices from $N(C)$ and all edges of $G$ between $C$ and $N(C)$.
\end{itemize}
We stress that in some statements and proofs in the rest of this section, we assume that $G$ does not contain isolated vertices. This assumption implies that the subgraphs in the collection $\mathcal{H}$ given above are only of the latter two kinds.

We say that two graphs $H,H'\in\mathcal{H}$
are \emph{boundary equivalent}, which we denote by $H\sim_Z H'$, if $\partial H = \partial H'$.
Note that $\sim_Z$ is an equivalence relation.
For every equivalence class $\mathcal{K}$ of $\sim_Z$, we denote by $F_{\mathcal{K}}$ the subgraph of $G$ obtained by the union of all subgraphs $H\in\mathcal{K}$.

We also use $\mathcal{F}$ to denote the collection of all graphs $F_{\mathcal{K}}$, for all equivalence classes $\mathcal{K}$ of $\sim_Z$.
We call $\mathcal{F}$ the \emph{canonical $Z$-split} of $G$.

Observe that, given a graph $G$ and a vertex subset $Z\subseteq V(G)$, the canonical $Z$-split of $G$ satisfies all the conditions of~\Cref{def:division} except~\ref{div:cond6}. Therefore, we derive the following:
\begin{observation}\label{obs:can-div}
    Let $(G,\Omega)$ be a society and let $Z\subseteq V(G)$ be a subset of vertices of $G$  such that $\Omega\subseteq Z$. 
    If for every connected component $C$ of $G\setminus Z$, it holds that $|N(C)|\le 3$ and
    there are $|N(C)|$ pairwise vertex-disjoint paths in $G$ between $N(C)$ and $\Omega$, then the canonical $Z$-split of $G$ is a division of $(G,\Omega)$.
\end{observation}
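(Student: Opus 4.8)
The plan is to lean on the remark stated immediately before the observation: for an arbitrary graph $G$ and arbitrary $Z\subseteq V(G)$ the canonical $Z$-split $\mathcal{F}$ already satisfies conditions \ref{div:cond1}--\ref{div:cond5} of \Cref{def:division} (this is essentially the analysis behind algorithm (8.6) in~\cite{GM13}). Hence the entire content is verifying condition \ref{div:cond6}: for every flap $F=F_{\mathcal{K}}\in\mathcal{F}$ one must show $|\partial F|\le 3$ and that there are $|\partial F|$ pairwise vertex-disjoint paths in $G$ between $\partial F$ and $\Omega$. So I would fix a boundary-equivalence class $\mathcal{K}$ of $\sim_Z$; all $H\in\mathcal{K}$ share one boundary, denote it $S:=\partial H$ for $H\in\mathcal{K}$, and recall $F=\bigcup_{H\in\mathcal{K}}H$.

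The first step is the containment $\partial F\subseteq S$. Suppose $v\in V(F)\setminus S$. Since every vertex of a subgraph $H$ that lies in $\Omega$ is in $\partial H$ by definition, and $\partial H=S$ for $H\in\mathcal{K}$, we get $v\notin\Omega$. Moreover $v\in V(H_0)$ for some $H_0\in\mathcal{K}$ with $v\notin\partial H_0=S$, so by the definition of $\partial H_0$ every edge of $G$ incident with $v$ lies in $E(H_0)\subseteq E(F)$. Thus $v$ is neither in $\Omega$ nor incident with an edge outside $E(F)$, i.e. $v\notin\partial F$; this proves $\partial F\subseteq S$. Next I bound $|S|$ by inspecting the two non-degenerate shapes of a piece $H$ in the canonical split: either $H=(\{x,y\},\{xy\})$ with $x,y\in Z$, whence $S=\partial H\subseteq\{x,y\}$ and $|S|\le 2$; or $H=H_C$ is the subgraph spanned by a connected component $C$ of $G\setminus Z$ together with $N(C)$ and the edges between them, in which case every vertex of $C$ has all its incident edges inside $E(H_C)$, so $\partial H_C\subseteq N(C)$ and hence $|S|\le |N(C)|\le 3$ by hypothesis. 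In both cases $|\partial F|\le |S|\le 3$, which is the first half of \ref{div:cond6}.

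For the linkage I would split according to the shapes of the pieces in $\mathcal{K}$. If $\mathcal{K}$ contains a component-subgraph $H_C$, then $\partial F\subseteq S\subseteq N(C)$, and the hypothesis supplies $|N(C)|$ pairwise vertex-disjoint paths between $N(C)$ and $\Omega$; since $|N(C)|$ disjoint such paths must start at $|N(C)|$ distinct vertices of $N(C)$, exactly one starts at each vertex of $N(C)$, and keeping only those starting in $\partial F$ yields the desired $|\partial F|$ vertex-disjoint paths between $\partial F$ and $\Omega$. In the remaining case $\mathcal{K}$ consists only of single-edge subgraphs, so $F$ is a vertex-disjoint union of edges inside $Z$ and $|\partial F|\le 2$; for each vertex of $\partial F\cap\Omega$ one uses the length-zero path, and for a vertex $v\in\partial F\setminus\Omega$ one follows an edge $vw\notin E(F)$ incident with $v$ (which exists because $v\in\partial F$) into the piece $H'\notin\mathcal{K}$ containing it, tracing through $Z$ until a component of $G\setminus Z$ is entered (and then invoking the hypothesis for that component) or a vertex of $\Omega$ is met; a short bookkeeping argument then shows the at most two resulting paths can be chosen vertex-disjoint.

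I expect this last case --- flaps whose class contains only single-edge subgraphs --- to be the technically delicate point, since the hypothesis says nothing about such pieces directly; by contrast the containment $\partial F\subseteq S$, the bound $|\partial F|\le 3$, and the routing of disjoint paths for flaps attached to a component of $G\setminus Z$ are immediate consequences of the definitions and the hypothesis. In the situations where the observation is later applied --- with $Z=\partial\mathcal{F}$ the boundary set of an actual (rural) division of the society of a wall with connected compass --- the extra structure present makes this last case automatic, so in the final write-up I would either discharge it by the tracing argument above or appeal to that structure.
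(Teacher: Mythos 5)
Your approach matches what the paper implicitly does: no formal proof is given, only the preceding remark that the canonical $Z$-split automatically satisfies conditions~\ref{div:cond1}--\ref{div:cond5} of \Cref{def:division}, leaving only condition~\ref{div:cond6} to be derived from the hypothesis. Your verification that $\partial F\subseteq S$, that $|\partial F|\le 3$, and that the linkage holds for flaps whose class contains a component-subgraph $H_C$ is correct and is exactly the intended argument.

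You are also right that flaps built solely from single-edge (or isolated-vertex) pieces inside $Z$ are where the hypothesis says nothing directly, and in fact you have located a genuine gap: the observation is false as literally stated. Take $G$ to be the path $a$--$b$--$c$, $\Omega=(a)$, $Z=\{a,b,c\}$. The hypothesis holds vacuously since $G\setminus Z$ has no components, yet the canonical $Z$-split has the flap $F=(\{a,b\},\{ab\})$ with $\partial F=\{a,b\}$, and with $|\Omega|=1$ two vertex-disjoint $(\partial F,\Omega)$-paths cannot exist. Hence the tracing argument you sketch cannot be made to work in general: there is simply no path from $b$ to $\Omega$ disjoint from a trivial path at $a$. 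Your fall-back is the right fix. The observation is invoked only in \Cref{lem:division-to-canonical} with $Z=\partial\mathcal{F}$ for a pre-existing division $\mathcal{F}$, and there any single-edge piece $(\{x,y\},\{xy\})$ lies inside some flap $F_0\in\mathcal{F}$; condition~\ref{div:cond2} forces $\{x,y\}\subseteq\partial F_0$, and condition~\ref{div:cond6} for $F_0$ supplies $|\partial F_0|$ disjoint $(\partial F_0,\Omega)$-paths, from which one selects those starting in the boundary of the canonical flap exactly as in your component case. So either the observation should be read with this extra hypothesis, or \Cref{lem:division-to-canonical} should handle such flaps separately --- a point the paper's own treatment glosses over.
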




The next lemma shows that from each (rural) division $\mathcal{F}$ of a society $(G,\Omega)$, if we consider the canonical $Z$-split of $G$ where $Z=\partial \mathcal{F}$, this canonical $Z$-split is also a (rural) division of $(G,\Omega)$.

\begin{lemma}\label{lem:division-to-canonical}
    Let $(G,\Omega)$ be a society and let 
    $\mathcal{F}$ be a division of $(G,\Omega)$.
    Then the canonical $\partial \mathcal{F}$-split $\mathcal{F}'$ of $G$ is also a division of $(G,\Omega)$.
    Moreover, if $\mathcal{F}$ is rural, then $\mathcal{F}'$ is also rural.
\end{lemma}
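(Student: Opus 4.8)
The plan is to set $Z := \partial\mathcal{F}$ and $\mathcal{F}'$ the canonical $Z$-split of $G$, and verify the six conditions of \Cref{def:division} one by one, using the corresponding conditions for $\mathcal{F}$ as leverage. Conditions \ref{div:cond1}, \ref{div:cond2}, \ref{div:cond3}, and \ref{div:cond5} follow directly from the construction of canonical splits (as noted in the paragraph just before \Cref{obs:can-div}, the canonical $Z$-split always satisfies every condition except possibly \ref{div:cond6}), so the real content is \ref{div:cond4} and \ref{div:cond6}, and the ``moreover'' about rurality. The key structural observation to establish first is that each flap $F' \in \mathcal{F}'$ is the union of a subfamily of the flaps of $\mathcal{F}$, restricted appropriately: more precisely, for every connected component $C$ of $G\setminus Z$ and every flap $F\in\mathcal{F}$, either $C\subseteq V(F)$ or $C\cap V(F)\subseteq\partial F$ --- because $C$ is connected in $G\setminus\partial\mathcal{F}$ and hence cannot straddle two flaps of $\mathcal{F}$ by \ref{div:cond2}. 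Consequently each $H\in\mathcal{H}$ (component-plus-neighborhood, or edge, or isolated vertex) is contained in some flap of $\mathcal{F}$, and grouping by boundary-equivalence over $Z$, each $F'_{\mathcal{K}}\in\mathcal{F}'$ satisfies $\partial_{G} F'_{\mathcal{K}}\subseteq Z = \partial\mathcal{F}$, and in fact $\partial F'_{\mathcal{K}}$ equals the common value $\partial H$ over $H\in\mathcal{K}$.

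With this in hand: for \ref{div:cond4}, take $F'\in\mathcal{F}'$ and $u,v\in\partial F'$; each lies in $\partial F$ for some flap $F\in\mathcal{F}$ containing a piece of $F'$, and since $\partial F'\subseteq \partial F$ for whichever $F$ we pick (we may need to pick $F$ with $\{u,v\}\subseteq\partial F$; this holds because in the canonical split $\partial F'$ is the boundary of a single component-graph $H$, which lies in one flap $F$), we apply \ref{div:cond4} for $\mathcal F$ to get a $(u,v)$-path in $F$ with no internal vertex in $\partial F$; this path, being internally disjoint from $\partial\mathcal{F}\supseteq Z$, stays within one component of $G\setminus Z$ and hence lies inside $F'$ with no internal vertex in $\partial F'$. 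For \ref{div:cond6}: $\partial F'\subseteq\partial F$ for a single flap $F$, so $|\partial F'|\le|\partial F|\le 3$; and the $|\partial F|$ disjoint paths from $\partial F$ to $\Omega$ guaranteed by \ref{div:cond6} restrict (by pigeonhole / by taking the subfamily ending at the $|\partial F'|$ relevant starting vertices) to $|\partial F'|$ disjoint paths from $\partial F'$ to $\Omega$. For the ``moreover'': rurality of $\mathcal{F}$ means $(G_{\mathcal{F}},\Omega)$ embeds in a disk with $\Omega$ on the boundary; since each flap of $\mathcal{F}'$ refines into pieces living inside single flaps of $\mathcal F$ and $\partial\mathcal{F}' = \partial\mathcal{F}$, one can obtain an embedding of $G_{\mathcal{F}'}$ by subdividing, within each disk-region of the embedding of $G_{\mathcal F}$ corresponding to a flap $F$, the region into subregions for the finer flaps $F'\subseteq F$ --- this is possible because the finer flaps attached to $F$ have boundaries that are subsets of $\partial F$, which sit on the boundary of $F$'s region.

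The main obstacle I expect is the rurality (``moreover'') part: turning the topological statement ``$G_{\mathcal F}$ embeds in a disk'' into an embedding of $G_{\mathcal F'}$ requires some care, since $G_{\mathcal F'}$ is not literally a subgraph or minor of $G_{\mathcal F}$ --- one flap of $\mathcal F$ may split into several flaps of $\mathcal F'$ sharing boundary vertices, and one must check these can be realized side-by-side in a disk region whose boundary carries exactly $\partial F$. The clean way to handle this is probably to observe that within the closed disk in which $G[Y_F]$ (or the relevant piece) is drawn for flap $F$, the components of $G\setminus Z$ inside $F$ are drawn in pairwise disjoint open sub-disks and the canonical split merely re-groups these by their boundary in $Z$; grouping components with identical $Z$-boundary does not create crossings because they can be drawn in nested/parallel sub-disks all touching the same boundary vertices. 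I would also double-check the degenerate cases (isolated vertices, single-edge flaps) and reconcile with the standing assumption, used elsewhere in the section, that $G$ has no isolated vertices, so that $\mathcal H$ consists only of edge-flaps and component-flaps.
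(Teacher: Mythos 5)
Your verification of the first assertion (that $\mathcal{F}'$ is a division) follows the same route as the paper: invoke the remark before \Cref{obs:can-div} for conditions \ref{div:cond1}--\ref{div:cond5} and then reduce \ref{div:cond6} to the corresponding condition for $\mathcal{F}$ via the observation that each component of $G\setminus Z$ lies inside a single flap of $\mathcal{F}$. One small flaw in your side argument for \ref{div:cond4}: the $(u,v)$-path supplied by \ref{div:cond4} for $\mathcal{F}$ has internal vertices in a single component $C$ of $G\setminus Z$, but $N(C)$ need not equal $\partial F'$ --- e.g.\ if $\partial F=\{u,v,w\}\supsetneq \partial F'=\{u,v\}$, the path might route through a component $C$ with $N(C)=\{u,v,w\}$, which lies in a \emph{different} canonical flap. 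The path therefore does not ``lie inside $F'$.'' This does not hurt you because, as you yourself note, \ref{div:cond4} is automatic for canonical splits (each $H\in\mathcal{K}$ with $\partial H=\partial F'$ already supplies the required path internally), but the argument as written is incorrect.

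The genuine gap is in the rurality part, and it is the one you suspected but did not resolve. Your plan is to subdivide, within each disk-region carrying a flap $F\in\mathcal{F}$, into subregions for ``the finer flaps $F'\subseteq F$.'' But when $|\partial F'|\le 2$, a single flap $F'\in\mathcal{F}'$ can be assembled from components of $G\setminus Z$ that live in \emph{several distinct} flaps $F_1,\ldots,F_r\in\mathcal{F}$: condition \ref{div:cond3} only forbids two flaps of $\mathcal{F}$ from having the same boundary, so for instance $\partial F_1=\{a,b,c\}$ and $\partial F_2=\{a,b,d\}$ are allowed, and each may contain a component with neighbourhood exactly $\{a,b\}$. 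These components are grouped into one flap $F'$ of $\mathcal{F}'$, and in $G_{\mathcal{F}'}$ they become a \emph{single} vertex. The per-$F$ subdivision you describe would produce one vertex per $F_i$-region, which is not an embedding of $G_{\mathcal{F}'}$. Your ``nested/parallel sub-disks'' remark only addresses regrouping inside one flap-region, not the merge across regions. The paper instead treats $F'\in\mathcal{F}'$ vertex by vertex: when $|\partial F'|=3$, condition \ref{div:cond3} forces all contributing components into a single $F\in\mathcal{F}$ with $\partial F=\partial F'$, and $\gamma'(F')\coloneqq\gamma(F)$; when $|\partial F'|\le 2$, one either finds $F\in\mathcal{F}$ with $\partial F=\partial F'$ (and reuses $\gamma(F)$), or places $\gamma'(F')$ inside a face of the planar embedding of $G_{\mathcal{F}}$ incident to some $F\in\mathcal{F}$ with $\partial F'\subseteq\partial F$, from which the $\le 2$ required edges to $\partial F'$ can be drawn crossing-free. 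To fix your plan you would need to adopt something like this per-vertex placement rather than per-region subdivision.
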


\begin{proof}
    Let $\mathcal{F}$ be a division of $(G,\Omega)$. We set $Z\coloneqq \partial \mathcal{F}$.
    Let $\mathcal{F}'$ be the canonical $Z$-split of $G$.
    Observe that, by conditions~\ref{div:cond1} and~\ref{div:cond2} of~\Cref{def:division}, for every connected component $C$ of $G\setminus Z$, there is an
    $F\in\mathcal{F}$ such that $C$ is a subgraph of $F$ and $N(C) \subseteq \partial F$.
    Therefore, condition~\ref{div:cond6} of~\Cref{def:division} implies that for every connected component $C$ of $G\setminus Z$, $|N(C)|\le 3$ and there are $|N(C)|$ pairwise vertex-disjoint paths in $G$ between $N(C)$ and $V(\Omega)$.
    Due to~\Cref{obs:can-div}, the latter condition
    implies that $\mathcal{F}'$ is a division of $(G,\Omega)$.

    To show that if $\mathcal{F}$ is rural, then $\mathcal{F}'$ is also rural, we fix an embedding $\gamma$ of $G_{\mathcal{F}}$ in a closed disk $\Delta$ certifying that $\mathcal{F}$ is rural. Our aim is to define an embedding $\gamma'$ of $G_{\mathcal{F}'}$ in $\Delta$ certifying that $\mathcal{F}'$ is rural.
    First, we set $\gamma'(v)\coloneqq \gamma(v)$, for every $v\in Z$.
    We distinguish cases for every (vertex of $G_{\mathcal{F}'}$) $F\in \mathcal{F}'$, depending on the degree of vertex $F$ in  $G_{\mathcal{F}'}$;
    recall that every vertex in $V(G_{\mathcal{F}'})\cap \mathcal{F}'$ has degree at most three because of condition~\ref{div:cond6} of~\Cref{def:division}.

    First observe that because of condition~\ref{div:cond3} of~\Cref{def:division}, if $C_1,C_2$ are connected components of $G\setminus Z$, where $N(C_1)=N(C_2)$ and $|N(C_1)|=3$, then there is a flap $F\in\mathcal{F}$ such that both $C_1$ and~$C_2$ are subgraphs of $F$.
    This implies that for every $F'\in\mathcal{F}'$ such that $|\partial F'|=3$, there is a flap $F\in\mathcal{F}$ such that $F'$ is a subgraph of $F$ and $\partial F = \partial F'$.
    In other words, in the graph $G_{\mathcal{F}'}$, for each vertex $F'$ in $V(G_{\mathcal{F}'})\cap \mathcal{F}'$ of degree three in $G_{\mathcal{F}'}$, there is a vertex $F$ in $V(G_{\mathcal{F}})$ with the same neighborhood as $F'$.
    We set $\gamma'(F')\coloneqq \gamma(F)$.
    Also, note that is no other $F''\in\mathcal{F}'$ that is a subgraph of $F$ and $\partial F = \partial F''$. This implies that there is no other $F''\in\mathcal{F}'$ so that $\gamma'(F'') = \gamma(F)$.

    Also, note that for each $F'\in\mathcal{F}'$ such that $|\partial F'|\le 2$,
     there is a collection $A_{F'}$ of flaps in $\mathcal{F}$ such that $F'$ is a subgraph of the union of the graphs in $A_{F'}$ and for every $F\in A_{F'}$, 
    $\partial F'\subseteq \partial F$.
    If there is some $F\in A_{F'}$ such that $\partial F=\partial F'$, then we set $\gamma'(F')\coloneqq \gamma(F)$ and observe that there is no other $F''\in\mathcal{F}'$ with $\partial F=\partial F''$.
    Otherwise, we pick a face of $G_{\mathcal{F}}$ that is incident to a vertex in $V(G_{\mathcal{F}}) \cap A_{F'}$ and we set $\gamma'(F')$ to be a point inside this face. It is now easy to see that $\gamma'$ can be extended to an embedding of $G_{\mathcal{F}'}$ into $\Delta$ certifying that $\mathcal{F}'$ is rural.
\end{proof}

By \Cref{lem:division-to-canonical}, we have the following.
\begin{corollary}\label{cor:existence-canonical-rural}
    For every society $(G,\Omega)$,
    there is a rural division of $(G,\Omega)$
    if and only if there is a vertex subset $Z\subseteq V(G)$ such that $\Omega \subseteq Z$ and the canonical $Z$-split of $G$ is a rural division of $(G,\Omega)$.
\end{corollary}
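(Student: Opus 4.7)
The plan is to prove this corollary as a direct consequence of \Cref{lem:division-to-canonical}, handling the two implications separately.

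The backward direction is immediate: if there exists $Z \subseteq V(G)$ with $\Omega \subseteq Z$ such that the canonical $Z$-split of $G$ is a rural division of $(G,\Omega)$, then the canonical $Z$-split itself witnesses the existence of a rural division, so nothing more needs to be said.

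For the forward direction, assume that $(G,\Omega)$ admits a rural division $\mathcal{F}$. I would take $Z \coloneqq \partial \mathcal{F}$ and verify that this choice works. First, I would check that $\Omega \subseteq Z$: by condition \ref{div:cond1} of \Cref{def:division}, every vertex of $G$ lies in some flap $F \in \mathcal{F}$, and by definition of $\partial F$, any vertex of $V(F) \cap \Omega$ belongs to $\partial F$, hence $\Omega \subseteq \partial \mathcal{F} = Z$. Second, I would apply \Cref{lem:division-to-canonical} to $\mathcal{F}$ to conclude that the canonical $\partial \mathcal{F}$-split of $G$ is a rural division of $(G,\Omega)$, yielding exactly the desired $Z$.

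Since both directions are essentially bookkeeping — the backward direction is trivial and the forward direction just combines the containment $\Omega \subseteq \partial \mathcal{F}$ (which follows immediately from the definitions) with the already-established \Cref{lem:division-to-canonical} — there is no real obstacle here. The ``hard work'' of transferring rurality from an arbitrary division to the canonical one has been done inside \Cref{lem:division-to-canonical}, and the corollary just repackages this fact into the form most convenient for the subsequent logical definability arguments (where quantifying over a single vertex set $Z$ is far more natural than quantifying over an arbitrary collection of subgraphs).
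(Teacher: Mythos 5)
Your proposal is correct and matches the paper's approach exactly: the paper simply states ``By \Cref{lem:division-to-canonical}, we have the following,'' and your proof unpacks that one-liner into the trivial backward direction plus the forward direction obtained by taking $Z = \partial\mathcal{F}$, observing $\Omega \subseteq \partial\mathcal{F}$ (which the paper itself notes just after defining $\partial\mathcal{F}$), and invoking \Cref{lem:division-to-canonical}.
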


For a given society $(G,\Omega)$ and vertex subset $Z\subseteq V(G)$ with $\Omega\subseteq Z$,
\Cref{cor:existence-canonical-rural}
implies that the existence of a rural division $\mathcal{F}$ of $(G,\Omega)$ with $Z=\partial \mathcal{F}$ can be equivalently translated to checking whether the canonical $Z$-split of $G$ is a rural division of $(G,\Omega)$.
The latter is easier to handle in logic, since the flaps of the canonical $Z$-split can be bijectively mapped to a set of representative edges of $G$ (such representative edges exist because of condition~\ref{div:cond5} of~\Cref{def:division} and under the assumption that $G$ does not contain isolated vertices) and each flap $F$ of the canonical $Z$-split can be determined only by its boundary $\partial F$ and the corresponding representative edge.





\paragraph{Defining rural divisions.}
We now show that we can define the fact that a canonical $Z$-split is a rural division.
For this, we first note that one can define in $\mathsf{MSO}_2$ the fact that a vertex $v$ belongs to $\partial F_e$, for the flap $F_e$ of the canonical $Z$-split of $G$ that contains an edge $e\in E(G)$.
We state this in the following observation.
Given a graph $G$, a vertex subset $Z\subseteq V(G)$, and an edge $e=vu\in E(G)$,
we call the \emph{$Z$-boundary} of $e$ in $G$ to be the set $S$ defined as follows:
\begin{itemize}[nosep]
    \item if $\{v,u\}\subseteq Z$, then $S\coloneqq \{v,u\}$; and
    \item if $C$ is the connected component of $G\setminus Z$ that contains an endpoint of $e$, then $S\coloneqq N(C)$. 
\end{itemize}
Translating this definition to $\mathsf{MSO}_2$ gives immediately the following.

\begin{observation}\label{obs:def-sep}
    There is an $\mathsf{MSO}_2$-formula $\mathsf{boundary}(X,w,z)$, where $X$ is a vertex set variable, $w$ is a single edge variable and $z$ is a single vertex variable,
    such that the following holds.
    Given a graph $G$, a vertex subset $Z\subseteq V(G)$, an edge $e\in E(G)$ and a vertex $v\in V(G)$,
    $G\models \mathsf{boundary}(Z,e,v)$ if and only if $v$ belongs to the $Z$-boundary of $e$ in $G$.
\end{observation}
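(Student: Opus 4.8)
The plan is to build $\mathsf{boundary}(X,w,z)$ as a disjunction $\psi_1 \vee \psi_2$, one disjunct per bullet in the definition of the $Z$-boundary. For $\psi_1$ I would simply assert that both endpoints of the edge $w$ lie in $X$ and that $z$ is an endpoint of $w$; using the incidence predicate $\mathsf{inc}$ this is a routine formula with two existential vertex quantifiers, recalling that in a simple loopless graph every edge has exactly two distinct endpoints. For $\psi_2$ I would existentially quantify a vertex set $Y$ that is intended to be the connected component $C$ of $G-Z$ containing an endpoint of $e$ lying outside $Z$, and then require that $z\in X$ and that $z$ has a neighbour inside $Y$. The two disjuncts are mutually exclusive: if both endpoints of $w$ are in $X$ then $\psi_2$ fails, because $Y$ will be forced disjoint from $X$ and hence cannot contain an endpoint of $w$; and if some endpoint of $w$ lies outside $X$ then $\psi_1$ fails. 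So taking the disjunction is harmless.

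The only point that needs a short argument is that the quantified set $Y$ is forced to equal the intended component. In $\psi_2$ I would demand: (i) $Y\cap X=\emptyset$; (ii) $Y$ is connected, expressed by the standard $\mathsf{MSO}_2$ formula stating that $Y$ is nonempty and that every partition of $Y$ into two nonempty parts has an edge crossing it; (iii) $Y$ contains an endpoint of $w$; and (iv) $Y$ is ``closed'', i.e., every vertex adjacent to a vertex of $Y$ already lies in $Y\cup X$ (equivalently $N_G(Y)\subseteq X$). All four conditions are plainly expressible in $\mathsf{MSO}_2$ using only vertex and edge variables together with $\mathsf{inc}$, adjacency of two vertices being ``there is an edge incident to both of them''. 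Conditions (i) and (iv) force $Y$ to be a union of connected components of $G-Z$, and (ii) then pins it to a single component; by (iii) this is exactly the component $C$ from the definition, which is well-defined because if both endpoints of $e$ avoid $Z$ they lie in a common component of $G-Z$. Finally, $z\in N(C)$ is equivalent to ``$z$ is adjacent to a vertex of $C$ and $z\notin C$''; since $N(C)\subseteq X$ and $C$ is disjoint from $X$, asking $z\in X$ together with $z$ having a neighbour in $Y$ captures $z\in N(C)$ precisely, which completes the correctness check. A symmetric remark handles the mixed case where one endpoint of $e$ lies in $Z$ and the other does not: that endpoint in $Z$ is adjacent to $C$ and hence belongs to $N(C)$, and this is detected by $\psi_2$ via closure.

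I do not expect a genuine obstacle here: every ingredient (connectivity of a vertex set, adjacency via a quantified edge, closure of a set under neighbours modulo $X$, being an endpoint of an edge) is a standard $\mathsf{MSO}_2$ gadget, and the one thing to verify is that (i)--(iv) determine $Y$ uniquely as the relevant component, which is immediate. Since the formula mentions $X$, $w$, $z$ only through its free variables, a single $\mathsf{MSO}_2$-formula works uniformly for all inputs $G$, $Z$, $e$, $v$, as claimed.
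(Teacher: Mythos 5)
Your construction is correct and takes the same approach the paper treats as immediate: a disjunction over the two cases of the definition of $Z$-boundary, with the second case handled by existentially quantifying the component $C$ of $G\setminus Z$ and pinning it down via connectivity, disjointness from $X$, closure ($N_G(Y)\subseteq X$), and the requirement that it contain an endpoint of $w$. The paper gives no detailed proof since it considers the translation routine; your elaboration simply spells out the standard $\mathsf{MSO}_2$ gadgets and is sound, including the mutual-exclusivity check that prevents the two disjuncts from interfering.
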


\paragraph{Certifying a rural division.}
Let $(G,\Omega)$ be a society, where $G$ does not contain isolated vertices.
Also, let $Z$ be a subset of $V(G)$ and $R$ be a subset of $E(G)$.
We say that $(Z,R)$ \emph{certifies} a rural division of $(G,\Omega)$
if $\Omega\subseteq Z$, the canonical $Z$-split $\mathcal{F}$ of $G$ is a rural division of $(G,\Omega)$,
and $R$ is a set consisting of exactly one edge for each $F\in\mathcal{F}$.

\begin{lemma}\label{lem:define-rural}
    For every $\ell\in\mathbb{N}$
    there is an $\mathsf{MSO}_2$-formula $\mathsf{rural}(X,Y,\tup x)$, where $X$ is a vertex set variable, $Y$ is an edge set variable, and $\tup x=(x_1,\ldots,x_\ell)$ is a tuple of $\ell$ vertex variables,
    such that for every society $(G,\Omega)$ where $|\Omega|=\ell$ and $G$ does not contain isolated vertices, every vertex subset $Z\subseteq V(G)$, and every edge subset $R\subseteq E(G)$,
    it holds that 
    $G\models \mathsf{rural}(Z,R,\tup u)$ if and only if $(Z,R)$ certifies a rural division of $(G,\Omega)$, where $\tup u$ is the enumeration of the elements in $\Omega$ in their order in $\Omega$.
\end{lemma}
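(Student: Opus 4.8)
The plan is to build the formula $\mathsf{rural}(X,Y,\tup x)$ as a conjunction of $\mathsf{MSO}_2$-definable assertions that together are equivalent to ``$(Z,R)$ certifies a rural division of $(G,\Omega)$''. Recall that, by \Cref{cor:existence-canonical-rural} and the surrounding discussion, the essential simplification is that once $Z$ is fixed (with $\Omega\subseteq Z$), the candidate division is forced to be the canonical $Z$-split $\mathcal{F}$, whose flaps are in bijection with a set $R$ of representative edges (one per flap), and each flap $F_e$ is completely determined by $e$ together with its $Z$-boundary. Hence the formula never needs to quantify over the family $\mathcal{F}$ of subgraphs directly; it only manipulates the two sets $Z$ and $R$ and uses \Cref{obs:def-sep} to talk about the boundary $\partial F_e$ of the flap containing an edge $e$. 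I would write $\mathsf{rural}(X,Y,\tup x)$ as the conjunction of the following four parts.

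First, a \emph{well-formedness} part: $X$ contains all of $\tup x$ (i.e. $\Omega\subseteq Z$); and $Y$ is a valid system of flap-representatives, meaning that for every edge $e\in E(G)$ there is exactly one $f\in Y$ with the same $Z$-boundary as $e$ (using $\mathsf{boundary}$ to compare boundaries), and moreover two edges of $Y$ never have the same $Z$-boundary. This makes $R$ a genuine transversal of the equivalence classes of $\sim_Z$, so that the map $F_{\mathcal K}\mapsto$ its representative in $R$ is a bijection. Second, a \emph{division-conditions} part checking that the canonical $Z$-split indeed satisfies \Cref{def:division}: conditions \ref{div:cond1}–\ref{div:cond3} and \ref{div:cond5} hold automatically for a canonical split (by the discussion after \Cref{obs:can-div}), so the real content is condition \ref{div:cond6}, which by \Cref{obs:can-div} reduces to: for every connected component $C$ of $G\setminus Z$ one has $|N(C)|\le 3$ and there are $|N(C)|$ vertex-disjoint $N(C)$–$\Omega$ paths in $G$. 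The bound $|N(C)|\le 3$ is $\mathsf{MSO}_2$-expressible using $\mathsf{boundary}$ (the $Z$-boundary of any edge incident to $C$ is $N(C)$, and one asserts no four distinct vertices lie in it); and the existence of $j\le 3$ disjoint paths from a fixed vertex set of size $j$ to $\Omega$ is $\mathsf{MSO}_2$-expressible since $j$ is bounded — one existentially quantifies $j$ vertex-disjoint connected subgraphs each meeting both the boundary and $\Omega$ (a standard $\mathsf{MSO}_2$ trick for a bounded number of disjoint paths). Condition \ref{div:cond4} (internal connectivity of each flap between its boundary vertices) is similarly a bounded statement about each flap $F_e$: for $e\in R$ and $v,u\in\partial F_e$ there is a connected subgraph of $F_e$ containing $v,u$ but no other boundary vertex; here membership of a vertex/edge in $F_e$ is expressed by ``its $Z$-boundary equals that of $e$'' plus the obvious incidence bookkeeping.

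Third, the \emph{rurality} part: $\mathcal{F}$ is rural iff $(G_{\mathcal F},\Omega)$ is rural, i.e. the bipartite graph $G_{\mathcal F}$ (on $\mathcal F\cup\partial\mathcal F$) embeds in a disk with $\Omega$ on the boundary in the prescribed order. Since planarity in a disk with a prescribed boundary order is a minor-type (equivalently, forbidden-substructure) condition, it is $\mathsf{MSO}_2$-expressible on $G_{\mathcal F}$; and $G_{\mathcal F}$ is itself interpretable inside $G$ from $(Z,R)$ — its vertices are $Z\cup R$, with $v\in Z$ adjacent to $e\in R$ exactly when $\mathsf{boundary}(Z,e,v)$ holds. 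Concretely I would phrase ``rural'' as: the graph $G^+$ obtained from $G_{\mathcal F}$ by adding a new apex vertex adjacent to all of $\Omega$ and a cycle through $\Omega$ in the order $\tup x$ is planar, which is $\mathsf{MSO}_2$-definable by the Kuratowski/Wagner characterization (forbidding $K_5$ and $K_{3,3}$ minors, each expressible as in \Cref{obs:mso-minor}). Finally I would verify that each conjunct really is a first-class $\mathsf{MSO}_2$-formula over $G$ with the stated free variables by routinely unfolding the quantifications — the only nontrivial relativizations are ``$x$ lies in flap $F_e$'', which is handled uniformly via $\mathsf{boundary}$.

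\textbf{Main obstacle.} The delicate point is the rurality conjunct: one must be careful that the $\mathsf{MSO}_2$-interpretation of the auxiliary graph $G_{\mathcal F}$ inside $G$ is correct (in particular that the edge $e\mapsto$ flap correspondence given by $R$ matches the true incidences $\partial F_e$), and that the ``planar in a disk with prescribed boundary cyclic order'' condition is faithfully captured by the apex-plus-boundary-cycle planarity gadget. Verifying that the cyclic order induced by $\tup x$ along the outer face is exactly the linear order of $\Omega$ (up to rotation), rather than some other embedding, requires the gadget to pin down the boundary; the cycle through $\Omega$ together with the apex does this, but this correctness argument is where the bulk of the non-routine work lies. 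Everything else is a bounded-arity disjoint-paths / forbidden-minor encoding of the kind already used in \Cref{obs:mso-minor} and \Cref{lem:ds-tw}.
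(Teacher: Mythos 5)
Your proposal is correct and follows essentially the same route as the paper: assert $\Omega\subseteq Z$, use $\mathsf{boundary}$ and \Cref{obs:can-div} to express that the canonical $Z$-split is a division, check that $R$ is a transversal of the flaps, and encode rurality as planarity of $G_{\mathcal F}$ augmented with an apex vertex adjacent to $\Omega$ together with a boundary cycle through $\Omega$. Two small remarks. First, condition~\ref{div:cond4} of \Cref{def:division} holds automatically for a canonical $Z$-split (the paper notes that the canonical split satisfies all conditions except~\ref{div:cond6}), so your proposed conjunct checking internal connectivity of flaps is redundant, though harmless. Second, the step you describe as ``routinely unfolding the quantifications'' to move the planarity sentence from $\widehat G^+$ down to a formula over $G$ is precisely where the paper does nontrivial bookkeeping (\Cref{cl:formula-removing-vertex}): it carries out an inductive simulation of the new apex vertex $v^+$ and the boundary-cycle edges via case analysis over ``potential extensions'' of variable assignments, equivalently an application of the Backward Translation Theorem for $\MSO_2$ interpretations. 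This is a standard tool, as the paper itself remarks, but it deserves to be named explicitly since $v^+$ and the cycle edges are not features of $G$ and cannot be quantified over directly.
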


\begin{proof}
First observe that $\Omega\subseteq Z$ can be directly checked by asking that $x_i \in X$ for every $i\in[\ell]$.
    
        

    
    By the definition of a canonical $Z$-split and~\Cref{obs:can-div},
    we derive that the canonical $Z$-split of $G$ is a division of $(G,\Omega)$ if and only if
    the following holds:
    \begin{quote}
        for every connected component $C$ of $G\setminus Z$, 
        $|N(C)|\le 3$, and there are $|N(C)|$ pairwise vertex-disjoint paths in $G$ between $N(C)$ and $\Omega$.
    \end{quote}
    It is easy to observe that the above property can be expressed in $\mathsf{MSO}_2$ via a formula with $X$ and $\tup x$ as free variables.
    To express that $R$ is a set consisting of exactly one edge for each flap in $\mathcal{F}$, we use~\Cref{obs:def-sep} to write an $\mathsf{MSO}_2$-formula asking that
    for every $e\in E(G)$ there is exactly one $e'\in R$ such that for each $v\in V(G)$, it holds that
    $G\models \mathsf{boundary}(Z,e,v)\iff \mathsf{boundary}(Z,e',v)$.

    To check whether $\mathcal{F}$ is rural,
    we have to verify that the society $(\widehat{G},\Omega)$, where $\widehat{G}\coloneqq (V,E)$ with $V=Z\cup R$ and $E=\{(v,e)\in Z\times R: G\models \mathsf{boundary}(Z,e,v)\}$, is rural.
    Our aim is to write a formula $\varphi(X,Y,\tup x)$ such that the following statements are equivalent.
    \begin{itemize}[nosep]
        \item[(a)] $G\models\varphi(Z,R,\tup u)$;
        \item[(b)] $\widehat{G}$ can be embedded in a closed disk $\Delta$ with the elements of $\Omega$ appearing in the boundary of $\Delta$ in their order in $\Omega$.
    \end{itemize}
    For this, we first transform $\widehat{G}$ to a new graph $\widehat{G}^+$ as follows: we add an extra vertex $v^+$ (i.e., $V(\widehat{G}^+)\coloneqq V(\widehat{G})\cup\{v^+\}$) and make it adjacent to all elements of $\Omega$, while adding an edge between two vertices in $\Omega$ if they appear consecutively in the ordering of $\Omega$ (i.e., $E(\widehat{G}^+)=E(\widehat{G})\cup E^+$, where $E^+$ is the set of all pairs $v^+u$, for $u\in \Omega$ and all $uu'$ for $u$ and $u'$ in $\Omega$ in consecutive order).
    Observe that (b) is equivalent to $\widehat{G}^+$ being planar. Also note that planarity can be expressed by an $\mathsf{MSO}_2$-sentence $\varphi_{\mathsf{planar}}$, by expressing that the graph does not contain $K_5$ or $K_{3,3}$ as a minor. In the next claim we show that planarity of $\widehat{G}^+$ can be expressed already in $\widehat{G}$, by ``emulating'' the features of $\widehat{G}^+$ not present in $\widehat{G}$. In fact, this can be argued very easily using more elaborate logic tools: observing that $\widehat{G}^+$ can be $\MSO_2$-interpreted in $\widehat{G}$ with $\Omega$ marked using a tuple of constants, and using the Backward Translation Theorem for $\MSO_2$ interpretations, see e.g.~\cite{CourcelleE12}. However, for the sake of concreteness we give a direct proof~here.


    \begin{claim}\label{cl:formula-removing-vertex}
        For every $\mathsf{MSO}_2$ sentence $\varphi$, there is an $\mathsf{MSO}_2$-formula $\widehat{\varphi}(\tup x)$ such that we have
        $$\widehat{G}\models \widehat{\varphi}(\tup u) \qquad\textrm{if and only if}\qquad
        \widehat{G}^+\models \varphi,$$
        where $\tup u\in \Omega^{\tup x}$ is the enumeration of $\Omega$ in order.
    \end{claim}

\begin{claimproof}
 Let $F\coloneqq \{v^+\}\cup E^+\cup \{\bot\}$ be the set of features present in $\widehat{G}^+$ but not in $\widehat{G}$, extended by a marker $\bot$.
 Suppose $\tup X$ is a tuple of variables (of all possible sorts). A {\em{potential extension}} of $\tup X$ is any evaluation $\tup R\in F^{\tup X}$ with the property that single vertex variables are evaluated to elements of $\{v^+,\bot\}$, single edge variables are evaluated to elements of $E^+\cup \{\bot\}$, monadic vertex variables are evaluated to subsets of $\{v^+\}$, and monadic edge variables are evaluated to subsets of $E^+$. For an evaluation $\tup U\in \widehat{G}^{\tup X}$ and a potential extension $\tup R$ of $\tup X$, we define the evaluation $\tup U[\tup R]\in (\widehat{G}^+)^{\tup X}$ as follows:
 \begin{itemize}[nosep]
  \item If $x\in \tup X$ is a single vertex/edge variable, then $\tup U[\tup R](x)=\tup R(x)$ provided $\tup R(x)\neq \bot$, and otherwise $\tup U[\tup R](x)=\tup U(x)$.
  \item If $X\in \tup X$ is a monadic variable, then $\tup U[\tup R](X)=\tup U(X)\cup \tup R(X)$.
 \end{itemize}
 Now, for every subformula $\psi(\tup X)$ of $\varphi$ and for every potential extension $\tup R$, we construct a formula $\psi_{\tup R}(\tup X,\tup x)$ with the following property: for every $U\in \widehat{G}^{\tup X}$, we have
 \[\widehat{G}\models \psi_{\tup R}(\tup U,\tup u)\qquad\textrm{if and only if}\qquad \widehat{G}^+\models \psi(\tup U[\tup R]).\]
 Note that for the whole sentence $\varphi$ there is only one empty possible extension $\emptyset$, because there are no free variables, and then we can set $\widehat{\varphi}(\tup x)\coloneqq \varphi_{\emptyset}(\tup x)$.

 The construction is inductive on the structure of $\varphi$. First, if $\psi(\tup X)$ is an atomic formula, then $\psi_{\tup R}(\tup X,\tup x)$ can be constructed by a direct case study. For instance, if $\psi(\tup X)=(y\in Y)$ for some variables $y,Y\in \tup X$, then we have $\psi_{\tup R}(\tup X,\tup x)=\psi(\tup X)$ if $\tup R(y)=\bot$, and otherwise $\psi_{\tup R}(\tup X,\tup x)$ is an always true or an always false formula depending on whether $\tup R(y)\in \tup R(Y)$. Similarly for other atomic formulas, where the construction for the incidence predicate requires the usage of variables $\tup x$ that are evaluated to the elements of $\Omega$ in order. Next, if $\psi(\tup X)=\psi'(\tup X)\vee \psi''(\tup X)$ then we can set $\psi_{\tup R}(\tup X,\tup x)=\psi'_{\tup R}(\tup X,\tup x)\vee \psi''_{\tup R}(\tup X,\tup x)$, and similarly if $\psi(\tup X)=\neg \psi'(\tup X)$, then we can set $\psi_{\tup R}(\tup X,\tup x)=\neg \psi'_{\tup R}(\tup X,\tup x)$. Finally, if $\psi(\tup X)=\exists_{Z}\ \psi'(\tup X,Z)$ (where $Z$ is a variable of any sort), then we write
 $$\psi_{\tup R}(\tup X,\tup x)=\bigvee_{\tup R'}\ \exists_Z\ \psi_{\tup R'}(\tup X,Z,\tup x),$$
 where the disjunction ranges over all potential extensions $\tup R'\in F^{\tup X,Z}$ that extend $\tup R$. As conjunction and universal quantification can be replaced with disjunction and existential quantification using de Morgan's law, this exhausts all the cases.
\end{claimproof}

Using~\Cref{cl:formula-removing-vertex}, we turn the sentence $\varphi_{\mathsf{planar}}$ to a formula $\widehat{\varphi}(\tup x)$ such that
$\widehat{G}\models \widehat{\varphi}(\tup u)$ if and only if $\widehat{G}^+$ is planar. The claimed formula $\varphi(X,Y,\tup x)$ that certifies that $(\widehat{G},\Omega)$ is rural can be obtained from $\widehat{\varphi}(\tup u)$ by restricting the quantification to $Z\cup R$ and interpreting the incidence relation using the formula $\mathsf{boundary}(\cdot,\cdot,\cdot)$, i.e., replacing each atomic formula $\mathsf{inc}(x,y)$ by $\mathsf{boundary}(X,x,y)$.
\end{proof}

\subsubsection{Defining attachment schemes in logic}
In this subsection, we provide additional definitions needed in order to define attachment schemes in~$\MSO_2$.


\paragraph{Samplings of flaps.}
Let $(G,\Omega)$ be a society, let $X\subseteq V(G)\setminus \Omega$, and let $\mathcal{F}$ be a rural division of $(G\setminus X,\Omega)$.
In order to obtain an $X$-extension scheme of $\mathcal{F}$, we define \emph{$k$-samplings} of $\mathcal{F}$.
Given $k\ge 0$, let $\tup L=(L_1,\ldots,L_k)$ be a tuple of $k$ subsets of $\partial \mathcal{F}$ and let 
$\tup S=(S_1,\ldots,S_k)$ be a tuple of $k$ collections of flaps in $\mathcal{F}$.
We say that $(\tup L,\tup S)$ is a \emph{$k$-sampling} of $\mathcal{F}$ if the following hold:
\begin{itemize}[nosep]
    \item $\tup L$ is a partition of $\partial \mathcal{F}$; and
    \item for every $i\in[k]$ and every $v\in L_i$, there is exactly one $F\in S_i$ such that $v\in\partial F$.
\end{itemize}
Observe that every $k$-sampling $(\tup L,\tup S)$ of $\mathcal{F}$ uniquely defines an $X$-extension scheme $\xi_{\tup L,\tup S}$ of $\mathcal{F}$ as follows.
Consider a vertex $v\in\partial \mathcal{F}$ and a flap 
$F\in\mathcal{F}$ such that $v\in \partial F$.
Let also $i\in[k]$ be such that $v\in L_i$ (such an $i$ exists since $\tup L$ is a partition of $\partial \mathcal{F}$).
If $F\in S_i$, we set $A_{v}^{F}$ to be the set of all edges of $G$ of the form $vu$, where $u\in X$.
If $F\notin S_i$, we set $A_{v}^{F}=\emptyset$.  
Note that the second condition of the definition of samplings implies that, given a sampling of $\mathcal{F}$, for every $v\in\partial \mathcal{F}$ there is at most one $F\in\mathcal{F}$ such that $A_{v}^{F}\neq \emptyset$. 
For every $F$, we set $E_F\coloneqq\bigcup_{v\in\partial F} A_{v}^{F}$ and observe that the collection $\xi_{\tup L,\tup S}\coloneqq\{E_F: F\in\mathcal{F}\}$ is an $X$-extension scheme of $\mathcal{F}$.

We say that an $X$-extension scheme $\xi$ of $\mathcal{F}$ is \emph{derived} from a sampling $(\tup L,\tup S)$ of $\mathcal{F}$, if $\xi = \xi_{\tup L,\tup S}$, where $\xi_{\tup L,\tup S}$ is defined from $(\tup L,\tup S)$ as above.

Samplings can be defined in logic, using a straightforward encoding of their definition and the formula from~\Cref{obs:def-sep}.

\begin{observation}\label{obs:formula-sampling}
    Given $k\ge 0$, there is an $\mathsf{MSO}_2$-formula $\mathsf{sampling}(X_0,Y_0,\tup X,\tup Y)$, where $X_0$ is a vertex set variable, $Y_0$ is an edge set variable, $\tup X=(X_1,\ldots,X_k)$ is a tuple of $k$ vertex set variables, and $\tup Y=(Y_1,\ldots,Y_k)$ is a tuple of $k$ edge set variables then the following holds. Given 
    \begin{itemize}[nosep]
        \item a graph $G$;
        \item a vertex subset $X\subseteq V(G)\setminus \Omega$ such that $G\setminus X$ does not contain isolated vertices;
        \item a vertex subset $Z\subseteq V(G\setminus X)$ and an edge subset
        $R\subseteq E(G\setminus X)$ such that
        $(Z,R)$ certifies a rural division $\mathcal{F}$ of $(G\setminus X,\Omega)$; and
        \item tuples $\tup L\in G^{\tup X}$ and $\tup E\in G^{\tup Y}$;
    \end{itemize}
    the following conditions are equivalent:
    \begin{itemize}[nosep]
        \item $G\models\mathsf{sampling}(Z,R,\tup L, \tup E)$, and 
        \item for every $i\in [k]$ we have $\tup E(Y_i)\subseteq R$, and  
        if
        $\tup S = (S(Y_1),\ldots,S(Y_k))$ where $S(Y_i)$ is the set of flaps of~$\mathcal{F}$ that contain edges of $\tup E(Y_i)$, then $(\tup L,\tup S)$ is a $k$-sampling of $\mathcal{F}$.
    \end{itemize}    
\end{observation}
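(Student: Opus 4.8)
The plan is to translate the definition of a $k$-sampling almost verbatim into $\mathsf{MSO}_2$, using the formula $\mathsf{boundary}$ of \Cref{obs:def-sep} as the only non-trivial building block. The single conceptual point is that the flaps of a canonical $X_0$-split are subgraphs of unbounded size and hence cannot be quantified over directly; but under the hypothesis that $(X_0,Y_0)$ certifies a rural division, the set $Y_0$ contains exactly one edge per flap, so each flap $F$ may be identified with its unique representative edge $e_F\in Y_0$, and then ``$v\in\partial F$'' is precisely ``$v$ lies in the $X_0$-boundary of $e_F$'', i.e.\ $\mathsf{boundary}(X_0,e_F,v)$. (Here $\mathsf{boundary}$ is to be evaluated with respect to $G\setminus X$ rather than $G$; formally one relativizes its quantifiers to $V(G)\setminus X$, which is the only place where $X$ enters the construction, and we keep $X$ implicit in the notation as in the statement of the observation.) Two consequences that will be used repeatedly: distinct edges of $Y_0$ represent distinct flaps, and two edges lie in the same flap iff they have the same $X_0$-boundary.

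With this encoding fixed, I would let $\mathsf{sampling}(X_0,Y_0,\tup X,\tup Y)$ be the conjunction of three subformulas. The first is $\bigwedge_{i\in[k]}\forall e\,(e\in Y_i\rightarrow e\in Y_0)$, expressing $\tup E(Y_i)\subseteq R$ for every $i$; granted this, $S(Y_i)$ equals $\{F_e:e\in\tup E(Y_i)\}$, a set of flaps with pairwise distinct representatives. The second subformula expresses that $(\tup L(X_1),\dots,\tup L(X_k))$ is a partition of $\partial\mathcal{F}$: I first define the set $\partial\mathcal{F}$ as $\{z:\exists e\,(e\in Y_0\wedge\mathsf{boundary}(X_0,e,z))\}$ (indeed the union of the boundaries of all flaps), and then write the standard $\mathsf{MSO}_2$ encoding of ``partition'' --- each $X_i$ is contained in $\partial\mathcal{F}$, the $X_i$ cover $\partial\mathcal{F}$, and the $X_i$ are pairwise disjoint. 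The third subformula encodes the second clause of the definition of a $k$-sampling: for each $i\in[k]$ and each $v\in\tup L(X_i)$, exactly one flap of $S(Y_i)$ has $v$ on its boundary. By the representative-edge encoding and the fact that distinct edges of $Y_i\subseteq Y_0$ sit in distinct flaps, this is equivalent to the existence of a \emph{unique} $e\in Y_i$ with $\mathsf{boundary}(X_0,e,v)$, which is directly expressible:
\[
\bigwedge_{i\in[k]}\forall v\,\Big(v\in X_i\;\rightarrow\;\exists e\,\big(e\in Y_i\wedge\mathsf{boundary}(X_0,e,v)\wedge\forall e'\,((e'\in Y_i\wedge\mathsf{boundary}(X_0,e',v))\rightarrow e=e')\big)\Big).
\]

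Finally I would verify the claimed equivalence by unwinding the three conjuncts. Given $G$, $X$, $Z$, $R$, $\tup L$, $\tup E$ as in the hypotheses: the first conjunct holds iff $\tup E(Y_i)\subseteq R$ for all $i$; assuming this, the second conjunct holds iff $(\tup L(X_1),\dots,\tup L(X_k))$ partitions $\partial\mathcal{F}$, and the third holds iff for every $i$ and every $v$ in the $i$-th part there is exactly one flap of $S(Y_i)$ with $v$ on its boundary --- that is, iff $(\tup L,\tup S)$ with $\tup S=(S(Y_1),\dots,S(Y_k))$ satisfies the two defining conditions of a $k$-sampling of $\mathcal{F}$. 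Combining these gives both directions. I do not expect any genuine obstacle here: the construction is a mechanical translation once the representative-edge encoding of flaps is adopted, and the only points needing care are (i) never quantifying over the unbounded flap-subgraphs and (ii) evaluating $\mathsf{boundary}$ with respect to $G\setminus X$.
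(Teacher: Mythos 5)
Your proposal is correct, and it takes the only natural route --- indeed the paper gives no proof at all for this observation, merely asserting that it follows from ``a straightforward encoding of their definition and the formula from \Cref{obs:def-sep}.'' Your representative-edge encoding of flaps (identify a flap $F$ with its unique representative edge in $Y_0=R$, and replace ``$v\in\partial F$'' by $\mathsf{boundary}(X_0,e_F,v)$), your three-conjunct formula, and the verification are exactly the intended unpacking; the subtlety you correctly flag --- that $\mathsf{boundary}$ must be evaluated relative to $G\setminus X$, with quantifiers relativized accordingly, since the division lives in $G\setminus X$ --- is a real point the paper's statement glosses over (as $X$ is not a free variable of $\mathsf{sampling}$), and your handling of it is the right fix.
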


We next show the existence of $k$-samplings of rural divisions, for some constant $k\in\mathbb{N}$.

\begin{lemma}\label{lem:existence-sampling}
    There is a constant $k\in\mathbb{N}$ such that if $(G,\Omega)$ is a society and $X\subseteq V(G)\setminus \Omega$, where $G\setminus X$ does not contain isolated vertices, then every rural division of $(G\setminus X,\Omega)$ admits a $k$-sampling.
\end{lemma}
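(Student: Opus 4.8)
The plan is to recast the existence of a bounded-size sampling as a proper-coloring problem on $\partial\mathcal{F}$ and then exploit planarity of the rural division. Fix any map $\phi\colon\partial\mathcal{F}\to\mathcal{F}$ with $v\in\partial\phi(v)$ for every $v\in\partial\mathcal{F}$; such a map exists since, by definition of $\partial\mathcal{F}$, every $v\in\partial\mathcal{F}$ lies in $\partial F$ for some flap $F$. The role of $\phi$ is the following: for \emph{any} partition $\tup{L}=(L_1,\dots,L_k)$ of $\partial\mathcal{F}$, consider $\tup{S}=(S_1,\dots,S_k)$ with $S_i\coloneqq\{\phi(v):v\in L_i\}$. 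For $v\in L_i$ the flap $\phi(v)$ lies in $S_i$ and has $v$ on its boundary, so the only way $(\tup{L},\tup{S})$ can fail to be a sampling is that some other flap $F=\phi(w)\in S_i$ with $w\in L_i$ and $\phi(w)\ne\phi(v)$ also satisfies $v\in\partial F$; in that case $v,w\in\partial F$, so $v$ and $w$ both lie on the boundary of a common flap. Hence, if $\tup{L}$ consists of the color classes of a proper coloring of the graph $\mathcal{G}$ on vertex set $\partial\mathcal{F}$ in which $u,u'$ are adjacent exactly when some flap has both $u$ and $u'$ on its boundary, then $(\tup{L},\tup{S})$ is a $k$-sampling of $\mathcal{F}$.

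So it remains to bound the chromatic number of $\mathcal{G}$ by an absolute constant, and for this I would show that $\mathcal{G}$ is planar. Since $\mathcal{F}$ is rural, $G_{\mathcal{F}}$ embeds in a closed disk and in particular is planar; moreover each flap-vertex $F\in\mathcal{F}$ has degree $|\partial F|\le 3$ in $G_{\mathcal{F}}$ by condition~\ref{div:cond6} of~\Cref{def:division}, and $\mathcal{F}$ is an independent set in $G_{\mathcal{F}}$ (as $G_{\mathcal F}$ is bipartite with parts $\mathcal F$ and $\partial\mathcal F$). Starting from $G_{\mathcal{F}}$, process the flap-vertices one at a time: if $|\partial F|=3$, delete $F$ and add a triangle on $\partial F$ (a $Y$-$\Delta$ transformation); if $|\partial F|=2$, delete $F$ and add the edge on $\partial F$; if $|\partial F|\le 1$, simply delete $F$ (simplifying any parallel edges that arise). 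Each of these operations preserves planarity, and since flaps are pairwise non-adjacent in $G_{\mathcal{F}}$ the operations act on vertex-disjoint stars and therefore do not interfere with one another; after all flaps have been processed we obtain a planar graph on vertex set $\partial\mathcal{F}$ whose edge set is exactly $E(\mathcal{G})$. Thus $\mathcal{G}$ is planar, hence $5$-degenerate, hence properly colorable with $6$ colors, and we may take $k\coloneqq 6$ (allowing empty parts in the partition $\tup{L}$, or padding as needed; one could take $k\coloneqq 4$ via the Four Colour Theorem, but this is not needed).

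To conclude, take $k=6$, fix a proper $6$-coloring $c\colon\partial\mathcal{F}\to[6]$ of $\mathcal{G}$, and set $L_i\coloneqq c^{-1}(i)$ and $S_i\coloneqq\{\phi(v):v\in L_i\}$ for $i\in[6]$. Then $\tup{L}$ is a partition of $\partial\mathcal{F}$, and for every $i$ and every $v\in L_i$ the flap $\phi(v)\in S_i$ satisfies $v\in\partial\phi(v)$, while any flap $F\in S_i$ with $F\ne\phi(v)$ and $v\in\partial F$ would be of the form $F=\phi(w)$ for some $w\in L_i$ with $w\ne v$, giving $v,w\in\partial F$ and hence an edge of $\mathcal{G}$ between two vertices colored $i$, a contradiction. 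So $\phi(v)$ is the unique flap of $S_i$ containing $v$ on its boundary, and $(\tup{L},\tup{S})$ is a $k$-sampling of $\mathcal{F}$.

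The step I expect to be the main obstacle is establishing that $\mathcal{G}$ is planar: one must check that the iterated $Y$-$\Delta$ / vertex-suppression moves at the flap-vertices are legitimate — which works precisely because the flaps form an independent set in $G_{\mathcal{F}}$, so the moves act on pairwise vertex-disjoint stars — and that the resulting planar graph really carries all edges of $\mathcal{G}$. The choice of $\phi$ and the passage from a proper coloring to a sampling are then routine.
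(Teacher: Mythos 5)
Your proof is correct, but it takes a different route from the paper's. The paper proves \Cref{lem:existence-sampling} via \Cref{lem:star-col}, a general lemma about bipartite graphs of bounded \emph{star chromatic number}: given a star coloring $\tau$ of $G_{\mathcal{F}}$, one sends $u\in\partial\mathcal{F}$ to a part indexed by $(\tau(u),\tau(v(u)))$ for a chosen neighbour $v(u)$, and the star-forest condition guarantees the required uniqueness. The constant they obtain is (star chromatic number of planar graphs)$^2$, roughly $400$. You instead observe that, once the map $\phi$ fixing one flap per boundary vertex is chosen and $S_i$ is defined as $\phi(L_i)$, uniqueness fails only when two same-coloured boundary vertices lie on a common flap boundary; so an ordinary proper colouring of the ``conflict graph'' $\mathcal{G}$ on $\partial\mathcal{F}$ suffices, and planarity of $\mathcal{G}$ follows by $Y$--$\Delta$ and vertex suppression at the degree-$\le 3$ flap-vertices of $G_{\mathcal{F}}$. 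This is the same technique the paper uses for ordering guides (\Cref{lem:existence-ordering} via the planar bipartite lemma immediately following it), so in effect you unify the two lemmas under a single conflict-graph colouring argument and get a sharper constant ($6$ by degeneracy, or $4$ via the Four Colour Theorem). The star-coloring route is more general --- it does not need the flap-degree bound --- but here the degree bound $|\partial F|\le 3$ is available, so your argument is both shorter and yields a better bound.

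One small imprecision: you write that ``the operations act on vertex-disjoint stars and therefore do not interfere.'' The stars at distinct flap-vertices need \emph{not} be vertex-disjoint, since two flaps can share boundary vertices (they merely cannot share the same entire boundary, by condition~\ref{div:cond3} of \Cref{def:division}). This does not affect correctness: $Y$--$\Delta$ and suppression of a degree-$\le 3$ vertex preserve planarity unconditionally, and the flap-vertices form an independent set with degrees unchanged by the earlier moves, so the operations can be applied one at a time regardless of shared neighbours. Parallel edges arising from distinct flaps sharing a pair of boundary vertices can be suppressed without affecting planarity or the vertex set of $\mathcal{G}$. So the conclusion stands; only the stated reason for ``non-interference'' needs adjusting.
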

The proof of \Cref{lem:existence-sampling} is derived directly from the following lemma on bipartite graphs of bounded \emph{star chromatic number}, combined with the fact that the bipartite graph $G_{\mathcal{F}}$ obtained by a given rural division $\mathcal{F}$ is planar and that planar graphs have constant star chromatic number (see e.g.~\cite{AlbertsonCKKR04}). 
The \emph{star chromatic number} of a graph $G$ is the minimum number of colors needed to vertex-color $G$ with a proper coloring
so that every vertex receives exactly one color and the subgraph induced by the union
of each pair of color classes is a star forest (a disjoint union of stars).
Such a coloring
is a \emph{star coloring} of G.

\begin{lemma}\label{lem:star-col}
    Let $G$ be a bipartite graph with parts $(A,B)$ with no isolated vertices in $A$ and such that the star chromatic number of $G$ is at most $k$, for some $k\in\mathbb{N}$.
    Then there is a partition $\mathcal{L}$ of $A$ with at most $k^2$ parts and a function $\rho$ mapping $\mathcal{L}$ to subsets of $B$
    such that for every $L\in\mathcal{L}$ and every $u\in L$, it holds that $|\rho(L)\cap N_G(u)|=1$.
\end{lemma}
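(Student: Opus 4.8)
The plan is to fix a star coloring $c\colon V(G)\to[k]$ witnessing that the star chromatic number of $G$ is at most $k$, and to obtain $\mathcal{L}$ by grouping the vertices of $A$ according to \emph{pairs} of colors. For an ordered pair $(a,b)\in[k]^2$ with $a\neq b$ I would let $L_{a,b}$ collect certain color-$a$ vertices of $A$ as follows: since every $u\in A$ is non-isolated and $c$ is proper, $u$ has a neighbor of some color $b(u)\neq c(u)$; I fix one such color $b(u)$ and put $u$ into $L_{c(u),b(u)}$ and into no other part. Discarding the empty parts, $\mathcal{L}=\{L_{a,b}: a\neq b,\ L_{a,b}\neq\emptyset\}$ is then a partition of $A$ into at most $k(k-1)\le k^2$ parts, and it only remains to define $\rho$ and verify the intersection condition.

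For a surviving part $L=L_{a,b}$ I would pick, for each $u\in L$, an arbitrary neighbor $v_u\in N_G(u)$ with $c(v_u)=b$ (one exists by the choice $b(u)=b$), and set $\rho(L)\coloneqq\{v_u: u\in L\}$. Then $v_u\in\rho(L)\cap N_G(u)$, so each $u\in L$ has at least one neighbor in $\rho(L)$; the content is the reverse inequality. Here I would invoke the star-coloring hypothesis. Let $H_{a,b}$ be the subgraph of $G$ induced by the color-$a$ vertices of $A$ together with the color-$b$ vertices of $B$; since $H_{a,b}$ is a subgraph of the subgraph of $G$ induced by the union of the two color classes $c^{-1}(a)$ and $c^{-1}(b)$ --- which is a star forest by definition of a star coloring --- and subgraphs of star forests are star forests, $H_{a,b}$ is itself a star forest. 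Now suppose some $u\in L$ had a neighbor $w\in\rho(L)$ with $w\neq v_u$. Then $w=v_{u'}$ for some $u'\in L$ with $u'\neq u$, so $w$ is a color-$b$ vertex of $B$ adjacent in $H_{a,b}$ to the two distinct color-$a$ vertices $u,u'\in A$; hence $w$ has degree at least $2$ in the star forest $H_{a,b}$, so $w$ is the center of its star component and $u$ is a leaf, whose unique neighbor in $H_{a,b}$ is $w$. But $v_u$ is also a color-$b$ neighbor of $u$ in $B$, i.e.\ a neighbor of $u$ in $H_{a,b}$, so $v_u=w$, a contradiction. Thus $|\rho(L)\cap N_G(u)|=1$ for every $u\in L$.

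The step that needs the most care is precisely this last argument: the representatives $v_u$ are chosen independently over $u\in L$, and a priori two of them could coincide with two distinct neighbors of the same vertex $u$; the point is that a $B$-vertex shared by two color-$a$ vertices must be a star center, whose neighbors are all degree-one leaves, which is incompatible with $u$ already having a second color-$b$ neighbor. The remaining verifications are bookkeeping that I would spell out only briefly: $G$ being bipartite with parts $(A,B)$ forces $N_G(u)\subseteq B$ for $u\in A$, so "color-$b$ neighbor of $u$" and "neighbor of $u$ in $H_{a,b}$" genuinely coincide; $c$ being proper makes the choice of $b(u)\neq c(u)$ possible; and each $u\in A$ is placed in exactly one part $L_{c(u),b(u)}$, so $\mathcal{L}$ is a genuine partition of $A$. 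This yields the lemma.
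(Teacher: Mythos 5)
Your proof is correct and takes essentially the same approach as the paper's: group the vertices of $A$ by the ordered pair of colors (own color, color of a chosen neighbor), take $\rho$ to be the set of chosen neighbors, and use that the union of two color classes induces a star forest to rule out a second neighbor of $u$ in $\rho(L)$. The only cosmetic differences are that you index parts by $a\neq b$ (giving $k(k-1)$ rather than $k^2$ parts, which is even slightly tighter) and that you split the case analysis around which endpoint of the putative extra edge is the star center, whereas the paper phrases the dichotomy as "$u$ is a leaf" versus "$u$ is a center"; the underlying argument is the same.
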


\begin{proof}
    Let $\tau\colon V(G)\to [k]$ be a star coloring of $G$ with $k$ colors. Define a partition $\mathcal{L}$ of $A$ into parts $L_{i,j}$ for $(i,j)\in[k]\times [k]$ as follows: for every $u\in A$, pick any neighbor $v(u)\in B$ of $u$ (such a neighbor exists since $A$ has no isolated vertices), and assign $u$ to the part $L_{\tau(u),\tau(v(u))}$. Finally, for every $(i,j)\in [k]\times [k]$, define $\rho(L_{i,j})\coloneqq \{v(u)\colon u\in L_{i,j}\}$.

    To prove that the partition $\mathcal{L}=\{L_{i,j}\colon (i,j)\in [k]\times [k]\}$ has the required property, consider any $(i,j)\in [k]\times [k]$ and $u\in L_{i,j}$. From the definition of $L_{i,j}$ it follows that $\tau(u)=i$ and $\tau(v(u))=j$, hence $v(u)\in \rho(L_{i,j})\cap N_G(u)$, implying that $|\rho(L_{i,j})\cap N_G(u)|\geq 1$. On the other hand, since $G[\tau^{-1}(i)\cup \tau^{-1}(j)]$ is a star forest, we have that either $u$ is a leaf in this star forest, implying that $|\rho(L_{i,j})\cap N_G(u)|\leq |\tau^{-1}(j)\cap N_G(u)|\leq 1$, or $u$ is the center of a star in this star forest, implying that $v(u)$ is the only neighbor of $u$ that is contained in $\tau^{-1}(j)\cap v(\tau^{-1}(i))$, so again $|\rho(L_{i,j})\cap N_G(u)|\leq 1$. All in all, we have $|L_{i,j}\cap N_G(u)|=1$ for every $u\in L_{i,j}$.
\end{proof}




\paragraph{Ordering guides for flaps.}
Let $(G,\Omega)$ be a society, let $X\subseteq V(G)\setminus \Omega$, and let $\mathcal{F}$ be a rural division of $(G\setminus X,\Omega)$.
As discussed in the previous subsection, the 
choice of the first element in $\lambda_F$ uniquely determines the ordering of the rest vertices in $\partial F$; in the case where  $|\partial F|=3$, this follows from~\Cref{lem:boundary-ordering}. Therefore, in order to fix a boundary ordering scheme of $\mathcal{F}$, it suffices for each $F\in\mathcal{F}$ to fix the first element $v_1$ of the ordering of $\partial F$.

To define such a first element, we use the notion of \emph{ordering guides} for $\mathcal{F}$; these are labelings of the vertices in $\partial \mathcal{F}$ that allow to define a way to distinguish a vertex in each $\partial F$.
More formally, an \emph{ordering guide of size $k$} for $\mathcal{F}$ is a tuple $\tup U=(U_1,\ldots,U_k)$ of $k$ subsets of $\partial \mathcal{F}$ that form a partition of $\partial \mathcal{F}$ and for every flap $F\in\mathcal{F}$ and $i\in [k]$, we have $|U_i\cap \partial F|\leq 1$.
Every ordering guide $\tup U$ of $\mathcal{F}$ uniquely defines a boundary ordering scheme $\Lambda_{\tup U}$ of $\mathcal{F}$, defined as follows:
for every $F\in\mathcal{F}$, we pick $U_F$ to be the minimum indexed element of $\tup U$ that intersects $\partial F$ and we define the first element in the ordering $\lambda_F$ of $\partial F$ to be the unique element of $U_F\cap \partial F$.

We say that a boundary ordering scheme $\Lambda$ of $\mathcal{F}$ is \emph{derived} from $\tup U$ if $\Lambda=\Lambda_{\tup U}$, where $\Lambda_{\tup U}$ is defined from $\tup U$ as above.

Ordering guides can be defined in logic, using a straightforward encoding of their definition and the formula from~\Cref{obs:def-sep}. 

\begin{observation}\label{obs:formula-ordering}
    Given $k\ge 0$,
    there is an $\mathsf{MSO}_2$-formula
    $\mathsf{order\text{-}guide}(X_0,Y_0,\tup X)$,
    where $X_0$ is a vertex set variable, $Y_0$ is an edge set variable, and $\tup X$ is a tuple of $k$ vertex set variables such that the following holds.~Given
    \begin{itemize}[nosep]
        \item a graph $G$;
        \item a vertex subset $X\subseteq V(G)\setminus \Omega$ such that $G\setminus X$ does not contain isolated vertices;
        \item a vertex subset $Z\subseteq V(G\setminus X)$ and an edge subset
        $R\subseteq E(G\setminus X)$ such that
        $(Z,R)$ certifies a rural division $\mathcal{F}$ of $(G\setminus X,\Omega)$; and
        \item an $\tup A\in G^{\tup X}$;
    \end{itemize}
    the following conditions are equivalent.
    \begin{itemize}[nosep]
        \item $G\models\mathsf{order\text{-}guide}(Z,R,\tup A)$
        \item $\tup A$ is an ordering guide of size $k$ for $\mathcal{F}$.
    \end{itemize}     
\end{observation}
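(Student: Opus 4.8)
The plan is to unfold the definition of an ordering guide into an $\mathsf{MSO}_2$ formula whose only non-elementary ingredient is the formula $\mathsf{boundary}$ of~\Cref{obs:def-sep}. Under the hypotheses of the observation the flaps of the rural division $\mathcal{F}$ are in bijection with the edges of $R$: for $e\in R$ let $F_e$ be the unique flap of $\mathcal{F}$ containing $e$, so that $\partial\mathcal{F}=\bigcup_{e\in R}\partial F_e$. Moreover, for each fixed $e\in R$ the set $\partial F_e$ is uniformly definable from $Z$ and $e$: by~\Cref{obs:def-sep} (interpreted, as in~\Cref{obs:formula-sampling}, inside $G\setminus X$), a vertex $v$ lies in $\partial F_e$ if and only if $\mathsf{boundary}(X_0,e,v)$ holds, where $X_0$ is bound to $Z$. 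In particular, membership in $\partial\mathcal{F}$ is expressed by $\psi(v)\coloneqq \exists w\,(w\in Y_0\wedge \mathsf{boundary}(X_0,w,v))$.

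Given this, I would take $\mathsf{order\text{-}guide}(X_0,Y_0,\tup X)$ with $\tup X=(X_1,\ldots,X_k)$ to be the conjunction of the following three requirements; each uses, on top of $\mathsf{boundary}$, only first-order quantification over single vertices and edges and the free set variables, and since $k$ is a fixed constant the conjunctions over $i\in[k]$ are written out explicitly. \emph{(a) Each $X_i$ is contained in $\partial\mathcal{F}$:} for every $i\in[k]$, $\forall v\,(v\in X_i\rightarrow \psi(v))$. \emph{(b) The $X_i$ partition $\partial\mathcal{F}$:} the $X_i$ are pairwise disjoint, and every $v$ with $\psi(v)$ belongs to some $X_i$ (so, together with (a), to exactly one). \emph{(c) Every flap meets every class in at most one vertex:} for every $w\in Y_0$ and every $i\in[k]$ there are no two distinct vertices $v,v'$ with $v,v'\in X_i$ and both $\mathsf{boundary}(X_0,w,v)$ and $\mathsf{boundary}(X_0,w,v')$.

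Correctness is then immediate: the bijection $e\mapsto F_e$ between $R$ and $\mathcal{F}$, together with $\partial F_e=\{v:\mathsf{boundary}(X_0,e,v)\}$, turns (a)--(c) into precisely the three clauses of the definition of an ordering guide of size $k$ for $\mathcal{F}$, so for any $\tup A\in G^{\tup X}$ we get $G\models\mathsf{order\text{-}guide}(Z,R,\tup A)$ exactly when $\tup A$ is such an ordering guide of $\mathcal{F}$. I do not expect a genuine obstacle here; the one point to state carefully is the identification of $\partial F_e$ with the ``$Z$-boundary'' of its representative edge $e$, which is what lets a single formula range over all flaps. This relies on the assumption that $G\setminus X$ has no isolated vertices --- so that every flap of the canonical $Z$-split carries a representative edge, by condition~\ref{div:cond5} of~\Cref{def:division} --- and on evaluating $\mathsf{boundary}$ relative to $G\setminus X$, exactly as was done for~\Cref{obs:formula-sampling}. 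Everything else is a routine unfolding of definitions.
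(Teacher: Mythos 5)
Your proof is correct and takes the same route that the paper intends: the paper dismisses this observation in a single sentence as a "straightforward encoding of [the] definition" using $\mathsf{boundary}$ from \Cref{obs:def-sep}, and your three clauses (a)--(c) are exactly that encoding, matching the three conditions in the definition of an ordering guide. You are also right to flag the one real subtlety --- that $\mathsf{boundary}$ must be interpreted relative to $G\setminus X$ rather than $G$, which is implicit in how \Cref{obs:formula-sampling,obs:formula-ordering} are ultimately used inside \Cref{lem:formula-homogeneous}, where $X$ is available via the tuple $\tup x$ and the quantifiers can be relativized accordingly.
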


The existence of ordering guides of constant size for rural divisions is shown in the following lemma.

\begin{lemma}\label{lem:existence-ordering}
    Let $G$ be a graph, let $X\subseteq V(G)\setminus \Omega$ such that $G\setminus X$ does not contain isolated vertices, and let $\mathcal{F}$ be a rural division of $(G\setminus X,\Omega)$.
    Then $\mathcal{F}$ admits an ordering guide of size at most four.
\end{lemma}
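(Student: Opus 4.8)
The plan is to reduce the statement to a vertex‑coloring problem and apply the Four Color Theorem. First I would introduce the auxiliary ``conflict'' graph $\Gamma$ on vertex set $\partial\mathcal{F}$, where two distinct vertices $u,v\in\partial\mathcal{F}$ are adjacent exactly when some flap $F\in\mathcal{F}$ satisfies $\{u,v\}\subseteq\partial F$. By condition~\ref{div:cond6} of~\Cref{def:division}, every flap has $|\partial F|\le 3$, so each set $\partial F$ induces a clique on at most three vertices in $\Gamma$. The point of this reformulation is that a proper coloring of $\Gamma$ with color classes $U_1,\dots,U_k$ is exactly an ordering guide of size $k$ for $\mathcal{F}$: the classes partition $\partial\mathcal{F}$, and for every flap $F$ and every color $i$ the set $U_i\cap\partial F$ is an independent subset of the clique $\partial F$, hence has at most one element. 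So it suffices to produce a proper $4$-coloring of $\Gamma$, which will follow from the Four Color Theorem once I show that $\Gamma$ is planar.

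The planarity of $\Gamma$ is the heart of the argument. Since $\mathcal{F}$ is rural, the bipartite incidence graph $G_{\mathcal{F}}$ embeds in a closed disk and is therefore planar; I would fix a plane embedding of it. Then I would process the flap vertices one by one --- they form an independent set of $G_{\mathcal{F}}$, so the order is immaterial --- replacing each by a clique on its (at most three) neighbors: a flap vertex $F$ with $\partial F=\{a,b,c\}$ is incident in $G_{\mathcal{F}}$ to precisely the three edges $Fa,Fb,Fc$ in some rotation, and deleting $F$ merges the three incident faces into one face on whose boundary $a,b,c$ appear in the corresponding cyclic order, so the triangle $ab,bc,ca$ can be drawn inside that face without crossings; a flap vertex with $|\partial F|=2$ is replaced analogously by the single edge $ab$; and a flap vertex with $|\partial F|\le 1$ is simply deleted. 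Each step is a planarity‑preserving ``$Y$--$\Delta$'' move, and once all flap vertices are processed the only edges remaining among vertices of $\partial\mathcal{F}$ are exactly the pairs lying in a common flap boundary; hence the underlying simple graph of the resulting plane multigraph is $\Gamma$, so $\Gamma$ is planar. (An equivalent but perhaps cleaner formulation: replace each flap vertex $F$ by a clique on $|\partial F|$ fresh vertices --- a triangle, an edge, or a single vertex --- matched up with the elements of $\partial F$ in the obvious planar way; this yields a planar graph from which $\Gamma$ is obtained by contracting the matching edges, so $\Gamma$ is a minor of a planar graph.)

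Finally, the Four Color Theorem gives a proper $4$-coloring of $\Gamma$, and its color classes $U_1,\dots,U_4$ --- padded with empty classes if fewer than four colors are used --- form the desired ordering guide of size at most four. I expect the only genuinely delicate point to be making the planarity surgery rigorous, namely checking that the local replacement at a degree‑$3$ flap vertex leaves the drawing crossing‑free (equivalently, that $a,b,c$ indeed occur in a single cyclic order on the boundary of the face created by deleting $F$), or, in the minor‑based version, that the subdivide‑then‑contract construction produces precisely $\Gamma$; everything else, including the translation between colorings and ordering guides, is routine. If one wished to avoid invoking the Four Color Theorem, the same argument together with the elementary fact that planar graphs are $6$-colorable would give an ordering guide of size six, which is equally good for all subsequent uses.
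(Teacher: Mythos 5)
Your proposal is correct and takes essentially the same approach as the paper: the conflict graph $\Gamma$ you define coincides with the graph $G'$ obtained in the paper's auxiliary lemma by deleting each flap vertex of $G_{\mathcal{F}}$ and making its neighborhood a clique, the planarity argument is the same $Y$--$\Delta$ surgery, and both proofs conclude with the Four Color Theorem. Your remark that six colors (via the elementary $6$-colorability of planar graphs) would suffice for the subsequent use is a nice observation, but the paper does use the Four Color Theorem to get the stated constant of four.
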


The proof of~\Cref{lem:existence-ordering} is a direct consequence of the following lemma and the fact that the bipartite graph $G_{\mathcal{F}}$ obtained by a rural division $\mathcal{F}$ is planar and every vertex of $G_{\mathcal{F}}$ in $\mathcal{F}$ has degree at most three.

\begin{lemma}
    Let $G$ be a planar bipartite graph with parts $(A,B)$, where every vertex $v\in B$ has degree at most three and $A$ does not contain isolated vertices.
    Then there is a partition $\mathcal{U}$ of $A$ such that for every $v\in B$ and every $U\in\mathcal{U}$, $|N(v)\cap U|\le 1$ and moreover $|\mathcal{U}|\le 4$.
\end{lemma}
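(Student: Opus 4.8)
The plan is to reformulate the statement as a graph colouring problem. I would introduce the \emph{conflict graph} $H$ on vertex set $A$, in which two distinct vertices $a,a'\in A$ are adjacent precisely when they have a common neighbour in $B$. A partition $\mathcal{U}$ of $A$ with the required property is then the same thing as a proper vertex colouring of $H$: for each $v\in B$ the set $N(v)$ is a clique of $H$ (any two of its elements share the neighbour $v$), so the condition ``$|N(v)\cap U|\le 1$ for all $v\in B$ and all $U\in\mathcal{U}$'' says exactly that every colour class meets $N(v)$ in at most one vertex, i.e.\ that the colouring is proper. Hence it suffices to prove $\chi(H)\le 4$.

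The key step is to observe that $H$ is planar, which I would deduce from planarity of $G$ by a sequence of $Y$--$\Delta$ moves. Recall that replacing a vertex $w$ of degree exactly three in a planar graph by a triangle on $N(w)$ preserves planarity: in a planar embedding the three edges at $w$ leave $w$ in some cyclic order, and the three new edges can be drawn inside a small disk around $w$ without crossings. Likewise, deleting a vertex of degree at most two and, when its degree is two, adding an edge between its two neighbours preserves planarity. Starting from a planar embedding of $G$ and performing the appropriate such move at \emph{every} vertex $v\in B$ yields a planar multigraph; the moves do not interfere because $B$ is an independent set whose vertices are adjacent only to $A$, so the moves may be applied one at a time, each preserving planarity of the abstract graph. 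The resulting graph has vertex set $A$ and edge set $\bigcup_{v\in B}\binom{N(v)}{2}$, which is exactly $E(H)$. Thus $H$ is planar.

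By the Four Colour Theorem, $\chi(H)\le 4$. Fixing a proper colouring $c\colon A\to\{1,2,3,4\}$ and letting $\mathcal{U}$ consist of the non-empty colour classes $c^{-1}(1),\dots,c^{-1}(4)$ completes the proof: $\mathcal{U}$ is a partition of $A$ into at most four parts, and as noted above the properness of $c$ is precisely the condition $|N(v)\cap U|\le 1$ for all $v\in B$ and $U\in\mathcal{U}$.

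The main obstacle I anticipate is making the planarity claim for $H$ watertight — in particular checking that the local re-routing in a $Y$--$\Delta$ move is genuinely crossing-free and that applying such moves over all of $B$ is legitimate (which it is, for the reason above). I would also remark that the bound $4$ is tight: one can build a planar $G$ satisfying the hypotheses whose conflict graph is $K_4$, so the appeal to the Four Colour Theorem (rather than the elementary Five Colour Theorem) is genuinely necessary. Finally, note that the hypothesis that $A$ has no isolated vertices plays no role in this argument and is relevant only for the downstream applications of the lemma.
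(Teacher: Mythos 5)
Your proposal is correct and takes essentially the same approach as the paper: the paper also deletes each $v\in B$ and completes its neighbourhood into a clique (the $Y$--$\Delta$ move you describe), yielding exactly your conflict graph on $A$, argues planarity is preserved, and invokes the Four Colour Theorem. Your write-up is more explicit about why the moves preserve planarity and about the reformulation as proper colouring, but the argument is the same.
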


\begin{proof}
    We obtain a graph $G'$ from $G$ as follows:
    for every $u\in B$, we remove it from $G$ and make its neighborhood a complete graph (keep in mind that the degree of every vertex in $B$ is at most three).
    The obtained graph $G'$ has vertex set $A$ and is planar. Therefore it admits a proper coloring with four colors.
    The color classes of such coloring form the claimed partition $\mathcal{U}$.
\end{proof}

\paragraph{Tuples certifying attachment schemes.}
Let $(G,\Omega)$ be a society and let a vertex subset $X\subseteq V(G)\setminus \Omega$ be such that $G\setminus X$ does not contain isolated vertices.
Also let $Z\subseteq V(G\setminus X)$ and $R\subseteq E(G\setminus X)$ be such that $(Z,R)$ certifies a rural division $\mathcal{F}$ of $(G\setminus X,\Omega)$.
Let $\bar{L}$ and $\bar{A}$ be two tuples of vertex subsets of $Z$ of length $k$ and $q$ respectively and let $\bar{E}$ be a tuple of $k$ edge subsets of $R$.
Let also $\sigma$ be an attachment scheme of $(\mathcal{F},X)$.
We say that $(\bar{L},\bar{E},\bar{A})$ \emph{certifies} $\sigma$ if the following conditions are satisfied:
\begin{itemize}[nosep]
    \item $\bar{A}$ is an ordering guide for $\mathcal{F}$; 
    \item if $\tup S = \{S_E : E\in \tup E\}$, where $S_E$ is the set of flaps of $\mathcal{F}$ that contain edges of $E$, then $(\tup L,\tup S)$ is a $k$-sampling of $\mathcal{F}$; and 
    \item $\sigma=(\xi,\Lambda,\rho_X)$, where $\xi$ is the $X$-extension scheme of $\mathcal{F}$ derived from $(\tup L,\tup S)$, $\Lambda$ is the boundary ordering scheme of $\mathcal{F}$ derived from $\tup A$, and $\rho_X$ is an arbitrary ordering of $X$.
\end{itemize}

\subsubsection{Defining homogeneity in logic}

We next show how to define homogeneity in $\MSO_2$. First, we show that a wall of a fixed height can be encoded in $\MSO_2$ by picking its non-subdivision vertices as constants and the internally vertex-disjoint paths that connect them (corresponding to the edges of the elementary wall from which the wall is obtained).

\paragraph{Tuples certifying a wall.}
Let $G$ be a graph, let $\tup v$ be a tuple of vertices of $G$ and let $\tup P$ be a tuple of edge subsets of $E(G)$.
We say that $(\tup v,\tup P)$ \emph{certifies a wall of $G$ of height $h$} if
$\tup P$ consists of pairwise internally vertex-disjoint paths that contain elements of $\tup v$ only in their endpoints and 
there is a bijection $\rho$ from $\tup v$ to the vertex set of an elementary wall $\overline{W}$ of height $h$ and a bijection $\mu$ from $\tup P$ to the edge set of $\overline{W}$ such that if two vertices $x,y$ of $\overline{W}$ are adjacent, then the elements $\rho^{-1}(x),\rho^{-1}(y)$ of $\tup u$ are the two endpoints of the path $\mu^{-1}(xy)$ in $\tup P$.

It is easy to see that the fact that a tuple $(\tup v,\tup P)$ certifies a wall is definable in $\MSO_2$. Also, by simple comparison with the vertices in $\tup v$, one can also identify whether four given vertices are corners of a subwall of fixed height, while the vertices of the compass of a wall can be easily identified by writing the definition of the compass in $\MSO_2$.  Thus, we obtain the following.

\begin{observation}\label{obs:formulas-wall}
    Given an even $h\ge 2$, let $n_h$ and $m_h$ be the number of vertices and edges of an elementary wall of height $h$.
    Also, let $\tup x$ be a tuple of $n_h$ single vertex variables, let $\tup Y$ be a tuple of $m_h$ edge set variables, let $y_{\mathsf{sw}},y_{\mathsf{nw}},y_{\mathsf{ne}},y_{\mathsf{se}},x$ be five single vertex variables.
    Then the following holds.
    \begin{itemize}
        \item There is an $\mathsf{MSO}_2$-formula $\mathsf{wall}(\tup x,\tup Y)$ such that given a graph $G$, a $\tup v\in G^{\tup x}$, and a $\tup P\in G^{\tup Y}$,
        it holds that $G\models \mathsf{wall}(\tup v,\tup P)$ if and only if $(\tup v, \tup P)$ certifies a wall of height $h$ in $G$.


        \item There is an $\mathsf{MSO}_2$-formula $\mathsf{comp}(\tup x,\tup Y,x)$ such given a graph $G$,
        $\tup v\in G^{\tup x}$, $\tup P\in G^{\tup Y}$,
        where $(\tup v,\tup P)$ certify a wall $W$ of height $h$ in $G$, and a vertex $v\in V(G)$, it holds that $G\models \mathsf{comp}(\tup v,\tup P,v)$ if and only if $v$ is a vertex of the compass of $W$ in $G$.

        \item For every even $h'$ where $2\le h'\le h$,
        there is an $\mathsf{MSO}_2$-formula $\mathsf{subwall}_{h'}(\tup x,\tup Y,y_{\mathsf{sw}},y_{\mathsf{nw}},y_{\mathsf{ne}},y_{\mathsf{se}})$ such that given a graph $G$, $\tup v\in G^{\tup x}$, $\tup P\in G^{\tup Y}$,
        where $(\tup v,\tup P)$ certify a wall $W$ of height $h$ in $G$,
        and four vertices $v_{\mathsf{sw}},v_{\mathsf{nw}},v_{\mathsf{ne}},v_{\mathsf{se}}\in V(G)$,
        it holds that $G\models \mathsf{subwall}_{h'}(\tup v,\tup P,v_{\mathsf{sw}},v_{\mathsf{nw}},v_{\mathsf{ne}},v_{\mathsf{se}})$ if and only if $v_{\mathsf{sw}},v_{\mathsf{nw}},v_{\mathsf{ne}},v_{\mathsf{se}}$ are the corners of a subwall $W'$ of $W$ of height $h'$.

    \end{itemize} 
\end{observation}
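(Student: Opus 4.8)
The plan is to exploit that $h$ (and, for the third item, $h'$) are \emph{fixed} constants, so the elementary wall $\overline{W}$ of height $h$ is a fixed finite labelled graph with $n_h$ vertices and $m_h$ edges. Fix once and for all enumerations $(w_1,\dots,w_{n_h})$ of $V(\overline{W})$ and $(f_1,\dots,f_{m_h})$ of $E(\overline{W})$, and let $\tup x=(x_1,\dots,x_{n_h})$, $\tup Y=(Y_1,\dots,Y_{m_h})$; the intended reading is that $x_i$ names the branch vertex of $W$ corresponding to $w_i$, and $Y_j$ names the edge set of the path subdividing $f_j$. With this convention, $(\tup v,\tup P)$ certifies a wall of height $h$ precisely when (a) $\tup v(x_1),\dots,\tup v(x_{n_h})$ are pairwise distinct; (b) for every edge $f_j=w_aw_b$ of $\overline{W}$, the set $\tup P(Y_j)$ is the edge set of a path in $G$ whose two endpoints are $\tup v(x_a)$ and $\tup v(x_b)$; and (c) for each $j$ the internal vertices of $\tup P(Y_j)$ avoid $\{\tup v(x_i)\}_i$, and for $j\ne j'$ the internal vertices of $\tup P(Y_j)$ and $\tup P(Y_{j'})$ are disjoint. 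Here ``$F$ is the edge set of a path with endpoints $s,t$'' is the standard $\MSO_2$ assertion that the subgraph $(V(F),F)$ is connected, every vertex of $V(F)$ is incident to at most two edges of $F$, and $s,t$ are exactly the vertices of $V(F)$ incident to one edge of $F$ (the degenerate case $|F|=1$, $s\ne t$, handled separately), and ``internal vertex of $F$'' means a vertex of $V(F)$ incident to two edges of $F$. Conjoining (a)--(c) yields $\mathsf{wall}(\tup x,\tup Y)$; preservation of adjacency together with (c) is exactly the assertion that $\bigcup_j (V(\tup P(Y_j)),\tup P(Y_j))$ is a subdivision of $\overline{W}$, i.e.\ a wall of height $h$.

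\textbf{The compass formula.} For $\mathsf{comp}(\tup x,\tup Y,x)$ I would first write an auxiliary formula $\mathsf{per}(\tup x,\tup Y,w)$ expressing $w\in V(\mathsf{Per}(W))$: the perimeter of $\overline{W}$ is a fixed subset $P_0$ of branch vertices together with a fixed set of perimeter edges, so $w\in V(\mathsf{Per}(W))$ iff $w=\tup v(x_i)$ for some $w_i\in P_0$, or $w$ is an internal vertex of $\tup P(Y_j)$ for some perimeter edge $f_j$. A formula expressing $w\in V(W)$ is equally immediate ($w$ equals some $\tup v(x_i)$ or lies on some $\tup P(Y_j)$). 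Then $\mathsf{comp}(\tup x,\tup Y,x)$ states: either $\mathsf{per}(\tup x,\tup Y,x)$ holds, or there exists a vertex set $S$ with $x\in S$, $S\cap V(\mathsf{Per}(W))=\emptyset$, $G[S]$ connected, and $S$ containing a vertex of $W$. Such an $S$ witnesses that $x$ lies in a connected component of $G-V(\mathsf{Per}(W))$ meeting $W$; since $W-V(\mathsf{Per}(W))$ is connected and nonempty this component is unique, so $\mathsf{comp}$ captures exactly $V(\mathsf{Compass}_G(W))$; conversely a path in that component from $x$ to $V(W)$ supplies the required $S$.

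\textbf{The subwall formula.} A subwall of $\overline{W}$ of height $h'$ is determined by a \emph{window}, i.e.\ a pair of even integers $i,j$ with $0\le i\le 2h-2h'$ and $0\le j\le h-h'$, of which there are only finitely many since $h,h'$ are fixed; each window determines four specific corner vertices of $\overline{W}$, listed in the clockwise order $(0,h),(2h,h),(2h+1,0),(1,0)$ used in the paper. As corners are non-subdivision vertices, the corners of the corresponding subwall $W'$ of $W$ are precisely the four $\tup v$-elements naming those vertices. Hence $\mathsf{subwall}_{h'}(\tup x,\tup Y,y_{\mathsf{sw}},y_{\mathsf{nw}},y_{\mathsf{ne}},y_{\mathsf{se}})$ is simply the disjunction, over all windows, of the conjunction equating $y_{\mathsf{nw}},y_{\mathsf{ne}},y_{\mathsf{se}},y_{\mathsf{sw}}$ with the four $x_i$ designated by that window in the appropriate order. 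One may, if convenient, also conjoin $\mathsf{wall}(\tup x,\tup Y)$ to all three formulas so that they only fire on genuine wall-certificates.

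\textbf{Main obstacle.} The ingredients ``connected'', ``is a path'', ``degree at most two'', and ``reachable while avoiding a set'' are standard $\MSO_2$ boilerplate, so there is no real difficulty. The one thing that needs care is verifying that the encoding in (b)--(c) truly pins down a subdivision of $\overline{W}$ --- endpoints exactly as prescribed, internal vertices genuinely private to their path --- and that the combinatorial data read off the fixed graph $\overline{W}$ (which $w_i$ are perimeter vertices, which windows and corners) is transcribed correctly; the resulting formulas have size bounded by a computable function of $h$ (and $h'$).
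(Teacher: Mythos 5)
Your proof is correct and takes essentially the same approach as the paper, which states the observation with only a one-sentence sketch (encode the certification conditions directly in $\MSO_2$, compare corners against the fixed positions in $\tup x$ for subwalls, and transcribe the definition of the compass). You supply exactly the routine details that the paper leaves implicit: fixing the correspondence $x_i\leftrightarrow w_i$, $Y_j\leftrightarrow f_j$ with $\overline{W}$, expressing ``is a subdivision of $\overline{W}$'' via path and internal-vertex-disjointness predicates, capturing the compass as the perimeter together with the unique component of $G-V(\mathsf{Per}(W))$ meeting $W-V(\mathsf{Per}(W))$ via an existentially quantified connected set, and taking a finite disjunction over the $\Oh_{h,h'}(1)$ windows for $\mathsf{subwall}_{h'}$.
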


In order to define the existence of an irrelevant vertex, we will define the existence of a large enough homogeneous subwall of a wall whose compass admits a rural division. To define homogeneity in logic, we show the following. 

\begin{lemma}\label{lem:formula-homogeneous}
    For every even $r\ge 2$, let $n_r$ and $m_r$ be the number of the vertices and edges of an elementary wall of height $r$, respectively.
    Given $\ell,k,q,\delta\ge 0$ and even $h,r,r'\ge 2$, there exists an $\mathsf{MSO}_2$-formula
    $$\mathsf{homogeneous}(\tup x,\tup y, \tup Q,V,\widehat{V},\tup S,\tup Y,
    \tup U,y_{\mathsf{sw}},y_{\mathsf{nw}},y_{\mathsf{ne}},y_{\mathsf{se}}),$$ where $\tup x$ and $\tup y$ are tuples of $\ell$ and $n_r$ single vertex variables respectively,
    $\tup Q$ is a tuple of $m_{r}$ edge set variables,
    $V$ is a vertex set variable, $\widehat{V}$ is an edge set variable, $\tup S,\tup Y, \tup U$ are tuples of $k$ vertex set variables, $q$ vertex set variables, and $k$ edge set variables, respectively, and $y_{\mathsf{sw}},y_{\mathsf{nw}},y_{\mathsf{ne}},y_{\mathsf{se}}$ are single vertex variables
    such that the following holds. Given 
    \begin{itemize}[nosep]
        \item a graph $G$;
        \item an evaluation $\tup u\in G^{\tup x}$ where $X\coloneqq \{\tup u(x): x\in\tup x\}$ has size $\ell$;
        \item a pair $(\tup v,\tup P)$, where $\tup v\in G^{\tup y}$ and $\tup P\in G^{\tup Q}$, certifying a wall $W$ in $G\setminus X$ of height $r$ and an ordering $\Omega_W$ of its corners;
        \item a pair $(Z,R)$, where $Z\in G^{V}$ and $R\in G^{\widehat{V}}$, certifying a rural division $\mathcal{F}$ of $(\mathsf{Compass}_{G\setminus X}(W),\Omega_W)$;
        \item tuples $\tup {L}=(L_1,\ldots,L_k)\in G^{\tup S},\tup A=(A_1,\ldots,A_q)\in G^{\tup Y}, \tup E=(E_1,\ldots,E_k)\in G^{\tup U}$, certifying an attachment scheme $\sigma$ for $(\mathcal{F},X)$;
        \item four vertices $v_{\mathsf{sw}},v_{\mathsf{nw}},v_{\mathsf{ne}},v_{\mathsf{se}}\in V(G)$, which are the corners of a subwall $W'$ of $W$ of height $r'$ in $G\setminus X$;
    \end{itemize}
    we have that the following conditions are equivalent:
    \begin{itemize}[nosep]
        \item $G\models\mathsf{homogeneous}(\tup u,\tup v,\tup P, Z,R,\tup L, \tup A, \tup E,v_{\mathsf{sw}},v_{\mathsf{nw}},v_{\mathsf{ne}},v_{\mathsf{se}})$.
        \item $W'$ is $(\ell,\delta,h)$-homogeneous with respect to $(\mathcal{F},\sigma)$.
    \end{itemize}
\end{lemma}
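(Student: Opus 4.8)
The plan is to translate Definition~\ref{def:homogeneity} into $\mathsf{MSO}_2$ essentially verbatim, using the building blocks already set up: the $Z$-boundary formula $\mathsf{boundary}$ of \Cref{obs:def-sep}, the sampling and ordering-guide formulas of \Cref{obs:formula-sampling,obs:formula-ordering}, the wall/compass/subwall formulas of \Cref{obs:formulas-wall}, the rooted-minor formula of \Cref{obs:mso-minor}, and \Cref{lem:boundary-ordering} together with \Cref{claim:compati}. Two ingredients do not come for free: (i) addressing the individual flaps of the canonical $Z$-split, and all structure that $\sigma$ attaches to them, by $\mathsf{MSO}_2$-definable queries over $G$; and (ii) expressing ``$\mathbf{F}_\sigma$ and $\mathbf{F}'_\sigma$ have the same $\delta$-folio'' even though a flap may be arbitrarily large.

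For (i): since $(Z,R)$ certifies a rural division, the flaps of the canonical $Z$-split are in bijection with the representative edges in $R$, so I would range over flaps by quantifying over $R$. Given $e\in R$, the boundary $\partial F_e$ is exactly the $Z$-boundary of $e$, hence $\mathsf{MSO}_2$-definable via $\mathsf{boundary}(Z,e,\cdot)$; the edge set $E(F_e)$ is the set of all $e'$ whose $Z$-boundary equals that of $e$, and $V(F_e)$ is the set of endpoints of $E(F_e)$; the edge interface $E_{F_e}$ of the extension scheme derived from $(\tup L,\tup S)$ is definable directly from $\tup x,\tup L,\tup E$ (for $v\in\partial F_e$ take all edges $vu$ with $u\in X$ precisely when the unique part $L_i\ni v$ has $F_e$ among the flaps meeting edges of $\tup E(Y_i)$); and the first element of $\lambda_{F_e}$ is the unique vertex of $\partial F_e$ in the lowest-indexed part of the ordering guide $\tup A$ meeting $\partial F_e$. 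When $|\partial F_e|=3$, the rest of $\lambda_{F_e}$ is the unique ordering compatible with $\Omega_W$ having that first element (\Cref{lem:boundary-ordering}), and compatibility is $\mathsf{MSO}_2$-expressible by \Cref{claim:compati}: existentially guess three vertex-disjoint $(\partial F_e,\Omega_W)$-paths in $G\setminus X$ (the choice is immaterial) and check that the ordering of their $\Omega_W$-endpoints induced by $\lambda_{F_e}$ is aligned with the length-$4$, hence hard-coded, cyclic order of $\Omega_W$. With these in hand, ``$F_e$ is $W$-internal'', ``$F_e$ is captured by a subwall with prescribed corners'', and ``$F_e$ is captured by $W'$'' are $\mathsf{MSO}_2$-definable, using that $V(W)$, $V(\mathsf{Per}(W))$, the compass of $W$ in $G\setminus X$, and the vertex set of any subwall of $W$ with prescribed corners are definable from $(\tup v,\tup P)$ (\Cref{obs:formulas-wall}); and quantifying over subwalls $W''$ of $W'$ of height $h$ reduces to quantifying their four corners and asserting, via $\mathsf{subwall}_h$ and a comparison with the corners of $W'$, that they span a height-$h$ subwall sitting inside $W'$.

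For (ii): the crucial point is that although $F$ may be large, $\mathbf{F}_\sigma$ is rooted at $X\cup\partial F$, a set of at most $\ell+3$ vertices, so its $(X\cup\partial F,\delta)$-folio is one of only $N=N(\ell,\delta)$ possibilities (also recording $|\partial F|\in\{0,1,2,3\}$ and the induced order structure), with $N$ bounded by a computable function of $\ell$ and $\delta$. For each folio type $\mathcal{T}_i$ I would write $\Phi_i(e)$ asserting that $\mathbf{F}_{e,\sigma}$ has folio $\mathcal{T}_i$: a finite conjunction, over all abstract rooted graphs $(H,\pi)$ of detail at most $\delta$ whose roots lie among the $\ell$ slots for $X$ (ordered by $\tup x$) and the $|\partial F_e|$ slots for $\partial F_e$ (ordered by $\lambda_{F_e}$), of ``$(H,\pi)$ is a rooted minor of $\xi(F_e)$ iff $(H,\pi)\in\mathcal{T}_i$'', where each membership is obtained from $\varphi_{(H,\pi)}$ of \Cref{obs:mso-minor} by substituting the definable universe $V(F_e)\cup X$, edge set $E(F_e)\cup E_{F_e}$ and relativizing all quantifiers. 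Then ``$\mathbf{F}_\sigma$ and $\mathbf{F}'_\sigma$ have the same $\delta$-folio'' is $\bigvee_{i=1}^N(\Phi_i(e)\wedge\Phi_i(e'))$: since a type records $|\partial F|$ and the boundary-root positions, having the same type is equivalent to admitting an order-respecting bijection of folios. Assembling everything, $\mathsf{homogeneous}(\cdots)$ becomes
$$\forall e\in R\;\Big(\big(\mathsf{Winternal}(e)\wedge\mathsf{captured}_{W'}(e)\big)\to\forall\,\overline c\;\big(\mathsf{subwall}^{W'}_{h}(\overline c)\to\exists e'\in R\;(\mathsf{Winternal}(e')\wedge\mathsf{captured}_{\overline c}(e')\wedge\mathsf{samefolio}(e,e'))\big)\Big),$$
an $\mathsf{MSO}_2$ formula of size bounded in $\ell,k,q,\delta,h,r,r'$, and correctness is immediate once each sub-formula is seen to capture the intended combinatorial notion.

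I expect the main obstacle to be the bookkeeping in (i) rather than any new idea: pinning down $\lambda_F$ for each flap (the $|\partial F|=3$ case leaning on \Cref{lem:boundary-ordering} and \Cref{claim:compati}) and checking that the many derived sets ($\partial F_e$, $E(F_e)$, $V(F_e)$, $E_{F_e}$, $V(W)$, $V(\mathsf{Per}(W))$, compasses, and subwall supports) are simultaneously definable and correctly wired into the relativized minor-testing formulas. The one genuinely load-bearing insight is the finiteness-of-folio-types argument in (ii), which is what allows a bounded-size $\mathsf{MSO}_2$ formula to compare the a priori unbounded flaps $\mathbf{F}_\sigma$ and $\mathbf{F}'_\sigma$.
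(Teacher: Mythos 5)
Your proposal is correct and follows essentially the same route as the paper's proof: address flaps via representative edges in $R$ and the $Z$-boundary predicate, define the derived sets ($\partial F_e$, $E(F_e)$, $V(F_e)$, $E_{F_e}$, $\lambda_{F_e}$) in $\mathsf{MSO}_2$, handle the $|\partial F_e|=3$ ordering via \Cref{lem:boundary-ordering} by guessing three disjoint paths to $\Omega_W$, and compare folios by relativizing the formulas of \Cref{obs:mso-minor} to the definable flap subgraphs. Your reformulation of the folio comparison as a disjunction over the finitely many folio types is logically equivalent to the paper's conjunction of biconditionals over all $(H,\pi)$, and is not a substantive departure (though it does handle the case $|\partial F_e|\neq|\partial F_{e'}|$ slightly more explicitly).
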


\begin{proof}
Following~\Cref{def:homogeneity},
the claimed formula should express that  
for every $W$-internal flap $F\in \mathcal{F}$ that is captured by $W'$ and for every subwall $W''$ of $W'$ of height $h$,
there is a $W$-internal flap $F'\in\mathcal{F}$ that is captured by $W''$ such that $\mathbf{F}_\sigma$ and $\mathbf{F}_\sigma'$ have the same $\delta$-folio.

Note that for each flap $F\in\mathcal{F}$, there is a unique edge $e\in R$ such that $e\in E(F)$. Given, $e\in R$, we denote by $F_e$ the flap that contains $e$.
In order to quantify existentially or universally in the flaps of $\mathcal{F}$, we quantify (existentially or universally) over elements of $R$.
In what follows, we show how to express the notions in the definition of homogeneity above (\Cref{def:homogeneity}) in logic.

\paragraph{The property of a flap being $W$-internal.}
Note that there is an $\mathsf{MSO}_2$-formula $\mathsf{internal}(X,\tup y,\widehat{x})$, where $X$ and $\tup y$ are as in the statement of the lemma and $\widehat{x}$ is a single edge variable, such that for every $e\in R$, $G\models \mathsf{internal}(Z,\tup v,e)$ if and only if the flap $F_e\in\mathcal{F}$ is $W$-internal.
In particular, this formula asks whether every vertex $v\in V(G)$ such that $G\models \mathsf{boundary}(Z,e,v)$ does not 
belong to the vertices on the perimeter of $W$.

\paragraph{Flaps captured by subwalls.}
We next define an $\mathsf{MSO}_2$-formula
$\mathsf{captured}(\tup y,\tup Q,V,y_{\mathsf{sw}},y_{\mathsf{nw}},y_{\mathsf{ne}},y_{\mathsf{se}},\widehat{x})$,
where $\tup y,\tup Q,V,y_{\mathsf{sw}},y_{\mathsf{nw}},y_{\mathsf{ne}},y_{\mathsf{se}}$ are as in the statement of the lemma and $\widehat{x}$ is a single edge variable
such that for every $e\in R$, $G\models \mathsf{captured}(\tup v,\tup P,Z,v_{\mathsf{sw}},v_{\mathsf{nw}},v_{\mathsf{ne}},v_{\mathsf{se}},e)$ if and only if $F_e$ is captured by $W'$. In particular, this formula asks whether there are vertices $v,u\in V(G)$ and an edge subset $E\subseteq E(G)$ such that $v\in V(W')$, $G\models \mathsf{boundary}(Z,e,u)$, and $E$ induces a path of $\mathsf{Compass}_G(W)$ from $u$ to $v$ with no vertex in $V(W)$ except $v$.

\paragraph{Defining flaps.}
We next define the $\mathsf{MSO}_2$-formula $\mathsf{flap}(V,\widehat{x},U,\widehat{U})$, where $V$ is as in the statement of the lemma, $\widehat{x}$ is a single edge variable, $U$ is a vertex set variable, and $\widehat{U}$ is an edge set variable such that for every edge $e\in R$, every vertex subset $V_{\mathsf{flap}}\subseteq V(G)$, and every edge subset $E_{\mathsf{flap}}\subseteq E(G)$,
$G\models\mathsf{flap}(Z,e,V_{\mathsf{flap}},E_{\mathsf{flap}})$ if and only if $F_e=(V_{\mathsf{flap}},E_{\mathsf{flap}})$.
In particular, this formula guesses at most three vertices that are in the boundary $\partial F_e$ of $F_e$ (using $\mathsf{boundary}(Z,e,v)$ from~\Cref{obs:def-sep}) and then checks whether $V_{\mathsf{flap}}\setminus \partial F_e$ is equal to the union of the vertex sets of all connected components $C$ of $G\setminus Z$ such that $N(C)=\partial F_e$. For $E_{\mathsf{flap}}$ we ask whether it is equal to the union of the edge sets of all connected components $C$ of $G\setminus Z$ such that $N(C)=\partial F_e$. If $|\partial F_e|=2$, we also ask that if there is an edge $\hat{e}$ in $G$ with both endpoints in $\partial F_e$, this edge $\hat{e}$ should also be included in $E_{\mathsf{flap}}$.

\paragraph{Folio equivalence.}
We conclude by defining the $\mathsf{MSO}_2$-formula 
$\mathsf{same\text{-}folio}(V,\tup S,\tup Y,\tup U,\widehat{x},\widehat{x}')$, where $V,\tup S,\tup Y,\tup U$ are as in the statement of the lemma and $\widehat{x},\widehat{x}'$ are single edge variables such that for every $e,e'\in R$,
$G\models\mathsf{same\text{-}folio}(Z,\bar{L},\bar{E},\bar{A},e,e')$ if and only if $\mathbf{F}_\sigma$ and $\mathbf{F}_\sigma'$ have the same $\delta$-folio, where $F=F_e$ and $F'=F_{e'}$.
In particular, this formula checks whether for every graph $H$ on at most $|X|+t+\delta$ vertices, where $t\coloneqq |\partial F_e|$, and every mapping $\pi: V(H)\to 2^{\tup x'}$,
where $\tup x'$ is a tuple of $|X|+t$ single vertex variables, the following holds:
\[\mbox{$\xi(F_e)$ satisfies $\varphi_{(H,\pi)}(\tup w)$ if and only if $\xi(F_{e'})$ satisfies $\varphi_{(H,\pi)}(\tup w')$,}\label{logic-hom}\tag{$\star$}
\]
where $\varphi_{(H,\pi)}(\tup z)$ is the formula of~\Cref{obs:mso-minor}, $\xi$ is the extension scheme given by the attachment scheme $\sigma$ and $\tup w$ and $\tup w'$ are the orderings $\omega(F_{e})$ and $\omega(F_{e}')$ of the vertices in $X\cup \partial F_e$ and $X\cup \partial F_{e'}$, respectively, given by $\sigma$.

Let us explain more formally how to express (\ref{logic-hom})
in $\mathsf{MSO}_2$.
We first show how to define the ordering $\omega(F_{e})$ (the argument for $\omega(F_{e}')$ is identical).
We fix the last $|X|$ elements of the ordering $\omega(F_{e})$ to be the vertices in $X$, ordered by their order of appearance in the ordered tuple $\tup u$.
To fix an ordering of the first $|\partial F_e|$ elements of $\omega(F_{e})$ we argue as follows. 
Since $(\tup L,\tup A,\tup E)$ certify an attachment scheme $\sigma$ for $(\mathcal{F},X)$, $\tup A$ is an ordering guide for $\mathcal{F}$.
Recall that in an ordering guide $(A_1,\ldots,A_q)$ for $\mathcal{F}$, the sets $A_1,\ldots,A_q$ form a partition of $\partial \mathcal{F}$ and for every flap $F\in\mathcal{F}$, $|A_i\cap \partial F|\le 1$, for every $i\in[k]$.
To define the ordering of the first $|\partial F_e|$ elements of $\omega(F_{e})$, we first guess the sets from $\tup A$ that have non-empty intersection with $\partial F_e$. Assume that $A_i$ is the minimum indexed element of $\tup A$ that intersects $\partial F$.
Then, we existentially quantify one single vertex variable $s_1$ and we ask this vertex to be contained in $A_i\cap \partial F$. This fixes the first element of the ordering $\omega(F_{e})$.
If $|\partial F_e|\le 2$, then this immediately gives the ordering of $\omega(F_{e})$.
In the case where $|\partial F_e|=3$,
we observe that due to~\Cref{lem:boundary-ordering},
the ordering of $\partial F_e$ starting with $s_1$ is given  by \textsl{any} collection of three pairwise vertex-disjoint paths in $\mathsf{Compass}_{G\setminus X}(W)$ between $\partial F$ and the corners $\Omega_W$ of $W$. We existentially quantify such a collection of paths $\mathcal{P}$ and we order $\partial F_e$ accordingly to the order of the corners of $W$ in $\Omega_W$.

We also show how to define an $\mathsf{MSO}_2$-formula $\psi(V,\widehat{x},U,\widehat{U},\tup S,\tup Y)$,
where $V,\widehat{x},U,\widehat{U}$ are as in the formula $\mathsf{flap}(V,\widehat{x},U,\widehat{U})$ above and $\tup S$ and $\tup Y$ are as in the statement of the lemma,
such that for every edge $e\in R$, every vertex subset $V_{\mathsf{flap}}\subseteq V(G)$, and every edge subset $E_{\mathsf{flap}}\subseteq E(G)$,
$G\models\psi(Z,e,V_{\mathsf{flap}},E_{\mathsf{flap}},\tup L,\tup E)$ if and only if $F_e=(V_{\mathsf{flap}},E_{\mathsf{flap}})$ and $\xi(F_e)\models \varphi_{(H,\pi)}(\tup w)$.
The claimed formula $\psi$ is the conjunction of the formula $\mathsf{flap}$ from above and the formula $\widehat{\varphi}_{(H,\pi)}(\cdot)$ obtained from $\varphi_{(H,\pi)}(\cdot)$ as follows:
for every $i\in[k]$,
we check whether $L_i$ intersects $\partial F_e$.
Then, for every such $i\in[k]$ such that $L_i\cap \partial F_e\neq \emptyset$,
we check whether $e\in E_i$.
Then we enhance $E_{\mathsf{flap}}$ by adding the edges
between $L_i\cap \partial F_e$ and $X$, for every $i\in[k]$ such that $e\in E_i$.
We use $E'$ to denote the obtained edge subset.
We modify $\varphi_{(H,\pi)}(\cdot)$ by asking that (the interpretations) of all quantified single edge (resp.edge set) variables belong to (resp. are subsets of) $E'$.

The formula $\mathsf{homogeneous}(\tup u,\tup v,\tup P, Z,R,\tup L, \tup A, \tup E,v_{\mathsf{sw}},v_{\mathsf{nw}},v_{\mathsf{ne}},v_{\mathsf{se}})$ can be derived easily by the above formulas, following~\Cref{def:homogeneity}.
\end{proof}

We conclude this section by proving~\Cref{lem:formula-irrelevant} which we restate here for convenience.

\formulairrelevant*

\begin{proof}
    The formula $\mathsf{irr}(\tup x,y,z)$ first asks for the existence of a wall $W$ of height $r$ in $G\setminus X$ whose compass in $G\setminus X$ is disjoint from $u_0$.
    This can be done using the formulas $\mathsf{wall}_{r}(\tup y,\tup Q)$ and $\mathsf{comp}(\tup y,\tup Q,y)$ from~\Cref{obs:formulas-wall}.
    The existence of a vertex subset $Z$ and an edge subset $R$ certifying a rural division of the society of $W$ in $G\setminus X$ can be asked using the formula from~\Cref{lem:define-rural}. Also, the existence of tuples $(\tup L,\tup A,\tup E)$ certifying an attachment scheme $\sigma$ for $(\mathcal{F},X)$ can be checked using the formulas of~\Cref{obs:formula-sampling} and~\Cref{obs:formula-ordering}. The existence of a $k$-sampling and an ordering guide of size at most four for $\mathcal{F}$ is guaranteed by~\Cref{lem:existence-sampling} and~\Cref{lem:existence-ordering}, respectively. 
    Finally the existence of a subwall $W'$ of $W$ of height $r'$ that is $(\delta,h)$-homogeneous with respect to $(\mathcal{F},\sigma)$ can be done using the formula $\mathsf{subwall}_{h'}(\tup y,\tup Q,y_{\mathsf{sw}},y_{\mathsf{nw}},y_{\mathsf{ne}},y_{\mathsf{se}})$ from~\Cref{obs:formulas-wall} and the formula from~\Cref{lem:formula-homogeneous}. Finally, having the vertices of $W'$ as parameters, it is easy to check whether $v$ is central in $W'$. 
\end{proof}

\section{Compact clique-minor-free graphs}\label{sec:compact}

In this and the next two sections we focus on a more general setting, where we would like to solve \Folio on graphs excluding a clique minor. Similarly to \cref{def:apexFolio}, we will focus on the following problem.

\begin{definition}\label{def:cliqueFolio}
 In the \FolioClique problem we are given an instance $(G,X,\delta)$ of the \Folio problem and a parameter $h\in \N$. The task is to either solve the instance $(G,X,\delta)$, or to output a minor model of the clique on $h$ vertices in $G$.
\end{definition}

We shall argue the following statement; its proof spans \cref{sec:compact,sec:carving,sec:full-clique-minor-free}.

\begin{theorem}\label{thm:cliqueFree}
 The \FolioClique problem can be solved in time $\Oh_{|X|,\delta,h}(\|G\|^{1+o(1)})$.
\end{theorem}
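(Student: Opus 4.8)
The plan is to reduce \FolioClique to \FolioApex (\cref{thm:apexFree}) via a recursive scheme that alternates two types of steps, so that the overall recursion tree has depth $\Oh_{|X|,h}(\log \|G\|)$ and the total size of all instances at each level grows only by a factor of $1+1/\log n$. The two step types are: (1) a "Reed step" that, whenever $X$ is not well-linked or $|X|<k$ (for a constant $k=k(h)$), either recurses along a small balanced separator or enlarges $X$ by a well-linked set, using \cref{lem:testing-well-linked} and a Reed-style $\Oh_k(n\log n)$-time balanced-separator routine; and (2) a "carving step" (recursive understanding) that, when $X$ is a well-linked set of size $\ge k$, repeatedly finds a large non-touching family of $(X,k,\alpha)$-chips, recursively computes their folios, and replaces each chip $G[N[C]]$ by a bounded-size intrusion-preserving replacement $H_C$ — so that after $\Oh_{k,\alpha}(\log^2 n)$ iterations $G$ becomes $(X,k,\alpha)$-compact while its $(X,\delta)$-folio (and the well-linkedness of $X$) is preserved by \cref{lem:folio_over_separator}. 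Chips of size $\Oh_k(\log n)$ are handled directly by the Robertson–Seymour $\Oh_{k,\delta}(\log^3 n)$-time algorithm rather than recursively, to control the growth in total instance size.

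Once the graph is $(X,k,\alpha)$-compact, well-linked, and $|X|\ge k$ with $k$ large enough as a function of $h$, I would invoke the Structure Theorem for $K_h$-minor-free graphs of Robertson and Seymour~\cite{RobertsonS03a}, as sharpened in this compactness regime by Lokshtanov, Pilipczuk, Pilipczuk, and Saurabh~\cite{LokshtanovPP022}, to conclude that there is a bounded-size set $Z$ (with $|Z|$ and an apex graph $R$ depending only on $|X|,k,\alpha$) such that $G-Z$ excludes $R$ as a minor. Since $Z$ is not known, I would run \FolioApex on $(G,X\cup Z',\delta)$ with $Z'\subseteq Z$ the part discovered so far, using the apex-grid $R'$ of large order instead of $R$: either \FolioApex returns a solution (done, after noting $|X\cup Z'|$ is bounded), or it returns a minor model of $R'$ in $G-(X\cup Z')$, from which in linear time one extracts either a $K_h$-minor model (terminate, using a binary-search trick to reduce to merely certifying $K_h$-minor existence) or a large-grid-plus-apex model whose apex vertex $a$ provably lies in $Z$; add $a$ to $Z'$ and repeat. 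After $|Z|=\Oh_{|X|,k,\alpha}(1)$ such rounds this terminates. Each \FolioApex call costs $\Oh_{|X|,\delta,R'}(\|G\|^{1+o(1)})$ by \cref{thm:apexFree}, and there are $\Oh(1)$ of them.

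The running-time accounting is the delicate but mostly mechanical part: each node of the recursion tree pays $\|G\|^{1+o(1)}$ for the flow/cut subroutines used to find chips and balanced separators (via isolating cuts of Li–Panigrahi, the deterministic almost-linear max-flow of van den Brand et al., and the mimicking networks of Saranurak–Yingchareonthawornchai), plus $\Oh_{|X|,\delta,h}(\|G\|^{1+o(1)})$ for the final \FolioApex calls at the leaves; multiplying by depth $\Oh_{|X|,h}(\log n)$ and controlling the $(1+1/\log n)$-per-level blow-up gives the claimed $\Oh_{|X|,\delta,h}(\|G\|^{1+o(1)})$ bound. The combining of recursively-computed folios across the separators and chip boundaries is handled by \cref{lem:folio_over_separator}, whose cost is $\Oh_{|X|+|A\cap B|,\delta}(|V(G)|)$ per combination, which is absorbed.

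The main obstacle I expect is establishing the existence (and linear-time verifiability) of intrusion-preserving bounded-size replacements $H_C$ for chips: an ordinary folio-equivalent replacement does not prevent the creation of new $(X,k,\alpha)$-carvable vertices outside the replaced region, which would destroy the progress guarantee and allow the carving loop to run for $\Omega(n)$ iterations. The argument requires building these replacements from folio preservers as gadgets and then certifying, via important separators~\cite{DBLP:journals/tcs/Marx06}, that a candidate bounded-size $H_C$ has the property that every connected set $D'$ in the replaced graph with $|N(D')|<k$ lifts to a set $D$ in the original graph with $|D|\ge|D'|$, $|N(D)|\le|N(D')|$, and $D\cap(V(G)-C)=D'\cap(V(G)-C)$. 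This is the technical heart of the recursive-understanding implementation and is where the bulk of the care must go; everything else is orchestration of known tools (\cref{thm:apexFree}, \cref{lem:folio_over_separator}, \cref{lem:testing-well-linked}, the Structure Theorem, and the almost-linear flow/cut machinery). A full proof is deferred to \cref{sec:compact,sec:carving,sec:full-clique-minor-free}.
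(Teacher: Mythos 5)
Your proposal mirrors the paper's own proof almost exactly: the same reduction to \FolioCliqueEx via binary search, the same four-case recursive scheme (connectivity, well-linkedness, a Reed-style balanced-separator/well-linked-set dichotomy, and the chip-carving step with intrusion-preserving bounded-size replacements certified via important separators), the same $\Oh_k(\log n)$ recursion-depth and $(1+1/\log n)$ per-level blow-up accounting, and the same use of the Structure Theorem plus iterative apex discovery (through top-apex-grids) in the compact case, backed by \cref{thm:apexFree}. You have also correctly identified the construction and verification of intrusion-preservers as the technical heart; the remaining work is the detailed execution carried out in \cref{sec:compact,sec:carving,sec:full-clique-minor-free}.
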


Before starting building tools towards the proof of \cref{thm:cliqueFree}, we observe that we can focus on a simpler variant where the clique minor model does not need to be output by the algorithm. Formally, we consider the \FolioCliqueEx problem that differs from \FolioClique (\cref{def:cliqueFolio}) in that in the second outcome, the algorithm may just correctly conclude that $G$ contains $K_h$ as a minor, without exposing a witnessing minor model. We prove that the \FolioClique problem can be reduced to \FolioCliqueEx, as described in the following statement.

\begin{restatable}{lemma}{binsearch}\label{lem:binsearch}
 Suppose the \FolioCliqueEx problem can be solved in time $f(|X|,\delta,h)\cdot \|G\|^{1+o(1)}$, for some function $f$ that is monotone with respect to each argument. Then the \FolioClique problem can be solved in time $f(|X|,\max(\delta,h),h+1)\cdot \|G\|^{1+o(1)}$.
\end{restatable}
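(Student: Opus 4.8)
The plan is to reduce \FolioClique to \FolioCliqueEx by using the latter to \emph{test} for the presence of a $K_h$-minor at various thresholds, and then, if one exists, to \emph{extract} a concrete model by a binary-search-style peeling argument. First I would run the assumed algorithm for \FolioCliqueEx on the instance $(G,X,\delta)$ with clique parameter $h$. If it returns a solution to the \Folio instance, we are done: that is also a valid output for \FolioClique. So we may assume it reports that $G$ contains $K_h$ as a minor; the task is then to produce a witnessing model within the claimed time budget.

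The key idea for extracting a model is the standard trick of identifying, one branch set at a time, a small witness and then contracting it. Concretely, note that $G$ contains a $K_h$-minor if and only if there is a connected set $S \subseteq V(G)$ and a vertex $v \notin S$ such that $G/S$ (contracting $S$ to a single vertex) still contains $K_{h}$ as a minor in a way that uses the contracted vertex as one branch set — more usefully, $G$ contains $K_h$ as a minor if and only if $G$ contains a spanning-tree-like structure, but the cleanest route is: $G$ has a $K_{h}$-minor iff there is an edge $uv$ such that $G/uv$ has a $K_h$-minor, \emph{or} $h \le 1$. However contracting single edges blindly does not control which contractions are ``safe''. Instead, I would use \FolioCliqueEx as an oracle on contracted graphs. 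Observe that contracting an edge is a minor operation, so $G/F$ (for $F$ a connected edge set) having a $K_h$-minor still implies $G$ has one; conversely we want to greedily contract as long as the $K_h$-minor survives. The algorithm: maintain a partition of the current vertex set into connected ``super-vertices''; repeatedly pick an edge $e = uv$ of the current minor $G'$ (which is a minor of $G$ obtained by contractions only) and test via \FolioCliqueEx whether $G'/e$ still has a $K_h$-minor. If yes, perform the contraction; if no, mark $e$ as ``needed'' and do not contract it again. This terminates when no further contraction is possible, at which point $G'$ has $\le h$ vertices (it is $K_h$-minor-having, contraction-minimal, hence is exactly $K_h$ plus possibly extra edges, so $h$ vertices), and unpacking the super-vertices yields the desired minor model of $K_h$ in $G$.

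For the running time and the precise parameters in the statement, I would be careful about two points. First, each oracle call is on a graph $G'$ with $\|G'\| \le \|G\|$, so one call costs $f(|X|,\delta,h)\cdot\|G\|^{1+o(1)}$; but the naive loop above makes up to $\|G\|$ calls, which is too slow. The fix is the binary-search speedup: rather than contracting one edge at a time, process the edges of a fixed spanning forest (or a BFS/DFS order) in batches, using binary search on a linearly ordered sequence of contractions to find, in $O(\log \|G\|)$ oracle calls, a maximal prefix of contractions that preserves the $K_h$-minor; then record the first ``blocking'' vertex/edge, and recurse on the remainder. A clean way to implement this: fix a spanning forest $T$ of $G$ and an ordering $e_1,\ldots,e_r$ of its edges; for a prefix length $j$, let $G_j$ be $G$ with $e_1,\ldots,e_j$ contracted; since $G_j$ is monotone (more contraction = more likely to destroy the minor is \emph{false} in general, but restricted to contracting edges \emph{within} eventual branch sets it is monotone), binary-search for the largest $j$ with $G_j$ still having a $K_h$-minor. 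This yields $O(h \log \|G\|)$ oracle calls in total to pin down all $h$ branch sets, hence total time $O(h \log \|G\|)\cdot f(|X|,\delta,h)\cdot\|G\|^{1+o(1)} = f(|X|,\delta,h)\cdot\|G\|^{1+o(1)}$, absorbing the $h\log\|G\|$ factor into $\|G\|^{o(1)}$ and the $\Oh_{|X|,\delta,h}$-notation. The reason the statement has $\max(\delta,h)$ and $h+1$ rather than $\delta$ and $h$ is that when we want to \emph{certify} the extracted model is correct — or, more likely, when the reduction needs to distinguish ``$K_h$-minor present'' from ``$K_h$-minor absent'' robustly in the presence of the rooted-folio requirement — one invokes \FolioCliqueEx with clique parameter $h+1$ (to have slack to recognize a $K_h$ inside) and detail $\max(\delta,h)$ (since a $K_h$-minor model has $h$ unrooted branch sets, so to have it appear in a folio computation we need detail at least $h$, while we also still need detail $\delta$ for the original problem). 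I would make this precise by: whenever we would contract to verify a sub-model, we instead ask \FolioCliqueEx on the contracted graph with parameters $(|X|, \max(\delta,h), h+1)$, which still runs in the claimed time by monotonicity of $f$.

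The main obstacle I expect is the monotonicity subtlety: contracting an arbitrary edge of $G$ does \emph{not} monotonically preserve the property ``has a $K_h$-minor'' (in fact contraction can only preserve or create minors, so it \emph{is} monotone in the right direction — if $G/e$ has a $K_h$-minor then so does $G$, but the converse can fail, which is exactly what makes the greedy/binary-search valid: we only ever contract when the oracle confirms the minor survives). The real care is in arguing that the greedy process does not get stuck with more than $h$ super-vertices: one must show that a contraction-minimal graph that still has a $K_h$-minor has exactly $h$ vertices, i.e., that if $|V(G')| > h$ then some edge contraction preserves the $K_h$-minor. This follows because a $K_h$-minor model in $G'$ with $|V(G')|>h$ must have some branch set of size $\ge 2$ or some vertex outside all branch sets adjacent to a branch set (by connectivity of $G'$ on the union of branch sets, or by first discarding components); in either case there is an edge interior to a branch set (or an edge from an outside vertex into a branch set) whose contraction does not destroy the model — and the oracle will detect this. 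Handling the rooted set $X$ correctly throughout (ensuring contractions never merge distinct roots in an incompatible way, which is automatic since $K_h$'s model here is unrooted and we only need $X\subseteq V(G')$ to be tracked as labels) and bookkeeping the unpacking of super-vertices into connected branch sets of $G$ are then routine. I would write the proof as: (1) the oracle call at threshold $h+1$ (or $h$), detail $\max(\delta,h)$, either solves \Folio or confirms a $K_h$-minor; (2) the binary-search extraction procedure with its $O(h\log\|G\|)$ oracle calls; (3) the contraction-minimality lemma guaranteeing termination with $h$ super-vertices; (4) the final unpacking and the running-time accounting via monotonicity of $f$.
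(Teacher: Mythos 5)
Your proposal goes a genuinely different route from the paper, and it has gaps that would prevent it from working as stated.

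The paper's argument is much lighter than a model-extraction procedure. It considers the edge-addition sequence $G_0,\ldots,G_m$ (with $G_0$ edgeless and $G_i = G_{i-1}$ plus one edge), defines $\phi(i)\in\{\fM,\cM\}$ by whether the assumed algorithm on $G_i$ with clique parameter $h$ and detail $\delta'=\max(\delta,h)$ returns a folio or reports a $K_h$-minor, and binary-searches for an index $a$ with $\phi(a)=\fM$, $\phi(a+1)=\cM$ (exists since $\phi(0)=\fM$; monotonicity of $\phi$ is not needed). If the folio of $G_a$ already exhibits a $K_h$-minor model (which it does whenever $G_a$ has one, since $\delta'\ge h$), we are done. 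Otherwise $G_a$ is $K_h$-minor-free, hence $G_{a+1}$ is $K_{h+1}$-minor-free, so calling the algorithm on $G_{a+1}$ with clique parameter $h+1$ is \emph{forced} to output a folio; as $\phi(a+1)=\cM$ guarantees a $K_h$-minor in $G_{a+1}$, and $\delta'\geq h$, that folio hands you the model for free. This is the entire role of $\max(\delta,h)$ and $h+1$: the first ensures the $K_h$ model appears in any returned folio, and the second is what makes the oracle's folio-branch unavoidable at the transition graph.

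Your peeling scheme has three concrete problems. First, contraction alone does not suffice for the minimality claim: a $K_h$-minor model need not touch all vertices of $G'$, and vertices in connected components disjoint from the union of branch sets cannot be contracted into any branch set, only deleted — so ``contraction-minimal graph with a $K_h$-minor has exactly $h$ vertices'' is false without also allowing deletions. Second, the binary search over prefixes of a \emph{fixed} spanning-forest ordering does not localize a branch set: finding the largest $j$ with $G/\{e_1,\ldots,e_j\}$ still having a $K_h$-minor tells you only that the set $\{e_1,\ldots,e_{j+1}\}$ jointly kills the minor, not which edges are individually safe; the blocking may be a joint effect of several contractions, so the greedy ``record the first blocking edge and recurse on the remainder'' step can get stuck, and the claimed $O(h\log\|G\|)$ bound on oracle calls is not justified. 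Third, you invoke $\max(\delta,h)$ and $h+1$ as a side hedge for ``certification,'' but the point where $h+1$ does real work is precisely that at the transition graph the oracle at parameter $h+1$ \emph{cannot} take the ``clique minor found'' branch, and without arranging to be at such a graph, bumping the parameter does not guarantee you a folio. Switching from contraction sequences on $G$ to the edge-addition sequence $G_0,\ldots,G_m$ (sub\emph{graphs}, not minors) sidesteps all of these issues at once, since adding one edge can raise the Hadwiger number by at most one.
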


The proof of \cref{lem:binsearch} is relegated to \cref{sec:finding} in order not to disturb the flow of the argumentation. The idea is to consider subgraphs $G_0,G_1,\ldots,G_m$ of $G$, where $G_0$ is edgeless, $G_m=G$, and each $G_i$ is obtained from $G_{i-1}$ by adding one edge. Using the assumed algorithm $\Aa$ for \FolioCliqueEx as a black-box, we find using binary search an index $a$ such that $\Aa$ applied to $G_a$ produces an $(X,\delta)$-model-folio, while $\Aa$ applied to $G_{a+1}$ reports that $G_{a+1}$ contains $K_h$ as a minor. If $G_a$ already contains $K_h$ as a minor, then a suitable minor model can be extracted from the (already computed) model-folio. Otherwise $G_a$ is $K_h$-minor-free, hence $G_{a+1}$ is $K_{h+1}$-minor-free. So applying $\Aa$ again to $G_{a+1}$, but with parameter $h+1$ instead of $h$, necessarily outputs an $(X,\delta)$-model-folio of $G_{a+1}$, from which a minor model of $K_h$ can be~extracted. Here we may assume that $\delta\geq h$ by replacing $\delta$ with $\max(\delta,h)$ at the very beginning of the argument.

Consequently, by \cref{lem:binsearch}, from now on we focus on designing an $\Oh_{|X|,\delta,h}(\|G\|^{1+o(1)})$-time algorithm for the \FolioCliqueEx problem.

In this section we focus on proving \cref{thm:cliqueFree} under the additional assumption that $X$ is a sizeable well-linked set and the graph $G$ is ``well-concentrated'' around $X$. This idea will be formalized in a moment in the definition of a {\em{compact}} graph. In \Cref{sec:carving,sec:full-clique-minor-free} we complete the proof of \Cref{thm:cliqueFree} by reducing the general case to the compact case using a recursive strategy that iteratively finds large subgraphs of $G$ that can be separated from $X$ by small separators, recursively understanding those subgraphs, and replacing them with smaller gadgets having the same folio.

\subsection{Chips and compactness}

The notion of compactness that we will use is the following.

\begin{definition}
 Let $\alpha,k\in \N$, $G$ be a graph, and $X$ be a subset of vertices of $G$.
 A vertex subset $C\subseteq V(G)$ will be called an {\em{$(X,k,\alpha)$-chip}} if it satisfies the following properties:
 \begin{itemize}[nosep]
  \item $C\cap X=\emptyset$;
  \item $G[C]$ is connected;
  \item $|C|\geq \alpha$; and
  \item $|N(C)|<k$.
 \end{itemize}
 We shall say that $G$ is {\em{$(X,k,\alpha)$-compact}} if there are no $(X,k,\alpha)$-chips in $G$.
\end{definition}

Thus, the main goal of this section is to prove the following statement, which explains that the \FolioCliqueEx problem can be solved efficiently in sufficiently compact graphs. Later, in \Cref{sec:carving,sec:full-clique-minor-free}, we will show how to reduce the general case to this compact case.

\begin{theorem}\label{thm:compact-solvable}
 For every $h\in \N$ there exists a constant $k\in \N$, computable from $h$, such that there is an algorithm for the \FolioCliqueEx problem that runs in time $\Oh_{|X|,\delta,h,\alpha}(\|G\|^{1+o(1)})$ under the guarantee that $X$ is well-linked, $|X|\geq k$, and the input graph $G$ is $(X,k,\alpha)$-compact. Here, we assume that $\alpha$ is also given to the algorithm on input.
\end{theorem}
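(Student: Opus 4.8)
The plan is to reduce \FolioCliqueEx on a compact graph to a bounded number of calls to the apex-minor-free algorithm of \cref{thm:apexFree}, using the following structural lemma as the main combinatorial ingredient: there is a constant $k$, computable from $h$, together with a number $\zeta\in\N$ and an apex graph $R$, both computable from $|X|$, $h$, and $\alpha$, such that whenever $X$ is well-linked in $G$, $|X|\ge k$, $G$ is $(X,k,\alpha)$-compact, and $G$ is $K_h$-minor-free, there is a set $Z\subseteq V(G)$ with $|Z|\le\zeta$ for which $G-Z$ is $R$-minor-free. I would prove this lemma by feeding $G$ into the Structure Theorem for $K_h$-minor-free graphs \cite{RobertsonS03a} and arguing, in the spirit of \cite{LokshtanovPP022}, that the hypotheses force the tree decomposition it outputs to degenerate: well-linkedness together with $|X|\ge k$ pins the decomposition so that it has a single torso $B^\star$ with all remaining parts glued onto $B^\star$, and $(X,k,\alpha)$-compactness forces each such part to have fewer than $\alpha$ vertices. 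Then $Z$ can be taken to consist of the apex set of the almost-embeddable structure of $B^\star$, enlarged by a bounded amount so as to also absorb the boundedly many bounded-depth vortices; the resulting $G-Z$ is a bounded-genus graph with small pendant pieces, and such graphs exclude a computable apex graph $R$ as a minor (for instance because they form a minor-closed family of bounded local treewidth).

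The algorithm does not know $Z$ and reconstructs it vertex by vertex. Fix $k$ as above, let $q_0$ be such that the apex-grid of order $q_0$ has $R$ as a minor, and let $p$ be large enough in terms of $q_0$ and $\zeta$ so that the extraction below produces grids of order at least $q_0+\zeta$. Maintain a set $Z'\subseteq V(G)\setminus X$, initially empty, and repeat the following for at most $\zeta+1$ rounds. Run the algorithm of \cref{thm:apexFree} on $(G,X\cup Z',\delta)$ with parameter $p$. If it returns an $(X\cup Z',\delta)$-model-folio of $G$, compute from it the $(X,\delta)$-model-folio of $G$: for each rooted minor $(H,\pi')$ in the $(X\cup Z',\delta)$-folio, with model $\eta$, keep $(H,\pi)$ with $\pi(u)\coloneqq\pi'(u)\cap X$ (which $\eta$ still witnesses) provided its detail is at most $\delta$. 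A routine check shows this produces precisely the $(X,\delta)$-folio, since every $X$-rooted minor of detail $\le\delta$ lifts to an $(X\cup Z')$-rooted minor of detail $\le\delta$ by restricting its branch sets to $X\cup Z'$ (adding roots can only lower the detail), while conversely each model retained by the post-processing witnesses the corresponding $X$-rooted minor. Otherwise the algorithm returns a minor model of the apex-grid of order $p$ in $G-(X\cup Z')$, and a standard extraction --- repeatedly using \cref{prop:grid-in-grid} to test whether a single vertex of the apex branch set dominates a large subgrid --- yields in time $\Oh_p(\|G\|)$ either a minor model of $K_h$ in $G$, in which case we report that $G$ contains $K_h$ as a minor, or a vertex $a\in V(G)\setminus(X\cup Z')$ together with a model of the $q\times q$ grid in $G$ disjoint from $a$ and all of whose branch sets are adjacent to $a$, for some $q\ge q_0+\zeta$; in the latter case we add $a$ to $Z'$ and continue. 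If all $\zeta+1$ rounds complete without returning, we report that $G$ contains $K_h$ as a minor.

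For correctness, note first that returning an $(X,\delta)$-model-folio, or returning a $K_h$-minor model genuinely extracted from the apex-grid model, is always correct, regardless of whether $G$ has a $K_h$ minor. So assume $G$ is $K_h$-minor-free and fix a set $Z$ as in the structural lemma; I claim $Z'\subseteq Z$ is an invariant. It holds initially. In the inductive step, a round outputs a vertex $a\notin X\cup Z'$ together with a $q\times q$ grid model disjoint from $a$ with every branch set adjacent to $a$. If $a\notin Z$, then the grid branch sets together with $\{a\}$ form a model of the apex-grid of order $q$ in $G$ of which $Z$ (being disjoint from $a$) meets at most $|Z|\le\zeta$ branch sets, all of them grid cells; deleting these leaves a model of the apex-grid of order at least $q-\zeta\ge q_0$ in $G-Z$, so $G-Z$ contains $R$ as a minor, contradicting the choice of $Z$. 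Hence $a\in Z$, so $Z'\cup\{a\}\subseteq Z$, and $|Z'|\le\zeta$ is preserved. Since each round either returns or strictly enlarges $Z'$, and $Z'$ can never exceed $|Z|\le\zeta$ vertices, the algorithm must return within $\zeta+1$ rounds; moreover it cannot return by extracting a $K_h$ minor (there is none), so it returns a correct $(X,\delta)$-model-folio. Contrapositively, if the algorithm ever exhausts its $\zeta+1$ rounds, then $G$ is not $K_h$-minor-free and reporting that $G$ contains $K_h$ as a minor is correct.

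As for the running time: since $k$ is computable from $h$ and $\zeta,R,q_0,p$ are computable from $|X|,h,\alpha$, every call to \cref{thm:apexFree} takes time $\Oh_{|X|+\zeta,\delta,p}(\|G\|^{1+o(1)})=\Oh_{|X|,\delta,h,\alpha}(\|G\|^{1+o(1)})$, each extraction and each folio post-processing step takes time $\Oh_{|X|,\delta,h,\alpha}(\|G\|)$, and there are $\Oh_{|X|,h,\alpha}(1)$ rounds, for a total of $\Oh_{|X|,\delta,h,\alpha}(\|G\|^{1+o(1)})$. The bulk of the work, and the main obstacle, is the structural lemma: distilling from the Structure Theorem the clean consequence that compactness plus a large well-linked $X$ collapses the decomposition to one almost-embeddable torso with only sub-$\alpha$ pendant pieces, and verifying that deleting its bounded apex set leaves a graph excluding a computable apex graph as a minor. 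The two remaining ingredients --- the extraction lemma from the apex-grid model, and the verification that forgetting $Z'$-roots yields the $(X,\delta)$-model-folio --- are routine along the lines indicated above.
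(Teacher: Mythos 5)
Your proposal follows essentially the same strategy as the paper: use the Structure Theorem plus compactness and well-linkedness to argue that all apices of large top-apex-grids live in a bounded-size set, then reconstruct that set iteratively by calling the apex-minor-free subroutine, extracting from each returned apex-grid model either a $K_h$-minor or a top-apex-grid apex, and terminating after a bounded number of rounds. The paper packages the structural statement as \cref{lem:few-candidates} (``at most $c$ candidate apices''), proved via \cref{lem:shallow-embedding} (compactness forces bounded breadth of the near-embedding), \cref{lem:shallow-apex-minor-free}, and \cref{lem:candidates-are-apices}; your formulation (``a fixed set $Z$ of bounded size with $G-Z$ apex-minor-free, and every candidate apex must lie in $Z$'') is logically equivalent. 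Your iterative algorithm, the invariant $Z'\subseteq Z$, the termination bound, and the routine post-processing from the $(X\cup Z',\delta)$-model-folio to the $(X,\delta)$-model-folio are all correct and match the paper's reasoning.

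One imprecision worth flagging in your sketch of the structural lemma: you propose to take $Z$ to be the apex set of the near-embedding ``enlarged by a bounded amount so as to also absorb the boundedly many bounded-depth vortices.'' A vortex of bounded adhesion can still contain arbitrarily many vertices, so you cannot absorb it by deleting a bounded set of vertices. The paper instead handles vortices by keeping them: \cref{lem:shallow-apex-minor-free} shows that a near-embedding of bounded order and breadth, with the apices already removed, \emph{still} excludes a computable apex-grid, vortices and all, via the result of Cohen-Addad et al. packaged as \cref{lem:B1} together with a clique-sum argument (\cref{lem:62+}). You correctly identify the structural lemma as the main work; this vortex point is the piece that would need to be done differently from how you outlined it, and the compactness-implies-bounded-breadth step (\cref{lem:shallow-embedding}) is the other nontrivial verification you leave implicit.
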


The remainder of this section is devoted to the proof of \cref{thm:compact-solvable}.

\subsection{Top-apex-grids}
We need some combinatorial results on how minor models of apex graphs may behave in compact graphs. More precisely, we will work with a very special kind of models of apex graphs, where the branch set of the apex is a single vertex.

\begin{definition}
 A {\em{top-apex-grid}} of order $p$ in a graph $G$ consists of a minor model $\eta\colon [p]\times [p]\to 2^{V(G)}$ of a $p\times p$ grid in $G$ and a vertex $a$, called the {\em{apex}}, that lies outside of $\bigcup_{(i,j)\in [p]\times [p]} \eta(i,j)$ and has a neighbor in each of the sets $\{\eta(i,j)\colon (i,j)\in [p]\times [p]\}$.
\end{definition}

Clearly, if $G$ contains a top-apex-grid of order $p$, then $G$ also contains an apex-grid of order $p$ as a minor. The following statement gives a partial converse of this implication: the existence of a large apex-grid minor implies the existence of a large clique minor or of a large top-apex-grid.

\begin{lemma}\label{lem:getting-the-top}
 For all $h,p\in \N$ there is a constant $q\in \N$, computable from $h,p$, such that the following holds:
 Suppose $G$ contains the apex-grid of order $q$ as a minor. Then either $G$ contains $K_h$ as a minor, or $G$ contains a top-apex-grid of order $p$.

 Moreover, there is an algorithm that given $G$ and a minor model of an apex-grid of order $q$ in $G$, in time $\Oh_q(\|G\|)$ either finds a top-apex-grid of order $p$ in $G$, or correctly concludes that $G$ contains $K_h$ as a minor.
\end{lemma}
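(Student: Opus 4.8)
The plan is to start from a minor model $\mu$ of the apex-grid of order $q$ in $G$; let $a_0$ be the image of the apex of the apex-grid (that is, $\mu$ maps the universal vertex of the apex-grid to the branch set $\mu(a_0)$, which is connected), and let $\mu$ restricted to the $q\times q$ grid be a minor model of a large grid that is entirely disjoint from $\mu(a_0)$, with $\mu(a_0)$ having a neighbour in every grid branch set. The difficulty is that $\mu(a_0)$ may be a large connected set rather than a single vertex, so it is not yet a top-apex-grid. The main idea is a standard win/win argument: either we can ``shrink'' $\mu(a_0)$ down to a single vertex while retaining adjacency to a still-large subgrid, or the fact that $\mu(a_0)$ genuinely needs many vertices to reach many far-apart parts of the grid lets us route many disjoint paths and build a large clique minor.

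More concretely, I would first pass to a subgrid to clean up the situation. Using \cref{prop:grid-in-grid} (grid-within-a-grid), from the $q\times q$ grid minor I can extract many ``cells'' arranged in a coarse subgrid pattern, so that it suffices to work with a minor model of, say, an $m\times m$ grid with $m$ still large compared to $p,h$. Now pick a spanning tree $T$ of $G[\mu(a_0)]$ and consider, for each grid branch set $B$, a vertex $v_B\in\mu(a_0)$ adjacent to $B$; this gives a set $S$ of at most $m^2$ vertices of $T$. Take the Steiner tree $T'$ of $S$ in $T$ and look at its branching/leaf structure: either $T'$ has few branch vertices, in which case it decomposes into few paths, and by pigeonhole one vertex $v^\star\in\mu(a_0)$ serves as the ``attachment witness'' for a constant fraction of the grid cells — after contracting appropriately we obtain a top-apex-grid of order $p$ with apex $v^\star$; or $T'$ has many ($\ge$ some threshold depending on $h$) branch vertices of degree $\ge 3$, in which case these branch vertices, together with the grid, let us find a $K_h$-minor.

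For the clique case, the point is that if $\mu(a_0)$ contains $t$ internally disjoint subtrees each hanging off the Steiner tree $T'$ and each reaching a distinct, ``topologically separated'' region of the grid, then we can realize $t$ pairwise adjacent branch sets: take $t$ vertices of $\mu(a_0)$ that are pairwise connected through $T'$ (so the whole of $T'$ forms one connected ``hub''), route from each of them a disjoint path into a private part of the grid, and use the grid's own connectivity/routing to make these $t$ groups pairwise adjacent. This is exactly the type of construction used to extract a clique minor from a ``nearly-top'' apex-grid, and the threshold $q=q(h,p)$ is chosen (computably) large enough via \cref{thm:grid-minor} / \cref{prop:grid-in-grid} so that whichever side of the dichotomy we land on, the remaining grid is still at least $p\times p$ after all the contractions and deletions of used vertices.

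The algorithmic ``moreover'' part follows by making each of these steps constructive: given $\mu$ explicitly, computing $T$, the Steiner tree $T'$, its branch vertices, the pigeonhole choice of $v^\star$, and the routings inside the grid are all doable in time $\Oh_q(\|G\|)$ by elementary graph search (BFS/DFS), since all the combinatorial parameters (number of cells, number of branch vertices we need, the order $p$) depend only on $q$ (equivalently on $h,p$). The main obstacle I expect is the bookkeeping in the clique case: one must argue carefully that the $t$ paths from the hub $\mu(a_0)$ into the grid can be chosen pairwise disjoint and ending in parts of the grid that are mutually reachable by disjoint grid-paths, so that the resulting $t$ branch sets are genuinely pairwise adjacent; getting the quantitative relation between $t$, the grid size, and $h$ right (so that a $K_h$ minor really falls out) is the delicate point, though it is a routine-if-tedious application of grid routing once the clean-up via \cref{prop:grid-in-grid} has been done.
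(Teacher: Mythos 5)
Your high-level skeleton --- build a Steiner tree inside the apex branch set hitting attachment points, then dichotomize on the tree's structure to get either a top-apex-grid or a clique --- does match the paper's strategy. But the dichotomy you choose, and the way you argue each side, both have real gaps.

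The paper's dichotomy is: does the Steiner tree $T$ have a vertex of degree $\geq d$, or is the maximum degree $< d$? Yours is: does the Steiner tree have few or many \emph{branch vertices}? These are not the same. In the ``few branch vertices'' case you say $T'$ ``decomposes into few paths, and by pigeonhole one vertex $v^\star$ serves as the attachment witness for a constant fraction of the grid cells.'' This is wrong as stated: a star has one branch vertex but decomposes into $\lceil \ell/2 \rceil$ edge-disjoint paths, and there is no reason why a path decomposition would concentrate attachments at a single vertex. What is actually true is that few branch vertices plus many leaves forces some vertex to have \emph{high degree} (since in a tree $\ell = 2 + \sum_v (\deg v - 2)$), at which point you should run the paper's Case 1: take $d$ paths out of the high-degree vertex $b$, one per branch, pairwise disjoint except at $b$, each terminating at a distinct attachment; those $d = p^4$ attachment positions are in the central region so \cref{prop:grid-in-grid} gives a $p\times p$ grid model; extend each of its branch sets by the corresponding path minus $b$; now $b$ is a single-vertex apex. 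Your write-up skips the crucial disjointness mechanism (one path per branch of a high-degree vertex) that makes the apex be literally one vertex. Conversely, ``many branch vertices'' does not imply low maximum degree (you could have many degree-$3$ branch vertices and also a degree-$d$ vertex), so on that side of your dichotomy a clique minor might simply not exist, while the paper's Case 1 (driven by the high-degree vertex) would still succeed.

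Your clique case is also missing the key ingredient. You propose to pick $t$ hub vertices in $\mu(a_0)$, route disjoint paths into ``private parts'' of the grid, and ``use grid routing to make these $t$ groups pairwise adjacent.'' But the grid is planar, so $t$ pairwise-adjacent grid regions can't be produced for $t \geq 5$; planarity has to be broken \emph{by the apex branch set}. The paper does this by taking $\lceil \ell/\Delta \rceil$ vertex-disjoint \emph{leaf-to-leaf} paths in the low-max-degree Steiner tree (its Claim~\ref{cl:lowdeg-matching}), contracting each to a single crossing edge, and then invoking a known combinatorial fact (the paper's Claim~\ref{cl:crossings-clique-minor}, cf.~\cite[(7.4)]{GM13} and \cite{KawarabayashiTW18}): a grid plus a matching of $s$ crossing edges with pairwise far-apart endpoints contains $K_h$ as a minor, for $s$ and the spacing $r$ computable from $h$. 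Your sketch does not cite or reconstruct this lemma, and without it the quantitative step from ``many disjoint cross-connections'' to ``$K_h$ minor'' does not go through. To repair your proof you would essentially have to change the dichotomy to the paper's (maximum degree threshold) and bring in the grid-plus-crossings clique lemma --- at which point you would be reproducing the paper's argument.
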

\begin{proof}
 We focus on proving the combinatorial claim. That the proof can be turned into a linear-time algorithm will be clear from the description; we comment on it at the end of the reasoning.

 We will need the following standard statement which intuitively says the following: if one adds to a grid a large matching consisting of planarity-breaking edges, then the obtained graph necessarily contains a large clique minor.

 \begin{claim}\label{cl:crossings-clique-minor}
  For every $h\in \N$ there exist constants $r,s\in \N$, computable from $h$, such that the following holds. Suppose $H$ is a grid (of some order) and $\widetilde{H}$ is a supergraph of $H$ obtained by adding to $H$ a matching $M$ consisting of $s$ edges with the following property: in $H$, the endpoints of the edges of $M$ are pairwise at distance at least $r$ from each other, and also at distance at least $r$ from the perimeter of $H$. Then $\widetilde{H}$ contains $K_h$ as a minor.
 \end{claim}
 \begin{claimproof}
  Follows directly from~\cite[(7.4)]{GM13} and also from \cite[Lemma~4.3]{KawarabayashiTW18}. The computability claim is not asserted in~\cite{GM13}, but explicit bounds with $r\in \Oh(h^2)$ and $s\in \Oh(h^6)$ are proved in~\cite{KawarabayashiTW18}.
 \end{claimproof}

 We proceed to the proof of the lemma. Let $r$ and $s$ be the constants provided by \cref{cl:crossings-clique-minor} for the parameter~$h$. We set
 $$d\coloneqq p^4\qquad\textrm{and}\qquad q\coloneqq 2p+rds.$$
 Let $R$ be the apex grid of order $q$. We assume that the vertex set of $R$ is $([q]\times [q])\cup \{a\}$, where $a$ is the apex. Further, let $\eta$ be a minor model of $R$ in $G$. Denote $J\coloneqq G[\eta(a)]$, and let $\widehat{J}$ be the supergraph of $J$ obtain by adding, for every $(i,j)\in [q]\times [q]$, an arbitrary edge with one endpoint in $\eta(a)$ and the other endpoint in~$\eta(i,j)$. Note that $\widehat{J}$ is connected and contains, for each $(i,j)\in [q]\times [q]$, exactly one vertex of~$\eta(i,j)$; call it $\sigma(i,j)$.

 Let
 $$Y\coloneqq \{(i,j)\in [p+1,q-p]\times [p+1,q-p]\ \mid\ i,j\textrm{ are divisible by }r\}\qquad \textrm{and}\qquad Z\coloneqq \sigma(Y).$$
 Note that $|Y|=\left(\frac{q-2p}{r}\right)^2=(ds)^2\geq ds$ and in the grid $R-a$, the vertices of $Y$ are pairwise at distance at least $r$ from each other and from the perimeter of $R-a$.
 Further, let $T$ be an inclusion-wise minimal connected subgraph of $\widehat{J}$ that contains all the vertices of $Z$. Then $T$ is a tree with the leaf set $Z$.

 Consider first the case when $T$ contains a vertex of degree at least $d$, say $b$. Then within $T$ we can find a set of paths $\Pp$ of size $d$ so that each path $P\in \Pp$ starts in $b$ and finishes at a different vertex of~$Z$, and the paths of $\Pp$ are pairwise vertex-disjoint, except for sharing the endpoint $b$. Let $X\subseteq Z$ be the set of endpoints within $Z$ of paths from $\Pp$. As $|X|=|\Pp|=d=p^4$ and $Y\subseteq [p+1,q-p]\times [p+1,q-p]$, from \cref{prop:grid-in-grid} we infer that in $G$ there exists a minor model $\eta'$ of the $p\times p$ grid that is vertex-disjoint with $\eta(a)$ and whose every branch set contains a vertex of $X$. By extending every branch set of $\eta'$ by the unique path of $\Pp$ that ends in this branch set, except for not including the endpoint $b$, we obtain a minor model of the $p\times p$ grid whose every branch set is adjacent to~$b$. This is a top-apex-grid of order $p$ in $G$.

 For the second case, when all vertices of $T$ have degrees smaller than $d$, we use the following claim.

 \begin{claim}\label{cl:lowdeg-matching}
  Let $S$ be a tree with $\ell\geq 2$ leaves and maximum degree at most $\Delta\geq 3$. Then in $S$ there exists a family $\Pp$ consisting of $\left\lceil \frac{\ell}{\Delta}\right\rceil$ vertex-disjoint paths such that for each $P\in \Pp$, the endpoints of $P$ are two different leaves of $S$.
 \end{claim}
 \begin{claimproof}
  We proceed by induction on $\ell$. The base case $\ell=2$ is trivial: just take any path connecting two different leaves of $S$.

  For the induction step, assume $\ell\geq 3$. First observe that we may assume that $S$ has no vertices of degree $2$, as we may iteratively contract such vertices onto any of their neighbors, and paths found after the contraction can be easily lifted to paths before the contraction. As $\ell\geq 3$, we can find a vertex $v$ that is not a leaf of $S$, but all except at most one neighbor of $v$ is a leaf. As the degree of $v$ is at least $3$, we may find two leaf neighbors $u_1,u_2$ of $v$, and construct the path $P\coloneqq u_1-v-u_2$. Now let $S'$ be the tree obtained from $S$ by removing $v$ and all leaf neighbors of $v$. As the degree of $v$ is at most $\Delta$, it follows that $S'$ has at least $\ell-\Delta$ leaves. Further, since $S$ has no vertices of degree $2$, all the leaves of $S'$ are also leaves of $S$. Hence, by applying the induction assumption to $S'$, we find a suitable family $\Pp'$ consisting of at least $\left\lceil \frac{\ell-\Delta}{\Delta}\right\rceil=\left\lceil \frac{\ell}{\Delta}\right\rceil-1$ paths in $S'$. Now $\Pp\coloneqq \Pp'\cup \{P\}$ is a family of vertex-disjoint paths in $S$ with all the desired properties.
 \end{claimproof}

 Recalling that $Y$ is the leaf set of $T$, from \cref{cl:lowdeg-matching} we infer that if all vertices of $T$ have degrees smaller than $d$, then in $T$ there exists a family of $\left\lceil \frac{|Y|}{d}\right\rceil=s$ vertex-disjoint paths, each connecting two different leaves of $T$. By contracting each of these paths to a single edge, and contracting each branch set of $\eta$ to a single vertex, we observe that $G$ contains as a minor a graph $\widetilde{H}$ obtained from the $q\times q$ grid by adding a matching of size $s$, where each edge of the matching connects two different vertices of $Y$ (or rather, their images under the contractions). Since in the grid, the endpoints of the matching are pairwise at distance at least $r$ from each other and from the perimeter of the grid, from \cref{cl:crossings-clique-minor} we conclude that $\widetilde{H}$ contains $K_h$ as a minor. Hence so does $G$.

 This proves the combinatorial claim. As for the algorithmic claim --- that there is an $\Oh_q(\|G\|)$-time algorithm that either produces a top-apex-grid of order $p$ or correctly reports that $G$ contains $K_h$ as a minor --- we construct the tree $T$ directly from the definition and investigate the degrees in $T$. If $T$ contains a vertex of degree at least $d$, then we may construct a top-apex-grid of order $p$ by computing $X$, finding the minor model $\eta'$ provided by \cref{prop:grid-in-grid} by brute-force in time $\Oh_q(1)$, and then following the remainder of the reasoning. Otherwise, if all vertices of $T$ have degrees smaller than $d$, then we have argued that it is safe to report that $G$ contains $K_h$ as a minor.
\end{proof}

\subsection{Few candidates for apices}

We now develop the key graph-theoretic lemma that underlies our approach: in compact graphs without large clique minors, there are only few candidates for apices of top-apex-grids.

\begin{lemma}\label{lem:few-candidates}
 For all $h,\alpha,\ell\in \N$ there exist  constants $k,p,c\in \N$, computable from $h,\alpha,\ell$ and such that $k$ depends only on $h$, such that the following holds. Suppose $\ell\geq k$.
 Let $G$ be a graph excluding $K_h$ as a minor, and let $X$ be a well-linked set in $G$ such that $|X|=\ell$ and
 $G$ is $(X,k,\alpha)$-compact. Call a vertex $u$ a {\em{candidate}} if $u$ is the apex of some top-apex-grid of order $p$ in~$G$. Then the number of distinct candidates is at most $c$.
\end{lemma}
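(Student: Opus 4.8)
The plan is to reduce the statement to the (known) degeneration of the Robertson--Seymour Structure Theorem on compact clique-minor-free graphs, and then observe that every candidate is forced into the resulting bounded exceptional set. Concretely, I would first fix $k = k(h)$ large enough and establish the following: for every $h,\alpha,\ell$ there are $c_0, R_0 \in \N$, computable from $h,\alpha,\ell$, such that whenever $G$ is $K_h$-minor-free, $X$ is well-linked with $|X| = \ell \ge k$, and $G$ is $(X,k,\alpha)$-compact, there is a set $Z \subseteq V(G)$ with $|Z| \le c_0$ for which $G - Z$ excludes the apex-grid of order $R_0$ as a minor. Granting this, set $c \coloneqq c_0$ and $p \coloneqq (c_0+1)(R_0+2)$; the claim of the lemma then follows once we show that every candidate lies in $Z$, since there are at most $|Z| \le c$ of them.

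To prove that every candidate lies in $Z$, suppose towards a contradiction that $u \notin Z$ is a candidate, witnessed by a top-apex-grid $(\eta, u)$ of order $p$. The branch sets $\{\eta(s,t) : (s,t) \in [p]\times[p]\}$ are pairwise disjoint and $|Z| \le c_0$, so at most $c_0$ of them meet $Z$, and these occupy at most $c_0$ of the $p$ rows of $[p]\times[p]$. The $\le c_0$ ``bad'' rows split the remaining rows into at most $c_0+1$ intervals of consecutive ``good'' rows (those all of whose branch sets avoid $Z$), so by averaging one such interval has at least $R_0+1$ rows, for our choice of $p$. Restricting $\eta$ to that block of consecutive rows and to all columns, and then to a square sub-grid, yields a minor model of the $(R_0+1)\times(R_0+1)$ grid in $G-Z$ all of whose branch sets avoid $Z$ and are still adjacent to $u$; since $u \notin Z$, this is a top-apex-grid of order $R_0+1$ in $G-Z$. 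In particular $G-Z$ contains the apex-grid of order $R_0+1$, hence also of order $R_0$, as a minor, contradicting the choice of $Z$. Thus the set of candidates is contained in $Z$, proving the lemma with $c = c_0$.

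The substantial part, and the step I expect to be the main obstacle, is establishing the degeneration used above; this is essentially the argument of Lokshtanov, Pilipczuk, Pilipczuk and Saurabh~\cite{LokshtanovPP022} on top of the Structure Theorem~\cite{RobertsonS03a}. I would apply the Structure Theorem to $G$ to get a tree decomposition with adhesions of size at most $N = N(h)$ whose torsos are $N$-almost-embeddable (in a surface of Euler genus $\le N$, with $\le N$ apex vertices and $\le N$ vortices of depth $\le N$), and choose $k > N$. Since $X$ is well-linked and $|X| = \ell \ge k > N$, no adhesion can split $X$ into two sides each meeting $X$ in more than an adhesion's worth of vertices, so $X$ is confined to the bags around a single node $t_0$; then every branch of the decomposition hanging off $t_0$ is disjoint from $X$ and attaches to the rest through an adhesion of size $< k$, so $(X,k,\alpha)$-compactness forces each such branch to span fewer than $\alpha$ vertices. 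Collapsing all those branches leaves one ``central'' bag $B_0$, a bounded combination of almost-embeddable torsos and hence itself $N'$-almost-embeddable for some $N' = N'(h)$, with side bags of size $< \alpha + N$. Taking $Z$ to be the apex and vortex vertices of $B_0$ (a set of size bounded in $h$) together with a transversal of the adhesions to the side bags makes $G - Z$ embeddable in a surface of Euler genus $\le N'$ with only pieces of fewer than $\alpha + N$ vertices attached along it, and such graphs exclude the apex-grid of order $R_0(h,\alpha,\ell)$ as a minor. The delicate point is carrying out this bookkeeping rigorously --- in particular controlling how the vortices and the small side bags could contribute to a would-be apex-grid minor --- and an alternative, if one wishes to avoid invoking the Structure Theorem, is to replay the same reasoning directly through the Flat Wall Theorem (\cref{prop:many-flat-walls}), using compactness to bound the size of the flaps of the rural divisions it produces.
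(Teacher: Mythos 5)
Your reduction step --- showing that all candidates must lie in a bounded set $Z$ once one knows $G-Z$ excludes a bounded-order apex-grid minor --- is sound and essentially matches the paper's \cref{lem:candidates-are-apices}; your consecutive-rows argument is a slightly more wasteful version of the paper's removal-and-merging of the $\leq |Z|$ bad rows, but it works. The reduction is not the hard part of the lemma, however, and you acknowledge as much.

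The gap is in your construction of $Z$. You propose taking $Z$ to be ``the apex and vortex vertices of $B_0$\ldots\ together with a transversal of the adhesions to the side bags.'' Neither of these is bounded: a vortex of bounded adhesion can contain $\Omega(n)$ vertices, and there can be $\Omega(n)$ distinct side bags, so a transversal of their adhesions can also have size $\Omega(n)$. Thus $|Z|$ is not bounded by a function of $h,\alpha,\ell$, and the argument collapses since you need $c=|Z|$ to be a constant. The paper resolves exactly this point differently: in \cref{lem:shallow-embedding} it takes $Z$ to be $X$ together with only the apices of a near-embedding capturing the $X$-tangle (so $|Z|\leq q+\ell$ with $q=q(h)$), and keeps both vortices and flaps attached. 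It then proves in \cref{lem:shallow-apex-minor-free} that a graph nearly-embeddable with no apices, bounded order, and bounded \emph{breadth} (interiors of flaps and vortex bags have components of size $<\alpha$) excludes a bounded apex-grid as a minor. The two devices it uses, both absent from your sketch, are (1) bounded breadth lets one refine each vortex path decomposition to bounded width, after which \cref{lem:B1} applies, and (2) flaps attach to at most three embedded vertices, so the iterated clique-attachment argument of \cref{lem:62+} bounds the contribution of the (possibly unboundedly many) flaps without ever deleting them. These are precisely what is needed to carry out the bookkeeping you flagged as delicate.
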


Our proof of \cref{lem:few-candidates} heavily relies on the Structure Theorem of Robertson and Seymour for graphs excluding a fixed minor~\cite{RobertsonS03a}. More precisely, the idea, inspired by the work of Lokshtanov, Pilipczuk, Pilipczuk, and Saurabh~\cite{LokshtanovPP022}, is that in a compact graph that is $K_h$-minor-free, the Structure Theorem provides a very simple decomposition: the graph is nearly-embeddable in a fixed surface $\Sigma$ after the removal of a bounded number of apices, where nearly-embeddable means that the part not embedded in $\Sigma$ consists of a bounded number of {\em{vortices}} of bounded pathwidth and a (potentially unbounded) number of {\em{flaps}}, each with bounded maximum component size and attached to at most three vertices embedded in $\Sigma$. See also~\cite[Lemma 5.1, first bullet]{LokshtanovPPS22a} for a similar statement. Once this is understood, we prove that only the (boundedly many) apices of the near-embedding may serve the role of apices of large top-apex-grids. This proof relies on two technical lemmas observed by Cohen-Addad, Le, Pilipczuk, and Pilipczuk~\cite{Cohen-AddadLPP23}, which in turn exploit connections between excluding apex graphs as minors and having locally bounded treewidth, studied by Eppstein~\cite{Eppstein00}. Thus, our proof involves a significant amount of the Graph Minors machinery; we would be interested in seeing a more direct and lighter argument.

In our study of near-embeddings, we follow the terminology of Diestel, Kawarabayashi, M\"uller, and Wollan~\cite{DiestelKMW12}. We will work with near-embeddings defined as follows:

\begin{definition}
A {\em{near-embedding}} of a graph $G$ in a surface $\Sigma$ consists of a set of {\em{apices}} $Z\subseteq V(G)$, a subgraph $G_0$ of $G-Z$, an embedding $\gamma$ of $G_0$ in $\Sigma$, a family $\Ww$ of subgraphs  of $G-Z$ called {\em{vortices}}, and a family $\Ff$ of subgraphs  of $G-Z$ called {\em{flaps}}\footnote{In~\cite{DiestelKMW12}, vortices and flaps are called {\em{large vortices}} and {\em{small vortices}}, respectively. In~\cite{LokshtanovPP022,LokshtanovPPS22a}, flaps are called {\em{dongles}}.}. We require that the following properties are satisfied:
\begin{itemize}[nosep]
\item The edge sets of subgraphs of $\{G_0\}\cup \Ww\cup \Ff$ form a partition of the edge set of $G-Z$. In particular, these subgraphs are pairwise edge-disjoint.
\item For all distinct $H,H'\in \Ww\cup \Ff$, we have $V(H)\cap V(H')\subseteq V(G_0)$. For each $H\in \Ww\cup \Ff$, the set $\Omega(H)\coloneqq V(H)\cap V(G_0)$ will be called the {\em{society}} of $H$. In case $W\in \Ww$ is a vortex, $\Omega(W)$ is an {\em{ordered set}}, that is, a set equipped with an ordering $(u^W_1,u^W_2,\ldots,u^W_{n_W})$ on its elements, where we denote $n_W\coloneqq |\Omega(W)|$.\footnote{Compared to the terminology of \cref{sec:apex-minor-free}, here we prefer to use the term {\em{society}} only for the set $\Omega(H)$, and not for the whole subgraph $(H,\Omega(H))$. Also, note that in \cref{sec:apex-minor-free} we used notation $\partial F$ for the boundary of a flap, while here we use $\Omega(F)$ instead. This is in order to have a consistent notation for both vortices and flaps.}
\item For each $H\in \Ww\cup \Ff$ there is a closed disk $\Delta_H\subseteq \Sigma$ so that $\gamma$ embeds the society $\Omega(H)$ to points on the boundary of $\Delta_H$, and otherwise $\Delta_H$ does not intersect the image of $\gamma$. Moreover, the disks $\{\Delta_H\colon H\in \Ww\cup \Ff\}$ have pairwise disjoint interiors.
\item For each flap $F\in \Ff$, we have $|\Omega(F)|\leq 3$.
\item For each vortex $W\in \Ww$, the ordering $(u^W_1,u^W_2,\ldots,u^W_{n_W})$ coincides with the order in which the elements of $\Omega(W)$ are embedded along the boundary of $\Delta_W$. Moreover, there is a path decomposition $(\bag^W_1,\ldots,\bag^W_{p_W})$ of $W$ and an increasing function $j\colon [n_W]\to [p_W]$ such that $u^W_i\in \bag^W_{j(i)}$, for all $i\in [n_W]$.
\end{itemize}
When speaking about a near-embedding, we will assume that with every vortex $W$ we are given the society $\Omega(W)$ together with the ordering $(u^W_1,\ldots,u^W_{n_W})$, the {\em{decomposition}} $(\bag^W_1,\ldots,\bag^W_{p_W})$, and the increasing {\em{index function}} $j\colon [n_W]\to [p_W]$, as described above. The {\em{adhesion}} of a vortex $W$ is the maximum size of the intersection of two consecutive bags, $\max_{1\leq i<p_W} |\bag^W_i\cap \bag^W_{i+1}|$. The {\em{interior}} of a bag $\bag^W_i$ is the graph $\intr(\bag^W_i)\coloneqq G[\bag^W_i]-\left(\Omega(W)\cup \bag_{i-1}^W\cup \bag_{i+1}^W\right)$, where we write $\bag_0^W=\bag_{p_W+1}^W=\emptyset$ by convention. Similarly, the {\em{interior}} of a flap $F\in \Ff$ is the graph $\intr(F)\coloneqq F-\Omega(F)$.

The {\em{order}} of a near-embedding $(Z,G_0,\gamma,\Ww,\Ff)$ is the least integer $q$ such that $|Z|\leq q$, $|\Ww|\leq q$, and every vortex of $\Ww$ has adhesion at most $q$. The {\em{breadth}} of $(Z,G_0,\gamma,\Ww,\Ff)$ is the least integer $d$ such that for every flap $F\in \Ff$ and for every bag $\bag_i^W$ of a vortex $W\in \Ww$, every connected component of $\intr(F)$ and of $\intr(\bag_i^W)$ has as most $d$ vertices.
\end{definition}

Note that if $(Z,G_0,\gamma,\Ww,\Ff)$ is a near-embedding of a graph $G$ of order $q$ in a surface $\Sigma$, then we have the following:
\begin{itemize}[nosep]
 \item For every flap $F\in \Ff$, we have $N(V(\intr(F)))\subseteq Z\cup \Omega(F)$, and hence $|N(V(\intr(F)))|\leq q+3$.
 \item For every bag $\bag_i^W$ of a vortex $W\in \Ww$, we have $N(V(\intr(\bag_i^W)))\subseteq Z\cup (\bag_{i-1}^W\cap \bag_i^W)\cup (\bag_i^W\cap \bag_{i+1}^W)\cup \{u^W_{j^{-1}(i)}\}$ where the last term is omitted if $i$ is not in the image of $j$, and hence $|N(V(\intr(\bag_i^W)))|\leq 3q+1$.
\end{itemize}
We will use these observations implicitly in the sequel.

We remark that for a decomposition of a vortex, Diestel et al.~\cite{DiestelKMW12} simply assume that the length of the path decomposition is the same as the size of the society, and the $i$th vertex of the society belongs to the $i$th bag of the path decomposition. For us it will be convenient to allow the existence of intermediate bags that are not required to contain society vertices, as we will perform some minor surgery on vortices that may give rise to such bags.
Also, we use the term {\em{breadth}} to distinguish it from the {\em{depth}} of a near-embedding, which is defined somewhat differently in~\cite{DiestelKMW12} and also in other works.

Next, we need to recall the standard definition of a tangle.

\begin{definition}
 Let $G$ be a graph and $\theta\in \N$. A {\em{tangle}} of {\em{order}} $\theta$ in $G$ is a family $\Tt$ consisting of oriented separations in $G$, all of order smaller than $\theta$, satisfying the following properties:
 \begin{itemize}[nosep]
  \item For every separation $(A,B)$ of $G$ of order smaller than $\theta$, exactly one of the separations $(A,B)$ and $(B,A)$ belongs to $\Tt$.
  \item For all $(A_1,B_1),(A_2,B_2),(A_3,B_3)\in \Tt$, we have $A_1\cup A_2\cup A_3\subsetneq V(G)$.
 \end{itemize}
 If $(A,B)\in \Tt$, then we say that $A$ is the {\em{small side}} and $B$ is the {\em{large side}} of $(A,B)$ (with respect to $\Tt$).
\end{definition}

If $M$ is a subset of vertices of a graph $G$ and $\Tt$ is a tangle of $G$ of order $\theta>|M|$, then we can define $\Tt-M$ to be the family of all separations $(A,B)$ of $G-M$ for which $(A\cup M,B\cup M)\in \Tt$. It is well-known and straightforward to verify that then $\Tt-M$ is a tangle of order $\theta-|M|$ in $G-M$.

Next, large well-linked sets naturally give rise to tangles of larger order. Suppose $G$ is a graph, $\theta\in \N$, and $S$ is a well-linked set in $G$ with $|S|\geq 3\theta-2$. Define $\Tt_S$ to be the family of all separation $(A,B)$ of $G$ of order less than $\theta$ such that $|A\cap S|\leq |S|/2$. It is then easy to verify that $\Tt_S$ is a tangle of order $\theta$ in $G$; we call it the tangle {\em{induced}} by $S$.

We say that a near-embedding $(Z,G_0,\gamma,\Ww,\Ff)$ of a graph $G$ {\em{captures}} a tangle $\Tt$ if for every separation $(A,B)\in \Tt-Z$ of $G-Z$, there is no $H\in \Ww\cup \Ff$ such that $B\subseteq V(H)$. In other words, the large side of a separation cannot be entirely hidden in any vortex or flap.

With these definitions in place, we can state the Structure Theorem of Robertson and Seymour. As reported by Diestel et al.~\cite[Theorem~1]{DiestelKMW12}, the statement of~\cite[(3.1)]{RobertsonS03a} takes the following form in this terminology. (We drop here the assertion that $R$ cannot be drawn in $\Sigma$, which is present in~\cite{DiestelKMW12,RobertsonS03a} but irrelevant for us.)

\begin{theorem}[{\cite[(3.1)]{RobertsonS03a}}]\label{thm:RS-original}
 Let $R$ be a graph. Then there exist integers $\theta,q\in \N$ and a surface $\Sigma$ such that for every $R$-minor-free graph $G$ and tangle $\Tt$ of order at least $\theta$ in $G$, there exists a near-embedding of $G$ in $\Sigma$ of order at most $q$ that captures the tangle~$\Tt$.
\end{theorem}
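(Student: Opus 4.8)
The plan is not to prove this from scratch --- it is the Graph Minors Structure Theorem of Robertson and Seymour, arguably the deepest single result in the entire Graph Minors series, and we will use it as a black box. Concretely, I would invoke \cite[(3.1)]{RobertsonS03a}, whose conclusion is a near-embedding of $G$ into a bounded-genus surface after deleting a bounded apex set $Z$, the remainder being decomposed into a bounded number of bounded-adhesion vortices together with ``small'' pieces attached along societies of size at most three, and which respects any prescribed tangle of sufficiently large order. The only genuine work here is a translation of formulations: the statement in \cite{RobertsonS03a} is phrased in the language of nearly-embedded graphs and is notationally rather different from the near-embedding terminology we have set up above.

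For the translation I would follow Diestel, Kawarabayashi, M\"uller, and Wollan~\cite{DiestelKMW12}, who carried out exactly this reformulation and recorded it as \cite[Theorem~1]{DiestelKMW12}. I would then verify, definition by definition, that our notion of near-embedding coincides with theirs, with the single deliberate relaxation already flagged: we allow the path decomposition of a vortex $W$ to contain intermediate bags not assigned to any society vertex, whereas \cite{DiestelKMW12} requires the $i$-th bag to contain $u_i^W$. Since this relaxation only enlarges the class of admissible near-embeddings, any near-embedding produced by \cite[Theorem~1]{DiestelKMW12} is \emph{a fortiori} a near-embedding in our sense, so the implication goes through in the direction we need. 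The notion of a near-embedding \emph{capturing} a tangle is likewise the standard one: no large side of a tangle separation of $G-Z$ may be entirely contained in a single vortex or flap, which corresponds directly to the tangle-respecting clause of \cite[(3.1)]{RobertsonS03a}.

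The main obstacle is therefore purely bookkeeping: reconciling two equivalent but differently packaged statements of an extremely intricate theorem, and checking that no auxiliary parameter is accidentally weakened in transit. I would be particularly careful that the \emph{order} in our sense --- bounding simultaneously $|Z|$, $|\Ww|$, and the adhesion of every vortex in $\Ww$ --- is indeed bounded by a single constant depending only on $R$; each of these three quantities is individually bounded in \cite{RobertsonS03a,DiestelKMW12}, so taking their maximum suffices, and the tangle order $\theta$ and surface $\Sigma$ likewise depend only on $R$. No new combinatorial idea is required for this statement, and no explicit bound on $\theta$, $q$, or the genus of $\Sigma$ is needed here (computable bounds, where required elsewhere, can be extracted from \cite{KawarabayashiTW20} as discussed in the introduction).
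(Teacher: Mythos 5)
Your proposal is correct and matches the paper's treatment exactly: the paper also invokes \cite[(3.1)]{RobertsonS03a} as a black box in the form reported by \cite[Theorem~1]{DiestelKMW12}, and it flags the same deliberate relaxation of the vortex path-decomposition (allowing intermediate bags without assigned society vertices) in the remark following the definition of near-embedding. The only cosmetic difference is that the paper also explicitly notes that it drops the clause asserting $R$ is not drawable in $\Sigma$, which, as you implicitly recognize, only weakens the conclusion and so needs no justification.
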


A technical caveat with \cref{thm:RS-original} is that the original proof uses some non-constructive arguments of topological nature (see Graph Minors VII~\cite{RobertsonS88}), which makes it difficult to reason that $\theta,q,\Sigma$ are computable from $R$; and we need this assertion to claim computability of the bounds in \cref{lem:few-candidates}. Fortunately, recently Kawarabayashi, Thomas, and Wollan~\cite{KawarabayashiTW20} presented new, cleaned proofs of \cref{thm:RS-original} and of multiple related statements, which yield explicit and computable bounds. In particular, from their work it follows that in \cref{thm:RS-original}, $\theta,q,\Sigma$ can be assumed to be computable from $R$.

\begin{theorem}[follows from {\cite[Theorem~16.2]{KawarabayashiTW20}}]\label{thm:RS}
 Let $R$ be a graph. Then there exist integers $\theta,q\in \N$ and a surface $\Sigma$, all computable from $R$, such that for every $R$-minor-free graph $G$ and tangle $\Tt$ of order at least $\theta$ in $G$, there exists a near-embedding of $G$ in $\Sigma$ of order at most $q$ that captures the tangle~$\Tt$.
\end{theorem}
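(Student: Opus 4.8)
The plan is to observe that \cref{thm:RS} differs from \cref{thm:RS-original} only in the additional requirement that $\theta$, $q$, and $\Sigma$ be computable from $R$, and to pin down exactly where non-computability enters the classical derivation, so as to replace that single ingredient by an effective one.

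First, I would recall how \cref{thm:RS-original} is derived by Diestel, Kawarabayashi, M\"uller, and Wollan~\cite{DiestelKMW12} from the tree-decomposition form of the Structure Theorem~\cite[(3.1)]{RobertsonS03a}: given a tangle $\Tt$ of sufficiently large order, one passes to the (unique) part of the decomposition whose torso absorbs $\Tt$, and assembles a near-embedding of $G$ capturing $\Tt$ out of the near-embedding of that part, folding the vertices on the adhesion sets separating it from the rest of the decomposition into the apex set. Every step here is a finite, explicit combinatorial manipulation, and the parameters $\theta$, $q$, $\Sigma$ of the output near-embedding are obtained from the parameters of the input structure theorem (the tangle-order threshold, the adhesion bound, the number of parts, vortices, and apices, and the surface) by computable functions. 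Thus the reduction itself contributes no non-constructive content.

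Next, I would invoke~\cite[Theorem~16.2]{KawarabayashiTW20}, which reproves the tree-decomposition form of the Structure Theorem with all of these parameters given by explicit, computable functions of $R$ --- in contrast to~\cite[(3.1)]{RobertsonS03a}, whose proof in the Graph Minors series routes through the non-constructive topological argument of Graph Minors~VII~\cite{RobertsonS88}. Substituting~\cite[Theorem~16.2]{KawarabayashiTW20} for~\cite[(3.1)]{RobertsonS03a} in the derivation of~\cite{DiestelKMW12}, and keeping every other step unchanged, yields \cref{thm:RS} with $\theta$, $q$, $\Sigma$ computable from $R$.

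The step I expect to be the main obstacle is matching the precise bookkeeping of~\cite[Theorem~16.2]{KawarabayashiTW20} against the near-embedding notion used in this paper: one must verify that the adhesion bound it provides translates into a bound on the order of a near-embedding in the sense defined above, that the breadth plays no role in the tangle-capturing conclusion, and that passing from the tree-decomposition form to the single-part form preserves computability. Since this is precisely the translation already carried out in~\cite{DiestelKMW12} for the non-effective version, with each constituent operation manifestly effective, this amounts to a routine (if somewhat tedious) verification rather than a new argument.
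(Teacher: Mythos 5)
Your proposal is essentially the same argument the paper uses: observe that the only non-constructive ingredient in the classical derivation of \cref{thm:RS-original} is the topological content traced back to Graph Minors~VII~\cite{RobertsonS88}, and then substitute the effective reproof of Kawarabayashi, Thomas, and Wollan~\cite[Theorem~16.2]{KawarabayashiTW20}, whose bounds on the tangle-order threshold, the near-embedding order, and the surface are explicit and computable from $R$. The paper itself presents this as a one-line ``follows from'' together with a surrounding paragraph making exactly your point; your extra detail about passing from a tree-decomposition formulation to a single-part near-embedding is a slight overstatement of what \cite{DiestelKMW12} does here (their Theorem~1 is essentially a restatement of \cite[(3.1)]{RobertsonS03a} in the near-embedding vocabulary, not a derivation from the tree-decomposition version), but this does not affect the correctness of the argument.
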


We remark that the work of Kawarabayashi et al.~\cite{KawarabayashiTW20} is so far available only as a preprint, and to the best of our knowledge it has not yet been peer-reviewed or published. Therefore, our computability claims are contingent on the correctness of the results reported in \cite{KawarabayashiTW20}. We remark that recently, some minor issues of technical nature in~\cite{KawarabayashiTW20} were reported by Arnon~\cite{Arnon23}, but these should only affect the numerical values contained in the obtained bounds, and not influence their computability.

We now prove that under the assumption that the graph is compact, the near-embedding provided by \cref{thm:RS} can be adjusted so that it also has bounded breadth.

\begin{lemma}\label{lem:shallow-embedding}
 For every $h\in \N$ there exist integers $k,q\in \N$ and a surface $\Sigma$, all computable from $h$, such that the following holds.
 Suppose $G$ is a $K_h$-minor-free graph that is $(X,k,\alpha)$-compact for some $\alpha\in \N$, where $X$ is a well-linked set in $G$ with $|X|\geq k$. Then $G$ admits a near-embedding in $\Sigma$ of order at most $q+|X|$ and breadth smaller than~$\alpha$.
\end{lemma}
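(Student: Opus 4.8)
\textbf{Proof plan for \Cref{lem:shallow-embedding}.}
The plan is to start from the near-embedding supplied by \Cref{thm:RS} and then to repeatedly clean it up using the compactness hypothesis, arguing that no ``deep'' pieces can survive. First I would fix the constant $\theta$, the bound $q_0$ on the order, and the surface $\Sigma$ that \Cref{thm:RS} provides for the excluded minor $R=K_h$. Since $X$ is well-linked with $|X|\ge k$, and we will take $k\ge 3\theta-2$, the well-linked set induces a tangle $\Tt_X$ of order $\theta$ in $G$; feeding $G$ and $\Tt_X$ into \Cref{thm:RS} yields a near-embedding $(Z,G_0,\gamma,\Ww,\Ff)$ of order at most $q_0$ that captures $\Tt_X$. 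This is our starting object; the task is to argue that its breadth can be reduced below $\alpha$ without increasing the order by more than $|X|$.

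The key observation is that a connected component of the interior of a flap $F\in\Ff$, or of the interior $\intr(\bag_i^W)$ of a vortex bag, has small neighborhood: as noted right after the definition of near-embedding, $|N(V(\intr(F)))|\le q_0+3$ and $|N(V(\intr(\bag_i^W)))|\le 3q_0+1$, and all these neighbors lie in $Z\cup(\text{a few }G_0\text{-vertices})$. Hence, after moving the bounded set $Z$ into the role of extra apices and setting $k$ larger than $3q_0+1+|Z|$ (which is bounded by a function of $h$ via $q_0$), any such component of size at least $\alpha$ would be an $(X,k,\alpha)$-chip: it is connected, disjoint from $X$ (since $X$ is captured on the ``large side'' of the tangle and cannot be buried in a vortex or flap — this is exactly where tangle-capture is used), and has fewer than $k$ neighbors. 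Compactness forbids this, so every component of every flap interior and every vortex-bag interior has fewer than $\alpha$ vertices. I would carry this out by: (1) moving $Z$ into $X$, i.e. working with $X'\coloneqq X\cup Z$, which raises the effective root set size to at most $|X|+q_0$ and makes the neighborhoods of interior-components genuine separators of size $<k$; (2) for each flap and each vortex bag, checking that a large interior component would form a chip; and (3) concluding that all interior components are small. The apices $Z$ stay as the apices of the final near-embedding (so the order becomes at most $q_0+|Z|\le q$ for a suitable $q$ computable from $h$, rather than $q+|X|$ — but restating $Z$ as the genuine apex set keeps the order bounded purely in $h$; the $+|X|$ slack in the statement is only needed because we do not want to absorb $X$ into the apices, so I would instead leave $Z$ as apices and note $|Z|\le q_0$, and the $+|X|$ is harmless headroom).

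One subtlety I should handle carefully is the vortex bags: a vortex bag $\bag_i^W$ has bounded-size intersections with its neighbours (adhesion $\le q_0$) plus possibly one society vertex, so its interior is cut off from the rest of $G$ by a set of size $\le 3q_0+1+|Z|$. But after trimming, a vortex may acquire bags whose interiors are empty or tiny; this is fine, which is precisely why the definition of near-embedding in this paper allows intermediate bags not containing society vertices. I would not need to re-triangulate or re-route — the same path decomposition of $W$ works, we merely observe its interior components are now small. A second point: I must make sure the trimming does not destroy tangle-capture. Since we only argue about interiors and do not remove any vertex, the near-embedding is literally unchanged; we are only \emph{asserting} a breadth bound that holds because of compactness, so capture is automatically preserved.

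\textbf{Main obstacle.} The delicate part is pinning down that a large interior component is genuinely an $(X,k,\alpha)$-chip, which requires two things simultaneously: that it is disjoint from $X$, and that its full neighborhood in $G$ (not just within $G_0$ or within the vortex) is small. Disjointness from $X$ is where the tangle-capture clause is essential — without it, part of $X$ could hide inside a vortex and a component there would touch $X$. The neighborhood bound requires carefully collecting \emph{all} edges leaving the interior component: edges to $Z$, edges to the society $\Omega(F)$ (or to adhesion sets and the one society vertex of a vortex bag), and crucially the absence of any other edges — which is guaranteed by the edge-partition property of near-embeddings, since every edge of $G-Z$ lies in exactly one of $G_0$, a vortex, or a flap. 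Getting the constant $k$ right so that it dominates $3q_0+1+|Z|$ while being computable from $h$, and verifying there are no stray edges, is the technical heart; everything else is bookkeeping.
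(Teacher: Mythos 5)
Your high-level plan --- invoke \Cref{thm:RS} with the tangle $\Tt_X$ induced by the well-linked set $X$, then use compactness to force the breadth to be small --- is indeed the same skeleton as the paper's proof. However, there is a genuine gap in the disjointness step, and it stems from a misunderstanding of what tangle-capture gives you.

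You claim that an interior component of a flap or vortex bag is automatically disjoint from $X$ ``since $X$ is captured on the large side of the tangle and cannot be buried in a vortex or flap.'' This is not what capture says. Capture says that for a separation $(A,B)\in\Tt-Z$, the \emph{large side $B$ as a whole} is not contained in any vortex or flap; it does not prevent individual vertices of $X$ from lying inside $\intr(F)$ or $\intr(\bag_i^W)$. In fact, for a flap $F$, the separation $(V(F),\,V(G-Z)\setminus V(\intr(F)))$ has order at most $3$, and capture plus well-linkedness only gives $|V(F)\cap X|\le q+3$ --- a bound, not zero. So a large interior component $C$ of $\intr(F)$ can intersect $X$, in which case it is not an $(X,k,\alpha)$-chip and compactness gives you nothing. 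Consequently, the near-embedding you propose to output (keeping apex set $Z$ and ``not absorbing $X$'') may genuinely have unbounded breadth; the $+|X|$ in the statement is not headroom but essential: one must set $Z'=X\cup Z$ and delete $X$ from $G_0$, all vortices, and all flaps so that the resulting interior components are, by construction, disjoint from $X$.

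Once you do remove $X$, a second point you did not anticipate appears: a component $C$ of $\intr(F')=\intr(F-X)$ now has $N(C)\subseteq Z\cup\Omega(F)\cup(X\cap V(F))$, and that last term is not controlled by the order of the original near-embedding alone. This is where well-linkedness is used a \emph{second} time, in concert with tangle-capture: since $(V(F)\cup Z, (V(G-Z)\setminus V(\intr F))\cup Z)\in\Tt_X$, the small side contains at most $|X|/2$ vertices of $X$; combined with well-linkedness of $X$ and the fact that the separator has size at most $|Z|+3\le q+3$, one gets $|X\cap V(F)|\le q+3$, so $|N(C)|\le 2q+6<k$. (The analogous computation for vortex bags gives $|N(C)|\le 5q+2<k$.) Your proposal invokes well-linkedness only to manufacture the tangle, and never for this crucial neighborhood bound, which is needed to make $C$ a legitimate $(X,k,\alpha)$-chip. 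With those two corrections --- absorb $X$ into the apices, and use well-linkedness again to bound $|X\cap V(F)|$ and $|X\cap\bag^W_i|$ --- your outline coincides with the paper's argument.
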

\begin{proof}
 Let $\theta,q,\Sigma$ be the constants and the surface provided by \cref{thm:RS} for $R$, and let
 $$k'\coloneqq \theta+3q+3\qquad\textrm{and}\qquad k\coloneqq 3k'-2=3\theta+9q+7.$$

 Suppose now that $G$ is $R$-minor-free and $(X,k,\alpha)$-compact for a well-linked set $X$ of size at least~$k$; thus, $|X|\geq 3k'-2$. Let $\Tt\coloneqq \Tt_X$ be the tangle of order $k'$ induced by $X$. \cref{thm:RS} now gives us a near-embedding $\Ee=(Z,G_0,\gamma,\Ww,\Ff)$ of $G$ in $\Sigma$ of order at most $q$ that captures the tangle $\Tt$. Let $\Ee'=(Z',G'_0,\gamma',\Ww',\Ff')$ be the near-embedding obtained from $\Ee$ by setting $Z'\coloneqq X\cup Z$ and removing the vertices of $X$ from $G_0$ and all the flaps and all the vortices (and all the bags in the respective decompositions of vortices). Thus, $\Ee'$ is a near-embedding of order at most $q+|X|$. It remains to show that $\Ee'$ has breadth smaller than $\alpha$.

 First, consider any flap $F'\in \Ff'$ and any connected component $C\in \cc(\intr(F'))$. Note that there is a unique flap $F\in \Ff$ such that $F'=F-X$. We have $N(V(F)-\Omega(F))\subseteq Z\cup \Omega(F)$, hence
 $$N(C)\subseteq Z\cup \Omega(F)\cup (X\cap V(F)).$$
 Note that $(A,B)\coloneqq (V(F),V(G-Z)\setminus V(\intr(F)))$ is a separation of $G-Z$ of order at most $3$. As $(Z,G_0,\gamma,\Ww,\Ff)$ captures $\Tt$, we have $(A,B)\in \Tt-Z$, so $(A\cup Z,B\cup Z)\in \Tt$. In particular, $|(A\cup Z)\cap X|\leq |X|/2$. As $|(A\cup Z)\cap (B\cup Z)|\leq |Z|+3\leq q+3$ and $X$ is well-linked, we in fact~have
 $$|(A\cup Z)\cap X|\leq q+3.$$
 Therefore, we conclude that
 $$|N(C)|\leq |Z|+|\Omega(F)|+|X\cap V(F)|\leq q+3+|(A\cup Z)\cap X|\leq 2q+6<k.$$
 Moreover, as $C$ is disjoint with $Z'$ and $X\subseteq Z'$, we have $C\cap X=\emptyset$. It now follows that $C$ would be an $(X,k,\alpha)$-chip, unless $|C|<\alpha$. As $G$ is $(X,k,\alpha)$-compact, there are no such chips, so indeed $|C|<\alpha$.

 Second, consider any bag $\bag^{W'}_i$ of a vortex $W'\in \Ww'$ and any connected component $C\in \cc(\intr(\bag^{W'}_i))$. Again, $W'=W-X$ for some vortex $W\in \Ww$, and we have $\Omega(W')=\Omega(W)-X$ and $\bag^{W'}_j=\bag^W_{j}-X$ for $j\in \{i-1,i,i+1\}$. For convenience denote
 $$N\coloneqq \bag^{W}_i\cap (\Omega(W)\cup \bag^{W}_{i-1}\cup \bag^{W}_{i+1})\qquad\textrm{and}\qquad M\coloneqq \bag^{W}_i\setminus N=V(\intr(\bag^W_i)).$$ Thus $C\subseteq M$ and $N(C)\subseteq N\cup (X\cap M)$.

 Observe that since the index function associated with vortex $W$ is increasing, $\Omega(W)$ contains at most one vertex that belongs to $\bag^W_i$ but does not belong to $\bag^W_{i-1}\cup \bag^W_{i+1}$. Hence, as the adhesion of $W$ is at most $q$, we have
 $$|N|\leq 1+|\bag^W_{i-1}\cap \bag^W_i|+|\bag^W_i\cap \bag^W_{i+1}|\leq 2q+1.$$

 Next, consider the separation $(A,B)\coloneqq (\bag^W_i,V(G)\setminus M)$ of $G-Z$. We have $A\cap B=N$, hence the order of $(A,B)$ is at most $2q+1<k'-|Z|$. As $A\subseteq V(W)$ and $\Ee$ captures the tangle $\Tt$, we conclude that $(A,B)\in \Tt-Z$, implying that $(A\cup Z,B\cup Z)\in \Tt$. Consequently, we have $|(A\cup Z)\cap X|\leq |X|/2$, which together with well-linkedness of $X$ implies that
 $$|(A\cup Z)\cap X|\leq |(A\cup Z)\cap (B\cup Z)|\leq 2q+1+|Z|\leq 3q+1.$$
 We conclude that
 $$|N(C)|\leq |N|+|X\cap M|\leq (2q+1)+|(A\cup Z)\cap X|\leq 5q+2<k.$$
 Again, as $X\subseteq Z'$ and $C$ is disjoint with $X$, we infer that $C$ would be an $(X,k,\alpha)$-chip, unless $|C|<\alpha$. As there are no $(X,k,\alpha)$-chips in $G$, we conclude that indeed $|C|<\alpha$.
\end{proof}

Motivated by \cref{lem:shallow-embedding}, we study graphs that admit near-embeddings of bounded breadth. First, we show that in the absence of apices, they actually exclude some apex graph as a minor. Recall here that an {\em{apex graph}} is a graph that can be made planar by removing one vertex. The proof is an easy lift of a result of Cohen-Addad et al., who essentially proved the same statement under a stronger definition of the breadth; see~\cite[Lemmas B.1 and 6.2]{Cohen-AddadLPP23}. Precisely, in the setting considered in~\cite{Cohen-AddadLPP23}, every flap and every bag of a vortex has bounded size, while in our setting we only have an upper bound on the sizes of the connected components of the interiors of flaps and of bags of vortices.

\begin{lemma}\label{lem:shallow-apex-minor-free}
 For all $q,\alpha\in \N$ and surface $\Sigma$, there exists an integer $d$, computable from $q,\alpha,\Sigma$, such that the following holds. Suppose a graph $G$ admits a near-embedding in $\Sigma$ of order at most $q$, breadth less than~$\alpha$, and with an empty apex set. Then $G$ excludes the apex-grid of order $d$ as a minor.
\end{lemma}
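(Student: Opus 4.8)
The statement asserts that a graph $G$ with a near-embedding of bounded order and bounded breadth in a fixed surface $\Sigma$, and with \emph{empty apex set}, excludes some apex-grid of bounded order as a minor. I would prove this by contraposition: assume $G$ contains an apex-grid of large order $d$ as a minor; I want to derive a contradiction with the bounded breadth/order of the near-embedding once $d$ is large enough. The key intuition is that an apex-grid minor forces \emph{locally unbounded treewidth}: the neighborhood of the apex vertex $a$ of the apex-grid, restricted to a bounded-radius ball, must contain a large grid minor. But a graph near-embedded in a fixed surface with bounded order and bounded breadth has \emph{locally bounded treewidth} --- this is precisely the connection between excluding apex graphs and bounded local treewidth studied by Eppstein~\cite{Eppstein00}, and lifted to near-embeddings by Cohen-Addad, Le, Pilipczuk, and Pilipczuk~\cite{Cohen-AddadLPP23}.

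\textbf{Key steps.} First, I would invoke the technical lemmas of~\cite[Lemmas B.1 and 6.2]{Cohen-AddadLPP23}, which establish exactly this implication in the setting where flaps and bags of vortices have bounded \emph{size}. The only gap is that here we merely bound the sizes of connected components of the \emph{interiors} $\intr(F)$ and $\intr(\bag_i^W)$, not the bags themselves. So the second step is a reduction: given a near-embedding of $G$ in $\Sigma$ of order $\le q$ and breadth $< \alpha$ with empty apex set, I would modify it into a near-embedding (possibly of a \emph{minor} of $G$, or of $G$ itself with a slightly larger surface/order) in which every flap and every vortex-bag has size bounded by a function of $q$, $\alpha$, and $\Sigma$. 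The idea is that each connected component $C$ of $\intr(F)$ (resp.\ of $\intr(\bag_i^W)$) attaches to the rest of the graph only through a bounded-size set $N(C)$ (of size $\le q+3$, resp.\ $\le 3q+1$, as observed right after the definition of near-embedding), and $|C| < \alpha$; hence $F$ (resp.\ the interior of the bag) decomposes into a bounded-size ``frame'' ($\Omega(F)$, or the two adhesion sets plus the at-most-one society vertex) plus a collection of small pieces each hanging off a bounded-size portion of that frame. One can therefore replace each flap by several flaps, each consisting of the frame together with one small component $C$ and the edges from $C$ to the frame --- but the frame may have size up to $3$, so each new flap is fine for the flap constraint $|\Omega(F)| \le 3$. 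For vortices, splitting the interior of a bag into its components and distributing them across the bag (refining the path decomposition, which is permitted since we allow intermediate bags not containing society vertices) yields a new path decomposition whose bags have size bounded by $O(q) + \alpha$, at the cost of increasing the adhesion by at most $\alpha$ (so the order grows to a function of $q$ and $\alpha$). After this surgery, the near-embedding has bounded order, empty apex set, and every flap/bag of bounded size; then~\cite{Cohen-AddadLPP23} directly yields that the graph excludes an apex-grid of bounded order as a minor. Since taking a minor only shrinks the set of apex-grid minors, and the surgery above preserved $G$ up to (at worst) passing to a supergraph on which we then pass to a minor equal to $G$ --- actually cleanest is to keep $G$ fixed and only rearrange the combinatorial data of the near-embedding, which is possible since the surgery above does not delete or add vertices or edges of $G$ --- the bound on $d$ follows, computable from $q$, $\alpha$, and $\Sigma$.

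\textbf{Main obstacle.} The hard part is carrying out the vortex surgery carefully: one must verify that after splitting each bag's interior into its small components and inserting them as their own bags, the result is still a valid path decomposition of the vortex $W$ (the connectivity/interval condition for each vertex), that the society vertices still lie in the required bags in the correct order, and that the adhesion increase is genuinely bounded by $\alpha$ rather than by something depending on the original bag sizes. One also has to confirm that the disk $\Delta_W$ and the embedding $\gamma$ of $\Omega(W)$ along its boundary are unaffected (they are, since we touch only the interior of $W$). The flap surgery is comparatively routine since flaps carry no decomposition structure. Once the modified near-embedding has all bags and flaps of size bounded by some $f(q,\alpha,\Sigma)$ and order bounded by some $g(q,\alpha)$, the conclusion is immediate from~\cite[Lemmas B.1 and 6.2]{Cohen-AddadLPP23} with $d$ computed accordingly.
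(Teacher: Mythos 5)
Your handling of the vortices matches the paper's: split each bag $\bag_i^W$ into a run of bags, each consisting of the bounded-size ``frame'' $\bag_i^W - V(\intr(\bag_i^W))$ plus one connected component of the interior. That part is fine, and you are right that only the width of the vortex decompositions enters Lemma~B.1 of \cite{Cohen-AddadLPP23}, so the adhesion bookkeeping is not an issue.

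The gap is in the flap surgery, which you dismiss as ``comparatively routine.'' You propose to replace a flap $F$ with many new flaps $F_1,\ldots,F_m$, one per connected component of $\intr(F)$, all with the same society $\Omega(F)$. But the definition of a near-embedding requires, for each $H\in\Ww\cup\Ff$, a closed disk $\Delta_H\subseteq\Sigma$ whose boundary carries $\gamma(\Omega(H))$, and these disks must have pairwise disjoint interiors. When $|\Omega(F)|=3$, you cannot cut the single disk $\Delta_F$ into two or more sub-disks each having all three society points on its boundary --- any cut along an arc splits off a sub-disk with at most two of the three points on its boundary. So for $m\ge 2$ and $|\Omega(F)|=3$ the modified object is not a near-embedding, and you have not ``only rearranged the combinatorial data.'' There is a second, related problem: Lemma~B.1 of \cite{Cohen-AddadLPP23}, as used here, applies only to near-embeddings with \emph{no flaps at all}, so making flaps bounded-size is not enough to invoke it directly; a bounded-size flap is still a flap, and its internal graph need not embed in $\Sigma$.

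The paper's argument avoids both issues by changing the graph rather than the flap structure: first delete the interior of every flap and turn each $\Omega(F)$ into a clique (drawable inside $\Delta_F$ since $|\Omega(F)|\le 3$), obtaining a graph $G'$ that has a near-embedding with no flaps and bounded-width vortices, to which Lemma~B.1 applies. Then, to get back to $G$, one uses the clique-appendix argument (Claim~\ref{lem:62+}, essentially \cite[Lemma~6.2]{Cohen-AddadLPP23}): iteratively add vertices whose neighborhood is a clique; since each component of $\intr(F)$ has fewer than $\alpha$ vertices and $\Omega(F)$ is already a clique in $G'$, applying this operation $\alpha$ times produces a supergraph of $G$ that still excludes an apex-grid of bounded order, and minor-monotonicity transfers the exclusion to $G$. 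You cited \cite[Lemma~6.2]{Cohen-AddadLPP23} as one of your tools, so you have the right ingredient; the missing insight is to use it \emph{instead of} flap-splitting, by deleting and then re-growing the flap interiors, rather than trying to repackage them inside the near-embedding.
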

\begin{proof}
 Let $\Ee=(\emptyset,G_0,\gamma,\Ww,\Ff)$ be the considered near-embedding in $\Sigma$. Let $G'$ be the graph obtained from $G$ by removing the interior of every flap $F\in \Ff$ and turning $\Omega(F)$ into a clique. Further, let $G'_0$ be obtained from $G_0$ by turning $\Omega(F)$ into a clique for every flap $F\in \Ff$, and let $\gamma'$ be the embedding of $G'_0$ in $\Sigma$ obtained by drawing all edges of $G'_0[\Omega(F)]$ inside the disk $\Delta_F$. (Note that this is always possible, as $|\Omega(F)|\leq 3$.) Thus, $\Ee'\coloneqq (\emptyset,G_0',\gamma',\Ww,\emptyset)$ is a near-embedding of $G'$ in $\Sigma$, with no apices and no flaps.

 We now make use of the following statement of Cohen-Addad et al.

 \begin{claim}[{\cite[Lemma B.1]{Cohen-AddadLPP23}}, with adjusted notation]\label{lem:B1}
  For every surface $\Sigma$ and integers $a,b\in \N$ there exists an apex graph $R$, computable from $a,b,\Sigma$, such that every graph nearly-embeddable on $\Sigma$ with at most $a$ vortices of width (i.e., maximum bag size) at most $b$, and no apices and no flaps, does not contain $R$ as a minor.
 \end{claim}

 We remark that the computability claim is not asserted explicitly in \cite{Cohen-AddadLPP23}, but the proof of \cite[Lemma~B.1]{Cohen-AddadLPP23} invokes the Grid Minor Theorem (\cref{thm:grid-minor}) and \cite[Lemma~22]{DiestelKMW12}, which in turn invokes \cite[Lemma~1]{DemaineH08}. All the bounds in those results are governed by computable functions, yielding also computability of the bounds provided by \cref{lem:B1}.


 It is straightforward to modify the path decomposition of every vortex $W\in \Ww$ to a decomposition where every bag has size at most $b\coloneqq \alpha+2q$: for every bag $\bag_i^W$, replace it with a sequence of bags, each containing the set $\bag_i^W-V(\intr(\bag_i^W))$ (which is contained in $\bag_i^W\cap (\Omega(W)\cup \bag_{i-1}^W\cup \bag_{i+1}^W)$, and thus has size at most $2q+1$) together with one connected component of $\intr(\bag_i^W)$ (of size smaller than~$\alpha$). As $|\Ww|\leq q\eqqcolon a$, from \cref{lem:B1} we conclude\footnote{There is a slight technical caveat that Cohen-Addad et al. assume, after Diestel et al.~\cite{DiestelKMW12}, that the length of the decomposition of a vortex equals the size of its society, and the $i$th bag contains the $i$th society vertex, while we relax this requirement by allowing intermediate bags that do not need to contain society vertices. This detail plays no role in their proof and can be easily worked around by adding a new artificial society vertex to every bag and embedding it on the boundary of the disk of the vortex.} that there exists an apex graph $R$, depending only on $q,\alpha,\Sigma$ in a computable way, such that $G'$ excludes $R$ as a minor.

 It is well-known that for every planar graph $P$ there exists a grid, of order computable from $P$, that contains $P$ as a minor. Applying this statement to $P$ equal to $R$ with its apex removed, we infer that there exists a constant $c$, computable from $R$, such that the apex-grid of order $c$ contains $R$ as a minor. Consequently, $G'$ excludes the apex-grid of order $c$ as a minor.

 Next, we need the following claim. Essentially the same statement was proven by Cohen-Addad et al. as~\cite[Lemma 6.2]{Cohen-AddadLPP23}, but using a more elaborate argument.

 \begin{claim}\label{lem:62+}
  For every integer $c\in \N$ there exists an integer $d\in \N$, computable from $c$, such that the following holds. Suppose a graph $G_0$ excludes the apex-grid of order $c$ as a minor. Suppose further that $G_1$ is a graph constructed from $G_0$ by adding, for every clique $A$ in $G_0$, an arbitrary number of vertices with neighborhood~$A$. Then $G_1$ excludes the apex-grid of order $d$ as a minor.
 \end{claim}
 \begin{claimproof}
  We set
  $$d\coloneqq g(c)+2,$$
  where $g$ is the function from the Grid Minor Theorem~(\cref{thm:grid-minor}). Observe that since $G_0$ excludes the apex-grid of order $c$ as a minor, every clique in $G_0$ has size at most $c^2$. Since we may assume that $g(c)\geq c$, we have $d\geq c+2$.

  Let $H$ be the apex-grid of order $d$ and let $a$ be the apex of $H$. Suppose, for contradiction, that $G_1$ contains a minor model $\eta$ of $H$. Recall that vertices of $G_1$ can be partitioned into original vertices of $G_0$ and new vertices, which we will call {\em{appendices}}; for every appendix $v\in V(G_1)\setminus V(G_0)$, the neighborhood of $v$ in $G_1$ is a clique in $G_0$. Call a vertex $u$ of $H$ {\em{lonely}} if $\eta(u)=\{v\}$ for some appendix $v$. Note that lonely vertices necessarily form an independent set in $H$, for their respective branch sets are pairwise non-adjacent in $G_1$. Note also that the apex $a$ is not lonely, because its degree in $H$ is $d^2$, while every appendix has degree at most $c^2<d^2$. Let $J$ be the graph obtained from $H$ by (i) turning the neighborhood of every lonely vertex into a clique, and then (ii) removing every lonely vertex.

  We first observe that $\eta'$ obtained from $\eta$ by restricting the domain to $V(J)$ and removing all appendices from all branch sets is a minor model of $J$ in $G_0$. This is because the neighborhoods of appendices are cliques, hence the branch sets of neighbors of a lonely vertex are pairwise adjacent in $G_0$, and the removal of an appendix from a branch set cannot break the connectivity of this branch set.

  We second observe that $\tw(J)\geq \tw(H)-1$. Indeed, given a tree decomposition $(T,\bag)$ of $J$, we can construct a tree decomposition $(T,\bag')$ of $H$ of width at most $1$ larger by locating, for every lonely vertex $v$, any node $x$ of $T$ whose bag contains all the neighbors of $v$ (which are a clique in $J$, so such a node exists), and adding $v$ to the bag of $x$.

  Since $H$ is the apex-grid of order $d$, we have $\tw(H)\geq d=g(c)+2$. Hence $\tw(J)\geq g(c)+1$, and so $\tw(J-a)\geq g(c)$. Consequently, by \cref{thm:grid-minor}, $J-a$ contains the $c\times c$ grid as a minor, implying that $J$ contains the apex-grid of order $c$ as a minor. But we argued that $J$ is a minor of $G_1$, and $G_1$ was supposed to exclude the apex-grid of order $c$ as a minor; a contradiction.
 \end{claimproof}

 It now remains to observe that we can obtain a supergraph of the graph $G$ by applying the operation described in \cref{lem:62+} $\alpha$ times to $G'$. It follows that there exists an integer $d$, depending only on $q,\alpha,\Sigma$ in a computable way, such that $G$ excludes the apex-grid of order $d$ as a minor.
\end{proof}


Next, we introduce apices back to the picture. Precisely, we prove that if a graph $G$ has a near-embedding $\Ee$ of bounded breadth, with apices possibly present, then every apex of a large enough top-apex-grid in $G$ must be in fact an apex of $\Ee$.

\begin{lemma}\label{lem:candidates-are-apices}
 For all $q,\alpha\in \N$ and surface $\Sigma$, there exists a constant $p\in \N$, computable from $q,\alpha,\Sigma$, such that the following holds. Suppose $G$ is a graph and $\Ee$ is a near-embedding of $G$ in $\Sigma$ of order at most $q$ and breadth less than $\alpha$. Suppose further that $G$ contains a top-apex-grid of order $p$, say with apex $a$. Then $a$ belongs to the apex set of $\Ee$.
\end{lemma}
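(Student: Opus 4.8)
The plan is to argue by contraposition: assume $a$ is \emph{not} an apex of $\Ee=(Z,G_0,\gamma,\Ww,\Ff)$, so $a$ lies either in $V(G_0)$ (embedded in $\Sigma$), or in the interior of some flap, or in the interior of some bag of a vortex, or is a society vertex of some vortex/flap. We want to show that then $G$ cannot contain a top-apex-grid with apex $a$ once $p$ is large enough. The key point is that $a$'s neighborhood outside of $Z$ is, in each of these cases, ``localized'' in a way that is incompatible with being adjacent to all $p^2$ branch sets of a huge grid minor.

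First, I would set up the reduction. Let $\eta\colon [p]\times[p]\to 2^{V(G)}$ together with $a$ be the purported top-apex-grid. Since $|Z\setminus\{a\}|\le q$, at most $q$ of the branch sets $\eta(i,j)$ can intersect $Z$; by \Cref{prop:grid-in-grid} applied inside the $p\times p$ grid (with $p$ large enough compared to $q$ — say $p\ge p_0$ where a $p_0\times p_0$ grid still contains a $p_1\times p_1$ grid minor avoiding the $q$ bad rows/columns, for $p_1$ to be fixed below), we obtain a $p_1\times p_1$-grid minor $\eta'$ each of whose branch sets is contained in $\bigcup_{(i,j)}\eta(i,j)$, is disjoint from $Z$, and is adjacent (via $a$) to $a$; moreover these branch sets are connected in $G-Z$ and pairwise adjacent according to the grid, \emph{even in $G-Z$}, since the grid adjacency edges can be rerouted to avoid $Z$ only if they do not already pass through $Z$ — here I should be slightly careful and instead just observe that contracting/merging branch sets keeps the grid structure and all branch sets adjacent to $a$; removing $Z$ from the picture only costs us $q$ rows/columns worth of branch sets, which is absorbed by taking $p_1$ large. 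So after this cleanup we have, inside $G-Z$, a $p_1\times p_1$-grid minor all of whose branch sets see $a$, and $a\notin Z$.

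Now the heart of the argument: analyze where $a$ sits in the near-embedding $\Ee'$ of $G-Z$ obtained by deleting $Z$ (which is still a near-embedding of order $\le q$, breadth $<\alpha$, with \emph{empty} apex set). If $a$ lies in the interior of a flap $F$, then $N_{G-Z}(V(\intr(F)))\subseteq\Omega(F)$ has size $\le 3$, so $a$'s neighbors outside a small set all lie in one connected component of $\intr(F)$, which has $<\alpha$ vertices; thus $a$ is adjacent to at most $\alpha+3$ branch sets, a contradiction once $p_1^2>\alpha+3$. The same bookkeeping handles $a$ in the interior of a vortex bag, where the separating set has size $\le 2q+1$ and again the component has $<\alpha$ vertices. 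The remaining, genuinely structural case is $a\in V(G_0)$ (possibly a society vertex of a vortex, which is still in $V(G_0)$): here I would form the graph $G'$ from $G-Z$ by deleting the interiors of all flaps and all vortex bags and turning each society into a clique; as in the proof of \Cref{lem:shallow-apex-minor-free}, $G'$ (or rather the graph obtained by reattaching clique-supported appendices, via \Cref{lem:62+}) has a near-embedding with empty apex set and breadth $<\alpha$ on $\Sigma$, hence by \Cref{lem:shallow-apex-minor-free} excludes the apex-grid of order $d$ for some $d=d(q,\alpha,\Sigma)$. But $a\in V(G')$ and, if $p_1$ is large enough, the $p_1\times p_1$-grid minor all of whose branch sets see $a$ survives (up to modest shrinking) into $G'$: the branch-set connectivity and the grid-adjacency edges live in $G-Z$, and passing to $G'$ only contracts interiors into their small boundary cliques and adds clique edges, so a grid minor plus a universal vertex $a$ in $G-Z$ yields an apex-grid minor of order $\Omega(\sqrt{p_1})$ (even $p_1$, with a little care about branch sets that used flap/vortex interiors — those can be rerouted through the at-most-$3$ boundary vertices, losing only a constant factor) in $G'$. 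Choosing $p_1$ — hence $p$ — large enough that this exceeds $d$ gives a contradiction. Assembling: set $p_1$ to be the larger of $\sqrt{\alpha+2q+2}$ and a value forcing an apex-grid-of-order-$d$ minor as above, and set $p=p_1+q$ (plus the slack needed for \Cref{prop:grid-in-grid}); all these are computable from $q,\alpha,\Sigma$.

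The main obstacle I anticipate is the bookkeeping in the last case: carefully transferring the ``grid-minor-with-a-universal-vertex'' from $G-Z$ into the modified planar-like graph $G'$, since branch sets of the grid minor may dip into flap or vortex interiors, and one must argue that rerouting through the bounded-size boundaries and the clique-fill operation preserves both the grid structure and adjacency to $a$, at the cost of only a constant (or even no) loss in the grid order. Getting the quantifiers right so that a single computable $p=p(q,\alpha,\Sigma)$ works simultaneously for all four location-cases of $a$ will require stating the intermediate bounds ($p_1$, then $d$, then $p$) in the correct dependency order, but this is routine once the transfer step is nailed down.
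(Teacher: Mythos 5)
Your overall shape — contrapose, assume $a\notin Z$, discard the $\leq q$ branch sets that meet $Z$ to obtain a smaller grid minor inside $G-Z$ with every branch set adjacent to $a$ — matches the paper. But you then take a detour that isn't needed and that you yourself flag as the weak point. Once you are inside $G-Z$, you have, by definition, a top-apex-grid of some order $d$ in $G-Z$, and a top-apex-grid of order $d$ is immediately an apex-grid minor of order $d$ (the paper notes this right after \Cref{def:apexFolio}'s neighbouring definition of top-apex-grid: the branch sets are the grid branch sets, and $\{a\}$ is the apex branch set). Meanwhile $\Ee'=(\emptyset,G_0,\gamma,\Ww,\Ff)$ is a near-embedding of $G-Z$ of order $\leq q$, breadth $<\alpha$, empty apex set, so \Cref{lem:shallow-apex-minor-free} says outright that $G-Z$ excludes the apex-grid of order $d$. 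Contradiction, with $p:=d+q$. That's the whole proof.

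Your case split on the location of $a$ (flap interior / vortex-bag interior / $V(G_0)$) is therefore superfluous, and the third case — where you re-run the clique-fill construction from the proof of \Cref{lem:shallow-apex-minor-free} and then worry about transferring the grid minor and its universal vertex into the modified graph $G'$ — is exactly the redundant work that causes the ``main obstacle'' you anticipate. You are re-deriving the content of \Cref{lem:shallow-apex-minor-free} instead of invoking it as a black box on $G-Z$. If carried out with full care your route would land in the right place, but the cleaner move is to recognize that \Cref{lem:shallow-apex-minor-free} already encapsulates all three of your location cases uniformly, because the breadth and order bounds it hypothesizes are exactly what $\Ee'$ provides regardless of where $a$ sits.
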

\begin{proof}
 Let $\Ee=(Z,G_0,\gamma,\Ww,\Ff)$ be the considered near-embedding. Observe that $\Ee'\coloneqq (\emptyset,G_0,\gamma,\Ww,\Ff)$ is a near-embedding of $G-Z$ in $\Sigma$, of order at most $q$, breadth less than $\alpha$, and with an empty apex set. By \cref{lem:shallow-apex-minor-free}, there exists a constant $d$, depending only on $q,\alpha,\Sigma$, such that $G-Z$ excludes the apex-grid of order $d$ as a minor. We set
 $$p\coloneqq d+q.$$

 Assume now that there is a top-apex-grid of order $p$ in $G$, say with an apex $a$ and with branch sets $\{\eta(i,j)\colon i,j\in [p]\}$. For the sake of contradiction, suppose $a\notin Z$. Observe that at most $|Z|\leq q$ among the branch sets $\{\eta(i,j)\colon i,j\in [p]\}$ may intersect the set $Z$, so by removing those branch sets and merging some others we may find a grid minor model $\{\eta'(i,j)\colon i,j\in [d]\}$ such that every set $\eta'(i,j)$ is disjoint with $Z$ and adjacent to $a$. In other words, $a$ together with $\{\eta'(i,j)\colon i,j\in [d]\}$ form a top-apex-grid of order $d$ in $G-Z$. But this implies that $G-Z$ contains the apex-grid of order $d$ as a minor, a contradiction.
\end{proof}

Now \cref{lem:few-candidates} follows immediately by combining \cref{lem:shallow-embedding,lem:candidates-are-apices}.

\subsection{The algorithm}

With \cref{lem:few-candidates} established, we can prove \cref{thm:compact-solvable}.

\begin{proof}[Proof of \cref{thm:compact-solvable}]
 Let $k$ be the constant computable only from $h$ that is provided by \cref{lem:few-candidates}. Recall that we assume that we are given a graph $G$, a well-linked set $X$ of size at least $k$, and an integer $\alpha\in \N$ such that $G$ is guaranteed to be $(X,k,\alpha)$-compact. We let $p,c\in \N$ be the remaining constants provided by \cref{lem:few-candidates}, which are computable from $h$, $\alpha$, and $\ell\coloneqq |X|$. Further, let $q\in \N$ be the constant provided by \cref{lem:getting-the-top} for parameters $h$ and $p$.

 The algorithm works as follows.
 \begin{enumerate}
  \item Initialize $Z\coloneqq \emptyset$.
  \item While $|Z|\leq c$, apply the algorithm of \cref{thm:apexFree} with the graph $G$, the set $X'\coloneqq X\cup Z$, and the parameter~$q$. Depending on the outcome, do the following.
  \begin{itemize}
   \item If the algorithm of \cref{thm:apexFree} returns an $(X',\delta)$-model-folio of $G$, then extract from it an $(X,\delta)$-model-folio of $G$ and terminate the algorithm by outputting this model-folio.
   \item Otherwise, the algorithm of \cref{thm:apexFree} provides a minor model of an apex grid of order $q$ in $G-X'$. Apply the algorithm of \cref{lem:getting-the-top} to this minor model in the graph $G-X'$. If this application yields the conclusion that $G-X'$ contains $K_h$ as a minor, then terminate the algorithm by outputting the same conclusion about $G$. Otherwise, we obtain a top-apex-grid of order $p$ in $G-X'$, say with apex $z$. We add $z$ to $Z$ and continue the loop.
  \end{itemize}
  \item Once the loop finished with $|Z|>c$ without reaching any of the terminating outcomes, we infer by \cref{lem:few-candidates} that $G$ must contain $K_h$ as a minor, as there are more than $c$ candidates for an apex of a top-apex-grid of order $p$ in $G$. We return this conclusion.
 \end{enumerate}

 That the algorithm is correct follows immediately from \cref{lem:getting-the-top,lem:few-candidates}. As for the running time, observe that each iteration of the loop takes time $\Oh_{|X'|,\delta,q}(\|G\|^{1+o(1)})\leq \Oh_{|X|,\delta,h,\alpha}(\|G\|^{1+o(1)})$, and there are at most $c=\Oh_{|X|,h,\alpha}(1)$ iterations executed, because each iteration adds one new vertex to $Z$.
\end{proof}

\section{Carving chips}\label{sec:carving}

In \Cref{sec:full-clique-minor-free} we will prove \Cref{thm:cliqueFree} by reducing it to the setting of \Cref{thm:compact-solvable}, while in \Cref{sec:generalcase} we will prove \Cref{thm:main-real} by reducing it to the setting of \Cref{thm:cliqueFree}. Both these proofs will be based on a procedure that iteratively finds a large family of chips, solves the problem recursively on them, and then replaces the chips by small equivalent graphs.
In this section we give the two main ingredients for facilitating this procedure: an algorithm for finding a large family of chips in \Cref{subsec:computingchips} (\Cref{lem:computechipfamilyfinal}), and an algorithm for replacing chips by small equivalent graphs in \Cref{subsec:preserversandreplacements} (\Cref{lem:ultimatechipreplace}).

\subsection{Computing chips}
\label{subsec:computingchips}
Our first ingredient is an algorithm to compute a large family $\chipfamily$ of pairwise non-touching $(X,k,\alpha)$-chips.

Let $G$ be a graph, $X \subseteq V(G)$, and $k,\alpha \in \N$.
Let us say that a vertex $v \in V(G) - X$ is {\em{$(X,k,\alpha)$-carvable}} if there exists an $(X,k,\alpha)$-chip $C$ with $v \in C$.
The goal of this subsection is to prove the following~theorem.

\begin{restatable}{theorem}{computechipfamilyfinal}
\label{lem:computechipfamilyfinal}
There is an algorithm that, given an $n$-vertex graph $G$, a set $X \subseteq V(G)$, and integers $k,\alpha \in \N$, in time $\Oh_{k,\alpha}(\|G\|^{1+o(1)})$ computes a family $\chipfamily$ of pairwise non-touching $(X,k,\alpha)$-chips so that if $\numcarv$ is the number of $(X,k,\alpha)$-carvable vertices of $G$, then $|\bigcup \chipfamily| \ge \numcarv/\Oh_{k,\alpha}(\log n)$.
\end{restatable}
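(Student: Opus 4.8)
The plan is to reduce the problem of finding a large family of pairwise non-touching $(X,k,\alpha)$-chips to a "terminal $k$-chip" problem solvable via Gomory--Hu-type structures for element/vertex connectivity, and then to feed this into the color coding scheme of Chitnis, Cygan, Hajiaghayi, Pilipczuk, and Pilipczuk~\cite{DBLP:journals/siamcomp/ChitnisCHPP16} to handle the logarithmic loss. I will carry this out in three steps.

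\paragraph{Step 1: The terminal $k$-chip problem.}
First I would isolate the following clean subproblem. Suppose we are given a graph $G$, a vertex $z\in V(G)$, a set of terminals $T\subseteq V(G)-\{z\}$ with $T\cup\{z\}$ independent, and a parameter $k$. Call a connected set $C$ a \emph{terminal $k$-chip} if $C\cap T\neq\emptyset$, $z\notin C$, $|N(C)|<k$, and $N(C)\cap (T\cup\{z\})=\emptyset$. I would prove that there is a family $\chipfamily$ of pairwise non-touching terminal $k$-chips that captures \emph{every} terminal $t\in T$ that lies in some terminal $k$-chip, and that this family can be computed in $\Oh_k(\|G\|^{1+o(1)})$ deterministic time. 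The existence part follows from uncrossing: the function $f(S)=|N(S)|$ on subsets $S$ avoiding $z$ and with $N(S)$ avoiding $T\cup\{z\}$ is posimodular in the relevant sense, so for any two terminal $k$-chips their "minimal closures" can be uncrossed to be either nested or disjoint; a maximal laminar refinement gives the non-touching family. Algorithmically, I would compute, for each terminal $t$, a minimum $(t,z)$-vertex-separator \emph{avoiding} $T\cup\{z\}$ via element connectivity, and extract the inclusionwise-minimal side containing $t$; the non-touching property is then secured by processing terminals in order and discarding those swallowed by earlier chips. To do all of this in almost-linear total time without a factor of $|T|$, I would invoke the isolating-cuts technique of Li and Panigrahi~\cite{DBLP:conf/focs/LiP20}, combined with the deterministic almost-linear max-flow algorithm of van den Brand et al.~\cite{DBLP:conf/focs/Brand0PKLGSS23} and the mimicking network construction of Saranurak and Yingchareonthawornchai~\cite{DBLP:conf/focs/SaranurakY22}; this replaces the randomized Gomory--Hu approach of Pettie, Saranurak, and Yin~\cite{DBLP:conf/stoc/PettieSY22} with a deterministic one.

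\paragraph{Step 2: Color coding to reduce chips to terminal chips.}
Next I would reduce the original $(X,k,\alpha)$-chip problem to the terminal $k$-chip problem. Add a fresh vertex $z$ and make it adjacent to all of $X$ (so that $N(C)\ni$ some vertex near $X$ translates to $N(C)$ not separating properly from $z$); more precisely, the condition $C\cap X=\emptyset$ and $|N(C)|<k$ for chips corresponds, after this gadget, to $z\notin C$ and a bound on $|N(C)|$. The subtlety is the size requirement $|C|\ge\alpha$: a generic minimum-separator computation need not produce a large side. Here I would follow~\cite{DBLP:journals/siamcomp/ChitnisCHPP16}: apply a $(\alpha\cdot k, \Oh(k))$-universal-set / splitter-based color coding (deterministic, via the splitters of Naor, Schulman, and Srinivasan, as in~\cite{DBLP:journals/jacm/AlonYZ95}) that, for every carvable vertex $v$ witnessed by a chip $C$ with $|N(C)|<k$, highlights with one color a set of $\alpha$ vertices inside $C$ (as the terminal set) and with another color the $<k$ vertices of $N(C)$ (as "forbidden"). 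On each coloring, contracting the highlighted potential-interior color class appropriately and setting $T$ to be the highlighted terminals, the terminal $k$-chip algorithm of Step 1 recovers a chip containing $v$. Running over all $\Oh_{k,\alpha}(\log n)$ colorings of the splitter family and taking, for each carvable vertex, one chip that captures it, gives a multiset of chips hitting all carvable vertices; a greedy non-touching refinement keeps at least a $1/\Oh_{k,\alpha}(\log n)$ fraction of the carvable vertices, since each coloring contributes a non-touching family and there are only $\Oh_{k,\alpha}(\log n)$ of them.

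\paragraph{Step 3: Assembly and the main obstacle.}
Finally I would assemble: the family produced is a family of pairwise non-touching $(X,k,\alpha)$-chips with $|\bigcup\chipfamily|\ge \numcarv/\Oh_{k,\alpha}(\log n)$, and the total running time is $\Oh_{k,\alpha}(\log n)$ splitter-iterations, each calling the $\Oh_k(\|G\|^{1+o(1)})$-time terminal-chip routine, so $\Oh_{k,\alpha}(\|G\|^{1+o(1)})$ in total. The main obstacle, I expect, is Step 1 --- specifically, making the terminal $k$-chip extraction both \emph{deterministic} and \emph{almost-linear in total} over all terminals simultaneously, while respecting the constraint that separators must avoid $T\cup\{z\}$ (element connectivity rather than plain vertex connectivity) and while producing a genuinely non-touching family rather than a merely laminar one. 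The isolating-cuts framework gives $\Oh(\log |T|)$ max-flow calls instead of $|T|$, but one must verify carefully that its correctness hypotheses hold for the posimodular element-connectivity function here, and that the mimicking-network sparsification of~\cite{DBLP:conf/focs/SaranurakY22} can be applied recursively without blowing up the $n^{o(1)}$ factors; the uncrossing argument for the non-touching property also needs the right notion of "canonical minimal chip" to be well-defined, which I would pin down via the lexicographically-least minimum separator or equivalently the source-side of a maximum flow.
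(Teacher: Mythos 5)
Your proposal follows the same high-level route as the paper: reduce the $(X,k,\alpha)$-chip problem to a terminal $k$-chip problem via deterministic color coding with splitters, and solve the terminal problem with isolating cuts, deterministic almost-linear max-flow, and the Saranurak--Yingchareonthawornchai sparsifier. Steps~2 and~3 essentially match the paper (the paper simply returns the single coloring $S$ maximizing $|\bigcup\chipfamily_S|$ rather than greedily refining a union over colorings, but this is cosmetic).

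The gap is in Step~1, and it is a real one. The isolating-cuts primitive, applied to a terminal set $T'\ni t$, returns a set $C_t$ with $|N(C_t)|=\connb_{G,T^*}(t,T'\setminus\{t\})$, i.e., a minimum separator of $t$ from the \emph{other selected terminals}, not from $z$ alone. If $T'$ contains some $s$ lying on the $t$-side of the canonical pushed $(t,z)$-separator, then $\connb_{G,T^*}(t,T'\setminus\{t\})$ can strictly exceed $\connb_{G,T^*}(t,z)$, and $|N(C_t)|$ can be $\geq k$ even though $t$ is $k$-carvable. So running isolating cuts directly on all of $T\cup\{z\}$, as your plan implicitly does, does not produce terminal $k$-chips for all carvable terminals, and your fallback ``process terminals in order and discard those swallowed'' costs $|T|$ max-flow calls, which you cannot afford. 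The paper's fix is a two-phase structure that your proposal does not articulate: first compute a \emph{minimal representative set} $R$ of carvable terminals whose canonical side-sets $T_t$ are pairwise disjoint and jointly cover all $k$-carvable terminals (cf.\ \cref{lem:babygomoryhu}); then run isolating cuts on $R\cup\{z\}$, for which $\connb_{G,T^*}(t,(R\cup\{z\})\setminus\{t\})=\connb_{G,T^*}(t,z)<k$ holds by that disjointness, so the isolating-cuts outputs are genuine non-touching terminal $k$-chips each containing its $T_t$ (\cref{lem:carving-terminals-with-advice}). Computing $R$ itself is where the recursive divide-and-conquer with the element-connectivity sparsifier is actually needed (\cref{lem:computereps}); you gesture at ``mimicking-network sparsification applied recursively'' but do not say what it buys you, and this representative-set computation --- together with the laminarity/uncrossing argument underlying it via submodularity of $\connb$ --- is the concrete piece your plan would have to formulate and prove to close Step~1.
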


The proof of \Cref{lem:computechipfamilyfinal} splits into two parts.
First, we give an algorithm to a problem called \TermCarv, and then we use color coding to reduce the setting of \Cref{lem:computechipfamilyfinal} to that of \TermCarv.

\subsubsection{Terminal carving}
We consider the following problem called \TermCarv.
An instance of \TermCarv is a tuple $(G,z,T,k)$, consisting of a graph $G$, a root terminal $z$, a set of terminals $T \subseteq V(G)\setminus \{z\}$, and an integer $k$, where we require that $T^* = T \cup \{z\}$ is an independent set in $G$.
A set of vertices $C \subseteq V(G)$ is called a {\em{terminal $k$-chip}} if
\begin{itemize}[nosep]
\item $C$ is connected;
\item $z\notin C$;
\item $C\cap T\neq \emptyset$;
\item $N(C)\cap T^* = \emptyset$; and 
\item $|N(C)|<k$.
\end{itemize}
We call a terminal $t \in T$ \emph{$k$-carvable} if there exists a terminal $k$-chip that contains~$t$.

Our goal is to give the following algorithm for \TermCarv.

\begin{lemma}
\label{lem:carving-terminals}
There is an $\Oh_k(\|G\|^{1+o(1)})$-time algorithm that, given an instance $(G,z,T,k)$ of \TermCarv, outputs a family $\chipfamily$ of terminal $k$-chips such that
\begin{itemize}[nosep]
\item the chips of $\chipfamily$ are pairwise non-touching; and
\item every $k$-carvable terminal is contained in some $C \in \chipfamily$.
\end{itemize}
\end{lemma}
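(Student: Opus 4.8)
The plan is to first establish, combinatorially, that the desired family exists with a laminar-like structure, and then to realize a Gomory--Hu-style computation as an almost-linear time deterministic algorithm via the isolating cuts technique. We begin by reducing the problem to element connectivity. For a terminal $t \in T$, let $\lambda(t)$ be the minimum size of a set $S \subseteq V(G) \setminus T^*$ such that $t$ and $z$ lie in different connected components of $G - S$ (with $\lambda(t) = \infty$ if no finite such $S$ exists). I would observe that $t$ is $k$-carvable if and only if $\lambda(t) < k$: given a terminal $k$-chip $C \ni t$, the set $S = N(C) \subseteq V(G)\setminus T^*$ witnesses $\lambda(t) < k$; conversely, given such an $S$, the connected component $C$ of $t$ in $G - S$ is a full component of $G-S$, so $N(C) \subseteq S$, hence $z \notin C$, $t \in C$, $N(C) \cap T^* = \emptyset$ and $|N(C)| < k$, i.e.\ $C$ is a terminal $k$-chip. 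Since $T^*$ is independent, $\lambda(t)$ coincides with the element connectivity between $t$ and $z$ with respect to the terminal set $T^*$, and by Menger's theorem it equals the maximum number of $t$--$z$ paths that are pairwise disjoint outside $T^*$.

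\emph{Existence of a non-touching family.} The element-connectivity function, which assigns to each bipartition of $T^*$ the minimum number of elements (non-terminal vertices together with edges) whose deletion separates the two sides, is symmetric and submodular, so by the Gomory--Hu theorem for symmetric submodular functions \cite{DBLP:journals/combinatorica/GoemansR95} there is a tree $\Tc$ on vertex set $T^*$ realizing all pairwise element-cut values, together with, for every edge of $\Tc$, an element cut realizing the corresponding split. Rooting $\Tc$ at $z$, for each $k$-carvable $t$ I would let $e_t$ be the edge of weight $<k$ on the $t$--$z$ path of $\Tc$ that is closest to $z$, fix a canonical realizing cut $S_t$ (the one pushed as far towards $z$ as possible among the minimum ones), and let $C_t$ be the connected component of $t$ in $G - S_t$; as above this is a terminal $k$-chip. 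A standard uncrossing argument for minimum element cuts then shows that for $k$-carvable $t,t'$ the chips $C_t$ and $C_{t'}$ are either nested or disjoint, and that in the disjoint case no edge of $G$ joins them -- both are then unions of components of $G$ minus a common element separator obtained by uncrossing $S_t$ and $S_{t'}$ -- so they are non-touching. Taking $\chipfamily$ to be the inclusion-maximal members of $\{C_t : t \text{ is } k\text{-carvable}\}$ yields a family of pairwise non-touching terminal $k$-chips covering every $k$-carvable terminal.

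\emph{Deterministic almost-linear time.} Constructing $\Tc$ directly requires $\Omega(|T^*|)$ min-cut computations, which is too slow, so instead I would follow the isolating cuts framework of Li and Panigrahi \cite{DBLP:conf/focs/LiP20}: with only $\Oof(\log|T^*|)$ min-cut computations one obtains, for all $r \in T^*$ simultaneously, a minimum $(\{r\},T^*\setminus\{r\})$-element-isolating cut together with its $r$-side, and these sides are pairwise disjoint. Each element min-cut reduces to an ordinary min-cut by the bidirected vertex-splitting gadget (split every non-terminal vertex into an in- and an out-copy joined by a unit-capacity edge, keeping terminal-incident edges uncuttable), and every such min-cut is computed deterministically in time $\Oof(\|G\|^{1+o(1)})$ by the deterministic almost-linear max-flow algorithm of van den Brand et al.~\cite{DBLP:conf/focs/Brand0PKLGSS23}. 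From the isolating cuts one extracts candidate chips (discarding cuts of value $\geq k$) and then recurses on the pieces induced by the isolated sides together with their boundaries; to keep the total size of all recursive instances almost-linear, before recursing I would replace each piece by an $\Oof_k$-size vertex sparsifier that preserves all cuts of value at most $k$ among its boundary and interior terminals, using the deterministic mimicking-network construction of Saranurak and Yingchareonthawornchai \cite{DBLP:conf/focs/SaranurakY22}. The recursion has depth $\Oof(\log n)$ and total running time $\Oof_k(\|G\|^{1+o(1)})$, and the explicit cuts produced along the way let us prune the collected chips to a non-touching subfamily exactly as in the existence argument. (Alternatively, one could invoke the randomized element-connectivity Gomory--Hu algorithm of Pettie, Saranurak and Yin \cite{DBLP:conf/stoc/PettieSY22} as a black box, at the cost of randomization.)

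The step I expect to require the most care is the combinatorial one: pinning down the canonical choice of the realizing cuts $S_t$ so that $\{C_t\}$ is genuinely laminar, and in particular proving that disjoint members are non-touching rather than merely disjoint -- this is a submodular uncrossing argument that must be carried out at the level of $G$, not just of $\Tc$. On the algorithmic side, the delicate point is ensuring that interleaving the isolating-cuts recursion with the mimicking-network replacements keeps both the total instance size and the number of max-flow calls under control, so that the $n^{o(1)}$ factors do not accumulate across the $\Oof(\log n)$ levels.
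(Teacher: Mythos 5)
Your high-level plan matches the paper's: establish a Gomory--Hu-style laminar family via submodularity, then realize it deterministically using isolating cuts~\cite{DBLP:conf/focs/LiP20}, deterministic almost-linear max-flow~\cite{DBLP:conf/focs/Brand0PKLGSS23}, and the Saranurak--Yingchareonthawornchai sparsifier~\cite{DBLP:conf/focs/SaranurakY22} to control instance sizes. The existence argument and the basic ingredients are all in place. However, there is a genuine gap in the algorithmic part that you yourself flag as ``the delicate point'' without resolving: the recursion and its claimed $\Oh(\log n)$ depth. Running isolating cuts with $T'=T^*$ yields, for each $r$, the minimum cut separating $r$ from \emph{all} other terminals, which can be strictly larger than $\connb_{G,T^*}(r,z)$; a terminal $t$ whose $(t,z)$-min-cut has value $<k$ but whose chip necessarily contains other terminals will have an isolating cut of value $\geq k$ and so be \emph{missed}. ``Recursing on the pieces induced by the isolated sides'' does not fix this: each isolated side $C_r$ contains only the single terminal $r$ (since $C_r\cap T^*=\{r\}$), so there is nothing to recurse on there, and recursing on the complement is essentially the original problem with an unbounded number of terminals remaining. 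No natural interpretation of the recursion you describe gives $\Oh(\log n)$ depth deterministically --- this is precisely the non-trivial step in almost-linear Gomory--Hu tree constructions, and the deterministic element-connectivity version you would need does not follow directly from the cited tools.

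The paper sidesteps full Gomory--Hu tree computation by introducing \emph{representative sets}: a set $R\subseteq T$ of $k$-carvable terminals such that for every $k$-carvable $t\notin R$ there is $s\in R$ with $T_t\subseteq T_s$, where $T_t$ is the minimal terminal-side of the $(t,z)$-min-cut. The key structural fact (the paper's \Cref{lem:babygomoryhu}) is that the sets $T_t$ are laminar, so a \emph{minimal} representative set has pairwise disjoint $T_t$'s, and then one round of isolating cuts with $T'=R\cup\{z\}$ recovers the chips directly (\Cref{lem:carving-terminals-with-advice}). Computing a minimal representative set is then a genuinely balanced divide-and-conquer on the terminal set: split $T$ into two halves $T_1,T_2$, sparsify each half preserving $k$-bounded cuts among $T_i\cup\{z\}$, recurse to get $R_1,R_2$, note that $R_1\cup R_2$ is still representative, and prune to a minimal one via isolating cuts and Ford--Fulkerson (\Cref{lem:computereps}). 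Because the terminal set halves at each level, the depth is $\Oh(\log|T|)$ and the sparsifier keeps the total work almost-linear. This ``representative set composes under balanced split'' observation is the missing idea in your proposal; your stated alternative of invoking the randomized algorithm of Pettie--Saranurak--Yin~\cite{DBLP:conf/stoc/PettieSY22} would indeed work but sacrifices determinism, which the paper explicitly wants to preserve.
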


Note that in the statement of \Cref{lem:carving-terminals}, even the existence of a family $\chipfamily$ satisfying the asserted properties is a non-trivial. As we will see, it follows from submodularity of cuts.

We prove \Cref{lem:carving-terminals} by applying recent results on almost-linear-time graph algorithms, particularly from \cite{DBLP:conf/focs/Brand0PKLGSS23,DBLP:conf/focs/LiP20,DBLP:conf/focs/SaranurakY22}.
We note that a randomized version of the algorithm of \Cref{lem:carving-terminals} is implied quite directly by the randomized Gomory-Hu tree algorithm of Pettie, Saranurak, and Yin~\cite{DBLP:conf/stoc/PettieSY22}.

Let us introduce some definitions.
Let $G$ be a graph, $I \subseteq V(G)$ an independent set, and $A,B \subseteq I$ disjoint subsets of it.
Then, by $\connb_{G,I}(A,B)$ we denote the size of a smallest $(A,B)$-separator that is disjoint from $I$.
Note that $\connb_{G,I}(A,B) \le |V(G) - I|$ because $I$ is an independent set and $A$ and $B$ are disjoint.
Also, a terminal $t \in T$ is $k$-carvable if and only if $\connb_{G,T^*}(t,z) < k$ (recall that $T^* = T \cup \{z\}$).

Furthermore, for $A \subseteq I$ we denote $\connb_{G,I}(A) = \connb_{G,I}(A,I-A)$.
Note that
\[\connb_{G,I}(A,B) = \min_{A \subseteq S \subseteq I-B} \connb_{G,I}(S).\]
Let us then recall the standard fact that the function $\connb_{G,I}$ is submodular.

\begin{lemma}[Submodularity of $\connb_{G,I}$]
For all $A,B \subseteq I$,
\[\connb_{G,I}(A \cap B) + \connb_{G,I}(A \cup B) \le \connb_{G,I}(A) + \connb_{G,I}(B).\]
\end{lemma}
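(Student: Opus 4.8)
The plan is to deduce this from the submodularity of the cut function of an auxiliary capacitated directed graph $G'$, obtained from $G$ by the standard vertex-splitting trick. Concretely: keep each vertex $v\in I$ as a single vertex (with the convention $v^-=v^+=v$); split each $v\in V(G)\setminus I$ into two vertices $v^-,v^+$ joined by an arc $v^-\to v^+$ of capacity $1$; and for each edge $uv\in E(G)$ add two arcs $u^+\to v^-$ and $v^+\to u^-$ of capacity $+\infty$. For $W\subseteq V(G')$ write $\mathsf{out}(W)$ for the total capacity of arcs leaving $W$; this function is submodular (the usual arc-by-arc check: for a single arc $u\to v$, the indicator $[u\in W][v\notin W]$ is a submodular function of $W$, by a four-case verification).

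First I would record that for every $S\subseteq I$,
\begin{align*}
\connb_{G,I}(S)=\min\{\mathsf{out}(W)\ :\ S\subseteq W\subseteq V(G')\setminus(I\setminus S)\}.
\end{align*}
For ``$\ge$'', take a minimum $(S,I\setminus S)$-separator $X\subseteq V(G)\setminus I$ and let $W$ be the set of vertices of $G'$ reachable from $S$ after deleting the unit arcs $\{v^-\to v^+:v\in X\}$; since deleting these arcs blocks every directed path that would pass through a vertex of $X$, one checks that $S\subseteq W$, $W\cap(I\setminus S)=\emptyset$, and every arc leaving $W$ is one of the deleted unit arcs, so $\mathsf{out}(W)\le|X|$. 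For ``$\le$'', a feasible $W$ of finite $\mathsf{out}$-value can only have unit arcs leaving it, which correspond to a set $X\subseteq V(G)\setminus I$; lifting any walk of $G-X$ from $S$ to $I\setminus S$ to a directed path of $G'$ avoiding the arcs out of $W$ shows that $X$ is an $(S,I\setminus S)$-separator, hence $\connb_{G,I}(S)\le|X|=\mathsf{out}(W)$. (Finiteness is not an issue, since $X=V(G)\setminus I$ is always a valid separator.) This is the step that needs a little care — it is precisely the classical vertex-cut${}={}$arc-cut correspondence, and one must keep track of orientations throughout — so I expect it to be the main, if routine, obstacle.

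Granting the displayed identity, the lemma is a one-line uncrossing. Choose $W_A$ attaining the minimum for $S=A$ and $W_B$ attaining it for $S=B$. Every such feasible $W$ satisfies $W\cap I\subseteq S$ (as $S\subseteq I$), so $(W_A\cap W_B)\cap I\subseteq A\cap B$ and $(W_A\cup W_B)\cap I\subseteq A\cup B$, while trivially $A\cap B\subseteq W_A\cap W_B$ and $A\cup B\subseteq W_A\cup W_B$; hence $W_A\cap W_B$ is feasible for $A\cap B$ and $W_A\cup W_B$ is feasible for $A\cup B$. Therefore
\begin{align*}
\connb_{G,I}(A\cap B)+\connb_{G,I}(A\cup B)&\le \mathsf{out}(W_A\cap W_B)+\mathsf{out}(W_A\cup W_B)\\
&\le \mathsf{out}(W_A)+\mathsf{out}(W_B)=\connb_{G,I}(A)+\connb_{G,I}(B),
\end{align*}
where the middle inequality is submodularity of $\mathsf{out}$. (A direct uncrossing of minimum separators inside $G$ itself also works, but there a vertex of $X_A\setminus X_B$ that must be ``half-placed'' between the two new separators is awkward to account for; the gadget handles this automatically, since a unit arc $v^-\to v^+$ is counted by $\mathsf{out}(W)$ only in the asymmetric case $v^-\in W$, $v^+\notin W$.)
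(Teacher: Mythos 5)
Your proof is correct, and it takes a genuinely different route from the paper's. The paper argues directly inside $G$: it fixes minimum separators $S_A$ and $S_B$, looks at the reachability sets $R_s$ (from $A\cap B$) and $R_t$ (from $I\setminus(A\cup B)$) in $G-(S_A\cup S_B)$, and takes $N(R_s)$ and $N(R_t)$ as the new separators for $A\cap B$ and $A\cup B$. The count $|N(R_s)|+|N(R_t)|\le|S_A|+|S_B|$ then follows from $N(R_s)\cup N(R_t)\subseteq S_A\cup S_B$ together with the key observation that any $v\in N(R_s)\cap N(R_t)$ admits a path from $A\cap B$ to $I\setminus(A\cup B)$ hitting $S_A\cup S_B$ only at $v$, forcing $v\in S_A\cap S_B$ --- so the ``half-placed'' vertex you worry about is in fact handled cleanly, since a vertex of $S_A\setminus S_B$ simply cannot lie in both $N(R_s)$ and $N(R_t)$. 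Your version instead routes through the vertex-splitting gadget $G'$, proves the identity $\connb_{G,I}(S)=\min_W\mathsf{out}(W)$ over feasible $W$, and then invokes submodularity of $\mathsf{out}$; the uncrossing step itself becomes a one-liner, at the cost of a page of translation between vertex cuts in $G$ and arc cuts in $G'$. Both are about the same total length; the paper's is more self-contained and stays in the undirected picture, while yours offloads the combinatorics onto a standard, reusable reduction, which is arguably preferable if one already has the gadget set up for other purposes (e.g.\ for running max-flow algorithms, as happens elsewhere in the paper). One small stylistic point: in the ``$\le$'' direction of your identity it is worth saying explicitly that finiteness of $\mathsf{out}(W)$ forces every arc leaving $W$ to be a unit arc $v^-\to v^+$ with $v\notin I$, since the $\infty$-capacity arcs cannot contribute; you use this implicitly when defining $X$.
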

\begin{proof}
Let $S_A$ be an $(A,I-A)$-separator of size $|S_A| = \connb_{G,I}(A)$, disjoint with $I$, and $S_B$ a $(B,I-B)$-separator of size $|S_B| = \connb_{G,I}(B)$, disjoint with $I$.
Now $S_A \cup S_B$ is an $(A \cap B, I-(A \cap B))$-separator and an $(A \cup B, I-(A \cup B))$-separator.
If $\connb_{G,I}(A \cap B) = 0$ or $\connb_{G,I}(A \cup B) = 0$ we are done.

Otherwise, both $A \cap B$ and $I - (A \cup B)$ are non-empty, so let $R_s$ be the vertices reachable from $A \cap B$ in $G - (S_A \cup S_B)$, and $R_t$ the vertices reachable from $I - (A \cup B)$ in $G - (S_A \cup S_B)$.
Now, $N(R_s) \subseteq S_A \cup S_B$ is an $(A \cap B, I-(A \cap B))$-separator disjoint from $I$ and $N(R_t) \subseteq S_A \cup S_B$ is an $(A \cup B, I-(A \cup B))$-separator disjoint from $I$.

Suppose $v \in N(R_s) \cap N(R_t)$.
Then there exists a path from $A \cap B$ to $I-(A \cup B)$ that intersects $S_A \cup S_B$ only in $v$.
Therefore, in that case $v \in S_A \cap S_B$.
The facts that $N(R_s) \cup N(R_t) \subseteq S_A \cup S_B$ and $N(R_s) \cap N(R_t) \subseteq S_A \cap S_B$ imply that $|N(R_s)| + |N(R_t)| \le |S_A| + |S_B|$, finishing the proof.
\end{proof}

Let us then state a basic observation about $\connb$ following from submodularity.
This lemma follows from the existence of Gomory-Hu trees for all symmetric submodular cut functions \cite{DBLP:journals/combinatorica/GoemansR95}, but let us prove it~here.

\begin{lemma}
\label{lem:babygomoryhu}
Let $G$ be a graph, $T^* = T \cup \{z\}$ an independent set, and for each $t \in T$, let $T_t \subseteq T$ be a minimum-size set so that $t \in T_t$ and $\connb_{G,T^*}(T_t) = \connb_{G,T^*}(t,z)$.
Then, 
\begin{enumerate}[label=(\arabic*),ref=(\arabic*),nosep]
\item\label{lem:babygomoryhu:prop1} if $X \subseteq T$ with $t \in X$, $z \notin X$, and $\connb_{G,T^*}(X,z) = \connb_{G,T^*}(t,z)$, then $X \supseteq T_t$; and
\item\label{lem:babygomoryhu:prop2} for all $s,t \in T$, either $T_s \subseteq T_t$, or $T_t \subseteq T_s$, or $T_s \cap T_t = \emptyset$.
\end{enumerate}
\end{lemma}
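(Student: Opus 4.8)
The plan is to prove both properties by exploiting submodularity of $\connb_{G,T^*}$ together with the minimality in the choice of $T_t$.

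For property~\ref{lem:babygomoryhu:prop1}: suppose $X \subseteq T$ with $t \in X$, $z \notin X$, and $\connb_{G,T^*}(X,z) = \connb_{G,T^*}(t,z)$. First I would like to replace $X$ by a set $X' \subseteq T$ with $t \in X'$, $z \notin X'$, and $\connb_{G,T^*}(X') = \connb_{G,T^*}(t,z)$; this follows because $\connb_{G,T^*}(X,z) = \min_{X \subseteq S \subseteq T} \connb_{G,T^*}(S)$ (the minimum is over $S$ avoiding $z$, i.e.\ $S \subseteq T^* - \{z\} = T$, since any separator avoiding $T^*$ cannot ``decide'' the side of $z$), so some such $X' \supseteq X \ni t$ realizes it. Then I apply submodularity to $X'$ and $T_t$: since $t \in X' \cap T_t$, the set $X' \cap T_t$ is a valid candidate (contains $t$, avoids $z$), so $\connb_{G,T^*}(X' \cap T_t) \ge \connb_{G,T^*}(t,z)$ by definition of $\connb_{G,T^*}(t,z)$ as the minimum over such sets. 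Likewise $X' \cup T_t$ avoids $z$ (as both do) and contains $t$, so $\connb_{G,T^*}(X' \cup T_t) \ge \connb_{G,T^*}(t,z)$. Submodularity gives
\[
\connb_{G,T^*}(X' \cap T_t) + \connb_{G,T^*}(X' \cup T_t) \le \connb_{G,T^*}(X') + \connb_{G,T^*}(T_t) = 2\connb_{G,T^*}(t,z),
\]
so both terms on the left equal $\connb_{G,T^*}(t,z)$ exactly. In particular $X' \cap T_t$ contains $t$, avoids $z$, and has $\connb_{G,T^*}(X' \cap T_t) = \connb_{G,T^*}(t,z)$; by minimality of $T_t$ we get $|X' \cap T_t| \ge |T_t|$, hence $T_t \subseteq X' $, and since $X \subseteq X'$ I need to be a little careful — actually the clean route is to note $X' \cap T_t \subseteq T_t$, so equality of sizes forces $X' \cap T_t = T_t$, i.e.\ $T_t \subseteq X'$. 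This shows $T_t \subseteq X'$; to conclude $T_t \subseteq X$ I should instead run the argument directly with $X$ in place of $X'$, treating $\connb_{G,T^*}(X,z)$ via the min-over-separators formula: take $S_X$ a minimum $(X,z)$-separator avoiding $T^*$ and $S_{T_t}$ a minimum $(T_t, T^* - T_t)$-separator avoiding $T^*$, and carry out the reachability argument from the submodularity proof to bound $|N(R_{X \cap T_t})| + |N(R_{X \cup T_t})|$. So the cleanest version is to re-derive the inequality at the level of separators rather than of the set function, which automatically keeps $X$ (not $X'$) in play.

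For property~\ref{lem:babygomoryhu:prop2}: I would apply the standard ``uncrossing'' argument. Assume $T_s \cap T_t \ne \emptyset$ and neither contains the other. Without loss of generality consider the relative position of $s,t,z$; the key point is that $\connb_{G,T^*}$ is a \emph{symmetric} submodular function on subsets of $T$ in the sense that $\connb_{G,T^*}(A) = \connb_{G,T^*}(T - A)$ (a minimum $(A, T^* - A)$-separator avoiding $T^*$ is also a minimum $(T^* - A, A)$-separator avoiding $T^*$). So I may freely complement either set within $T$. There are two cases depending on whether $z$ lies ``inside'' $T_s \cup T_t$ after suitable complementation. After replacing $T_s$ by $T - T_s$ if necessary so that $s$ and $z$ are on opposite sides in the canonical way, and similarly for $T_t$, one of $T_s \cap T_t$, $T_s \cup T_t$, $T_s \setminus T_t$, $T_t \setminus T_s$ will be a strictly smaller set than $T_s$ (respectively $T_t$) that still witnesses $\connb_{G,T^*}(s,z)$ (respectively $\connb_{G,T^*}(t,z)$), contradicting minimality. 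Concretely: submodularity gives $\connb(T_s \cap T_t) + \connb(T_s \cup T_t) \le \connb(T_s) + \connb(T_t)$ and posimodularity (which follows from submodularity plus symmetry) gives $\connb(T_s \setminus T_t) + \connb(T_t \setminus T_s) \le \connb(T_s) + \connb(T_t)$; choosing the right one of these two inequalities based on which of $T_s \cap T_t$, $T_s \setminus T_t$, $T_t \setminus T_s$ still contains $s$ or $t$ and avoids $z$, and using that $\connb(t,z)$ is the \emph{minimum} over all such sets (so those quantities are all $\ge$ the respective $\connb(\cdot,z)$), forces one of the proper subsets to realize the minimum, contradicting the minimality defining $T_s$ or $T_t$.

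The main obstacle I anticipate is the bookkeeping in property~\ref{lem:babygomoryhu:prop2}: one must handle the placement of the root $z$ carefully, because $T_t$ is defined as a set avoiding $z$, so complementing within $T$ keeps $z$ out, but one has to track on which side $z$ sits relative to the uncrossed pieces to know which submodularity/posimodularity inequality to invoke and which piece inherits the terminal. It is essentially the classical Gomory--Hu uncrossing, so it is routine but needs a clean case split. For property~\ref{lem:babygomoryhu:prop1} the only subtlety, as noted, is to run the inequality directly at the separator level with $X$ rather than first enlarging $X$ to $X'$, so that the conclusion $T_t \subseteq X$ (not merely $T_t \subseteq X'$) comes out; this is handled by redoing the reachability-set argument from the proof of submodularity with $S_X$ a minimum $(X,z)$-separator and $S_{T_t}$ a minimum $(T_t, T^* - T_t)$-separator.
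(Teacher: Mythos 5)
For property~(2), your uncrossing argument coincides with the paper's: submodularity applied to $T_s,T_t$ when one of $s,t$ lies in $T_s\cap T_t$, and posimodularity (equivalently, submodularity applied to $T_t$ and the complement of $T_s$) when $s\in T_s\setminus T_t$ and $t\in T_t\setminus T_s$. The ``canonical complementation'' preamble is confused --- $s$ and $z$ are \emph{always} on opposite sides of $T_s$, so there is nothing to normalize --- but the concrete case split you then give is the right one and matches the paper. One small correction, to both your write-up and the paper's proof: the symmetry should read $\connb_{G,T^*}(A)=\connb_{G,T^*}(T^*\setminus A)$, not $\connb_{G,T^*}(A)=\connb_{G,T^*}(T\setminus A)$; the latter fails in general (an isolated terminal $s$ gives $\connb_{G,T^*}(\{s\})=0$ while $\connb_{G,T^*}(T\setminus\{s\})$ can be positive). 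After this fix the posimodularity step goes through unchanged.

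For property~(1), your detour through $X'$ is exactly the paper's submodularity computation, and you correctly observe that it only yields $T_t\subseteq X'$, not $T_t\subseteq X$. However, the repair you propose --- redo the inequality at the separator level so that $X$ itself stays in play --- cannot succeed, because the statement as literally written, with hypothesis $\connb_{G,T^*}(X,z)=\connb_{G,T^*}(t,z)$, is \emph{false}. Take $G$ the path $z-a-t-b-s$ with $T=\{t,s\}$: then $\connb_{G,T^*}(t,z)=1$ while $\connb_{G,T^*}(\{t\})=2$, so $T_t=\{t,s\}$, yet $X=\{t\}$ satisfies the hypothesis and $X\not\supseteq T_t$. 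The hypothesis the paper's proof actually plugs into submodularity --- and the one available at the sole place property~(1) is invoked, namely with $X=C_t\cap T$ in the proof of \Cref{lem:carving-terminals-with-advice}, where $N(C_t)$ witnesses $\connb_{G,T^*}(C_t\cap T)=\connb_{G,T^*}(t,z)$ --- is the one-argument version $\connb_{G,T^*}(X)=\connb_{G,T^*}(t,z)$. Under that stronger hypothesis the submodularity argument you wrote for $X'$ applies to $X$ directly (no enlargement needed) and gives $T_t\subseteq X$. So the concrete gap in your proposal is the belief that the two-argument hypothesis can be made to work; it cannot, and no amount of re-deriving at the separator level will change that.
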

\begin{proof}
To prove \Cref{lem:babygomoryhu:prop1}, suppose $X \subseteq T$ violates it, hence $X\cap T_t\subsetneq T_t$.
Then by submodularity,
\[\connb_{G,T^*}(X \cap T_t) \le \connb_{G,T^*}(X)+\connb_{G,T^*}(T_t)-\connb_{G,T^*}(T_t \cup X) \le \connb_{G,T^*}(T_t),\]
where $\connb_{G,T^*}(X) \le \connb_{G,T^*}(T_t \cup X)$ follows from $\connb_{G,T^*}(X) = \connb_{G,T^*}(t,z)$.
Therefore, $X \cap T_t$ contradicts the choice of $T_t$.

To prove \Cref{lem:babygomoryhu:prop2}, suppose first that $T_s$ and $T_t$ intersect, neither $T_s \subseteq T_t$ nor $T_t \subseteq T_s$, and $s \in T_s \cap T_t$.
Now by submodularity,
\[\connb_{G,T^*}(T_t \cap T_s) \le \connb_{G,T^*}(T_t) + \connb_{G,T^*}(T_s) - \connb_{G,T^*}(T_t \cup T_s) \le \connb_{G,T^*}(T_s) \]
where $\connb_{G,T^*}(T_t) \le \connb_{G,T^*}(T_t \cup T_s)$ again follows from $\connb_{G,T^*}(T_t) = \connb_{G,T^*}(t,z)$.
Therefore, $T_t \cap T_s$ contradicts the choice of $T_s$.

The remaining case is that $T_s$ and $T_t$ intersect, neither $T_s \subseteq T_t$ nor $T_t \subseteq T_s$, and $s \in T_s - T_t$ and $t \in T_t - T_s$.
For $X \subseteq T$, denote $\overline{X} = T - X$, and note that $\connb_{G,T^*}(X) = \connb_{G,T^*}(\overline{X})$.
Now by submodularity,
\[\connb_{G,T^*}(T_t \cap \overline{T_s}) \le \connb_{G,T^*}(T_t) + \connb_{G,T^*}(T_s) - \connb_{G,T^*}(T_t \cup \overline{T_s}) \le \connb_{G,T^*}(T_t),\]
where $\connb_{G,T^*}(T_s) \le \connb_{G,T^*}(T_t \cup \overline{T_s})$ follows from the facts that $\overline{T_t \cup \overline{T_s}} = T_s - T_t$ and $s \in T_s - T_t$.
Therefore, $T_t \cap \overline{T_s} = T_t - T_s$ contradicts the choice of $T_t$.
\end{proof}

In particular, \Cref{lem:babygomoryhu} implies that sets $T_t$ are unique.

Let $(G,z,T,k)$ be an instance of \TermCarv and $k \in \N$.
By using the notation of \Cref{lem:babygomoryhu}, let us introduce the definition of a \emph{representative set} of $(G,z,T,k)$.
We say that a set $R \subseteq T$ is a representative set of $(G,z,T,k)$ if
\begin{enumerate}[nosep]
\item \label{def:goodset:cond1} all terminals in $R$ are $k$-carvable, i.e., if $t \in R$ then $\connb_{G,T^*}(t,z) < k$; and
\item \label{def:goodset:cond2} for every $k$-carvable terminal $t \in T \setminus R$, there exists a terminal $s \in R$ such that $T_t \subseteq T_s$.
\end{enumerate}

In other words, a representative set is a subset of the $k$-carvable terminals, that keeps for every $k$-carvable terminal $t$ either $t$ itself, or a representative $s$ of $t$ with the property that the minimally-pushed minimum separator that separates $s$ from $z$ also separates $t$ from $z$.
It is useful to observe that the condition $T_t \subseteq T_s$ is equivalent to $t \in T_s$, and also to $\connb_{G,T^*}(s, \{t,z\}) > \connb_{G,T^*}(s,z)$.

We say that a representative set is \emph{minimal} if no subset of it is a representative set.
Note that it follows from \Cref{lem:babygomoryhu} that if $R$ is a minimal representative set, then the sets $T_t$ for $t \in R$ are disjoint, and in fact form a partition of all $k$-carvable terminals.

The algorithm of \Cref{lem:carving-terminals} works in two phases.
In the first phase, we compute a minimal representative set, and in the second phase, we solve the problem assuming a minimal representative set is~given.

For both of these phases, we use the vertex-version of the \emph{Isolating Cuts Lemma} of Li and Panigrahi~\cite{DBLP:conf/focs/LiP20}, combined with the deterministic almost-linear-time algorithm for max-flow of van den Brand, Chen, Kyng, Liu, Peng, Probst Gutenberg, Sachdeva, and Sidford~\cite{DBLP:conf/focs/Brand0PKLGSS23} (see the work of Chen, Kyng, Liu, Peng, Probst Gutenberg, and Sachdeva~\cite{DBLP:conf/focs/ChenKLPGS22} for the earlier randomized version).
A vertex-separator version of the Isolating Cuts Lemma was shown by Li, Nanongkai, Panigrahi, Saranurak, and Yingchareonthawornchai~\cite{DBLP:conf/stoc/LiNPSY21}, and its (very slight) generalization to the setting of undeletable terminals was given by Pettie, Saranurak, and Yin~\cite{DBLP:conf/stoc/PettieSY22}.

\begin{lemma}[Isolating Cuts, \cite{DBLP:conf/focs/LiP20,DBLP:conf/stoc/LiNPSY21,DBLP:conf/stoc/PettieSY22,DBLP:conf/focs/Brand0PKLGSS23}]
\label{lem:isolatingcuts}
There is an algorithm that, given a graph $G$, an independent set $T \subseteq V(G)$ of terminals, and a subset $T' \subseteq T$, in time $\|G\|^{1+o(1)}$ computes a family $\{C_t \colon t \in T'\}$ of connected pairwise non-touching sets, so that for each $t\in T'$,
\begin{enumerate}[nosep]
\item \label{lem:isolatingcuts:cond1} $C_t \cap T' = \{t\}$;
\item \label{lem:isolatingcuts:cond2} $N(C_t)$ is disjoint from $T$; and
\item \label{lem:isolatingcuts:cond3} $|N(C_t)| = \connb_{G,T}(t, T' \setminus \{t\})$.
\end{enumerate}
\end{lemma}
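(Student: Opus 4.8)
The plan is to implement the classical isolating-cuts strategy of Li and Panigrahi~\cite{DBLP:conf/focs/LiP20}, in its vertex-separator incarnation~\cite{DBLP:conf/stoc/LiNPSY21,DBLP:conf/stoc/PettieSY22}, on top of the deterministic almost-linear-time maximum flow algorithm of van den Brand et al.~\cite{DBLP:conf/focs/Brand0PKLGSS23}. First I would assign to the terminals of $T'$ pairwise distinct binary labels of length $b\coloneqq\lceil\log_2\max(|T'|,2)\rceil$, and for each bit position $i\in[b]$ split $T'$ into $A_i$ (label bit $i$ equal to $0$) and $B_i$ (label bit $i$ equal to $1$). For each $i$ I would compute a minimum vertex $(A_i,B_i)$-separator $S_i$ disjoint from $T$; this is a single max-flow call after the standard reduction from vertex separators avoiding undeletable terminals to edge cuts (split every $v\notin T$ into $v^-,v^+$ joined by a unit-capacity edge, route all former edges with infinite capacity, leave vertices of $T$ unsplit so that they have infinite through-capacity, and attach a super-source to $A_i$ and a super-sink to $B_i$), and it runs in time $\|G\|^{1+o(1)}$ by~\cite{DBLP:conf/focs/Brand0PKLGSS23}. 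From the returned max flow I would additionally extract, for every $t$, the source-side-minimal minimum cut $R^i\ni t$ (the vertices reachable from the source in the residual graph), which will be needed in the correctness proof and satisfies $N_G(R^i)=S_i$. Setting $S\coloneqq\bigcup_i S_i$ and letting $R_t$ be the connected component of $t$ in $G-S$, the sets $\{R_t:t\in T'\}$ are pairwise disjoint and have no edges between them, because two distinct terminals differ in some bit and are hence separated by the corresponding $S_i\subseteq S$; moreover each $R_t$ is disjoint from $T$ and satisfies $R_t\cap T'=\{t\}$.

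Having localized each terminal, I would then, for each $t\in T'$, build an auxiliary graph $H_t$ on vertex set $R_t\cup N(R_t)\cup\{b_t\}$, where $b_t$ is a fresh ``sink'' vertex, whose edge set consists of all edges of $G$ with at least one endpoint in $R_t$ together with all edges between $b_t$ and $N(R_t)$. In $H_t$ I would compute, by one more max-flow call, a minimum vertex $(t,b_t)$-separator disjoint from $T$, and let $C_t$ be the connected component of $t$ in $H_t$ minus this separator; the algorithm outputs $\{C_t:t\in T'\}$. Since the $R_t$'s are vertex-disjoint, every edge of $G$ is charged to at most two of the graphs $H_t$, and $\sum_t|N(R_t)|\le\sum_{s\in S}\deg_G(s)=O(\|G\|)$ is a handshake count, so $\sum_t\|H_t\|=O(\|G\|)$ and the total running time of the local max-flow computations is $\|G\|^{1+o(1)}$; together with the $b=n^{o(1)}$ global max-flow calls the whole procedure runs in $\|G\|^{1+o(1)}$ time.

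The bulk of the work is the correctness argument, whose core is an uncrossing step showing $|N_G(C_t)|=\connb_{G,T}(t,T'\setminus\{t\})$; the other four required properties follow easily once one observes $C_t\subseteq R_t$ (any vertex of $N(R_t)$ outside the computed separator is adjacent to $b_t$ and hence on the $b_t$-side, so $C_t$ avoids $N(R_t)$), which in turn gives $N_{H_t}(C_t)=N_G(C_t)$, $C_t\cap T'=\{t\}$, $N_G(C_t)\cap T=\emptyset$, and the pairwise non-touching property from disjointness of the $R_t$'s. The inequality $\connb_{G,T}(t,T'\setminus\{t\})\le|N_G(C_t)|$ is immediate: any $G$-path from $t$ to another terminal of $T'$ must leave $R_t$ through $N(R_t)$ and so its prefix is a $(t,b_t)$-path in $H_t$, whence $N_G(C_t)$ is itself a valid $(t,T'\setminus\{t\})$-separator avoiding $T$. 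For the reverse inequality I would show that some minimum $(t,T'\setminus\{t\})$-separator avoiding $T$ has its $t$-side $C$ contained in $R_t$: starting from any minimum such side $C$, and using that the vertex-neighbourhood function is submodular (which I would prove by the pointwise check $\mathbf{1}[v\in N(A\cap B)]+\mathbf{1}[v\in N(A\cup B)]\le\mathbf{1}[v\in N(A)]+\mathbf{1}[v\in N(B)]$), I would argue that for each bit $i$ with $t\in A_i$, replacing $C$ by the $t$-component of $C\cap R^i$ does not increase the boundary, because $N_G(C\cup R^i)$ is an $(A_i,B_i)$-separator avoiding $T$ and hence has size at least $|N_G(R^i)|=|S_i|$, so submodularity yields $|N_G(C\cap R^i)|\le|N_G(C)|$; the symmetric case $t\in B_i$ is analogous. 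Iterating over all $i$ produces a minimum $t$-side that avoids every $S_i$, hence lies in the component $R_t$ of $t$ in $G-S$; such a $C$ induces a valid $(t,b_t)$-separator of the same size in $H_t$, so $|N_G(C_t)|\le|N_G(C)|=\connb_{G,T}(t,T'\setminus\{t\})$. The main obstacle is making this uncrossing interact cleanly with the ``avoid $T$'' side constraint and with the source-side-minimality bookkeeping of the $S_i$; once the submodularity of $N(\cdot)$ and the choice of source-side-minimal cuts are in place, the remaining verifications are routine.
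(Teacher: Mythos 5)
The paper does not give its own proof of \Cref{lem:isolatingcuts}; it is stated as a black-box citation to \cite{DBLP:conf/focs/LiP20,DBLP:conf/stoc/LiNPSY21,DBLP:conf/stoc/PettieSY22,DBLP:conf/focs/Brand0PKLGSS23}, so there is no internal proof to compare against. Your reconstruction is essentially the standard isolating-cuts argument in its vertex-separator, undeletable-terminal form, and it is correct: binary labelling of $T'$, $O(\log|T'|)$ global max-flow calls to produce the separators $S_i$, taking $R_t$ to be the component of $t$ in $G-\bigcup_i S_i$, then one local max-flow per $R_t$ against a contracted sink, with the total local work charged to $G$ by disjointness of the $R_t$'s. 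The submodularity-based uncrossing in the last paragraph is the right way to establish $|N_G(C_t)|\le\connb_{G,T}(t,T'\setminus\{t\})$, and your pointwise inequality for $N(\cdot)$ is valid.

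Two small points worth tidying. First, the sentence ``each $R_t$ is disjoint from $T$'' is false as written: $t\in R_t\cap T$, and nothing prevents $R_t$ from containing terminals of $T\setminus T'$ (the cuts $S_i$ only separate the terminals of $T'$ from one another). Fortunately this claim is never used; what matters is $R_t\cap T'=\{t\}$, that $S$ avoids $T$, and that the local separator in $H_t$ is constrained to avoid $T$, all of which you have. Second, the source-side-minimality of $R^i$ is not actually needed for the uncrossing $|N_G(C\cap R^i)|\le|N_G(C)|$ to go through; any $T$-avoiding minimum $(A_i,B_i)$-cut side would do, since the argument only uses $|N_G(C\cup R^i)|\ge|S_i|$ and submodularity. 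Choosing the residual-reachable side is a clean canonical choice, but it does not create the ``bookkeeping'' obstacle you flag at the end. With these minor adjustments the write-up would be a faithful self-contained proof of the cited lemma.
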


We then use \Cref{lem:isolatingcuts} to prove \Cref{lem:carving-terminals} under the additional assumption that a minimal representative set is given.

\begin{lemma}
\label{lem:carving-terminals-with-advice}
There is a $\|G\|^{1+o(1)}$-time algorithm that, given an instance $(G,z,T,k)$ of \TermCarv and a minimal representative set $R \subseteq T$, outputs a family $\chipfamily$ of terminal $k$-chips so that
\begin{itemize}[nosep]
\item the chips of $\chipfamily$ are pairwise non-touching; and
\item every $k$-carvable terminal is contained in $\bigcup \chipfamily$.
\end{itemize}
\end{lemma}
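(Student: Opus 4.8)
The plan is to produce, for each $t\in R$, a terminal $k$-chip that contains the set $T_t$ associated with $t$ in \Cref{lem:babygomoryhu}, and to make the chips for different $t$ pairwise non-touching by invoking the Isolating Cuts Lemma (\Cref{lem:isolatingcuts}). I begin by recording the combinatorial facts about a minimal representative set. Since $R$ is minimal, \Cref{lem:babygomoryhu} (part (2)) forces $T_t\cap R=\{t\}$ for every $t\in R$: if some $s\in R\setminus\{t\}$ belonged to $T_t$ then $T_s\subseteq T_t$ or $T_t\subseteq T_s$, and in either case $s$ or $t$ could be removed from $R$ while keeping it a representative set, contradicting minimality. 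The same dichotomy then shows the sets $\{T_t:t\in R\}$ are pairwise disjoint, and by the definition of a representative set every $k$-carvable terminal lies in $\bigcup_{t\in R}T_t$. The last ingredient I need is that, for every $t\in R$,
\[\connb_{G,T^*}\!\big(t,\ (R\setminus\{t\})\cup\{z\}\big)\ =\ \connb_{G,T^*}(t,z)\ <\ k :\]
indeed, a minimum separator realizing $\connb_{G,T^*}(T_t)=\connb_{G,T^*}(t,z)$ is disjoint from $T^*$ and separates $T_t$ (hence $t$) from $(T\setminus T_t)\cup\{z\}\supseteq(R\setminus\{t\})\cup\{z\}$, giving ``$\le$'', while ``$\ge$'' is trivial.

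Next I would apply \Cref{lem:isolatingcuts} with the independent set $T^*=T\cup\{z\}$ and the subset $T'=R\cup\{z\}$, obtaining in time $\|G\|^{1+o(1)}$ a family of connected, pairwise non-touching sets $\{C_v:v\in R\cup\{z\}\}$ with $C_v\cap T'=\{v\}$, $N(C_v)\cap T^*=\emptyset$, and $|N(C_v)|=\connb_{G,T^*}(v,T'\setminus\{v\})$. I discard $C_z$ and set $\chipfamily=\{C_t:t\in R\}$. For each $t\in R$ the displayed equality gives $|N(C_t)|<k$; together with $z\notin C_t$ (as $z\in T'$), $t\in C_t\cap T$, connectivity of $C_t$, and $N(C_t)\cap T^*=\emptyset$, this makes each $C_t$ a terminal $k$-chip, and the $C_t$ are pairwise non-touching by the lemma.

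What remains is coverage, i.e.\ that $T_t\subseteq C_t$ for each $t\in R$, for then $\bigcup\chipfamily\supseteq\bigcup_{t\in R}T_t$ contains every $k$-carvable terminal. This is where the care lies, since \Cref{lem:isolatingcuts} only guarantees that $C_t$ is \emph{some} minimum $(t,T'\setminus\{t\})$-separator side, whereas I need the one pushed as far from $t$ as possible. Among all sets $C$ with $t\in C$, $C\cap T'=\{t\}$, $N(C)\cap T^*=\emptyset$ and $|N(C)|=\connb_{G,T^*}(t,T'\setminus\{t\})$, submodularity of the vertex-boundary function $S\mapsto|N(S)|$ shows that the union of two such sets is again of this form — the boundary bound survives because the intersection still contains $t$ and misses $z$, so its boundary is itself a $(t,T'\setminus\{t\})$-separator of size at least $\connb_{G,T^*}(t,z)$ — hence there is a unique maximal such set $\widehat C_t$. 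Moreover the union of the connected components of $G-S_t$ meeting $T_t$, where $S_t$ realizes $\connb_{G,T^*}(T_t)$, is a set of exactly this form, so $T_t\subseteq\widehat C_t$. I would therefore strengthen the call to \Cref{lem:isolatingcuts} to output the maximal choice $C_t=\widehat C_t$; this is a local modification of the isolating-cuts routine — after it isolates the component of $t$ in its auxiliary separator-deleted graph, one takes the $t$-side of the \emph{maximal} minimum cut inside that component, read off from the residual graph of the single max-flow computation performed there by the deterministic algorithm of~\cite{DBLP:conf/focs/Brand0PKLGSS23}. The main obstacle I expect is to verify that switching to these maximal sides preserves the non-touching property; I anticipate this will follow from the fact that distinct $s,t\in R$ are already separated by the isolating-cuts separator $S$, so $\widehat C_t$ and $\widehat C_s$ are confined to distinct connected components of $G-S$, together with a short uncrossing argument (again submodularity, using $T_s\cap T_t=\emptyset$) showing the maximal sides cannot escape their components. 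Once coverage and non-touching are both established, the running time is $\|G\|^{1+o(1)}$ for the isolating-cuts call plus $\Oh(\|G\|)$ to extract the maximal sides and assemble $\chipfamily$.
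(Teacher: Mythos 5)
Your plan, using \Cref{lem:isolatingcuts} with $T'=R\cup\{z\}$ and discarding $C_z$, is the right one, and your derivation that each $C_t$ is a terminal $k$-chip (via the identity $\connb_{G,T^*}(t,(R\cup\{z\})\setminus\{t\})=\connb_{G,T^*}(t,z)<k$, which follows from disjointness of the $T_t$ for a minimal representative set) is exactly how the paper argues. Where you go astray is in the coverage step: you believe the set $C_t$ returned by the Isolating Cuts Lemma might fail to contain $T_t$ and that you must modify the routine to output the ``maximal'' side. This is unnecessary, and the modification you sketch is where the real risk in your write-up sits --- you yourself flag that you have not verified that passing to maximal sides preserves the non-touching property, and indeed the maximal set in $G$ with the stated boundary constraints need not stay inside the connected component of $G$ minus the isolating-cuts separator, so the containment $T_t\subseteq\widehat C_t$ and the non-touching property are in tension in a way your sketch does not resolve.

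The missing observation is that $T_t\subseteq C_t$ holds automatically for \emph{any} set $C_t$ satisfying the conclusions of \Cref{lem:isolatingcuts}, with no modification. Set $X\coloneqq C_t\cap T$. Then $t\in X$ and $z\notin X$ (since $C_t\cap T'=\{t\}$). Because $N(C_t)\cap T^*=\emptyset$ and $z\notin N[C_t]$, the set $N(C_t)$ is an $(X,z)$-separator disjoint from $T^*$, so $\connb_{G,T^*}(X,z)\le|N(C_t)|=\connb_{G,T^*}(t,z)$; the reverse inequality holds since $t\in X$. Thus $X$ satisfies the hypothesis of \Cref{lem:babygomoryhu}\ref{lem:babygomoryhu:prop1}, which yields $T_t\subseteq X\subseteq C_t$ outright. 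Combined with the fact (which you correctly note) that every $k$-carvable terminal lies in $\bigcup_{t\in R}T_t$, this gives the coverage claim for the unmodified output of \Cref{lem:isolatingcuts}, and the non-touching property is then immediate from that lemma rather than something you need to re-establish. So the fix is not to strengthen the subroutine but to apply \Cref{lem:babygomoryhu}\ref{lem:babygomoryhu:prop1} to $C_t\cap T$.
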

\begin{proof}
We apply \Cref{lem:isolatingcuts} with $G$, $T^*$, and $T'=R \cup \{z\} \subseteq T^*$.
Let $\chipfamily' = \{C_t \colon t \in R \cup \{z\}\}$ be the family returned by it.
We return $\chipfamily = \chipfamily' \setminus \{C_z\}$.
It remains to prove that each $C \in \chipfamily$ is a terminal $k$-chip and every $k$-carvable terminal is contained in $\bigcup \chipfamily$.

Let us define $T_t$ for $t \in T$ as in \Cref{lem:babygomoryhu}.
By the discussion above, because $R$ is a minimal representative set, the sets $T_t$ for $t \in R$ are disjoint.
It follows that $\connb_{G,T^*}(t, (R \cup \{z\}) \setminus \{t\}) = \connb_{G,T^*}(t,z)$ for all $t \in T'$.
Therefore, $|N(C_t)| = \connb_{G,T^*}(t,z) < k$, implying that $C_t$ is a terminal $k$-chip.
This also implies that $T_t \subseteq C_t$, implying that $\bigcup \chipfamily$ contains all $k$-carvable terminals.
\end{proof}

Then the remaining step is to compute a minimal representative set.
We design a divide-and-conquer algorithm for this.
The main ingredient is a theorem of Saranurak and Yingchareonthawornchai~\cite[Theorem 1.2]{DBLP:conf/focs/SaranurakY22} that we use to decrease the size of $G$ to be proportional to the number of terminals $|T|$.

Let us introduce one more connectivity definition before stating their theorem.
Let $G$ be a graph and $A,B \subseteq V(G)$.
By $\conna_G(A,B)$ we denote the size of a smallest $(A,B)$-separator.
Note that $\conna_G(A,B) \le \min(|A|,|B|)$, because we are allowed to delete vertices from $A$ and $B$.

Now their theorem is as follows.

\begin{theorem}[{\cite[Theorem 1.2]{DBLP:conf/focs/SaranurakY22}}]
\label{the:sparsifierorig}
There is an algorithm that, given a graph $G$, a set $W \subseteq V(G)$, and an integer $k \in \N$, computes in time $\Oh_k(\|G\|^{1+o(1)})$ a graph $H$, such that
\begin{itemize}[nosep]
\item $W \subseteq V(H)$;
\item for any $A,B \subseteq W$, it holds that $\min(\conna_G(A,B), k) = \min(\conna_H(A,B),k)$; and
\item $\|H\| \le \Oh_k(|W|)$.
\end{itemize}
\end{theorem}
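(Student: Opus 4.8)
The statement is quoted from Saranurak and Yingchareonthawornchai~\cite{DBLP:conf/focs/SaranurakY22}, so we only outline the plan one would follow to reprove it. Call a graph $H$ with $W\subseteq V(H)$ a \emph{$(W,k)$-mimicking network} of $G$ if $\min(\conna_G(A,B),k)=\min(\conna_H(A,B),k)$ for all $A,B\subseteq W$. The argument splits into a combinatorial part — that a $(W,k)$-mimicking network of size $\Oh_k(|W|)$ exists — and an algorithmic part — that one can be produced in almost-linear time; the latter is the real content of~\cite{DBLP:conf/focs/SaranurakY22}.

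For existence, the plan is a recursive decomposition along small balanced vertex separators. Say that $S\subseteq V(G)$ with $|S|<k$ is a \emph{balanced $W$-separator} if every connected component of $G-S$ contains at most $\tfrac23|W|$ vertices of $W$. If such an $S$ exists, one would group the components of $G-S$ into two parts $A_0,B_0$, each with at most $\tfrac23|W|$ terminals, set $A\coloneqq A_0\cup S$ and $B\coloneqq B_0\cup S$, recurse on $G[A]$ with terminal set $(W\cap A)\cup S$ and on $G[B]$ with $(W\cap B)\cup S$, and glue the two recursively produced networks $H_A,H_B$ by identifying their copies of $S$. The step that needs care is correctness of the gluing: one must show $H\coloneqq H_A\cup_S H_B$ is a $(W,k)$-mimicking network of $G$ whenever $H_A,H_B$ are mimicking networks of $G[A],G[B]$. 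This follows by submodularity of $\conna$: a minimum $(A',B')$-separator of value $<k$ in $G$ can be uncrossed with $S$, splitting into an $A$-side part and a $B$-side part that are faithfully represented by $H_A$ and $H_B$ respectively (and if $\conna_G(A',B')\geq k$ the symmetric argument gives $\conna_H(A',B')\geq k$). Since $|S|<k$, the terminal set grows by less than $k$ per recursion level while the terminal count drops by a constant factor, so the recursion has depth $\Oh_k(\log|W|)$ and the total size of all leaf instances, hence of $H$, is $\Oh_k(|W|)$ — provided one can bound the size in the base cases, namely when $|W|$ is $\Oh(k)$ or no balanced $W$-separator of order $<k$ exists. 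For the base case one would invoke a kernelization valid for arbitrary graphs — for instance the representative-families / matroid cut-covering argument of Kratsch and Wahlström, which replaces any such instance by an equivalent one on $\mathrm{poly}(k)\cdot|W|$ vertices. (In fact this kernelization already gives a $\mathrm{poly}(k)\cdot|W|$-size mimicking network for arbitrary $G$; the balanced-separator recursion is what one needs to make the \emph{algorithm} fast.)

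The algorithmic part is the hard one. The plan is to make the recursion above constructive while keeping the work per level near-linear: to find a balanced $W$-separator of order $<k$ (or certify none exists) one would use the almost-linear-time maximum-flow algorithm of van den Brand et al.~\cite{DBLP:conf/focs/Brand0PKLGSS23} together with the vertex version of the Isolating Cuts Lemma, \Cref{lem:isolatingcuts}, and then use an expander-decomposition / local-cut routine to handle the ``well-linked'' base case, where the small mimicking network is assembled directly in near-linear time. Because the recursion has depth $\Oh_k(\log|W|)$ and the instances at a fixed level have total size $\Oh(\|G\|)$ with near-linear work spent on each, the total running time would be $\Oh_k(\|G\|^{1+o(1)})$. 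The main obstacle is precisely the well-linked base case: one must deterministically, and in near-linear time, produce an $\Oh_k(|W|)$-size mimicking network of a graph that has no sparse balanced vertex cut — which is exactly where deterministic isolating cuts, local flow, and expander decompositions are unavoidable.
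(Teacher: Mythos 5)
You correctly identify this as a black-box import: the paper does not prove \Cref{the:sparsifierorig} but simply cites it as Theorem~1.2 of Saranurak and Yingchareonthawornchai~\cite{DBLP:conf/focs/SaranurakY22}, so there is no proof in the paper to compare against and deferring to the reference is exactly what the authors do. Your sketch of a balanced-separator recursion with submodular uncrossing and an expander-style base case is a reasonable high-level reconstruction of the cited work's philosophy, though you should not take it as an accurate account of the internals of~\cite{DBLP:conf/focs/SaranurakY22} — in particular, the Kratsch--Wahl\-str\"om matroid kernelization you invoke for the base case is a separate, existence-only route that is not what makes their algorithm run in almost-linear time, so it is best kept as the motivating heuristic you present it as rather than as a step of the actual argument.
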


We immediately observe that \Cref{the:sparsifierorig} translates from the setting of $\conna_G$ to the setting of $\connb_{G,T}$.

\begin{lemma}
\label{lem:terminalsparsifier}
There is an algorithm that, given a graph $G$, an independent set $T \subseteq V(G)$, a subset $T' \subseteq T$, and an integer $k \in \N$, computes in time $\Oh_k(\|G\|^{1+o(1)})$ a graph $H$, such that 
\begin{itemize}[nosep]
\item $T' \subseteq V(H)$ is an independent set in $H$;
\item for any disjoint $A,B \subseteq T'$, it holds that $\min(\connb_{G,T}(A,B), k) = \min(\connb_{H,T'}(A,B),k)$; and 
\item $\|H\| \le \Oh_k(|T'|)$.
\end{itemize}
\end{lemma}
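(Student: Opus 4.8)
The plan is to derive \Cref{lem:terminalsparsifier} from \Cref{the:sparsifierorig} by a standard gadget reduction that converts vertex-capacitated connectivity with undeletable terminals into plain vertex connectivity. First I would build an auxiliary graph $G^{\mathrm{gad}}$ from $G$ as follows: split every vertex $v \in V(G) - T$ into an ``in-copy'' $v^{\mathrm{in}}$ and an ``out-copy'' $v^{\mathrm{out}}$ joined by $k$ internally disjoint length-two paths (equivalently, connect them through $k$ fresh degree-two vertices), so that any $(A,B)$-separator is forced to route through the split gadget rather than delete it cheaply; for each terminal $t \in T$, keep a single copy $t$ but attach $k$ pendant ``shield'' vertices to it (or, more simply, make $t$ have high connectivity to the rest by replacing each incident edge $tv$ by $k$ parallel paths) so that no minimum separator of size $<k$ will ever use $t$. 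Each original edge $uv$ becomes an edge from the out-copy of $u$ to the in-copy of $v$ and vice versa. The point of these gadgets is the elementary fact that, for $A,B \subseteq T$ disjoint, a smallest $(A,B)$-vertex-separator in $G^{\mathrm{gad}}$ of size less than $k$ never touches a terminal copy nor a shield/split-path vertex, hence corresponds exactly to a set of original non-terminal vertices separating $A$ from $B$ in $G$; this yields $\min(\conna_{G^{\mathrm{gad}}}(A,B),k) = \min(\connb_{G,T}(A,B),k)$ for all disjoint $A,B \subseteq T$. Note $\|G^{\mathrm{gad}}\| = \Oh_k(\|G\|)$.

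Next I would apply \Cref{the:sparsifierorig} to $G^{\mathrm{gad}}$ with the interface set $W = T'$ (note $T'$ is already an independent set in $G^{\mathrm{gad}}$, since $T$ is independent in $G$ and terminals get no new edges among themselves), obtaining a graph $H'$ with $T' \subseteq V(H')$, $\|H'\| \le \Oh_k(|T'|)$, and $\min(\conna_{G^{\mathrm{gad}}}(A,B),k) = \min(\conna_{H'}(A,B),k)$ for all $A,B \subseteq T'$. Then I need to undo the gadget on the sparsifier side: I would define $H$ to be $H'$ together with, for each terminal $t \in T'$, the $k$ shield pendants reattached exactly as in the construction above — but observe that actually it is cleaner to argue purely at the level of $\connb$, since \Cref{the:sparsifierorig} already preserves the relevant quantity. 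Concretely, set $H \coloneqq H'$ and check that $\min(\connb_{H,T'}(A,B),k)$ equals $\min(\conna_{H'}(A,B),k)$: here one must be slightly careful because in $H'$ the vertices of $T'$ are deletable, whereas $\connb_{H,T'}$ forbids deleting them. This is handled by noting that a smallest $(A,B)$-separator in $H'$ of size $<k$ that deletes some vertex of $T'$ can be rerouted: if it uses $t \in T' - (A\cup B)$, then since in $G^{\mathrm{gad}}$ the terminal $t$ had connectivity $\ge k$ to its neighborhood and \Cref{the:sparsifierorig} preserves this (applied with $A=\{t\}$, $B = $ its ``local'' cut), replacing $t$ by a bounded set of non-terminal neighbors of $t$ yields a separator of size still $<k$ avoiding $T'$; iterating gives an $(A,B)$-separator in $H$ disjoint from $T'$ of the same cardinality. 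Combining the three equalities gives the desired $\min(\connb_{G,T}(A,B),k) = \min(\connb_{H,T'}(A,B),k)$, and the size and independence bounds are inherited, with the running time $\Oh_k(\|G\|^{1+o(1)})$ dominated by the call to \Cref{the:sparsifierorig}.

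The main obstacle I anticipate is getting the gadget bookkeeping exactly right so that the equality $\min(\connb_{G,T}(A,B),k) = \min(\conna_{G^{\mathrm{gad}}}(A,B),k)$ holds on the nose for all disjoint $A,B \subseteq T'$ (not just up to additive slack), and in particular ensuring that small separators in the gadget graph are genuinely forced to avoid terminal copies and the internal gadget vertices. The clean way to do this is to make every ``protected'' vertex (terminal copies, split-path vertices, shield vertices) have vertex-connectivity at least $k$ to both sides locally, so that any separator of size $<k$ that touched such a vertex could be strictly improved — this is a short submodularity/uncrossing argument of the same flavor as the proof of \Cref{lem:babygomoryhu}. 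A secondary, purely cosmetic issue is that \Cref{the:sparsifierorig} is stated for $\conna$ on general graphs and does not promise to keep $T'$ independent, but since we only ever add the sparsifier's own structure among non-interface vertices and $T' \subseteq W$ is untouched as a vertex set, independence of $T'$ in $H$ is immediate from independence in the input; if one is worried, one can additionally replace any accidental edge between two terminals in $H'$ by a two-edge path through a fresh vertex, which changes $\connb$ by at most the relevant gadget amount and can be absorbed into the $\Oh_k(\cdot)$ bounds. I would close by remarking that \Cref{lem:terminalsparsifier} is exactly the ingredient needed to shrink $G$ to size $\Oh_k(|T'|)$ before recursing, which is what powers the divide-and-conquer for computing a minimal representative set.
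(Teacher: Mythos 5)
Your plan has the right skeleton --- build a gadget graph, invoke \Cref{the:sparsifierorig}, then undo the gadget --- which is exactly what the paper does, but the gadget you describe does not model $\connb_{G,T}$ correctly, and I think both of its two pieces fail.

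First, splitting each non-terminal $v$ into $v^{\mathrm{in}},v^{\mathrm{out}}$ joined by $k$ internally disjoint paths makes $v$ \emph{expensive} to delete, which is backwards: in the quantity $\connb_{G,T}(A,B)$, non-terminal vertices are precisely the ones that should remain deletable at unit cost. Concretely, in the path $a - u - b$ with $a,b \in T$ and $u \notin T$ we have $\connb_{G,T}(a,b)=1$, but in your $G^{\mathrm{gad}}$ the terminal $a$ is adjacent to both $u^{\mathrm{in}}$ and $u^{\mathrm{out}}$ (as is $b$, since edges are added in both directions), so any $(a,b)$-vertex-separator must contain both copies of $u$ and $\conna_{G^{\mathrm{gad}}}(a,b) \ge 2$. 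Second, attaching $k$ pendant ``shield'' vertices to a terminal $t$ does nothing to prevent a separator from deleting $t$: a pendant lies on no $(A,B)$-path, so a minimum separator never uses a pendant, and deleting $t$ alone (at cost $1$) still disconnects everything that routed through $t$. The same objection applies to the variant where each incident edge $tv$ is replaced by $k$ parallel paths --- raising the local edge-connectivity of $t$ to its neighborhood does not make the single vertex $t$ undeletable. So the claimed identity $\min(\connb_{G,T}(A,B),k) = \min(\conna_{G^{\mathrm{gad}}}(A,B),k)$ fails already in the smallest examples, and the subsequent ``rerouting'' argument on the sparsifier side inherits the gap.

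The fix is a simpler and in some sense opposite gadget, which is what the paper uses: leave all non-terminal vertices untouched, and replace each terminal $v \in T$ by a clique $W_v$ of $k$ fresh vertices, all adjacent to $N_G(v)$. Since $|W_v| = k$ and the vertices of $W_v$ are pairwise true twins, a separator of size $<k$ that contains some but not all of $W_v$ can drop the $W_v$-part without ceasing to separate, so minimum separators of size $<k$ are disjoint from every $W_v$; this yields exactly $\min(\connb_{G,T}(A,B),k) = \min\bigl(\conna_{G'}\bigl(\bigcup_{v\in A}W_v,\bigcup_{v\in B}W_v\bigr),k\bigr)$. One then applies \Cref{the:sparsifierorig} to $G'$ with interface $W = \bigcup_{v\in T'} W_v$ and, rather than setting $H := H'$ and arguing about rerouting small separators off $T'$ in $H'$ (which is delicate precisely because \Cref{the:sparsifierorig} gives no structural guarantee on $H'[W]$), one re-attaches each $v \in T'$ to $H'$ as a new vertex adjacent only to $W_v$. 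Then a $\connb_{H,T'}$-separator, which must avoid $T'$, corresponds directly to a $\conna_{H'}$-separator between the relevant unions of cliques, and all three of the stated properties follow. This sidesteps entirely the two issues you flag at the end (rerouting around $T'$, and edges among $T'$ in the sparsifier), because $v \in T'$ is a leaf-like vertex whose only neighbors are $W_v$.
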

\begin{proof}
We construct from $G$ a graph $G'$ by replacing each $v \in T$ with a clique $W_v$ of size $k$  whose all vertices are adjacent to all vertices in $N_G(v)$.
Note that $\|G'\| \le k^{\Oh(1)} \|G\|$ and for all disjoint sets $A,B \subseteq T$ it holds that 
\begin{equation}
\label{lem:terminalsparsifier:eq1}
\min(\connb_{G,T}(A,B),k) = \min\left(\conna_{G'}\left(\bigcup_{v \in A} W_v, \bigcup_{v \in B} W_v\right),k\right).
\end{equation}
We apply \Cref{the:sparsifierorig} with the graph $G'$, the set $W_{T'} = \bigcup_{v \in T'} W_v$, and the parameter $k$, and obtain a graph $H'$ with $\|H'\| \le \Oh_k(|T'|)$, so that if $A,B \subseteq T'$, then \eqref{lem:terminalsparsifier:eq1} holds also with $G'$ replaced by $H'$.
Then we construct the graph $H$ by adding each $v \in T'$ to $H'$ as a vertex adjacent to the set $W_v$.
Clearly, $H$ satisfies the required properties.
\end{proof}

We then use \Cref{lem:terminalsparsifier} to give an algorithm for computing a minimal representative set.

\begin{lemma}
\label{lem:computereps}
There is an algorithm that, given an instance $(G,z,T,k)$ of \TermCarv and an integer~$k$, in time $\Oh_k(\|G\|^{1+o(1)})$ computes a minimal representative set of $(G,z,T,k)$.
\end{lemma}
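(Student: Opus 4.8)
The plan is to compute a minimal representative set of $(G,z,T,k)$ by a recursive procedure that, in essence, exposes the laminar family $\{T_t : t \text{ is $k$-carvable}\}$ from \Cref{lem:babygomoryhu} far enough to identify its maximal members; a minimal representative set then consists of exactly one terminal $s$ with $T_s$ equal to each maximal member. I would begin with a preprocessing step that bounds the size of the graph: apply the mimicking sparsifier of \Cref{lem:terminalsparsifier} with the independent set $T^* = T\cup\{z\}$ and $T' = T^*$, obtaining a graph $H$ with $\|H\| = \Oh_k(|T|)$ on which $\min(\connb_{G,T^*}(A,B),k) = \min(\connb_{H,T^*}(A,B),k)$ for all $A,B\subseteq T^*$. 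Since a terminal $t$ is $k$-carvable iff $\connb_{G,T^*}(t,z)<k$, and the containment $T_t\subseteq T_s$ is equivalent to $\connb_{G,T^*}(s,\{t,z\}) > \connb_{G,T^*}(s,z)$ — a comparison that only involves values truncated at $k$ — the sets $T_t$ for $t\in T$, and hence the notion of minimal representative set, are identical in $H$ and in $G$. So we may assume from now on that $\|G\| = \Oh_k(|T|)$.

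The core is a divide-and-conquer recursion computing a minimal representative set of the current instance, essentially the bottom levels of a Gomory–Hu-type tree for the element-connectivity function rooted at $z$. Its running time will be controlled by ensuring (i) that the recursion tree has depth $\Oh(\log|T|)$, and (ii) that the total size of all subinstances at any fixed level of the recursion is $\Oh_k(\|G\|)$. We split the set of active terminals $T$ into two roughly equal halves $T_1,T_2$. Following the Gomory–Hu scheme, for each half $T_i$ we pass to a subinstance on a graph $G_i$ obtained by contracting into a single new terminal the part of the graph lying on the far side of a suitable separation, chosen so that cuts among $T_i\cup\{z\}$ (truncated at $k$) are preserved exactly; we then re-apply \Cref{lem:terminalsparsifier} to bring $\|G_i\|$ down to $\Oh_k(|T_i|)$. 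Computing, at a single recursion level, all the separations needed for all the subinstances at that level is done with one call to the isolating-cuts algorithm of \Cref{lem:isolatingcuts} (which internally uses the deterministic almost-linear-time max-flow of \cite{DBLP:conf/focs/Brand0PKLGSS23}): on the active terminal set together with $z$ it produces pairwise non-touching connected $z$-avoiding regions $C_t$ with $|N(C_t)| = \connb_{G,T^*}(t,(\,\cdot\,)\setminus\{t\})$, and these regions carve the graph into the pieces handed to the two halves. Since those pieces are essentially disjoint, the graph sizes add up across a level to $\Oh_k(\|G\|)$ (up to the one extra contracted vertex per subinstance), and since each level halves the number of active terminals the depth is $\Oh(\log|T|)$; hence the recursion runs in time $\Oh_k(\|G\|^{1+o(1)})$.

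It remains to combine, and here the laminar structure from \Cref{lem:babygomoryhu} makes the bookkeeping clean. Given minimal representative sets $R_1\subseteq T_1$ and $R_2\subseteq T_2$ returned by the two recursive calls, the union $R_1\cup R_2$ is already a representative set of $(G,z,T,k)$: condition~(1) is inherited, and for condition~(2) any $k$-carvable $t\notin R_1\cup R_2$ lies in some $T_i$ and is covered by an $s\in R_i$. To prune $R_1\cup R_2$ to a minimal representative set, observe that being a representative set amounts to the sets $\{T_s : s\in R\}$ having union equal to the set of all $k$-carvable terminals, and that by \Cref{lem:babygomoryhu} the family $\{T_s : s\in R_1\cup R_2\}$ is laminar; the unique inclusion-minimal subfamily with the same union is the family of its maximal members, so $R$ is obtained by keeping one representative per maximal member. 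The maximal members are identified by one more application of \Cref{lem:isolatingcuts} to $R_1\cup R_2\cup\{z\}$ together with a few additional max-flow computations, and one checks that the resulting $R$ is not just a representative set but strictly minimal, using that distinct maximal members are disjoint. The step I expect to be the main obstacle is the one in the middle paragraph: organizing the contractions and the per-level isolating-cuts call so that the cut function of every subinstance stays faithful to the original instance on the terminals that survive there — which is what makes the recursively computed $R_i$ meaningful for the global definition of $T_t$ — while simultaneously forcing the graph to shrink at a geometric rate, so that the recursion depth is $\Oh_k(\log n)$ and the work telescopes; this is precisely where the deterministic max-flow of \cite{DBLP:conf/focs/Brand0PKLGSS23} inside \Cref{lem:isolatingcuts} and the mimicking-network construction of \Cref{lem:terminalsparsifier} (from \cite{DBLP:conf/focs/SaranurakY22}) do the heavy lifting.
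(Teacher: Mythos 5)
Your high-level plan matches the paper's: recurse on halves of the terminal set, use the mimicking sparsifier of \Cref{lem:terminalsparsifier} to control instance sizes, and combine by recovering the sets $T_t$ with \Cref{lem:isolatingcuts} and pruning to the maximal members of the laminar family from \Cref{lem:babygomoryhu}. But the middle step, which you yourself flag as ``the main obstacle,'' has a gap that the paper avoids entirely. There is no need for Gomory--Hu-style contractions followed by re-sparsification: the paper simply calls \Cref{lem:terminalsparsifier} with $G$, the full independent set $T^*$, the subset $T'=T_i^*=T_i\cup\{z\}$, and $k$, obtaining directly a graph $H_i$ with $\|H_i\|\le\Oh_k(|T_i|)$ and $\min(\connb_{G,T^*}(A,B),k)=\min(\connb_{H_i,T_i^*}(A,B),k)$ for all disjoint $A,B\subseteq T_i^*$. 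Since $k$-carvability and the containments $T_t\subseteq T_s$ for $s,t\in T_i$ are determined by truncated-at-$k$ values of $\connb_{G,T^*}$ restricted to $T_i^*$, one can recurse on $(H_i,z,T_i,k)$ directly, and the returned $R_i$ is meaningful for the global instance by construction. Your separation-and-contraction step is unspecified, adds no power, and makes the faithfulness question you worry about harder rather than easier.

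There is a second, smaller gap in the combination step. Applying \Cref{lem:isolatingcuts} once to $R_1\cup R_2\cup\{z\}$ does not in general recover the values $\connb_{G,T^*}(t,z)$: for $s\in R_1$ and $t\in R_2$ it can happen that $T_s\subsetneq T_t$, hence $s\in T_t$ and $\connb_{G,T^*}\bigl(t,(R_1\cup R_2\cup\{z\})\setminus\{t\}\bigr) > \connb_{G,T^*}(t,z)$, so the isolating region for $t$ will be too small. The disjointness of $\{T_t\}_{t\in R_i}$ holds only within one half, because each $R_i$ is a \emph{minimal} representative set of its own subinstance. The paper therefore applies \Cref{lem:isolatingcuts} twice, to $R_1\cup\{z\}$ and $R_2\cup\{z\}$ separately, then pushes each region inward with Ford--Fulkerson to compute $T_t$ exactly, sorts the $T_t$ by size, and greedily keeps one representative per uncovered maximal member. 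Your ``a few additional max-flow computations'' gestures at a fix but does not give one.
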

\begin{proof}
We apply a divide-and-conquer scheme on the terminals, using \Cref{lem:terminalsparsifier} to keep the size of the graph proportional to the number of terminals.
The base case is $|T|=1$, in which case we can use the Ford-Fulkerson algorithm to test if $\connb_{G,T^*}(T,z) < k$ in time $\Oh_k(\|G\|)$, and return $T$ if yes and $\emptyset$ if no.

Suppose then $|T| \ge 2$.
Let $T_1 \cup T_2 = T$ be a balanced partition of $T$, i.e., with $||T_1|-|T_2|| \le 1$, and let $T_1^* = T_1 \cup \{z\}$ and $T_2^* = T_2 \cup \{z\}$.
We apply \Cref{lem:terminalsparsifier} for each $i \in \{1,2\}$ with $G$, $T^*$, $T^*_i$, and $k$ to compute in time $\Oh_k(\|G\|^{1+o(1)})$ a graph $H_i$ so that $\|H_i\| \le \Oh_k(|T_i^*|)$, $T^*_i \subseteq V(H_i)$ is an independent set in $H_i$, and for any disjoint $A,B \subseteq T^*_i$ it holds that $\min(\connb_{G,T^*}(A,B), k) = \min(\connb_{H_i,T_i^*}(A,B),k)$.

Then we use the algorithm recursively to compute minimal representative sets of $(H_i,z,T_i,k)$ for both $i \in \{1,2\}$.
Let $R_1$ and $R_2$ be the returned sets.

By the properties of $H_i$, all vertices in $R_i$ are $k$-carvable in $G$.
Furthermore, for every $k$-carvable terminal $t \in T_i - R_i$, there exists $s \in R_i$ so that $\connb_{G,T^*}(s, \{t,z\}) > \connb_{G,T^*}(s,z)$.
These facts imply that $R_1 \cup R_2$ is a representative set of $(G,z,T,k)$.
It remains to turn $R_1 \cup R_2$ into a minimal representative~set.

Because $R_i$ is a minimal representative set of $(H_i,z,T_i,k)$, for every terminal $t \in R_i$ we have that $\connb_{G,T^*}(t, (R_i \cup \{z\}) - \{t\}) = \connb_{G,T^*}(t,z)$.
Therefore, the sets $T_t$ (defined in the context of $(G,z,T,k)$) for $t \in R_i$ are pairwise disjoint.
(But $T_{t_1}$ and $T_{t_2}$ may intersect when $t_1 \in R_1$ and $t_2 \in R_2$.)

Let us compute the sets $T_t$ explicitly for all $t \in R_1 \cup R_2$.
We invoke the following procedure first with $i = 1$ to compute the sets $T_t$ for $t \in R_1$, and then with $i = 2$ to compute these sets for $t \in R_2$.

We apply the Isolating Cuts Lemma (\Cref{lem:isolatingcuts}) with the graph $G$ and the sets $T^*$ and $R_i \cup \{z\}$.
Let $\{C_t \colon t \in R_i \cup \{z\}\}$ be the returned family.
By the fact that $\connb_{G,T^*}(t, (R_i \cup \{z\}) - \{t\}) = \connb_{G,T^*}(t,z)$, we have that $|N(C_t)| = \connb_{G,T^*}(t,z) < k$ for all $t \in R_i$.
This implies also that $T_t \subseteq C_t \cap T$ for all $t \in R_i$.

Then, for each $t \in R_i$, we use the Ford-Fulkerson algorithm in the graph $G[N[C_t]]$ to compute the unique connected set $C'_t$ with $\{t\} \subseteq C'_t \subseteq C_t$ so that $|N(C'_t)| \le |N(C_t)|$, and for all other connected sets $C''_t$ with $\{t\} \subseteq C''_t \subseteq C_t$ and $|N(C''_t)| \le |N(C_t)|$ it holds that $C'_t \subseteq C''_t$.
Recall that $|N(C_t)| < k$, so the Ford-Fulkerson algorithm runs in $\Oh_k(\|G[N[C_t]]\|)$ time in this case.
Now, $T_t = C'_t \cap T$.
Doing this for all $t \in R_i$ takes in total $\Oh_k(\|G\|)$ time.

We have now explicitly computed for all $t \in R_1 \cup R_2$ the sets $T_t$ in time $\Oh_k(\|G\|)$.
We then construct a set $R$ by processing the vertices $t \in R_1 \cup R_2$ in a non-increasing order of $|T_t|$, and for every $t$ such that no $s$ with $t \in T_s$ was processed before it, add $t$ to $R$.
Clearly, $R$ is a minimal representative set, so we return $R$.
Note that this can be implemented in time $\Oh(\sum_{t \in R_1 \cup R_2} |T_t|) = \Oh(|T|)$.

The running time of a recursive call is $\Oh_k(\|G\|^{1+o(1)})$, plus the running times of its child calls.
As $\|G\| = \Oh_k(|T|)$ on all calls except the first, and the sum of the sizes of $T$ across all recursive calls is $\Oh(|T| \log |T|)$, we obtain the total running time of $\Oh_k(\|G\|^{1+o(1)})$.
\end{proof}

The combination of \Cref{lem:carving-terminals-with-advice,lem:computereps} implies \Cref{lem:carving-terminals}.

\subsubsection{Color coding the chips}
We now use the classic color coding technique~\cite{DBLP:journals/jacm/AlonYZ95}, in the flavor resembling the technique of randomized contractions of Chitnis, Cygan, Hajiaghayi, Pilipczuk, Pilipczuk~\cite{DBLP:journals/siamcomp/ChitnisCHPP16}, to lift the result of \Cref{lem:carving-terminals} to prove \Cref{lem:computechipfamilyfinal}.

We use the following lemma from~\cite{DBLP:journals/siamcomp/ChitnisCHPP16}, which is a standard application of splitters~\cite{DBLP:conf/focs/NaorSS95} to derandomize color coding.

\begin{lemma}[{\cite[Lemma~1]{DBLP:journals/siamcomp/ChitnisCHPP16}}]
\label{lem:colorcode}
There is an algorithm that, given a set $U$ of size $n$, and integers $a,b \in \N$, in time $2^{\Oh(b \log (a+b))} n \log n$ constructs a family $\Ff$ of $2^{\Oh(b \log (a+b))} \log n$ subsets of $U$ so that for any sets $A,B \subseteq U$ with $A \cap B = \emptyset$, $|A| \le a$, and $|B| \le b$, there exists a set $S \in \Ff$ with $A \subseteq S$ and $B \cap S  = \emptyset$.
\end{lemma}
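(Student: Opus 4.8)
The plan is to derive this from the classical splitter construction of Naor, Schulman, and Srinivasan~\cite{DBLP:conf/focs/NaorSS95}, following the proof of~\cite[Lemma~1]{DBLP:journals/siamcomp/ChitnisCHPP16}. Recall that an $(n,k,\ell)$-splitter is a family $\mathcal{H}$ of functions $h\colon U\to[\ell]$ such that for every $W\subseteq U$ with $|W|\le k$ some $h\in\mathcal{H}$ is injective on $W$; Naor et al.\ show that for $\ell=k^2$ such a family of size $k^{\Oh(1)}\log n$ can be constructed in time $k^{\Oh(1)} n\log n$. First I would apply this with $k\coloneqq a+b$ and $\ell\coloneqq m\coloneqq(a+b)^2$ to obtain such a family $\mathcal{H}$. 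The point of this step is that $\mathcal{H}$ lets us pretend the ground set is the small universe $[m]$ rather than $U$, at the cost of only a factor $k^{\Oh(1)}\log n$ in the family size and in the running time; crucially, since $\ell=k^2$, this factor is polynomial in $a+b$ and carries \emph{no} exponential dependence on $a$.

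Second, I would handle the small universe $[m]$ directly. The observation that keeps the dependence on $a$ bounded is that on $[m]$ we only ever need to avoid a set of size at most $b$: take $\mathcal{S}\coloneqq\{\,[m]\setminus B' \;\colon\; B'\subseteq[m],\ |B'|\le b\,\}$. Then $|\mathcal{S}|\le\sum_{i\le b}\binom{m}{i}\le(m+1)^b=2^{\Oh(b\log(a+b))}$, and for any disjoint $A',B'\subseteq[m]$ with $|B'|\le b$ the set $[m]\setminus B'\in\mathcal{S}$ contains $A'$ and is disjoint from $B'$. No enumeration over the (potentially huge) set $A'$ is required, so there is no $2^{\Oh(a)}$ blow-up — this is exactly the place where one must resist the temptation to use a $k$-perfect hash family into $[k]$, which would cost $2^{\Oh(k)}=2^{\Oh(a+b)}$.

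Finally, I would set $\Ff\coloneqq\{\,h^{-1}(S')\;\colon\;h\in\mathcal{H},\ S'\in\mathcal{S}\,\}$, so that $|\Ff|\le|\mathcal{H}|\cdot|\mathcal{S}|=k^{\Oh(1)}\log n\cdot 2^{\Oh(b\log(a+b))}=2^{\Oh(b\log(a+b))}\log n$. For correctness, given disjoint $A,B\subseteq U$ with $|A|\le a$ and $|B|\le b$, apply the splitter property to $W\coloneqq A\cup B$ (of size at most $a+b$) to get $h\in\mathcal{H}$ injective on $W$; then $A'\coloneqq h(A)$ and $B'\coloneqq h(B)$ are disjoint subsets of $[m]$ with $|B'|\le b$, and for $S'\coloneqq[m]\setminus B'$ the set $S\coloneqq h^{-1}(S')\in\Ff$ satisfies $A\subseteq S$ (every $x\in A$ has $h(x)\in A'$, hence $h(x)\notin B'$, hence $x\in S$) and $B\cap S=\emptyset$ (every $x\in B$ has $h(x)\in B'$, hence $x\notin S$). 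The running time is dominated by the splitter construction, $k^{\Oh(1)}n\log n$, plus $\Oh(n)$ per set to write out the $2^{\Oh(b\log(a+b))}\log n$ members of $\Ff$, which is within the claimed bound. I do not expect a genuine obstacle here: the whole argument is routine derandomized color coding, and the only steps needing care are invoking the splitter with target size $k^2$ (so that $m=(a+b)^2$ stays polynomial, keeping $|\mathcal{S}|$ of the form $2^{\Oh(b\log(a+b))}$) and routing all $a$-dependence through $\mathcal{H}$, which is only polynomial in $a+b$.
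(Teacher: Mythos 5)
Your proof is correct and is essentially the standard proof of this statement. In fact, the paper does not reprove this lemma at all --- it cites it verbatim from Chitnis et al.\ \cite[Lemma~1]{DBLP:journals/siamcomp/ChitnisCHPP16}, and the argument given there is precisely the one you reconstruct: compose an $(n, a+b, (a+b)^2)$-splitter from Naor, Schulman, and Srinivasan with an enumeration of all candidate ``avoid-sets'' of size at most $b$ over the reduced universe $[(a+b)^2]$, exactly as you do. (Two minor remarks: Chitnis et al.\ actually prove the slightly stronger bound with $b$ replaced by $\min(a,b)$, by symmetrically enumerating either $A'$ or $B'$ over $[m]$ --- your construction matches the weaker form stated in this paper; and the edge case $b=0$ should be special-cased to $\Ff=\{U\}$, since otherwise the splitter construction alone has cost $(a+b)^{\Oh(1)} n\log n$, which is not dominated by $2^{\Oh(b\log(a+b))}n\log n=\Oh(n\log n)$ when $b=0$.)
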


Recall that when $G$ is a graph, $X \subseteq V(G)$, and $k,\alpha \in \N$ are integers, an $(X,k,\alpha)$-chip is a connected set $C \subseteq V(G) - X$ so that $|C| \ge \alpha$ and $|N(C)| < k$.
A vertex $v \in V(G) - X$ is $(X,k,\alpha)$-carvable if there exists an $(X,k,\alpha)$-chip $C$ with $v \in C$.
We then use the combination of \Cref{lem:carving-terminals,lem:colorcode} to prove \Cref{lem:computechipfamilyfinal}, which we restate here.

\computechipfamilyfinal*
\begin{proof}
Let $D \subseteq V(G)$ denote the set of all $(X,k,\alpha)$-carvable vertices of $G$.
Then, for each $v \in D$ let us fix $C_v$ to be an arbitrarily selected $(X,k,\alpha)$-chip with $v \in C_v$.
Then we associate with each $v \in D$ an (again arbitrarily selected) connected set $A_v$ with $v \in A_v \subseteq C_v$ and $|A_v| = \alpha$.
We also let $B_v = N(C_v)$, which has size $|B_v| < k$.
We apply the algorithm of \Cref{lem:colorcode} with $U = V(G) \setminus X$, $a = \alpha$, and $b = k-1$, to obtain a family $\Ff$ of subsets of $U$.

Now, for a set $S \in \Ff$, we say that a vertex $v \in D$ is \emph{$S$-lucky} if $A_v \subseteq S$ and $B_v \cap S = \emptyset$.
By the properties of the family $\Ff$, for every $v \in D$ there exists at least one set $S \in \Ff$ so that $v$ is $S$-lucky.
Thus, there must exist $S \in \Ff$ so that at least $|D|/(2^{\Oh(k \log (\alpha+k))} \cdot \log n)$ vertices are $S$-lucky.

\begin{claim}
\label{lem:computechipfamilyfinal:claim}
Given $S$, we can find in time $\Oh_k(\|G\|^{1+o(1)})$ a family $\chipfamily_S$ of pairwise non-touching $(X,k,\alpha)$-chips so that $\bigcup \chipfamily_S$ contains all $S$-lucky vertices.
\end{claim}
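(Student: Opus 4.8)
The plan is to reduce the task to a single call of the \TermCarv algorithm of \Cref{lem:carving-terminals}, run on a graph obtained from $G$ by contracting the connected components of $G[S]$ and attaching a small gadget that encodes the forbidden set $X$. The starting point is a structural observation. Fix, for an $S$-lucky vertex $v$, the chip $C_v$ and the connected $\alpha$-set $A_v \subseteq C_v \cap S$ witnessing luckiness, and let $Q_v$ be the component of $G[S]$ containing $v$. Since $A_v$ is connected and inside $S$ we get $A_v \subseteq Q_v$, so $|Q_v| \ge \alpha$; and since $N(C_v) \cap S = \emptyset$, each component of $G[S]$ is either entirely contained in $C_v$ or vertex-disjoint from and non-adjacent to $C_v$ — in particular $Q_v \subseteq C_v$, and moreover $C_v$ is a union of $G[S]$-components together with some vertices of $V(G)\setminus(X\cup S)$. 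Thus, after contracting the components of $G[S]$, the chip $C_v$ survives as a well-defined connected set containing the vertex that represents $Q_v$.

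Concretely, I would form $G^\star$ by contracting each $Q \in \cc(G[S])$ into a vertex $w_Q$, then adding a fresh vertex $z$ adjacent precisely to $X$ (no vertex of $X$ is contracted, as $S \cap X = \emptyset$), and set $T := \{w_Q : Q \in \cc(G[S])\}$. Distinct components of $G[S]$ are pairwise non-adjacent and $z$ sees only $X$, so $T \cup \{z\}$ is independent and $(G^\star, z, T, k)$ is a legal instance of \TermCarv, with $\|G^\star\| = \Oh(\|G\|)$ and computable in linear time. The point of the gadget is that a \TermCarv terminal $k$-chip for root $z$ is forbidden to be adjacent to $z$, which is exactly the statement that it avoids $X$, while $X$ is still free to appear in its neighborhood; and putting \emph{every} $w_Q$ into $T$ forbids $G[S]$-components from appearing in chip neighborhoods, which is what prevents the neighborhood from blowing up when we uncontract. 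I would then check that for each $S$-lucky $v$ the image $C_v^\star$ of $C_v$ is a terminal $k$-chip through $w_{Q_v}$: connectivity and $z$-avoidance are immediate from $C_v\cap X=\emptyset$, while $N_{G^\star}(C_v^\star) \cap T = \emptyset$ and $|N_{G^\star}(C_v^\star)| \le |N_G(C_v)| < k$ follow from the structural observation. Hence $w_{Q_v}$ is $k$-carvable. Running \Cref{lem:carving-terminals} then yields, in time $\Oh_k(\|G\|^{1+o(1)})$, a pairwise non-touching family $\chipfamily^\star$ of terminal $k$-chips covering all $k$-carvable terminals, hence all $w_{Q_v}$. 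I would uncontract each member and keep only those of size at least $\alpha$; this filtered family is the output $\chipfamily_S$.

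For correctness I would verify: each kept uncontracted set is connected, avoids $X$ (its contracted version is not adjacent to $z$), and has $G$-neighborhood of size $<k$ — since the contracted version avoids $T$ in its neighborhood, no $G[S]$-component can contribute extra vertices, so the $G$-neighborhood injects into the $G^\star$-neighborhood — hence each kept set is a genuine $(X,k,\alpha)$-chip; uncontraction preserves the pairwise non-touching property, since a shared vertex or a connecting edge would descend to one in $G^\star$; and no chip containing some $w_{Q_v}$ is ever discarded, because it uncontracts to a set of size at least $|Q_v|\ge\alpha$, so every $S$-lucky vertex lies in $\bigcup\chipfamily_S$. The only real subtlety, and the step I expect to need the most care, is precisely this coupled handling of $X$ and of neighborhood sizes: one must guarantee simultaneously that the produced chips avoid $X$ (while still allowing $X$ in their boundary) and that the small-neighborhood condition transfers faithfully across the contraction — this is exactly what dictates the two design choices (the $z$-to-$X$ gadget, and making every $G[S]$-component a terminal rather than only the large ones) together with the final size filter. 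The rest is routine bookkeeping, and the running time is dominated by the single invocation of \Cref{lem:carving-terminals}, giving $\Oh_k(\|G\|^{1+o(1)}) \le \Oh_{k,\alpha}(\|G\|^{1+o(1)})$.
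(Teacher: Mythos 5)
Your proof is correct and follows essentially the same approach as the paper: contract components of $G[S]$ into terminals, attach a fresh $z$ adjacent exactly to $X$, run the \TermCarv algorithm, and uncontract. The one small deviation is that the paper contracts only components of size at least $\alpha$ (so that uncontracted chips are automatically large, and no filter is needed), while you contract all components and apply a size filter afterward; both are correct, and your argument for why the filter is harmless (any chip covering some $w_{Q_v}$ must uncontract to something of size at least $|Q_v|\ge\alpha$) is exactly right.
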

\begin{claimproof}
We construct a graph $G'$ from $G$ by contracting each connected component of $G[S]$ of size at least $\alpha$ into a single vertex, and then adding a fresh vertex $z$ with $N_{G'}(z) = X$ (recall that $S$ and $X$ are disjoint).
Let $T \subseteq V(G')$ be the vertices corresponding to the contracted components, and observe that $T \cup \{z\}$ is an independent set in $G'$.
We apply \Cref{lem:carving-terminals} with the instance $(G',z,T,k)$ of \TermCarv to obtain a family $\chipfamily'_S$ of terminal $k$-chips that are pairwise non-touching and with every $k$-carvable terminal contained in some $C' \in \chipfamily'_S$.
By uncontracting the contractions, we map each $C' \in \chipfamily'_S$ into a set $C \subseteq V(G)$.

The facts that the contracted components of $G[S]$ have size at least $\alpha$ and $C'$ contains at least one vertex $t \in T$ imply $|C| \ge \alpha$.
The fact that $N_{G'}[C']$ is disjoint with $z$ implies that $C$ is disjoint with $X$.
The fact that $N(C')$ is disjoint with $T$ implies that $|N_G(C)| = |N_{G'}(C')| < k$.
Therefore, $C$ is an $(X,k,\alpha)$-chip of $G$.
Let $\chipfamily_S$ be the resulting family of $(X,k,\alpha)$-chips.
Because the sets in $\chipfamily'_S$ are pairwise non-touching, the sets in $\chipfamily_S$ are also pairwise non-touching.

We then show that $\bigcup \chipfamily_S$ contains all $S$-lucky vertices.
Let $v$ be an $S$-lucky vertex, and let $t \in T$ be the vertex of $G'$ that corresponds to the connected component of $G[S]$ that contains $A_v$.
Now $B_v$ separates $A_v$ from $X$ in $G$, and because it is disjoint with $S$, it also separates $t$ from $z$ in $G'$.
With $|B_v| < k$, this implies that $t$ is $k$-carvable, implying that $t \in C'$ for some $C' \in \chipfamily'_S$, and therefore $v \in \bigcup \chipfamily_S$.
\end{claimproof}

We run the algorithm of \Cref{lem:computechipfamilyfinal:claim} for all $S \in \Ff$, and return the family $\chipfamily_S$ that maximizes $|\bigcup \chipfamily_S|$.
This family has total size at least $|\bigcup \chipfamily_S| \ge |D|/(2^{\Oh(k \log (\alpha+k))} \cdot \log n) \ge \numcarv/\Oh_{k,\alpha}(\log n)$.
The total running time of the algorithm is $\Oh_{k,\alpha}(\|G\|^{1+o(1)})$.
\end{proof}

\subsection{Preservers and replacements}
\label{subsec:preserversandreplacements}
Once we have found a large family $\chipfamily$ of $(X,k,\alpha)$-chips, and computed $(N(C),\delta)$-folios recursively in the graphs $G[N[C]]$ for all chips $C \in \chipfamily$, we would like replace each $G[C]$ with a graph that is small but whose behavior is equivalent to the behavior of $G[C]$.
If we are concerned with just maintaining the $(X,\delta)$-folio of $G$ under these replacements, then this is a relatively standard task, see e.g.~\cite{DBLP:journals/jacm/BodlaenderFLPST16,DBLP:conf/stoc/GroheKMW11}.
However, to obtain the almost-linear running time of our algorithm, we also have to be careful to not cause any ``new'' chips to appear after doing this replacement.

Before going into the details, let us state in advance the ultimate result of this subsection, which encompasses everything we use from this subsection in the subsequent sections.
Stating it requires one more definition.
Let $G$ and $H$ be graphs and $X \subseteq V(G) \cap V(H)$.
An \emph{$(X,\delta)$-model-mapper} from $H$ to $G$ is a data structure that, given an $(X,\delta)$-model-folio of $H$, in time $\Oh_{|X|,\delta}(\|G\|+\|H\|)$ returns an $(X,\delta)$-model-folio~of~$G$.

\begin{restatable}{theorem}{lemultimatechipreplace}
\label{lem:ultimatechipreplace}
There is a computable function $f$, and an algorithm that takes as inputs
\begin{itemize}[nosep]
\item a graph $G$;
\item integers $k,\delta \in \N$; 
\item a set $X \subseteq V(G)$ with $|X| \ge k$;
\item a family $\chipfamily$ of pairwise non-touching $(X,k,f(k,\delta))$-chips of $G$; and 
\item for each $C \in \chipfamily$ an $(N(C), \delta)$-model-folio of $G[N[C]]$,
\end{itemize}
runs in time $\Oh_{|X|,\delta}(\|G\|)$, and outputs a graph $H$ so that
\begin{itemize}[nosep]
\item $X \subseteq V(H)$ and the $(X,\delta)$-folios of $H$ and $G$ are equal;
\item $|V(H)| \le |V(G)|$ and $\|H\| \le \|G\|$;
\item if the number of $(X,k,f(k,\delta))$-carvable vertices of $G$ is $\numcarv$, then the number of $(X,k,f(k,\delta))$-carvable vertices of $H$ is at most $\numcarv - |\bigcup \chipfamily| + f(k,\delta) \cdot |\chipfamily|/2 \le \numcarv - |\bigcup \chipfamily|/2$; and
\item the size of a largest $(X,k,f(k,\delta))$-chip of $H$ is at most the size of a largest $(X,k,f(k,\delta))$-chip of $G$.
\end{itemize}
Furthermore, the algorithm also outputs an $(X,\delta)$-model-mapper from $H$ to $G$.
\end{restatable}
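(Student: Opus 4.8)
The plan is to replace, for every chip $C\in\chipfamily$, the induced subgraph $G[N[C]]$ by a small \emph{intrusion-preserving} replacement $H_C$ that keeps the boundary fixed, i.e. $H_C[N(C)]=G[N(C)]$, has the same $(N(C),\delta)$-folio as $G[N[C]]$, and has only $\Oh_{k,\delta}(1)$ internal vertices. Concretely, I fix $f(k,\delta)$ to be a sufficiently large computable function: at least twice the bound on the number of internal vertices of an intrusion-preserving replacement for separators of size $<k$ and detail $\le\delta$ (whose existence, and computability from $G[N[C]]$ together with the $(N(C),\delta)$-model-folio of $G[N[C]]$, are established earlier in this subsection via folio preservers and important separators), and large enough that the edge count contributed by one replacement is dominated by $\|G[C]\|$. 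For every $C\in\chipfamily$, since $|N(C)|<k\le|X|$ and we are given the $(N(C),\delta)$-model-folio of $G[N[C]]$, I can compute in time $\Oh_{k,\delta}(\|G[N[C]]\|)$ such a replacement $H_C$ with $|V(H_C)\setminus N(C)|\le f(k,\delta)/2$, by enumerating the $\Oh_{k,\delta}(1)$ candidate boundaried graphs of the allowed size and verifying each. As the chips are pairwise non-touching, their closed neighbourhoods contain no edges to other chips and overlap only in $V(G)\setminus\bigcup\chipfamily$, so the replacements are independent; let $H$ be obtained from $G$ by simultaneously replacing each $G[C]$ by $H_C\setminus N(C)$.

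Next I would verify the asserted properties together with $\|H\|\le\|G\|$. Order the chips $C_1,\dots,C_m$ and set $H=H^{(m)},H^{(m-1)},\dots,H^{(0)}=G$, where $H^{(i)}$ is $H^{(i-1)}$ with $G[N[C_i]]$ replaced by $H_{C_i}$; by non-touching, each $C_j$ with $j\ne i$ remains a chip with unchanged neighbourhood in $H^{(i)}$. For the folio, $H^{(i)}$ has the separation $\bigl(V(H_{C_i}),\,V(H^{(i)})\setminus(V(H_{C_i})\setminus N(C_i))\bigr)$ of separator $N(C_i)$, with $X$ entirely on the $B$-side; since $H_{C_i}$ and $G[N[C_i]]$ have equal $(N(C_i),\delta)$-folios and the $B$-sides of $H^{(i)}$ and $H^{(i-1)}$ coincide, \cref{lem:folio_over_separator} yields that $H^{(i)}$ and $H^{(i-1)}$ have equal $(X,\delta)$-folios, hence so do $H$ and $G$. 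For the size bounds, $|C|\ge f(k,\delta)>|V(H_C)\setminus N(C)|$ and the choice of $f$ makes $\|G[C]\|$ dominate the edges added for $C$, so $|V(H)|\le|V(G)|$ and $\|H\|\le\|G\|$. For the carvable count, iterating the defining property of intrusion-preserving replacements along $H^{(m)}\to\cdots\to H^{(0)}$ turns any $(X,k,f(k,\delta))$-chip $D'$ of $H$ into an $(X,k,f(k,\delta))$-chip $D$ of $G$ with $|D|\ge|D'|$ and $D\cap(V(G)\setminus\bigcup\chipfamily)=D'\cap(V(H)\setminus\bigcup_{C}(V(H_C)\setminus N(C)))$; this gives at once that the largest chip of $H$ is no larger than the largest chip of $G$, and that every carvable vertex of $H$ lying outside all gadgets is carvable in $G$ and outside $\bigcup\chipfamily$. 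Since, conversely, every vertex of every $C\in\chipfamily$ is $(X,k,f(k,\delta))$-carvable in $G$ (witnessed by $C$ itself), the number of carvable vertices of $H$ is at most $\sum_{C\in\chipfamily}|V(H_C)\setminus N(C)|+(\numcarv-|\bigcup\chipfamily|)\le f(k,\delta)\,|\chipfamily|/2+\numcarv-|\bigcup\chipfamily|$, which is at most $\numcarv-|\bigcup\chipfamily|/2$ because $|\bigcup\chipfamily|\ge f(k,\delta)\,|\chipfamily|$.

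The construction runs in time $\Oh_{|X|,\delta}(\|G\|)$: extracting each $G[N[C]]$ from $G$ and computing $H_C$ costs $\Oh_{k,\delta}(\|G[N[C]]\|)$, and $\sum_{C\in\chipfamily}\|G[N[C]]\|=\Oh(\|G\|)+|\chipfamily|\cdot\binom{k}{2}=\Oh_k(\|G\|)$, since the sets $V(G[C])$, the edge sets $E(G[C])$ and $E_G(C,N(C))$ are pairwise disjoint across chips, $\sum_C|N(C)|\le\sum_C|E_G(C,N(C))|$, each $G[N(C)]$ has at most $\binom{k}{2}$ edges, and $|\chipfamily|\le|V(G)|$. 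For the $(X,\delta)$-model-mapper from $H$ to $G$, I would keep, by reference, the input model-folios of the $G[N[C]]$ and the replacements $H_C$, and, when later given an $(X,\delta)$-model-folio of $H$, translate it along $H^{(m)}\to\cdots\to H^{(0)}$; each step uses that $H_{C_i}$ has bounded size, so its $(N(C_i),\delta)$-model-folio is computable by brute force, and then the model-combining part of \cref{lem:folio_over_separator}, applied to the common $B$-side and the two $(N(C_i),\delta)$-model-folios, transports each model of $H^{(i)}$ into one of $H^{(i-1)}$, rewriting only the part of the model inside the replaced region; the step then costs $\Oh_{|X|,\delta}(\|G[N[C_i]]\|)$, and summing with the same estimate, plus $\Oh(\|H\|)$ for reading the input folio, gives $\Oh_{|X|,\delta}(\|G\|+\|H\|)$ as required.

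The conceptual heart — the existence and linear-time verifiability of bounded-size intrusion-preserving replacements — is exactly what the earlier part of this subsection supplies, so here it is used as a black box. The genuine algorithmic obstacle specific to this proof is making the model-mapper run in almost-linear time: one cannot afford to re-invoke \cref{lem:folio_over_separator} on the whole graph once per chip (that would be quadratic), so the model-folio must be updated \emph{locally} around each replaced chip, which in turn requires tracking the refined ``trace'' of a model inside the replaced subgraph — recording not merely which boundary vertices a branch set meets but also how the pieces of a branch set straddling $N(C)$ are forced to connect through the replaced side, precisely as in the proof of \cref{lem:folio_over_separator}; it is the bounded size of $H_C$ that makes computing this trace a constant-time operation per chip.
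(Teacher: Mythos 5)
Your proposal is correct and follows essentially the same route as the paper: fix $f(k,\delta)$ to dominate twice the size bound for bounded-size intrusion-preserving folio-preservers (the paper's \cref{lem:presmainresult}), compute one per chip, glue them in simultaneously along the fixed boundaries (the paper's $(Y,\rpmap,\bd)$-replacement and \cref{lem:foliopreservreplace}), translate chips of $H$ back to chips of $G$ via the intrusion-preserving property (\cref{lem:replacebyintpresmaintain}), and account for carvable vertices and sizes exactly as you do, with the model-mapper built by locally re-splicing branch sets inside each gadget using the stored model-folios. The only cosmetic differences are that you re-derive the contents of \cref{lem:foliopreservreplace} and \cref{lem:replacebyintpresmaintain} inline rather than citing them, your attribution of the local model-splicing step to the ``model-combining part of \cref{lem:folio_over_separator}'' is a slight mislabel (the relevant primitive is the splice in the proof of \cref{lem:foliopreservreplace}, not the binary combination of \cref{lem:folio_over_separator}), and you leave implicit that \cref{lem:replacebyintpresmaintain} only applies to chips meeting $Y$ — which is automatic here because every gadget has fewer than $f(k,\delta)$ internal vertices.
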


This subsection is dedicated to the proof of \Cref{lem:ultimatechipreplace}.

\subsubsection{Folio-preservers}
Let us first consider the task of replacing a chip with a graph that is equivalent from the standpoint of preserving the folio.
To formally define this concept, we introduce the definition of a \emph{folio-preserver}.

A graph $H$ is an \emph{$(X,\delta)$-folio-preserver} of a graph $G$ if $X \subseteq V(G) \cap V(H)$, $G[X] = H[X]$, and the $(X,\delta)$-folios of $G$ and $H$ are equal.
Note that being a $(X,\delta)$-folio-preserver is an equivalence relation among graphs.
Note also that if $H$ is an $(X,\delta)$-folio-preserver of $G$, then $H$ is also an $(X',\delta')$-folio-preserver of $G$ for all $X' \subseteq X$ and $\delta' \le \delta$.

We bound the size of an $(X,\delta)$-folio-preserver as a function of $|X|$ and $\delta$.
This follows standard arguments in the area that appear, for example, in~\cite[Lemma~2.2]{DBLP:conf/stoc/GroheKMW11} and~\cite[Lemma~8.1]{DBLP:conf/stoc/FominLP0Z20}.
We present a proof because our definitions are slightly different.

\begin{lemma}
\label{lem:preserverbound}
There is a computable function $f$, so that if $G$ is a graph, $X \subseteq V(G)$, and $\delta \in \N$, then there exists an $(X,\delta)$-folio-preserver $H$ of $G$ with $|V(H)| \le f(|X|,\delta)$.
\end{lemma}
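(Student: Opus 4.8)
The plan is to reduce the general case to the case where $G$ itself is small, by first contracting parts of $G$ that are far (in the sense of separators) from $X$, and then bounding the number of remaining ``types'' by a counting argument in terms of $|X|$ and $\delta$.

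First I would set up the right intermediate object. Consider the connected components of $G - X$, or more precisely the ``attachment'' structure: for each vertex $v \in V(G) \setminus X$ I would look at the minimal separator between $v$ and $X$. If there is a large connected set $C \subseteq V(G) \setminus X$ with $|N(C)| \le |X|$ (say $|N(C)|$ small), then by \Cref{lem:folio_over_separator} the $(X,\delta)$-folio of $G$ is determined by the $(N(C),\delta)$-folio of $G[N[C]]$ together with the $(X_B,\delta)$-folio of $G[B]$ where $B = V(G) \setminus C$. So I can replace $G[N[C]]$ by any graph having the same $(N(C),\delta)$-folio while matching on $N(C)$, and the $(X,\delta)$-folio of the whole graph is unchanged; this is exactly the compositionality statement. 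The key point is that the new graph is also a valid $(X,\delta)$-folio-preserver of $G$ provided it keeps $G[X]$ intact, which is automatic since $C \cap X = \emptyset$.

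The main engine is then an induction on $\delta$ combined with induction on the ``depth'' of this recursive replacement. At the base, when $G$ has few vertices outside $X$, there is nothing to do. For the inductive step: since $N(C)$ has bounded size, one may recursively obtain an $(N(C),\delta)$-folio-preserver $H_C$ of $G[N[C]]$ of size bounded by $f(|N(C)|,\delta) \le f(|X|,\delta)$. Replacing gives a strictly smaller graph with the same $(X,\delta)$-folio, so this terminates. What remains after all such replacements is a graph in which no large low-boundary separation from $X$ exists — essentially a ``compact'' graph — and for such graphs one argues that the total size is already bounded by a function of $|X|$ and $\delta$: the number of branch sets in any rooted minor of detail $\le \delta$ is $\le |X| + \delta$, and the number of possible distinct ``reactions'' of a bounded-boundary piece to the folio is finite and computable; pieces with identical reaction can be merged. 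This is the place where a careful bookkeeping argument is needed, and it is the standard but nontrivial part: one needs to show that after exhaustively merging equivalent pieces, and after contracting long ``paths'' of equivalent behavior (so that no two adjacent mergeable pieces survive), the number of surviving pieces, and hence $|V(H)|$, is bounded by a computable function of $|X|$ and $\delta$ alone.

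The hard part will be precisely this last bound: ensuring that the recursion/merging process cannot go on producing polynomially many irreducible pieces, i.e., that ``no large chip and no two equivalent adjacent gadgets'' forces global boundedness. The cleanest route is to argue that in the fully reduced graph $H$, every vertex outside $X$ lies within bounded distance (measured through the tree-like structure of small separators) of $X$, and the branching at each level is bounded by the number of folios, which is computable in $|X|+O(1)$ and $\delta$; multiplying out over bounded depth yields $f(|X|,\delta)$. I would also need to verify that the equivalence used for merging — equality of $(\partial,\delta)$-folios where $\partial$ is the (bounded) boundary of the piece — is fine-grained enough that merging two equivalent pieces genuinely preserves the global $(X,\delta)$-folio; this is again a direct consequence of \Cref{lem:folio_over_separator} applied iteratively along the separators, since the folio of a union over a separator depends only on the folios of the two sides.
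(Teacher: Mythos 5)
Your reduction to a ``fully reduced'' graph is sound as far as it goes (replacing a large low-boundary piece by a bounded-size piece with the same folio is exactly \Cref{lem:folio_over_separator}), but the proposal collapses at the step you yourself flag as the hard part: bounding the size of the graph once no further replacements are possible. There are two problems. First, ``no large chip'' does \emph{not} imply bounded size: for example $K_{n,n}$ with $X$ one of the sides is $(X,k,\alpha)$-compact for $k \le |X|$ and any $\alpha$, yet is arbitrarily large, and it admits no decomposition into bounded-boundary pieces at all, so there are no ``pieces'' to merge and no ``tree-like structure of small separators'' to measure distance in. Second, even granting some such structure, your ``bounded depth $\times$ bounded branching'' count has no source for the depth bound; this would require the graph to already have bounded treewidth, which is precisely what you have not established.

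The paper takes a different route that supplies exactly this missing ingredient. It fixes a \emph{minimum-size} $(X,\delta)$-folio-preserver $H$ of $G$ and invokes the irrelevant-vertex theorems of Graph Minors~XIII: by (6.1), if $H$ contained a $K_h$ minor for $h$ large (computable from $|X|,\delta$), some vertex would be irrelevant, contradicting minimality; by (9.9) and (10.3), if $\tw(H)$ were large, $H$ would contain a flat wall with an irrelevant central vertex, again contradicting minimality. Hence $\tw(H) \le w(|X|,\delta)$. With bounded treewidth in hand, the paper takes a binary tree decomposition with $X$ in every bag and minimum number of nodes, and shows no two nested bags can have the same ``signature'' (bag isomorphism type together with the folio of the subgraph below that bag) --- otherwise one could splice the smaller subtree in place of the larger, contradicting minimality of either $|V(H)|$ or $|V(T)|$. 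This bounds the depth of $T$ by the (computable, finite) number of signatures, and hence bounds $|V(H)|$. The treewidth bound is thus the load-bearing step, and it relies on nontrivial structural theory rather than local surgery; a local merge-and-compress argument of the kind you sketch does not substitute for it.
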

\begin{proof}
Let $H$ be an $(X,\delta)$-folio-preserver of $G$ with the smallest number of vertices.
Note that $H$ exists because $G$ is an $(X,\delta)$-folio-preserver of $G$.
By (6.1) of \cite{GM13} (irrelevant vertex when containing a clique-minor), there is a constant $h$ computable from $|X|$ and $\delta$ so that $H$ does not contain $K_h$ as a minor.
Then, by the combination of (9.9) and (10.3) of \cite{GM13} (irrelevant vertex when containing a flat wall), there is a constant $w$ computable from $h$, $|X|$, and $\delta$, so that the treewidth of $H$ is at most $w$.

A binary tree is a rooted tree where every node has at most two children.
It is well-known that every graph admits an optimum-width tree decomposition $(T,\bag)$ where $T$ is a rooted binary tree (see e.g. nice tree decompositions in~\cite[Chapter~7]{platypus}).
Let $(T,\bag)$ be a tree decomposition of $H$ so that (1) $T$ is a binary tree, (2) $X \subseteq \bag(x)$ for all $x \in V(T)$, (3) the width of $(T,\bag)$ is at most $w+|X|$, and (4) subject to (1), (2), and (3), $(T,\bag)$ minimizes $|V(T)|$.

For $x \in V(T)$, let $T_x$ denote the subtree of $T$ consisting of all the nodes that are descendants of $x$ (including $x$ itself), and let $H_x$ be the subgraph of $H$ induced by vertices in the bags of $T_x$, i.e., $H_x = H[\bigcup_{y \in T_x} \bag(y)]$.
Note that if $x \in V(T)$ is an ancestor of $y \in V(T)$, then $V(H_y) \subseteq V(H_x)$.
Furthermore, if also $x \neq y$, then it must be that $V(H_y) \subsetneq V(H_x)$, as otherwise we could decrease $|V(T)|$ by replacing $T_x$ by $T_y$.

\begin{claim}
\label{lem:preserverbound:claim1}
Let $S = \bag(x)$ for some node $x$ of $T$.
There is no graph $H'_x$ with fewer vertices than $H_x$ so that $S \subseteq V(H'_x)$, $H_x[S] = H'_x[S]$, and the $(S,\delta)$-folios of $H_x$ and $H'_x$ are equal.
\end{claim}
\begin{claimproof}
The pair $(V(H_x), (V(H) - V(H_x)) \cup S)$ is a separation of $H$.
Because $X \subseteq S$, by \Cref{lem:folio_over_separator} the $(X,\delta)$-folio of $H$ depends only on the $(S,\delta)$-folio of $H_x$ and the $(S,\delta)$-folio of the graph $H[(V(H) - V(H_x)) \cup S]$.

Towards a contradiction, suppose such $H'_x$ exists.
Now, let $H'$ be the graph obtained from $H$ by replacing $H_x$ by $H'_x$.
Because $H'_x[S] = H_x[S]$, we have that $H'[(V(H) - V(H_x)) \cup S] = H[(V(H) - V(H_x)) \cup S]$, meaning that their $(S,\delta)$-folios are equal.
As also the $(S,\delta)$-folios of $H_x$ and $H'_x$ are equal, the $(X,\delta)$-folio of $H'$ is equal to the $(X,\delta)$-folio of $H$.
Moreover, as $S \supseteq X$, it holds that $H'[X] = H[X]$, implying that $H'$ is an $(X,\delta)$-folio-preserver of $H$ and therefore also of $G$.
However, $H'$ has fewer vertices than $H$, contradicting the original choice of $H$.
\end{claimproof}

For a node $x \in V(T)$, let $\phi_x \colon \bag(x) \to \{u_1, \ldots, u_{|\bag(x)|}\}$ be an arbitrary mapping from $\bag(x)$ to a set $\{u_1,\ldots,u_{|\bag(x)|}\}$.
Now, for a node $x \in V(T)$, let $\Ff_x$ be the $(\phi_x(\bag(x)), \delta)$-folio of the graph $\phi_x(H_x)$, where the mapping $\phi_x(H_x)$ is defined by letting $\phi_x$ rename vertices in $\bag(x)$.
We define that the \emph{signature} of a node $x \in V(T)$ is the pair $(\phi_x(H[\bag(x)]), \Ff_x)$.
Note that the number of different signatures of nodes is bounded by a constant computable from $w+|X|$ and $\delta$.

\begin{claim}
If $x \in V(T)$ is an ancestor of $y \in V(T)$ and $x \neq y$, then their signatures are not equal.
\end{claim}
\begin{claimproof}
Towards contradiction, suppose $x$ is an ancestor of $y$ with $x \neq y$ and the signatures of $x$ and $y$ are equal.
Denote $S = \bag(x)$.
It follows that $H_y$ is isomorphic to a graph $H'_y$ so that $S \subseteq V(H'_y)$, $H'_y[S] = H_x[S]$, and the $(S,\delta)$-folios of $H'_y$ and $H_x$ are equal.
However, as $|V(H'_y)| = |V(H_y)| < |V(H_x)|$, this contradicts \Cref{lem:preserverbound:claim1}.
\end{claimproof}

It follows that the length of a longest root-leaf path in $T$ is bounded by the number of different signatures.
As $T$ is binary, this implies that the number of nodes of $T$ is bounded by a constant computable from $w+|X|$ and $\delta$.
The number of vertices of $H$ is at most $w+|X|$ times the number of nodes of $T$, implying that $|V(H)|$ is bounded by a constant computable from $w+|X|$ and $\delta$, i.e., by a constant computable from $|X|$ and $\delta$.
\end{proof}




For the proofs later in this section, we also need the following lemma about $(X,\delta)$-folio-preservers.

\begin{lemma}
\label{lem:unpumpcomponent}
There is a computable function $f$, so that if $G$ is a graph, $X \subseteq V(G)$, $\delta \in \N$, and $\fcomps \subseteq \cc(G - X)$ is a subset of the connected components of $G-X$ with $|\fcomps| \ge f(|X|,\delta)$, then there exists $C \in \fcomps$ so that $G - C$ is an $(X,\delta)$-folio-preserver of $G$.
\end{lemma}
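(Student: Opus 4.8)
The claim is a pigeonhole argument on top of Lemma~\ref{lem:preserverbound}. Think of each connected component $C\in\cc(G-X)$ as contributing, from the perspective of the rest of the graph, only its $(N(C)\cap X,\delta)$-folio together with the isomorphism type of $G[N[C]\cap X]$ — more precisely, the ``signature'' of $C$ should record: the subset $X_C := N(C)\cap X$ of $X$ that $C$ attaches to, and the $(X_C,\delta)$-model-folio of the graph $G[C\cup X_C]$ (up to isomorphism fixing $X_C$ pointwise). By Lemma~\ref{lem:folio_over_separator}, applied to the separation $(C\cup X,\ (V(G)\setminus C))$ — whose separator is $X_C$, contained in $X$ — the $(X,\delta)$-folio of $G$ is a function of the $(X_C,\delta)$-folio of $G[C\cup X_C]$ and the $(X,\delta)$-folio of $G-C$. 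Crucially, if two components $C,C'\in\fcomps$ have the same signature, then replacing $G[C\cup X_{C}]$ inside $G-C'$ by (an isomorphic copy of) $G[C'\cup X_{C'}]$ yields back exactly $G$ on the $C'$ side and doesn't change the $X_{C'}$-folio contributed there; so $G-C'$ and $G$ have the same $(X,\delta)$-folio. Since also $(G-C')[X]=G[X]$, this shows $G-C'$ is an $(X,\delta)$-folio-preserver of $G$.

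So the remaining task is to bound the number of distinct signatures by a computable function of $|X|$ and $\delta$. There are at most $2^{|X|}$ choices for $X_C\subseteq X$. For a fixed $X_C$, the $(X_C,\delta)$-folio of $G[C\cup X_C]$ is one of finitely many objects: by the discussion after the definition of folio in the preliminaries, the number of $X_C$-rooted graphs of detail at most $\delta$ is bounded by a computable function of $|X_C|\le|X|$ and $\delta$, hence so is the number of possible folios (subsets of that finite set). However, we want to avoid having to argue about the whole graph $G[C\cup X_C]$; instead, by Lemma~\ref{lem:preserverbound} the component $C$ together with $X_C$ has an $(X_C,\delta)$-folio-preserver of size at most $f_0(|X|,\delta)$ for a computable $f_0$, and the $(X_C,\delta)$-folio of $G[C\cup X_C]$ equals that of this bounded-size preserver. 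Thus the ``folio part'' of the signature ranges over a set whose size is bounded by the number of graphs on at most $f_0(|X|,\delta)$ vertices with $|X|$ of them marked — a computable function of $|X|$ and $\delta$. Setting $f(|X|,\delta)$ to be one more than the total number of possible signatures (over all choices of $X_C$), the pigeonhole principle gives that whenever $|\fcomps|\ge f(|X|,\delta)$, two components in $\fcomps$ share a signature, and by the previous paragraph the second of them can be deleted.

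Let me spell out the replacement step a little more carefully, since that is the only place where something could go wrong. Suppose $C,C'\in\fcomps$ have the same signature, i.e.\ $X_C=X_{C'}=:S$ and there is an isomorphism $\theta\colon G[C\cup S]\to G[C'\cup S]$ that is the identity on $S$ (this follows from equality of the $(S,\delta)$-folios \emph{as abstract objects} together with $C,C'$ being arbitrary — wait, equality of folios does \emph{not} give an isomorphism of the graphs). I need to be slightly more careful: equality of the $(S,\delta)$-folios of $G[C\cup S]$ and $G[C'\cup S]$ is exactly what Lemma~\ref{lem:folio_over_separator} needs, and it does \emph{not} require the graphs themselves to be isomorphic. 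So the correct argument is: let $H_0 := G-C'$. The separation $(C'\cup X,\ V(H_0)\cup S)$ — here $V(H_0)=V(G)\setminus C'$ — is a separation of $G$ with separator $S\subseteq X$. By Lemma~\ref{lem:folio_over_separator}, the $(X,\delta)$-folio of $G$ is $\xi\big(\{\Ff_{C'},\Ff_{H_0}\},X,\delta\big)$ where $\Ff_{C'}$ is the $(S,\delta)$-folio of $G[C'\cup S]$ and $\Ff_{H_0}$ is the $(X,\delta)$-folio of $H_0=G-C'$. Now apply the same lemma to the separation $(C\cup X,\ (V(G)\setminus C))$ of $G$: the $(X,\delta)$-folio of $G$ equals $\xi\big(\{\Ff_{C},\Ff_{G-C}\},X,\delta\big)$, which is of no direct use. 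Instead I should compare $G$ with $G-C'$ directly: $G-C'$ itself, viewed via the separation $(X,\ V(G-C')) = $ trivial, isn't it. The cleanest route: by Lemma~\ref{lem:folio_over_separator} the $(X,\delta)$-folio of any graph $K$ with $V(K)\supseteq X$ that decomposes as a separation $(C'\cup X, B)$ with separator $S$ depends only on $(S,\delta)$-folio of $K[C'\cup X]$ and $(X,\delta)$-folio of $K[B]$; taking $K=G$ and $K=(G-C')$ with the degenerate separation $(S, V(G-C'))$ for the latter doesn't match. The honest fix, which I will carry out in the full proof, is: equality of the $(S,\delta)$-folios of $G[C\cup S]$ and $G[C'\cup S]$ means, by the first bullet of Lemma~\ref{lem:folio_over_separator} applied with $A=C'\cup X$, $B=V(G)\setminus C'$ (so $A\cap B=S\subseteq X$ since $N(C')\cap X = S$ and $C'$ has no neighbours outside $X$ in $V(G)\setminus C'$), that the $(X,\delta)$-folio of $G$ equals the $(X,\delta)$-folio of the graph obtained from $G$ by replacing $G[C'\cup S]$ with \emph{any} graph having the same $(S,\delta)$-folio and the same induced graph on $S$ — in particular, by the one-vertex-per-new-component... no. Simplest of all: define the signature instead to be the pair $\big(G[C\cup S]\text{ up to iso fixing } S\big)$ is too fine. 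I will take the signature to be $\big(S,\ \text{iso type of the size-}{\le}f_0(|X|,\delta)\text{ preserver chosen by a canonical rule}\big)$; two components with the same signature then have $G[C\cup S]$ and $G[C'\cup S]$ sharing a common folio-preserver, hence equal $(S,\delta)$-folios, and the replacement argument above (replace the $C'$-side by the $C$-side, which reproduces $G$ exactly since $G[C'\cup S]$ was there originally — no wait, I'd be replacing it by $G[C\cup S]$, not $G[C'\cup S]$). Ugh. Here is the resolution I will actually use and it is clean: \emph{we do not replace anything}. We simply note that $G-C'$ has the same $(X,\delta)$-folio as $G$ because, by Lemma~\ref{lem:folio_over_separator} with the separation $(C'\cup X,\ V(G)\setminus C')$ of $G$ and with the separation $(S\cup X,\ V(G)\setminus C')=( X, V(G)\setminus C')$ of $G-C'$ (degenerate: separator $\subseteq X$, so the folio of $G-C'$ is determined by the folio of $(G-C')[X]=G[X]$ and the folio of $G-C'$ itself — vacuous). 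So that doesn't work either; deleting $C'$ genuinely changes the folio \emph{unless} $C'$'s contribution is subsumed by $C$'s. And that is exactly the content: with $C$ still present (and $C\ne C'$, $C\notin$ deleted set), $G-C'$ contains $C$ with $N_{G-C'}(C)=N_G(C)$ since $C,C'$ are distinct components of $G-X$ so non-adjacent; the $(S,\delta)$-folio of $(G-C')[C\cup S]=G[C\cup S]$ equals that of $G[C'\cup S]$ by equal signatures; therefore by Lemma~\ref{lem:folio_over_separator} (separation $(C\cup X, (V(G)\setminus C)\setminus C')$ of $G-C'$, separator $S\subseteq X$) the $(X,\delta)$-folio of $G-C'$ equals that of the graph obtained from $G-C'$ by swapping $G[C\cup S]$ for $G[C'\cup S]$ — which is precisely $G$. \textbf{The main obstacle} is therefore purely bookkeeping: stating the signature so that equal signatures yield equal $(S,\delta)$-folios (handled via Lemma~\ref{lem:preserverbound}) and invoking Lemma~\ref{lem:folio_over_separator} along the right separation to turn that into equality of $(X,\delta)$-folios of $G-C'$ and $G$; there is no genuine difficulty beyond choosing the separations correctly.
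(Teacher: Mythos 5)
Your approach is genuinely different from the paper's, and unfortunately it contains a gap that is not just bookkeeping. You pigeonhole on component \emph{signatures} (the attachment set $S=N(C)$ together with the $(S,\delta)$-folio of $G[C\cup S]$), whereas the paper pigeonholes on the \emph{increasing chain} of $(X,\delta)$-folios of $G_0\subseteq G_1\subseteq\cdots\subseteq G_\ell=G$, where $G_i=G-\bigcup_{j>i}C_j$. These are not interchangeable, and the signature version does not work.

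The problematic step is the very last sentence of your replacement argument. You take $G-C'$ (which still contains $C$), apply \Cref{lem:folio_over_separator} along the separation that isolates $C$, and claim that swapping $G[C\cup S]$ for $G[C'\cup S]$ ``is precisely $G$.'' It is not: in $G-C'$ you have removed $C'$, and the swap removes $C$ and adds back $C'$, so the result is $G-C$, not $G$. Carried out correctly, your argument proves that the $(X,\delta)$-folios of $G-C$ and $G-C'$ coincide --- which is trivially true since $C$ and $C'$ play symmetric roles once they share a signature --- and says nothing about whether either of them equals the folio of~$G$. The deeper problem is that the underlying claim (``two components with the same signature $\Rightarrow$ one of them is deletable'') is simply false. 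Take $X=\varnothing$, $\delta=2$, and $G$ consisting of two isolated vertices $C$, $C'$: they have identical signatures, yet the $(X,\delta)$-folio of $G$ contains the two-vertex edgeless graph while the folio of $G-C'$ does not. (This does not contradict the lemma, because here $|\fcomps|=2$ is far below your bound $f(0,2)$; but it does show that the implication you invoke in the pigeonhole step cannot be proved, no matter how the signatures are defined, since signature equality alone is not a sufficient condition for deletability.)

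The ingredient you are missing is \emph{monotonicity}: the $(X,\delta)$-folio of a subgraph is a subset of the $(X,\delta)$-folio of the supergraph, so the folios of $G_0\subseteq G_1\subseteq\cdots\subseteq G_\ell=G$ form a nested chain of subsets of a set whose size is a computable function $g(|X|,\delta)$. For $\ell\ge g(|X|,\delta)+1$ some inclusion must be an equality, say at index $i$, meaning the folios of $G_{i}$ and $G_{i+1}=G_i+C_{i+1}$ coincide. Then \Cref{lem:folio_over_separator} lets you propagate that equality up the chain: for every $j>i$ the folio of $G_j$ is determined by the folio of $G_{j-1}$ and the $(N(C_j),\delta)$-folio of $G[C_j\cup N(C_j)]$, and likewise with $C_{i+1}$ deleted, so by induction the folios of $G_j$ and $G_j-C_{i+1}$ agree, in particular for $j=\ell$. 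No signature accounting is needed, and $f(|X|,\delta)=g(|X|,\delta)+1$ suffices.
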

\begin{proof}
Let $g(|X|,\delta)$ be the number of $X$-rooted graphs with detail $\delta$, and let $f(|X|,\delta) = g(|X|,\delta)+1$.
Assume $|\fcomps| \ge f(|X|,\delta)$.
Let us order the components as $\fcomps = \{C_1,\ldots,C_\ell\}$, and for $i \in [0,\ell]$, let $G_i = G - \left(\bigcup_{j={i+1}}^\ell C_j\right)$.
In particular, $G_\ell = G$, and for $i < \ell$, $G_i = G_{i+1} - C_{i+1}$.

The $(X,\delta)$-folio of $G_{i+1}$ is a superset of the $(X,\delta)$-folio of $G_i$ because any minor model in $G_i$ is also a minor model in $G_{i+1}$.
Now, because $\ell > g(|X|,\delta)$, there exists $i<\ell$ so that the $(X,\delta)$-folios of $G_i$ and $G_{i+1}$ are equal.
In other words, the $(X,\delta)$-folios of $G_{i+1}$ and $G_{i+1}-C_{i+1}$ are equal.
By \Cref{lem:folio_over_separator}, the $(X,\delta)$-folio of $G_j$ depends only on the $(X,\delta)$-folio of $G_{j-1}$ and the $(X,\delta)$-folio of $G[X \cup C_j]$.
It follows that for all $j>i$, the $(X,\delta)$-folios of $G_j$ and $G_j - C_{i+1}$ are equal.
In particular, the $(X,\delta)$-folios of $G$ and $G - C_{i+1}$ are equal.

Clearly, $G[X] = (G-C_{i+1})[X]$, so $G-C_{i+1}$ is an $(X,\delta)$-folio-preserver of $G$.
\end{proof}

\subsubsection{Replacements}
We then define \emph{replacements}.
Let $G$ be a graph and $Y \subseteq V(G)$.
An \emph{$Y$-replacement-map} of $G$ is a pair of functions $(\rpmap,\bd)$, where 
\begin{itemize}[nosep]
\item $\rpmap \colon \cc(G - Y) \to (\text{graphs})$ maps each connected component $C \in \cc(G - Y)$ to a graph $\rpmap(C)$; and
\item $\bd \colon \cc(G - Y) \to 2^{V(\rpmap(C))}$ maps $C \in \cc(G - Y)$ to a set of vertices $\bd(C) \subseteq V(\rpmap(C))$,
\end{itemize}
so that for every $C \in \cc(G-Y)$,
\begin{itemize}[nosep]
\item $\bd(C) = V(\rpmap(C)) \cap V(G)$;
\item $N_G(C) \subseteq \bd(C) \subseteq Y$;
\item $\rpmap(C)[\bd(C)] = G[\bd(C)]$; and
\item the sets $V(\rpmap(C)) - \bd(C)$ for distinct $C \in \cc(G - Y)$ are pairwise disjoint.
\end{itemize}
If $(\rpmap,\bd)$ is a $Y$-replacement-map of $G$, then the \emph{$(Y,\rpmap,\bd)$-replacement} of $G$ is the graph \[G[Y] \cup \bigcup_{C \in \cc(G - Y)} \rpmap(C).\]
Note that the $(Y,\rpmap,\bd)$-replacement of $G$ is uniquely determined by $Y$, $\rpmap$, $\bd$, and~$G$.
Note also that if $H$ is a $(Y,\rpmap,\bd)$-replacement of $G$, then $G[Y] = H[Y]$.

The function $\bd$ in this definition is in principle redundant, as it can be determined from $\rpmap$ and $G$ as $\bd(C) = V(\rpmap(C)) \cap V(G)$.
However, often the intersection of the vertex sets of two different graphs is not explicitly defined because it does not matter, but here it matters so we use $\bd$ to fix the intersection of $V(\rpmap(C))$ and $V(G)$ explicitly.

Observe that if $H$ is the $(Y,\rpmap,\bd)$-replacement of $G$ with $\cc(G - Y) = \{C_1,\ldots,C_\ell\}$ then $H$ can be constructed from $G$ through a sequence of graphs $H_0 = G, H_1, \ldots, H_{\ell-1}, H_\ell = H$, so that $H_i$ is the $(Y_i, \rpmap_i, \bd_i)$-replacement of $H_{i-1}$ with $\cc(H_{i-1} - Y_i) = \{C_i\}$, $\rpmap_i(C_i) = \rpmap(C_i)$, and $\bd_i(C_i) = \bd(C_i)$.
This is useful in proving that some properties are maintained in replacements, as it suffices to focus on the case when $|\cc(G-Y)| = 1$.

We then prove that if each $\rpmap(C)$ is a $(\bd(C),\delta)$-folio-preserver of $G[\bd(C) \cup C]$, then the $(Y,\rpmap,\bd)$-replacement of $G$ is a $(Y,\delta)$-folio-preserver of $G$.
We also construct a corresponding $(X,\delta)$-model-mapper for $X \subseteq Y$ when given the $(\bd(C),\delta)$-model-folios.

\begin{lemma}
\label{lem:foliopreservreplace}
Let $G$ be a graph, $Y \subseteq V(G)$, and $(\rpmap,\bd)$ a $Y$-replacement-map of $G$.
If $H$ is the $(Y,\rpmap,\bd)$-replacement of $G$, and every $\rpmap(C)$ for $C \in \cc(G - Y)$ is a $(\bd(C),\delta)$-folio-preserver of $G[C \cup \bd(C)]$, then $H$ is a $(Y,\delta)$-folio-preserver of $G$.

Moreover, there is an algorithm that, given $G$, $(Y,\rpmap,\bd)$, an integer $k$ so that $|\bd(C)| < k$ for all $C$, a set $X \subseteq Y$ with $|X| \ge k$, and $(\bd(C),\delta)$-model-folios of the graphs $G[C \cup \bd(C)]$ for all $C$, computes the graph $H$ and an $(X,\delta)$-model-mapper from $H$ to $G$, in time $\Oh_{|X|,\delta}(\|G\|+\|H\|)$
\end{lemma}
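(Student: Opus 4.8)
The plan is to handle the combinatorial assertion and the algorithmic assertion separately, in both cases reducing to the case $|\cc(G-Y)|=1$ via the sequential-replacement observation recorded just above the lemma. First I would fix a single component $C\in\cc(G-Y)$ and take $H$ to be the $(Y,\rpmap,\bd)$-replacement of $G$ for the singleton family $\{C\}$. Then $(A,B)\coloneqq(C\cup\bd(C),Y)$ is a separation of $G$ whose separator is $A\cap B=\bd(C)$ (using $N_G(C)\subseteq\bd(C)$ and $C\cap Y=\emptyset$), and $(A',B)\coloneqq(V(\rpmap(C)),Y)$ is a separation of $H$ with the \emph{same} separator $\bd(C)$ (using $V(\rpmap(C))\cap V(G)=\bd(C)\subseteq Y$ and that in $H$ the gadget $\rpmap(C)$ meets $G[Y]$ only in $\bd(C)$). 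Applying \Cref{lem:folio_over_separator} with the distinguished set taken to be $Y$: on the $G$-side the two halves are $G[A]=G[C\cup\bd(C)]$ with root set $\bd(C)$ and $G[B]=G[Y]$ with root set $Y$, and on the $H$-side they are $H[A']=\rpmap(C)$ with root set $\bd(C)$ and $H[B]=H[Y]=G[Y]$ with root set $Y$. Since $\rpmap(C)$ is a $(\bd(C),\delta)$-folio-preserver of $G[C\cup\bd(C)]$, the two $(\bd(C),\delta)$-folios involved coincide, while the two $(Y,\delta)$-folios of $G[Y]$ are literally the same; hence the ``uniquely determined'' clause of \Cref{lem:folio_over_separator} forces the $(Y,\delta)$-folios of $G$ and of $H$ to be equal, and combined with $H[Y]=G[Y]$ this says exactly that $H$ is a $(Y,\delta)$-folio-preserver of $G$. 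For arbitrary $\cc(G-Y)=\{C_1,\dots,C_\ell\}$ I would invoke the sequence $G=H_0,H_1,\dots,H_\ell=H$ from that observation; since the $C_i$ are pairwise disjoint and $\bd(C_i)\subseteq Y$ we have $H_{i-1}[C_i\cup\bd(C_i)]=G[C_i\cup\bd(C_i)]$, so the one-component argument gives that $H_i$ is a $(Y,\delta)$-folio-preserver of $H_{i-1}$, and transitivity of the folio-preserver relation closes the induction.

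For the algorithmic assertion, computing $H=G[Y]\cup\bigcup_{C\in\cc(G-Y)}\rpmap(C)$ is immediate in time $\Oh(\|G\|+\|H\|)$. To build the $(X,\delta)$-model-mapper, note that $X\subseteq Y$ together with the combinatorial part already gives that the $(X,\delta)$-folios of $G$ and $H$ coincide, so the family of $X$-rooted graphs in the output is read straight off the given model-folio of $H$; what remains is, for each such rooted graph $(H',\pi)$ with its model $\model_H$ in $H$, to turn $\model_H$ into a model of $(H',\pi)$ in $G$. I would do this by local surgery, one gadget at a time, using the cut-and-reglue mechanism underlying \Cref{lem:folio_over_separator}. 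For a component $C$, restricting $\model_H$ to $A'=V(\rpmap(C))$ yields a $\bd(C)$-rooted minor $(H'_A,\pi_A)$ of $\rpmap(C)$ together with the bounded bookkeeping recording how it re-assembles into $(H',\pi)$ across $\bd(C)$; a short counting step — each connected component of $\model_H(u)\cap A'$ either meets $\bd(C)$ or equals $\model_H(u)$, and in the latter case $\model_H(u)\cap X=\emptyset$ because $A'\setminus\bd(C)$ is disjoint from $Y\supseteq X$, so $\pi(u)=\emptyset$ — shows $(H'_A,\pi_A)$ has detail at most $\delta$, hence lies in the $(\bd(C),\delta)$-folio of $\rpmap(C)$, which by hypothesis equals that of $G[C\cup\bd(C)]$. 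Therefore the supplied $(\bd(C),\delta)$-model-folio of $G[C\cup\bd(C)]$ contains a model of $(H'_A,\pi_A)$, and I splice it in place of the restriction of $\model_H$ to $A'$, gluing along $\bd(C)$; this is consistent because the roots $\bd(C)$ pin the gluing and $\rpmap(C)[\bd(C)]=G[\bd(C)]$. Doing this for every $C$ is legitimate since the gadgets $\rpmap(C)$ and the components $C$ are pairwise disjoint and each surgery touches only $V(\rpmap(C))\cup C\cup\bd(C)$; the outcome is a model of $(H',\pi)$ in $G$, produced in time $\Oh\big(\sum_{C}(|V(\rpmap(C))|+|C|+|\bd(C)|)\big)=\Oh(\|G\|+\|H\|)$ per rooted graph, and there are $\Oh_{|X|,\delta}(1)$ rooted graphs, so the total time is $\Oh_{|X|,\delta}(\|G\|+\|H\|)$.

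The hard part will be this ``local surgery'' step: making precise that the restriction of a model of $H$ to a single gadget is captured, up to the fixed bound $\delta$, by an object living in the $(\bd(C),\delta)$-folio, and that the replacement model drawn from the supplied model-folio can be reinstalled knowing only that object and its interface with $\bd(C)$. I would handle this by leaning directly on the internals of the proof of \Cref{lem:folio_over_separator} in \Cref{sec:appendix_foliocompose}, which already establishes such a ``cut a model along a small separator into two bounded-detail rooted minors, then recombine'' mechanism with $\combfolio$ depending only on the two sides' (model-)folios, and on the detail-counting argument above to certify the $\delta$-bound. The remaining points — relabelling $(H'_A,\pi_A)$ to the isomorphism-class representative stored in the model-folio, and verifying that the spliced map meets the three conditions of \Cref{def:rooted-minor} — are routine manipulations of objects of size $\Oh_{|X|,\delta}(1)$ plus $\Oh(|C|+|\bd(C)|)$ of writing.
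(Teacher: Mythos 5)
Your proposal matches the paper's proof essentially step-for-step: both reduce to the single-component case, establish the combinatorial claim by applying \Cref{lem:folio_over_separator} to the separations $(C\cup\bd(C),Y)$ of $G$ and $(V(\rpmap(C)),Y)$ of $H$, and build the model-mapper by cutting each given model along $\bd(C)$ into a $\bd(C)$-rooted minor of detail at most $\delta$ (your counting argument for the detail bound is exactly the paper's) and splicing in a replacement model drawn from the supplied $(\bd(C),\delta)$-model-folio. The ``hard'' local-surgery step you flag is handled in the paper in precisely the way you sketch — forming the rooted graph $(J_C,\roots_C)$ on the connected components of $\model_H(v)\cap V(\rpmap(C))$ in $\rpmap(C)$, looking up a model $\model_C$ of it in the model-folio, and replacing each component $D$ by $\model_C(D)$, with the gluing pinned by $\roots_C(D)=D\cap\bd(C)$ — so there is no genuine gap.
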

\begin{proof}
Let us first prove that $H$ is a $(Y,\delta)$-folio-preserver of $G$.
By the arguments in the discussion above, we can without loss of generality assume that $|\cc(G-Y)| = \{C\}$.
We have that $(C \cup \bd(C), Y)$ is a separation of $G$, and $(V(\phi(C)), Y)$ is a separation of $H$.
By \Cref{lem:folio_over_separator}, the $(Y,\delta)$-folio of $G$ depends only on the $(\bd(C), \delta)$-folio of $G[C \cup \bd(C)]$ and the $(Y,\delta)$-folio of $G[Y]$.
Similarly, the $(Y,\delta)$-folio of $H$ depends only on the $(\bd(C), \delta)$-folio of $\phi(C)$ and the $(Y,\delta)$-folio of $H[Y]$.
We have that $H[Y] = G[Y]$, implying that their $(Y,\delta)$-folios are equal.
We also have that the $(\bd(C),\delta)$-folio of $G[C \cup \bd(C)]$ is equal to the $(\bd(C),\delta)$-folio of $\rpmap(C)$.
Therefore, by \Cref{lem:folio_over_separator}, the $(Y,\delta)$-folios of $G$ and $H$ are equal.
Because also $G[Y] = H[Y]$, $H$ is a $(Y,\delta)$-folio-preserver of $G$.

Clearly, given $(Y,\rpmap,\bd)$, and $G$, the graph $H$ can be constructed in $\Oh(\|G\| + k^2 \|H\|)$ time.

The model mapper data structure stores $G$, $H$, $(Y,\rpmap,\bd)$, and the $(\bd(C),\delta)$-model-folios of the graphs $G[C \cup \bd(C)]$ for all $C$.
It remains to show that given a $Y$-rooted graph $(J,\roots)$ and a minor model $\model_H$ of $(J,\roots)$ in $H$, we can compute a minor model $\model_G$ of $(J,\roots)$ in $G$ in time $\Oh_{k,\delta}(\|G\|+\|H\|)$.
Note that as $X \subseteq Y$ and $|X| \ge k$, this implies that an $(X,\delta)$-model-folio of $H$ can be translated into an $(X,\delta)$-model-folio of $G$ in time $\Oh_{|X|,\delta}(\|G\|+\|H\|)$.

First assume $\cc(G-Y) = \{C\}$.
For each $v \in V(J)$ let $\mcomps_v = \cc(\phi(C)[\model_H(v) \cap V(\phi(C))])$.
Note that if $\model_H(v)$ intersects $Y$, then because $H[\model_H(v)]$ is connected, all sets $D \in \mcomps_v$ intersect $\bd(C)$.
Otherwise, $\mcomps_v = \{\model_H(v)\}$.
Let then $\mcomps = \bigcup_{v \in V(J)} \mcomps_v$.
Note that $\mcomps$ is a family of disjoint connected vertex sets in $\phi(C)$, and all but at most $\delta$ sets in $\mcomps$ intersect $\bd(C)$.
Now, the rooted graph $(J_C,\roots_C)$ with $V(J_C) = \mcomps$, $E(J_C)$ having an edge between $D_1,D_2 \in \mcomps$ whenever there is an edge between $D_1$ and $D_2$ in $\phi(C)$, and $\roots_C(D) = D \cap \bd(C)$, is a $\bd(C)$-rooted minor of $\phi(C)$ with detail at most $\delta$.
Therefore, $(J_C,\roots_C)$ is in the $(\bd(C),\delta)$-folio of $\phi(C)$, and therefore also in the $(\bd(C),\delta)$-folio of $G[C \cup \bd(C)]$.

We first compute the $\bd(C)$-rooted graph $(J_C,\roots_C)$ explicitly from $\model_H$, and then find from the $(\bd(C),\delta)$-model-folio of $G[C \cup \bd(C)]$ a model $\model_C$ of $(J_C,\roots_C)$.
Now, we set for each $v \in V(J)$ that 
\[\model_G(v) = (\model_H(v) \setminus C) \cup \bigcup_{D \in \mcomps_v} \model_C(D).\]
Clearly, $\model_G$ is a minor model of $(J,\roots)$ in $G$.

When $\cc(G-Y) = \{C\}$ this process can be implemented in $\Oh_{k,\delta}(\|G\| + \|H\|)$ time in a straightforward way.
When $|\cc(G-Y)| > 1$, we repeat this for all components, and note that when considering component $C$, we edit the model only on vertices in $\phi(C)$, and how we edit the model depends only on its intersection with $\phi(C)$, so with appropriate data structures, we can achieve the editing for a each $C$ in time $\Oh_{k,\delta}(\|\phi(C)\| + \|G[\bd(C) \cup C]\|)$, which over all $C \in \cc(G-Y)$ sums up to $\Oh_{k,\delta}(\|G\| + \|H\|)$.
\end{proof}

\subsubsection{Intrusion-preservers}
For an efficient implementation of the process of iteratively replacing chips, we need to control how replacements can create new carvable vertices.
For this we will define \emph{intrusions} and \emph{intrusion-preservers}.

To define them, we first need to define \emph{profiles}.
If $G$ is a graph and $X \subseteq V(G)$, then the \emph{$X$-profile} of $G$ is the graph $\xprof$ with the vertex set $V(\xprof) = X$ and having an edge between $u,v \in X$ whenever $u$ and $v$ are in the same connected component of $G$.

We then define \emph{intrusions}.
Let again $G$ be a graph and $X \subseteq V(G)$.
Let also $p \in \Z$ be an integer, $X_I,X_S \subseteq X$ be disjoint subsets of $X$, and $\xprof$ be a graph on the vertex set $X_I$.
We say that a set $C \subseteq V(G)$ is an \emph{$(X,p,X_I,X_S,\xprof)$-intrusion} to $G$ if
\begin{itemize}[nosep]
\item $C \cap X = X_I$;
\item every connected component of $G[C]$ intersects $X_I$;
\item the $X_I$-profile of $G[C]$ is a supergraph of $\xprof$; and
\item $|N(C) \setminus X_S| \le p$.
\end{itemize}

Note that when $p<0$, no intrusions exist.
We say that a graph $H$ is an \emph{$(X,p)$-intrusion-preserver} of a graph $G$ if 
\begin{itemize}[nosep]
\item $X \subseteq V(G) \cap V(H)$; 
\item $G[X] = H[X]$; and 
\item for every $p' \le p$, disjoint sets $X_I,X_S \subseteq X$, and a graph $\xprof$ on the vertex set $X_I$, if there exists an $(X,p',X_I,X_S,\xprof)$-intrusion $D$ to $H$, then there exists an $(X,p',X_I,X_S,\xprof)$-intrusion $C$ to $G$ with $|C| \ge |D|$.
\end{itemize}
Note that if $H$ is an $(X,p)$-intrusion-preserver of $G$, then $H$ is also an $(X',p')$-intrusion-preserver of $G$ for all $X' \subseteq X$, and $p' \le p$.

Let us then show that if we use intrusion-preservers for a replacement, then the replacement does not create new carvable vertices outside of the newly added graphs.
We will talk about chips without the $\alpha$ parameter, so let us call an $(X,k,0)$-chip an $(X,k)$-chip.

\begin{lemma}
\label{lem:replacebyintpresmaintain}
Let $G$ be a graph, $X \subseteq Y \subseteq V(G)$, and $(\rpmap,\bd)$ a $Y$-replacement-map of $G$.
Let also $k \in \N$, and let $H$ be the $(Y,\rpmap,\bd)$-replacement of $G$ so that every $\rpmap(C)$ for $C \in \cc(G - Y)$ is an $(\bd(C),k-1)$-intrusion-preserver of $G[C \cup \bd(C)]$.
Then, for any $(X,k)$-chip $D$ of $H$ that intersects $Y$, there exists an $(X,k)$-chip $D'$ of $G$ so that $D' \cap Y = D \cap Y$ and $|D'| \ge |D|$.
\end{lemma}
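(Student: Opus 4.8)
The plan is to reduce the general case to the special case $|\cc(G-Y)| = 1$ using the standard telescoping argument already mentioned in the text: any $(Y,\rpmap,\bd)$-replacement can be realised as a sequence $H_0 = G, H_1, \ldots, H_\ell = H$, where $H_i$ is obtained from $H_{i-1}$ by replacing a single component $C_i$ by $\rpmap(C_i)$, and where each $\rpmap(C_i)$ is still a $(\bd(C_i),k-1)$-intrusion-preserver of the relevant subgraph (this last point needs a short check: the subgraph $H_{i-1}[C_i \cup \bd(C_i)]$ equals $G[C_i \cup \bd(C_i)]$ because $C_i$ was untouched by the earlier replacements, and $\bd$ restricted to $C_i$ is the same). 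So it suffices to handle one replacement at a time, and then compose: an $(X,k)$-chip $D$ of $H_i$ intersecting $Y$ gets pulled back to $H_{i-1}$ keeping its trace on $Y$ and not shrinking, and if $D$ still intersects $Y$ after the pullback (which it does, since $D \cap Y = D_{\text{new}} \cap Y$) we iterate. The only subtlety is that after pulling back to $H_{i-1}$, the resulting chip might now intersect a component $C_j$ with $j < i$; but since we process the telescoping in reverse order $i = \ell, \ell-1, \ldots, 1$, each $H_{i-1}[C_j \cup \bd(C_j)]$ for $j \le i-1$ is still a replacement of $G[C_j \cup \bd(C_j)]$, so the inductive hypothesis applies.

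For the single-component case, write $\cc(G-Y) = \{C\}$, set $B = \bd(C)$, $A = \rpmap(C)$, and let $D$ be an $(X,k)$-chip of $H$ intersecting $Y$. The key observation is that $D$ decomposes across the separator $B$: outside $A$, the graphs $G$ and $H$ agree, so $D \setminus V(A)$ is a fixed vertex set, and what we must do is replace $D \cap V(A)$ (which lies in $V(A)$ and meets $V(A) \cap V(G) = B$ in the set $X_I := D \cap B$) by a suitable set inside $C \cup B$. Here is where the intrusion-preserver kicks in. Consider the graph $G[C \cup B]$ with the distinguished set $B$. Set $X_I = D \cap B$, and $X_S = B \setminus X_I$ — these are disjoint subsets of $B$. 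Take $\xprof$ to be the $X_I$-profile of the graph induced by $D \cap V(A)$ in $A$ (more precisely, of $A[(D \cap V(A))]$). Take $p' = |N_H(D) \cap V(A) \setminus B|$, i.e., the number of neighbours of $D$ that are "internal" to $A$. One needs to verify that $D \cap V(A)$ is an $(B, p', X_I, X_S, \xprof)$-intrusion to $A$: the condition $C \cap B = X_I$ holds by definition of $X_I$, every component of $A[D \cap V(A)]$ touches $B$ because $H[D]$ is connected and every component of $D \cap V(A)$ must connect to the rest of $D$ which lives outside $V(A) \setminus B$ (using that $D$ intersects $Y \supseteq B$ makes this cleaner — actually one should argue it touches $B$ since otherwise that component would be a component of $H[D]$ disjoint from the rest, contradicting connectivity), the profile condition is an equality hence in particular a supergraph, and $|N(D \cap V(A)) \setminus X_S| = |N_A(D \cap V(A)) \setminus X_S|$: vertices in $N_A(D \cap V(A))$ are either in $B \setminus X_I = X_S$ (discarded) or internal to $A$, and the latter equal the $p'$ internal neighbours. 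Then the intrusion-preserver property gives an $(B, p', X_I, X_S, \xprof)$-intrusion $C^\star$ to $G[C \cup B]$ with $|C^\star| \ge |D \cap V(A)|$.

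Finally, set $D' = (D \setminus V(A)) \cup C^\star$. We check the three things we want. (i) $D' \cap Y = D \cap Y$: outside $V(A)$ nothing changed, and inside, $D \cap V(A) \cap Y = X_I$ (since $V(A) \cap V(G) = B \subseteq Y$ and $D \cap B = X_I$) while $C^\star \cap Y = C^\star \cap B = X_I$ by the intrusion condition $C^\star \cap B = X_I$. (ii) $|D'| \ge |D|$: $|C^\star| \ge |D \cap V(A)|$ and the two parts of $D'$ are disjoint (as $C^\star \subseteq C \cup B$ and $C \cap (D \setminus V(A)) = \emptyset$ since $C \cap V(G) $ vertices are exactly those removed in the replacement — i.e. $C$ is disjoint from $V(H) \setminus V(A)$; need $C \subseteq V(A) \setminus B$? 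No: $C$ is the component removed, so $C \cap V(H) = \emptyset$ except those of $C$ surviving — actually in a replacement $C$ is entirely discarded and $V(A) \setminus B$ is fresh, so $C \cap (D \setminus V(A)) \subseteq C \cap V(H) = \emptyset$; the overlap of the two parts of $D'$ is exactly $C^\star \cap B \cap (D \setminus V(A)) = X_I \cap (D \setminus V(A)) = \emptyset$). (iii) $D'$ is connected: $G[D']$ glues $G[D \setminus V(A)]$ and $G[C^\star]$ along the shared set $X_I$, and since every component of $G[C^\star]$ meets $X_I$ and (by connectivity of $H[D]$ and the profile condition) the components of $D \setminus V(A)$ together with the $X_I$-profile of $A[D \cap V(A)]$ — which is a subgraph of that of $G[C^\star]$ — are enough to connect everything, one gets $G[D']$ connected. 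Also $D' \cap X = (D \setminus V(A)) \cap X \cup X_I = D \cap X = \emptyset$ since $D$ is a chip and $X \subseteq Y$, $D \cap V(A) \cap X = X_I \cap X = \emptyset$ as $X_I \subseteq X_I \subseteq B$ but wait we need $X_I \cap X = \emptyset$: indeed $D \cap X = \emptyset$ as $D$ is an $(X,k)$-chip, so $X_I = D \cap B \subseteq D$ is disjoint from $X$. Finally $|N_G(D')| \le k-1 < k$: $N_G(D') \subseteq (N_H(D) \setminus V(A)) \cup (N_G(C^\star) \setminus X_I)$, the first part has size $|N_H(D)| - p' \le (k-1) - p'$ and the second, by the intrusion condition, satisfies $|N_G(C^\star) \setminus X_S| \le p'$, hence $|N_G(C^\star) \setminus X_I| = |N_G(C^\star) \setminus (X_I \cup X_S)| + |N_G(C^\star) \cap X_S| \le p' + |X_S|$ — hmm, this needs care, as $X_S$ could contribute. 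The right bookkeeping is to note $N_G(D') \setminus X_I \subseteq (N_H(D) \setminus (V(A) \cup X_I)) \cup (N_G(C^\star) \setminus X_I)$ and both pieces together are bounded by the original $|N_H(D)| < k$ because the "$X_S$ part" of $N_G(C^\star)$ is exactly a subset of $N_H(D) \cap B$, already counted; I will set up the exact inclusion carefully so that $|N_G(D')| \le |N_H(D)| \le k-1$.

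The main obstacle I expect is precisely this last connectivity-and-boundary bookkeeping across the separator $B$ — making sure the profile $\xprof$ is chosen so that reconnecting $G[C^\star]$ to $G[D \setminus V(A)]$ preserves connectivity (the profile must record enough of which pairs of $X_I$-vertices need to be joined through $A$), and making sure the neighbourhood count does not secretly blow up through the $X_S$ vertices. The telescoping reduction is routine, and the definitions of intrusion and intrusion-preserver are tailored exactly for this argument, so once the single-component bookkeeping is pinned down the proof is short.
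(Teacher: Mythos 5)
Your overall strategy is the one the paper uses: reduce to the single-component case by telescoping, set up an intrusion with $X_I = D \cap B$, take the profile $\xprof$ to be the $X_I$-profile of the part of the chip living inside the replacement graph, apply the intrusion-preserver to get $C^\star$, and stitch $D' = (D \setminus V(A)) \cup C^\star$. The telescoping reduction, the choice of $X_I$, the profile, and the parameter $p'$ (which equals the paper's $|N_C|$) are all correct, and your connectivity argument via gluing along $X_I$ and the profile-supergraph condition is the right idea.

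The genuine gap is exactly where you flagged it: the choice $X_S = B \setminus X_I$ is wrong, and no amount of ``careful inclusion bookkeeping'' will rescue it. The intrusion condition $|N(C^\star) \setminus X_S| \le p'$ gives you \emph{no control whatsoever} over $N(C^\star) \cap X_S$. With your $X_S = B \setminus X_I$, the intrusion-preserver is free to hand you a $C^\star$ whose neighbourhood includes arbitrarily many vertices of $B \setminus X_I$ that were never neighbours of the original chip $D$; each such vertex is a fresh contribution to $N_G(D')$, and since $|B|$ can be as large as $k-1$, you can overshoot $k$. Your claim that ``the $X_S$ part of $N_G(C^\star)$ is exactly a subset of $N_H(D) \cap B$'' is not guaranteed by anything in the definition of an intrusion-preserver.

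The fix, which is what the paper does, is to take $X_S = N_H(D) \cap B$ --- the part of the \emph{existing} neighbourhood of $D$ that lies in the separator --- rather than all of $B \setminus X_I$. This set is disjoint from $X_I$ (since $X_I \subseteq D$ but $X_S \subseteq N_H(D)$, which is disjoint from $D$), and now the bound works out cleanly: any vertex of $B$ that would become a \emph{new} neighbour of $D'$ lies outside $X_S$, so it is counted by the $|N(C^\star) \setminus X_S| \le p'$ constraint, while vertices of $B$ that were already in $N_H(D)$ are allowed (and are absorbed into the $|N_H(D) \cap Y|$ count). Writing $N_Y = N_H(D) \cap Y$ and $N_C = N_H(D) \setminus Y$, this gives $|N_G(D')| \le |N_C| + |N_Y| = |N_H(D)| < k$. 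A side remark: even the ``first part'' size you quote, $|N_H(D) \setminus V(A)| = |N_H(D)| - p'$, is off, since $N_H(D) \cap V(A)$ also contains the $B$-neighbours; this does not by itself sink the argument, but it is another symptom of the $X_S$ bookkeeping not being pinned down.
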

\begin{proof}
Assume without loss of generality that $\cc(G - Y) = \{C\}$, and denote $H_C = \rpmap(C)$ and $B = \bd(C)$.
Let $D$ be an $(X,k)$-chip of $H$ so that $D$ intersects $Y$.
We will construct an $(X,k)$-chip $D'$ of $G$ so that $D' \cap Y = D \cap Y$ and $|D'| \ge |D|$.

First, if $D$ is disjoint with $B$, then because $D$ intersects $Y$, we have $N_H[D] \subseteq Y$, implying that we can take $D' = D$.
Then assume that $D \cap B$ is non-empty.

Let $D_C = D \cap V(H_C)$.
Let also $N_Y = N_H(D) \cap Y$ and $N_C = N_{H_C}(D_C) \setminus N_Y$, and note that $N_H(D)$ is the disjoint union of $N_Y$ and $N_C$.
Let $\xprof$ be the $(D \cap B)$-profile of $H[D_C]$.
Because $D$ intersects $B$ and is connected, we have that $D_C$ is a $(B, |N_C|, D \cap B, N_Y \cap B, \xprof)$-intrusion to $H_C$.
Because $H_C$ is an $(B,k-1)$-intrusion-preserver of $G[C \cup B]$ and $|N_C| \le k-1$, there is a $(B, |N_C|, D \cap B, N_Y \cap B, \xprof)$-intrusion $D'_C$ to $G[C \cup B]$ with $|D'_C| \ge |D_C|$.
We claim that $D' = (D \setminus D_C) \cup D'_C$ is an $(X,k)$-chip in $G$ with $|D'| \ge |D|$.
Note that $D' \cap Y = D \cap Y$.

First, because $D'_C \cap B = D_C \cap B$ and $|D'_C| \ge |D_C|$, we have that $|D'| \ge |D|$.
Then, $D' \cap X = \emptyset$ is implied by the fact that $D' \cap Y = D \cap Y$.

\begin{claim}
$|N_G(D')| \le |N_H(D)| < k$.
\end{claim}
\begin{claimproof}
First, note that $N_G(D') = (N_G(D'_C) \cap (C \cup B)) \cup (N_G(D \cap Y) \cap Y)$.
It holds that $N_G(D \cap Y) \cap Y \subseteq N_Y$. 
By the definition of intrusion, $|N_G(D'_C) \cap (C \cup B) - N_Y| \le |N_C|$.
These imply that $|N_G(D')| \le |N_C| + |N_Y| \le |N_H(D)|$.
\end{claimproof}

\begin{claim}
$G[D']$ is connected.
\end{claim}
\begin{claimproof}
Let $\xprof_Y$ be the $(D \cap B)$-profile of $H[D \cap Y]$.
Because $H[D]$ is connected, the union of $\xprof$ and $\xprof_Y$ is connected.
Let $\xprof'$ be the $(D \cap B)$-profile of $G[D'_C]$.
Because $\xprof'$ is a supergraph of $\xprof$ and $H[D \cap Y] = G[D' \cap Y]$, the union of the $(D \cap B)$-profiles of $G[D'_C]$ and $G[D' \cap Y]$ is connected, which together with the facts that every connected component of $G[D'_C]$ intersects $D \cap B$ and every connected component of $G[D' \cap Y]$ intersects $D \cap B$ implies that $G[D']$ is connected. 
\end{claimproof}

These two claims finish the proof.
\end{proof}

We define that an {\em{$(X,\delta,p)$-preserver}} of a graph $G$ is a graph $H$ that is both an $(X,\delta)$-folio-preserver and an $(X,p)$-intrusion-preserver.
Note that an $(X,\delta)$-folio-preserver is an $(X,\delta,p)$-preserver for all $p<0$ because then no intrusions exist, but not necessarily for any $p \ge 0$.

Our goal is to give a linear-time algorithm that, given a graph $G$ and the $(X,\delta)$-folio of $G$, constructs an $(X,\delta,p)$-preserver of $G$ whose size is bounded by a function of $|X|$, $\delta$, and $p$.
We give this algorithm by first showing that an $(X,\delta,p)$-preserver of bounded size exists, and then giving a linear-time algorithm that verifies whether a given $H$ is an $(X,p)$-intrusion-preserver of $G$.
We start with the verification algorithm.

\paragraph{Important separators.}
For the verification algorithm, we use a standard tool called \emph{important separators}, introduced by Marx~\cite{DBLP:journals/tcs/Marx06}.
Let $G$ be a graph and $A,S \subseteq V(G)$.
We denote by $\reachx_G(A,S)$ the set of vertices that are reachable from $A \setminus S$ in $G \setminus S$, i.e., $\reachx_G(A,S)$ is the union of those connected components of $G \setminus S$ that contain vertices from $A$.

\begin{definition}[Important separator]
Let $A,B \subseteq V(G)$.
A minimal $(A,B)$-separator $S$ is an \emph{important $(A,B)$-separator} if there exists no $(A,B)$-separator $S'$ such that $|S'| \le |S|$ and $\reachx_G(A,S) \subsetneq \reachx_G(A,S')$.
\end{definition}

The following lemma follows directly from the definition of important separators.
\begin{lemma}
\label{lem:impsepdominate}
For any $(A,B)$-separator $S$, there exists an important $(A,B)$-separator $S'$ so that $|S'| \le |S|$ and $\reachx_G(A,S) \subseteq \reachx_G(A,S')$.
\end{lemma}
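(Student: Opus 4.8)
\textbf{Proof plan for \Cref{lem:impsepdominate}.}

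The statement is a standard consequence of the definition of important separators, and the plan is to prove it by a simple ``maximize the reach'' argument. First I would fix an arbitrary $(A,B)$-separator $S$ and consider the collection $\mathcal{S}$ of all $(A,B)$-separators $S''$ with $|S''| \le |S|$. This collection is nonempty (it contains $S$) and finite. Within $\mathcal{S}$, I would pick a separator $S'$ that is maximal in the following sense: among all members of $\mathcal{S}$, choose $S'$ so that the set $\reachx_G(A,S')$ is inclusion-wise maximal, and subject to that, with $|S'|$ as small as possible. Such an $S'$ exists because $\mathcal{S}$ is finite; note there may be several candidates with maximal reach, and we just take any one of minimum size among those.

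Next I would argue that this chosen $S'$ is in fact a minimal $(A,B)$-separator. If not, some proper subset $S'' \subsetneq S'$ is still an $(A,B)$-separator; then $|S''| < |S'| \le |S|$, so $S'' \in \mathcal{S}$, and since removing vertices can only enlarge the reachable set, $\reachx_G(A,S') \subseteq \reachx_G(A,S'')$. If the inclusion is strict this contradicts maximality of $\reachx_G(A,S')$; if it is an equality, then $S''$ has the same (maximal) reach but strictly smaller size, contradicting the minimality-of-size part of the choice of $S'$. Hence $S'$ is minimal. Then I would verify that $S'$ is an important $(A,B)$-separator: by the definition, we would need an $(A,B)$-separator $S^\ast$ with $|S^\ast| \le |S'|$ and $\reachx_G(A,S') \subsetneq \reachx_G(A,S^\ast)$; but $|S^\ast| \le |S'| \le |S|$ puts $S^\ast \in \mathcal{S}$, so $\reachx_G(A,S^\ast) \subseteq \reachx_G(A,S')$ by maximality, contradicting the strict inclusion. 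So no such $S^\ast$ exists, and $S'$ is important.

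Finally, the two required inequalities hold by construction: $|S'| \le |S|$ since $S \in \mathcal{S}$ and $S'$ has minimum size among separators of maximal reach (and $S$ itself has reach contained in the maximal one, so in particular $|S'|\le |S|$ — more carefully, if $S$ already has maximal reach then $|S'|\le|S|$ by the size-minimization; if $S$ does not, then $S'$ was chosen with larger reach but we still need $|S'| \le |S|$, which follows because the separator of maximal reach that we could have built by iteratively shrinking — hmm). Let me instead streamline: choose $S'$ to be an important $(A,B)$-separator with $\reachx_G(A,S')$ maximal among all important separators $S''$ satisfying $|S''|\le|S|$ and $\reachx_G(A,S)\subseteq\reachx_G(A,S'')$; such $S'$ exists once one checks the set of such important separators is nonempty, which is exactly what the maximization argument above establishes. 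I do not anticipate a real obstacle here — this is a routine extremal argument — the only care needed is the bookkeeping to simultaneously guarantee importance, the size bound $|S'|\le|S|$, and the reach containment $\reachx_G(A,S)\subseteq\reachx_G(A,S')$, which the ``maximal reach, then minimal size'' selection inside $\mathcal{S}$ handles cleanly.
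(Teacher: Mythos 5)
The paper states this lemma as ``follows directly from the definition'' and gives no proof, so your proposal is filling in a gap rather than shadowing an existing argument. Your approach is the right extremal argument, but as written there is a genuine gap, which you yourself half-notice (the ``--- hmm'' trailing off) and then half-fix. The gap: you select $S'$ to maximize $\reachx_G(A,S')$ over \emph{all} of $\mathcal{S}$, and in the final paragraph you need ``$S$ itself has reach contained in the maximal one'' to conclude $\reachx_G(A,S)\subseteq\reachx_G(A,S')$. That is not guaranteed. The inclusion poset of reaches can have many pairwise incomparable maximal elements (this is precisely why there can be as many as $4^k$ important separators, per \Cref{lem:impsepenum}), so the maximal element you happen to pick may have reach incomparable to $\reachx_G(A,S)$.

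The correct fix, which your ``streamlined'' middle sentences gesture at but your last sentence undoes by reverting to ``the selection inside $\mathcal{S}$ handles cleanly'': restrict the collection \emph{before} maximizing. Let $\mathcal{S}' = \{S'' : S''\ \text{an}\ (A,B)\text{-separator},\ |S''|\le|S|,\ \reachx_G(A,S)\subseteq\reachx_G(A,S'')\}$. This is finite and nonempty ($S\in\mathcal{S}'$). Pick $S'\in\mathcal{S}'$ with $\reachx_G(A,S')$ inclusion-wise maximal within $\mathcal{S}'$, breaking ties by smallest $|S'|$. Now $|S'|\le|S|$ and $\reachx_G(A,S)\subseteq\reachx_G(A,S')$ are immediate from membership in $\mathcal{S}'$. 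Your minimality argument goes through verbatim (a proper subset of $S'$ would still lie in $\mathcal{S}'$ and contradict maximality or the size tie-break). For importance, any putative $S^\ast$ with $|S^\ast|\le|S'|\le|S|$ and $\reachx_G(A,S')\subsetneq\reachx_G(A,S^\ast)$ satisfies $\reachx_G(A,S)\subseteq\reachx_G(A,S')\subsetneq\reachx_G(A,S^\ast)$, hence $S^\ast\in\mathcal{S}'$ with strictly larger reach, contradiction. The constraint $\reachx_G(A,S)\subseteq\reachx_G(A,\cdot)$ must be built into the set you optimize over; it does not fall out afterward.
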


The following upper bound and enumeration algorithm for important separators was given implicitly in~\cite{DBLP:journals/algorithmica/ChenLL09} and explicitly in~\cite{DBLP:journals/siamcomp/MarxR14}. See also~\cite[Section 8.2]{platypus} for a textbook exposition for edge~cuts.

\begin{lemma}[\cite{DBLP:journals/algorithmica/ChenLL09,DBLP:journals/siamcomp/MarxR14}]
\label{lem:impsepenum}
There are at most $4^k$ important $(A,B)$-separators of size at most $k$.
Furthermore, there is an algorithm that, given $G$, $A$, $B$, and $k \in \N$, in time $2^{\Oh(k)} \cdot \|G\|$ returns all important $(A,B)$-separators of size at most $k$.
\end{lemma}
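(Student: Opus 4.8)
The statement is classical and we would follow the now-standard branching argument for important separators (Marx; Chen, Liu, and Lu~\cite{DBLP:journals/algorithmica/ChenLL09}; Marx and Razgon~\cite{DBLP:journals/siamcomp/MarxR14}; see also~\cite[Section~8.2]{platypus}), so we only sketch the plan. First I would reduce to the case $A \cap B = \emptyset$: any $(A,B)$-separator contains $A\cap B$, so we may delete $A\cap B$ and decrease $k$ by $|A\cap B|$. It is convenient to prove the sharper statement that the number of important $(A,B)$-separators of size at most $k$ is at most $2^{2k-\lambda}$, where $\lambda=\lambda_G(A,B)$ is the minimum size of an $(A,B)$-separator; since $\lambda\ge 0$ this is at most $4^k$, and $2k-\lambda$ is the measure that drops in the recursion. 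The two structural inputs are the uncrossing properties (submodularity and posimodularity) of the function $Z\mapsto |N_G(Z)|$ over sets $Z$ with $A\subseteq Z\subseteq V(G)\setminus B$ and $N_G(Z)\cap B=\emptyset$: (i) among minimum $(A,B)$-separators there is a unique one, call it $R$, whose reachability set $\reachx_G(A,R)$ is inclusion-wise smallest (the minimum separator ``closest to $A$''); and (ii) every important $(A,B)$-separator $S$ satisfies $\reachx_G(A,R)\subseteq\reachx_G(A,S)$, in particular $R\subseteq \reachx_G(A,S)\cup S$. Property (ii) is essentially \Cref{lem:impsepdominate} combined with uncrossing: if some $v\in R$ lay strictly ``beyond'' $S$, then uncrossing $\reachx_G(A,R)$ with $\reachx_G(A,S)$ produces either an $(A,B)$-separator of size smaller than $\lambda$ (impossible) or one of size at most $|S|$ with strictly larger reach (contradicting importance of $S$).

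Given these facts, I would run the following procedure on a tuple $(G,A,B,k)$. If $\lambda=0$, then $A$ and $B$ lie in different components and the only important separator is $\emptyset$; return it. If $\lambda\ge 1$, compute $R$ and pick any $v\in R$. By (ii), every important $(A,B)$-separator $S$ of size at most $k$ either contains $v$, or has $v\in\reachx_G(A,S)$, and we branch on these two cases:
\begin{itemize}
\item[(i)] $v\in S$: recurse on $(G-v,A,B,k-1)$, and add $v$ to every separator returned. Here $\lambda_{G-v}(A,B)\ge\lambda-1$, so $2k-\lambda$ strictly decreases.
\item[(ii)] $v\notin S$, so $v$ is on the source side: recurse on $(G,A\cup\{v\},B,k)$ (if this creates an edge between $A\cup\{v\}$ and $B$, there is nothing to return). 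Using uncrossing once more, one shows $\lambda_G(A\cup\{v\},B)>\lambda$ --- here it is essential that $v$ belongs to the $A$-closest minimum separator --- so again $2k-\lambda$ strictly decreases.
\end{itemize}
Writing $T(m)$ for the maximum number of separators produced when $2k-\lambda=m$, we get $T(m)\le 2\,T(m-1)$ and $T(0)\le 1$, hence $T(2k-\lambda)\le 2^{2k-\lambda}\le 4^k$. Distinctness of the enumerated candidates holds because the branch taken at each level is forced by whether the chosen vertex lies in the separator, and completeness follows from the case analysis; at the leaves we discard candidates that are not minimal $(A,B)$-separators, that have size larger than $k$, or that fail the importance test (each check being a single max-flow/reachability computation).

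For the running time, the measure $2k-\lambda$ is at most $2k$ at the root and drops by at least one at every step, so the recursion tree has depth at most $2k$ and $2^{\Oh(k)}$ leaves. At each node we need a minimum $(A,B)$-separator together with the set $\reachx_G(A,R)$; maintaining a maximum $(A,B)$-flow of value $\lambda\le k$ along each root-to-node path and performing one augmentation (or one recomputation) at each step costs $\Oh(\|G\|)$ per node by Ford--Fulkerson, since $\lambda\le k=\Oh(1)$ in this regime, for a total of $2^{\Oh(k)}\cdot\|G\|$; the leaf checks add another $\Oh(\|G\|)$ each.

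The main obstacle is the uncrossing bookkeeping in the two branching steps --- in particular proving that moving a vertex of the $A$-closest minimum separator to the source side strictly increases the minimum separator value, which is what forces the measure $2k-\lambda$ to drop in branch (ii) and ultimately yields the $4^k$ bound --- together with the routine but careful vertex-splitting reduction one uses to transfer these submodularity statements, if one prefers to derive them from the edge-cut version rather than proving them directly for the function $|N_G(\cdot)|$.
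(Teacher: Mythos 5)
The paper does not prove \Cref{lem:impsepenum}: it is stated as a black-box citation to Chen--Liu--Lu and Marx--Razgon, with the textbook~\cite[Section~8.2]{platypus} given as a reference for the analogous edge-cut statement. So there is no ``paper's own proof'' to compare against. Your sketch correctly reproduces the standard branching argument from those references: the refined measure $2k-\lambda$, the unique $A$-closest minimum separator $R$ with the reach-domination property, the two-way branch on whether $v\in R$ is inside the separator, the strict increase of $\lambda$ when $v$ is pushed to the source side, the recurrence $T(m)\le 2T(m-1)$, and the uniqueness of the leaf reached by any fixed important separator (giving the $4^k$ count rather than just $4^k$ candidates). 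The running-time accounting via bounded-value Ford--Fulkerson at each of the $2^{\Oh(k)}$ nodes is also standard and correct. This is exactly the proof that the cited works supply, so the sketch faithfully fills the gap the paper leaves by citation.
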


Now we are ready to give the verification algorithm by using important separators.

\begin{lemma}
\label{lem:presverialgo}
There is an algorithm that, given graphs $G$ and $H$, a set $X \subseteq V(G) \cap V(H)$, integers $\delta$ and $p$, and the $(X,\delta)$-folio of $G$, in time $\Oh_{\|H\|,\delta,p}(\|G\|)$ decides whether $H$ is an $(X,\delta,p)$-preserver of $G$.
\end{lemma}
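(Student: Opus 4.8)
The plan is to verify the two defining properties of an $(X,\delta,p)$-preserver separately: that $H$ is an $(X,\delta)$-folio-preserver of $G$, and that $H$ is an $(X,p)$-intrusion-preserver of $G$. Throughout we use that $X \subseteq V(H)$ and $|V(H)| \le \|H\|$, so $|X| = \Oh_{\|H\|}(1)$. The folio-preserver part is easy: we first check $G[X] = H[X]$ in time $\Oh(\|H\|)$, then, since $\|H\|$ is bounded, compute the $(X,\delta)$-folio of $H$ by brute force in time $\Oh_{\|H\|,\delta}(1)$ (enumerating all candidate branch-set assignments in $H$), and compare it with the $(X,\delta)$-folio of $G$, which is given on input. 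The $\Oh(\|G\|)$ term here comes only from reading the input folio.

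For the intrusion-preserver property, observe that it is equivalent to the following: for every $p' \in \{0,1,\dots,p\}$, every pair of disjoint sets $X_I,X_S \subseteq X$, and every graph $\xprof$ on vertex set $X_I$, if we let $m_H$ denote the maximum size of an $(X,p',X_I,X_S,\xprof)$-intrusion to $H$ (with $m_H = -\infty$ if none exists), then there is an $(X,p',X_I,X_S,\xprof)$-intrusion $C$ to $G$ with $|C| \ge m_H$. There are $\Oh_{\|H\|,p}(1)$ such signatures $(p',X_I,X_S,\xprof)$, and for each one $m_H$ can be computed in time $\Oh_{\|H\|}(1)$ by brute force over subsets of $V(H)$. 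Hence the only nontrivial task is, for a fixed signature, to compute (or at least compare against $m_H$) the maximum size $m_G$ of an $(X,p',X_I,X_S,\xprof)$-intrusion to $G$.

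To compute $m_G$ I would first handle the connectivity constraint imposed by $\xprof$: any intrusion $C$ satisfies $C \cap X = X_I$, hence $C$ avoids $X \setminus X_I$ and each component of $G[C]$ lies inside a single component of $G - (X \setminus X_I)$; so if some edge of $\xprof$ joins vertices in distinct components of $G - (X \setminus X_I)$ then $m_G = -\infty$. Otherwise, $N_G(C)$ splits into its part inside $X \setminus X_I$, which lies in the bounded set $X$, and its part outside $X$. I would guess the set $N^\star := N_G(C) \cap (X \setminus (X_I \cup X_S))$ of ``costly'' $X$-neighbours of $C$ --- a subset of a bounded set, so $\Oh_{\|H\|}(1)$ guesses --- and set the remaining budget $B := p' - |N^\star|$ (discarding guesses with $B<0$). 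After committing to $N^\star$, the set $C$ must avoid the set $\widehat W$ of vertices of $G - (X \setminus X_I)$ adjacent in $G$ to some vertex of $(X \setminus (X_I \cup X_S)) \setminus N^\star$, while vertices of $\widehat W$ may still lie in $N_G(C)$ and then count against the budget. The problem now splits over the components $K$ of $G - (X \setminus X_I)$ that meet $X_I$: for each such $K$ we choose a connected $C_K \subseteq V(K)$ with $K \cap X_I \subseteq C_K$ and $C_K \cap \widehat W = \emptyset$, maximizing $|C_K|$ for each value $t_K \le B$ of $|N_K(C_K)|$, and then $m_G$ for this guess is obtained by a knapsack-style dynamic program maximizing $\sum_K |C_K|$ subject to $\sum_K t_K \le B$, running in time $\Oh_p(\|G\|)$. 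The per-component optimization is where important separators enter: contracting $K \cap X_I$ to a single vertex $u$, we want a connected set containing $u$, avoiding $\widehat W$, with at most $t_K$ outside neighbours, of maximum size; such a set is a component of $K - S$ for an $(\{u\},\widehat W)$-separator $S$ with $|S|\le t_K$, and by \Cref{lem:impsepdominate} it suffices to range over the \emph{important} $(\{u\},\widehat W)$-separators of size at most $B$, of which there are at most $4^B$ and which can be enumerated in time $2^{\Oh(B)}\|K\|$ by \Cref{lem:impsepenum}. Summing over components, guesses, and signatures gives the claimed running time $\Oh_{\|H\|,\delta,p}(\|G\|)$.

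The main obstacle I anticipate is checking the correctness of this reduction for $m_G$: namely that deleting $X \setminus X_I$, guessing the $X$-part $N^\star$ of the boundary, forbidding $\widehat W$, and contracting the terminals $K \cap X_I$ together capture exactly the maximum-size intrusions, with the boundary accounting done right --- vertices of $X_S$, of $N^\star$, and of $\widehat W$ play subtly different roles (free boundary, already-paid-for boundary, forbidden-but-still-chargeable boundary) --- and that the per-component maxima combine correctly through the knapsack aggregation (in particular that components forced to spend positive budget propagate infeasibility as intended). Everything else reduces to brute force on the bounded graph $H$, standard important-separator enumeration, and a routine dynamic program.
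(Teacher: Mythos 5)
The first part of your proposal (checking $G[X]=H[X]$ and comparing folios via a brute-force enumeration over the bounded graph $H$) matches the paper's proof. For the intrusion-preserver part you have the right basic ingredients --- iterating over the $\Oh_{\|H\|,p}(1)$ signatures $(p',X_I,X_S,\xprof)$, computing $m_H$ by brute force, and invoking important separators via \Cref{lem:impsepdominate,lem:impsepenum} --- but your algorithm for computing $m_G$ is substantially more elaborate than what is needed, and the part you yourself flag as the ``main obstacle'' is indeed where the argument as written breaks.

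The concrete issue is the per-component optimization. You claim to compute, for each component $K$ and budget $t_K$, the maximum size of a suitable $C_K$, and then combine by knapsack. But the profile constraint is not checked inside each component, so the ``maximum $|C_K|$'' you tabulate may correspond to a $C_K$ whose $(K\cap X_I)$-profile fails to cover $\xprof$ restricted to $K\cap X_I$, while a \emph{smaller} reachable set (coming from a different important separator of the same size bound) would cover it. Nothing in your scheme records, for each budget, the maximum size \emph{subject to} the required profile, so your computed $m_G$ can strictly exceed the true one, which would make the algorithm incorrectly accept a non-preserver $H$. (Relatedly, the phrase ``we choose a connected $C_K\subseteq V(K)$ with $K\cap X_I\subseteq C_K$'' is doing work via the contraction to $u$, but that same contraction erases exactly the internal connectivity information that the profile constraint depends on.)

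The paper's proof avoids the component decomposition, the guess of $N^\star$, the forbidden set $\widehat W$, and the knapsack aggregation entirely, and in doing so it also sidesteps the profile bookkeeping. The two ideas it uses are: (i) the equality $m_G^{(X,p',X_I,X_S,\xprof)} = m_{G\setminus X_S}^{(X\setminus X_S,p',X_I,\emptyset,\xprof)}$, which lets one assume $X_S=\emptyset$ and treat all boundary vertices uniformly; and (ii) for any maximum intrusion $C$, one has $C=\reachx_G(X_I,N(C))$ because $C\cap X=X_I$ and every component of $G[C]$ meets $X_I$. By \Cref{lem:impsepdominate} there is then an important $(X_I, X\setminus X_I)$-separator $S$ with $|S|\le |N(C)|\le p'$ and $\reachx_G(X_I,S)\supseteq C$. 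One simply enumerates all important $(X_I,X\setminus X_I)$-separators $S$ of size at most $p'$ in $G\setminus X_S$ (\Cref{lem:impsepenum}), for each forms $C_S=\reachx_G(X_I,S)$, and checks directly whether $C_S$ is a valid intrusion --- including the profile, which is automatically satisfied for the relevant $S$ because $C_S\supseteq C$. This yields $m_G$ exactly, in time $2^{\Oh(p)}|X|^{\Oh(1)}\|G\|$ per signature, with none of the decomposition issues you were worried about.
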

\begin{proof}
We first compute the $(X,\delta)$-folio of $H$ by brute-force in time $\Oh_{\|H\|}(1)$, and return ``no'' if it is not equal to the given $(X,\delta)$-folio of $G$.
We also have to check that $G[X] = H[X]$, which can be done in time $\Oh(|X|^2 \cdot (\|G\|+\|H\|))$ by checking edges one by one.
Then it remains to check that $H$ is an $(X,p)$-intrusion-preserver of $G$.

To do this, it is sufficient to give an algorithm that, given an integer $p' \le p$, disjoint sets $X_I,X_S \subseteq X$, and a graph $\xprof$ on the vertex set $X_I$, computes the maximum size of an $(X,p',X_I,X_S,\xprof)$-intrusion $C$ to $G$, in time $\Oh_{|X|,p'}(\|G\|)$.
Note that the maximum size of an $(X,p',X_I,X_S,\xprof)$-intrusion to $G$ is equal to the maximum size of an $(X-X_S,p',X_I,\emptyset,\xprof)$-intrusion to $G \setminus X_S$, so from now assume that $X_S = \emptyset$.

Let $X_R = X \setminus X_I$.
We use the algorithm of \Cref{lem:impsepenum} to enumerate all important $(X_I, X_R)$-separators of size at most $p'$ in the graph $G$ in time $2^{\Oh(p')} \|G\|$, of which there are at most $4^{p'}$.
For such important separator $S$, we can compute the set $C_S = \reachx_G(X_I, S)$ in time $\Oh(\|G\|)$, and use the definition of an intrusion to check if $C_S$ is an $(X, p', X_I, \emptyset, \xprof)$-intrusion to $G$ in time $|X|^{\Oh(1)} \|G\|$.

We claim that the maximum size of an $(X, p', X_I, \emptyset, \xprof)$-intrusion to $G$ is the maximum size of such a set $C_S$ that is an $(X, p', X_I, \emptyset, \xprof)$-intrusion to $G$.
It remains to prove this claim.

Let $C$ be a maximum size $(X,p',X_I,\emptyset,\xprof)$-intrusion to $G$.
Now, $N(C)$ is an $(X_I,X_R)$-separator of size $|N(C)| \le p'$.
Moreover, $\reachx_{G}(X_I,N(C)) = C$, because $C \cap X = X_I$ and every connected component of $G[C]$ intersects $X_I$.
Therefore, we enumerated an important $(X_I,X_R)$-separator $S$ with $C_S \supseteq C$ and $|N(C_S)| \le |N(C)| \le p'$.
We claim that $C_S$ is an $(X,p',X_I,\emptyset,\xprof)$-intrusion to $G$.
Because $C \subseteq C_S$, $C_S = \reachx_G(X_I, S)$, and $S$ is an $(X_I,X_R)$-separator, it holds that $C_S \cap X = X_I$ and every connected component of $G[C_S]$ intersects $X_I$.
The fact that $C \subseteq C_S$ implies that the $X_I$-profile of $G[C_S]$ is a supergraph of the $X_I$-profile of $G[C]$, and therefore a supergraph of $\xprof$.
\end{proof}

Then the remaining goal is to prove that a bounded-size $(X,\delta,p)$-preserver exists.

\paragraph{Push-preservers.}
An important ingredient to this will be \emph{push-preservers}, that we define now.
Let $S,B \subseteq V(G)$ and $p \ge |S|$.
We say that $S$ is \emph{$p$-pushed towards} $B$ if for every $(S, B)$-separator $S'$ it holds that either $S' \supseteq S$ or $|S'| > p$.
Note that if there exists an $(A,B)$-separator of size at most $p$, then there exists a minimal $(A,B)$-separator $S$ that has size at most $p$ and is $p$-pushed towards $B$.

Let $D \subseteq V(G)$ be disjoint from $S \cup B$ and $S$ be $p$-pushed towards $B$ in the graph $G \setminus D$.
We say that a graph $H$ is an \emph{$(S,B,D,p)$-push-preserver} of $G$ if 
\begin{itemize}[nosep]
\item $S \cup B \cup D \subseteq V(H)$; and 
\item $S$ is $p$-pushed towards $B$ in $H \setminus D$.
\end{itemize}

We show that being a $(Y,0)$-folio-preserver for a suitable set $Y$ implies being a push-preserver.

\begin{lemma}
\label{lem:pushpreserver}
Let $G$ be a graph, $p \in \Z$ an integer, $S,B \subseteq V(G)$ sets with $|S| \le p$, $D \subseteq V(G) \setminus (S \cup B)$, and suppose $S$ is $p$-pushed towards $B$ in $G-D$.
Then there exists a set $Y \subseteq V(G)$ of size $|Y| \le \Oh(p^2) + |B| + |D|$ so that if $H$ is a $(Y,0)$-folio-preserver of $G$, then $H$ is also an $(S,B,D,p)$-push-preserver of $G$.
\end{lemma}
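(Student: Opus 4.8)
The plan is to take $Y$ to be $S\cup B\cup D$ together with a bounded number of ``flow portals'' that witness the connectivity conditions encoding pushedness, and then to argue that any graph with the same $(Y,0)$-folio as $G$ must inherit those conditions. As a preliminary reduction I would force $D\subseteq Y$ and record the elementary fact that, for such a $Y$, if $H$ is a $(Y,0)$-folio-preserver of $G$ then $H-D$ is a $(Y\setminus D,0)$-folio-preserver of $G-D$: given a detail-$0$ $(Y\setminus D)$-rooted minor of $G-D$ one passes between the two folios by appending (or deleting) the isolated branch sets $\{d\}$ for $d\in D$, and since $D\subseteq Y$ every branch set whose root lies in $Y\setminus D$ is automatically disjoint from $D$, so such a model restricts cleanly. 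Hence it suffices to prove the lemma for $D=\emptyset$.

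Next I would isolate the combinatorial characterization of pushedness that I intend to transfer: \emph{$S$ is $p$-pushed towards $B$ in $G$ if and only if for every nonempty $S_0\subseteq S$ there is no $(S_0,B)$-separator of size at most $p-|S|+|S_0|$ that is disjoint from $S_0$ in $G-(S\setminus S_0)$}. Indeed, an offending separator $T$ yields the $(S,B)$-separator $T\cup(S\setminus S_0)$, which has size at most $p$ and misses all of $S_0$; conversely an $(S,B)$-separator $S'$ with $|S'|\le p$ and $S'\not\supseteq S$ gives, with $S_0:=S\setminus S'$, the set $S'\setminus S$ as such a separator of $S_0$ in $G-(S\setminus S_0)$. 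By Menger's theorem each such condition asserts the existence of more than $p-|S|+|S_0|$ internally vertex-disjoint $(S_0,B)$-paths in $G-(S\setminus S_0)$ — a max-flow statement of value at most $p+1$. Since $|S|\le p$ there are at most $2^{p}$ of these conditions, and all of them hold by the hypothesis of the lemma.

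The heart of the argument is to make each ``many internally disjoint paths'' statement visible in the detail-$0$ $Y$-rooted folio. A branch-set certificate for $t$ internally disjoint $(S_0,B)$-paths needs the $B$-endpoint of each path (already in $Y$) and the first vertex of each path as it leaves a vertex of $S_0$; accordingly I would add to $Y$, for each $s\in S$, the first vertices of a maximum family of internally disjoint $(s,B)$-paths in $G-(S\setminus\{s\})$, truncated to $p+1$ paths, so that $|Y|\le |S|\cdot(p+1)+|B|+|D|=\Oh(p^2)+|B|+|D|$. Granting that these portals suffice, the conclusion follows: if $S$ is $p$-pushed towards $B$ in $G$, then for each nonempty $S_0$ the corresponding flow can be routed through the stored portals, producing a detail-$0$ $Y$-rooted minor present in $G$; since $H$ has the same $(Y,0)$-folio and $G[Y]=H[Y]$, the same rooted minor is present in $H$, certifying the same connectivity lower bound there; and pushedness is exactly the conjunction of all these lower bounds, so $S$ is $p$-pushed towards $B$ in $H$ as well (one then also notes $S\cup B\cup D\subseteq Y\subseteq V(H)$, so $H$ is indeed an $(S,B,D,p)$-push-preserver of $G$).

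The main obstacle is precisely the step I glossed over: a witnessing flow for a particular $S_0$ need not leave the vertices of $S_0$ through the neighbors that were stored in $Y$ — those portals were computed from the graphs $G-(S\setminus\{s\})$, not from $G-(S\setminus S_0)$, and integral flow decompositions are far from unique. So I would need a rerouting argument: starting from any integral $(S_0,B)$-flow of value $t\le p+1$ in $G-(S\setminus S_0)$, augment it so that every edge leaving $S_0$ enters the stored portal set, exploiting that the portals of each $s\in S$ realize a \emph{maximum} single-source flow from $s$ in the larger graph $G-(S\setminus\{s\})\subseteq G-(S\setminus S_0)$. A cleaner route to the same end might be a submodular uncrossing argument showing that the relevant ``pushed'' cuts are controlled by $\Oh(|S|)$ cuts of size at most $p$, so that $\Oh(p)$ portals per vertex of $S$ handle all $2^{|S|}$ conditions simultaneously; securing this uniformity — rather than the naive $2^{\Oh(p)}$ bound from storing a separate witness per condition — is where the real work lies, while the remainder is routine bookkeeping with \Cref{lem:folio_over_separator} and the definitions.
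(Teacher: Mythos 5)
There is a genuine gap, which you yourself acknowledge in your closing paragraph: your subset-indexed characterization of pushedness yields $2^{|S|}$ conditions, and you have no argument that the $p+1$ portals stored per $s\in S$ (computed from the singleton conditions $S_0=\{s\}$ in $G-(S\setminus\{s\})$) certify the conditions for larger $S_0$ in the graphs $G-(S\setminus S_0)$. Neither the rerouting sketch nor the submodular-uncrossing speculation is actually carried out; this is exactly where the lemma needs to work, and without it the claimed bound $|Y|\le\Oh(p^2)+|B|+|D|$ is unjustified.

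The paper closes this gap not by uncrossing but by using a strictly simpler characterization: by Menger's theorem, $S$ is $p$-pushed towards $B$ in $G-D$ if and only if for every $v\in S\setminus B$ there exist $p+1$ vertex-disjoint $(S\cup N(v),B)$-paths in $G-D$ --- only $|S|\le p$ conditions, one per vertex rather than one per subset. For the nontrivial direction: if $T$ were an $(S\cup N(v),B)$-separator of size at most $p$, then $T$ is in particular an $(S,B)$-separator, so $T\supseteq S\ni v$ by pushedness; but then $T\setminus\{v\}$ is still an $(S,B)$-separator (an $(S,B)$-path avoiding $T\setminus\{v\}$ must pass through $v$, and its suffix from the next vertex is an $(N(v),B)$-path avoiding $T$ entirely, contradicting that $T$ separates $N(v)$ from $B$), yet $T\setminus\{v\}$ has size $<p$ and omits $v\in S$, contradicting pushedness again. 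Conversely, the existence of such $p+1$ disjoint paths rules out any order-$\le p$ $(S,B)$-separator omitting $v$, since each path, after prepending $v$ where necessary, is an $(S,B)$-path. The paper then sets $Y=S\cup B\cup D\cup\bigcup_{v\in S\setminus B}N'_v$, where $N'_v\subseteq N(v)\setminus S$ collects the at most $p+1$ non-$S$ starting vertices of a fixed such family, so $|Y|\le p(p+2)+|B|+|D|$; each of the $|S|$ conditions is then read off directly from the $(Y,0)$-folio as a detail-$0$ rooted minor whose $p+1$ branch sets are the paths themselves. Your reduction to $D=\emptyset$ is sound but unnecessary: including $D$ in $Y$ and demanding $\roots(u)\cap D=\emptyset$ in the rooted minor handles $D$ directly.
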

\begin{proof}
By Menger's theorem, $S$ is $p$-pushed towards $B$ in $G \setminus D$ if and only if $G$ contains for every $v \in S \setminus B$ a collection $\mathcal{P}_v$ of $p+1$ vertex-disjoint $(S \cup N(v), B)$-paths that are disjoint with $D$.
Let $N'_v \subseteq N(v)$ be the starting vertices of the paths in $\mathcal{P}_v$ that do not start from $S$.
We construct $Y$ by letting it contain $S \cup B \cup D \cup \bigcup_{v \in S \setminus B} N'_v$.
Note that $|Y| \le p (p+2) + |B| + |D|$.

Let $H$ be a $(Y,0)$-folio-preserver of $G$.
It remains to show that $H$ is an $(S,B,D,p)$-push-preserver.
For each $v \in S \setminus B$, $G$ contains the $Y$-rooted minor $(H,\roots)$ with $p+1$ vertices and no edges, each $u \in V(H)$ having $\roots(u) \cap D = \emptyset$, $\roots(u) \cap B \neq \emptyset$, and either $\roots(u) \cap S \neq \emptyset$ or $\roots(u) \cap N'_v \neq \emptyset$.
As $H$ is a $(Y,0)$-folio-preserver of $G$, $H$ also contains this $Y$-rooted minor.
As $H[Y] = G[Y]$, $v$ is adjacent to all vertices in $N'_v$ also in $H$, so the existence of such $Y$-rooted minor in $H$ witnesses the existence of a collection of $p+1$ vertex-disjoint $(S \cup N_H(v), B)$-paths in $H \setminus D$.
\end{proof}

Another ingredient of our construction of a bounded-size $(X,\delta,p)$-preserver is a proof that the number of connected components of $H \setminus X$ can be reduced to be bounded as a function of $|X|$, $\delta$, and $p$.
This uses the similar result about $(X,\delta)$-folio-preservers we proved as \Cref{lem:unpumpcomponent}.

\begin{lemma}
\label{lem:reducecomps}
Let $G$ be a graph, $X \subseteq V(G)$, and $\delta \in \N$, $p \in \Z$ integers.
There exists an induced subgraph $H$ of $G$ so that $H$ is an $(X,\delta,p)$-preserver of $G$ and the number of connected components of $H \setminus X$ is at most $f(|X|,\delta,p)$, for a computable function $f$.
\end{lemma}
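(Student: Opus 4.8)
\emph{Approach.} The plan is to build $H$ by keeping, for each ``type'' of component of $G-X$, only a bounded number of representatives and deleting the rest; thus $H$ will be an induced subgraph of $G$ obtained by deleting connected components of $G-X$, so in particular $X\subseteq V(H)$ and $H[X]=G[X]$. Define the \emph{type} of a component $C\in\cc(G-X)$ to be the pair $\sigma(C)=(N_G(C),\Ff_C)$, where $\Ff_C$ is the $(N_G(C),\delta)$-folio of $G[C\cup N_G(C)]$. Since $N_G(C)$ is one of at most $2^{|X|}$ subsets of $X$ and there are only boundedly many possible $(N_G(C),\delta)$-folios, the number of types is bounded by a computable function of $|X|$ and $\delta$. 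Let $\lambda=\lambda(|X|,\delta)$ be the value provided by \Cref{lem:unpumpcomponent}, and put $K\coloneqq\max\{\lambda,\,p+1,\,1\}$. Let $H$ be obtained from $G$ by deleting, for every type $\sigma$, all but $\min\{\#\{C\in\cc(G-X):\sigma(C)=\sigma\},\,K\}$ of the components of type $\sigma$ (the kept ones chosen arbitrarily). Then $H-X$ has at most $K$ times the number of types, i.e.\ at most $f(|X|,\delta,p)$, connected components for a suitable computable $f$, and it remains to check that $H$ is simultaneously an $(X,\delta)$-folio-preserver and an $(X,p)$-intrusion-preserver of $G$. If $p<0$ no intrusion exists, so the intrusion-preserver condition is vacuous and only the folio has to be handled; we may ignore this degenerate case below.

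\emph{Folio-preservation.} I would first observe that, for any induced subgraph $G'$ of $G$ obtained by deleting components of $G-X$, the $(X,\delta)$-folio of $G'$ is a function of $G[X]$ (which is common to all such $G'$) and the multiset $\{\sigma(C):C\in\cc(G'-X)\}$: this follows by applying \Cref{lem:folio_over_separator} repeatedly, peeling off one surviving component $C$ at a time (the separators involved being $X$, or $N_G(C)$ when one further uses \Cref{lem:folio_over_separator} to split $G[X\cup C]$ along $N_G(C)$ into $G[X]$ and $G[C\cup N_G(C)]$), so that in the end the folio depends only on $G[X]$ and on the individual types. Consequently two such subgraphs with the same multiset of types have the same $(X,\delta)$-folio, so components of the same type are interchangeable as far as the folio is concerned. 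Now reduce $G$ to $H$ by deleting the discarded components one by one: as long as some discarded component of type $\sigma$ is still present, the number of present components of type $\sigma$ is at least $K\ge\lambda$ (if it were smaller, all components of type $\sigma$ would have been kept, a contradiction), so \Cref{lem:unpumpcomponent}, applied to the set of all present components of type $\sigma$, yields a present component of that type whose deletion preserves the $(X,\delta)$-folio; by interchangeability we may take it to be a discarded one, and we delete it. Since being an $(X,\delta)$-folio-preserver is an equivalence relation, $H$ is an $(X,\delta)$-folio-preserver of $G$.

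\emph{Intrusion-preservation (the main obstacle).} The delicate point is that passing from $H$ to the larger graph $G$ can enlarge the open neighborhood of a set, since deleting components of $G-X$ may have removed ``shadow'' vertices adjacent to $X$; so an intrusion to $H$ need not be an intrusion to $G$. Given disjoint $X_I,X_S\subseteq X$, a graph $\xprof$ on $X_I$, an integer $p'\le p$, and an $(X,p',X_I,X_S,\xprof)$-intrusion $D$ to $H$, I would produce the witness $D^\star\coloneqq D\cup\bigcup\{C\in\cc(G-X)\setminus\cc(H-X):N_G(C)\cap X_I\neq\emptyset\}$, i.e.\ absorb into $D$ every \emph{discarded} component that meets $X_I$. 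Then $D^\star\cap X=X_I$, every component of $G[D^\star]$ meets $X_I$ (each absorbed $C$ is connected with an edge into $X_I$, so its vertices lie in a component of $G[D^\star]$ meeting $X_I$), the $X_I$-profile of $G[D^\star]$ extends that of $G[D]$ and hence $\xprof$, and $|D^\star|\ge|D|$. The heart of the argument is the bound $|N_G(D^\star)\setminus X_S|\le p'$, which I would get by showing $N_G(D^\star)\subseteq N_H(D)$: the shadows of the absorbed components now lie inside $D^\star$, and every deleted component meeting $X_I$ has been absorbed, so the only potential ``new'' contributions to $N_G(D^\star)$ relative to $N_H(D)$ are the sets $N_G(C)\setminus X_I\subseteq X$ over the absorbed $C$; but for each absorbed $C$ there are at least $K\ge p'+1$ surviving components of the same type $\sigma(C)$, all adjacent to $X_I$ (since $N_G(\cdot)$ is part of the type), and at most $p'$ of them fail to lie entirely inside $D$ (a partially-met or $D$-disjoint such component contributes, within itself, a non-$X$ vertex to $N_H(D)$, and $|N_H(D)\setminus X_S|\le p'$), so some surviving $C'$ of type $\sigma(C)$ satisfies $C'\subseteq D$, whence $N_G(C)\setminus X_I=N_G(C')\setminus X_I\subseteq N_H(D)$. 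Thus $|N_G(D^\star)\setminus X_S|\le|N_H(D)\setminus X_S|\le p'$, so $D^\star$ is an $(X,p',X_I,X_S,\xprof)$-intrusion to $G$ with $|D^\star|\ge|D|$, proving the intrusion-preserver property. Combining the two parts, $H$ is an $(X,\delta,p)$-preserver of $G$ with the required bound on $\#\cc(H-X)$, and the lemma follows.
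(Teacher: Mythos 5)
Your proposal is correct, and it arrives at the same two core ingredients as the paper's proof, but it packages them differently. The paper takes $H$ to be an \emph{arbitrary minimal} $(X,\delta,p)$-preserver among induced subgraphs and argues by contradiction: if $H-X$ has more than $2^{|X|}\cdot\max(p+1,g(|X|,\delta))$ components, then by pigeonhole some $S\subseteq X$ is the neighborhood of at least $\max(p+1,g(|X|,\delta))+1$ components; \Cref{lem:unpumpcomponent} then picks a component $C$ whose deletion preserves the folio, and a one-component absorption argument (with the same ``among $\geq p+1$ same-neighborhood surviving components, at least one sits inside $D$'' counting that you use) shows that deleting $C$ also preserves intrusions, contradicting minimality. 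You instead give an \emph{explicit} construction, classifying components by the finer type $(N_G(C),\Ff_C)$ and keeping $K=\max\{\lambda,p+1,1\}$ of each. This forces you to add the interchangeability argument (via \Cref{lem:folio_over_separator}) so that when \Cref{lem:unpumpcomponent} hands you a removable component, you can swap it for a discarded one of the same type --- a step the paper sidesteps entirely by letting the lemma choose the component. Your approach also makes the intrusion absorption a one-shot argument over all discarded components, rather than one at a time, but the counting is the same. Both proofs are sound; the paper's minimality formulation is shorter and classifies components only by neighborhood, whereas yours is constructive and yields an explicit (if somewhat larger) bound.
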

\begin{proof}
Let $H$ be an induced subgraph of $G$ so that $H$ is an $(X,\delta,p)$-preserver of $G$ but no induced subgraph of $H$ is an $(X,\delta,p)$-preserver of $G$.
Let $g$ be the function from \Cref{lem:unpumpcomponent} and $h = \max(p+1, g(|X|,\delta))$.
Note that $h$ is computable.
We claim that $|\cc(H - X)| \le 2^{|X|} \cdot h$.

Suppose otherwise.
Then there exists $S \subseteq X$ so that there are at least $h+1$ components $C \in \cc(H \setminus X)$ with $N(C) = S$.
By \Cref{lem:unpumpcomponent}, there exists such $C$ so that $H-C$ is an $(X,\delta)$-folio-preserver of $H$.
To obtain a contradiction, it remains to prove that $H - C$ is an $(X,p)$-intrusion-preserver of $H$.

Let $D$ be an $(X,p',X_I,X_S,\xprof)$-intrusion to $H \setminus C$, for $p' \le p$.
We construct a set $D'$ as follows.
If $D$ intersects $S$, then we set $D' = D \cup C$.
Otherwise, we set $D' = D$.
We have that $|D'| \ge |D|$.
Clearly, $D' \cap X = D \cap X$, every connected component of $H[D']$ intersects $X_I$, and the $X_I$-profile of $H[D']$ is a supergraph of $\xprof$.
It remains to prove that $|N_H(D') - X_S| \le p'$.

If $D$ is disjoint with $S$, this holds because $D' = D$ does not touch $C$.
Suppose $D$ intersects $S$.
It suffices to prove that in this case, $S \subseteq N_{H \setminus C}[D]$.
There are at least $h \ge p+1$ components $C' \in \cc(H - (X \cup C))$ with $N(C') = S$.
Therefore, for at least one such component $C'$ it must hold that $C' \subseteq D$, implying that $S \subseteq N_{H \setminus C}[D]$.
\end{proof}

We are now ready to show the main lemma bounding the size of an $(X,\delta,p)$-preserver.

\begin{lemma}
\label{lem:presbound}
Let $G$ be a graph, $X \subseteq V(G)$, and $\delta \ge 0 $, $p \in \Z$ integers.
There exists an $(X,\delta,p)$-preserver $H$ of $G$ so that $|V(H)| \le f(|X|,\delta,p)$, for a computable function $f$.
\end{lemma}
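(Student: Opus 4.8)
The plan is to bound the size of an $(X,\delta,p)$-preserver by combining the already-established size bound for folio-preservers (\Cref{lem:preserverbound}) with a localized application of push-preservers (\Cref{lem:pushpreserver}), using that any intrusion is ``witnessed'' by a pushed separator of size at most $p$. The construction proceeds in two stages. First, by \Cref{lem:reducecomps}, we may pass to an induced subgraph $G'$ of $G$ that is still an $(X,\delta,p)$-preserver of $G$ and in which $G'-X$ has at most $c_1=c_1(|X|,\delta,p)$ connected components; since being an $(X,\delta,p)$-preserver is transitive in the obvious sense, it suffices to bound the size of an $(X,\delta,p)$-preserver of $G'$. Second, I will argue that within each component of $G'-X$ it is enough to remember, in a boundedly-small way, (i) the folio behavior relative to a suitable set $Y_C\subseteq X\cup\text{(component)}$, and (ii) the push behavior of the finitely many relevant separator configurations. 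The key point is that an $(X,p',X_I,X_S,\xprof)$-intrusion $D$ with maximal $|D|$ has $N(D)$ a minimal $(X_I, X\setminus X_I)$-separator of size at most $p'\le p$, which we may assume is $p'$-pushed towards $X\setminus X_I$ in the graph with $X_S$ deleted; so the maximum intrusion size for each of the $2^{O(|X|)}\cdot 2^{O(|X|^2)}$ choices of $(p',X_I,X_S,\xprof)$ is determined by a bounded number of ``pushed-separator'' facts plus folio data.

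Concretely, I would do the following. Work with a single component $C$ of $G'-X$ at a time; let $B_C=N_{G'}(C)\subseteq X$, so $|B_C|\le |X|$. For every triple $(X_I,X_S)$ with $X_I,X_S\subseteq X$ disjoint and every $p'\le p$, and for every vertex subset playing the role of a pushed separator of size $\le p'$ realizable inside $G'[C\cup X]$, invoke \Cref{lem:pushpreserver} (applied within $G'[C\cup B_C]$, with $S$ the separator, $B=X\setminus X_I$ restricted to $B_C$, $D=X_S\cap(C\cup B_C)$) to obtain a set $Y_{C,S,\ldots}\subseteq V(G'[C\cup B_C])$ of size $O(p^2)+O(|X|)$ whose folio-preservation guarantees preservation of that particular pushed-separator fact. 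Let $Y_C\subseteq C\cup B_C$ be the union of $X\cap(C\cup B_C)$ together with all these sets $Y_{C,S,\ldots}$ over the boundedly many configurations; then $|Y_C|$ is bounded by a computable function of $|X|,p$. Now apply \Cref{lem:preserverbound} to the pair $\big(G'[C\cup B_C],\,Y_C\big)$ to get a $(Y_C,\delta)$-folio-preserver $H_C$ of $G'[C\cup B_C]$ with $|V(H_C)|\le f_0(|Y_C|,\delta)$, chosen so that $V(H_C)\cap V(G')=Y_C$ and $H_C[Y_C]=G'[Y_C]$; set $H$ to be the $(X,\rpmap,\bd)$-replacement of $G'$ that replaces each $G'[C\cup B_C]$ by $H_C$, where $\bd(C)=B_C$. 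By \Cref{lem:foliopreservreplace}, $H$ is an $(X,\delta)$-folio-preserver of $G'$ (since $Y_C\supseteq B_C$ and each $H_C$ is in particular a $(B_C,\delta)$-folio-preserver of $G'[C\cup B_C]$), and by \Cref{lem:replacebyintpresmaintain}-type reasoning it remains to verify intrusion-preservation. The latter is where \Cref{lem:pushpreserver} is used: any intrusion $D$ to $H$ restricts on each $H_C$ to a set whose ``outer'' separator of size $\le p'$, when pushed, is preserved because $H_C$ is a $(Y_{C,\ldots},0)$-folio-preserver of $G'[C\cup B_C]$; transferring the pushed separator back to $G'$ and taking $\reachx$ yields an intrusion to $G'$ of size $\ge |D|$, which (by $G'$ being an $(X,\delta,p)$-preserver of $G$) gives one in $G$. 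Since $|V(H)|\le |X|+\sum_{C}|V(H_C)-B_C|\le |X|+c_1\cdot f_0(|Y_C|,\delta)$, this is bounded by a computable function of $|X|,\delta,p$, as required.

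The main obstacle I anticipate is the bookkeeping in the intrusion-preservation step: one must check that when an intrusion $D$ to $H$ meets $Y_C$ in a complicated way — possibly passing through $Y_C$ and re-entering $C$ — the relevant ``boundary'' of $D\cap V(H_C)$ really is captured by one of the finitely many pushed-separator configurations we pre-registered, so that its maximal pushing inside $G'[C\cup B_C]$ is what the $(Y_{C,\ldots},0)$-folio-preserver of $G'$ guarantees. The clean way around this is to observe, exactly as in the proof of \Cref{lem:presverialgo}, that for the purpose of computing the \emph{maximum-size} intrusion we may always replace $N(D)$ by an important $(X_I,X\setminus X_I)$-separator (minimal and pushed), of which there are at most $4^{p'}$; so only boundedly many separator configurations per component matter, and $Y_C$ can be chosen to register all of them via \Cref{lem:pushpreserver}. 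A secondary subtlety is ensuring the disjointness and $\bd$-compatibility conditions of a $Y$-replacement-map hold after choosing the $H_C$'s with $V(H_C)\cap V(G')=Y_C$ and pairwise-disjoint private parts; this is a routine consequence of \Cref{lem:preserverbound} applied independently to each component, renaming the new vertices to be globally fresh.
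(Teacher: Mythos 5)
The central difficulty you flag as "bookkeeping" is in fact the step that your construction cannot carry out without a new idea, and it is precisely where the paper's proof resorts to \emph{induction on} $p$. After your replacement, an intrusion $D$ to $H$ may have its boundary $N_H(D)\setminus X_S$ distributed across several of the replaced subgraphs, with fewer than $p$ boundary vertices falling inside each one. To recover an intrusion in $G'$ of at least the same size, one must, inside each replaced piece, turn the restriction $D\cap V(H_C)$ — which is itself a $(Y_C, |N_C|, D\cap Y_C, N_Y, \xprof_C)$-intrusion to $H_C$ with $|N_C|<p$ — into an intrusion of at least the same cardinality inside $G'[C\cup B_C]$. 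This is exactly an intrusion-preservation statement \emph{internal to the component}, with root set $Y_C$ and parameter $p-1$; it is not implied by $(Y_C,\delta)$-folio-preservation, and push-preservation does not help here because it is one-directional (a pre-registered separator $S$ that is $p$-pushed in $G'[C\cup B_C]$ remains $p$-pushed in $H_C$) and only speaks about the finitely many separators you registered, not about the boundary of an arbitrary intrusion that arises in $H_C$. Your appeal to the important-separator enumeration of \Cref{lem:presverialgo} does not close this: those important separators live in $H_C$, and there is no correspondence between them and the important separators of $G'[C\cup B_C]$ that you pre-registered, nor does push-preservation transfer the size $|\reachx(\cdot)|$ of the reachable set. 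The paper resolves this by making each $H_C$ a $(Y,\delta,p-1)$-\emph{preserver} (not merely a folio-preserver) by induction (see Claim~\ref{lem:presbound:claimcomp}), and then applying \Cref{lem:reducecomps} \emph{after} the replacement; you apply it before, which is the step that forces you into using mere folio-preservers for the pieces.

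There is also a smaller inconsistency in the replacement set-up: you stipulate both $\bd(C)=B_C$ and $V(H_C)\cap V(G')=Y_C\supsetneq B_C$. By definition of a $Y$-replacement-map these must be equal, so the replacement has to be taken with respect to $Y=X\cup\bigcup_C(Y_C\cap C)$, not with respect to $X$ — the components being replaced are then those of $G'-Y$, not $G'-X$, which is exactly the set-up the paper works with, and is the reason the induction cannot be avoided so easily (one replaces components of $G-Y$ with $|Y|$ a function of $|X|$ and $p$, and the recursion is carried in the parameter $p$, not in the number of components).
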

\begin{proof}
We prove this by induction on $p$.
First, if $p < 0$, then every $(X,\delta)$-folio-preserver is a $(X,\delta,p)$-preserver because there are no intrusions, so a computable bound follows from \Cref{lem:preserverbound}.
Suppose then that $p \ge 0$.

We iterate over all disjoint subsets $X_I,X_S \subseteq X$.
Let $X_R = X \setminus (X_I \cup X_S)$.
If there exists an $(X_I, X_R)$-separator of size at most $p$ in $G \setminus X_S$, then let $S$ be a minimal $(X_I,X_R)$-separator in $G - X_S$ that is $p$-pushed towards $X_R$.
Let $Y_S$ be a set given by \Cref{lem:pushpreserver} so that every $(Y_S,0)$-folio-preserver of $G$ is an $(S,X_R,X_S,p)$-push-preserver of $G$.
Let $Y$ be the union of $X$ and all those sets $Y_S$.

For each connected component $C \in \cc(G - Y)$ let $H_C$ be a graph that is a $(Y,\delta,p-1)$-preserver of $G[Y \cup C]$ with $|V(H)| \le f(|Y|, \delta, p-1)$, which exists by induction.
Assume also that $|V(H_C)| \le |Y \cup C|$, as otherwise we can simply let $H_C = G[Y \cup C]$.
Let $\rpmap$ map each $C \in \cc(G - Y)$ to $\rpmap(C) = H_C$ and $\bd$ map each $C \in \cc(G - Y)$ to $\bd(C) = Y$, and let $H$ be the $(Y,\rpmap,\bd)$-replacement of $G$.
By \Cref{lem:foliopreservreplace}, $H$ is a $(Y, \delta)$-folio-preserver of $G$.
Note that each connected component of $H \setminus Y$ has size at most $f(|Y|,\delta,p-1)$.

We will then show that $H$ is an $(X,\delta,p)$-preserver of $G$.
After that, we will complete the proof by applying \Cref{lem:reducecomps} to $H$.

Suppose $H$ is not an $(X,p)$-intrusion-preserver of $G$, and let $D$ be an $(X,p',X_I,X_S,\xprof)$-intrusion to~$H$, so that either no $(X,p',X_I,X_S,\xprof)$-intrusion to $G$ exists, or every $(X,p',X_I,X_S,\xprof)$-intrusion to $G$ is smaller than $D$.

\begin{claim}
\label{lem:presbound:claimcomp}
There exists a component $C_H \in \cc(H \setminus Y)$ so that $|N_H(D) \cap C_H| = p$ and $N_H(D) - X_S \subseteq C_H$.
\end{claim}
\begin{claimproof}
Note that $|N_H(D) \cap C_H| = p$ implies $N_H(D) - X_S \subseteq C_H$ because $|N_H(D) - X_S| \le p$ and $X_S \subseteq Y$.

Suppose that for every component $C_H \in \cc(H \setminus Y)$ it holds that $|N_H(D) \cap C_H| < p$.
We construct an $(X,p',X_I,X_S,\xprof)$-intrusion $D'$ to $G$ with $|D'| \ge |D|$, contradicting the choice of $D$, as follows.

Let $N_Y = N_H(D) \cap Y$.
For each component $C_H \in \cc(H \setminus Y)$, corresponding to a component $C \in \cc(G \setminus Y)$ with $\rpmap(C) = H_C = H[Y \cup C_H]$, let $N_C = N_{H_C}(D \cap V(H_C)) \setminus N_Y$.
Note that such sets $N_C$ for different components $C_H$ are disjoint, and in fact they form a partition of $N_H(D) \setminus N_Y$.
Note also that our assumption implies $|N_C| < p$.
Let $\xprof_C$ be the $(D \cap Y)$-profile of $H[D \cap V(H_C)]$.

Now $D \cap V(H_C)$ is a $(Y, |N_C|, D \cap Y, N_Y, \xprof_C)$-intrusion to $H_C$.
Because $|N_C| < p$, and $H_C$ is a $(Y,p-1)$-intrusion-preserver of $G[Y \cup C]$, there exists a $(Y, |N_C|, D \cap Y, N_Y, \xprof_C)$-intrusion $D'_C$ to $G[Y \cup C]$ with $|D'_C| \ge |D \cap V(H_C)|$.
Let $D'$ be the union of $D'_C$ over all $C$ and note that $|D'| \ge |D|$.
We claim that $D'$ is an $(X,p',X_I,X_S,\xprof)$-intrusion to $G$.

We have that $D' \cap Y = D \cap Y$, implying that $D' \cap X = D \cap X = X_I$.
For each component $C$, the $(D' \cap Y)$-profile of $G[D' \cap (C \cup Y)]$ is a supergraph of $\xprof_C$.
This implies that the $(D' \cap Y)$-profile of $G[D']$ is a supergraph of the $(D \cap Y)$-profile of $H[D]$.
As $X_I \subseteq D' \cap Y$, this immediately implies that the $X_I$-profile of $G[D']$ is a supergraph of $\xprof$.
Furthermore, as every connected component of $H[D]$ intersects $X_I$, this implies that every connected component of $G[D']$ that intersects $Y$ also intersects $X_I$.
Now, for a component $C$, every connected component of $G[D'_C]$ intersects $Y$, so this implies that indeed every connected component of $G[D']$ intersects $X_I$.

For a component $C$, we have that $|N_{G[C \cup Y]}(D'_C) \setminus N_Y| \le |N_C|$.
This implies that $|N_G(D') \setminus N_Y| \le |N_H(D) \setminus N_Y|$.
Because $|N_H(D) \setminus N_Y| + |N_Y \setminus X_S| \le p'$, this implies that $|N_G(D') \setminus X_S| \le p'$.
\end{claimproof}

In the remaining case, $p' = p$ and there exists a component $C_H \in \cc(H \setminus Y)$ so that $N_H(D) \cap C_H = N_H(D) \setminus X_S$.
Let $X_R = X \setminus (X_I \cup X_S)$.

\begin{claim}
There exists an $(X_I,X_R)$-separator of size at most $p$ in $G \setminus X_S$.
\end{claim}
\begin{claimproof}
If $G \setminus X_S$ contains a collection of $p+1$ vertex-disjoint $(X_I,X_R)$-paths, then also $H \setminus X_S$ does because $H$ is a $(Y,0)$-folio-preserver of $G$.
However, $N_H(D)$ is an $(X_I,X_R)$-separator in $H \setminus X_S$.
\end{claimproof}

Let $S$ be the minimal $(X_I,X_R)$-separator of size at most $p$ in $G \setminus X_S$ that we fixed earlier, with $Y_S \subseteq Y$ so that every $(Y_S,0)$-folio-preserver of $G$ is an $(S,X_R,X_S,p)$-push-preserver of $G$.
In particular, $H$ is an $(S,X_R,X_S,p)$-push-preserver of $G$.

The fact that $S$ is a minimal $(X_I,X_R)$-separator in $G \setminus X_S$ is witnessed in $G$ by (1) the non-existence of an $(X_I,X_R)$-path that is disjoint from $S \cup X_S$, and (2) for all $v \in S$, the existence of an $(X_I,X_R)$-path that intersects $S \cup X_S$ only in $v$.
Therefore, because $X \cup S \subseteq Y$ and $H$ is a $(Y,0)$-folio-preserver of $G$, $S$ is also a minimal $(X_I,X_R)$-separator in $H \setminus X_S$.

Let $A_H$ be the vertices reachable from $X_I \setminus S$ in $H \setminus (S \cup X_S)$ and $A_G$ the vertices reachable from $X_I \setminus S$ in $G \setminus (S \cup X_S)$.
Because $X \cup S \subseteq Y$ and $H$ is a $(Y,0)$-folio-preserver of $G$, $A_G \cap Y = A_H \cap Y$.
Note that $S$ may intersect $X_I$, meaning that $A_H$ and $A_G$ may not contain all of $X_I$ and may even be empty.

\begin{claim}
\label{lem:presbound:claimdints}
$D$ intersects with $S$.
\end{claim}
\begin{claimproof}
Suppose $D$ is disjoint with $S$.
Because $D$ is also disjoint with $X_S$ and all components of $H[D]$ contain a vertex from $X_I$, this implies that $D \subseteq A_H$, implying also that $S$ is disjoint with $X_I$.

We claim that then $A_G$ is an $(X,p,X_I,X_S,\xprof)$-intrusion to $G$ with $|A_G| \ge |A_H| \ge |D|$, which would contradict the choice of $D$.

First we show that $|A_G| \ge |A_H|$.
Let $C'$ be a connected component of $H \setminus Y$ and $C$ the corresponding component of $G \setminus Y$, and recall that $|C'| \le |C|$.
Because $S \cup X_S \subseteq Y$, $C'$ is either entirely in $A_H$ or disjoint from $A_H$, depending on whether $N_H(C') \subseteq Y$ intersects $A_H$.
Similarly, $C$ is either entirely in $A_G$ or disjoint from $A_G$, depending on whether $N_G(C) = N_H(C')$ intersects $A_G$.
Since $A_G \cap Y = A_H \cap Y$, we have that $C \subseteq A_G$ if and only if $C' \subseteq A_H$, implying with $A_G \cap Y = A_H \cap Y$ that $|A_G| \ge |A_H|$.

Then we show that $A_G$ is an $(X,p,X_I,X_S,\xprof)$-intrusion to $G$.
We have that $X_I \subseteq A_G$ because $X_I \subseteq A_H$.
Also, $A_G$ is disjoint from $X_S$ by definition, and disjoint from $X_R$ because $S$ is an $(X_I,X_R)$-separator in $G \setminus X_S$.
By the definition of $A_G$, every connected component of $G[A_G]$ intersects $X_I$.
We have that $N_G(A_G) \setminus X_S \subseteq S$, implying $|N_G(A_G) \setminus X_S| \le p$.
It remains to argue that the $X_I$-profile of $H[D]$ is a subgraph of the $X_I$-profile of $G[A_G]$.

Because $N_G(A_G) \subseteq Y$, the $(Y \cap A_G, 0)$-folio of $G[A_G]$ is uniquely determined by the $(Y,0)$-folio of $G$ and the set $Y \cap A_G$.
Similarly, because $N_H(A_H) \subseteq Y$, the $(Y \cap A_H,0)$-folio of $H[A_H]$ is uniquely determined by the $(Y,0)$-folio of $H$ and the set $Y \cap A_H$.
Therefore, because $H$ is a $(Y,0)$-folio-preserver of $G$ and $Y \cap A_G = Y \cap A_H$, the $(Y \cap A_G,0)$-folios of $G[A_G]$ and $H[A_H]$ are the same, implying that the $(X_I,0)$-folios of $G[A_G]$ and $H[A_H]$ are the same.
Now, because $D \subseteq A_H$, this implies that the $X_I$-profile of $H[D]$ is a subgraph of the $X_I$-profile of $G[A_G]$.
\end{claimproof}

\begin{claim}
\label{lem:presbound:claimdnotsubd}
$A_H \not\subseteq D$
\end{claim}
\begin{claimproof}
Suppose $A_H \subset D$, and recall that by \Cref{lem:presbound:claimdints}, it holds that $D \cap S \neq \emptyset$.
Because $S$ is a minimal $(X_I,X_R)$-separator in $H \setminus X_S$ and $A_H$ is the vertices reachable from $X_I \setminus S$ in $H \setminus (S \cup X_S)$, it holds that $S \subseteq N_H[D]$.
Now $N_H(D) \setminus X_S$ is an $(S,X_R)$-separator in $H \setminus X_S$, having $N_H(D) \not\supseteq S$ and $|N_H(D) \setminus X_S| \le p$, meaning that $S$ is not $p$-pushed towards $X_R$ in $H \setminus X_S$.
This contradicts the fact that $H$ is an $(X_I,X_R,X_S,p)$-push-preserver of $G$.
\end{claimproof}

We then show that \Cref{lem:presbound:claimcomp,lem:presbound:claimdints,lem:presbound:claimdnotsubd} covered all of the cases, i.e., $D$ cannot satisfy them all, leading to a contradiction.

Because every connected component of $H[A_H]$ intersects $X_I$, and $X_S$ is disjoint with $A_H$, it follows from \Cref{lem:presbound:claimdnotsubd} that $N_H(D) \setminus X_S$ must intersect $A_H$.
Because $S$ is a minimal $(A_H,X_R)$-separator in $H \setminus X_S$, and $D$ intersects with $S$ (\Cref{lem:presbound:claimdints}) but is disjoint with $X_R \cup X_S$, it must be that $N_H(D) \setminus X_S$ intersects $V(H) \setminus A_H$.
In particular, $N_H(D) \setminus X_S$ intersects both $A_H$ and $V(H) \setminus A_H$.

\Cref{lem:presbound:claimcomp} implies that there is a single connected component $C_H$ of $H-Y$ so that $N_H(D) \setminus X_S \subseteq C_H$.
Therefore, $C_H$ intersects both $A_H$ and $V(H) \setminus A_H$.
However, because $N_H(A_H) \subseteq Y$, every connected component of $H \setminus Y$ is either a subset of $A_H$ or disjoint with $A_H$, so we obtain a contradiction.

This finishes the proof that $H$ is an $(X,\delta,p)$-preserver of $G$.

We then apply \Cref{lem:reducecomps} with $H$, $Y$, $\delta$, and $p$ to obtain an induced subgraph $H'$ of $H$ so that $H'$ is $(Y,\delta,p)$-preserver of $H$ and thus a $(X,\delta,p)$-preserver of $G$, and the number of connected components of $H' - Y$ is at most $g(|Y|,\delta,p)$ for a computable function $g$.
Because $H'$ is an induced subgraph of $H$, each connected component of $H' - Y$ has at most $f(|Y|,\delta,p-1)$ vertices, and therefore $|V(H')| \le |Y| + g(|Y|,\delta,p) \cdot f(|Y|,\delta,p-1)$.
Because $|Y| \le p^{\Oh(1)} \cdot 2^{\Oh(|X|)}$, we obtain by induction that the desired computable function $f$ bounding $|V(H')|$ as a function of $|X|$, $\delta$, and $p$ exists.
\end{proof}

Then we put together \Cref{lem:presverialgo,lem:presbound} to obtain a linear-time algorithm for constructing an $(X,\delta,p)$-preserver.

\begin{lemma}
\label{lem:presmainresult}
There is an algorithm that, given a graph $G$, a set $X \subseteq V(G)$, integers $\delta$ and $p$, and the $(X,\delta)$-folio of $G$, in time $\Oh_{|X|,\delta,p}(\|G\|)$ returns an $(X,\delta,p)$-preserver $H$ of $G$ with $\|H\| \le f(|X|,\delta,p)$, for a computable function $f$.
\end{lemma}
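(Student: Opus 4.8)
The plan is to combine the existential bound of \Cref{lem:presbound} with the verification procedure of \Cref{lem:presverialgo} via a brute-force search over small candidate graphs. By \Cref{lem:presbound}, there is a computable function $f_0$ such that $G$ admits an $(X,\delta,p)$-preserver $H^\star$ with $|V(H^\star)|\leq f_0(|X|,\delta,p)$, and hence $\|H^\star\|\leq f(|X|,\delta,p)\coloneqq f_0(|X|,\delta,p)+\binom{f_0(|X|,\delta,p)}{2}$. Since whether a graph is an $(X,\delta,p)$-preserver of $G$ depends only on its isomorphism type together with the identification of the designated copy of $X$, I may assume that the ``new'' vertices of $H^\star$ (those outside $X$) are drawn from a fixed pool $\{w_1,w_2,\dots\}$ of fresh vertices disjoint from $V(G)$.

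The algorithm then enumerates all graphs $H$ on vertex set $X\cup\{w_1,\dots,w_m\}$ for $0\leq m\leq f_0(|X|,\delta,p)$ that satisfy $H[X]=G[X]$; there are only $\Oh_{|X|,\delta,p}(1)$ of these, and they can be listed in constant time after reading off $G[X]$. For each candidate $H$, I would run the verification algorithm of \Cref{lem:presverialgo} on $(G,H,X,\delta,p)$ together with the $(X,\delta)$-folio of $G$, which is provided as input; since $\|H\|$ is bounded by a function of $|X|,\delta,p$ for every candidate considered, each verification call runs in time $\Oh_{\|H\|,\delta,p}(\|G\|)=\Oh_{|X|,\delta,p}(\|G\|)$. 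The algorithm outputs the first candidate that passes the test. By the existence statement of \Cref{lem:presbound}, at least one candidate passes, so the procedure succeeds and returns an $(X,\delta,p)$-preserver of size at most $f(|X|,\delta,p)$; the total running time is the number of candidates times the cost of a single verification, i.e. $\Oh_{|X|,\delta,p}(1)\cdot\Oh_{|X|,\delta,p}(\|G\|)=\Oh_{|X|,\delta,p}(\|G\|)$.

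There is essentially no combinatorial obstacle here — all the heavy lifting has already been done by \Cref{lem:presbound} (existence of a small preserver) and \Cref{lem:presverialgo} (linear-time verification). The only points requiring a little care are bookkeeping: confirming that the candidate space has constant size and is enumerable in constant time, checking that $\|H\|\leq f(|X|,\delta,p)$ throughout so that the verification calls are genuinely of the form $\Oh_{|X|,\delta,p}(\|G\|)$ rather than depending on the size of some larger auxiliary graph, and ensuring that the copy of $X$ inside each candidate $H$ is correctly identified with $X\subseteq V(G)$, so that \Cref{lem:presverialgo} is applied to a meaningful input.
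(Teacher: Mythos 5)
Your proposal is correct and matches the paper's proof: both enumerate all candidate graphs $H$ with $X\subseteq V(H)$ and $|V(H)|$ bounded by \Cref{lem:presbound}, and test each with the verification algorithm of \Cref{lem:presverialgo}. Your write-up simply makes explicit the bookkeeping (fixing the vertex pool, bounding $\|H\|$, enforcing $H[X]=G[X]$ during enumeration) that the paper leaves implicit in its one-sentence proof.
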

\begin{proof}
We iterate over all graphs $H$ with $X \subseteq V(H)$ and $|V(H)|$ bounded by the bound of \Cref{lem:presbound}, and for each check if $H$ is an $(X,\delta,p)$-preserver of $G$ with the algorithm of \Cref{lem:presverialgo}.
\end{proof}

We then combine \Cref{lem:presmainresult} with the algorithm of \Cref{lem:foliopreservreplace} to finish the proof of \Cref{lem:ultimatechipreplace}, which we restate here.

\lemultimatechipreplace*
\begin{proof}
Let $g$ be the function from \Cref{lem:presmainresult}.
We set $f(k,\delta) = 2 \cdot g(k-1,\delta,k-1)$, denote $\alpha = f(k,\delta)$, and therefore assume that the chips in $\chipfamily$ are $(X,k,\alpha)$-chips.

We use \Cref{lem:presmainresult} to compute an $(N(C),\delta, k-1)$-preserver $H_C$ of $G[N[C]]$ for each $C \in \chipfamily$.
Because $|N(C)| < k$, the sum of the sizes of the graphs $G[N[C]]$ is $k^{\Oh(1)} \cdot \|G\|$ and they can also be computed in that time, so therefore computing these preservers $H_C$ takes in total $\Oh_{k,\delta}(\|G\|)$ time.
We have that $\|H_C\| \le \alpha/2$ for all $C$.

Let $Y = V(G) \setminus \bigcup \chipfamily$ and note that $\chipfamily = \cc(G - Y)$ and $X \subseteq Y$.
We construct the $Y$-replacement-map $(\rpmap,\bd)$ with $\rpmap(C) = H_C$ and $\bd(C) = N_G(C)$, and apply the algorithm of \Cref{lem:foliopreservreplace} with $G$, $(Y,\rpmap,\bd)$, $X$, and the $(N(C), \delta)$-model-folios of each $G[N[C]]$, to compute the $(Y,\delta)$-replacement $H$ of $G$.
By \Cref{lem:foliopreservreplace}, $H$ is a $(X,\delta)$-folio-preserver of $G$.
The algorithm of \Cref{lem:foliopreservreplace} runs in time $\Oh_{|X|,\delta}(\|G\| + \|H\|)$, and returns also an $(X,\delta)$-model-mapper $\Delta$ from $H$ to $G$.
We return $H$ and $\Delta$.
The running time bound of the algorithm is clear, it remains to assert the rest of the properties of $H$.

That $\|H_C\| \le \alpha/2$ and $|C| \ge \alpha$ for each $C\in \chipfamily$ implies that $|V(H)| \le |V(G)|$ and $\|H\| \le \|G\|$.

By \Cref{lem:replacebyintpresmaintain}, for any $(X,k)$-chip $D$ of $H$ that intersects $Y$, there exists an $(X,k)$-chip $D'$ of $G$ so that $D' \cap Y = D \cap Y$ and $|D'| \ge |D|$.
Let $F_G \subseteq V(G)$ be the set of $(X,k,\alpha)$-carvable vertices in $G$ and $F_H \subseteq V(H)$ the set of $(X,k,\alpha)$-carvable vertices in $H$.
It follows that $F_H \cap Y \subseteq F_G \cap Y$.
Now, $F_G \setminus Y = \bigcup \chipfamily$, and from $|V(H_C)| \le \alpha/2$ we get that $|V(H) \setminus Y| \le \alpha |\chipfamily|/2$.
This implies $|F_H| \le |F_G| - |\bigcup \chipfamily| + \alpha |\chipfamily|/2 \le |F_G| - |\bigcup \chipfamily|/2$.

This assertion of \Cref{lem:replacebyintpresmaintain} also implies that a largest $(X,k)$-chip of $H$ that intersects $Y$ has size at most the size of a largest $(X,k)$-chip of $G$ that intersects $Y$.
Because all $(X,k,\alpha)$-chips of $H$ intersect $Y$, this implies that the size of a largest $(X,k,\alpha)$-chip of $H$ is at most the size of a largest $(X,k,\alpha)$-chip of~$G$.
\end{proof}

\section{Clique-minor-free graphs}\label{sec:full-clique-minor-free}
The goal of this section is to give a proof of \cref{thm:cliqueFree}.
Recall that by \Cref{lem:binsearch}, for this it suffices to give an $\Oh_{|X|,\delta,h}(\|G\|^{1+o(1)})$-time algorithm for the \FolioCliqueEx problem, which asks to either return an $(X,\delta)$-model-folio of $G$, or conclude that $G$ contains $K_h$ as a minor.

In \Cref{sec:compact}, we gave this algorithm under the assumptions that $X$ is well-linked, large enough, and $G$ is $(X,k,\alpha)$-compact for large enough $k$.
In this section we devise a recursive scheme to reduce the general case to this case.
Two ingredients of this recursive scheme are the algorithms for carving chips and replacing them from \Cref{sec:carving}, but we need one more ingredient to guarantee that the depth of the recursion tree is bounded by $\Oh_k(\log n)$, which we give now in \Cref{subsec:reedslemma}.

\subsection{Reed's Lemma}
\label{subsec:reedslemma}

Reed~\cite{DBLP:conf/stoc/Reed92} gave an algorithm that, assuming an $n$-vertex graph $G$ contains a separator $S$ with $|S| \le k$ so that every component of $G - S$ contains at most $\frac{2}{3} n$ vertices, finds in time $\Oh_k(\|G\|)$ a separator $S'$ with $|S'| \le k$ so that every component of $G - S'$ contains at most $\frac{3}{4} n$ vertices.
Next we give a version of Reed's algorithm, with worse guarantees on the separator, that in the negative case when no separator exists, outputs a well-linked set of $G$ that roughly captures the cardinality tangle of $G$ (the orientation of all the order-at-most-$k$ separations that always points to the side with larger cardinality).
We name our lemma the ``Reed's Lemma'', even though to the best of our knowledge the extension to output such well-linked set is novel to this work.

\begin{lemma}[Reed's Lemma]
\label{lem:reed}
 There is an algorithm that, given an $n$-vertex graph $G$ and an integer $k \ge 1$, in time $2^{\Oh(k)}\cdot \|G\|$ returns either
 \begin{enumerate}[nosep,label=(\arabic*)]
  \item\label{o:separation} a separation $(A,B)$ of $G$ of order at most $3k$ such that $|A\setminus B|\leq (1-\frac{1}{100k^2})n$ and $|B\setminus A|\leq (1-\frac{1}{100k^2})n$; or
  \item\label{o:welllinked} a well-linked set $X$ of size $3k$ such that for every separation $(A,B)$ of order at most $k$, if $|A\cap X|>k$ then $|A\setminus B|\geq \frac{n}{100k^2}$.
 \end{enumerate}
\end{lemma}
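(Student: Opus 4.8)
The plan is to follow the structure of Reed's original balanced-separator algorithm, which repeatedly maintains a set of at most $3k$ ``marked'' vertices together with a partition of the remaining graph into pieces, each piece being cheaply separable from the marked set, and which terminates either by producing a small balanced separator or by discovering that the marked set has become well-linked. First I would run the following iterative process. We maintain a set $X$ of marked vertices, initially empty (or a small arbitrary seed), together with the decomposition of $G-X$ into the connected components it induces. We call a component \emph{big} if it has more than $\frac{n}{100k^2}$ vertices. As long as $|X|<3k$ and there is a big component $C$, we do the following: since we want $X$ to ``grow into'' $C$, we test (using flow augmentation as in \Cref{lem:testing-well-linked}, hence in time $2^{\Oh(k)}\cdot\|G\|$) whether there is a separation $(A,B)$ of order at most $k$ with $C\subseteq B\setminus A$ roughly, that splits $C$ away from the rest; more precisely we look for a small vertex cut inside $G$ that isolates a large chunk containing part of $C$. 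If such a small cut of order at most $k$ is found whose one side $D$ (the part of $C$ cut off) is big, then: if the \emph{other} side is also big we output the separation directly as outcome~\ref{o:separation} (after enlarging the separator by $X$, getting order at most $3k$). Otherwise the cut-off big side $D$ is unique and large, so we \emph{recurse into} $G[D\cup(\text{cut})]$: intuitively $D$ now plays the role of $G$, we keep $X$ restricted to this side plus the $\le k$ cut vertices, and continue. If no such small cut exists — i.e.\ $C$ cannot be separated from the rest of $G$ by a cut of order $\le k$ with both sides nontrivially large — then $C$ is ``$k$-linked to the outside,'' and we can pull $k$ fresh well-linked vertices out of it: find $k$ vertex-disjoint paths from $C$ to $X\cup(\text{the rest})$ and add their endpoints in $C$ to $X$, increasing $|X|$ by (up to) $k$.

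The key bookkeeping is that every iteration either outputs outcome~\ref{o:separation} and stops, or increases $|X|$ by a positive amount (when we pull fresh vertices), or replaces the working graph by a strictly smaller big side $D$ while not decreasing $|X|$. Because $|X|$ only increases up to $3k$ before we stop, and because each ``recurse into a big side'' step shrinks the vertex count by a constant fraction (the big side $D$ has $\ge\frac{n}{100k^2}$ vertices but the complementary small side has $<\frac{n}{100k^2}$ vertices, so in fact $D$ carries almost all the mass — here one must be careful and instead argue the complementary direction, using that each recursion step removes at least one big component from consideration), the process halts after $2^{\Oh(k)}\cdot\mathrm{poly}(k)$ iterations that each cost $2^{\Oh(k)}\cdot\|G\|$, giving the claimed running time. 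Upon termination there are two cases. If $|X|$ reached $3k$, I claim $X$ is well-linked and satisfies the quantitative condition in~\ref{o:welllinked}: well-linkedness is certified because every time we added a batch of $k$ vertices we did so along $k$ vertex-disjoint paths in the \emph{current} working graph, and these paths can be stitched together (this is the standard argument that iteratively pulling linked batches yields a well-linked set, cf.\ the constructions of well-linked sets in treewidth approximation); and the property that any order-$\le k$ separation $(A,B)$ with $|A\cap X|>k$ has $|A\setminus B|\ge\frac{n}{100k^2}$ holds because such a separation would have had to be discovered as a small cut during the process — more than $k$ marked vertices on the small side would contradict the fact that we never terminated the recursion on a side containing $>k$ marked vertices, unless that side was big. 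If instead $|X|<3k$ at termination, then every component of the final $G-X$ is small (not big), i.e.\ has at most $\frac{n}{100k^2}\le(1-\frac{1}{100k^2})n$ vertices (for $n$ large; small $n$ is handled trivially since then we can just declare $G$ itself a balanced separator of size $\le 3k$ after a direct check), and we can greedily group these small components into two groups $A',B'$ each of total size at most $(1-\frac{1}{100k^2})n$, outputting the separation $(A'\cup X, B'\cup X)$ of order $|X|\le 3k$ as outcome~\ref{o:separation}.

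There is one technical point that needs care: the quantitative well-linkedness guarantee in~\ref{o:welllinked} concerns separations of the \emph{whole} graph $G$, not of a working subgraph, and $X$ might have been partly assembled deep inside a recursion; I would handle this by never ``forgetting'' the outer boundary — each recursive step keeps the $\le k$ cut vertices in $X$ as a ``frozen'' sub-boundary, so a separation of $G$ restricts to a separation of each working graph of no larger order relative to the relevant boundary, and the argument above applies verbatim. A clean way to organize this is to phrase the whole thing non-recursively as a single loop maintaining $(X,\text{working region }R)$ where $R\subseteq V(G)$, $X\cap R$ are the ``movable'' marked vertices and $X\setminus R$ are frozen boundary vertices, with the invariant $|X|\le 3k$ throughout; the analysis then reduces to tracking the potential $\Phi=|X|\cdot n + |R|$ or similar and showing it strictly decreases.

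\textbf{Main obstacle.} The hard part will be the correctness of outcome~\ref{o:welllinked}: proving that the set $X$ we accumulate is genuinely well-linked \emph{and} simultaneously ``captures the cardinality tangle'' in the quantitative sense stated. Well-linkedness alone is the classical Reed/Robertson--Seymour argument about stitching disjoint-path batches; but we also need that the large side of \emph{every} small separation aligns with the large-cardinality side, and making the two demands compatible — i.e.\ ensuring the fresh vertices are pulled out of components that are big \emph{and} that the failure to find a small cut really did preclude an unbalanced separation sneaking in later — requires choosing the flow/cut subroutine and its stopping condition very carefully (one wants the minimal such cut, pushed maximally towards the rest of the graph, so that no order-$\le k$ separation with a big small-side can later appear undetected). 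Getting those quantifiers to line up, rather than the running-time bound, is where the real work lies.
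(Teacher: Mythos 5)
Your plan differs from the paper's, and the difference matters: the paper does not recurse into subgraphs at all, and does not attempt to certify well-linkedness constructively by "stitching" path batches. Instead it runs a single loop of at most $3k$ iterations. In each iteration it looks at the unique large component of $G-S$ (where $S$ is the set accumulated so far), builds a spanning tree, and looks for a \emph{rich} vertex $u$: one with more than $3k$ children whose subtrees, after discarding the $3k$ largest, still carry at least $\beta-1=\frac{n}{100k^2}-1$ vertices. Either it finds such a $u$ and adds it to $S$, or no rich vertex exists, in which case the tree can be decomposed into $\ge 3k$ disjoint subtrees of size $\ge\beta$ each, and $X$ is taken to be one representative vertex per subtree. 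In both terminal cases, well-linkedness of $X$ is tested as a black box via \Cref{lem:testing-well-linked}; if it fails, the witnessing separation is turned into a balanced separation for outcome~\ref{o:separation}. The single combinatorial fact carrying all of outcome~\ref{o:welllinked} is that a rich vertex on the small side of an order-$\le 3k$ separation forces that side to have at least $\beta$ vertices, and symmetrically for the subtree representatives.

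This rich-vertex observation is exactly the ingredient your plan is missing, and you correctly flag the resulting gap as "where the real work lies." When you pull $k$ vertex-disjoint paths out of a big component and add their endpoints in $C$ to $X$, nothing prevents those $k$ endpoints from sitting right next to the separating set; a later order-$k$ separation could then put all of them on a small side of size far below $\frac{n}{100k^2}$, violating the quantitative guarantee in~\ref{o:welllinked}. Saying "such a separation would have had to be discovered during the process" is a claim, not an argument: the flow subroutine you run detects cuts that separate the current big component from the rest, but it does not detect cuts that isolate a small, awkwardly placed cluster of your chosen marked vertices \emph{within} a big piece. Unless you anchor each marked vertex to a large amount of mass that a $3k$-cut cannot strip away (which is precisely what the rich-vertex criterion does via the $>3k$ attached subtrees of total leftover size $\ge\beta-1$), the "tangle-alignment" half of outcome~\ref{o:welllinked} does not follow. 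Relatedly, your plan to prove well-linkedness of $X$ directly by stitching path batches is unnecessary extra work — the paper sidesteps it entirely by testing well-linkedness at the end and recycling a failing witness into a balanced separation — and your termination/recursion-depth bookkeeping (which you yourself flag as unclear) is avoided because the paper's loop is bounded by $3k$ iterations by construction.
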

\begin{proof}
 For convenience, we let
 $$\beta\coloneqq \frac{n}{100k^2}.$$
 We may assume that $\beta>1$, or equivalently $n>100k^2$, because otherwise returning a separation $(A,B)=(\{u\},V(G))$ for any vertex $u$ is a valid outcome~\ref{o:separation}.

 We will define a procedure that in consecutive iterations constructs sets $S_0\subseteq S_1\subseteq S_2\subseteq \ldots$, starting from $S_0\coloneqq \emptyset$ and adding one vertex at a time; that is, $|S_i\setminus S_{i-1}|=1$ for all $i\geq 1$. The procedure always terminates before reaching iteration $3k+1$, so it will be always the case that $i\leq 3k$. We also let $G_i\coloneqq G-S_i$.

 The procedure works as follows. Suppose we have already defined $S_{i-1}$ for some~$i\in \{1,\ldots,3k\}$.
 The $i$th iteration proceeds in three steps.

 \paragraph*{Step 1:}
  We investigate the connected components of the graph $G_{i-1}$. If each of those connected components has at most $\frac{n}{2}$ vertices, then using a standard argument (see e.g.~\cite[Lemma 7.20]{platypus}) we find a partition of the components of $G_{i-1}$ into sets $\rcomps_A$ and $\rcomps_B$ so that $|\bigcup_{C \in \rcomps_A} C|\leq \frac{2}{3}n$ and $|\bigcup_{C \in \rcomps_B} C|\leq \frac{2}{3}n$.~Then
  \[(A,B)\coloneqq \left(S\cup \bigcup_{C\in \rcomps_A} C,S\cup \bigcup_{C\in \rcomps_B} C\right)\]
  is a separation of order $i\leq 3k$ that satisfies the requirements of outcome \ref{o:separation} and can be output by the algorithm.

  Otherwise, let $H = G_{i+1}[C]$, where $C$ is the unique connected component of $G_{i-1}$ with $|C| > n/2$.

  \newcommand{\chld}{\mathsf{chld}}

 \paragraph*{Step 2:}
  Construct any spanning tree $T$ of $H$ and root it at any vertex $r$. For a vertex $u$ of $T$, let $T_u$ be the subtree of $T$ induced by $u$ and all its descendants, $\chld(u)$ the set of the children of $u$ in $T$, and $n_u\coloneqq |V(T_u)|$. Note that all the numbers $n_u$ can be computed in time $\Oh(n)$ by bottom-up dynamic programming on $T$.

  Call a vertex $u$ of $T$ {\em{rich}} if the following condition holds:
  \begin{quote}
  Vertex $u$ has more than $3k$ children, and the sum of all the numbers $n_v$ for $v\in \chld(u)$, except for the $3k$ highest ones, is at least $\beta-1$.
  \end{quote}
  We note the following.
  \begin{claim}\label{cl:locateRich}
   One can in $\Oh(kn)$ time find all rich vertices in $T$.
  \end{claim}
  \begin{claimproof}
  For a vertex $u$, whether $u$ is rich can be decided in time $\Oh(k)$ by looking at $n_u$ and the $3k$ highest numbers among the multiset $\{n_v\colon v\in\chld(u)\}$. These can be computed in time $\Oh(k\cdot |\chld(u)|)$ per vertex~$u$, which sums up to $\Oh(kn)$ time in total.
  \end{claimproof}

  We apply the algorithm of \cref{cl:locateRich} to determine whether $T$ has any rich vertex.
  If this is the case, we select any rich vertex $u$, define
  $$S_i\coloneqq S_{i-1}\cup \{u\},$$
  and either proceed to iteration $i+1$ provided $i<3k$, or end the iteration with $S_{3k}$ defined provided $i=3k$.

  For the sake of future reasoning, we note the following.

  \begin{claim}\label{cl:rich-drags}
   For every separation $(A,B)$ of $G$ of order at most $3k$, if $u\in A\setminus B$, then $|A\setminus B|\geq \beta$.
  \end{claim}
  \begin{claimproof}
   Let $\Tt\coloneqq \{T_v\colon v\in \chld(u)\textrm{ and }T_v\textrm{ is disjoint with }A\cap B\}$. As $u\in A\setminus B$, we have $\{u\}\cup \bigcup_{L\in \Tt} V(L)\subseteq A\setminus B$.
   As $|A\cap B|\leq 3k$, $\Tt$ contains all but at most $3k$ subtrees of $T$ rooted at the children of $u$, which in particular, by the richness of $u$, contain at least $\beta-1$ vertices in total. Hence,
   $$|A\setminus B|\geq \left|\{u\}\cup \bigcup_{L\in \Tt} V(L)\right|\geq 1+(\beta-1)=\beta.\qedhere$$
  \end{claimproof}

 \paragraph*{Step 3:} Otherwise, $T$ contains no rich vertices. We break $T$ into small subtrees as follows.

 \begin{claim}
  Assuming $T$ has no rich vertex, one can in $\Oh(n)$ time compute a family $\Ll$ of vertex-disjoint subtrees of $T$ so that
  $$|\Ll|\geq 3k\qquad\textrm{and}\qquad |V(L)|\geq \beta\quad \textrm{for each }L\in \Ll.$$
 \end{claim}
 \begin{claimproof}
  We perform the following procedure that in consecutive iterations constructs subtrees $L^1,L^2,\ldots$ and $T=T^0,T^1,T^2,\ldots$ of $T$. In iteration $i$, we do the following:
  \begin{itemize}[nosep]
   \item Find the lowest vertex $u$ of $T^{i-1}$ such that the subtree of $T^{i-1}$ rooted at $u$, call it $T^{i-1}_u$, has at least $\beta$ vertices.
   \item Define $L^i\coloneqq T^{i-1}_u$, $T^i\coloneqq T^{i-1}-V(L^i)$, and proceed to the next iteration.
  \end{itemize}
  The procedure terminates once the currently handled tree $T^{i-1}$ already has less than $\beta$ vertices. We define $\Ll$ to be the set of all the constructed subtrees $L^i$; it is clear that they are vertex-disjoint and that each has at least $\beta$ vertices. Also, it is straightforward to implement the procedure so that it runs in time $\Oh(n)$, by a bottom-up scan on $T$.

  We are left with proving that $|\Ll|\geq 3k$. For this, we observe that from the construction and the assumption that $T$ has no rich vertex, it follows that for each $L\in \Ll$, we have $|V(L)|\leq \beta+3k\beta=(3k+1)\beta$. Indeed, assuming $L=L^i=T^{i-1}_u$ for some vertex $u$ of $T^{i-1}$, every subtree of $T^{i-1}$ rooted at a child of $u$ has less than $\beta$ vertices, and the total number of vertices contained in all except $3k$ largest such subtrees is upper bounded by $\beta-1$.
  Now, as $|V(T)|=|V(H)|>\frac{n}{2}$ and the iteration stops once there are fewer than $\beta$ vertices left, we conclude that $$|\Ll|\geq \frac{\frac{n}{2}-\beta}{(3k+1)\beta}\geq 3k.\qedhere$$
 \end{claimproof}

 Let $\Ll'$ be any subset of $\Ll$ of size exactly $3k$ and let $X$ be a set constructed by selecting one arbitrary vertex from each subtree $L\in \Ll'$. We use the algorithm of \cref{lem:testing-well-linked} to test whether $X$ is well-linked in~$G$, which takes time $2^{\Oh(k)}\cdot \|G\|$.
 Depending on the outcome of this test, we consider two options.

 \subparagraph*{Case 1: $X$ is not well-linked.} In this case, the algorithm of \cref{lem:testing-well-linked} outputs a separation $(A,B)$ such that $|A\cap X|>|A\cap B|$ and $|B\cap X|>|A\cap B|$. In particular, the order of $(A,B)$ is $|A\cap B|<|X|=3k$. Let $\Ll'_A$ consist of all the trees of $\Ll'$ that contain a vertex of $A\cap X$. Then $|\Ll'_A|=|A\cap X|>|A\cap B|$, hence there is at least one tree $L\in \LL'_A$ that is disjoint with $A\cap B$. Since $L$ contains also a vertex of $A$, it follows that $V(L)\subseteq A\setminus B$. Recalling that $|V(L)|\geq \beta$, we conclude that $|A - B| \ge \beta$, implying 
 \[|B-A| \le n-\beta \le \left(1-\frac{1}{100k^2}\right)n.\]
 Symmetrically, we also conclude that $|A-B| \le \left(1-\frac{1}{100k^2}\right)n$.
 So the separation $(A,B)$ satisfies all the requirements of outcome \ref{o:separation} and can be output by the algorithm.

 \subparagraph*{Case 2: $X$ is well-linked.} We claim that in this case, $X$ satisfies the requirements of outcome \ref{o:welllinked} and thus can be output by the algorithm. For this, it remains to prove that for every separation $(A,B)$ of order at most $k$, if $|A\cap X|>k$ then $|A|\geq \frac{n}{100k^2}$. As in the first case, let $\Ll'_A$ consist of all the trees of $\Ll'$ that intersect $|A\cap X|$. As $|\Ll'_A|=|A\cap X|>k\geq |A\cap B|$, there is a tree $L\in \Ll'_A$ that is not intersected by $A\cap B$. Then $L$ intersecting $A$ implies that $V(L)\subseteq A\setminus B$, hence $|A\setminus B|\geq |V(L)|\geq \beta=\frac{n}{100k^2}$.

 \paragraph*{After the iteration.} We are left with explaining what happens when the iterative procedure presented above managed to execute $3k$ iterations and constructed the set $S_{3k}$. Denote $X\coloneqq S_{3k}$.

 Again, we apply the algorithm of \cref{lem:testing-well-linked} to test whether $X$ is well-linked in $G$ in time $2^{\Oh(k)}\cdot \|G\|$. And again, we have two possible outcomes.

 \subparagraph*{Case 1: $X$ is not well-linked.} Again, the algorithm of \cref{lem:testing-well-linked} provides a separation $(A,B)$ such that $|A\cap X|>|A\cap B|$ and $|B\cap X|>|A\cap B|$, which in particular has order $|A\cap B|<|X|=3k$.

 Since $|A\cap X|>|A\cap B|$, there exists a vertex $u\in A\cap X$ that is also in $A\setminus B$. As $u\in X=S_{3k}$, we may apply \cref{cl:rich-drags} to $u$ and conclude that $|A\setminus B|\geq \beta$. This implies that $|B\setminus A|\leq |B|\leq n-\beta=(1-\frac{1}{100k^2})n$, and symmetrically also $|A\setminus B|\leq (1-\frac{1}{100k^2})n$. So separation $(A,B)$ satisfies all the requirements of outcome \ref{o:separation} and can be output by the algorithm.

 \subparagraph*{Case 2: $X$ is well-linked.} We again claim that $X$ satisfies the requirements of outcome \ref{o:welllinked}. For this, consider any separation $(A,B)$ of order at most $k$ such that $|A\cap X|>k$. Then there exists a vertex $u\in (A\setminus B)\cap X$, to which we can apply \cref{cl:rich-drags} and conclude that $|A\setminus B|\geq \beta=\frac{n}{100k^2}$, as required.
\end{proof}

\subsection{The algorithm}
\label{sec:omnomnomnomnom}
Finally, with all the tools prepared, we can prove \cref{thm:cliqueFree}.
Recall that we are given an instance $(G,X,\delta)$ of \Folio and an integer $h$, and the goal is to either compute an $(X,\delta)$-model-folio $G$, or return that $G$ contains $K_h$ as a minor.
The desired running time is $\Oh_{|X|,\delta,h}(\|G\|^{1+o(1)})$.

\subsubsection{Preparations}
Our algorithm is recursive, so let us now fix some parameters that will stay the same throughout the recursive procedure.
We set $\delta \coloneqq \max(\delta, h)$, to guarantee that preserving the $(X,\delta)$-folio preserves the existence of a $K_h$ minor model.
Let $x$ be the size of $X$ in the initial input, and let $k'$ be the constant from \Cref{thm:compact-solvable} (the algorithm for compact graphs), computable from $h$.
We let $k \coloneqq \max(x+1, k')$.
Note that for any~$\alpha$, being $(X,k,\alpha)$-compact implies being $(X,k',\alpha)$-compact.
Let us then fix $\alpha \coloneqq f(k, \delta)$, where $f$ is the function from \Cref{lem:ultimatechipreplace}, asserting that the algorithm of the lemma takes a family of $(X,k,f(k,\delta))$-chips as an input.
These parameters $h$, $\delta$, $k$, and $\alpha$ will stay the same throughout all recursive calls of the algorithm.
Note that they are bounded by a computable function of the parameters $|X|$, $\delta$, and $h$ of the initial input.

Now that $\delta$ is globally fixed, we can define that an instance is a pair $\ins = (G,X)$.
We denote $G(\ins) = G$ and $X(\ins) = X$.
Let us throughout use $n$ as a shorthand to $|V(G)|$.

Our algorithm uses quite complicated recursion with many cases.
Multiple of these cases uses the following recursive construction.
Let $(G,X)$ be an instance and $(A,B)$ a separation.
We denote by $(G,X) \recurse (A,B)$ the instance $(G[A], (X \cap A) \cup (A \cap B))$.
Note that in this notation, the order in which $A$ and $B$ are written matters.
Let $\ins_A = (G,X) \recurse (A,B)$ and $\ins_B = (G,X) \recurse (B,A)$.
Recall that by \Cref{lem:folio_over_separator}, given an $(X(\ins_A),\delta)$-model-folio of $G(\ins_A)$ and an $(X(\ins_B),\delta)$-model-folio of $G(\ins_B)$, we can compute an $(X,\delta)$-model-folio of $(G,X)$ in time $\Oh_{|X|+|A \cap B|,\delta}(\|G\|)$.

Then we define two invariants of the instance.
The \emph{size} of an instance $(G,X)$ is $|(G,X)| = n-|X|$.
Note that if $(A,B)$ is a separation, then $|(G,X) \recurse (A,B)| + |(G,X) \recurse (B,A)| \le |(G,X)|$.
Then, we define that the \emph{balance} of an instance $(G,X)$ is the size of a largest $(X,k,\alpha)$-chip of $G$, or $0$ if $G$ has no $(X,k,\alpha)$-chips.
Note that the balance of $(G,X)$ is at most $|(G,X)|$, and that if $|X| < k$, then the balance is the either the size of a largest connected component of $G - X$, or $0$ if all of them have size smaller than~$\alpha$.
Let us show that recursion by separations does not increase balance.

\begin{lemma}
\label{lem:maintbalance}
If $(A,B)$ is a separation of $G$, then the balance of $(G,X) \recurse (A,B)$ is at most the balance of $(G,X)$.
\end{lemma}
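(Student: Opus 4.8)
The statement claims that passing from $(G,X)$ to $(G,X)\recurse(A,B)=(G[A],X_A)$ where $X_A=(X\cap A)\cup(A\cap B)$ cannot increase the balance, i.e. cannot create a larger $(X_A,k,\alpha)$-chip than the largest $(X,k,\alpha)$-chip of $G$. The natural approach is to take a witnessing chip on the smaller instance and transfer it to the original graph. Concretely, I would let $C\subseteq A$ be an $(X_A,k,\alpha)$-chip of $G[A]$ realizing the balance of $(G[A],X_A)$ (if no such chip exists the balance is $0$ and there is nothing to prove). I would then argue that $C$ is itself an $(X,k,\alpha)$-chip of $G$ of the same size, which immediately gives the inequality since the balance of $(G,X)$ is the size of a \emph{largest} such chip.

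\textbf{Key steps.} First, $C$ is connected in $G[A]$, hence connected in $G$, since $G[A]$ is an induced subgraph. Second, $|C|\ge\alpha$ is inherited verbatim. Third, $C\cap X=\emptyset$: indeed $C$ is disjoint from $X_A\supseteq X\cap A$, and $C\subseteq A$, so $C\cap X=C\cap(X\cap A)=\emptyset$. Fourth, and this is the only step requiring a small observation, I must show $N_G(C)=N_{G[A]}(C)$ and in particular $|N_G(C)|<k$. Here is where the separation property enters: $C\subseteq A$, and $C$ is disjoint from $A\cap B$ (because $A\cap B\subseteq X_A$ and $C\cap X_A=\emptyset$), so $C\subseteq A\setminus B$. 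Since $(A,B)$ is a separation, there are no edges between $A\setminus B$ and $B\setminus A$, so every neighbour of $C$ in $G$ lies in $A$; thus $N_G(C)=N_{G[A]}(C)$, which has size $<k$ by assumption on $C$. Combining the four points, $C$ is an $(X,k,\alpha)$-chip of $G$ with $|C|$ equal to the balance of $(G,X)\recurse(A,B)$, so the balance of $(G,X)$ is at least that, as desired.

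\textbf{Main obstacle.} There is essentially no hard part; the only thing to be careful about is the bookkeeping with $X_A=(X\cap A)\cup(A\cap B)$, specifically checking that a chip of the sub-instance is automatically disjoint from the separator $A\cap B$ (because it must avoid $X_A$), which is exactly what pins it into the open side $A\setminus B$ and makes its neighbourhood behave the same in $G$ as in $G[A]$. I would write the proof as a short paragraph invoking these four bullet verifications, with the separation property cited for the neighbourhood equality. One minor edge case worth a sentence: if $(G[A],X_A)$ has no $(X_A,k,\alpha)$-chip at all, its balance is $0\le$ the balance of $(G,X)$ trivially, so we may assume such a $C$ exists.
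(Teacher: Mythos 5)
Your proof is correct and follows essentially the same argument as the paper's: take a chip $C$ of $(G[A],X_A)$, observe that avoiding $X_A\supseteq A\cap B$ forces $C\subseteq A\setminus B$ so that $N_G(C)=N_{G[A]}(C)$, and conclude $C$ is an $(X,k,\alpha)$-chip of $G$ of the same size. The paper states this more tersely but the content is identical.
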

\begin{proof}
Let $C$ be an $((X \cap A) \cup (A \cap B), k, \alpha)$-chip of $G[A]$.
Now, $C \subseteq A \setminus (A \cap B)$, meaning that $N_{G[A]}[C] = N_G[C]$.
This implies that $C$ is also an $(X,k,\alpha)$-chip of $G$.
\end{proof}

We use the following result of~\cite{DBLP:journals/combinatorica/Kostochka84,thomason_1984} to assume that $\|G\| = \Oh_{|X|,h}(|(G,X)|)$.
\begin{lemma}[\cite{DBLP:journals/combinatorica/Kostochka84,thomason_1984}]
\label{lem:minorfreeedgebound}
There is a constant $c$ so that if $|E(G)| \ge c h \sqrt{\log h} \cdot |V(G)|$, then $G$ contains $K_h$ as a minor.
\end{lemma}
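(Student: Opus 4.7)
The plan is to reduce the claim to the classical Kostochka--Thomason bound on the extremal function for $K_h$-minor containment, which asserts that any graph of minimum degree at least $c' h\sqrt{\log h}$ contains $K_h$ as a minor (for some absolute constant $c'$). Combining this with a standard peeling argument yields the statement.

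More concretely, I would proceed in two steps. First, I would reduce the edge-density hypothesis to a minimum-degree hypothesis. Starting from $G$, iteratively remove any vertex of degree smaller than $c' h\sqrt{\log h}$, and let $G'$ be the resulting (possibly empty) subgraph; $G'$ either is empty or has minimum degree at least $c' h\sqrt{\log h}$. Each removed vertex accounts for fewer than $c' h\sqrt{\log h}$ edges, so the total number of removed edges is strictly less than $c' h\sqrt{\log h} \cdot |V(G)|$. By choosing $c \geq c'$ in the hypothesis $|E(G)| \geq c h\sqrt{\log h} \cdot |V(G)|$, this guarantees that $G'$ is non-empty (and in fact still contains edges), so the peeling produces a subgraph of the claimed minimum~degree.

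Second, I would invoke the Kostochka--Thomason theorem applied to $G'$ to conclude that $G'$ contains $K_h$ as a minor. Since $G'$ is a subgraph of $G$ and the minor relation is preserved under taking supergraphs, $G$ contains $K_h$ as a minor as well.

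The main obstacle is of course the second step, i.e.\ the Kostochka--Thomason theorem itself, which is non-trivial and whose original proofs use a delicate combination of probabilistic arguments with structural analysis of dense graphs (in particular, randomly partitioning the vertex set and showing that with positive probability one obtains the desired branch sets). However, for our purposes this is a black-box statement from the literature, so in the write-up I would simply cite~\cite{DBLP:journals/combinatorica/Kostochka84,thomason_1984} and present only the easy peeling reduction explicitly.
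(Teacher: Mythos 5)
Your proposal is correct and takes essentially the same approach as the paper: the paper states this lemma as a direct citation of Kostochka~\cite{DBLP:journals/combinatorica/Kostochka84} and Thomason~\cite{thomason_1984} and supplies no proof, which is exactly what you propose after the preliminary reduction. One small remark: the Kostochka--Thomason extremal bound is usually stated in terms of average degree (equivalently, the edge-to-vertex ratio), so it already coincides with the lemma's hypothesis up to a factor of two and your peeling step, while correct, is not strictly necessary; it is needed only if one starts from the minimum-degree formulation of the theorem.
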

To avoid invoking \Cref{lem:minorfreeedgebound} explicitly many times, let us assume that whenever we construct a new instance and recurse into it, we first check if \Cref{lem:minorfreeedgebound} implies that it contains $K_h$-minor, and if yes, instead of making a recursive call, we immediately return that the graph contains $K_h$-minor.
We therefore always assume that $\|G\| = \Oh_h(n) = \Oh_{|X|,h}(|(G,X)|)$.

As another similar assumption, we assume that whenever $n \le \max(200k^3, \alpha+k)$, we do not use recursion, but instead solve the instance $(G,X)$ by brute-force in time $\Oh_{k,\alpha}(1)$.

\subsubsection{Overview}
We then briefly overview our recursive algorithm.
The algorithm takes as input an instance $(G,X)$, and returns the $(X,\delta)$-model-folio of $G$, or that $G$ contains a $K_h$-minor.
It has four main cases, one of them containing two subcases.
The summary of the cases is as follows.

\begin{enumerate}
\item[\textbf{\algcasecc:}] $G - X$ is not connected.
\begin{itemize}[nosep]
\item[Action:] Recurse into the instances $\ins_C = (G[X \cup C], X)$ for all $C \in \cc(G - X)$.
\item[Progress:] Each $G(\ins_C) - X(\ins_C)$ is connected.
\end{itemize}
\item[\textbf{\algcasenwl:}] $X$ is not well-linked.
\begin{itemize}[nosep]
\item[Action:] Let $(A,B)$ be a separation witnessing that $X$ is not well-linked. Recurse into the instances $\ins_A = (G,X) \recurse (A,B)$ and $\ins_B = (G,X) \recurse (B,A)$.
\item[Progress:] $|X(\ins_A)|,|X(\ins_B)| < |X|$
\end{itemize}
\item[\textbf{\algcasewlsmall:}] $X$ is well-linked but $|X| < k$.
\begin{itemize}[nosep]
\item[Note:] Because $G-X$ is connected, $|X|<k$, and $n \ge \alpha+k$, the balance of $(G,X)$ is $|(G,X)|>n-k$.
\item[Action:] Apply Reed's Lemma (\Cref{lem:reed}) with the parameter $k$. Based on whether it returns a separation or a well-linked set, go to \algcasebalsep or \algcaserwl.
\item[\textbf{\algcasebalsep:}] Separation $(A,B)$.
\begin{itemize}[nosep]
\item[Action:] Recurse into $\ins_A = (G,X) \recurse (A,B)$ and $\ins_B = (G,X) \recurse (B,A)$.
\item[Progress:] The balance of $\ins_A$ and $\ins_B$ is at most $n-\Omega(n/k^2)$.
\end{itemize}
\item[\textbf{\algcaserwl:}] Well-linked set $W$.
\begin{itemize}[nosep]
\item[Action:] Recurse into $(G,X \cup W)$.
\item[Progress:] The balance of $(G,X \cup W)$ is at most $n-\Omega(n/k^2)$.
\end{itemize}
\end{itemize}
\item[\textbf{\algcasecarve:}] $X$ is well-linked and $|X| \ge k$.
\begin{itemize}[nosep]
\item[Action:] Roughly speaking, implement the following process by using the tools from \Cref{sec:carving}: while there exists an $(X,k,\alpha)$-chip $C$, solve the instance $\ins_C = (G[N[C]], N(C))$ recursively and replace $G[N[C]]$ with a smaller graph. After no $(X,k,\alpha)$-chips exist, solve the instance by using \Cref{thm:compact-solvable} (the algorithm for compact graphs).
\item[Progress:] Every $\ins_C$ has $|X(\ins_C)| < k$. Moreover, we guarantee that the sum of the sizes of the instances $\ins_C$ is at most $|(G,X)| \cdot (1+1/\log n)$.
\end{itemize}
\end{enumerate}

Let us then briefly describe the main argument for bounding the running time.
The main idea is to show that the recursion-depth is at most $\Oh_k(\log n)$, and that the sum of the sizes of the instances we recurse into from an instance $(G,X)$ is at most $|(G,X)| \cdot (1+1/\log n)$.
This implies that the total size of all instances across the whole recursion tree can be bounded by $\Oh_k(n \log n)$.

The argument for bounding the recursion-depth by $\Oh_k(\log n)$ is that the balance never grows when going down the recursion tree, but \algcasewlsmall decreases the balance by a large factor.
Because $|X|$ always stays bounded by $4k$, decreases in \algcasenwl and \algcasecarve, and does not increase in \algcasecc, \algcasewlsmall must happen often enough.

\subsubsection{Details}
We then describe and analyze our recursive algorithm in detail.
We first consider individual recursive calls, describing and analyzing the algorithm on them.
Then we finish by bounding the overall running time across all recursive calls.

Each recursive call starts by testing if $G-X$ is connected.
\paragraph{\algcasecc: $G-X$ is not connected.}
If $G-X$ is not connected, we recursively solve the instance $\ins_C = (G[C \cup X], X)$ for each connected component $C \in \cc(G-X)$.
Note that $\ins_C = (G,X) \recurse (C \cup X, V(G) - C)$, so by \Cref{lem:maintbalance} the balance of $\ins_C$ is at most the balance of $(G,X)$.
Also, clearly the sum of the sizes of these instances $\ins_C$ is exactly the size of $(G,X)$.
Note that the graphs $G[C \cup X]$ for all $C$ can be constructed in total $\Oh_{|X|}(\|G\|)$ time.

If some recursive call returns that $G[C \cup X]$ contains a $K_h$-minor, then we can immediately return that $G$ contains $K_h$-minor.
Otherwise, all of them return an $(X,\delta)$-model-folio of $G[C \cup X]$.
To combine the model-folios into an $(X,\delta)$-model-folio of $G$, we run the following divide-and-conquer algorithm.
The algorithm takes as input a subset $\comps \subseteq \cc(G-X)$, and returns an $(X,\delta)$-model-folio of $G[X \cup \bigcup \comps]$.
When $|\comps| = 1$, we already have the answer and can return it directly.
When $|\comps| \ge 2$, we partition it in a balanced way into subsets $\comps = \comps_1 \cup \comps_2$, call the algorithm recursively with $\comps_1$ and $\comps_2$, and then use \Cref{lem:folio_over_separator} with the graph $G[X \cup \bigcup \comps]$ and the separation $(X \cup \bigcup \comps_1, X \cup \bigcup \comps_2)$ to compute the $(X,\delta)$-model-folio of $G[X \cup \bigcup \comps]$.
Clearly, this algorithm runs in $\Oh_{|X|,\delta}(\|G\| \log n)$ time.

The total running time of \algcasecc is $\Oh_{|X|,\delta}(\|G\| \log n)$, plus the running times of recursive calls.

\medskip

Then, if $G-X$ is connected, we use the algorithm of \Cref{lem:testing-well-linked} to test if $X$ if well-linked in time $2^{\Oh(|X|)} \cdot \|G\|$.
If $X$ is not well-linked, we go to \algcasenwl, and if $X$ is well-linked, we go to either \algcasewlsmall or \algcasecarve, based on whether $|X| < k$ or $|X|\geq k$.

\paragraph{\algcasenwl: $X$ is not well-linked.}
Let $(A,B)$ be a separation of $G$ such that $|A \cap X| > |A \cap B|$ and $|B \cap X| > |A \cap B|$, provided by the algorithm of \Cref{lem:testing-well-linked}.
We call the algorithm recursively on the instances $\ins_A = (G,X) \recurse (A,B)$ and $\ins_B = (G,X) \recurse (B,A)$, and then use \Cref{lem:folio_over_separator} to combine the results of the recursive calls.
Note that if either recursive call returns that the graph contains $K_h$ minor, then also $G$ contains $K_h$ minor because $G(\ins_A)$ and $G(\ins_B)$ are subgraphs of $G$.
Note also that
\[|X(\ins_A)|=|(A-B)\cap X|+|A\cap B|<|(A-B)\cap X|+|B\cap X|=|X|,\]
and similarly $|X(\ins_B)| < |X|$; that is, the size of $X$ shrinks in the recursive calls.
By \Cref{lem:maintbalance}, the balances of both $\ins_A$ and $\ins_B$ are at most the balance of $(G,X)$.
Note that $|\ins_A|+|\ins_B| \le |(G,X)|$.

The total running time of \algcasenwl is $\Oh_{|X|,\delta}(\|G\|)$, plus the running times of the two recursive calls.

\paragraph{\algcasewlsmall: $X$ is well-linked but $|X| < k$.}
Note that because $G-X$ is connected, $|X|<k$, and $n \ge \alpha+k$, the balance of $(G,X)$ is at this point $|(G,X)| > n-k$.
We apply Reed's Lemma (\Cref{lem:reed}) with $G$ and the parameter~$k$, taking $2^{\Oh(k)} \cdot \|G\|$ time.
There are two different subcases, \algcasebalsep and \algcaserwl, depending on its output.

\paragraph{\algcasebalsep: Separation.}
Suppose first that Reed's Lemma returns a separation $(A,B)$ of order at most $3k$ so that $|A - B|, |B - A| \le n-n/(100k^2)$.
In that case, we proceed similarly as in \algcasenwl: we call the algorithm recursively on the instances $\ins_A = (G,X) \recurse (A,B)$ and $\ins_B = (G,X) \recurse (B,A)$, and use \Cref{lem:folio_over_separator} to combine the results.
Similarly to \algcasenwl, we have that $|\ins_A| + |\ins_B| \le |(G,X)|$.
However, this time we have the additional property that the balance of both $\ins_A$ and $\ins_B$ is at most $n-n/(100k^2)$, simply because the sizes of both $\ins_A$ and $\ins_A$ are at most $n-n/(100k^2)$.
This time, $|X(\ins_A)|$ and $|X(\ins_A)|$ can be more than $|X|$, but because $|A \cap B| \le 3k$ and $|X| < k$, they are at most $4k-1$.

The running time of \algcasebalsep is the running time of the two recursive calls, plus $\Oh_{k,\delta}(\|G\|)$ (note that $|A \cap B|+|X| = \Oh(k)$).

\paragraph{\algcaserwl: Well-linked set.}
Suppose then that Reed's Lemma returns a well-linked set $W$ of size $3k$ so that for every separation $(A,B)$ of order at most $k$, if $|A \cap W| > k$, then $|A - B| \ge n/(100k^2)$.
We call the algorithm recursively with the instance $(G, X \cup W)$.
Because $|X| < k$ and $|W| \le 3k$, we have that $|X \cup W| < 4k$.
Note that an $(X \cup W,\delta)$-model-folio of $G$ can be turned into an $(X, \delta)$-model-folio of $G$ in time $\Oh_{k,\delta}(\|G\|)$ in a straightforward way.
Clearly, the size of $(G,X \cup W)$ is at most the size of $(G,X)$.
The main property we gain in this case is the following.

\begin{lemma}
The balance of $(G, X \cup W)$ is at most $n-n/(100k^2)$.
\end{lemma}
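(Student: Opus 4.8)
The plan is to derive the bound directly from the defining property of the well-linked set $W$ returned by Reed's Lemma, applied to the separation induced by a hypothetical large chip. Suppose, towards a contradiction, that $(G, X\cup W)$ has balance greater than $n - n/(100k^2)$; then there is an $(X\cup W, k, \alpha)$-chip $C$ with $|C| > n - n/(100k^2)$, which in particular means $|V(G) \setminus C| < n/(100k^2)$. First I would package $C$ as a separation of $G$: set $A \coloneqq V(G) \setminus C$ and $B \coloneqq N[C]$, so that $(A,B)$ is a separation of $G$ with $A \cap B = N(C)$, and hence its order is $|N(C)| < k$. Here I am using the orientation so that $B - A = C$ is the ``large side'' and $A - B = V(G) \setminus N[C]$ is the ``small side''; note $|B - A| = |C| > n - n/(100k^2)$.

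Next I would check the precondition of the property of $W$, namely that $|B \cap W| > k$. Since $C$ is a chip for the root set $X \cup W$, we have $C \cap (X\cup W) = \emptyset$ by the definition of a chip, so $W \subseteq A$, i.e.\ $B \cap W \subseteq N(C)$, which has size $< k$. That is the \emph{opposite} direction, so I should instead invoke the property with the separation $(B, A)$ — reading Reed's Lemma \ref{lem:reed}\ref{o:welllinked} with the pair oriented as $(A',B') = (B,A)$: if $|A' \cap W| > k$ then $|A' - B'| \ge n/(100k^2)$. Taking the contrapositive: since $|A' - B'| = |B - A| = |C| > n - n/(100k^2) > n - n/(100k^2)$ would need to be \emph{at least} $n/(100k^2)$ for the conclusion, this is consistent, so contraposition does not bite here either. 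The clean route is: apply the property to the separation $(A,B)$ with $A$ as the ``first'' side; the property gives a conditional on $|A \cap W|$. Since $W \subseteq A$, $|A\cap W| = 3k > k$, so we get $|A - B| \ge n/(100k^2)$, i.e.\ $|V(G)\setminus N[C]| \ge n/(100k^2)$. But $|V(G) \setminus N[C]| \le |V(G)\setminus C| < n/(100k^2)$, a contradiction. Hence no such chip $C$ exists, and the balance of $(G, X\cup W)$ is at most $n - n/(100k^2)$.

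The one subtlety to get right — and what I expect to be the only real point requiring care — is the orientation convention in \Cref{lem:reed}\ref{o:welllinked} and making sure the chip $C$, together with its neighborhood, genuinely yields a separation of order $< k$ of $G$ (not of $G-X$ or some subgraph), and that $W$ sits entirely on the side of that separation with $>k$ vertices of $W$. Once the separation $(A,B)$ with $A = V(G)\setminus C$, $B = N[C]$, $|A\cap B| = |N(C)| < k \le k$ is set up and we observe $W \subseteq A$ forces $|A \cap W| = |W| = 3k > k$, the property of $W$ forces $|A - B| \ge n/(100k^2)$, which is incompatible with $|A| = |A - B| + |A \cap B| \le (n - |C|) < n/(100k^2)$ when $|C| > n - n/(100k^2)$. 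This contradiction establishes the claimed bound on the balance.
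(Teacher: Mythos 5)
Your proof is correct and uses the same argument as the paper: take a (hypothetical large) chip $C$, form the separation $(V(G)-C, N[C])$ of order $|N(C)|<k$, observe that $C$ disjoint from $X\cup W$ forces $|(V(G)-C)\cap W|=3k>k$, and invoke outcome~\ref{o:welllinked} of Reed's Lemma to conclude $|V(G)-N[C]|\geq n/(100k^2)$. The paper phrases this directly as a bound on $|C|$ for an arbitrary chip rather than by contradiction, but these are the same proof; your middle paragraph's detour through the wrong orientation $(B,A)$ is harmless exploration and does not affect the correctness of the route you ultimately take.
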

\begin{proof}
Let $C$ be an $(X \cup W, k, \alpha)$-chip of $G$.
We have that $(V(G) - C, N[C])$ is a separation of $G$ of order $|N(C)|<k$.
Because $C$ is disjoint with $W$, it holds that $|(V(G) - C) \cap W| = 3k > k$, and therefore by the properties of $W$, $|V(G) - N[C]| \ge n/(100k^2)$, implying $|C| \le n - n/(100k^2)$.
\end{proof}

The running time of \algcaserwl is the running time of the recursive call, plus $\Oh_{k,\delta}(\|G\|)$.

\paragraph{\algcasecarve: $X$ is well-linked and $|X| \ge k$.}
Let initially $G_1 \coloneqq G$ and $i \coloneqq 1$.
We repeat the following process, maintaining the invariants that the $(X,\delta)$-folio of $G_i$ is equal to the $(X,\delta)$-folio of $G$, $|V(G_i)| \le n$, $\|G_i\| \le \|G\|$, and the balance of $(G_i,X)$ is at most the balance of $(G,X)$.

We apply the algorithm of \Cref{lem:computechipfamilyfinal} with $G_i$, $X$, $k$, and $\alpha$.
In time $\Oh_{k,\alpha}(\|G_i\|^{1+o(1)})$ it returns a family $\chipfamily_i$ of pairwise non-touching $(X,k,\alpha)$-chips so that if $\numcarv_i$ is the number of $(X,k,\alpha)$-carvable vertices of $G_i$, then $|\bigcup \chipfamily_i| \ge \numcarv_i/\Oh_{k,\alpha}(\log |V(G_i)|)$.
If $\chipfamily_i = \emptyset$, then the process stops.

Otherwise, for each $C \in \chipfamily_i$, we compute the $(N(C), \delta)$-model-folio of $G_i[N[C]]$.
For this, there are two cases depending on the size of $C$.
We say that a chip $C \in \chipfamily_i$ is \emph{small} if $|C| \le 2 \alpha \log n$ and \emph{large} otherwise.
If $C$ is small, then we compute the $(N(C), \delta)$-model-folio of $G_i[N[C]]$ by calling the Robertson-Seymour algorithm~\cite{GM13}, which runs in $\Oh_{k,\delta}(|C|^3) \le \Oh_{k,\delta,\alpha}(\log^{3} n)$ time.
If $C$ is large, we call our algorithm recursively with $\ins_C = (G_i[N[C]], N(C))$.
If it returns that $G_i[N[C]]$ contains a $K_h$-minor, then because $\delta \ge h$ and the $(X,\delta)$-folio of $G_i$ is equal to the $(X,\delta)$-folio of $G$, $G$ also contains a $K_h$-minor, and we can return immediately.
Note that because $\ins_C = (G_i,X) \recurse (N[C], V(G_i)-C)$, the balance of $\ins_C$ is at most the balance of $(G_i,X)$, which is at most the balance of $(G,X)$.
The running time of this is bounded by the running times of the recursive calls on the large instances, plus $\Oh_{k,\delta,\alpha}(\|G_i\| \log^3 n)$.

We then apply the algorithm of \Cref{lem:ultimatechipreplace} with $G_i$, $k$, $\delta$, $X$, $\chipfamily_i$, and the $(N(C),\delta)$-model-folios of $G_i[N[C]]$ for each $C \in \chipfamily_i$.
It runs in time $\Oh_{|X|,\delta}(\|G_i\|)$ and outputs a graph $G_{i+1}$ so that $X \subseteq V(G_{i+1})$ and the $(X,\delta)$-folios of $G_i$ and $G_{i+1}$ are equal.
Moreover, $G_{i+1}$ has the properties that
\begin{itemize}[nosep]
\item $|V(G_{i+1})| \le |V(G_i)| \le n$ and $\|G_{i+1}\| \le \|G_i\| \le \|G\|$; 
\item the number of $(X,k,\alpha)$-carvable vertices of $G_{i+1}$ is at most $\numcarv_{i+1} \le \numcarv_i - |\bigcup \chipfamily_i| + \alpha \cdot |\chipfamily_i|/2 \le \numcarv_i - |\bigcup \chipfamily_i|/2$; and
\item the size of a largest $(X,k,\alpha)$-chip of $G_{i+1}$ is at most the size of a largest $(X,k,\alpha)$-chip of $G_i$.
\end{itemize}
The last property implies that the balance of $G_{i+1}$ is at most the balance of $G_i$, which is at most the balance of $G$.
It also outputs an $(X,\delta)$-model-mapper $\Delta_{i+1}$ from $G_{i+1}$ to $G_i$.
We then repeat the process with $i$ increased by one.

Let $\ell$ be the index at which the process stops, i.e., with $\chipfamily_\ell = \emptyset$, implying $\numcarv_\ell = 0$.
Now $G_\ell$ is $(X,k,\alpha)$-compact.
Moreover, $X$ is well-linked in $G_\ell$, because it is well-linked in $G$ and the $(X,\delta)$-folios of $G$ and $G_\ell$ are equal.
Now we can call the algorithm of \Cref{thm:compact-solvable} to return either the $(X,\delta)$-model-folio of $G_\ell$ or that $G_\ell$ contains a $K_h$-minor.
Again, because the $(X,\delta)$-folios of $G$ and $G_\ell$ are equal and $\delta \ge h$, if $G_{\ell}$ contains $K_h$-minor we can return that $G$ contains $K_h$-minor.
The algorithm runs in time $\Oh_{|X|,\delta,h,\alpha}(\|G_\ell\|^{1+o(1)})$.
Finally, we use the $(X,\delta)$-model-mappers $\Delta_\ell,\ldots,\Delta_2$ in sequence to turn the $(X,\delta)$-model-folio of $G_\ell$ to an $(X,\delta)$-model-folio of $G$ in time $\Oh_{|X|,\delta}(\ell \cdot \|G\|)$, and then return it.

To analyze the running time, let us first bound the number of rounds $\ell$ in this process.
The facts that $\numcarv_{i+1} \le \numcarv_i - |\bigcup \chipfamily_i|/2$ and $|\bigcup \chipfamily_i| \ge \numcarv_i/\Oh_{k,\alpha}(\log |V(G_i)|)$ imply that $\numcarv_{i+1} \le \numcarv_i - \numcarv_i/\Oh_{k,\alpha}(\log |V(G_i)|)$.
Because $\numcarv_1 \le n$, $\numcarv_{\ell-1} \ge 1$, and $|V(G_i)| \le n$, it follows that $\ell \le \Oh_{k,\alpha}(\log^2 n)$.
Therefore, as $\|G_i\| \le \|G\|$, the total running time of \algcasecarve is bounded by the running times of the recursive calls, plus 
\begin{gather*}
\Oh_{k,\alpha}(\log^2 n) \cdot (\Oh_{k,\alpha}(\|G\|^{1+o(1)}) + \Oh_{k,\delta,\alpha}(\|G\| \cdot \log^3 n) + \Oh_{|X|,\delta}(\|G\|)) + \Oh_{|X|,\delta,h,\alpha}(\|G\|^{1+o(1)})\\
\le \Oh_{|X|,\delta,h,\alpha}(\|G\|^{1+o(1)}).
\end{gather*}

To ultimately bound the total running time of the recursive algorithm, we need to bound the sum of the sizes of the instances to which we recurse into.
This is easily bounded by $2 n$, but it turns out that the number $2^{\Oh_{k}(\log n)}$ is large, and this is why we need to replace the $2$ in the base by $1+1/\log n$.
This is the reason why we treat large chips and small chips differently.

Let $\chipfamily'_i = \{C \in \chipfamily_i \colon |C| > 2 \alpha \log n\}$ be the family of large chips at the $i$th iteration.
The sum of the sizes of the instances we recurse into is $\sum_{i=1}^{\ell-1} |\bigcup \chipfamily'_i|$.
Because $|\bigcup \chipfamily'_i| \ge |\chipfamily'_i| \cdot 2 \alpha \log n$, we have that 
\begin{align*}
\numcarv_{i+1} &\le \numcarv_i - \left|\bigcup \chipfamily'_i\right| \cdot (1-1/(2\log n))\\
\Rightarrow \quad \left|\bigcup \chipfamily'_i\right| &\le (\numcarv_i - \numcarv_{i+1}) \cdot \frac{1}{1-1/(2\log n)} \le (\numcarv_i - \numcarv_{i+1}) \cdot (1+1/\log n)
\end{align*}

As $\sum_{i=1}^{\ell-1} \numcarv_i - \numcarv_{i+1} \le n - |X|$, it follows that $\sum_{i=1}^{\ell-1} |\bigcup \chipfamily'_i| \le (n-|X|) \cdot (1+1/\log n)$.
In particular, the sum of the sizes of the instances we recurse into is at most $|(G,X)| \cdot (1+1/\log n)$.

\paragraph{Overall running time analysis.}
We have established that the running time of each individual recursive call, discounting the running times of its child calls, is $\Oh_{|X|,k,\delta,h,\alpha}(\|G\|^{1+o(1)}) = \Oh_{|X|,k,\delta}(|(G,X)|^{1+o(1)})$.

Let us then analyze the running time of the whole algorithm.
First, recall that in the initial input it holds that $|X| < k$, and observe that only the recursion of \algcasewlsmall can increase $|X|$, but it increases it to at most $4k-1$.
Therefore, $|X| < 4k$ in all recursive calls.
Now, if $N$ is the sum of the sizes of all instances throughout all recursive calls, then the total running time is $\Oh_{k,\delta}(N^{1+o(1)})$.
To finish the proof of \Cref{thm:cliqueFree}, we show that $N = \Oh_k(n \log n)$, where $n$ is the number of vertices of the initial input graph $G$.

The first step to show this is to bound the depth of the recursion tree.

\begin{lemma}
The depth of the recursion tree is at most $\Oh(k^3 \log n)$.
\end{lemma}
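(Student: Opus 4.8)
The plan is to track two potential functions along any root-to-leaf path of the recursion tree: the quantity $|X|$, which is always bounded by $4k$, and the \emph{balance} of the current instance. The key structural facts are that (i) the balance never increases when we descend into a child instance — this is exactly \Cref{lem:maintbalance} for separation-based recursions (\algcasecc, \algcasenwl, \algcasebalsep), and is explicitly argued for \algcaserwl and for the chip-instances $\ins_C$ in \algcasecarve — and (ii) whenever we pass through \algcasewlsmall, the balance drops by a factor of roughly $(1 - 1/(100k^2))$, i.e.\ from $n'$ to at most $n'-n'/(100k^2)$, where $n'$ is the number of vertices of the instance at that node.

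First I would argue that \algcasewlsmall cannot be ``too far apart'' on a root-to-leaf path. Between two consecutive applications of \algcasewlsmall, the set $X$ is never enlarged except possibly by \algcasewlsmall itself: \algcasenwl strictly decreases $|X|$, \algcasecarve passes to instances with $|X(\ins_C)| < k$, and \algcasecc leaves $|X|$ unchanged. Moreover \algcasewlsmall is \emph{only} entered when $|X| < k$, and it is reached from a node where $X$ is well-linked and $G-X$ is connected. Starting from any node where $|X| < k$, along a downward path the only cases that can occur before we again reach a node with $|X| < k$ that is also well-linked with $G-X$ connected are: \algcasecc (does not change $|X|$, eventually makes $G-X$ connected for the child), \algcasenwl (decreases $|X|$, so $|X| < k$ is preserved), and the possibility that $X$ is already well-linked with $|X| < k$, which is precisely \algcasewlsmall. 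The only way to temporarily have $|X| \ge k$ is through a \algcasebalsep or \algcaserwl step, each of which raises $|X|$ to at most $4k-1$; after that, $|X|$ can only stay put (\algcasecc) or decrease (\algcasenwl, \algcasecarve) until it drops below $k$ again, which takes at most $3k$ steps of \algcasenwl (each step removing at least one vertex from $X$) interleaved with \algcasecc steps that do not change $|X|$. However, consecutive \algcasecc steps cannot occur, since after one such step $G-X$ is connected in every child. Similarly \algcasecarve is entered only with $|X| \ge k$, so it cannot directly follow a node with $|X| < k$ without an intervening \algcasebalsep/\algcaserwl. Putting this together, between two consecutive \algcasewlsmall nodes on a downward path there are at most $\Oh(k)$ other nodes.

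Next I would use the balance potential. Consider a root-to-leaf path $P$. At the root, the balance is at most $n$; at a leaf (where we stop recursing), the base-case condition $n' \le \max(200k^3, \alpha+k)$ holds, so either way the number of vertices is $\Oh_{k,\alpha}(1)$ and hence the balance is $\Oh_{k,\alpha}(1)$. Each \algcasewlsmall node multiplies the balance by at most $1 - 1/(100k^2)$: indeed \algcasebalsep makes both children have \emph{size} at most $n' - n'/(100k^2)$ and hence balance at most that, while \algcaserwl makes the balance at most $n' - n'/(100k^2)$ by the displayed lemma inside \algcaserwl; and the balance of an instance of vertex-count $n'$ is at most $n'$, so in either case the balance of the child is at most $(1 - 1/(100k^2))$ times the balance of the \algcasewlsmall node itself (using that at a \algcasewlsmall node the balance equals $|(G,X)| > n'-k \ge n'/2$ for $n'$ large, so the multiplicative drop on the balance is by a factor at most $1 - \Omega(1/k^2)$). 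Since the balance is monotone non-increasing along $P$ and is multiplied by a factor $\le 1 - \Omega(1/k^2)$ at each \algcasewlsmall node, the number of \algcasewlsmall nodes on $P$ is at most $\Oh(k^2 \log n)$. Combined with the fact that \algcasewlsmall nodes are $\Oh(k)$-dense along $P$ (every $\Oh(k)$ consecutive nodes contain one, except near the leaf), the length of $P$ is at most $\Oh(k) \cdot \Oh(k^2 \log n) = \Oh(k^3 \log n)$.

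The main obstacle I expect is the bookkeeping in the second step — verifying that \algcasewlsmall really is $\Oh(k)$-dense along every downward path, i.e.\ that no long stretch of the recursion avoids it. One must carefully rule out chains such as ``\algcasebalsep then \algcasecarve then \algcasebalsep then \dots'' by noting that \algcasecarve is only reached when $|X| \ge k$ and recurses to children with $|X| < k$, so it cannot be immediately preceded by a node with $|X| < k$ nor followed indefinitely by nodes keeping $|X| \ge k$; and that once $|X|$ has been pumped up by a \algcasebalsep/\algcaserwl step, the only way it can come back below $k$ is via $\Oh(k)$ \algcasenwl steps (each dropping $|X|$ by at least one) with at most isolated \algcasecc steps interspersed, after which a \algcasewlsmall must trigger (since $G-X$ is connected and the algorithm tests well-linkedness). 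I would formalize this by a short case analysis on which of the four cases applies at a node as a function of whether $|X| < k$ and whether $X$ is well-linked, showing that starting from a node where $|X|<k$, within $\Oh(k)$ steps down any path we hit \algcasewlsmall.
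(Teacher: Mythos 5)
Your proof is correct and follows essentially the same strategy as the paper's: bound the density of \algcasewlsmall nodes along any root-to-leaf path by $\Oh(k)$ (using that $|X|\le 4k-1$ always, that \algcasenwl/\algcasecarve strictly decrease $|X|$, that \algcasecc does not change $|X|$ but cannot occur twice in a row, and that only \algcasewlsmall can increase $|X|$), then bound the number of \algcasewlsmall nodes by $\Oh(k^2\log n)$ via the monotonically non-increasing balance potential dropping by a $1-\Omega(1/k^2)$ factor at each such node. Your density argument is longer and more case-by-case than the paper's crisp version, but the content is the same and the conclusion $\Oh(k^3\log n)$ is reached identically.
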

\begin{proof}
Let $t_1,\ldots,t_\ell$ be a root-leaf path in the recursion tree, with $t_1$ being the root and $t_\ell$ a leaf.
Let $\ins_i$ be the instance corresponding to the node $t_i$.
By our assumption that instances with at most $200k^3$ vertices are solved by brute-force, we have that $|V(G(\ins_i))| > 200k^3$ for all $i < \ell$.

If $t_i$ corresponds to \algcasenwl or \algcasecarve, then $|X(\ins_{i+1})| < |X(\ins_i)|$.
If $t_i$ corresponds to \algcasecc, then $|X(\ins_{i+1})| \le |X(\ins_i)|$ and $t_{i+1}$ does not correspond to \algcasecc because $G(\ins_{i+1}) - X(\ins_{i+1})$ is connected.
Therefore, because the maximum value of $|X(\ins_i)|$ is $4k-1$, any subpath of $8k$ consecutive nodes must contain a node corresponding to \algcasewlsmall.

Let $t_i$ be a node corresponding to \algcasewlsmall with $i<\ell$.
The balance of $\ins_i$ is $|\ins_i|$, and the balance of $\ins_{i+1}$ is at most $|V(G(\ins_i))| - |V(G(\ins_i))|/(100k^2)$, which by $|X(\ins_i)| < k$ and $|V(G(\ins_i))| > 200k^3$ is at most $|\ins_i|-|\ins_i|/(200k^2)$.
Because the balance of $\ins_{i+1}$ is never larger than the balance of $\ins_i$, and the balance of the initial input instance is at most $n$, this implies that the path contains at most $\Oh(k^3 \log n)$ nodes.
\end{proof}

To bound the sum of the sizes of all instances across the recursion, we recall that in \algcasecc, \algcasenwl, and \algcasewlsmall, the sum of the sizes of the instances to which we recurse into from the instance $(G,X)$ is at most $|(G,X)|$, and in \algcasecarve it is at most $|(G,X)| \cdot (1+1/\log n)$.
Therefore, as the depth of the recursion tree is $\Oh(k^3 \log n)$, the sum of the sizes of all instances is at most
\begin{align*}
N &\le n \cdot (1+1/\log n)^{\Oh(k^3 \log n)} \cdot \Oh(k^3 \log n)\\
&\le n \cdot 2^{\Oh(k^3)} \cdot \Oh(k^3 \log n)\\
&\le \Oh_k(n \log n).
\end{align*}

This finishes the proof of \Cref{thm:cliqueFree}.

\section{General case}
\label{sec:generalcase}
Finally, we are ready to approach the proof of our main result, \Cref{thm:main-real}.
For this, we first show that under a certain assumption about the compactness of $G$, the problem reduces to the \FolioCliqueEx problem solvable by the algorithm of \Cref{thm:cliqueFree}.
Then we give a recursive scheme similar to the one in \Cref{sec:omnomnomnomnom}, but much simpler, to reduce the general case to this compact case.



\subsection{Generic folios in compact graphs}
Recall that the $(X,\delta)$-folio of a graph $G$ is called generic if it contains all $X$-rooted graphs with detail at most~$\delta$.
Our reduction from general graphs to minor-free graphs is based on the fact that if a graph is compact and contains a large enough clique as a minor, then its $(X,\delta)$-folio is generic.
A lemma that implies this was proven by Robertson and Seymour in Graph Minors~XIII~\cite[(5.3)]{GM13}.
However, in our algorithm we need not only the knowledge that the $(X,\delta)$-folio of $G$ is generic, but also to algorithmically obtain an $(X,\delta)$-model-folio of $G$ in that case.
A direct adaptation of the proof of Robertson and Seymour into an algorithm would result in a running time of $\Oh_{|X|,\delta}(\|G\|^2)$, so next we give an alternative version of the proof of theirs, giving an $\Oh_{|X|,\delta}(\|G\|)$ time algorithm.
Our proof still mostly follows their ideas.

\begin{lemma}
\label{lem:RSgenericAlgo}
Let $\delta \in \N$, $G$ be a graph, $X \subseteq V(G)$, and $\model$ a minor model of $K_h$ in $G$, where $h \ge 3 |X| + \delta$.
Suppose further that for every $v \in V(K_h)$, there is no separation $(A,B)$ of order $<|X|$ such that $X \subseteq A$ and $\model(v) \cap A = \emptyset$.
Then,
\begin{itemize}[nosep]
\item the $(X,\delta)$-folio of $G$ is generic; and
\item given $\model$, an $(X,\delta)$-model-folio of $G$ can be computed in time $\Oh_{|X|,\delta}(\|G\|)$.
\end{itemize}
\end{lemma}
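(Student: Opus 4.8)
The plan is to follow the strategy of Robertson and Seymour's (5.3) from~\cite{GM13}, but to organize the combinatorial core so that every step is realizable by linear-time flow computations and brute force on bounded-size objects. First I would observe that the hypothesis---that no small separation separates any branch set $\model(v)$ from $X$---means, via Menger's theorem, that for each $v\in V(K_h)$ there exist $|X|$ vertex-disjoint paths from $\model(v)$ to $X$ in $G$; call this a \emph{linkage} $\mathcal{P}_v$. Since the branch sets $\model(v)$ together with the edges realizing the clique form a highly connected ``core'', the key point is that one can route, for \emph{any} target $X$-rooted graph $(H,\roots)$ of detail at most $\delta$, a model of $(H,\roots)$ that uses the branch sets of $\model$ as ``switchboards'' and the linkages to deliver the roots. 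Concretely, $(H,\roots)$ has at most $|X|+\delta$ vertices (since $\{\roots(u)\colon u\in V(H)\}$ are disjoint subsets of $X$, at most $|X|$ vertices of $H$ are rooted, and at most $\delta$ are unrooted); we need to realize each vertex $u$ of $H$ by a branch set, make the branch sets pairwise adjacent wherever $H$ demands it, and ensure $\eta(u)\cap X = \roots(u)$.

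The heart of the construction is the following. Reserve $|X|+\delta\le h/... $ wait---we have $h\ge 3|X|+\delta$, so there are enough branch sets of $\model$ to assign: one branch set $\model(v_u)$ to each vertex $u$ of $H$ (there are at most $|X|+\delta$ of them), and additionally enough ``spare'' branch sets (at least $2|X|$ of them) to serve as internal routers for the linkages so that the linkage paths can be rerouted to be disjoint from the $\model(v_u)$'s except where needed. Because $K_h$ is complete, any two chosen branch sets $\model(v_u), \model(v_{u'})$ are already adjacent in $G$, so the edges of $H$ are for free. To get $\eta(u)\cap X = \roots(u)$: for a rooted vertex $u$ with $\roots(u)=\{x_1,\dots,x_t\}\subseteq X$, we append to $\model(v_u)$ the initial segments of $t$ of the linkage paths from $\model(v_u)$ to $X$ ending exactly in $x_1,\dots,x_t$, truncated so they reach only those roots; for an unrooted vertex we keep the branch set disjoint from $X$ by using spare branch sets / rerouting to avoid $X$ entirely. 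Making all this simultaneously vertex-disjoint across the $\le |X|+\delta$ vertices of $H$ is where the ``$3|X|$'' slack is spent: one processes the vertices of $H$ one at a time, each time using a fresh batch of $\le |X|$ spare clique branch sets to absorb conflicts, exactly as in the original argument of~\cite{GM13}. This proves that every $(H,\roots)$ of detail $\le \delta$ is an $X$-rooted minor of $G$, i.e., the $(X,\delta)$-folio is generic.

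For the algorithmic statement, the point is that all of the above can be computed in $\Oh_{|X|,\delta}(\|G\|)$ time given $\model$. First, compute the linkages $\mathcal{P}_v$ for each $v$ by $|X|$ rounds of Ford--Fulkerson augmentation in $G$ (contracting $\model(v)$ to a single source vertex and $X$ to sinks); since $|X|$ is a parameter this is $\Oh_{|X|}(\|G\|)$ per branch set, and we only need it for the $\Oh_{|X|,\delta}(1)$ branch sets actually used. The set of $X$-rooted graphs of detail $\le\delta$ has size bounded by a computable function of $|X|$ and $\delta$, so we enumerate them; for each, the routing/assignment described above is a bounded-size bookkeeping task (which branch set plays which role, which path segments to glue), solvable in $\Oh_{|X|,\delta}(1)$ combinatorial steps plus $\Oh_{|X|,\delta}(\|G\|)$ to actually read off and output the resulting branch sets as vertex subsets of $G$. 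Summing over the $\Oh_{|X|,\delta}(1)$ rooted graphs gives total time $\Oh_{|X|,\delta}(\|G\|)$, and the collection of models produced is exactly an $(X,\delta)$-model-folio of $G$.

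The main obstacle I expect is the bookkeeping in the disjoint-rerouting step: one must argue carefully that the linkage paths delivering roots to distinct vertices $u$ of $H$ can be made pairwise vertex-disjoint and disjoint from the chosen clique branch sets, using only the slack provided by the extra $2|X|$ spare branch sets of $K_h$, and that this can be done greedily (process vertices of $H$ in order, reroute through the next unused spare branch set whenever a path hits an already-committed region) rather than requiring a global augmentation. This is precisely the delicate part of~\cite[(5.3)]{GM13}; I would lift their invariant essentially verbatim, but I must double-check that rerouting through a spare branch set never forces a path to re-enter $X$ at an unwanted root (handled by truncating each linkage path at its first vertex in $X$ and discarding paths whose first $X$-vertex is not the desired root), and that the contraction-based max-flow computations remain correct after the branch sets $\model(v)$ are treated as single vertices (they do, since the branch sets are connected in $G$).
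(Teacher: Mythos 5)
The central gap in your proposal is the disjointification step: you compute independent Menger linkages $\mathcal{P}_v$ from each chosen branch set $\model(v_u)$ to $X$, and then propose to make all the truncated path segments and branch sets pairwise disjoint by ``greedy rerouting through spare branch sets''. This is unproven and unlikely to be implementable in linear time. The inseparability hypothesis only rules out cuts of order $<|X|$; once path segments and branch sets for earlier vertices of $H$ are committed, the committed region can be arbitrarily large, so the hypothesis gives no guarantee that a fresh linkage can be rerouted around it, and ``routing through a spare branch set'' does not by itself resolve a collision. You explicitly say you would ``lift [the invariant of GM~(5.3)] essentially verbatim'', but the paper remarks immediately before this lemma that a direct adaptation of GM~(5.3) would yield running time $\Oh_{|X|,\delta}(\|G\|^2)$ --- the iterative path-improvement structure of that argument is exactly what costs the extra factor of $\|G\|$, and your sketch inherits that structure.

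The paper's proof replaces the rerouting entirely with a different idea. It first reduces to producing a model of the single maximal $X$-rooted graph $(K_{k+\delta}, x\mapsto\{x\})$ where $k=|X|$. It then introduces the notion of a \emph{witnessing clique}: a subset $W\subseteq V(K_h)$ with $|W|=k+\delta$ such that, after contracting each branch set $\model(v)$ with $v\in W$ to a single vertex, every $v\in W$ remains inseparable from $X$. Given such a $W$, since $|W|\ge k$ there is no $(X,W)$-separator of size $<k$, so a \emph{single} max-flow computation in the contracted graph yields $k$ vertex-disjoint $(X,W)$-paths, which immediately give the desired model --- no rerouting and no iteration. The existence of a witnessing clique is a separate purely combinatorial contraction argument (the ``good pair'' machinery): one contracts edges inside branch sets while a goodness invariant is preserved, and shows that when no further contraction preserves it, the branch sets lying outside a maximal witnessing separation $(A_1,B_1)$ are all singletons and form a witnessing clique. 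Algorithmically one enumerates the $\Oh_{k,\delta}(1)$ candidate subsets $W$, for each runs $\Oh_{k,\delta}(1)$ Ford--Fulkerson computations in the contracted graph, giving total time $\Oh_{k,\delta}(\|G\|)$. Your proposal as written does not contain this idea and therefore does not establish the claimed linear running time.
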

\begin{proof}
Denote $k = |X|$.
By ignoring the sets $\model(v)$ that intersect  $X$, assume that each $\model(v)$ is disjoint with $X$ and $h = 2k + \delta$.
Let $(H,\roots)$ be the $X$-rooted graph with $H$ isomorphic to $K_{k+\delta}$, $X \subseteq V(H)$, and $\roots(x) = \{x\}$ for all $x \in X$.
It suffices to show that we can find a model of $(H,\roots)$ in $G$, as this implies that the $(X,\delta)$-folio of $G$ is generic, and can be turned into an $(X,\delta)$-model-folio of $G$ in time $\Oh_{|X|,\delta}(\|G\|)$.

Let $G'$ be a graph with $X \subseteq V(G')$, and $\model'$ a minor model of $K_h$ in $G'$.
We say that $v \in V(K_h)$ is \emph{inseparable} in $(G',\model')$ if there is no separation $(A,B)$ of $G'$ of order $<k$ so that $X \subseteq A$ and $\model'(v) \cap A = \emptyset$.
Note that every $v \in V(K_h)$ is inseparable in $(G,\model)$.

Suppose that there exists $W \subseteq V(K_h)$, with $|W|=k+\delta$, so that after obtaining $(G',\model')$ from $(G,\model)$ by contracting each set $\model(v)$ with $v \in W$ into a single vertex $v$, every $v \in W$ is inseparable in $(G',\model')$.
Then, because $|W| \ge k$, there is no $(X,W)$-separator of size $<k$ in $G'$, implying that there is a set of $k$ vertex-disjoint $(X,W)$-paths in $G'$.
These paths and $W$ can be easily turned into a model of $(H,\roots)$ in $G'$, which in turn can be translated into a model of $(H,\roots)$ in $G$.

We call such set $W \subseteq V(K_h)$ a \emph{witnessing clique}.
Given a witnessing clique $W$, the method of the previous paragraph for finding a model of $(H,\roots)$ in $G$ can be implemented in time $\Oh_k(\|G\|)$ with the Ford-Fulkerson algorithm.
We will show that a witnessing clique exists.
After this existence proof, our algorithm works simply by trying all subsets $W$ of $V(K_h)$ of size $k+\delta$.

Let $G'$ be a graph with $X \subseteq V(G')$ and $\model'$ a model of $K_h$ in $G'$.
We say that the pair $(G',\model')$ is \emph{good} if there exists 
\begin{itemize}[nosep]
\item a separation $(A,B)$ of $G'$ of order $\le k$ with $X \subseteq A$; and
\item at least one vertex $v \in V(K_h)$ with $\model'(v) \cap A = \emptyset$,
\end{itemize}
so that every $v \in V(K_h)$ with $\model'(v) \cap A = \emptyset$ is inseparable in $(G',\model')$.
Furthermore, we say that such a separation $(A,B)$ \emph{witnesses} that $(G',\model')$ is good.
Note that the order of $(A,B)$ must be exactly $k$.

Observe that $(G,\model)$ is good, witnessed by the separation $(X,V(G))$.
Suppose $(G_1,\model_1)$ is obtained from $(G,\model)$ by contracting edges inside the branch sets of $\model$, and assume that $(G_1,\model_1)$ is good, but any further contraction inside a branch set would make it non-good.

Let $(A_1,B_1)$ be a separation that witnesses that $(G_1,\model_1)$ is good, and subject to that, maximizes $|A_1|$.
Note that because $\model_1$ is a model of $K_h$ and there exists $v \in V(K_h)$ with $\model_1(v) \cap A_1 = \emptyset$, every branch set of $\model_1$ must intersect $B_1$.
Therefore, there are at least $k+\delta$ branch sets of $\model_1$ that do not intersect $A_1$.

There are two cases.
First, if all branch sets of $\model_1$ that do not intersect $A_1$ have size $1$, then the vertices of $K_h$ corresponding to these branch sets form a witnessing clique, and we are done.
Second, if $\model_1$ has a branch set $\model_1(w)$ of size $|\model_1(w)| \ge 2$ that does not intersect $A_1$, then we show that contracting an edge in it produces a good pair, and therefore contradicts the choice of $(G_1,\model_1)$.

We start with the following observation.

\begin{claim}
\label{lem:RSgenericAlgo:claim1}
Let $v \in V(K_h)$ with $\model_1(v) \cap A_1 = \emptyset$.
There is no separation $(A',B')$ of $G_1$ so that $A_1 \subsetneq A'$, the order of $(A',B')$ is $\le k$, and $\model_1(v) \cap A' = \emptyset$.
\end{claim}
\begin{claimproof}
Such $(A',B')$ would witness that $(G_1,\model_1)$ is good, and thus by $A_1 \subsetneq A'$ it would contradict the choice of $(A_1,B_1)$.
\end{claimproof}

Let $w \in V(K_h)$ so that $\model_1(w)$ is disjoint with $A_1$ and $|\model_1(w)| \ge 2$.
Let $xy \in E(G_1[\model_1(w)])$ be an edge in the branch set, and $(G_2,\model_2)$ the pair resulting from contracting $xy$ into a vertex $z$.
We claim that $(G_2,\model_2)$ is good and this is witnessed by the separation $(A_2,B_2) = (A_1,(B_1 - \{x,y\}) \cup \{z\})$ of $G_2$.
The order of $(A_2,B_2)$ is $k$, and $\model_2(w)$ is disjoint with $A_2$.
It remains to show that every $v \in V(K_h)$ with $\model_2(v) \cap A_2 = \emptyset$ is inseparable in $(G_2,\model_2)$.

\begin{claim}
\label{lem:RSgenericAlgo:claim2}
Let $v \in V(K_h)$ with $\model_2(v) \cap A_2 = \emptyset$.
There is no separation $(A',B')$ of $G_2$ so that $A_2 \subsetneq A'$, the order of $(A',B')$ is $< k$, and $\model_2(v) \cap A' = \emptyset$.
\end{claim}
\begin{claimproof}
By uncontracting the contracted edge, such separation $(A',B')$ of $G_2$ could be mapped into a separation of $G_1$ that would contradict \Cref{lem:RSgenericAlgo:claim1}.
\end{claimproof}

By Menger's theorem, there exists a collection of $k$ vertex-disjoint $(X,A_1 \cap B_1)$-paths in $G_1$.
Because these paths are contained in $G_1[A_1]$, they exist also in $G_2$ and are contained in $G_2[A_2]$.
By \Cref{lem:RSgenericAlgo:claim2} and Menger's theorem, for every $v \in V(K_h)$ with $\model_2(v) \cap A_2 = \emptyset$, there exists a collection of $k$ vertex-disjoint $(A_2 \cap B_2, N_{G_2}(\model_2(v)))$-paths in $G_2$.
Note that $A_2 \cap B_2 = A_1 \cap B_1$ and these paths are contained in $G_2[B_2]$.
By combining these two collections of paths, we conclude that for every such $v$, there exists a collection of $k$ vertex-disjoint $(X,N_{G_2}(\model_2(v)))$-paths in $G_2$.
This shows that $v$ is inseparable in $(G_2,\model_2)$.
\end{proof}

Now the following lemma about generic folios in compact graphs is a simple consequence of \Cref{lem:RSgenericAlgo}.

\begin{lemma}\label{lem:compact-generic}
Let $\alpha,\delta\in \N$ and $G$ be a graph that is $(X,k,\alpha)$-compact, for some $X \subseteq V(G)$ with $|X| = k$.
If $G$ contains $K_h$ as a minor, where $h=\alpha(3k+\delta)$, then the $(X,\delta)$-folio of $G$ is generic.
Moreover, when given a minor model of $K_h$ in $G$, we can in time $\Oh_{k,\delta}(\|G\|)$ compute an $(X,\delta)$-model-folio of $G$.
\end{lemma}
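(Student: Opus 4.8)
The plan is to reduce \Cref{lem:compact-generic} to \Cref{lem:RSgenericAlgo} by combining branch sets of the given $K_h$-minor model so that the resulting (smaller) clique minor model satisfies the inseparability hypothesis of \Cref{lem:RSgenericAlgo}. First I would set $h'\coloneqq 3k+\delta$ and observe that $h=\alpha h'$. Given a minor model $\model$ of $K_h$ in $G$, partition the $h=\alpha h'$ branch sets into $h'$ groups, each of size $\alpha$, and for each group form a new branch set by taking the union of the $\alpha$ branch sets in it together with, for each pair of them, one arbitrarily chosen edge of $G$ joining them (such an edge exists since they were adjacent in the $K_h$-model). Since $K_h$ contains $K_{\alpha}$-subgraphs spanning each group, each new branch set induces a connected subgraph of $G$, the $h'$ new branch sets are pairwise disjoint, and any two of them are still joined by an edge of $G$; hence we obtain a minor model $\model'$ of $K_{h'}$ in $G$. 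This combining step can clearly be carried out in time $\Oh_{k,\delta,\alpha}(\|G\|)$, e.g.\ by scanning adjacency lists once.

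Next I would verify the inseparability hypothesis: for every $v\in V(K_{h'})$, there is no separation $(A,B)$ of $G$ of order $<k$ with $X\subseteq A$ and $\model'(v)\cap A=\emptyset$. Suppose such a separation existed. Then $B\setminus A$ is a union of connected components of $G-(A\cap B)$ that is disjoint from $X$ (since $X\subseteq A$), so let $C$ be the connected component of $G-(A\cap B)$ containing (a vertex of) $\model'(v)$; then $C\subseteq B\setminus A$, $C\cap X=\emptyset$, $G[C]$ is connected, and $|N(C)|\le |A\cap B|<k$. Moreover $|\model'(v)|\ge \alpha$ by construction (it is a union of $\alpha$ branch sets, hence contains at least $\alpha$ vertices), and $\model'(v)\subseteq C$ because $\model'(v)$ is connected in $G$, disjoint from $A\cap B\subseteq A$, and meets $C$; therefore $|C|\ge|\model'(v)|\ge\alpha$. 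Thus $C$ would be an $(X,k,\alpha)$-chip, contradicting $(X,k,\alpha)$-compactness of $G$. Hence no such separation exists, and every $v\in V(K_{h'})$ is inseparable.

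With $h'=3k+\delta=3|X|+\delta$ and the inseparability property established, \Cref{lem:RSgenericAlgo} applies to $G$, $X$, and the model $\model'$ of $K_{h'}$: it yields that the $(X,\delta)$-folio of $G$ is generic, and, given $\model'$, computes an $(X,\delta)$-model-folio of $G$ in time $\Oh_{|X|,\delta}(\|G\|)=\Oh_{k,\delta}(\|G\|)$. Since $\model'$ is obtained from the input model $\model$ of $K_h$ in time $\Oh_{k,\delta,\alpha}(\|G\|)$ (and $\alpha$ is a fixed parameter here), the total running time is $\Oh_{k,\delta}(\|G\|)$, as required. I do not expect a genuine obstacle: the only point requiring a little care is ensuring that the combined branch sets have size at least $\alpha$ and lie inside a single small-neighborhood component, which is exactly what makes the putative separating set an $(X,k,\alpha)$-chip and hence contradicts compactness; the connectivity of each combined branch set follows from the completeness of $K_h$ restricted to each group.
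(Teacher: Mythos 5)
Your proof is correct and follows essentially the same approach as the paper: group the $h=\alpha(3k+\delta)$ branch sets into $3k+\delta$ groups of size $\alpha$ to get a $K_{3k+\delta}$-model with branch sets of size at least $\alpha$, then use $(X,k,\alpha)$-compactness to rule out any small separator isolating a branch set from $X$, so that \Cref{lem:RSgenericAlgo} applies. The only cosmetic difference is that you spell out the connectivity of the combined branch sets and the containment $\model'(v)\subseteq C$ in more detail than the paper does.
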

\begin{proof}
By grouping the branch sets of the minor model of $K_h$ into $t = 3k+\delta$ equisized groups, we infer that $G$ contains a minor model $\model$ of $K_t$ such that $|\model(v)| \geq \alpha$ for each $v \in V(K_t)$.
Since $t = 3k + \delta$, it suffices to verify that the premise of \Cref{lem:RSgenericAlgo} is satisfied.
In particular, that there is no separation $(A,B)$ of order smaller than $|X|=k$ with $X \subseteq A$ and $\model(v) \cap A = \emptyset$, for some $v \in V(K_t)$.
But if such a separation existed, then the connected component $C$ of $G[B \setminus A]$ that contains $\model(v)$ would satisfy $|C| \ge |\model(v)| \ge \alpha$ and $|N(C)| \le |A \cap B| < k$, and thus be an $(X,k,\alpha)$-chip in $G$, contradicting that $G$ is $(X,k,\alpha)$-compact.
\end{proof}

\subsection{The algorithm}
Having \cref{lem:compact-generic} in hand, we can give the algorithm promised in \cref{thm:main-real}.
It will be based on a recursive scheme that is similar to the one presented in \cref{sec:omnomnomnomnom}, but considerably simpler.

\thmmainreal*
\begin{proof}
Our algorithm solves \Folio recursively, similarly to the algorithm of \Cref{sec:omnomnomnomnom}, but this time, there is only one case in the recursion, corresponding to \algcasecarve of \Cref{sec:omnomnomnomnom}.
We first describe and analyze the algorithm on a single recursive call, and then present the overall analysis.

Let $(G,X,\delta)$ be the given instance of \Folio.
Let $k = |X|$, and let $f$ be the function from \Cref{lem:ultimatechipreplace}, asserting that the algorithm of the lemma takes a family of $(X,k,f(k,\delta))$-chips as an input, and let us fix $\alpha = f(k,\delta)$.
We proceed similarly as in \algcasecarve of the algorithm of \Cref{sec:omnomnomnomnom}.

Let initially $G_1 = G$ and $i = 1$.
We repeat the following process, maintaining the invariants that the $(X,\delta)$-folio of $G_i$ is equal to that of $G$, $|V(G_i)| \le |V(G)|$, and $\|G_i\| \le \|G\|$.

We apply the algorithm of \Cref{lem:computechipfamilyfinal} with $G_i$, $X$, $k$, and $\alpha$.
In time $\Oh_{k,\alpha}(\|G_i\|^{1+o(1)})$ it returns a family $\chipfamily_i$ of pairwise non-touching $(X,k,\alpha)$-chips so that if $\numcarv_i$ is the number of $(X,k,\alpha)$-carvable vertices of $G_i$, then $|\bigcup \chipfamily_i| \ge \numcarv_i/\Oh_{k,\alpha}(\log |V(G_i)|)$.
If $\chipfamily_i = \emptyset$, then the process stops.

Otherwise, for each $C \in \chipfamily_i$, we use recursion to compute the $(N(C), \delta)$-model-folio of $G_i[N[C]]$.
Note that here, $|N(C)| < k$, so in the recursion the size of $X$ strictly decreases.
Then we apply the algorithm of \Cref{lem:ultimatechipreplace} with $G_i$, $k$, $\delta$, $X$, $\chipfamily_i$, and the $(N(C),\delta)$-model-folios of $G_i[N[C]]$ for each $C \in \chipfamily_i$.
It runs in time $\Oh_{k,\delta}(\|G_i\|)$ and outputs a graph $G_{i+1}$ so that $X \subseteq V(G_{i+1})$ and the $(X,\delta)$-folios of $G_i$ and $G_{i+1}$ are equal.
Moreover, $G_{i+1}$ has the properties that
\begin{itemize}[nosep]
\item $|V(G_{i+1})| \le |V(G_i)|$ and $\|G_{i+1}\| \le \|G_i\|$; and
\item the number of $(X,k,\alpha)$-carvable vertices of $G_{i+1}$ is at most $\numcarv_{i+1} \le \numcarv_i - |\bigcup \chipfamily_i|/2$.
\end{itemize}
It also outputs an $(X,\delta)$-model-mapper $\Delta_{i+1}$ from $G_{i+1}$ to $G_i$.
We then repeat the process with $i$ increased by one.

Let $\ell$ be the index at which the process stops, i.e., with $\chipfamily_\ell = \emptyset$.
The graph $G_\ell$ is $(X,k,\alpha)$-compact.
We call the algorithm of \Cref{thm:cliqueFree} with $(G_{\ell},X,\delta)$ and the integer $h = \alpha (3k+\delta)$.
It runs in time $\Oh_{k,\delta,h}(\|G_\ell\|^{1+o(1)}) = \Oh_{k,\delta}(\|G\|^{1+o(1)})$, and returns either an $(X,\delta)$-model-folio of $G_{\ell}$, or a model of a $K_h$-minor in $G_{\ell}$.
In the latter case, we apply the algorithm of \Cref{lem:compact-generic} to compute an $(X,\delta)$-model-folio of $G_{\ell}$.
In both cases, we obtained an $(X,\delta)$-model-folio of $G_{\ell}$ in time $\Oh_{k,\delta}(\|G\|^{1+o(1)})$.
We then use the $(X,\delta)$-model-mappers $\Delta_\ell,\ldots,\Delta_2$ in sequence to turn the $(X,\delta)$-model-folio of $G_\ell$ to an $(X,\delta)$-model-folio of $G$ in time $\Oh_{k,\delta}(\ell \cdot \|G\|)$, and then return it.

By the same argument as in \Cref{sec:omnomnomnomnom}, the number of iterations $\ell$ is $\ell = \Oh_{k,\alpha}(\log^2 |V(G)|) = \Oh_{k,\delta}(\log^2 |V(G)|)$.
Therefore, the total running time of one call in the recursion is $\Oh_{k,\delta}(\|G\|^{1+o(1)})$, plus the running times of the child calls.

To bound the overall running time, let us bound the sum of the sizes $\|G_i[N[C]]\|$ of the graphs we recurse into.
Because the chips in $\chipfamily_i$ are non-touching, we have that $\sum_{C \in \chipfamily_i} \|G_i[N[C]]\| \le k^2 \cdot \|G_i\| \le \Oh_k(\|G\|)$.
Therefore, across all $i \in [\ell]$, this sums up to at most $\Oh_{k,\delta}(\|G\| \log^2 |V(G)|)$.

Recall that we recurse into instances $(G_i[N[C]], N(C), \delta)$, where $C$ is a $(X,|X|,\alpha)$-chip, meaning that $|N(C)| < |X|$.
Therefore, the overall recursion-depth is at most $k = |X|$.
This implies that the sum of the sizes of all graphs throughout the recursive algorithm is
\begin{equation}
\label{finalproof:eq1}
(f(k,\delta) \cdot \log^2 |V(G)|)^{k+1} \cdot \|G\|,
\end{equation}
for a computable function $f$.
Let us assume $f(k,\delta) \ge 2^{2^{k+1}}$.
When $k+1 \le \log \log |V(G)|$, \eqref{finalproof:eq1} is bounded by
\[f(k,\delta)^{k+1} \cdot \log^{2 \log \log |V(G)|} |V(G)| \cdot \|G\| \le \Oh_{k,\delta}(\|G\|^{1+o(1)}),\]
and when $k+1 > \log \log |V(G)|$, \eqref{finalproof:eq1} is bounded by $f(k,\delta)^{2k+4}$,
implying that the overall running time is $\Oh_{k,\delta}(\|G\|^{1+o(1)})$.

We note that by using a similar technique as in \Cref{sec:omnomnomnomnom}, the bound of \eqref{finalproof:eq1} could be improved to $\Oh_{k}(\|G\|)$, but this is not necessary in this proof.
\end{proof}

\bibliographystyle{alpha}
\bibliography{references}

\appendix

\section{Finding clique minor models}\label{sec:finding}

In this section we prove \cref{lem:binsearch}, restated below for convenience.

\binsearch*
\begin{proof}
 We may assume that $h\geq 2$, otherwise both problems are trivial. Also, let $\delta'\coloneqq \max(\delta,h)$.

 Let $m\coloneqq |E(G)|$ and $e_1,e_2,\ldots,e_m$ be an arbitrary ordering of $E(G)$. For $i\in \{0,1,\ldots,m\}$, let
 $$G_i\coloneqq (V(G),\{e_1,\ldots,e_i\});$$
 that is, $G_i$ is obtained from $G$ by restricting the edge set to the first $i$ edges in the ordering.

 Let $\Aa$ be the assumed algorithm for \FolioCliqueEx, and let $\phi\colon \{0,1,\ldots,m\}\to \{\fM,\cM\}$ be the Boolean function defined as follows: if $\Aa$ applied to $G_i$, $X$, and parameters $\delta',h$ returns the $(X,\delta')$-folio of $G_i$, then we set $\phi(i)=\fM$, and otherwise, if $\Aa$ reports that $G_i$ contains $K_h$ as a minor, then we set $\phi(i)=\cM$. Note that $\phi(0)=\fM$, for $K_h$ is not a minor of an edgeless graph, and we can compute the value of $\phi(i)$ for a given index $i$ in time $f(|X|,\delta',h)\cdot \|G\|^{1+o(1)}$, by just running~$\Aa$.

 We first compute the value of $\phi(m)$. If $\phi(m)=\fM$, then in fact we have just computed the $(X,\delta')$-folio of $G_m=G$, which we can report as the outcome of the algorithm (after trimming it to the $(X,\delta)$-folio). Hence, from now on we assume that $\phi(m)=\cM$.

 The idea now is to apply binary search to find an index $a\in \{0,1,\ldots,m-1\}$ such that
 $$\phi(a)=\fM\qquad\textrm{and}\qquad \phi(a+1)=\cM.$$
 We do this as follows:
 \begin{itemize}[nosep]
  \item Initialize $a\coloneqq 0$ and $b\coloneqq m$. We will maintain the invariant that $\phi(a)=\fM$ and $\phi(b)=\cM$.
  \item As long as $b>a+1$, let $r=\lceil\frac{a+b}{2}\rceil$ and compute the value $\phi(r)$. If $\phi(r)=\fM$, then set $a\coloneqq r$, and otherwise set $b\coloneqq r$.
  \item Once $b=a+1$, $a$ satisfies the sought property due to the invariant.
 \end{itemize}
 Note that in every iteration, the difference $b-a$ gets halved rounded up, so within $\lceil \log m\rceil\leq \lceil \log \|G\|\rceil$ iterations we find an appropriate index $a$. Every iteration requires one call to $\Aa$ to compute $\phi(r)$, hence the total running time is $f(|X|,\delta',h)\cdot \|G\|^{1+o(1)}\cdot \lceil \log \|G\|\rceil\leq f(|X|,\delta',h)\cdot \|G\|^{1+o(1)}$. Note that the search does not require the function $\phi$ to be monotone, the values $\fM$ and $\cM$ may be interleaved and the method still works.

 Once $a$ is computed, we investigate the $(X,\delta')$-folio of $G_a$, computed by applying $\Aa$ to $G_a$, $X$, and parameters $\delta',h$. Since $\delta'\geq h$, from this folio we can deduce whether $G_a$ contains $K_h$ as a minor, and if this is the case, also obtain a witnessing minor model, which is also a minor model of $K_h$ in $G$ that can be provided on output. So assume now that $G_a$ does not contain $K_h$ as a minor. Since $G_{a+1}$ is obtained from $G_a$ by adding one edge, it follows that $G_{a+1}$ does not contain $K_{h+1}$ as a minor. Hence, if we apply $\Aa$ to $G_{a+1}$, $X$, and parameters $\delta',h+1$ --- which takes time $f(|X|,\delta',h+1)\cdot \|G\|^{1+o(1)}$ --- the output of $\Aa$ is necessarily the $(X,\delta')$-folio of $G_{a+1}$. But $\phi(a+1)=\cM$, so $G_{a+1}$ does contain $K_h$ as a minor. As $\delta'\geq h$, from the obtained $(X,\delta')$-folio of $G_{a+1}$ we can extract a minor model of $K_h$ in $G_{a+1}$, which is also a minor model of $K_h$ in $G$.
\end{proof}

\section{Dynamic maintenance of queries in graph of bounded treewidth}\label{app:dyn-tw}


We show~\Cref{prop:dynamic-tw}, which we restate here for convenience.

\dynamictw*

\begin{proof}
 As mentioned, the theorem has already been proved by Korhonen et al.~\cite{KorhonenMNPS23} for Boolean queries (i.e., when $\varphi$ is a sentence, without free variables, and the query only reports whether $G\models \varphi$) on graphs without vertex colors. The formulation above is more general in that it allows vertex colors and queries for providing an example witness. The proof is a simple lift of the arguments provided in~\cite{KorhonenMNPS23}\footnote{Throughout this appendix, all references to~\cite{KorhonenMNPS23} concern the full version of this article, available on arXiv (version v1).}, using the toolbox already prepared there.
 In particular, we will use the framework of {\em{boundaried graphs}} and of {\em{tree decomposition automata}}, described in~\cite[Appendix A.1]{KorhonenMNPS23}. We assume the reader's familiarity with these notions, and with the concepts of {\em{annotated tree decompositions}} and of {\em{prefix-rebuilding data structures}}, described in~\cite[Section~3]{KorhonenMNPS23}.

 We first describe how to deal with the colors. We note that it is actually straightforward to lift the whole framework presented in \cite{KorhonenMNPS23} to the setting of vertex-colored graphs supporting recoloring updates. Namely, executing a recoloring update does not require rebuilding the tree decomposition, so it can be performed by recomputing the run of the automaton along the path from (the forget bag of) the recolored vertex to the root, which always has length $\Oh_k(n^{o(1)})$. However, as we would like to use the statements from \cite{KorhonenMNPS23} in a black-box manner, it is perhaps more convenient to apply a simple reduction that encodes vertex colors through small gadgets, as follows.

 Fix an arbitrary bijection $\alpha\colon 2^\Sigma\to \{2,3,\ldots,2^{|\Sigma|}+1\}$. The idea is that for every vertex $u$, if the set of colors to which $u$ belongs is $\Sigma(u)\subseteq \Sigma$, then in the reduction we will encode this by adding $\alpha(\Sigma(u))$ pendants (degree-one vertices) adjacent to $u$. More precisely, if $G$ is the given $\Sigma$-vertex-colored graph, then we consider its (uncolored) supergraph $\wh{G}$ obtained by adding, for every vertex $u$, a set $P_u$ consisting of $2^{|\Sigma|}+1$ vertices, and making exactly $\alpha(\Sigma(u))$ of them adjacent to $u$; the remaining members of $P_u$ stay in $\wh{G}$ as isolated vertices. Then, the vertices of the original graph $G$ can be recognized as those of degree at least $2$, edges of $G$ can be recognized as those whose both endpoints are vertices of $G$, and whether a vertex $u$ of $G$ is of color $U\in \Sigma$ can be recognized by verifying that the number of neighbors of degree $1$ of $u$ belongs to $\alpha^{-1}(\{\Gamma\subseteq \Sigma \mid U\in \Gamma\})$; this is a disjunction over $2^{|\Sigma|-1}$ possible numbers. We can now syntactically modify the given formula $\varphi(x)$ by requiring that $x$ and all the variables quantified throughout the formula belong to (resp. are subsets of) $V(G)$ (resp. $E(G)$), and replacing all atomic formulas verifying colors with the checks described above. This way, we obtain a formula $\wh{\varphi}(x)$ such that for all $a\in V(\wh{G})$, we have
 $$\wh{G}\models \wh{\varphi}(a)\qquad\textrm{if and only if}\qquad a\in V(G) \textrm{ and } G\models\varphi(a).$$
 Thus, it suffices to set up the data structure for the modified query $\wh{\varphi}(x)$ and the graph $\wh{G}$, which is uncolored. Observe that $G$ has treewidth at most $k$ if and only if $\wh{G}$ has treewidth at most $k$ (assuming $k\geq 1$, which is without loss of generality), and every update modifying the color of a vertex of $G$ can be modelled by at most $2^{|\Sigma|}$ edge insertion/deletion updates in $\wh{G}$.

 Hence, from now on we may focus on the case when $G$ is a graph without any vertex colors, and similarly $\varphi(x)$ is a $\CMSO_2$ formula over uncolored graphs. For this case, we prove the following statement, which is a lift of \cite[Lemma~3.8]{KorhonenMNPS23} to our level of generality.

 \begin{claim}\label{cl:prefix-rebuild}
  Fix $\ell\in \N$ and a $\CMSO_2$ formula $\varphi(x)$, where $x$ is a single vertex variable. Then there exists an $\ell$-prefix-rebuilding data structure with overhead $\Oh_{\ell,\varphi}(1)$ that additionally implements the following operation:
  \begin{itemize}
   \item $\mathsf{query}()$: returns any vertex $a$ such that $G\models \varphi(a)$, or correctly reports that no such vertex exists. Runs in worst-case time $\Oh_{\ell,\varphi}(1)$.
  \end{itemize}
 \end{claim}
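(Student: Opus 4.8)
\textbf{Proof proposal for Claim~\ref{cl:prefix-rebuild}.}

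The plan is to lift the machinery of~\cite{KorhonenMNPS23} from Boolean $\CMSO_2$ sentences to formulas with one free single-vertex variable, essentially by the standard trick of treating the free variable as an extra unary predicate. First, recall that~\cite[Lemma~3.8]{KorhonenMNPS23} gives an $\ell$-prefix-rebuilding data structure of overhead $\Oh_{\ell,\psi}(1)$ that maintains, for a fixed $\CMSO_2$ sentence $\psi$ over $\Sigma'$-vertex-colored graphs, the run of a tree decomposition automaton recognizing $\psi$ on the maintained annotated tree decomposition; in particular it maintains whether $G\models\psi$. The key idea is to take $\Sigma'=\{P\}$ for a single fresh unary predicate $P$, and to consider the sentence $\psi\coloneqq \exists x\,(P(x)\wedge \varphi'(x))$, where $\varphi'(x)$ is $\varphi(x)$ with all quantifications unchanged; then for a $\{P\}$-colored graph with $P^G=\{a\}$ we have $G\models\psi$ iff $G\models\varphi(a)$. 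To implement $\mathsf{query}()$ we want to locate such an $a$, which amounts to finding a single vertex that we can ``mark'' so that the automaton accepts.

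The implementation of $\mathsf{query}()$ proceeds by a top-down search along the maintained tree decomposition, in the spirit of how one extracts a witness from an accepting run of a tree automaton. Concretely, run the $\Sigma'$-automaton for $\psi$ (with $P^G=\emptyset$, i.e. with the extra predicate empty) on the maintained annotated decomposition; this is already done and maintained by the data structure, so its states at every node are available. We then want to decide, for each node $t$ of the decomposition tree, whether there is a way to choose the marked vertex $a$ inside the subtree rooted at $t$ (equivalently, in the part of $G$ introduced in that subtree) so that the overall run becomes accepting. This is itself a Boolean property computable by a tree decomposition automaton: design an auxiliary automaton whose state at $t$ records, for each pair of ``incoming'' and ``outgoing'' automaton states of the $\psi$-automaton on the boundary, whether the transition is realizable with the mark placed somewhere in the subtree of $t$ (and separately whether it is realizable with no mark in that subtree). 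Since the $\psi$-automaton has $\Oh_{\ell,\varphi}(1)$ states, this auxiliary automaton also has $\Oh_{\ell,\varphi}(1)$ states, and by~\cite[Lemma~3.8]{KorhonenMNPS23} (applied as a second maintained automaton, or by folding it into the first) its run can be maintained with overhead $\Oh_{\ell,\varphi}(1)$. Now $\mathsf{query}()$ walks from the root downwards: at each node it uses the auxiliary states of the (at most two) children together with the local transition to determine into which child's subtree (or into the current bag, inspecting the $\Oh_{\ell}(1)$ vertices of $\bag(t)$ directly) the mark can be placed, descending accordingly; when it reaches a bag where the mark must sit on a concrete vertex, it returns that vertex. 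Since the maintained decomposition has depth $\Oh_{\ell}(n^{o(1)})$\,--- wait, that would be too slow; instead we must argue, exactly as in~\cite{KorhonenMNPS23} for their ``find a vertex'' extensions, that the relevant information can be surfaced in $\Oh_{\ell,\varphi}(1)$ worst-case time. The clean way to do this: store, at the root, via the auxiliary automaton, not merely a bit but an actual pointer to a node $t^\star$ and a vertex $a\in\bag(t^\star)$ realizing the mark (these pointers are passed up bottom-up through the automaton transitions, each transition choosing one of the $\Oh_{\ell}(1)$ candidates from the children or from its own bag). This ``witness-carrying'' variant still has $\Oh_{\ell,\varphi}(1)$-size states and $\Oh_{\ell,\varphi}(1)$-time transitions, fits the prefix-rebuilding interface of~\cite[Lemma~3.8]{KorhonenMNPS23}, and then $\mathsf{query}()$ just reads off the pointer at the root in $\Oh(1)$ time (returning ``no such vertex'' if the root state says the accepting run is unrealizable even with the mark).

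The main obstacle, and the place requiring care, is verifying that this ``witness-carrying'' automaton genuinely fits the prefix-rebuilding framework of~\cite{KorhonenMNPS23}: the framework demands that the maintained labeling be computable by a deterministic tree decomposition automaton of bounded state size whose transition function is $\Oh_{\ell}(1)$-time computable, and whose states on a prefix can be recomputed after a prefix-rebuilding update within overhead $\Oh_{\ell,\varphi}(1)$ times the size of the rebuilt prefix. Carrying a pointer as part of the state is fine for transition computation, but one must check that prefix-rebuilding (which replaces a bounded-height top part of the decomposition) only invalidates pointers into the rebuilt prefix and that recomputation along the new prefix restores consistency\,--- this is immediate because the automaton is deterministic and its run on any node depends only on the runs at its children, so recomputing the run on the (bounded-size) new prefix, using the unchanged children-states hanging below, yields correct witness pointers. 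Once this is in place, the overhead and the worst-case query time of $\Oh_{\ell,\varphi}(1)$ follow directly from~\cite[Lemma~3.8]{KorhonenMNPS23}, and composing with the earlier color-encoding reduction and with the dynamic-treewidth data structure of~\cite[Theorem~1.2]{KorhonenMNPS23} (which maintains a width-$\Oh(k)$ annotated tree decomposition under edge updates and supplies the prefix-rebuilding updates) completes the proof of~\Cref{prop:dynamic-tw}.
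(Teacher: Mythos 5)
The core strategy is the same as the paper's: define a bottom-up tree decomposition automaton whose states carry witness vertices, and maintain its run via the prefix-rebuilding framework of~\cite{KorhonenMNPS23}. The paper gets there more directly than you do. Rather than encoding the free variable through a fresh unary predicate $P$, running the automaton for $\exists x\,(P(x)\wedge\varphi(x))$ with $P^G=\emptyset$, and then bolting on a ``witness-carrying'' layer, the paper works with types: the automaton state at a node $x$ of the decomposition is a map sending each rank-$p$ type over the tuple of variables $(x,y_1,\ldots,y_k)$ to the $\leq$-least vertex $a$ for which $(G_x, a\tup b)$ realizes that type (where $\tup b$ enumerates the boundary), or $\bot$ if none; compositionality of types (the standard Ehrenfeucht--Fra\"isse argument) shows this is an automaton of bounded evaluation time, and a postprocessing map $\pi$ extracts from the root state a vertex whose type contains $\varphi(x)$, which is then combined with~\cite[Lemma~A.2]{KorhonenMNPS23} rather than Lemma~3.8. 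A few spots in your writeup are more confused than they need to be: the initial run of the $\psi$-automaton with $P^G=\emptyset$ supplies no information that the witness-carrying automaton does not already subsume; the phrasing ``for each pair of incoming and outgoing automaton states'' suggests a two-way automaton, whereas tree decomposition automata in this framework are purely bottom-up, so the auxiliary state should simply record, for each reachable base-automaton state, a realizing mark placement; and storing a pointer to a decomposition-tree node $t^\star$ alongside the vertex $a$ is superfluous (and creates a stale-pointer concern under rebuilds that the paper sidesteps by storing only the vertex itself).
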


 Once \cref{cl:prefix-rebuild} is proved, one can combine it with \cite[Lemma~B.3]{KorhonenMNPS23} exactly as in \cite[Proof of Theorem~1.1]{KorhonenMNPS23}, and thus establish \cref{prop:dynamic-tw}. Therefore, we are left with proving \cref{cl:prefix-rebuild}.

 For this, we need the following automaton construction, which is a lift of \cite[Lemma A.2]{KorhonenMNPS23} to our level of generality.

 \newcommand{\edges}{\mathsf{edges}}

 \begin{claim}\label{cl:automaton}
 For every integer $\ell$ and $\CMSO_2$ formula $\varphi(x)$, where $x$ is a single vertex variable, there exists a tree decomposition automaton $\Aa_{\ell,\varphi}$ of width $\ell$ and a mapping $\pi$ from the states of $\Aa_{\ell,\varphi}$ to vertices or a marker $\bot$ with the following property. Suppose $G$ is a graph and  $(T,\bag,\edges)$ is an annotated binary tree decomposition of $G$ of width at most $\ell$. Suppose further that the run of $\Aa_{\ell,\varphi}$ on $(T,\bag,\edges)$ labels the root with a state $q$. Then $\pi(q)$ is a vertex $a\in V(G)$ such that $G\models \varphi(a)$, and in case no such vertex exists, $\pi(q)=\bot$. The evaluation time of $\Aa_{\ell,\varphi}$ and the time to evaluate $\pi$ on a given state is bounded by $f(\ell,\varphi)$, for some computable function $f$.
 \end{claim}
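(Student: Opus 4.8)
The plan is to deduce \cref{cl:automaton} from the standard correspondence between $\CMSO_2$-definable properties of (binary, annotated) tree decompositions and tree decomposition automata, which is already encapsulated in \cite[Lemma~A.2]{KorhonenMNPS23}, and then upgrade from "decide satisfaction of a sentence" to "produce a witness vertex". First I would observe that for a single free vertex variable $x$, one can existentially quantify it away: the sentence $\exists x\, \varphi(x)$ is a $\CMSO_2$ sentence, so by \cite[Lemma~A.2]{KorhonenMNPS23} there is a width-$\ell$ automaton $\Bb$ deciding whether $G\models \exists x\,\varphi(x)$ on any width-$\ell$ annotated binary tree decomposition. This handles the $\bot$ case: if $\Bb$ rejects at the root, set $\pi(q)=\bot$. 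The remaining work is to make the automaton actually carry a witness vertex in its state.

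The key step is a marking trick. Fix a fresh unary predicate (color) $M$, not used in $\varphi$, and let $\varphi_M$ be the $\CMSO_2$ sentence asserting "$M$ marks exactly one vertex $a$, and $\varphi(a)$ holds"; formally $\varphi_M \coloneqq \exists x\,\bigl(M(x)\wedge \forall y\,(M(y)\to y=x)\wedge \varphi(x)\bigr)$ over $\{M\}$-colored graphs. By (the colored version of) \cite[Lemma~A.2]{KorhonenMNPS23} there is a width-$\ell$ automaton $\Cc$ over $\{M\}$-colored annotated binary tree decompositions that decides $\varphi_M$. Now recall that in the automaton framework, the transitions at a forget node "forget$(u)$" are parameterized by the subset of $\bag$ that is forgotten, and more importantly, the input annotation at each leaf/introduce node specifies which vertices are present in the bag; a coloring $M$ is simply an additional Boolean flag on each vertex when it is introduced. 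The crucial point is: a coloring $M$ picking out a single vertex corresponds, along the run, to exactly one node of $T$ at which that vertex is "guessed" to be in $M$, and this guess can be made locally by the automaton. Concretely, I would build $\Aa_{\ell,\varphi}$ as a product/subset construction over $\Cc$: a state of $\Aa_{\ell,\varphi}$ at a node $t$ records, for every Boolean vector $\mu$ describing which of the (at most $\ell+1$) vertices currently in $\bag(t)$ are "tentatively in $M$" (with the restriction that at most one vertex is ever in $M$ globally, tracked by a single extra bit "has $M$ been used below $t$"), the state that $\Cc$ would be in; and additionally, once the $M$-vertex has been committed (i.e. forgotten or it was introduced as the $M$-vertex), the state stores the identity $a$ of that vertex. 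Since a committed vertex's identity is a single machine word and there are only $\Oh_\ell(1)$ vectors $\mu$, the state space is finite (its size bounded by a computable function of $\ell$ and $\varphi$), and the transitions are computable in time $f(\ell,\varphi)$, inheriting this from $\Cc$. The root state $q$ then knows whether some single-vertex marking $M$ makes $\Cc$ accept, and if so it stores a witness $a$; we set $\pi(q)\coloneqq a$ in that case and $\pi(q)\coloneqq\bot$ otherwise. Correctness is immediate: $G\models\varphi(a)$ for some $a$ iff there is a valid single-vertex marking $M=\{a\}$ with $G\models\varphi_M$ iff $\Cc$ accepts on the $\{M\}$-annotated version of the decomposition iff the root state of $\Aa_{\ell,\varphi}$ is accepting-with-witness.

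The step I expect to require the most care is the bookkeeping ensuring that the marking $M$ is "used exactly once" across the entire tree, while keeping the automaton's memory finite and its transitions local. This is the usual subtlety with guessing a bounded-size set by a tree automaton: at an introduce node one may or may not put the new vertex into $M$, at a join node the two children must have disjoint $M$-guesses, and once a vertex leaves every bag its membership in $M$ is permanently decided, so the automaton must commit to (and record) its identity at that moment — hence the "committed identity plus used-bit" component of the state. I would make this precise by carrying, in each state, exactly one of two pieces of data: either "$M$ not yet used below $t$, and for each $\mu$ over $\bag(t)$ the hypothetical state of $\Cc$" (a tuple of $\Oh_\ell(1)$ $\Cc$-states), or "$M$ already used, committed vertex $a$, and the single resulting $\Cc$-state". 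The introduce, forget, and join transitions on these states are then a direct, finite case analysis, and join transitions reject whenever both children have already used $M$. Everything else — the evaluation-time bound, the computability of $\pi$, and the fact that the construction does not depend on which width-$\ell$ annotated binary tree decomposition of $G$ is fed in — follows verbatim from the corresponding properties asserted in \cite[Lemma~A.2]{KorhonenMNPS23} for the underlying automata $\Bb$ and $\Cc$. This establishes \cref{cl:automaton}, from which \cref{cl:prefix-rebuild} follows by plugging $\Aa_{\ell,\varphi}$ and $\pi$ into the generic prefix-rebuilding data structure machinery of \cite{KorhonenMNPS23} (the automaton maintains its root state under prefix rebuilds with overhead $\Oh_{\ell,\varphi}(1)$, and $\mathsf{query}()$ simply returns $\pi$ of the current root state in worst-case time $\Oh_{\ell,\varphi}(1)$), and \cref{prop:dynamic-tw} then follows exactly as \cite[Theorem~1.1]{KorhonenMNPS23} is derived there, after the color-encoding reduction described above.
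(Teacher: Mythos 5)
Your overall strategy is a genuine alternative to the paper's argument, and it is sound in spirit: instead of working directly with rank-$p$ types of $(x,\tup y)$-tuples as the paper does, you reduce the free variable $x$ to a fresh color $M$ and then determinize an automaton $\Cc$ for the sentence $\varphi_M$. The paper's proof takes the types directly: the state of a boundaried graph $G$ is the map $\xi_G\colon \Tp^p_{x\tup y}\to \Omega\cup\{\bot\}$ sending each rank-$p$ type $q$ to the $\leq$-smallest $a\in V(G)$ realizing $q$ together with the boundary enumeration, and compositionality is cited from Ehrenfeucht--Fra\"iss\'e games; $\pi$ then reads off the smallest witness stored at any type containing $\varphi(x)$. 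This handles "witness currently in the bag" and "witness already forgotten" uniformly, since $a$ ranges over \emph{all} of $V(G)$ in the definition of $\xi_G$. Your marking route is morally the same information packaged through an automaton for $\varphi_M$ rather than through types; it is a perfectly reasonable way to derive the claim from the sentence-level result already available in \cite{KorhonenMNPS23}.

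There is, however, a concrete glitch in your description of the state. You write that a state carries ``exactly one of two pieces of data'': either (a) ``$M$ not yet used below $t$, and for each $\mu$ over $\bag(t)$ the hypothetical state of $\Cc$'', or (b) ``$M$ already used, committed vertex $a$, and the single resulting $\Cc$-state.'' That is a description of the information along a single \emph{guessed run} of a nondeterministic machine, not the state of the deterministic automaton you need. For a fixed subtree, both the coloring in which $M$ marks nothing below $t$ and every coloring in which $M$ marks some forgotten vertex below $t$ are legitimate; the deterministic state must record all of them simultaneously. Moreover, in the "used" case there is generally no single committed vertex or single $\Cc$-state: distinct forgotten vertices $a_1\neq a_2$ can land $\Cc$ in distinct states $s_1\neq s_2$ at $t$, and only later does the context decide which (if either) of $s_1,s_2$ leads to acceptance; dropping one loses completeness. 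The determinized state should therefore be a \emph{product}, not a disjoint union: one component is the map from bag-markings $\mu$ (including the empty one) to the $\Cc$-state reached when exactly $\mu$ is marked and nothing below is, and the other is a map from $\Cc$-states $s$ to a witness forgotten vertex (or $\bot$) achieving $s$. You do say "product/subset construction" up front, so you have the right instinct, but the paragraph that "makes this precise" contradicts it; as written, the transition and the definition of $\pi$ would fail on inputs where only one of several reachable $\Cc$-states is accepting. Once the state is corrected to the product form, the join-node bookkeeping you sketch (disjointness of $M$-guesses across the two children) goes through and the rest of your plan, including the color-encoding reduction and the plug-in to \cite[Lemma~A.2]{KorhonenMNPS23}, matches the paper.
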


 \cite[Lemma A.2]{KorhonenMNPS23} is a statement that shows that the run of any tree decomposition automaton can be efficiently maintained under prefix-rebuilding updates. Thus, similarly as \cite[Lemma~3.8]{KorhonenMNPS23} follows immediately by combining \cite[Lemma A.2]{KorhonenMNPS23} with \cite[Lemma A.6]{KorhonenMNPS23}, our \cref{cl:prefix-rebuild} follows immediately by combining \cref{cl:automaton} with \cite[Lemma A.2]{KorhonenMNPS23}. Hence, we are left with proving \cref{cl:automaton}.

 The construction of \cref{cl:automaton} is a standard application of the framework of {\em{types}}, which underlies essentially all the proofs of Courcelle's Theorem and related results on $\CMSO_2$ in graphs of bounded treewidth. This construction has been presented in the proof of \cite[Lemma A.6]{KorhonenMNPS23} for Boolean queries, that is, $\varphi$ being a sentence rather than a formula with one free variable. Here we lift this reasoning to our setting, which poses no real difficulties besides that of notational nature. In particular, the following explanation will feature some phrasing from \cite{KorhonenMNPS23} taken {\bf{verbatim}}.

 \newcommand{\Frm}{\mathsf{Formulas}}
 \newcommand{\Tp}{\mathsf{Types}}
 \newcommand{\bnd}{\partial}
 \newcommand{\adh}{\mathsf{adh}}

 We first recall the standard terminology of types. Let the {\em{rank}} of a formula $\psi$ (possibly with free variables) be the least integer that is an upper bound on the nesting of the quantifiers in $\psi$ and all the moduli appearing in modular predicates present in $\psi$. It is well known that for every $p\in \N$ and a finite tuple of variables $\tup x$, there exists a finite set
 $\Frm^p_{\tup x}$, consisting of formulas of quantifier rank at most $p$ with free variables $\tup x$, such that every formula $\psi(\tup x)$ of rank at most $p$ is equivalent to some member $\psi'(\tup x)$ of $\Frm^p_{\tup x}$, in the sense that $\psi(\tup x)$ and $\psi'(\tup x)$ select exactly the same tuples of vertices in every graph. Moreover, $\Frm^p_{\tup x}$ can be computed given $p$ and $\tup x$ (in particular, $|\Frm^p_{\tup x}|\leq \Oh_{p,|\tup x|}(1)$), and given $\psi(\tup x)$, an equivalent $\psi'(\tup x)\in \Frm^p_{\tup x}$ can be computed. Then, for a graph $G$ and a tuple of vertices $\tup a\in V(G)^{\tup x}$, we define the {\em{rank-$p$ type}} of $(G,\tup a)$ as
 $$\tp^p(G,\tup a)\coloneqq \{\psi(\tup x)\in \Frm^p_{\tup x}\mid G\models \psi(\tup a)\}.$$
 In other words, the rank-$p$ type of $G$ and $\tup a$ comprises of all normalized formulas of rank at most $p$ that are satisfied by $\tup a$ in $G$. We also let
 $$\Tp^p_{\tup x}\coloneqq 2^{\Frm^p_{\tup x}}$$
 be the set of all possible rank-$p$ types of $\tup x$-tuples (some of them might  not be satisfiable).

 As in \cite{KorhonenMNPS23}, we assume that all the vertices of all the considered graphs belong to some totally ordered (say, by an ordering $\leq$) countable set $\Omega$ whose elements can be handled in constant time in the RAM model; the reader may assume $\Omega=\N$ with the standard order. Then given a boundaried graph $G$ with $|\bnd G|=k$ we define the {\em{state}} of $G$ as the mapping $\xi_G\colon \Tp^p_{x\tup y} \to \Omega\cup \{\bot\}$, where $x\tup y=(x,y_1,\ldots,y_k)$ is a tuple of $k+1$ variables, defined as follows. Let $\tup b\in (\bnd G)^{\tup y}$ be the enumeration of $\bnd G$ in the order $\leq$: $\tup b(y_1)$ is the first element of $\bnd G$ in $\leq$, $\tup b(y_2)$ the second, and so on. Then for a type $q\in \Tp^p_{x\tup y}$,
 \begin{itemize}[nosep]
  \item $\xi_G(q)$ is the $\leq$-smallest vertex $a\in V(G)$ such that $(G,a\tup b)$ has type $q$; or
  \item $\xi_G(q)=\bot$ if there is no vertex $a$ as above.
 \end{itemize}
 Now, we define the tree decomposition automaton $\Aa=\Aa_{\ell,\varphi}$ of width $\ell$ so that for any graph $G$ and its given tree decomposition $(T,\bag,\edges)$ of width at most $\ell$, the run $\rho_\Aa$ of $\Aa$ on $(T,\bag,\edges)$ is defined as follows: for every node $x$ of $T$,
 $$\rho_\Aa(x)=\xi(G_x).$$
 Indeed, to argue that there exists such an automaton, it suffices to show that for a node $x$ with children $y$ and $z$, $\xi(G_x)$ uniquely depends on $\xi(G_y)$, $\xi(G_z)$, and the situation in the bag of $x$, consisting of $\bag(x)$, $\adh(x)$, $\adh(y)$, $\adh(z)$, and $\edges(x)$; and similarly for nodes with one child. Just as in \cite{KorhonenMNPS23}, this follows from a standard compositionality argument involving Ehrenfeucht-Fra\"isse games. Moreover, this argument provides an algorithm to compute $\xi(G_x)$ from the provided information, hence $\Aa_{\ell,\varphi}$ has evaluation time bounded by a computable function of $\ell$ and $\varphi$.

 Finally, the mapping $\pi$ is defined as follows: for a state $\xi\colon \Tp^p_{x\tup y} \to \Omega\cup \{\bot\}$, $\pi(\xi)$ is the $\leq$-smallest vertex $a$ of $G$ such that $\xi(q)=a$ for some $q\ni \phi(x)$, and $\pi(\xi)=\bot$ if no such $a$ exists.
\end{proof}

\section{Compositionality of folios}
\label{sec:appendix_foliocompose}
We present the proof of \Cref{lem:folio_over_separator}, which we restate here.

\lemfoliooverseparator*
\begin{proof}
Let $(H_A,\roots_A)$ be an $X_A$-rooted minor of $G[A]$ of detail at most $\delta$ and $(H_B,\roots_B)$ an $X_B$-rooted minor of $G[B]$ of detail at most $\delta$.
We define their \emph{$X$-combination} $\combfolio((H_A,\roots_A), (H_B,\roots_B),X)$ as follows.
Let $H_{AB}$ be the graph with the vertex set $V(H_{AB}) = V(H_A) \cup V(H_B)$, and with $u \in V(H_A)$ and $v \in V(H_B)$ connected by an edge if $\roots_A(u)$ and $\roots_B(v)$ intersect.
Now $\combfolio((H_A,\roots_A), (H_B,\roots_B), X)$ is the $X$-rooted graph $(H,\roots)$ with the vertex set $V(H) = \cc(H_{AB})$, with $C_1 \in V(H)$ adjacent to $C_2 \in V(H)$ if there is an edge in $H_A$ or $H_B$ with one endpoint in $C_1$ and another in $C_2$, and with \[\roots(C) = \left(\bigcup_{v \in C \cap V(H_A)} \roots_A(v) \cap X\right) \cup \left(\bigcup_{v \in C \cap V(H_B)} \roots_B(v) \cap X\right)\]
for each $C \in V(H)$.
We observe that $(H,\roots)$ is an $X$-rooted minor of $G$, and therefore is in the $(X,\delta)$-folio of $G$ if its detail is at most $\delta$.
We define the function $\combfolio(\{\Ff_A, \Ff_B\}, X, \delta)$ of the statement as the collection of all $X$-rooted graphs with detail at most $\delta$ obtained as $\combfolio((H_A,\roots_A), (H_B,\roots_B),X)$ for $(H_A,\roots_A) \in \Ff_A$ and $(H_B,\roots_B) \in \Ff_B$.
Clearly, $\combfolio$ can be evaluated in time $\Oh_{|X|+|A \cap B|,\delta}(1)$.

Given a model $\model_A$ of $(H_A,\roots_A)$ in $G[A]$ and a model $\model_B$ of $(H_B,\roots_B)$ in $G[B]$, we can obtain a model $\model$ of $(H,\roots) = \combfolio((H_A,\roots_A), (H_B,\roots_B),X)$ in $G$ in time $\Oh_{|X|+|A \cap B|,\delta}(|V(G)|)$ by letting
\[\model(C) = \left(\bigcup_{v \in C \cap V(H_A)} \model_A(v)\right) \cup \left(\bigcup_{v \in C \cap V(H_B)} \model_B(v)\right)\]
for each $C \in V(H)$.
Therefore, the models of all $X$-rooted minors in $\combfolio(\{\Ff_A, \Ff_B\}, X, \delta)$ can be computed from the $(X_A,\delta)$-model-folio of $G[A]$ and the $(X_B,\delta)$-model-folio of $G[B]$ in time $\Oh_{|X|+|A \cap B|,\delta}(|V(G)|)$.

It remains to show that $\combfolio(\{\Ff_A, \Ff_B\}, X, \delta)$ indeed contains all $X$-rooted minors of $G$ with detail at most $\delta$.

Let $(H,\roots)$ be an $X$-rooted minor of $G$ with detail at most $\delta$ and $\model$ a minor model of it in $G$.
For each $v \in V(H)$, let $\mcomps^A_v = \cc(G[\model(v) \cap A])$ and $\mcomps^B_v = \cc(G[\model(v) \cap B])$.
Then let $\mcomps^A = \bigcup_{v \in V(H)} \mcomps^A_v$ and $\mcomps^B = \bigcup_{v \in V(H)} \mcomps^B_v$.
Note that $\mcomps^A$ and $\mcomps^B$ are collections of disjoint connected subsets of $A$ and $B$, respectively.

Now, let $H_A$ be the graph with the vertex set $V(H_A) = \mcomps^A$, and having an edge between $C_1, C_2 \in \mcomps^A$ if they are adjacent in $G[A]$ and there exists $uv \in E(H)$ so that $C_1 \subseteq \model(u)$ and $C_2 \subseteq \model(v)$.
Also, let $\roots_A \colon V(H_A) \to 2^{X_A}$ be the function defined as $\roots_A(C) = X_A \cap C$.
Now, $(H_A,\roots_A)$ is an $X_A$-rooted minor of $G[A]$.
The detail of $(H_A,\roots_A)$ is at most $\delta$, because if $C \in \mcomps^A$ is disjoint from $X_A$, then it must be that $C = \model(v)$ for some $v \in V(H)$ with $\roots(v) = \emptyset$.

The $X_B$-rooted graph $(H_B,\roots_B)$ is defined analogously, and similar observations apply to it.
We have that $(H_A,\roots_A)$ is in the $(X_A,\delta)$-folio of $G[A]$ and $(H_B,\roots_B)$ is in the $(X_B,\delta)$-folio of $G[B]$, and observe that $\combfolio((H_A,\roots_A), (H_B,\roots_B), X) = (H,\roots)$.
\end{proof}

\end{document}